\newcommand{\ot}{\otimes}
\definecolor{definitioncolor}{RGB}{240,240,255}
\definecolor{theoremcolor}{RGB}{200, 218, 191}
\definecolor{lemmacolor}{RGB}{233, 190, 190}
\definecolor{observationcolor}{RGB}{255, 232, 205}
\definecolor{propositioncolor}{RGB}{217, 213, 209}
\definecolor{corollarycolor}{RGB}{194, 185, 211}
\definecolor{kaoboxcolor}{RGB}{238,237,238}
\definecolor{equationcolor}{RGB}{222,94,100}
\definecolor{citationcolor}{RGB}{94,102,228}
\newcommand{\horrule}[1]{\noindent\rule{\linewidth}{#1}}
\newcommand{\ms}[1]{\textsf{#1}}
\DeclareMathAlphabet{\mycal}{OMS}{cmsy}{m}{n}
\newcommand{\TT}{\mycal{T}}
\newcommand{\FF}{\mycal{F}}
\newcommand{\OO}{\mycal{O}}
\newcommand{\D}{\mathcal{D}}
\newcommand{\T}{\mathcal{T}}
\newcommand{\N}{\mathcal{N}}
\newcommand{\E}{\mathcal{E}}
\renewcommand{\S}{\mathcal{S}}
\def\P{ {\cal P} } 
\newcommand{\iden}{\mathbbm{1}}
\renewcommand{\v}[1]{\ensuremath{\boldsymbol #1}}
\DeclareFontFamily{U}{mathb}{\hyphenchar\font45}
\DeclareFontShape{U}{mathb}{m}{n}{
	<-6> mathb5 <6-7> mathb6 <7-8> mathb7
	<8-9> mathb8 <9-10> mathb9
	<10-12> mathb10 <12-> mathb12
}{}
\DeclareSymbolFont{mathb}{U}{mathb}{m}{n}
\DeclareMathSymbol{\llcurly}{\mathrel}{mathb}{"CE}
\DeclareMathSymbol{\ggcurly}{\mathrel}{mathb}{"CF}
\begin{document}

\frontmatter 


\includepdf{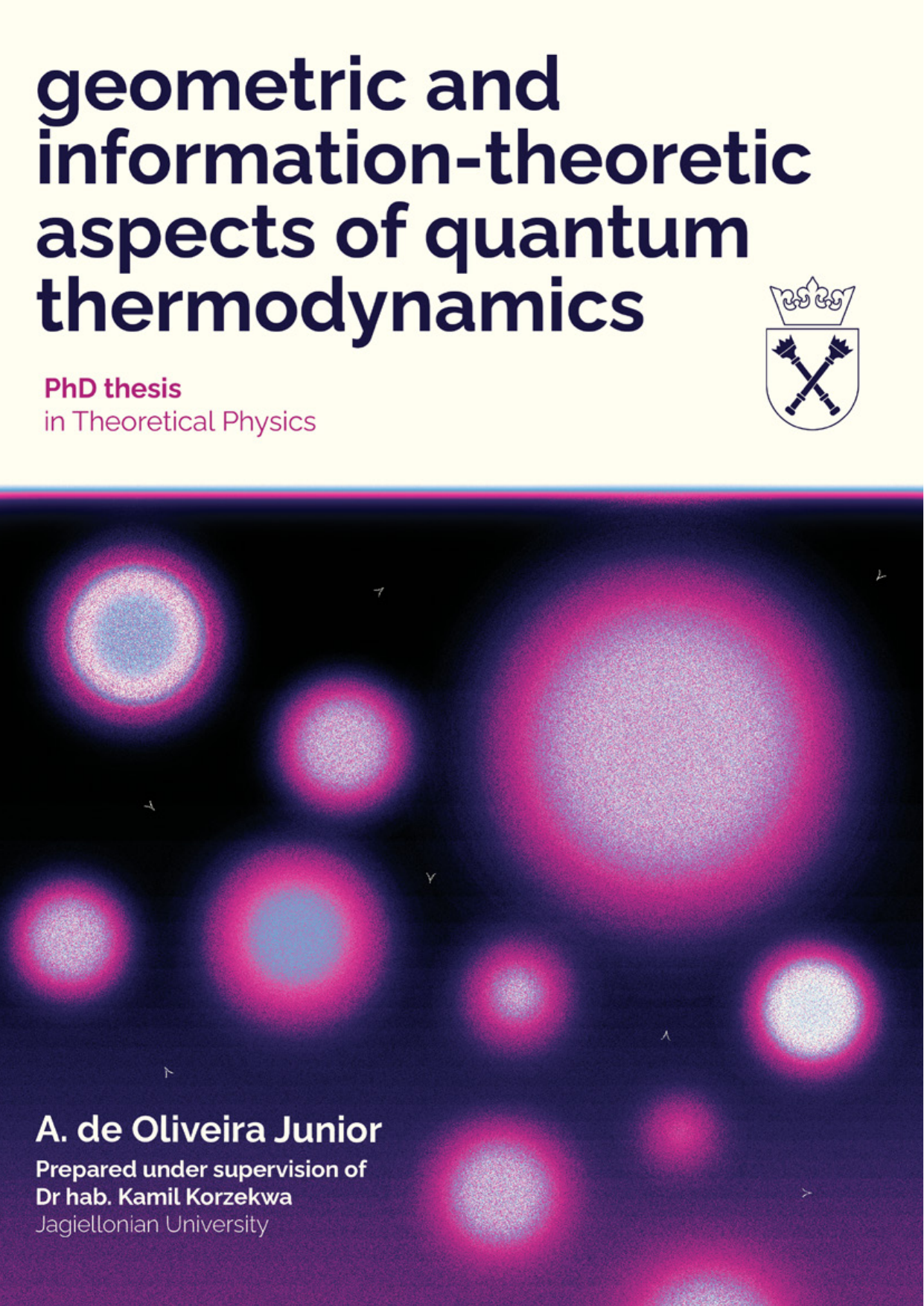}

\begin{titlepage}
    \begin{center}
        \normalfont \normalsize 
        \textsc{Jagiellonian University in Cracow \\ Faculty of Physics, Astronomy and Applied Computer Science} \\ [5pt]
        \begin{figure}[H]
            \centering
            \includegraphics[width=78.475mm]{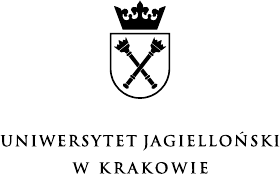}
        \end{figure}
        \Large Alexssandre de Oliveira Junior \\
        \horrule{0.5pt} \\[0.4cm]
        {\fontsize{24}{39}\selectfont Geometric and information-theoretic aspects \\  of quantum thermodynamics} \\[5pt]
        \normalsize PhD thesis \\ in Theoretical Physics
        \horrule{2pt} \\[0.5cm] 

    \end{center}
    \vfill
    \flushright{Prepared under supervision of \\ Dr hab. Kamil Korzekwa \\ Jagiellonian University \\}
    \begin{center}
        Kraków, October 11, 2023
    \end{center}
\end{titlepage}

\newpage
\index{Cover}

\begin{center}
\Large Oświadczenie    
\end{center}

Ja niżej podpisany Alexssandre de Oliveira Junior, doktorant Wydziału Fizyki, Astronomii i Informatyki Stosowanej Uniwersytetu Jagiellońskiego oświadczam, że przedłożona przeze mnie rozprawa doktorska pt. “Geometric and information-theoretic aspects of quantum thermodynamics” jest oryginalna i przedstawia wyniki badań wykonanych przeze mnie osobiście, pod kierunkiem dr. hab. Kamila Korzekwy. Pracę napisałem samodzielnie. Oświadczam, że moja rozprawa doktorska została opracowana zgodnie z Ustawą o prawie autorskim i prawach pokrewnych z dnia 4 lutego 1994 r. (Dziennik Ustaw 1994 nr 24 poz. 83 wraz z późniejszymi zmianami). Jestem świadom, że niezgodność niniejszego oświadczenia z prawdą ujawniona w dowolnym czasie, niezależnie od skutków prawnych wynikających z ww. ustawy, może spowodować unieważnienie stopnia nabytego na podstawie tej rozprawy.

\vspace{1.5cm}
\begin{minipage}{2in}
Kraków, 10 października 2023
\end{minipage}
\hfill
\newpage
\index{Oświadczenie}

\topskip0pt
\vspace*{\fill}

\begin{center}
\Large Streszczenie    
\end{center}

Termodynamika i mechanika kwantowa reprezentują dwa fundamentalne paradygmaty, które ewoluowały w potężne i skuteczne teorie naukowe, zdolne do opisywania ogromnej gamy zjawisk fizycznych z niezwykłą precyzją. Termodynamika ma swoje korzenie w XIX wieku, pochodzące z prób zrozumienia silników parowych podczas rewolucji przemysłowej. Natomiast mechanika kwantowa wyłoniła się z argumentów dotyczących natury promieniowania ciała doskonale czarnego. Przez wiele dekad te dwie teorie rozwijały się niezależnie. Termodynamika skupiała się na opisie makroskopowych właściwości układów, nie zagłębiając się w szczegóły mikroskopowe, podczas gdy mechanika kwantowa koncentrowała się na studiowaniu atomów i cząstek subatomowych. W miarę jak termodynamika stopniowo poszerzała swoje granice w celu zbadania układów mikroskopowych, mechanika kwantowa torowała drogę dla nowego obszaru badań skoncentrowanego na koncepcji, że układy kwantowe mogą być wykorzystywane do zadań obliczeniowych i informacyjnych bardziej efektywnie niż klasyczne. W pewnym momencie odkryto także, że termodynamika narzuca fizyczne ograniczenia na przetwarzanie informacji. Konwergencja tych dwóch paradygmatów nastąpiła, gdy naukowcy zaczęli badać rzeczywistość kwantową, starając się zrozumieć prawa termodynamiczne rządzące układami kwantowymi.

W tej pracy badam różne aspekty jednego z fundamentalnych pytań termodynamiki: jakie przemiany stanu mogą przechodzić układy kwantowe podczas oddziaływania z łaźnią termiczną przy określonych ograniczeniach? Ograniczenia te mogą dotyczyć zachowania całkowitej energii, efektów pamięciowych lub uwzględnienia skończonego rozmiaru układu. Korzystając z minimalnych założeń na temat wspólnej dynamiki układu i kąpieli, tj. że układ złożony jest zamknięty i ewoluuje unitarnie zachowując energię, wywodzę jawną konstrukcję zbioru wszystkich stanów, do których układ w danym stanie początkowym może ewoluować termodynamicznie lub z których może ewoluować. Pozwala to na scharakteryzowanie i zrozumienie struktury termodynamicznej strzałki czasu. Taka konstrukcja jest ogólna i opiera się jedynie na założeniu o zachowaniu całkowitej energii i termiczności kąpieli. W rezultacie, kolejne pytanie jaka dynamika może być obserwowana, gdy efekty pamięci pomiędzy układem a kąpielą są niezaniedbywalne, a także jak procesy termodynamiczne zależą od efektów pamięci, skłoniło mnie do opracowania formalizmu łączącego procesy termodynamiczne bez pamięci z dowolnie niemarkowskimi. To z kolei rzuca światło na sposób kwantyfikacji efektów pamięci w protokołach termodynamicznych. Wreszcie, badam także problem charakteryzacji optymalnych przemian termodynamicznych, gdy fluktuacje wokół wielkości termodynamicznych są porównywalne ze średnimi. Prowadzi mnie to do zbadania wystarczających i koniecznych warunków leżących u podstaw przemian termodynamicznych układów o skończonym rozmiarze. Jako pierwszy charakteryzuję optymalne przemiany dla ważnej klasy procesów termodynamicznych układów o skończonym rozmiarze przygotowanych w superpozycji różnych stanów energetycznych. Co więcej, dowodzę również zależność między fluktuacjami energii swobodnej stanu początkowego układu a minimalną ilością energii swobodnej rozpraszanej podczas procesu. Formułuję więc słynną relację fluktuacji-dyssypacji w języku informacji kwantowej.

Ostatnia część tej pracy skupia się na badaniu zjawiska wszechobecnego w nauce, w szczególności w termodynamice, a mianowicie zjawiska katalizy. Polega ona na wykorzystaniu dodatkowego układu (katalizatora) do umożliwienia procesów, które w przeciwnym razie byłyby niemożliwe. Przez ostatnie dwie dekady ta koncepcja rozprzestrzeniła się w dziedzinie fizyki kwantowej, efekt ten jednak zwykle opisywany jest w ramach wysoce abstrakcyjnego formalizmu. Pomimo swoich sukcesów, podejście to ma trudności z pełnym uchwyceniem zachowania fizycznie realizowalnych układów, co ogranicza praktyczną stosowalność katalizy kwantowej. Nasuwa się więc pytanie czy kataliza kwantowa może wyjść poza teorię i wejść w kontekst praktyczny? Innymi słowy, jak można przekształcić koncepcję katalizy kwantowej z czysto teoretycznej idei w narzędzie, które można zastosować w praktyce? Pokażę w jaki sposób można to osiągnąć w paradygmatycznym układzie optyki kwantowej, a mianowicie w modelu Jaynesa-Cummingsa, w którym atom oddziałuje z wnęką optyczną. Atom odgrywa tu rolę katalizatora i pozwala na deterministyczne generowanie nieklasycznych stanów światła we wnęce, co potwierdzone jest statystyką sub-Poissonowską lub negatywnością Wignera.

\newpage
\begin{center}
\Large Abstract    
\end{center}

Thermodynamics and quantum mechanics represent two fundamental paradigms that have evolved into powerful and successful scientific theories, capable of describing a vast array of physical phenomena with remarkable precision. Thermodynamics has its roots in the 19th century, originating from efforts to understand steam engines during the Industrial Revolution. Quantum mechanics, on the other hand, emerged from consistency arguments concerning the nature of black-body radiation. For many decades, these two theories developed independently. Thermodynamics concentrated on describing the macroscopic properties of systems without delving into microscopic details, while quantum mechanics focused on the study of atoms and subatomic particles. As thermodynamics gradually pushed its boundaries to explore microscopic systems, quantum mechanics paved the way for a new research field centred on the notion that quantum systems could be harnessed for computational and informational tasks. In due course, it was discovered that thermodynamics imposed physical constraints on information processing. The convergence of these two paradigms occurred when scientists began probing the quantum realm, seeking to understand the thermodynamic laws governing quantum systems.

In this thesis, I investigate various aspects of one of the most fundamental questions in thermodynamics: \emph{what state transformations can quantum systems undergo while interacting with a thermal bath under specific constraints?} These constraints may involve total energy conservation, memory effects, or finite-size considerations. Using minimal assumptions on the joint system-bath dynamics, namely that the composite
system is closed and evolves unitarily via an energy-preserving unitary, I will derive an explicit construction of the set of states to which a given state can thermodynamically evolve to or evolve from. This allows one to characterise and understand the structure of the thermodynamic arrow of time. Such a construction is general and relies only on the assumption of total energy conservation and the thermality of the bath. As a result, a follow-up question \emph{what dynamics are observed when memory effects between the system and bath are non-negligible}, as well as \emph{how thermodynamic processes are affected by memory effects}, will lead to development of a framework bridging the gap between memoryless and arbitrarily non-Markovian thermodynamic processes. This, in turn, sheds light on how to quantify the role played by memory effects in thermodynamic protocols. Next, to understand how optimal thermodynamic processing is affected when one goes beyond the thermodynamic limit -- where fluctuations around thermodynamic quantities are comparable to averages -- I ask \emph{what are the necessary and sufficient conditions for the existence of thermodynamic transformations between different non-equilibrium states of few-particle systems}. The answer to such a question will lead to the necessary and sufficient conditions underlying thermodynamic transformations of finite-size systems and the characterisation of thermodynamic processes from finite-size systems that may be in superposition of different energy states. What is more, I will also prove a relation between the free energy fluctuations of the initial state of the system and the minimal amount of free energy dissipated during the process. This allows for the formulation of the famous fluctuation-dissipation relations within a quantum information framework.

Finally, the last part of this thesis focuses on studying a ubiquitous phenomenon in science and, in particular, in thermodynamics, the so-called \emph{catalysis}. It consists of using an auxiliary system (a catalyst) to enable processes that would otherwise be impossible. Over the last two decades, this notion has spread to the field of quantum physics. However, this effect is typically described within a highly abstract framework. Despite its successes, this approach struggles to fully capture the behavior of physically realisable systems, thereby limiting the applicability of quantum catalysis in practical scenarios. This gives rise to the question: \emph{what if quantum catalysis could go beyond theory and step into practical context?} In other words, \emph{how can one translate the concept of quantum catalysis from being a purely theoretical notion to a tool that can be practically implemented}. Strikingly, I will show this effect in a paradigmatic quantum optics setup, namely the Jaynes-Cummings model, where an atom interacts with an optical cavity. The atom plays the role of the catalyst, and allows for the deterministic generation of non-classical light in the cavity as witnessed by sub-Poissonian statistics or Wigner negativity.

\vspace*{\fill}
\newpage
\index{abstract}

\begin{center}
\Large Acknowledgements 
\end{center}

I would like to express my heartfelt gratitude to my supervisor, Kamil Korzekwa, for the unwavering support, trust, intellectual freedom, and exceptional guidance. Above all, our discussions and your friendship have been among the most significant aspects of the past three years. I was very fortunate to be guided and mentored by Prof. Karol $\dot{Z}$yczkowski and Prof. Michał Horodecki. My sincere appreciation goes to both, for their physical insights, countless interesting questions, and mentorship. 

This thesis is indebted to the remarkable collaborations I have had in recent years. I am immensely thankful to one of my closest collaborators, Jakub Czartowski. It is through our endless discussions that numerous projects and ideas were born, laying the foundation for this thesis. A special thanks also goes to Tanmoy Biswas, with whom I had the pleasure of collaborating. We both began our PhDs together and immediately tackled a very challenging problem. Our countless Skype calls and meetings about quantum thermo would surely set a Guinness World Record. 

Throughout my PhD, I met incredible people who stood by me through my PhD meltdowns and moments of joy. Without Aritra Sinha, my first friend in Poland, these past three years would have been much sadder and boring. Our daily morning coffees became a cherished ritual. I am thankful for all the discussions I had with my incredible office mates: Gerard Angles Munné, for being so supportive and the funny one (say tak tak tak), together we discovered Krakow and collected stories that will be always with me; Albert Rico for your incredible patience and for pushing me beyond my comfort zone (I always felt that we were so similar, yet so different); Moises Bermejo Morán, for your amazing friendship and for being the type of physicist I aspire to be (you went beyond the probability simplex); and Steffen Kessler, the chillest person I ever met. Thank you for all the laughs and for arguing on my behalf with the other three about turning on the AC (we never found stk on geocaching).

I was very lucky to have met Jake Xuereb in my last year of my PhD. His inspiring way of looking at physics (and problems) and constantly buzzing me with quantum thermo puzzles are a source of inspiration.

Furthermore, I extend my gratitude to all members of the Jagiellonian Quantum Information Team. A special mention goes to Felix Huber; thank you for the stimulating discussions, memorable times over beers and climbs, and most importantly, for your friendship. Oliver Reardon-Smith, you helped me with so many things that I don't even know how to start -- from enlightening chats about physics to helping me with English grammar (and for fixing the bibliography style of this thesis). Your patience has been invaluable. Not to mention, your cakes are simply the best! Fereshte Shahbeigi, you are my go-to whenever a query about quantum channels arises. Roberto Salazar, our collaborations and discussions have been greatly rewarding. Grzegorz Rajchel-Mieldzioć, or Grzeg, for all the nice talks, discussions, beers, and for buying me a Guarana during my first 2 months in Poland. Lastly, to Korand Szymanski, for introducing and helping me with this template and for the very pleasant conversations.

I would also like to take the opportunity to deeply thank Martí Perarnau Llobet, for inviting me to spend time with his group in Geneva. I am thankful for the engaging discussions, insightful questions, and for granting me the opportunity to explore new horizons and go beyond the boundaries of my research (I hope to join you in the next l'Escalade). Many thanks to Nicolas Brunner for all the very interesting discussions and for supporting my stay in Geneva. I am grateful and indebted to Patryk Lipka-Bartosik. Thanks for guiding me throughout our project, and for your incredible patience. I have learned a lot from you three.

Obrigado aos meus amigos do Brasil. Arthur Faria, que começou a estudar termodinâmica comigo, por todas as discussões, conversas e orientações (se tivéssemos feito doutorado no mesmo lugar, com certeza teríamos um dinossauro colado na parede do nosso escritório); ao grande Carlos Eduardo 'Dudu', meu amigo de infância que acompanhou a minha jornada, me escutou e me deu grandes conselhos (de guitarristas de uma banda punk a um físico e um filósofo, hein?). Vocês são parte disso e essenciais.

Finalmente, eu gostaria de agradecer à minha família por todo o amor, carinho e apoio. 
Para você, Gabriel, eu poderia agradecer por me ajudar a escolher a paleta de cores desta tese e por quase 70\% das imagens aqui, mas eu decido agradecer pela sua amizade (e não, não foi pelo the big bang theory!). Por fim, muito obrigado Kiara, pelo incondicional suporte, por ser minha melhor amiga e minha companheira. 

\newpage
\index{acknowledgments}

\section*{List of publications}

\subsection*{Publications and preprints completed during PhD studies}

The following works cover a wide range of topics, with a primary focus on quantum thermodynamics and quantum information theory. They are presented below in reverse chronological order:

\begin{enumerate}[label={[\arabic*]}]
    
    \item \textbf{Quantum catalysis in cavity QED}, A. de Oliveira Junior, M. Perarnau-Llobet, N. Brunner, P. Lipka-Bartosik, 
    {\href{https://arxiv.org/abs/2305.19324}{\color{teal}{\textit{\textit{arXiv:2305.19324}}} (2023).}}
    \item \textbf{Thermal recall: Memory-assisted Markovian thermal processes\footnote{\label{statement} JCz and AOJ contributed equally to this work.}}, J. Czartowski, A. de Oliveira Junior, K. Korzekwa, {\href{https://journals.aps.org/prxquantum/abstract/10.1103/PRXQuantum.4.040304}{\color{teal}\textit{PRX Quantum \textbf{4}, 040304} (2023).}}
    \item \textbf{Geometric structure of thermal cones}, A. de Oliveira Junior, J. Czartowski, K. Życzkowski, K. Korzekwa, {\href{https://journals.aps.org/pre/abstract/10.1103/PhysRevE.106.064109}{\color{teal}\textit{Phys. Rev. E} \textbf{106}, 064109 (2022).}}
    \item \textbf{Unravelling the non-classicality role in Gaussian heat engines}, A. de Oliveira Junior, MC de Oliveira, {\href{https://www.nature.com/articles/s41598-022-13811-z}{\color{teal} \textit{Sci. Rep.} \textbf{12}, 10412 (2022).}}
    \item \textbf{Resource theory of Absolute Negativity}, R. Salazar, J. Czartowski, A. de Oliveira Junior, {\href{https://arxiv.org/abs/2205.13480}{\color{teal} \textit{\textit{arXiv:2205.13480}} (2022).}}
    \item \textbf{Fluctuation-dissipation relations for thermodynamic distillation processes}, T. Biswas, A. de Oliveira Junior, M. Horodecki, K. Korzekwa, {\href{https://journals.aps.org/pre/abstract/10.1103/PhysRevE.105.054127}{\color{teal}\textit{Phys. Rev. E} \textbf{105}, 054127 (2022).}}
    \item \textbf{Machine classification for probe-based quantum thermometry}, F.S. Luiz, A. de Oliveira Junior, F.F. Fanchini, G.T. Landi, {\href{https://journals.aps.org/pra/abstract/10.1103/PhysRevA.105.022413}{\color{teal}\textit{Phys. Rev. A} \textbf{105}, 022413 (2022).}}
    \item \textbf{Spin-orbit implementation of Solovay-Kitaev decomposition of single-qubit channel}, M.H.M. Passos, A. de Oliveira Junior, M.C. de Oliveira, A.Z. Khoury, J.A.O. Huguenin, {\href{https://journals.aps.org/pra/abstract/10.1103/PhysRevA.102.062601}{\color{teal}\textit{Phys. Rev. A} \textbf{102}, 062601, (2020).}}
    \item \textbf{Ultra-fast Kinematic Vortices in Mesoscopic Superconductors: The Effect of the Self-Field}, L.R. Cadorim, A. de Oliveira Junior, E. Sardella,  {\href{https://www.nature.com/articles/s41598-020-75748-5}{\color{teal}\textit{Sci. Rep.} \textbf{10}, 18662 (2020).}}
\end{enumerate}

To maintain thematic coherence and a clear narrative line, this thesis is primarily composed of four published (accepted) papers:
\begin{itemize}
    \item \textbf{Geometric structure of thermal cones}, A. de Oliveira Junior, J. Czartowski, K. Życzkowski, K. Korzekwa, {\href{https://journals.aps.org/pre/abstract/10.1103/PhysRevE.106.064109}{\color{teal}\textit{Phys. Rev. E} \textbf{106}, 064109 (2022).}}
    \item \textbf{Thermal recall: Memory-assisted Markovian thermal processes\footnotemark[\value{footnote}]}, J. Czartowski, A. de Oliveira Junior, K. Korzekwa, {\href{https://journals.aps.org/prxquantum/abstract/10.1103/PRXQuantum.4.040304}{\color{teal}\textit{PRX Quantum \textbf{4}, 040304} (2023).}}
    \item \textbf{Fluctuation-dissipation relations for thermodynamic distillation processes}, T. Biswas, A. de Oliveira Junior, M. Horodecki, K. Korzekwa, {\href{https://journals.aps.org/pre/abstract/10.1103/PhysRevE.105.054127}{\color{teal}\textit{Phys. Rev. E} \textbf{105}, 054127 (2022).}}
    \item \textbf{Quantum catalysis in cavity QED}, A. de Oliveira Junior, M. Perarnau-Llobet, N. Brunner, P. Lipka-Bartosik, 
    {\href{https://arxiv.org/abs/2305.19324}{\color{teal}{\textit{\textit{Phys. Rev. Res \v{X}. XXXX}}} (2023).}}
\end{itemize}

\index{list-of-publication}

\let\cleardoublepage\clearpage


\begingroup 
\pagenumbering{gobble}
\setlength{\textheight}{230\hscale} 

\etocstandarddisplaystyle 
\etocstandardlines 

{
  \hypersetup{linkcolor=black}
\tableofcontents* 
}


\newpage ~~
\thispagestyle{empty}
\endgroup

{
  \let\oldclearpage\clearpage 
  \let\clearpage\bigskip 


 
\setchapterstyle{kao} 
\mainmatter 

\pagelayout{wide} 
\setcounter{page}{0}
\addpart{What is this thesis about?}
\pagelayout{margin} 
\chapter{Modern thermodynamics in a nutshell}\label{C:Introduction}

\begin{center}
    \emph{From classical to quantum thermodynamics}
\end{center}

\raisebox{-0.1\height}{\includegraphics[width=0.1\textwidth]{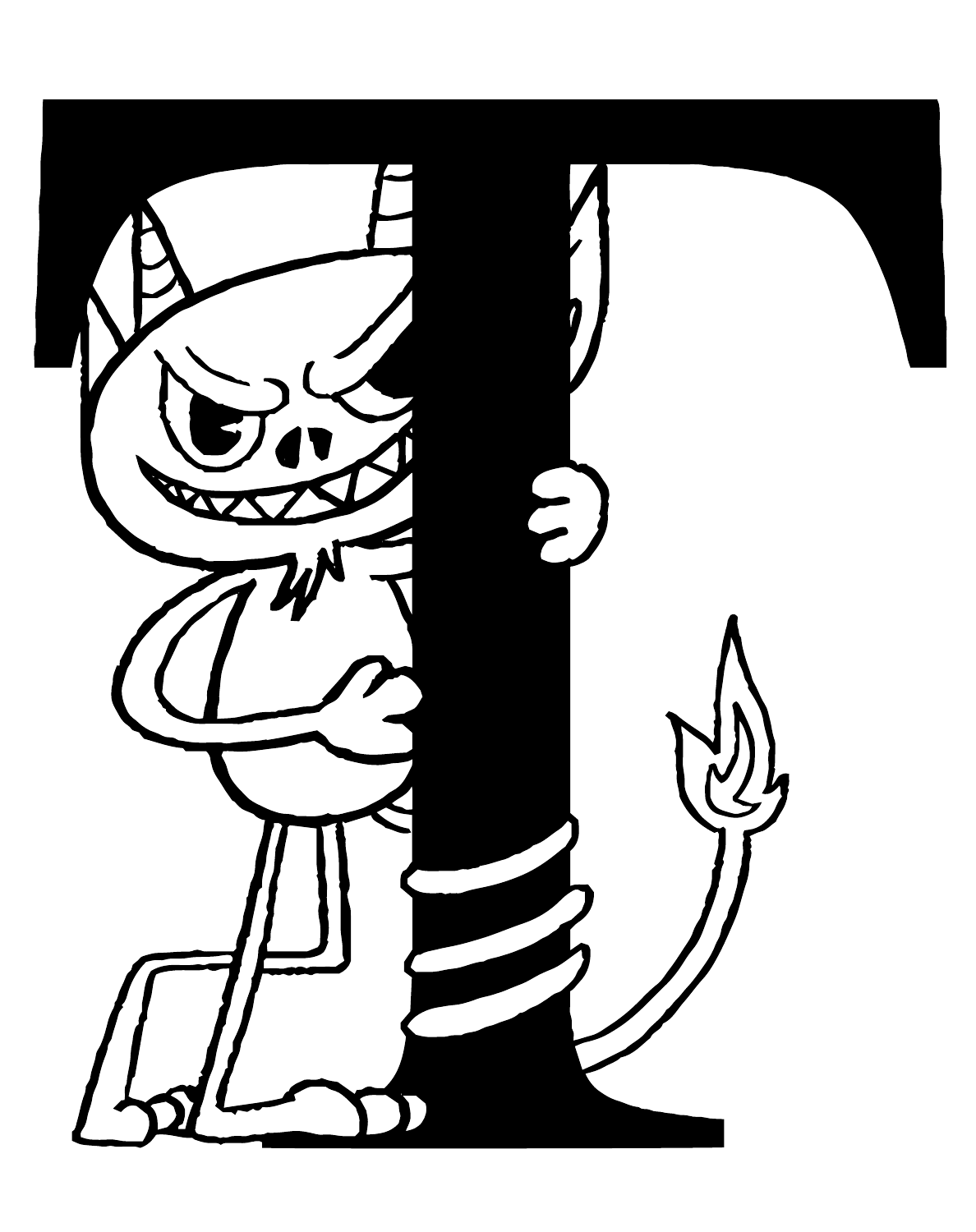}}\hspace{-0.2cm}
hermodynamics is a fundamental theory for describing the behaviour of complex systems. It offers clear and simple guidelines for characterising macroscopic systems in equilibrium, allowing us to predict and control state transformations using a small number of macroscopic variables~\cite{fermi, callen}. Unlike other theories in physics, which begin with a microscopic approach and proceed to a macroscopic description, thermodynamics is unconcerned with microscopic details and instead takes only bulk properties into account\sidenote{This is a historical strategy as the theory was conceived at the end of the eigh- teenth century when the Industrial Revolution began to use heat to generate movement. At the time, no physical theory existed that could adequately explain the nature of heat at the microscopic level.}. Over three centuries, the theory has withstood major scientific revolutions, such as the advent of general relativity and quantum mechanics, establishing itself as one of the pillars of physics. Its importance was unequivocally stated by Einstein, who declared~\cite{einstein}:
\begin{quote}
    \emph{``The only physical theory of universal content, which I am convinced, that within the framework of applicability of its basic concepts will never be overthrown.''}
\end{quote}

Its successes and universality are rooted in four laws, which naturally boil down empirical observations of nature:

\begin{enumerate}[start=0]
    \item \textbf{Zeroth law.} \emph{If two systems are each in thermal equilibrium with a third system, they are also in thermal equilibrium with one another.} \\
    \item \textbf{First law.} \emph{The energy exchanged between a system and its environment in a physical process can be split into two contributions, namely work $W$ and heat} $Q$: \mbox{$dU = \delta Q + \delta W$}\footnote{
    The notation $\delta$ is used to indicate that $W$ and $Q$ are inexact differentials, i.e., path-dependent.}.\\
    \item \textbf{Second law.} \emph{The entropy of a closed system undergoing a spontaneous physical process either increases or remains the same: $\Delta S \geq 0$.}\\
    \item \textbf{Third law.} \emph{As the temperature approaches absolute zero, the change in entropy also tends toward zero: \mbox{$\lim_{T\to 0} \Delta S = 0$}.}
\end{enumerate}

There are many ways to state and rephrase the aforementioned laws, particularly the second law of thermodynamics, as reflected in the formulations by Kelvin\sidenote{\emph{``A transformation whose only final result is to transform into work heat extracted from a source which is at the same temperature throughout is impossible.''}~\cite{fermi}}, Clausius\sidenote{\emph{``A transformation whose only final result is to transfer heat from a body at a given temperature to a body at a higher temperature is impossible''}~\cite{fermi}}, and Carnot\sidenote{\emph{``No engine operating between two heat reservoirs can be more efficient than a Carnot engine operating between those same reservoirs.''}~\cite{Carnot}}. 
The key takeaway is that thermodynamics imposes fundamental constraints on possible physical processes. These constraints are expressed as a set of relations among macroscopic quantities, which are defined in what is known as \emph{thermodynamic limit}. This idealised scenario involves a macroscopic system composed of $n \to \infty$ particles undergoing changes in such a way that the system remains approximately in thermal equilibrium at all times.

In the late 19th century, as the atomic theory gained popularity, scientists began conceptualising a gas as a vast collection of bouncing balls confined within a chamber of finite volume. This sparked the interest of Boltzmann, who ended up establishing a connection between entropy—a quantity previously defined phenomenologically—and the volume of a specific region in phase space, an entity defined in classical mechanics~\cite{boltzmann2022lectures}. From this point onwards, statistical mechanics emerged as a complementary framework to thermodynamics, aiming to bridge the gap between macroscopic and microscopic descriptions. By combining techniques from statistics and mechanics, it provided a new route for explaining the physical properties of matter in bulk through the lens of the dynamical behaviour of its microscopic constituents~\cite{pathria2016statistical}. Importantly, it elucidated the notion of equilibrium and showed that thermodynamic variables can be interpreted as averages of microscopic quantities. For instance, thermal energy may be associated with the statistical mean of the kinetic energy of system's particles. Consequently, the laws of thermodynamics possess a probabilistic nature: they are expected to hold on average, but there is nothing to preclude their temporary violation when we go beyond the thermodynamic limit. As thermodynamics typically involves exceptionally large systems, the law of large numbers tells us that the likelihood of observing significant deviations from the average value essentially becomes negligible.

\begin{marginfigure}[0.5cm]
	\includegraphics[width=4.718cm]{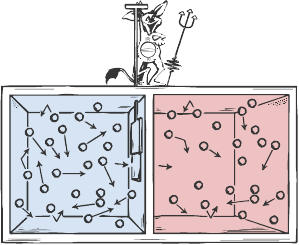}
    \caption{\emph{Maxwell's Gedanken experiment}. An intelligent being, a demon, controls a door between two gas chambers. As gas molecules approach the door, the demon selectively opens and closes it, allowing only fast-moving molecules to pass through in one direction and slow-moving molecules to pass through in the other direction. Since the kinetic temperature of a gas is directly linked to the velocities of its constituent molecules, the demon's strategy results in one chamber warming up while the other chamber cools down. This reduces the overall entropy of the system without requiring any work, thereby violating the second law of thermodynamics.}
	\label{Fig:maxwellsdemon}
\end{marginfigure}

The probabilistic nature of the laws of thermodynamics originates from the interplay between information and thermodynamics. The famous gedanken experiment, known as Maxwell's demon~\cite{leff1990maxwell, Vedral2009}, suggests that, given information about the positions and momenta of particles, one could reduce the entropy of a gas of particles without expending any work, seemingly violating the second law of thermodynamics (see Fig.~\ref{Fig:maxwellsdemon} for an extended discussion). Essentially, having information allows one to violate the second law. However, the recognition of the thermodynamic significance of information is perhaps best captured by Szilárd’s engine~\cite{szilard1929entropieverminderung}, a simple setup that exploits one bit of information (the outcome of an unbiased yes/no measurement) to implement a cyclic process that extracts $1/\beta \log 2$ of energy as work from a thermal reservoir at inverse temperature $\beta$ (see Fig.~\ref{Fig:szilard-engine} for a pictorial representation). As with Maxwell's demon, the Szilárd engine can overcome the second law of thermodynamics whenever some information about the state of the system is available. The proposal put forth by Szilárd has undergone thorough analysis for decades, revealing possible limitations of the scheme and shedding light on the origin of the entropy decrease. While Szilárd did acknowledge the need for an entropy increase during the measurement process to compensate for the entropy reduction, he did not explicitly address the significance of the demon's memory. Efforts to resolve this conundrum have been many and varied~\cite{leff1990maxwell, Vedral2009}. The puzzle was finally solved by Charles Bennett~\cite{bennett1973logical}, who elucidated the necessity of resetting (or erasing) the demon's memory to complete the cycle. Through the resolution of this puzzle, it became apparent that thermodynamics imposes physical constraints on information processing. In particular, the second law can be reformulated as a statement that \emph{no thermodynamic process can result solely in the erasure of information}. Every time information is erased, the erasure process is accompanied by a fundamental heat cost, i.e., an entropy increase in the environment. Alternatively, \emph{Landauer's Principle}~\cite{landauer1991information} states that the erasure process has an unavoidable energetic cost, with the minimum possible amount of energy required to erase a completely unknown bit of information given by $1/\beta\log 2$.

The advancements in statistical mechanics, along with the integration of thermodynamics and information theory, have led scientists to go beyond the standard thermodynamics scenario. Whilst classical thermodynamics primarily focuses on thermodynamic equilibrium, recent developments have paved the way to investigate non-equilibrium states, revealing thermodynamic properties in such scenarios.
\begin{marginfigure}[-2.2cm]
	\includegraphics[width=4.718cm]{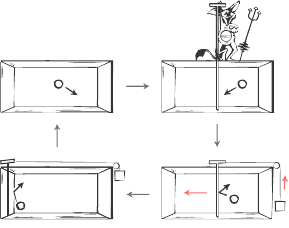}
    \caption{\emph{Szilárd engine}. A chamber containing a single atom is initially in an unknown position. A demon measures the atom's position and inserts a movable wall with an attached mass. As the gas in the chamber expands, the attached mass is raised, resulting in the performance of work. This mechanism appears to enable the conversion of information into useful energy, seemingly contradicting the second law of thermodynamics.}
	\label{Fig:szilard-engine}
\end{marginfigure}
For small deviations from equilibrium, linear response theory~\cite{kubo1957statistical} provides insight into how thermodynamic quantities relate to fluctuations defined at equilibrium. Fluctuation-dissipation theorems are crucial examples as they enable us to understand the thermodynamic response of systems under the effect of perturbations~\cite{kubo1966fluctuation,weber1956fluctuation}. For example, when classical systems evolve under a Hamiltonian and are coupled to a thermal bath, the free energy difference satisfies the following relation, provided the switching is sufficiently slow~\cite{hermans1991simple}: $\Delta F = \langle W \rangle - \beta\sigma_W/2$, where $\beta$ is the inverse temperature, $\langle W \rangle$ is the average work and $\sigma_W = \langle W^2 \rangle - \langle W \rangle^2$ is the work variance. This simple relation tells us that the cost of driving a system out of equilibrium is accounted for by some work dissipation, which is related to thermal fluctuations characterised at equilibrium. Emerging approaches like stochastic thermodynamics~\cite{Seifert2008,Seifert_2012} pick up where the standard methods start to fail and provide a framework for extending classical thermodynamics to finite-size and non-equilibrium systems. Its range of applicability spans from single colloidal particles trapped by time-dependent laser fields~\cite{sekimoto1998langevin,PhysRevLett.89.050601,PhysRevLett.92.140601} to enzymes~\cite{seifert2005fluctuation, schmiedl2007entropy,PhysRevE.79.011917} and thermoelectric devices involving single electron transport~\cite{PhysRevLett.89.116801,PhysRevLett.94.096601,esposito2009thermoelectric,PhysRevE.85.031117}. This field has played a crucial role in uncovering \emph{fluctuation relations}~\cite{campisi2011colloquium}, which establish bounds for processes occurring when systems are driven away from equilibrium. A celebrated example is the Jarzynski equality~\cite{jarzynski1997nonequilibrium}\sidenote{The Jarzynski equality establishes a connection between the free energy differences of two states and the irreversible work performed along an ensemble of trajectories that link those states:
\begin{equation*}
   \langle e^{-\beta w} \rangle = e^{-\beta \Delta F}.
\end{equation*}}, a remarkable relation that allows one to express the free energy difference between two equilibrium states by a non-linear average over the work required to drive the system in a non-equilibrium process from one state to the other. By comparing probability distributions for the work spent in the forward and backward (time-reversed) processes, Crooks found a "refinement" of the Jarzynski relation, now called the Crooks fluctuation theorem~\cite{crooks1999entropy}\sidenote{The Crooks relation offers a refined fluctuation theorem in comparison to the Jarzynski equality, given by the equation:
\begin{equation}
    \frac{P_F(w)}{P_B(w)} = e^{-\beta(\Delta F - w)},
\end{equation}
Here, $P_F(w)$ and $P_B(w)$ represent the forward and backward probability densities of dissipating a specific amount of work~$w$.}. Other results linking fluctuations and thermodynamic quantities have been derived in the past years~\cite{sevick2008fluctuation}, allowing one to discuss thermodynamics beyond its limits.

When dealing with even smaller systems, quantum effects come into play, and fluctuations are no longer solely of thermal origin but also quantum in nature. New effects, such as entanglement and coherence, naturally raise the question of how classical formulations of thermodynamics are affected at the level of a few quanta. Addressing these questions entails a preliminary inquiry into how concepts like heat, work, and temperature can be extended to the quantum realm. Apart from the drive to clarify fundamental aspects, the onset of a new quantum revolution involving the miniaturisation of devices to the nanoscale has also sparked questions about implementing quantum effects to devise optimal thermodynamic protocols and explore the possibilities for achieving real quantum advantages. As a result, the emerging field known as \emph{quantum thermodynamics}~\cite{gemmer2009quantum,kosloff2013quantum, Vinjanampathy_2016,Goold2016, binder2018thermodynamics,Deffner2019,RevModPhys.93.035008,strasberg2022quantum}, seeks to extend standard thermodynamics to systems of sizes well below the thermodynamic limit, in non-equilibrium situations, and with the full inclusion of quantum effects. 

Whilst the field has flourished over the last two decades, its roots stretch back to the middle of the last century, when physicists realised that qubits could achieve temperatures below absolute zero~\cite{PhysRev.103.20}, and that three-level masers could be regarded as heat engines~\cite{PhysRevLett.2.262,PhysRev.156.343}. Shortly thereafter, in the 70s, mathematical approaches leaning on the theory of open quantum systems formalised the study of such systems as models of heat engines~\cite{kosloff1984quantum,alicki1979quantum}. 
Finally, the field has consolidated in recent years, as our increasing ability to manipulate and control systems at smaller scales has enabled experimental physicists to miniaturise heat engines, pumps, and refrigerators into the quantum realm~\cite{blickle2012realization,martinez2016brownian, koski2014experimental, thierschmann2015three, rossnagel2016single, PhysRevLett.123.240601,maslennikov2019quantum}.\sidenote{See Book~\cite{halpern2018quantumbook} (Chapter 6) for a detailed historical account of the evolution of quantum thermodynamics.}

Due to the broad nature of the field, there have been several different approaches in recent years that have either been unified or combined into a single approach. In the context of capturing thermodynamics at the level of a few quanta, two prominent candidates emerge. The first candidate is based on the theory of open quantum systems~\cite{breuer2002theory,rivas2012open}, which describes the continuous evolution of a quantum system over time using a type of differential equation known as a master equation. This approach is typically model-dependent and describes the thermalisation of a system that is weakly coupled to a large environment.

The second approach, known as the \emph{resource theory of thermodynamics}~\cite{Janzing2000,horodecki2013fundamental,brandao2015second,Lostaglio2019,ng2019resource}, is general, model-independent, and aims to characterise the possible physical evolution of systems under certain constraints, such as energy conservation, locality, etc. Unlike the first approach, it does not rely on a master equation description. Aiming to capture thermodynamic properties out of certain constraints, it does not focus on a specific model or explicitly solve for a particular dynamics.

Although at first glance, both frameworks may seem like completely distinct approaches, they can be mapped to each other and are simply different sides of the same coin~\cite{lostaglio2021continuous}. Where both approaches differ lies in the toolkits used and the questions asked. 

\begin{center}
    \emph{What am I gonna find here?}
\end{center}

This thesis focuses on the resource-theoretic approach~\cite{chitambar2019quantum}. Its objectives include characterising thermodynamic transformations across different regimes and identifying optimal methods for exploiting them, while taking into account realistic constraints. These constraints arise from fundamental principles, such as the second law of thermodynamics, as well as more practical considerations like limited access to memory. The same approach also enables us to revisit longstanding questions, such as that of the thermodynamic arrow of time, which imposes a fundamental asymmetry in the flow of events. Finally, the last chapter of this thesis is dedicated to a novel connection between quantum optics and resource theories, opening up new directions from both theoretical and experimental perspectives.

\chapter{Motivation and contributions}
\section{Geometric structure of thermal cones}

\begin{center}
    \emph{State of the art and motivation}
\end{center}

The second law of thermodynamics imposes a fundamental asymmetry in the flow of events. The so-called thermodynamic arrow of time introduces an ordering that divides the system's state space into past, future and incomparable regions. This \emph{thermodynamic ordering} can be explored by focusing on the most general type of energy-conserving interaction between the system and a thermal bath. These transformations encode the structure of the thermodynamic arrow of time by telling us which states can be reached from a given state. While substantial insights have been gained from studying the future, explicit characterisation of both the incomparable region and the past has not been addressed.

\begin{center}
    \emph{Main results}
\end{center}

I\sidenote{In the first chapter, although the efforts described are a product of both my work and that of my collaborators, I have chosen to use the pronoun 'I' for simplicity and to emphasise the presentation of the results. From Chapter~\ref{C:mathematical_preliminaries} onwards, the pronoun 'we' is employed in line with conventional academic writing and to reflect collective contributions.} have made a significant step towards geometrically understanding the thermodynamic arrow of time by investigating the concept of \emph{thermal cones}, i.e., sets of states that a given state can thermodynamically evolve to (the future thermal cone) or evolve from (the past thermal cone). I provided an explicit construction of the past thermal cone and the incomparable region for a given $d$-dimensional energy-incoherent state, and delved into a detailed analysis of their volumes. Together with previously known results on the future thermal cone, this construction offers a comprehensive view of the thermodynamically allowed evolution.  

In deriving my results, I introduced a novel tool—an embedding lattice for thermomajorisation—which might be of independent interest to scientists working on information-theoretic approaches to thermodynamics. Lastly, I showed how these findings can be used to study possible state transformations in theories of entanglement and coherence via the common majorisation framework.

\section{Thermal recall: Memory-assisted Markovian thermal processes}

\begin{center}
    \emph{State of the art and motivation}
\end{center}

One of the major endeavours currently undertaken by the quantum community is the challenge of extending thermodynamics to the level of just a few particles, while still incorporating all relevant properties into a single framework. To this end, two primary approaches have emerged. The first approach is based on Markovian master equations, while the second focuses on thermal operations. Despite the success of both frameworks, each comes with its own specific limitations. The former models memoryless dynamics, whereas the latter describes arbitrarily non-Markovian processes. However, neither approach allows for easy incorporation of finite-memory effects, which necessitates further attention. Although previous research in non-Markovian dynamics has developed model-dependent approaches, there is a need for a general, model-independent and comprehensive framework.

\begin{center}
    \emph{Main results}
\end{center}

I bridged the gap between thermal operations and Markovian thermal processes by introducing the concept of memory-assisted Markovian thermal processes (MeMTP) -- memoryless thermodynamic processes that are promoted to non-Markovianity by allowing partial control over the bath’s degrees of freedom. My main contribution was to put forward a family of algorithms that enabled interpolation between the regime of memoryless dynamics and that with full control over both the system and the bath. Furthermore, I proved that, in the infinite temperature limit, all thermodynamic transformations induced by arbitrarily non-Markovian dynamics are recovered using MeMTP when memory is large enough. For the finite temperature regime, I have demonstrated the convergence to a subset of transitions and, based on strong numerical evidence, posed a conjecture that the convergence extends to arbitrary transitions.

These results opened the door to studying the role memory plays in the performance of thermodynamic protocols. First, I investigated work extraction in the intermediate regime of limited memory, comparing it with the two traditional scenarios of no memory and complete control. Secondly, I addressed the problem of cooling a two-level system using a two-dimensional memory characterised by a non-trivial Hamiltonian. This minimal model demonstrated that, when memory is brought into the picture, one can further cool a system below the ambient temperature.

\section{Fluctuation-dissipation relations for thermodynamic distillation processes}

\begin{center}
    \emph{State of the art and motivation}
\end{center}

The development of novel quantum technologies is closely linked to our level of understanding of the laws of thermodynamics at the scale of a few quanta. In this regime, thermal and quantum fluctuations around thermodynamic quantities become significant, and dissipation becomes unavoidable. This implies that the system's free energy is inevitably lost during a thermodynamic process that transforms the system's state, which limits its potential as a quantum resource. From both fundamental and applied perspectives, it is crucial to comprehend the ultimate limits of such dissipation. This brings the need to develop a theoretical framework that accommodates quantum effects such as superpositions, while simultaneously handling finite-size systems where fluctuations around thermodynamic averages are dominant. This regime inherently poses a fundamental question: \emph{what are the necessary and sufficient conditions underlying thermodynamic transformations of finite-size systems?} Variants of this question, such as single-shot and asymptotic state interconversion, have been addressed, but the interplay between fluctuations and dissipation has remained unresolved.

\begin{center}
    \emph{Main results}
\end{center}

I developed a genuinely quantum framework allowing one to push the thermodynamic description beyond macroscopic systems. Specifically, I was able to find necessary and sufficient conditions for the existence of a thermodynamic transformation between different non-equilibrium states of a few-particle systems. This allowed me to establish a precise relationship between the free energy fluctuations of the system's initial out-of-equilibrium state and the minimal free energy dissipated during a thermodynamic process. As a result, I was able to recast the well-established fluctuation-dissipation relations within the the context of quantum information. This, in turn, enabled me to pinpoint the optimal performance of thermodynamic protocols, including work extraction, information erasure, and thermodynamically-free communication, up to second-order asymptotics based on the number of processed systems. These findings represent a pioneering analysis of such thermodynamic protocols for quantum states exhibiting coherence between distinct energy eigenstates, especially in the intermediate regime of large yet finite $N$.

\section{Catalysis in cavity QED}

\begin{center}
    \emph{State of the art and motivation}
\end{center}

Catalysis is a ubiquitous phenomenon in science. It consists of using an auxiliary system (a catalyst) to enable processes that would otherwise be impossible. Over the last two decades, this notion has spread to the field of quantum physics. It has become instrumental in revealing fundamental constraints on both entanglement manipulation and thermodynamic processes. Additionally, it has found many applications within quantum information theory. However, this effect is typically described within a highly abstract framework known as \emph{resource theories}. Despite its successes, this approach struggles to fully capture the behaviour of physically realisable systems, thereby limiting the applicability of quantum catalysis in practical scenarios. This poses a challenge for translating the concept of quantum catalysis from a purely theoretical construct to a practically implementable tool. 

\begin{center}
    \emph{Main results}
\end{center}

I delved into the practical aspects of quantum catalysis to determine its relevance and potential applications, especially in experimental settings. During my investigation, I uncovered the effect of quantum catalysis within a paradigmatic quantum optics setup -- namely, the Jaynes-Cummings model in which an atom interacts with an optical cavity. By using the atom as a catalyst, I demonstrated that this effect  can the generate non-classical states of light in the cavity. This insight prompted me to translate the known framework of catalytic transformations to the realm of quantum optics, thereby proving its usefulness in practical scenarios. In doing so, I identified which atomic states can act as catalysts and assessed the degree of non-classicality these states could induce in the cavity. Futheremore, I also elucidated the role of quantum correlations and quantum coherence in the creation of non-classical states of light during a catalytic process. These findings significantly broaden the horizons of quantum catalysis, pushing it beyond the theoretical boundaries of quantum resource theories and marking a crucial step forward in its practical applications within quantum science.

\section{Outline of the thesis} 

This thesis is structured as follows. Chapter~\ref{C:mathematical_preliminaries} begins with mathematical preliminaries, setting out the notation and introducing essential concepts required for subsequent chapters. These discussions include an examination of probability distributions and their transformations, as well as various types of majorisation such as thermomajorisation, continuous thermomajorisation, and approximate thermomajorisation. 

Next, Chapter~\ref{C:resource_theory_of_thermodynamics} provides an introduction to the resource theory of thermodynamics, predominantly encompassing previously known results. Following the formal introduction of the set of free operations, termed thermal operations, and a discussion of their characteristics, this framework is applied to the analysis of thermodynamic protocols. I then briefly introduce a paradigmatic approach to quantum thermodynamics based on Markovian master equations, which will be used and contrasted with the resource-theoretic approach in subsequent sections of this thesis.

Chapter~\ref{C:thermal_cones} is devoted to the study of the geometric structure of thermal cones. This chapter builds on the following original result~\cite{deoliveirajunior2022}. The discussion starts with the main results concerning the construction of majorisation cones and their interpretation in the thermodynamic context, as well as in other majorisation-based theories. Subsequently, I investigate the structure of thermal cones, which emerge from the thermomajorisation relation, using a novel tool known as the embedding lattice. An explicit characterisation of the incomparable and past thermal regions is then presented, followed by an in-depth analysis of their properties. New thermodynamic monotones, defined by the volumes of the past and future thermal cones, are introduced. I further delve into their operational interpretation and detail their properties. Finally, the chapter ends by extending the notion of thermal cones beyond diagonal states to the simplest case of a coherent qubit.

Chapter~\ref{C:memory-MTP} is dedicated to the study of memory-assisted Markovian thermal processes. This chapter is based on the following original result~\cite{czartowski2023thermal}. The chapter starts by highlighting the distinctions between the frameworks of thermal operations and Markovian thermal processes. Subsequently, the central concept of this chapter, namely, memory-assisted Markovian thermal processes is introduced. A protocol is then outlined that uses thermal memory states to approximate non-Markovian thermodynamic state transitions with Markovian thermal processes. Furthermore, it is demonstrated how this approximation converges to the complete set of transitions achievable via thermal operations as the memory size increases. Lastly, I explain how this framework can be used to quantify the role played by memory effects in thermodynamic protocols such as work extraction and cooling.

Chapter~\ref{C:finite-size} contains my original findings on fluctuation-dissipation relations presented through the language of information theory. This chapter is based on the following original result~\cite{PhysRevE.105.054127}. The chapter begins with a high-level description that provides a glimpse into my investigations and conveys the underlying physical intuition to a broad audience, without delving into the technical details of the developed framework. I then present the key results concerning optimal transformation error and the fluctuation-dissipation relation for incoherent and general pure states. Then, I discuss their thermodynamic interpretation and apply my findings to three specific thermodynamic protocols: work extraction, information erasure, and thermodynamically-free communication.

Chapter~\ref{C:CQED} is dedicated to the study of quantum catalysis in the paradigmatic quantum optics model of Jaynes-Cummings. This chapter is based on the following manuscript~\cite{junior2023quantum}. I start by discussing the general aspects of catalytic transformations and bridging the gap with quantum optics. This is done by uncovering a catalytic process that enables the generation of non-classical states of light within the Jaynes-Cummings model. Next, I investigate the mechanism of this catalytic process and identify two crucial ingredients: \emph{correlations} and \emph{quantum coherence}. I then treat the problem analytically by explicitly solving this model. This allows me to identify which states serve as catalysts and explore the degree of non-classicality induced by these atomic states. Finally, I conclude the chapter by discussing the generality of catalysis and potential future research directions.

\section{A few more details}

From Chapter~\ref{C:mathematical_preliminaries} onwards, defintions and notable results are presented in titled boxes with the following color code:
\begin{definition}
A precise and organised meaning to a new term.
\end{definition}

\begin{proposition}
A statement that is derived from what has been discussed.
\end{proposition}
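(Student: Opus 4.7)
The statement to be established is self-referential: it asserts that the boxed sentence is itself a consequence of the material preceding it. The plan is therefore a direct verification, carried out by exhibiting the sentence as the conclusion of a short argument whose premises are the sentences that introduce the coloured boxes.

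First, I would delimit the premises. The paragraph preceding the two boxes sets up a typographical convention --- definitions are rendered in one colour and propositions in another --- and implicitly promises a template demonstrating each. Second, I would observe that the definition-template in the previous box already instantiates the convention on the definition side. By analogy, the proposition-template in the present box must instantiate it on the proposition side, and its content, being a bare description of what a proposition is, follows from (and in fact amounts to a rephrasing of) the convention itself. Taken together, these two observations constitute the derivation whose existence the statement asserts.

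The main obstacle is precisely the self-referential character of the proof: the proposition claims its own derivational status, so any successful argument must simultaneously furnish the derivation it asserts to exist. I would resolve this by treating the statement as a fixed point, in the spirit of a diagonal construction --- once the short derivation in the preceding paragraph is written down, its mere existence supplies the witness the proposition requires, and the argument closes on itself. No auxiliary lemma is needed and no calculation is involved; the entire proof is structural, and the statement should be read not as a substantive theorem but as an illustrative placeholder whose role is to fix the visual style for genuine propositions appearing from Chapter~\ref{C:mathematical_preliminaries} onwards.
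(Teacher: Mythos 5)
Your closing diagnosis is the correct one: this boxed sentence is not a mathematical claim but a typographical specimen in the thesis's style guide, and the paper accordingly offers no proof of it, so there is nothing to compare your argument against. The self-referential "fixed point" construction you build is harmless but unnecessary; the only substantive content here is the observation you make in your final sentence, namely that the box merely fixes the visual convention for genuine propositions appearing later in the thesis.
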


\begin{lemma}
A minor, proven proposition that is used as a stepping stone to a larger result.
\end{lemma}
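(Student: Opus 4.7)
The final item in the excerpt is a \emph{lemma} environment whose body reads: ``A minor, proven proposition that is used as a stepping stone to a larger result.'' This is a definitional placeholder paired with the preceding \emph{definition} and \emph{proposition} boxes, inserted purely to exhibit the colour code used throughout the thesis. There is no hypothesis, no conclusion, and no ambient mathematical structure in which to anchor a derivation, so no honest proof in the usual sense can be offered at this point of the document.

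If one nonetheless insists on reading the sentence as a substantive claim, the natural interpretation is meta-mathematical: every item subsequently typeset inside a \emph{lemma} box in this thesis is (i) strictly weaker than some distinguished theorem or proposition in the same work, (ii) accompanied by a self-contained proof, and (iii) actually invoked in establishing at least one such larger result. The plan to prove this would be a finite, mechanical audit: enumerate every lemma stated from Chapter~\ref{C:mathematical_preliminaries} onwards, locate the subsequent result in which it is cited, and verify that the citation is essential in the sense that excising the lemma would break the enclosing proof. Under this reading the ``theorem'' becomes a cross-referencing exercise rather than an analytical one.

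The main obstacle is that the assertion is currently vacuous --- no lemma has yet been stated, and nothing in the introductory chapters depends on one. A clean way forward is to defer the argument until the corresponding material appears: the embedding-lattice machinery promised in Chapter~\ref{C:thermal_cones}, the memory-state constructions of Chapter~\ref{C:memory-MTP}, and the second-order asymptotic bounds of Chapter~\ref{C:finite-size} each rely on intermediate results that will genuinely play this supporting role. Once those are in place, the meta-claim reduces to bookkeeping against the list of \emph{lemma} boxes actually used in the main theorems. Until then, the statement is best read as typographic scaffolding rather than mathematics, and the proof proposal is simply to wait for genuine mathematical content before engaging.
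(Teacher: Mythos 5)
You are correct that this is not a mathematical statement at all. The ``lemma'' in question appears in the section \emph{A few more details}, where the author is simply displaying the colour-coded box styles used throughout the thesis; its body text is a gloss on what the word ``lemma'' means, alongside similar glosses for \emph{definition}, \emph{proposition}, \emph{theorem}, \emph{corollary}, and \emph{observation}. The paper offers no proof because there is nothing to prove, and your recognition of it as typographic scaffolding is exactly right. The speculative ``meta-mathematical audit'' reading in your second paragraph is unnecessary --- the context makes plain that the box is illustrative formatting, not a claim --- but it does no harm since you correctly set it aside at the end.
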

\begin{theorem}
Larger result.
\end{theorem}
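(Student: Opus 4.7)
The final statement asserts simply that the content of this box is a ``larger result,'' and in context the only sensible reading is meta-mathematical: the theorem environment, unlike the definition, proposition, and (in particular) lemma boxes above---where the lemma is explicitly described as a stepping stone to a larger result---sits at the top of the typographical hierarchy used throughout the thesis. My plan is to formalise this informal hierarchy as a strict partial order $\prec$ on the four statement classes and then witness the claim both abstractly, from the verbal descriptions given in the preceding boxes, and concretely, by pointing to forthcoming theorems that outrank the lemmas on which they depend.

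The first step would be to extract the ordering $\text{definition} \prec \text{proposition} \prec \text{lemma} \prec \text{theorem}$ directly from the four descriptions themselves: a definition merely fixes terminology, a proposition follows from prior discussion, a lemma serves a larger result, and a theorem is that larger result. The second step would be to witness this ordering concretely in later chapters, where principal theorems---such as the characterisation of the past and incomparable thermal cones in Chapter~\ref{C:thermal_cones}, the convergence results for memory-assisted Markovian thermal processes in Chapter~\ref{C:memory-MTP}, and the fluctuation--dissipation relations in Chapter~\ref{C:finite-size}---each rest upon several auxiliary lemmas and propositions without any single one of them independently yielding the full content of the theorem it supports.

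The main obstacle is that ``larger'' is not a mathematical primitive, so any formalisation of $\prec$ risks being either tautological or underdetermined. A cleaner route is to treat the statement as a stipulation rather than a claim requiring proof: by editorial convention the theorem environment is reserved for results deemed larger than those relegated to the other three boxes, so the assertion is discharged by the very act of placement in this box. I expect this latter reading to be the intended one, given that the surrounding text is explicitly setting out the thesis's typographical conventions rather than asserting substantive mathematical content, and that a genuine proof would in any case be vacuously supported by the empty body of the box.
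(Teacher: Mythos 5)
Your final paragraph lands on the correct reading: this ``theorem'' is not a mathematical claim but an illustrative placeholder in the section laying out the thesis's colour-coded box conventions, and the paper accordingly offers no proof of it. Your concluding move --- treating the statement as a stipulation discharged by the editorial convention itself --- matches the paper's (non-)treatment exactly, and the preceding formalisation of a partial order on statement classes, while harmless, is unnecessary scaffolding.
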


\begin{corollary}
A proposition that follows from the larger result.
\end{corollary}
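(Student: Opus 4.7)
The plan is to establish that from the theorem labelled \emph{Larger result} one can derive a proposition, in the sense already committed to by the proposition box two lines earlier. First I would invoke the theorem as given, treating the string \emph{Larger result} as a placeholder hypothesis $T$. Then, appealing to the lemma immediately above, which asserts that a minor, proven proposition serves as a stepping stone to any larger result, I would reverse the direction of that implication: having been handed $T$ already, any proposition compatible with it arises by truncation of the inferential chain lemma $\Rightarrow$ theorem $\Rightarrow$ corollary, so that the content of the corollary is obtained as the downstream endpoint.

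The key step is to make precise what \emph{follows from} means in this meta-setting. I would adopt the convention that a proposition $P$ follows from $T$ whenever assuming $T$ as an axiom renders $P$ syntactically derivable in the ambient deductive system. With this convention, the corollary reduces to exhibiting one such derivable $P^{\star}$, and the natural choice is the proposition box itself, whose content reads \emph{A statement that is derived from what has been discussed}. Since that content is by design an instance of \emph{derived from the discussion}, and the theorem is the most recently discussed item, $P^{\star}$ is the desired witness. A single application of modus ponens then closes the argument.

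The hard part will be the self-referential flavour of the whole construction: the corollary describes itself while simultaneously describing its relation to the theorem above. I would handle this by working in a stratified metalanguage that distinguishes the mathematical content of each coloured box from the pedagogical caption explaining the role of that box. Once this stratification is fixed, the colour-coded hierarchy (definition $\prec$ proposition, lemma $\prec$ theorem $\prec$ corollary) becomes an ordinary partial order on dependency, and the corollary is immediate from the theorem by taking one step up this order. If a reader resisted the self-reference, I would fall back on a purely formal reading in which the five boxes are interpreted as schematic placeholders, and the claim of the corollary is then discharged by the trivial observation that any theorem admits at least one consequence, namely itself.
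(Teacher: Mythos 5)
There is nothing to review here, because the ``statement'' you were handed is not a mathematical claim at all. It occurs in the section \emph{A few more details}, where the author is introducing the colour-coded box conventions used throughout the thesis; the text ``A proposition that follows from the larger result'' is simply the caption placed inside a sample \texttt{corollary} environment to show the reader what such a box looks like and what role it plays, exactly as the neighbouring boxes display placeholder captions for definitions, lemmas, theorems and observations. The paper accordingly offers no proof, because there is no proposition to prove.

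Your proposal treats the caption as if it were a self-referential metatheorem and builds a stratified-metalanguage argument around it. That is an entertaining exercise, but it solves a problem that does not exist: you have invented both the hypothesis ($T$, the ``larger result'') and the conclusion ($P^{\star}$), neither of which appears as mathematical content in the source. The ``reversal'' of the lemma-to-theorem implication and the final appeal to modus ponens are therefore vacuous, since there is no ambient deductive system in which these boxes live as formulas. When you encounter a boxed environment like this, the first check should be whether it carries an actual assertion (quantified variables, hypotheses, a conclusion) or is merely typographic scaffolding; here it is the latter, and the correct response is to note that no proof is required rather than to manufacture one.
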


\begin{observation}
A statement or a remark that is made based on informal analysis. 
\end{observation}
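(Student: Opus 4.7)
The final item in the excerpt is a style legend rather than a mathematical claim: the \emph{observation} environment is introduced with the placeholder gloss ``A statement or a remark that is made based on informal analysis.'' This sits in the block of six colour-coded boxes that define the typesetting conventions used from Chapter~\ref{C:mathematical_preliminaries} onwards, alongside parallel placeholder text for \emph{definition}, \emph{proposition}, \emph{lemma}, \emph{theorem}, and \emph{corollary}. Since these are meta-declarations about formatting, there is no proposition with mathematical content to discharge.

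If one insists on treating the final box as a bona fide claim --- roughly, ``the \texttt{observation} environment is used precisely for remarks whose justification is heuristic or partial, rather than for fully deductive arguments'' --- then the plan would be to establish consistency of usage across the thesis. First, I would enumerate every invocation of \texttt{observation} in Chapters~\ref{C:thermal_cones}--\ref{C:CQED} and, in parallel, the invocations of \texttt{theorem}, \texttt{lemma}, and \texttt{proposition}. Second, I would verify that each \texttt{observation} is a remark supported by numerical evidence, physical intuition, or a partial derivation (for example, the conjectured convergence of memory-assisted Markovian thermal processes to the full set of thermal-operation transitions mentioned in the Chapter~\ref{C:memory-MTP} summary is the kind of content one would expect in such a box). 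Third, I would check that no content with a complete proof has been demoted to an \texttt{observation}, and that no heuristic remark has been promoted to a \texttt{theorem}.

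The main obstacle --- and the reason this can never be more than a plan --- is the absence of a sharp formal criterion distinguishing ``informal analysis'' from a completed proof; the boundary must be adjudicated case by case, and the statement is best understood not as something to be proved but as an editorial convention fixed by the author for the remainder of the document.
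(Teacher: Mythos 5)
You are right that this is not a mathematical statement at all: it is the placeholder text inside the colour-coded legend the author uses to explain the thesis's box conventions, and accordingly the paper offers no proof because there is nothing to prove. Your identification of the box as an editorial convention rather than a proposition is exactly the correct reading, and the speculative ``plan'' you sketch for auditing usage consistency is a reasonable sanity check but, as you yourself note, cannot and need not be turned into a proof.
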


Not all proofs in the text are immediately presented after lemmas, theorems, or corollaries. Only short proofs that directly follow from those results are provided. This choice is intended to assist readers who are not interested in delving into detailed derivations and would prefer to skip them entirely. Each chapter always ends with a section entitled \emph{``Derivation of the Results,"} in which auxiliary tools and results are derived and discussed in detail. 

The thesis includes several comments (or solved problems) framed like this

\begin{kaobox}[frametitle= Title of the comment or problem]
Description of the comment or problem.
\end{kaobox}

Typically, comments and problems are interspersed throughout the text rather than being crucial for the understanding of the thesis. These comments and problems can be skipped without hindering the overall reading. However, they do serve to complement the description of the topics being discussed and provide additional insights. 
Some relevant comments, extensions of arguments, or even curious observations are displayed as margin notes.

\pagelayout{wide} 
\addpart{Background}
\pagelayout{margin} 
\chapter{The geometry of quantum states}\label{C:mathematical_preliminaries}

\section{Very general and brief remarks about quantum mechanics}

Any quantum system\marginnote{This section is intended to provide a concise introduction to the fundamental concepts of quantum theory discussed throughout this thesis. It is not meant to be a comprehensive overview of quantum mechanics; for that purpose, the recommended books of Peres~\cite{peres1995quantum}, Cohen-Tannoudji~\cite{cohen1977quantum}, and Nielsen and Chuang~\cite{nielsen2010quantum} are highly suggested.} can be described by a complex Hilbert space, denoted as $\mathcal{H}$. This is a vector space over the complex field that is equipped with an inner product. A specific vector in $\mathcal{H}$ can be represented using the convenient Dirac notation as $\ket{\psi}$, which is read as \emph{ket psi}. The inner product of two vectors, $\ket{\psi}$ and $\ket{\phi}$, is expressed as $\braket{\psi}{\phi}$. Through the use of the inner product, we can define a linear functional, denoted as $\bra{\phi}$ and read as \emph{bra phi}, for each vector $\ket{\phi}$. As a result, the inner product forms what is referred to as a \emph{bracket} in this context. This vector space can either be finite or infinite dimensional. In this thesis, except in Chapter~\ref{C:CQED}, we will focus solely on finite Hilbert spaces.

The mathematical object used to describe the state of a physical system at a given instant in time is called a state. In quantum mechanics, this state is represented by a density operator, denoted as $\rho$, that acts on the Hilbert space $\mathcal{H}$ associated with the physical system it describes. Formally, a density operator $\rho$ is defined by the following conditions:
\begin{enumerate}[label=(C\arabic*)]
  \item \textbf{Positive Semi-definite.} For all $\ket{\psi}$ in $\mathcal{H}$ it follows that $\braket{\psi|\rho}{\psi} \geq 0$, or simply $\rho \geq 0.$
  \item \textbf{Normalisation.} The trace of the density operator is equal to one, i.e., $\tr(\rho) = 1$. 
  \item \textbf{Hermiticity.} The density operator is Hermitian $\rho = \rho^\dagger$. 
\end{enumerate}
These conditions guarantee that the density operator provides a valid quantum mechanical description of a system's state, encapsulating both pure states and statistical mixtures of states. The condition of positive semi-definiteness ensures that the probabilities of outcomes obtained from measurements are real and non-negative. The normalisation condition confirms that the total probability of all possible states of the quantum system sums to one, reflecting the statistical nature of quantum mechanics. Lastly, the requirement of Hermiticity ensures that all eigenvalues of the density operator are real numbers, a necessary prerequisite for them to represent valid probabilities. As a result, every density operator can be written as the convex combination of
unidimensional projectors:
\begin{equation}
    \rho = \sum_k p_k \ketbra{\psi_k}{\psi_k} \:\: \operatorname{with} \: \: \sum_k p_k = 1 \:\: \operatorname{and} \: \: \forall k: p_k \geq 0.
\end{equation}
This decomposition is generally not unique. 

We will be interested in the Hilbert space of a composiste quantum system comprising a sytem with Hilbert space $\mathcal{H}_A$ and a system with Hilbert space $\mathcal{H}_B$, such that the joint Hilbert space is given by the tensorial product $\mathcal{H}_{AB} = \mathcal{H}_A \otimes \mathcal{H}_B$. For a given quantum state $\rho_{AB}$ of the composite system, the reduced state of subsystem $A$ can be obtained by taking the partial trace over the degrees of freedom of subsystem $B$. This operation is denoted as $\rho_A = \tr_B(\rho_{AB})$, where $\tr_B$ represents the partial trace over subsystem $B$\sidenote{The reduced state on $\mathcal{H}_A$, obtained as the partial trace over $\mathcal{H}_B$, is defined as
\begin{equation*}
\rho_A:= \sum_{j}(\mathbbm{1}_A\otimes \bra{j}_B) \rho_{AB} (\mathbbm{1}_A \otimes \ket{j}_B),
\end{equation*}
where $\mathbbm{1}$ denotes the identity and ${ \ket{j}_B}$ is an orthonormal basis of $\mathcal{H}_B$.
}. In this thesis, subsystem $B$ will represent the environment, or heat bath, with which the main system, subsystem $A$, interacts. The state of the environment is typically represented by a thermal Gibbs state, which is a statistical ensemble at equilibrium.

The state of an isolated quantum system (time-independent case) follows the Lioville-von Neumann equation
\begin{equation}
    \frac{\partial \rho(t)}{\partial t} = -i[H,\rho(t)]
\end{equation}
where [.,.] denotes the commutator, $[A,B]=AB-BA$, and $H$ is the Hamiltonian of the system. Throughout this thesis, we will set $\hbar =1$. 

Isolated, or closed, quantum systems satisfies two important properties:
\begin{enumerate}[label=(P\arabic*)]
  \item \textbf{Unitary evolution.} The dynamics of a closed system undergoes a unitary time evolution given by the operator $U(t) = \exp(-iHt)$. Consequently, if $\rho(0)$ denotes the initial state of the system, the evolved state at time $\tau$ is given by $\rho(\tau) = U(\tau) \rho(0) U^{\dagger}(\tau)$.\sidenote{If the Hamiltonian is time-dependent, the evolution of $\rho(t)$ is still described by a unitary operator $U(t)$, but its computation is more cumbersome as it follows the so-called Dyson series, i.e., 
\begin{align*}
    U(t) &= \sum_{n=0}^{\infty}(-i)^n \int_{0}^{t}dt_1 H(t_1) \times ... \\ &\hspace{2cm}\times ... \int_{0}^{t}dt_n H(t_n) \\
    &\equiv \exp_+\left[-i \int_{0}^{t}ds H(t_s)\right],
\end{align*}
where the subscript $+$ denotes time ordering.}
  \item \textbf{Purity.} The eigenvalues $p_k$ of $\rho(t)$ do not change in time, i.e.,
  \begin{equation}
      \rho(t) = \sum_k p_k \ketbra{\psi_k(t)}{\psi_k(t)},
  \end{equation}
  where the $p_k$'s are time-independent probabilities and $\ket{\psi_k(t)}$ belongs to an orthonormal basis of wave functions, i.e., $\braket{\psi_i(t)}{\psi_j(t)} = \delta_{ij}$, where $\delta_{ij}$ is the Kronecker delta. The purity of $\rho(t)$, defined as $\mu := \tr[\rho(t)^2]$, is conserved in time. If the state is pure, then $\mu = 1$.
\end{enumerate}

A general evolution of an open quantum system is described by a quantum channel, which is defined as a completely positive, trace-preserving map (CPTP) acting on the quantum state $\rho$. Typical microscopic derivations lead to a master equation of the following general form~\cite{kossakowski1972quantum,lindblad1976generators}
\begin{equation}
    \frac{\partial \rho(t)}{\partial t} = -\frac{i}{\hbar}[H,\rho(t)] + \mathcal{L}_t[\rho(t)].
\end{equation}
The first term on the right-hand side  represents the Hamiltonian evolution of the system, governing the closed (reversible) quantum dynamics. The second term, known as the Lindbladian or dissipator, governs the open (irreversible) quantum dynamics. This captures the effects of decoherence and dissipation due to the system's interaction with its environment. The specific form of the Lindbladian depends on the type of interaction, and can generally be written as~\cite{breuer2002theory}:
\begin{equation}
	\label{eq:lindbladian}
\mathcal{L}_t(\rho) = \sum_{i} r_i(t) \left[ L_i(t) \rho L_i(t)^\dag - \frac{1}{2}\Bigl\{L_i(t)^\dag L_i(t), \rho\Bigl\}\, \right],
\end{equation} 
where $\{\cdot,\cdot\}$ denotes the anticommutator, $L_i(t)$ represents time-dependent jump operators, and $r_i(t)\geq 0$ are time-dependent, non-negative rates associated with the different jump processes. These jump rates $r_i(t)$ and operators $L_i(t)$ embody the specifics of the system's interaction with its environment.

\newpage

\section{Dynamics of quasi-classical states}\label{Sec:quasi-classical_states}

Consider a finite system whose state is described by a $d$-dimensional vector \mbox{$\v{p} = (p_1, \dots, p_d)$} with $\sum_{i=1}^d p_i = 1$ and $p_i \geq 0$ for all $i \in \{1, \dots, d\}$. These states are referred to as \emph{quasi-classical} and they belong to the probability simplex, denoted by $\Delta_d$, which represents the space of normalised vectors of dimension $d$ with real entries [see Fig.~\ref{Fig:space_of_states}]:

\begin{marginfigure}[-2.25300cm]
	\includegraphics[width=41.1378mm]{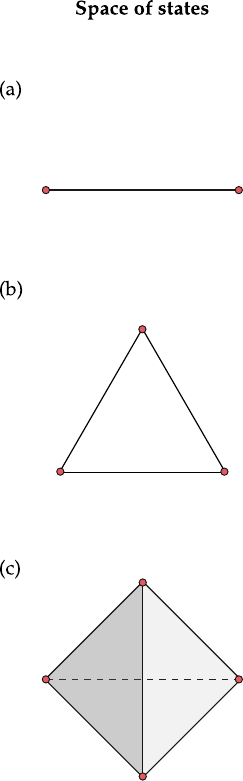}
	\caption{\emph{Space of states.} The state space of a quasi-classical system is represented by a simplex. A two-level system (a) corresponds to a 1-dimensional simplex, a line segment; a three-level system (b) corresponds to a 2-dimensional simplex, a triangle and its interior and a four-level system (c) corresponds to a 3-dimensional simplex, a tetrahedron.}
	\label{Fig:space_of_states}
\end{marginfigure}

\begin{equation}\label{Eq:probability_simplex}
    \Delta_d= \left\{(p_1, ..., p_d) \in \mathbb{R}^d : \sum_i p_i = 1\right\}.   
\end{equation}
Probability vectors will be denoted by bold lowercase letters $\v p \in \Delta_d$ and their corresponding cumulative counterparts by bold uppercase vectors $\v P: P_j = \sum_{i=1}^j p_i$ for $j \in \{1, \dots, d\}$.

A stochastic matrix $\Lambda$ represents the most general evolution between elements of $\Delta_d$. These matrices are linear maps that transform probability vectors into new probability vectors. Thus, its components satisfy the following constraints:
\begin{equation}\label{Eq:stochastic_matrix_conditions}
    \Lambda_{ij} \geq 0 \quad \text{and} \quad \sum_i \Lambda_{ij} = 1 .
\end{equation}
These conditions ensure that each entry of $\Lambda$ is non-negative and that its rows sum to unity, which corresponds to the conservation of probability. The matrix $\Lambda$ describes the dynamics of a system, in a state $\v p$, by the matrix-vector product:
\begin{equation}\label{Eq:stochastic_matrix_evolution}
    \v q = \Lambda \v{p}  \quad \text{or, equivalently,} \quad \v q = \sum_{j=1}^d \Lambda_{ij} p_j .
\end{equation}

A stochastic matrix that connects two states $\v p$ and $\v q$ is referred to as a \emph{process}. If $\Lambda$ can be generated by a continuous Markov process, it is said to be \emph{embeddable}~\cite{davies2010embeddable} and the process is \emph{memoryless}. More precisely, we introduce a rate matrix or generator $L$ as a matrix with finite entries that satisfies
\begin{equation}
    L_{ij} \geq 0 \quad \text{and} \quad \sum_{i} L_{ij} = 0.
\end{equation}
Then, a continuous one-parameter family $L(t)$ of rate matrices generates a family of stochastic processes $\Lambda(t)$ satisfying
\begin{equation}\label{Eq:master-equation}
    \frac{d}{dt}\Lambda (t) = L(t) \Lambda (t) \quad \text{with} \quad \Lambda(0) = \iden.
\end{equation}
The purpose of the control $L(t)$ is to achieve a target stochastic process $\Lambda$ at a given final time $t_f$, i.e., $\Lambda = \Lambda(t_f)$\sidenote{Equation~\ref{Eq:master-equation} is also known as a \emph{master equation}.}. If it is feasible to achieve such a target process for some choice of $L(t)$, then $\Lambda$ is said to be embeddable\footnote{Determining which stochastic matrices $\Lambda$ are embeddable remains a challenging open problem that has been extensively studied for decades. The complete characterisation is currently limited to $2 \times 2$~\cite{Kingman1962}, $3 \times 3$~\cite{cuthbert1973logarithm,johansen1974some,carette1995characterizations} and $4\times 4$~\cite{casanellas2020embedding} stochastic matrices, although various necessary conditions are known~\cite{goodman1970intrinsic,shahbeigi2023quantum}.}.

A stochastic matrix $\Lambda$ that additionally satisfies the condition \mbox{$\sum_j \Lambda_{ij} =1$} is called \emph{bistochastic matrix}. The term ``bistochastic'' refers to the fact that both rows and columns of the matrix sum to unity. Due to its significance in subsequent discussions, this matrix will be denoted by $\Lambda^{0}$. This set of matrices is completely characterised via the following theorem~\cite{bhatia1996matrix}:
\begin{theorem}[Birkhoff theorem]\label{Thm:Birkhoff}
    The set of bistochastic matrices is a convex set whose extreme points are permutation matrices.
\end{theorem}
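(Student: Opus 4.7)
The plan is to prove the theorem in two natural parts. Convexity is nearly immediate: if $\Lambda^{0}_1$ and $\Lambda^{0}_2$ are bistochastic and $\lambda \in [0,1]$, then $\Lambda := \lambda \Lambda^{0}_1 + (1-\lambda)\Lambda^{0}_2$ has non-negative entries (a convex combination of non-negatives), and both its row sums and column sums equal $\lambda \cdot 1 + (1-\lambda)\cdot 1 = 1$. So the bistochasticity conditions, being linear equalities together with a linear inequality, are preserved under convex combinations.

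For the characterisation of extreme points, I would first show that each permutation matrix $P_\sigma$ is extremal. Suppose $P_\sigma = \lambda A + (1-\lambda) B$ with $\lambda \in (0,1)$ and $A,B$ bistochastic. Wherever $(P_\sigma)_{ij}=0$, the non-negativity of $A_{ij}, B_{ij}$ forces $A_{ij} = B_{ij} = 0$. Hence $A$ and $B$ are supported on the graph of $\sigma$, and since each row must sum to $1$ with only one non-zero entry, we get $A = B = P_\sigma$.

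The converse — that \emph{every} extreme point is a permutation matrix — is the substantive direction, and the plan is to prove it constructively by showing that any bistochastic $\Lambda^{0}$ can be written as a convex combination of permutation matrices. The essential combinatorial input is the existence of a permutation $\sigma$ with $\Lambda^{0}_{i,\sigma(i)} > 0$ for every $i$. I would obtain this from Hall's marriage theorem applied to the bipartite graph whose edges are the non-zero entries of $\Lambda^{0}$: for any subset $S$ of rows, let $T$ be the columns they connect to; then
\begin{equation}
|S| = \sum_{i \in S} \sum_{j} \Lambda^{0}_{ij} = \sum_{i \in S} \sum_{j \in T} \Lambda^{0}_{ij} \leq \sum_{j \in T} \sum_{i} \Lambda^{0}_{ij} = |T|,
\end{equation}
so Hall's condition holds and a perfect matching $\sigma$ exists. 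Setting $c := \min_i \Lambda^{0}_{i,\sigma(i)} > 0$, I would then write $\Lambda^{0} = c\, P_\sigma + (1-c)\, \widetilde{\Lambda}^{0}$, where $\widetilde{\Lambda}^{0}$ is bistochastic (after dividing by $1-c$, whenever $c<1$) and has at least one more zero entry than $\Lambda^{0}$. Iterating terminates in at most $d^2$ steps, producing the desired decomposition. The extreme points must therefore lie among the permutation matrices.

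The main obstacle is the combinatorial matching step — a direct inductive argument on the number of non-zero entries works only after one has secured a permutation inside the support of $\Lambda^{0}$, and it is precisely here that Hall's theorem is indispensable; without it the reduction step has no starting point. Once that lemma is in place, the rest of the argument is bookkeeping.
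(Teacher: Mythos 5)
Your proof is correct. Note that the thesis does not give its own argument for this theorem --- it simply defers to Chapter~2 of the cited reference (Bhatia, \emph{Matrix Analysis}) --- and the proof there is essentially the one you give: the standard Birkhoff--von Neumann decomposition, obtaining a permutation inside the support via Hall's marriage theorem (equivalently the Frobenius--K\"onig theorem), peeling it off with weight $c=\min_i \Lambda^{0}_{i,\sigma(i)}$, and inducting on the number of non-zero entries; together with your direct verification that each permutation matrix is extreme, this yields the full characterisation.
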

\begin{marginfigure}[-1.451cm]
        \centering
	\includegraphics[width=3.771cm]{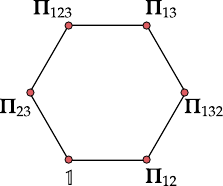}
	\caption{\emph{Birkhoff polytope.} The set $\mathcal{B}_3$ is a convex hull whose extreme points are permutation matrices (denoted by $\v \Pi_{ij}$).}
	\label{Fig:birkhoff_polytope}
\end{marginfigure}
A proof of this theorem can be found in Chapter 2 of Ref.~\cite{bhatia1996matrix}. 

The above result is a fundamental theorem stating that every bistochastic matrix can be represented as a probabilistic mixture of permutation matrices. The set of bistochastic matrices of dimension $N\times N$, denoted by $\mathcal{B}_N$, is called the Birkhoff polytope. This set forms a polytope in $\mathbb{R}^{(d-1)^2}$ with $d!$ vertices and with centre occupied by the uniform matrix with all entries equal to $1/d$~(see Fig.~\ref{Fig:birkhoff_polytope} for a pictorial representation of $\mathcal{B}_3$).

A notable property of bistochastic matrices is that they preserve the identity element of the space of probability distributions, i.e., the \emph{uniform distribution}
\begin{equation}
\v \eta = \frac{1}{d}\Bigl(1, ..., 1\Bigl).   
\end{equation}
\begin{marginfigure}[-2.4cm]
	\includegraphics[width=4.718cm]{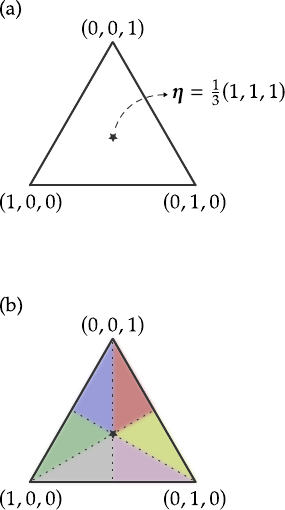}
	\caption{\emph{Properties of $\Delta_3$.} The state space of a three-level system consists of sharp states situated at the corners and a uniform state, represented by a star, in the centre. Its symmetry allows one for partitioning $\Delta_3$ into $3! = 6$ Weyl chambers, shown in (b) with different colors, which are asymmetrical simplices characterised by a particular ordering of the components in the probability vector.}
	\label{Fig:space-of-states-d-3}
\end{marginfigure}
Note that $\v \eta$ is maximally uncertain as all possible outcomes are equally likely and no information is available to predict which outcome will occur. Conversely, the ``opposite'' of a uniform state is the sharp state $\v s_k$, with $(\v s_k)_j = \delta_{jk}$, where one outcome is certain to occur, and all others have zero probability. Thus, it is a state in which all the information necessary to predict the outcome is available. In the space of states, sharp states are located at the vertices of the probability simplex, while the uniform state is at the centre~[see Fig.~\hyperref[Fig:space-of-states-d-3]{\ref{Fig:space-of-states-d-3}a}]. 

It is important to note that the space of states is symmetric under the action of the symmetric group $\mathcal{S}_d$. This symmetry arises from the constraints that define the probability simplex, which remain unchanged under permutations. As a result, the probability simplex can be divided into $d!$ equal parts, known as \emph{Weyl chambers}~\cite{hall2013lie}. Each chamber is composed of probability vectors that are ordered in non-decreasing order by a specific permutation. The chamber corresponding to the identity permutation is referred to as the canonical Weyl chamber (gray triangle in Fig.~\hyperref[Fig:space-of-states-d-3]{\ref{Fig:space-of-states-d-3}b}), and all other chambers are images of the canonical chamber under the action of the Weyl group.

In future considerations, states will always be ordered relative to a reference state, and their original order will not play any role in its description. The reordering will be based on the particular context and will possess the characteristics of a \emph{partial ordering}. This generalises the notion of total ordering by allowing incomparability between elements. Formally, a partial order $L$ is a binary relation $\succ$ over a set $S$ that satisfies three conditions:
\begin{enumerate}[label=\roman*]
\item Reflexivity: $s_1 \succ s_1$
\item Transitivity: if $s_1 \succ s_2$ and $s_2 \succ s_3$ then $s_1 \succ s_3$
\item Antisymmetry: if $s_1 \succ s_2$ and $s_2 \succ s_1$ then $s_1 = s_2$
\end{enumerate}
Binary relations satisfying only the first two properties are known as preorders. We will focus on a special kind of partial order known as a lattice and, later in Chapter~\ref{C:resource_theory_of_thermodynamics}, interpret it from a thermodynamic perspective. More precisely, the notion of lattice is defined as follows:
\begin{marginfigure}[-1cm]
	\includegraphics[width=4.7839 cm]{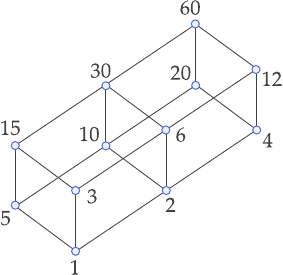}
	\caption{\emph{Partially ordered set $D_{60}$.} All the divisors, when ordered by divisibility, form a partially ordered set, which constitutes a lattice. For example, $6 \vee 10 = 30$ and $6 \wedge 10 = 2$}
	\label{Fig:lattice-example}
\end{marginfigure}

\begin{definition}[Lattice]\label{def_Lattice} A partially ordered set $(L, \leq)$ forms a lattice if for every pair of elements $\v p ,\v q \in L$, there exists a least upper bound, called \textbf{join} and denoted by $\v p \vee \v q$, such that $\v p \vee \v q \geq \v p$ and $\v p \vee \v q \geq \v q$; and a greatest lower bound, called \textbf{meet} and denoted by $\v p \wedge \v q$, such that $\v p \wedge\v q \leq \v p$ and $\v p \wedge \v q \leq \v q$.
\end{definition}

\section{Majorisation zoo}

In the previous section, we reviewed the fundamental concepts of probability distributions and stochastic matrices, along with their underlying geometrical properties. Building on this foundation, we will now explore an essential tool in the theory of statistical comparisons: \emph{majorisation} and its \emph{variants}. The concept of majorisation was first introduced by Muirhead~\cite{muirhead1902some} and later popularised by Hardy, Littlewood, and Pólya~\cite{hardy1952inequalities}. It is a powerful and easy-to-use tool that is widely applied to compare two probability distributions and assess their disorder. This concept has broad applications in various fields, including, economics~\cite{lorenz1905methods,dalton1920measurement}, computer science~\cite{parker1980conditions,polak1986majorization} and quantum physics~\cite{nielsen1999conditions,nielsen2001majorization,jonathan1999entanglement}. Notably, majorisation has been particularly useful in quantum mechanics, where it originated in entanglement manipulation~\cite{nielsen1999conditions,} before spreading to quantum thermodynamics~\cite{brandao2013resource}, coherence theory~\cite{Du2015}, and other subfields~\cite{van2023continuous}. The core of majorisation and its variants involves ordering a given probability distribution based on specific criteria and then applying a set of conditions to compare this initial distribution with a target one. Intriguingly, a direct link also exists between majorisation and stochastic matrices, which is further connected to the existence of certain processes.

\subsection{Majorisation}

We begin by defining majorisation\sidenote{Majorisation can be extended to density matrices, in which case it is viewed as a preorder of their spectra. More specifically, we say that $\rho \succ \sigma$ if $\v \lambda(\rho) \succ \v \lambda(\sigma)$, where $\v \lambda(\chi)$ denotes the vector of eigenvalues of a matrix $\chi$~\cite{hardy1952inequalities,horn1954doubly}} as follows:
\begin{definition}[Majorisation]\label{def_Majorisation} Given two $d$-dimensional probability distributions $\v p$, $\v q \in \Delta_d$, we say that $\v{p}$ \emph{majorises} $\v{q}$, and denote it by $\v p \succ \v q$, if and only if
\begin{equation} \label{eq_majorisation}
    \sum_{i=1}^k p_i^{\downarrow}\geq\sum_{i=1}^k q_i^{\downarrow} \quad \text{for all} \quad  k\in\{1\dots d\},
\end{equation}
where $\v{p}^{\downarrow}$ denotes the vector $\v{p}$ rearranged in a non-increasing order. 
\end{definition}
\begin{figure*}
\includegraphics[width=15.843cm]{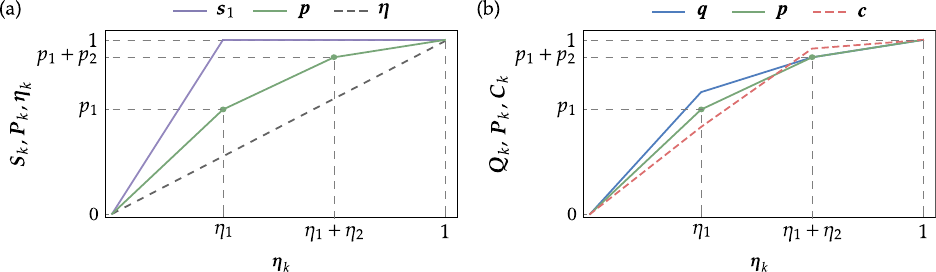}
	\caption{\emph{A geometric view of majorisation.} (a) Majorisation curves for a three-dimensional sharp state $\v s_1$, $\v p$ and uniform state $\v \eta$ and (b) for three different states $\v q$, $\v p$ and $\v c$. Note that while $\v p$ majorises $\v q$ [since $f_{\v p}(x)$ is never below $f_{\v q}(x)$], both states are incomparable with $\v c$, as their majorisation curves cross with $f_{\v c}(x)$.}
	\label{Fig:majorisation-curves}
\end{figure*}

Equivalently, the majorisation relation can be expressed in a more geometric way by defining a \emph{majorisation curve}\sidenote{Also known as the \emph{Lorenz curve}, it was introduced by American economist Max O. Lorenz as a way to visualize wealth distribution and inequality among the population of the United States~\cite{lorenz1905methods}.}.

\begin{definition}[Majorisation curve]\label{def_Majorisation_curve} Let $\v p$ and $\v \eta$ be a $d$-dimensional probability vector and a uniform state, respectively. The majorisation curve is a piece-wise linear curve $f_{\v p}(x)$ in $\mathbb{R}^2$ obtained by joining the origin $(0,0)$ and the points
\begin{equation}
\left(\sum_{i=1}^{k} \eta_i, \sum_{i=1}^{k} p^{\downarrow}_i \right) \quad \text{for} \quad k \in \{1, ..., d\}. 
\end{equation}
\end{definition}

A distribution $\v{p}$ majorises $\v{q}$ if and only if the majorisation curve $f_{\v p}(x)$ of $\v{p}$ is always above that of $\v{q}$,
\begin{equation}
\v p \succ \v q \iff \forall x\in \left[0,1\right]:~f_{\v p}(x) \geq  f_{\v q}(x) \, .
\end{equation}
Majorisation does not introduce a total order. A pair of states $\v p$ and $\v q$ may be incomparable with each other, in the sense that neither $\v{p}$ majorises~$\v{q}$, nor $\v{q}$ majorises $\v{p}$. In terms of majorisation curves, this implies that both curves intersect each other~ (see Fig.~\hyperref[Fig:majorisation-curves]{\ref{Fig:majorisation-curves}b}).

The partial ordering of probability vectors induced by majorisation can be interpreted as a formalisation of the concept of disorder with respect to the uniform distribution $\v \eta$. First, note that sharp distributions majorise all other distributions, and all distributions in turn majorise the uniform distribution $\v \eta$. Second, one can link majorisation to the concept of entropy via a specific class of functions -- those that preserve the majorisation-induced partial order structure. This can be more precisely illustrated through the following definition:
\begin{definition}[Schur-convex functions]\label{def_schur-convex} A function $f: \mathbbm{R} \to \mathbbm{R}$ is called Schur-convex if and only if
\begin{equation}
\v p \succ \v q \Rightarrow f(\v p) \geq f(\v q) 
\end{equation}
and Schur-concave if and only if $\v p \succ \v q \Rightarrow f(\v p) \leq f(\v q)$.
\end{definition}
The function $f$ is a homomorphism from the partially ordered set $(\mathbbm{R}^n, \succ)$ to the totally ordered set of real numbers. Examples of such functions encompass all R\'{e}nyi entropies, which, for a $d$-dimensional probability distribution $\v p$, are defined as follows
\begin{equation}
    H_{\alpha}(\v p) := \frac{\operatorname{sgn(\alpha)}}{1-\alpha}\log\left(\sum_{i=1}^d p^{\alpha}_i\right),
\end{equation}
where $\alpha \in \mathbbm{R}$. The cases $\alpha \to \pm \infty$ and $\alpha \to 1$ are defined by suitable limits $ H_1(\v p) = -\sum_{i=1}^d p_1 \log p_i \:\: ,\:\: H_{\infty}(\v p) = - \log \underset{i}{\operatorname{max}}\:p_i \:\: ,\:\:  H_{-\infty} = \log \underset{i}{\operatorname{min}} \: p_i$. The case $\alpha = 0$ is known as the Burg entropy $H_0(\v p) = \frac{1}{d}\sum_{i=1}^d \log p_i$. An example of the Rényi entropy for different values of $\alpha$ is shown in Fig.~\ref{Fig:renyeentropy}.

\begin{marginfigure}[-2.458cm]
	\includegraphics[width=4.756cm]{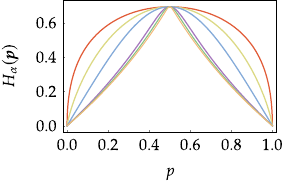}
	\caption{\emph{Rényi entropy}. Entropy of a random variable $\v p = (p, 1-p)$ as a function of $p$ for the following values of $\alpha = 0.5$ (red), $1$ (yellow), $2$ (blue), $10$ (purple), $20$ (green) and $50$ (orange).}
	\label{Fig:renyeentropy}
\end{marginfigure}

\begin{kaobox}[frametitle=Majorisation hierarchy]
For $d$-dimensional vectors, we have
\begin{equation}\label{Eq:majorisation-example}
(1,0..,0) \succ \Bigl(\frac{1}{2},\frac{1}{2},..,0\Bigl) \succ ... \succ \Bigl(\frac{1}{d-1}, ..., \frac{1}{d-1},0\Bigl) \succ \Bigl(\frac{1}{d}, ..., \frac{1}{d}\Bigl).\nonumber
\end{equation}
\end{kaobox}

We are now ready to establish the connection between majorisation relations and bistochastic state transformations, as captured by the renowned Hardy-Littlewood-P\'{o}lya theorem~\cite{hardy1952inequalities}.

\begin{theorem}[Majorisation $\&$ bistochastic matrices]\label{Thm:HLP}
There exists a bistochastic matrix $\Lambda^{0}$, $\Lambda^{0} \v \eta=\v \eta$, mapping $\v{p}$ to $\v{q}$ if and only if $\v{p} \succ \v{q}$.
\end{theorem}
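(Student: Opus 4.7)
The plan is to establish each implication of this biconditional separately, with the reverse direction being the more delicate one.

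For the forward direction ``$\Lambda^{0}$ exists $\Rightarrow$ $\v{p} \succ \v{q}$,'' I would invoke Theorem~\ref{Thm:Birkhoff} to decompose the bistochastic matrix as $\Lambda^{0} = \sum_\pi \lambda_\pi \Pi_\pi$, a convex combination of permutation matrices. Then, fixing any $k \in \{1,\dots,d\}$, let $S$ be the set of indices carrying the $k$ largest entries of $\v{q}$, so that
\begin{equation*}
\sum_{i=1}^k q_i^{\downarrow} \;=\; \sum_{i \in S} q_i \;=\; \sum_{i \in S}\sum_j \Lambda^{0}_{ij}\, p_j \;=\; \sum_j c_j\, p_j, \qquad c_j := \sum_{i \in S} \Lambda^{0}_{ij}.
\end{equation*}
The coefficients $c_j$ are manifestly non-negative, bounded above by $1$ (by column-stochasticity from Eq.~\eqref{Eq:stochastic_matrix_conditions}), and satisfy $\sum_j c_j = k$ (by row-stochasticity, i.e.\ precisely the extra condition promoting $\Lambda$ to bistochasticity). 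Maximising $\sum_j c_j p_j$ subject to these constraints yields $\sum_{i=1}^k p_i^{\downarrow}$, attained by concentrating unit weight on the $k$ largest entries of $\v{p}$, which is the required majorisation inequality.

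For the converse ``$\v{p} \succ \v{q}$ $\Rightarrow$ $\Lambda^{0}$ exists,'' I would take the constructive route via so-called T-transforms: bistochastic matrices of the form $T = (1-\mu)\iden + \mu\, P_{jk}$, where $P_{jk}$ is the transposition of coordinates $j$ and $k$. After reordering both vectors in non-increasing order (which only introduces extra bistochastic permutations), assume $\v{p} \neq \v{q}$. I would then identify the largest index $j$ with $p_j > q_j$ and the smallest index $k > j$ with $p_k < q_k$; the majorisation inequalities guarantee such a pair exists whenever the vectors disagree. Choosing $\mu$ so as to transfer mass $\delta = \min(p_j - q_j,\, q_k - p_k)$ from coordinate $j$ to $k$ produces a new vector $\v{p}' = T\v{p}$ that (i) still majorises $\v{q}$, (ii) remains non-increasingly ordered, and (iii) agrees with $\v{q}$ on at least one additional coordinate. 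Iterating, the procedure terminates in at most $d-1$ steps, and composing all T-transforms (products of bistochastic matrices remain bistochastic) yields the desired $\Lambda^{0}$.

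The main obstacle lies in the converse direction, specifically in verifying that properties (i)--(iii) hold at every step, and that a suitable pair $(j,k)$ always exists while $\v{p}\neq\v{q}$. Both facts rely on careful bookkeeping of partial sums: the majorisation inequalities at the critical indices $j-1$ and $k-1$ must be used to preclude any violation propagating to intermediate coordinates after the transfer of mass $\delta$. As an alternative that sidesteps the iterative construction, one can instead argue that the set $\{\v{x}\in\Delta_d :\v{p} \succ \v{x}\}$ coincides with the convex hull of all permutations of $\v{p}$, so that $\v{q} = \sum_\pi \lambda_\pi \Pi_\pi \v{p}$ immediately furnishes $\Lambda^{0} = \sum_\pi \lambda_\pi \Pi_\pi$, which is bistochastic by Theorem~\ref{Thm:Birkhoff}.
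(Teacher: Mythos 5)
The paper does not prove this theorem: it simply refers the reader to Chapter~2 of the cited Marshall--Olkin reference. Your sketch is in fact an accurate rendering of that reference's argument, so it is the right proof to have in mind. In the forward direction, the argument you actually give --- bounding $\sum_{i\in S} q_i = \sum_j c_j\, p_j$ via the constraints $0 \le c_j \le 1$ (column sums) and $\sum_j c_j = k$ (row sums, i.e.\ bistochasticity) and then maximising a linear functional over that polytope --- is complete and self-contained; the opening appeal to Theorem~\ref{Thm:Birkhoff} does no work and should be dropped, or else replaced by the genuinely Birkhoff-based variant ($\v{q}=\sum_\pi\lambda_\pi\Pi_\pi\v{p}$ is a convex mixture of rearrangements of $\v{p}$, each with the same Lorenz curve, hence majorised by $\v{p}$). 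In the converse direction, the T-transform construction you outline --- largest $j$ with $p_j>q_j$, smallest $k>j$ with $p_k<q_k$, transfer mass $\delta=\min(p_j-q_j,\,q_k-p_k)$, iterate --- is precisely Lemma~B.1 of the cited reference, and the bookkeeping you flag (preservation of the majorisation inequalities, preservation of the ordering, termination in at most $d-1$ steps) is exactly the content of that lemma; your description of where the work lies is accurate.

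One caution on the alternative route you propose: the identity $\{\v{x}\in\Delta_d : \v{p}\succ\v{x}\}=\operatorname{conv}\{\Pi\v{p}\}$ is indeed equivalent to the theorem, but \emph{within this thesis} it appears as Corollary~\ref{Thm:Future_majoristion_cone}, which the paper derives \emph{from} Theorem~\ref{Thm:HLP} together with Birkhoff. Invoking it here would therefore be circular unless you supply an independent proof of the convex-hull characterisation (Rado's theorem). As a standalone argument it is fine; as a shortcut inside the paper's logical scaffolding it is not.
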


See Chapter 2 of Ref.~\cite{marshall1979inequalities}, for a proof of the Hardy-Littlewood-P\'{o}lya  theorem. Now, let us briefly discuss the significance of this result. Firstly, it is important to note that the existence of processes connecting two probability distributions is directly related to a majorisation relation. This constitutes the bedrock of this thesis, as most of the presented results focus on generalisations of this theorem within a given context. For instance, the conditions for the existence of entanglement transformations between pure bipartite states under Local Operations and Classical Communication (LOCC) can be reframed in terms of majorisation, allowing us to determine whether a given LOCC exists~\cite{nielsen1999conditions}. As we shall see, at infinite temperature [$\beta = 0$ in Eq.\eqref{Eq:thermal-state}] or when energy levels are degenerate, thermodynamic transformations are described by bistochastic matrices. In such a scenario, the Hardy-Littlewood-P\'{o}lya  theorem plays a vital role in identifying the set of allowed transformations. Thus, Theorem.~\ref{Thm:HLP} serves as a cornerstone for subsequent results.

\subsection{Thermomajorisation}\label{Subsec-thermomajorisation}

The partial ordering of probability vectors, which is induced by majorisation, can be conceptualised as formalising the measure of disorder relative to the uniform distribution $\v \eta$. One might then pose the question of whether a majorisation relative to a general probability distribution can be defined, so that disorder is measured relative to a generic non-uniform distribution. This question was addressed in Refs.~\cite{veinott1971least,ruch1978mixing}, where the concept of $\v d$-majorisation was formalised and introduced\sidenote{Very recently, Ref.~\cite{VomEndedMajorisation} further investigated the geometric and topological properties of $\v d$-majorisation. Moreover, the concept of strict positivity proves to be a useful tool in analysing $\v d$-majorisation on matrices and their order properties. See~\cite{VomEndeStrict} for a detailed discussion.}. Mathematically, for a given vector $\v d$, we say that $\v p$ \emph{$\v d$-majorises} $\v q$, and denote by $\v p \succ_{\v d}\v q$, if there exists a stochastic matrix $\Lambda^{\v d}$, which leaves the vector $\v d$ invariant and maps $\v p$ into $\v q$ -- such a matrix then is called $\v d$-stochastic.

In this section, we introduce the notion of thermomajorisation~\cite{horodecki2013fundamental} -- the thermodynamic analogue of majorisation. It defines a partial order relation relative to the thermal distribution
\begin{equation}\label{Eq:thermal-state}
    \v \gamma = \frac{1}{\sum_{i=1}^{d}e^{-\beta E_i}} \Bigl(e^{-\beta E_1}, ..., e^{-\beta E_d} \Bigl),
\end{equation}
rather than the uniform distribution $\v \eta$. Generally, we will say that a given distribution $\v p$ \emph{thermomajorises} a target distribution $\v q$ if there exist a stochastic matrix that preserves the Gibbs state $\v \gamma$ and maps $\v p$ into $\v q$. Due to the importance of this class of matrices, we define it as follows.

\begin{definition}[Gibbs-preserving matrix]
A stochastic matrix is called Gibbs-preserving (GP), and denoted by $\Lambda^{\beta}$, if it leaves $\v \gamma$ invariant:
\begin{equation}
    \Lambda^{\beta} \v \gamma = \v \gamma.
\end{equation}
\end{definition}

To formally define the concept of thermomajorisation, we start by defining the concept of an embedding map. 

\begin{definition}[Embedding map]
Given a thermal distribution $\v{\gamma}$ with rational entries, \mbox{$\gamma_i=D_i/D$} and \mbox{$D_i,D\in\mathbb{N}$}, the embedding map $\Gamma$ sends a $d$-dimensional probability distribution $\v{p}$ to a $D$-dimensional probability distribution $\hat{\v{p}}:=\Gamma(\v{p})$ as follow:
	\begin{equation}
	\hat{\v{p}}=\left[\phantom{\frac{i}{i}}\!\!\!\!\right.\underbrace{\frac{p_1}{D_1},\dots,\frac{p_1}{D_1}}_{D_1\mathrm{~times}},
	\,\dots\,,\underbrace{\frac{p_d}{D_d},\dots,\frac{p_d}{D_d}}_{D_d\mathrm{~times}}\left.\phantom{\frac{i}{i}}\!\!\!\!\right]. \label{eq:embedding}
	\end{equation}
\end{definition}
\begin{marginfigure}[-4.3cm]
	\includegraphics[width=4.718cm]{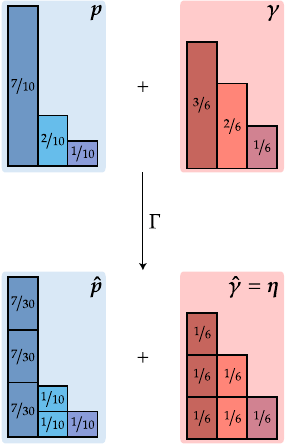}
	\caption{\emph{Embedding scheme.} Loosely speaking one can understand the embedding map as a transformation that allows for translating between different descriptions of a physical system. If $\v p$ represents a statistical description of a system in the canonical ensemble, then $\hat{\v p}$ describes the same state in the microcanonical ensemble~\cite{Egloff_2015}. This can be seen from the representation of the embedded version of $\v p$ -- the sum of smaller blocks yields the larger one. Note that the embedded version of a thermal start results in a flat state.
 }
	\label{Fig:embedding}
\end{marginfigure}
Irrational values of $\gamma_i$ can be approached with arbitrarily high accuracy by choosing a sufficiently large value of $D$. Although it is not necessary, for the purpose of using the embedding map as a technique, we will express the thermal distribution $\v \gamma$ as a probability vector with rational entries,
\begin{equation}\label{eq:definitiongibbsvec}
    \v \gamma = \qty(\frac{D_1}{D}, ..., \frac{D_d}{D}).
\end{equation}
Moreover, we will refer to the sets of repeated elements above as embedding boxes~[see Fig.~\ref{Fig:embedding} for an illustrative example]. 

\begin{kaobox}[frametitle=Embedding map example]
Given a thermal distribution $\v \gamma = (\nicefrac{3}{6},\nicefrac{2}{6},\nicefrac{1}{6})$ and a three-dimensional probability distribution $\v p =(\nicefrac{7}{10},\nicefrac{2}{10},\nicefrac{1}{10})$, the embedding map $\Gamma$ sends $\v p$ into
\begin{equation*}
\hat{\v p} = \Biggl(\underbrace{\frac{7}{30},\frac{7}{30},\frac{7}{30}}_{\frac{7}{10}},\underbrace{\frac{2}{20},\frac{2}{20}}_{\frac{2}{10}},\frac{1}{10}\Biggl). 
\end{equation*}
See Fig.~\ref{Fig:embedding} for an illustrative example.
\end{kaobox}

We also observe the following effect by embedding both the maximally mixed states and the sharp states:

\begin{observation}[Embedded maximally mixed \& sharp states]
The embedded version of a thermal state $\v \gamma$ is a maximally mixed state over $D$ states
\begin{equation}
    \hat{\v\gamma} = \v \eta = \Bigl(\frac{1}{D}, ..., \frac{1}{D}\Bigl)
\end{equation}
and the embedded version of a sharp state $\v{s}_k$ is a \emph{flat state} that is maximally mixed over a subset of $D_k$ entries, with zeros otherwise:
\begin{equation}
\hat{\v{s}}_{k}=\v{f}_{k}:=\frac{1}{D_k}\left[\phantom{\frac{i}{i}}\!\!\!\!\right.\underbrace{0,\dots,0}_{\sum_{j=1}^{k-1}D_j},
\,\underbrace{1\dots 1}_{D_k}\,,\underbrace{0,\dots,0}_{\sum_{j=k+1}^d D_j}\left.\phantom{\frac{i}{i}}\!\!\!\!\right]\!. \label{eq:flat}
\end{equation}
\end{observation}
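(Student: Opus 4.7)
The plan is to verify both claims by direct substitution into the definition of the embedding map in equation~(\ref{eq:embedding}); this is essentially a routine unpacking of notation, so I would not expect any real obstacle here.

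For the thermal state $\v\gamma$ with rational entries $\gamma_i = D_i/D$, I would first apply the embedding map to obtain, within the $i$-th block, $D_i$ copies of the value $\gamma_i/D_i$. Substituting $\gamma_i = D_i/D$ immediately gives $\gamma_i/D_i = 1/D$, which is independent of the block index $i$. Since the total number of entries is $\sum_{i=1}^{d} D_i = D$ --- this identity being guaranteed by the normalisation $\sum_i \gamma_i = 1$ together with the representation in (\ref{eq:definitiongibbsvec}) --- the resulting vector has all $D$ entries equal to $1/D$, and hence coincides with the uniform distribution $\v\eta$ on $D$ outcomes.

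For the sharp state $\v s_k$, defined by $(s_k)_j = \delta_{jk}$, I would again apply the embedding blockwise. For each $j \neq k$ the fact that $(s_k)_j = 0$ means that the $j$-th block contributes $D_j$ zeros, while for $j=k$ the block contributes $D_k$ copies of $1/D_k$. It then remains only to match this with the definition of $\v f_k$: the $D_k$ non-zero entries appear after exactly $\sum_{j=1}^{k-1} D_j$ leading zeros and before $\sum_{j=k+1}^{d} D_j$ trailing zeros, which is precisely the form claimed in (\ref{eq:flat}).

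The only arithmetical identity hidden in the argument is $\sum_i D_i = D$, which is simply the rewriting of the normalisation of $\v\gamma$ in the parametrisation (\ref{eq:definitiongibbsvec}). Beyond this, the observation is nothing more than the statement that the embedding map has been \emph{designed} so as to transport the reference thermal distribution onto a uniform distribution over a larger sample space, much as the flat states $\v f_k$ are the natural microcanonical analogues of the canonical sharp states $\v s_k$.
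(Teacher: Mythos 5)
Your proof is correct, and it is exactly the routine verification the paper has in mind (the paper states this as an unproved Observation precisely because the claim follows by direct substitution of $\gamma_i=D_i/D$ and $(s_k)_j=\delta_{jk}$ into the definition of the embedding map). Nothing further is needed.
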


We now prove that the existence of a Gibbs-preserving matrix between $\v p$ and $\v q$ is equivalent to the existence of a bistochastic matrix between them:
\begin{equation}\label{Gibbs-preserving-and-bistochastic}
    \Lambda^{\beta} \v p = \v q \Longleftrightarrow \hat{\Lambda}^{\beta} \hat{\v p} = \hat{\v q},
\end{equation}
where $\hat{\Lambda}^{\beta} = \Gamma \Lambda^{\beta} \Gamma^{-1}$ is the embedded version of $\Lambda^{\beta}$. This in turn, allow us to state the following result:
\begin{lemma}[Embedded GP matrix]
The embedded version of a Gibbs-preserving matrix $\Lambda^{\beta}$ is a bistochastic matrix $\Lambda^0$.
\end{lemma}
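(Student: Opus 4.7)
The plan is to verify the three defining conditions of a bistochastic matrix — non-negativity, unit column sums, and unit row sums — for $\hat{\Lambda}^{\beta} = \Gamma \Lambda^{\beta} \Gamma^{-1}$. The key preparatory step will be to write out the entries of $\hat{\Lambda}^{\beta}$ explicitly. Labeling indices in the $D$-dimensional embedded space by pairs $(i,\alpha)$, with $i \in \{1,\dots,d\}$ denoting the embedding box and $\alpha \in \{1,\dots,D_i\}$ the position within the box, a short calculation from the definition of $\Gamma$ (which splits $p_i$ into $D_i$ copies of $p_i/D_i$) and its natural left inverse (which sums the $D_i$ entries inside each box back into a single one) yields the block form
\begin{equation*}
\hat{\Lambda}^{\beta}_{(i,\alpha),(j,\beta)} = \frac{\Lambda^{\beta}_{ij}}{D_i}.
\end{equation*}
Non-negativity is then immediate from $\Lambda^{\beta}_{ij} \geq 0$ together with $D_i \in \mathbb{N}$.

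For the column sums I would fix $(j,\beta)$ and compute $\sum_{i,\alpha} \Lambda^{\beta}_{ij}/D_i = \sum_i \Lambda^{\beta}_{ij} = 1$, where the last equality is just stochasticity of $\Lambda^{\beta}$. For the row sums — this is the step where the Gibbs-preserving hypothesis must enter — I would fix $(i,\alpha)$ and compute $\sum_{j,\beta} \Lambda^{\beta}_{ij}/D_i = \sum_j D_j \Lambda^{\beta}_{ij}/D_i = \sum_j \gamma_j \Lambda^{\beta}_{ij}/\gamma_i$, which equals $1$ by the Gibbs condition $\sum_j \Lambda^{\beta}_{ij}\gamma_j = \gamma_i$. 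Combined, these three facts give bistochasticity.

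A more conceptual route would be to observe that $\hat{\Lambda}^{\beta}\v\eta = \hat{\Lambda}^{\beta}\hat{\v\gamma} = \Gamma\Lambda^{\beta}\v\gamma = \Gamma\v\gamma = \v\eta$, so $\hat{\Lambda}^{\beta}$ preserves the uniform distribution, and a stochastic matrix preserving $\v\eta$ is bistochastic. The main obstacle, such as it is, lies in making the expression $\hat{\Lambda}^{\beta} = \Gamma \Lambda^{\beta} \Gamma^{-1}$ meaningful in the first place: since $\Gamma$ is not square, $\Gamma^{-1}$ must be read as a left inverse, and one has to ensure that $\hat{\Lambda}^{\beta}$ is well-defined and stochastic on all $D$-dimensional probability vectors, not only on those in the image of $\Gamma$. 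Working with the explicit block formula $\Lambda^{\beta}_{ij}/D_i$ bypasses this ambiguity entirely and reduces the lemma to the two short computations above.
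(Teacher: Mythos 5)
Your proof is correct and takes essentially the same route as the paper's: both verify bistochasticity of $\hat{\Lambda}^\beta$ by computing its entries from the explicit forms of $\Gamma$ and $\Gamma^{-1}$ and then checking row and column sums against the stochasticity and Gibbs-preservation of $\Lambda^\beta$. Your block-index notation $\hat{\Lambda}^{\beta}_{(i,\alpha),(j,\beta)} = \Lambda^\beta_{ij}/D_i$ streamlines the index gymnastics the paper carries out directly, and your closing ``conceptual route'' ($\hat{\Lambda}^{\beta}\v{\eta} = \v{\eta}$ plus stochasticity implies bistochasticity) is a nice alternative presentation, though as you note it still rests on first establishing the column sums.
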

\begin{proof}
The matrix elements of $\hat{\Lambda}^{\beta}$ are given by
\begin{equation}
    \hat{\Lambda}^{\beta}_{ij} = \sum_{k,l=1}^d\Gamma_{ik} \Lambda^{\beta}_{kl} \Gamma^{-1}_{lj},
\end{equation}
so the conditions for bistochasticity of $\hat{\Lambda}^{\beta}_{ij}$ yields
\begin{equation}\label{Eq:conditions-bistochasticity}
    \forall i: \sum_{j=1}^D \sum_{k,l=1}^d\Gamma_{ik} \Lambda^{\beta}_{kl} \Gamma^{-1}_{lj} = 1 \quad , \quad \forall j: \sum_{i=1}^D \sum_{k,l=1} \Gamma_{ik} \Lambda^{\beta}_{kl} \Gamma^{-1}_{lj} =1.
\end{equation}
Using the explicit forms\sidenote{By writing out the embedding matrix $\Gamma$, we obtain
\begin{equation*}
    \Gamma = \begin{pmatrix}
\frac{1}{D_1} & 0  & \cdots & 0 \\
\vdots & \vdots & \cdots & 0 \\
\frac{1}{D_1} & 0  & \cdots  & 0 \\
0 & \frac{1}{D_2} & \cdots & 0 \\
\vdots & \vdots & \cdots & 0 \\
0 & \frac{1}{D_2}  & \cdots  & 0 \\
\vdots & \vdots & \cdots & \vdots \\
0 & 0 & \cdots & \frac{1}{D_d} \\
\vdots & \cdots & \cdots & \vdots \\
0& 0& \cdots & \frac{1}{D_d} \\
\end{pmatrix} .
\end{equation*}
The left inverse $\Gamma^{-1}$ can be readily computed as
\[
\setlength{\arraycolsep}{2pt}
\Gamma^{-1} = \begin{pmatrix}
1 & \cdots &1 & 0 &\cdots &0 &\cdots &\cdots &\cdots &0 \\
0 &\cdots &0 &1 &\cdots &1 &\cdots &\cdots &\cdots &0 \\
\vdots & \vdots & \vdots &\vdots &\vdots &\vdots &\vdots &\vdots &\vdots &\vdots \\
0&\cdots &0 &0 &\cdots &0 &\cdots &1 &\cdots & 1 \\
\end{pmatrix}.
\]
\setlength{\arraycolsep}{5pt}
} of $\Gamma$ and $\Gamma^{-1}$, we note the following. First, for all $l$, we have $\sum_j \Gamma^{-1}_{lj} = D_l$. Second, for all $k$, it follows that $\sum_i \Gamma_{ik} =1$. Thus, we can simplify the
conditions specified by Eq.~\eqref{Eq:conditions-bistochasticity} to obtain the following
\begin{equation}
\forall i: \sum^{d}_{k,l=1}\Gamma_{ik} \Lambda^{\beta}_{kl} \Gamma^{-1}_{lj} =1 \quad , \quad  \forall j: \sum_{k,l=1} \Gamma_{ik} \Lambda^{\beta}_{kl} \Gamma^{-1}_{lj} =1.  
\end{equation}
Taking into account that for a fixed $i$ and $j$ there is just one non-zero element of $\Gamma$, $\Gamma_{ik} = 1/D_k$, and one non-zero element of $\Gamma^{-1}$, $\Gamma^{-1}_{lj} =1$, we get
\begin{equation}
    \sum_{l=1}^d \Gamma_{kl}D_l = D_k \quad , \quad \sum_{k=1}^d \Lambda^{\beta}_{kl} =1.
\end{equation}
The first condition is fulfilled because $\Lambda^{\beta}$ is Gibbs-preserving (recall that $\gamma_l = D_l/D$) and second condition is fulfilled because $\Lambda^{\beta}$ is a stochastic matrix.
\end{proof} 

The fact that the embedded version of a GP matrix is a bistochastic matrix enables us to make a connection to majorisation, as demonstrated by the following lemma.
\begin{lemma}[GP matrix \& majorisation]\label{Lemma:GP-majorisation}
There exists a Gibbs-stochastic matrix $\Lambda^{\beta}$ such that $\Lambda^{\beta}\v p = \v q$ if and only if $\Gamma(\v p) \succ \Gamma(\v q)$.
\end{lemma}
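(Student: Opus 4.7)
The plan is to reduce the statement to the Hardy-Littlewood-Pólya theorem (Theorem~\ref{Thm:HLP}) via the embedding map, leveraging the preceding lemma which guarantees that $\hat{\Lambda}^{\beta} = \Gamma \Lambda^{\beta} \Gamma^{-1}$ is bistochastic whenever $\Lambda^{\beta}$ is Gibbs-preserving. Both directions of the equivalence will pass through this bridge to ordinary (unweighted) majorisation on the $D$-dimensional embedded space.

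For the forward direction, I would assume $\Lambda^{\beta} \v p = \v q$ with $\Lambda^{\beta}$ Gibbs-preserving. A one-line computation using $\Gamma^{-1} \Gamma = \iden_d$ gives $\hat{\Lambda}^{\beta} \hat{\v p} = \Gamma \Lambda^{\beta} \Gamma^{-1} \Gamma \v p = \Gamma \v q = \hat{\v q}$, which is precisely the content of Eq.~\eqref{Gibbs-preserving-and-bistochastic}. Since the preceding lemma guarantees bistochasticity of $\hat{\Lambda}^{\beta}$, applying Theorem~\ref{Thm:HLP} on $\Delta_D$ immediately yields $\hat{\v p} \succ \hat{\v q}$, i.e.\ $\Gamma(\v p) \succ \Gamma(\v q)$.

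For the converse, assuming $\Gamma(\v p) \succ \Gamma(\v q)$, Theorem~\ref{Thm:HLP} supplies a bistochastic matrix $M$ with $M \hat{\v p} = \hat{\v q}$. I would then propose the natural candidate
\begin{equation*}
    \Lambda^{\beta} := \Gamma^{-1} M \Gamma
\end{equation*}
on the $d$-dimensional space, and verify three properties: non-negativity of its entries (immediate from non-negativity of $M$, $\Gamma$, and $\Gamma^{-1}$); column-stochasticity (by using $\sum_i \Gamma^{-1}_{ik} = 1$, bistochasticity of $M$, and $\sum_l \Gamma_{lj} = 1$); and Gibbs-preservation, which chains together $\Gamma \v\gamma = \v\eta$, $M \v\eta = \v\eta$, and $(\Gamma^{-1} \v\eta)_i = D_i/D = \gamma_i$. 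The required action $\Lambda^{\beta} \v p = \Gamma^{-1} M \hat{\v p} = \Gamma^{-1} \hat{\v q} = \v q$ then follows from the left-inverse property $\Gamma^{-1} \Gamma = \iden_d$ applied to $\v q$.

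The main subtlety lies in the converse. Because $\Gamma$ is rectangular ($D \times d$) with only a one-sided inverse, the Gibbs-preservation property of $\Lambda^{\beta}$ cannot be deduced by purely algebraic manipulation in the embedded space but must be verified using the specific combinatorial structure of $\Gamma$: a single non-zero entry $1/D_k$ per row, and a one-to-one correspondence between columns of $\Gamma^{-1}$ and embedding boxes. This implicit averaging of $M$ across each box is precisely what converts bistochasticity of $M$ on $\Delta_D$ into simultaneous stochasticity and Gibbs-preservation on $\Delta_d$.
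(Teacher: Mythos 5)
Your proof is correct and follows the same overall route as the paper: reduce to the Hardy--Littlewood--P\'olya theorem via the embedding map. The paper's version is terser — it simply cites Eq.~\eqref{Gibbs-preserving-and-bistochastic} together with the ``Embedded GP matrix'' lemma and Theorem~\ref{Thm:HLP}, and declares the result proved. That gives the forward direction cleanly, but as you correctly flag, the converse is the delicate part: HLP hands you \emph{some} bistochastic $M$ on $\Delta_D$, and it is not automatic that this $M$ is of the form $\Gamma\Lambda^{\beta}\Gamma^{-1}$ for a Gibbs-preserving $\Lambda^{\beta}$. The paper papers over this by phrasing Eq.~\eqref{Gibbs-preserving-and-bistochastic} as an existence-level equivalence, when what it actually states is the much weaker fact that a \emph{fixed} $\Lambda^{\beta}$ acts the same way before and after embedding.

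Your explicit construction $\Lambda^{\beta} := \Gamma^{-1} M \Gamma$, together with the verification of non-negativity, column-stochasticity via $\sum_i \Gamma^{-1}_{ik}=1$ and $\sum_l \Gamma_{lj}=1$, and Gibbs-preservation via the chain $\Gamma\v{\gamma}=\v{\eta}$, $M\v{\eta}=\v{\eta}$, $\Gamma^{-1}\v{\eta}=\v{\gamma}$, is exactly what is needed to close this gap, and your observation that the rectangular nature of $\Gamma$ forces one to use its combinatorial structure rather than purely formal algebra is the right diagnosis of why the paper's one-line citation is insufficient. In short: same blueprint, but you have written out the piece the paper omits, which makes your version the more self-contained argument.
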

\begin{proof}
As shown in Equation~\eqref{Gibbs-preserving-and-bistochastic}, the existence of a Gibbs-preserving matrix between vectors $\v p$ and $\v q$ is equivalent to the existence of a bistochastic matrix between the pair of embbeded vectors $\hat{\v p}$ and $\hat{\v q}$. By leveraging this observation and combining it with Theorem~\ref{Thm:HLP}, we prove Lemma~\ref{Lemma:GP-majorisation}.
\end{proof}

Finally, one can establish a relationship that generalises the concept of majorisation. Specifically, we will show how the requirement that the embedded distribution $\hat{\v{p}}$ majorises $\hat{\v{q}}$ can be restated as a thermomajorisation condition using only $\v{p}$ and $\v{q}$. Let us start by first defining the thermodynamic-ordering known as $\beta$-ordering:
\begin{restatable}{definition}{betaordering}
    \emph{($\beta$-ordering)}. Let $\v p$ and $\v \gamma$ be a probability vector and its corresponding thermal Gibbs distribution. The $\beta$-ordering of $\v p$ is defined as the permutation $\pi_{\v p}$ that arranges the vector \mbox{$(p_1/\gamma_1, ..., p_d/\gamma_d)$} in a non-increasing order, i.e., 
\begin{equation}\label{Eq:beta-ordering}
\v p^{\, \beta} = \Bigl(p_{\v \pi^{-1}_{\v p}(1)}, ...,p_{\v \pi^{-1}_{\v p}(d)}\Bigl) .    
\end{equation}
Each permutation belonging to the symmetric group, $\v \pi \in \S_{d}$, defines a different $\beta$-ordering on the energy levels of the Hamiltonian $H$~[see Fig.~\ref{Fig:thermalchambers}].
\end{restatable}
\begin{marginfigure}[-4.6cm]
        \centering
	\includegraphics[width=3.818cm]{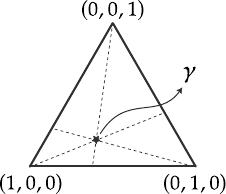}
	\caption{\emph{Chambers and $\beta$-ordering}. Each permutation defines a different \mbox{$\beta$-ordering}, which, in the space of states, is represented by a chamber. In the case of a three-level system, there are six distinct $\beta$-orderings, corresponding to six chambers. The asymmetry within each chamber arises due to the fact that $\beta > 0$. The thermal state $\v \gamma$ is depicted by a black star $\star$. }
	\label{Fig:thermalchambers}
\end{marginfigure}

The main motivation for introducing the $\beta$-ordering is due to the fact that sorting the embedded distribution $\hat{\v{p}}$ in non-increasing order corresponds to the $\beta$-ordering of the $d$-dimensional probability distribution $\v p$. To clarify, the vector $\hat{\v{p}}^{\downarrow}$ comprises groups of embedding boxes arranged in a non-increasing order, with each group consisting of $D_i$ elements. If each group is replaced with its sum, which is equal to $p_i$, one would obtain a $\beta$-ordered version of $\v p$.

\begin{kaobox}[frametitle=$\beta$-ordering example]
Given a thermal distribution $\v \gamma = (\nicefrac{3}{6},\nicefrac{2}{6},\nicefrac{1}{6})$ and the probability distributions $\v p =(\nicefrac{7}{10},\nicefrac{2}{10},\nicefrac{1}{10})$ and $\v q = (\nicefrac{1}{5},\nicefrac{16}{25},\nicefrac{4}{25})$, the $\beta$-ordered vectors of $\v p$ and $\v q$ are
\begin{equation*}
\v p^{\beta} = \qty(\frac{7}{10},\frac{2}{10},\frac{1}{10})\:\: , \:\: \v q^{\beta} = \qty(\frac{16}{25},\frac{4}{25},\frac{1}{5}).
\end{equation*}
\end{kaobox}
Next, we introduce the generalisation of the concept of majorisation curves, i.e., \emph{thermomajorisation curves}:
\begin{definition}[Thermomajorisation curve]\label{def_thermomajorisation_curve} Let $\v p$ and $\v \gamma$ be a probability vector and its thermal Gibbs distribution, respectively. The thermomajorisation curve is a piece-wise linear curve $f^{\beta}_{\v p}(x)$ in $\mathbb{R}^2$ obtained by joining the origin $(0,0)$ and the points
\begin{equation}
    \left(\sum_{i=1}^k\gamma^{\, \beta}_i,~\sum_{i=1}^k p^{\, \beta}_i\right):=\left(\sum_{i=1}^k\gamma_{\v \pi^{-1}_{\v{p}}(i)},~\sum_{i=1}^k p_{\v \pi^{-1}_{\v{p}}(i)}\right),
\end{equation}
for $k\in\{1,\dots,d\}$. 
\end{definition}
Given two $d$-dimensional probability distributions $\v p$ and $\v q$, and a fixed inverse temperature $\beta$, we say that $\v p$ \emph{thermomajorises} $\v q$ and denote it as $\v p \succ_{\beta} \v q$\sidenote{Historically, thermomajorisation should be denoted by $\succ_{\v{\gamma}}$ since the vector being preserved is $\v{\gamma}$. However, the community has adopted $\succ_{\beta}$, given that $\v{\gamma}$ is associated with the inverse temperature.}, if the thermomajorisation curve $f^{\, \beta}_{\v{p}}$ is above $f^{\, \beta}_{\v{q}}$ everywhere, i.e.,
\begin{equation}
    \v p \succ_{\beta} \v q \iff \forall x\in[0,1]:~ f^{\, \beta}_{\v{p}}(x) \geq f^{\, \beta}_{\v{q}}(x) \, .
\end{equation}
\begin{figure*}
\includegraphics[width=15.940cm]{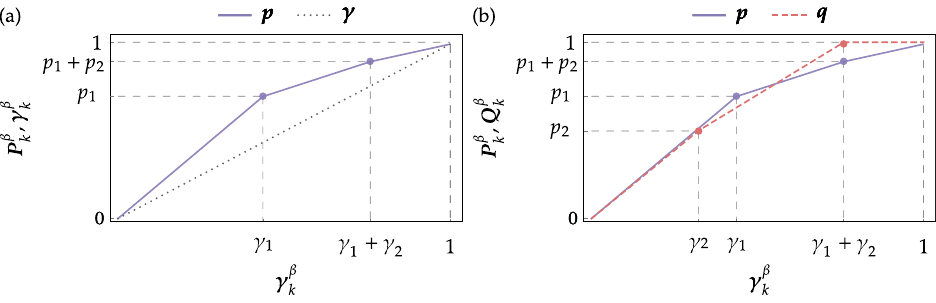}
	\caption{\emph{A geometric view of thermomajorisation}. (a) Thermomajorisation curves for a three-dimensional probability distribution $\v p$ with $\beta$-order (1,2,3) are presented alongside the curve associated with the thermal state $\v \gamma$. Panel (b) displays the thermomajorisation curves of $\v p$ and $\v q$, the latter having $\beta$-ordering (2,1,3). Both vectors are incomparable since their thermomajorisation curves intersect.}
	\label{Fig:thermo-majorisation-curves}
\end{figure*}

As before, it may happen that given two vectors, $\v p$ and $\v q$, are incomparable, meaning that $f^{\, \beta}_{\v p}$ and $f^{\, \beta}_{\v q}$ cross at a some point (see Fig.~\hyperref[Fig:thermo-majorisation-curves]{\ref{Fig:thermo-majorisation-curves}b}). Importantly, in the case of uniform equilibrium distributions, $\v \gamma$ = $\v \eta$, thermomajorisation is replaced by majorisation. For a fixed dimension, the sharp distribution with largest energy, \mbox{$\v s_d = (0, .., 1)$}, thermomajorises every other distribution, and every distribution thermomajorises $\v \gamma$.

As a result, we can show the following:
\begin{lemma}[Majorisation between embedded vectors]
The initial state $\v p$ thermomajorises the target state $\v q$ if and only if $\hat{\v p} \succ \hat{\v q}$.
\end{lemma}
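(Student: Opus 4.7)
The plan is to show that the thermomajorisation curve $f^{\beta}_{\v p}$ coincides, as a subset of $[0,1]\times[0,1]$, with the standard (Lorenz) majorisation curve $f_{\hat{\v p}}$ of the embedded vector $\hat{\v p}$. Once this geometric identification is established, the equivalence follows immediately: thermomajorisation is defined by $f^{\beta}_{\v p}(x)\geq f^{\beta}_{\v q}(x)$ for all $x\in[0,1]$, while standard majorisation is equivalent to $f_{\hat{\v p}}(x)\geq f_{\hat{\v q}}(x)$ for all $x$, so the two inequalities are literally the same statement.

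The first step is to identify the non-increasing rearrangement $\hat{\v p}^{\downarrow}$. By construction, $\hat{\v p}$ consists of $d$ embedding boxes: the $i$-th box contains $D_i$ identical entries equal to $p_i/D_i$. Because $(p_i/D_i)/(1/D) = p_i/\gamma_i$, sorting the boxes in non-increasing order of their common value is exactly the $\beta$-ordering of $\v p$, producing the blocks
\begin{equation*}
\hat{\v p}^{\downarrow} = \Bigl(\underbrace{\tfrac{p^{\beta}_1}{D_{\pi^{-1}_{\v p}(1)}},\dots,\tfrac{p^{\beta}_1}{D_{\pi^{-1}_{\v p}(1)}}}_{D_{\pi^{-1}_{\v p}(1)}\text{ times}},\dots,\underbrace{\tfrac{p^{\beta}_d}{D_{\pi^{-1}_{\v p}(d)}},\dots,\tfrac{p^{\beta}_d}{D_{\pi^{-1}_{\v p}(d)}}}_{D_{\pi^{-1}_{\v p}(d)}\text{ times}}\Bigr).
\end{equation*}

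Next I would compare the breakpoints of both curves. The Lorenz curve $f_{\hat{\v p}}$ is a priori piecewise linear with $D$ segments of horizontal length $1/D$, but within a single embedding box all entries of $\hat{\v p}^{\downarrow}$ coincide, so the slope is constant across that box. Hence the genuine corners of $f_{\hat{\v p}}$ occur only at horizontal positions
\begin{equation*}
x_k = \frac{1}{D}\sum_{i=1}^{k} D_{\pi^{-1}_{\v p}(i)} = \sum_{i=1}^{k}\gamma^{\beta}_i, \qquad k\in\{0,\dots,d\},
\end{equation*}
and at $x_k$ the cumulative sum equals $\sum_{i=1}^{k}D_{\pi^{-1}_{\v p}(i)}\cdot p^{\beta}_i/D_{\pi^{-1}_{\v p}(i)}=\sum_{i=1}^{k}p^{\beta}_i$. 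These are exactly the elbow points defining $f^{\beta}_{\v p}$ in Definition~\ref{def_thermomajorisation_curve}. Since both curves are piecewise linear with identical vertex sets, they coincide as functions on $[0,1]$.

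Applying the same identification to $\v q$ gives $f_{\hat{\v q}}\equiv f^{\beta}_{\v q}$, and therefore
\begin{equation*}
\v p\succ_{\beta}\v q \iff f^{\beta}_{\v p}\geq f^{\beta}_{\v q} \text{ on } [0,1] \iff f_{\hat{\v p}}\geq f_{\hat{\v q}} \text{ on } [0,1] \iff \hat{\v p}\succ\hat{\v q}.
\end{equation*}
The only delicate point is the verification that the Lorenz curve of $\hat{\v p}$ has no hidden corners between the $x_k$; this is the main (albeit mild) obstacle, and it is resolved by the observation that the embedding replicates values within each box, flattening the interior of every segment. For thermal distributions with irrational weights $\gamma_i$, the statement is recovered by choosing $D$ so that the rational approximation is exact to the desired precision, or equivalently by a continuity argument in $\v\gamma$, since both sides of the equivalence depend continuously on the underlying distributions.
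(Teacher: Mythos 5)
Your proof is correct and follows essentially the same route as the paper: identify the non-increasing rearrangement of $\hat{\v p}$ with the $\beta$-ordering of $\v p$, observe that the Lorenz curve of $\hat{\v p}$ has its corners exactly at the elbows $(\sum_{i=1}^k\gamma^\beta_i,\sum_{i=1}^k p^\beta_i)$ of the thermomajorisation curve, and conclude that the two curves coincide. You spell out the breakpoint bookkeeping a bit more explicitly and add a closing remark about irrational $\gamma_i$ (which the paper defers to the definition of the embedding map), but the underlying argument is the same.
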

\begin{proof}
The non-increasing order of $D$-dimensional probability distributions $\hat{\v p}$ and $\hat{\v q}$ corresponds to $\beta-$ordering the $d$-dimensional probability distributions $\v p$ and $\v q$. Moreover, by summing up each embedding box, we obtain the $\beta$-ordered versions of $\v p$ and $\v q$. Now, note that the majorisation curve of $\hat{\v p}$ and $\hat{\v q}$ will be composed of linear segments connecting the points $(\sum_{i=1}^k D^{\beta}_i, \sum_i x^{\beta}_i)$, where $x_i = p_i, q_i$ and $\gamma_i = D_i/D$. As a result, the curves $f_{ \hat{\v{p}}}$ and $f_{\hat{\v q}}$ correspond exactly to the thermomajorisation curves of $\v p$ and $\v q$, and therefore, $\v p \succ_{\beta} \v q$ is equivalent to $\hat{\v p} \succ \hat{\v q}$.
\end{proof}

Combining all these results, Theorem~\ref{Thm:HLP} can be generalised from bistochastic matrices
to Gibbs-preserving matrices:

\begin{restatable}{theorem}{GPthermo}\emph{(Thermomajorisation $\&$ GP matrices).} There exists a Gibbs-preserving matrix $\Lambda^{\beta}$ mapping $\v p$ to $\v q$ if and only if $\v p \succ_{\beta} \v q$. \label{Thm:HLPbeta}
\end{restatable}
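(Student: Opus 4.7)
The plan is to chain together the two lemmas just established and reduce the statement to a composition of equivalences. Specifically, Lemma~\ref{Lemma:GP-majorisation} already asserts that a Gibbs-preserving matrix $\Lambda^{\beta}$ with $\Lambda^{\beta}\v p=\v q$ exists if and only if $\Gamma(\v p)\succ\Gamma(\v q)$, i.e. $\hat{\v p}\succ\hat{\v q}$ in the sense of ordinary majorisation on $\Delta_D$. Meanwhile, the ``Majorisation between embedded vectors'' lemma gives the second equivalence $\hat{\v p}\succ\hat{\v q}\iff \v p\succ_{\beta}\v q$. Stringing these two biconditionals together yields exactly Theorem~\ref{Thm:HLPbeta}.

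In more detail, I would first handle the direction $(\Rightarrow)$: assume $\Lambda^{\beta}\v p=\v q$ for some Gibbs-preserving $\Lambda^{\beta}$. By Equation~\eqref{Gibbs-preserving-and-bistochastic} the embedded channel $\hat{\Lambda}^{\beta}=\Gamma\Lambda^{\beta}\Gamma^{-1}$ is bistochastic and satisfies $\hat{\Lambda}^{\beta}\hat{\v p}=\hat{\v q}$, so by the Hardy-Littlewood-P\'olya Theorem~\ref{Thm:HLP} we have $\hat{\v p}\succ\hat{\v q}$. Now invoke the embedded majorisation lemma: because the $\beta$-ordered partial sums of $\v p$ and $\v q$ are reproduced by the ordinary partial sums of $\hat{\v p}^{\downarrow}$ and $\hat{\v q}^{\downarrow}$ at the breakpoints $\sum_{i=1}^{k}\gamma^{\,\beta}_i=\sum_{i=1}^{k}D^{\beta}_i/D$, the majorisation curves coincide with the thermomajorisation curves. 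Hence $f^{\,\beta}_{\v p}(x)\geq f^{\,\beta}_{\v q}(x)$ for all $x\in[0,1]$, which is precisely $\v p\succ_{\beta}\v q$.

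For the converse $(\Leftarrow)$, I would run the chain in reverse: assume $\v p\succ_{\beta}\v q$. By the embedded majorisation lemma this implies $\hat{\v p}\succ\hat{\v q}$, and by Theorem~\ref{Thm:HLP} there exists a bistochastic $\hat{\Lambda}^{0}$ with $\hat{\Lambda}^{0}\hat{\v p}=\hat{\v q}$. Then define \mbox{$\Lambda^{\beta}:=\Gamma^{-1}\hat{\Lambda}^{0}\Gamma$}; by the equivalence in Eq.~\eqref{Gibbs-preserving-and-bistochastic} this is Gibbs-preserving and maps $\v p$ to $\v q$, completing the proof.

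I do not anticipate a serious obstacle here, since all the nontrivial content has already been packaged into the two preparatory lemmas and into Birkhoff's theorem via Hardy-Littlewood-P\'olya. The only mildly delicate point to flag is that the embedding construction implicitly requires $\v\gamma$ to have rational entries (via the representation $\gamma_i=D_i/D$); for the general case one would invoke a limiting argument, approximating $\v\gamma$ by rational Gibbs vectors with increasing $D$ and using continuity of the thermomajorisation curves in $\v\gamma$ to transfer the equivalence. Apart from this technicality, the theorem is essentially a corollary of the preceding structural results.
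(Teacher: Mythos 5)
Your proof is correct and follows exactly the same route as the paper, which presents the theorem as an immediate consequence of chaining Lemma~\ref{Lemma:GP-majorisation} with the lemma on majorisation between embedded vectors (``Combining all these results...''). Your remark about the rational-entries caveat is also consistent with the paper's own treatment of irrational $\gamma_i$ via arbitrarily large $D$.
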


\subsection{Continuous thermomajorisation}
Majorisation and thermomajorisation are relationships linking the initial and target distributions, while nothing is said about the path in $\Delta_d$ connecting these two distributions. One can then ask if there exists a continuous path within the probability simplex connecting these two distributions, such that the preceding distribution is thermomajorised by the succeeding one at any two points along this path. This question raises the concept of \emph{continuous thermomajorisation}~\cite{Lostaglio2019}, which can be mathematically defined as follows~(see also Fig.~\ref{Fig:continuous}).
\begin{marginfigure}[0cm]
	\includegraphics[width=4.718cm]{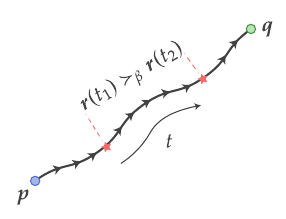}
	\caption{\emph{Continuous thermomajorisation.} A probability vector $\v{p}$ continuously thermomajorises $\v{q}$ if there exists a continuous probability path $\v{r}(t)$ connecting both vectors, ensuring that $\v{r}(t_1) \succ_{\beta} \v{r}(t_2)$ whenever $t_1 \leq t_2$}
	\label{Fig:continuous}
\end{marginfigure}
\begin{definition}[Continuous thermomajorisation]
	\label{def:markov_majo}
 	A distribution $\v{p}$ \emph{continuously thermomajorises}, denoted $\v{p} \ggcurly_\beta \v{q}$, if there exists a continuous path of probability distributions $\v{r}(t)$ for $t\in[0,t_f)$ such that
 	\begin{enumerate}
 		\item $\v{r}(0)=\v{p}$,
 		\item $\forall~ t_1,t_2\in[0,t_f):\quad t_1 \leq t_2 \Rightarrow\v{r}(t_1)\succ_\beta \v{r}(t_2)$,
 		\item $\v{r}(t_f)=\v{q}$.
 	\end{enumerate}
The path $\v{r}(t)$ is called \emph{thermomajorising trajectory} from $\v{p}$ to $\v{q}$.
\end{definition}

Let us make a few comments about the aforementioned definition. Firstly, it is worth noting that when there is a uniform fixed point, $\v{\gamma}=\v{\eta}$, the above definition corresponds to a continuous version of standard majorisation, denoted by $\ggcurly$~\cite{zylka1985note}. Secondly, determining whether a given initial state continuously thermomajorises a target state is a difficult problem, and unlike the other variants of majorisation presented so far, there is no continuous thermomajorisation curve for this type of majorisation that would facilitate a quick check. Nevertheless, the necessary and sufficient conditions are known~\cite{lostaglio2021continuous}. These comprise a complete set of entropy production inequalities that can be reduced to a finitely verifiable set of constraints. Moreover, Ref.~\cite{lostaglio2021continuous} presents an explicit algorithm for verifying the continuous thermomajorisation relation between any two vectors. An implementation of this algorithm in Mathematica is provided in Ref.~\cite{korzekwaalgorithm}.

To understand how the concept of continuous thermomajorisation relates to the existence of given processes between two distributions, we introduce the following definition~\cite{PhysRevX.11.021019}.

\begin{definition}[Markovian classical dynamics]
	\label{def:reach_class_mem}
	A distribution $\v{p}$ can be mapped to $\v{q}$ by a classical master equation with a fixed point $\v{\gamma}$ if a continuous one-parameter family of rate matrices, denoted by $L(t)$, exists. This family generates a set of stochastic matrices represented by $\Lambda^{\beta}(t)$, such that
\begin{equation}
\Lambda^{\beta}(t_f)\v{p} = \v{q}, \quad L(t)\v{\gamma} = \mathbf{0} \quad \text{for all } t \in [0,t_f).
\end{equation}
\end{definition}

This definition naturally generalise to quantum dynamics in the following way

\begin{definition}[Markovian quantum dynamics]
	\label{def:reach_quant_mem}
A distribution $\v{p}$ can be mapped to $\v{q}$ via a quantum master equation with a fixed point $\v{\gamma}$ if there exists a continuous one-parameter family of Lindbladians, denoted by $\mathcal{L}(t)$. This family generates quantum channels $\mathcal{E}^{\beta}(t)$ such that
\begin{equation}
\mathcal{E}^{\beta}(t_f)[\rho_{\v{p}} ]= \rho_{\v{q}}, \quad \mathcal{L}(t)[\rho_{\v{\gamma}}] = 0 \quad \textrm{for all } t \in [0,t_f).
\end{equation}
where $\rho_{\v t} = \sum_k t_k \ketbra{k}{k}$. 
\end{definition}

The previous definitions encapsulate the set of transformations realised by Markovian master equations with a fixed point $\v \gamma$. As will be properly defined in Section~\ref{C:resource_theory_of_thermodynamics}, we will refer to such quantum dynamics as \emph{Markovian thermal processes}. Crucially, the concept of continuous thermomajorisation encompasses all constraints of memoryless processes on population dynamics. This is expressed in the following theorem:

\begin{theorem}[Continuous thermomajorisation \& Markovian thermal processes] There exists a Markovian thermal processes mapping $\v p(0)$ to $\v p(t_f)$ if and only if $\v p \ggcurly_{\beta} \v p(t_f)$.
\end{theorem}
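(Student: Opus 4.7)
The plan is to prove the two implications separately. The ``if'' direction will follow quickly from Theorem~\ref{Thm:HLPbeta} applied to the two-parameter propagator of the master equation, whereas the converse will require constructing a rate-matrix generator whose induced dynamics reproduces the given thermomajorising trajectory.

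For the forward direction, assume a Markovian thermal process exists, with continuous generator family $L(t)$ satisfying $L(t)\v{\gamma}=\v{0}$ and propagators $\Lambda^{\beta}(t)$ obeying $\Lambda^{\beta}(0)=\iden$ and $\Lambda^{\beta}(t_f)\v{p}(0)=\v{p}(t_f)$. I set $\v{r}(t):=\Lambda^{\beta}(t)\v{p}(0)$ as the candidate trajectory: continuity of $L$ ensures continuity of $\v{r}$, and the boundary conditions $\v{r}(0)=\v{p}(0)$, $\v{r}(t_f)=\v{p}(t_f)$ are immediate. Monotonicity is obtained by introducing the two-parameter propagator $V(t_2,t_1)$ defined by $\partial_{s}V(s,t_1)=L(s)V(s,t_1)$ with $V(t_1,t_1)=\iden$. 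Because every instantaneous generator annihilates $\v{\gamma}$, so does the integrated propagator, whence $V(t_2,t_1)$ is a Gibbs-preserving stochastic matrix connecting $\v{r}(t_1)$ to $\v{r}(t_2)$. Theorem~\ref{Thm:HLPbeta} then yields $\v{r}(t_1)\succ_\beta\v{r}(t_2)$ whenever $t_1\leq t_2$, so $\v{r}$ is a thermomajorising trajectory and $\v{p}(0)\ggcurly_\beta \v{p}(t_f)$.

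For the converse, suppose $\v{p}(0)\ggcurly_\beta \v{p}(t_f)$ via a trajectory $\v{r}(t)$. The strategy is constructive: partition $[0,t_f]$ into $N$ subintervals of length $\Delta t_k$ and apply Theorem~\ref{Thm:HLPbeta} to each adjacent pair $\v{r}(t_k)\succ_\beta \v{r}(t_{k+1})$ to obtain Gibbs-preserving matrices $M_k$ realising $\v{r}(t_{k+1})=M_k\v{r}(t_k)$. For large $N$, each $M_k$ lies in a small neighbourhood of the identity, and the aim is to write $M_k=\iden+\Delta t_k\, L_k+o(\Delta t_k)$ for a rate matrix $L_k$ annihilating $\v{\gamma}$. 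Taking $N\to\infty$ and invoking standard compactness on the simplex should deliver a continuous family $L(t)$ whose time-ordered exponential reproduces $\v{r}(t)$, producing the required Markovian thermal process.

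The main obstacle is precisely this embeddability step: as recalled after Eq.~\eqref{Eq:master-equation}, not every stochastic matrix is embeddable, so the naive logarithm of $M_k$ may fail to be a valid rate matrix. The resolution is to work infinitesimally. After mollifying $\v{r}$ if necessary, the pointwise condition $\v{r}(t)\succ_\beta\v{r}(t+\epsilon)$ holding for all small $\epsilon>0$ forces the tangent vector $\dot{\v{r}}(t)$ to lie in the cone of admissible directions at $\v{r}(t)$, namely those of the form $L\v{r}(t)$ with $L$ a rate matrix annihilating $\v{\gamma}$; this is the tangent cone to the polytope of states thermomajorised by $\v{r}(t)$. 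Extracting a measurable selection $L(t)$ from this cone, via e.g.\ a decomposition into elementary Gibbs-rescaled $T$-transforms together with a convex-analytic argument as in Ref.~\cite{lostaglio2021continuous}, and feeding it into the existence theory for non-autonomous linear ODEs on the simplex closes the proof.
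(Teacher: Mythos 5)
The thesis does not give its own proof of this theorem: after stating it, it simply refers the reader to Appendix~A of Ref.~\cite{Lostaglio2019} (and, in the companion statement in Chapter~\ref{C:resource_theory_of_thermodynamics}, to Ref.~\cite{lostaglio2021continuous}). So there is no ``paper proof'' to compare against; what matters is whether your proposal stands on its own.

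Your forward direction is correct and complete. Defining the two-parameter propagator $V(t_2,t_1)$ from the time-ordered exponential of $L(s)$ and observing that the constant function $t\mapsto \v\gamma$ solves $\dot f = L(t) f$ with $f(t_1)=\v\gamma$ (hence $V(t_2,t_1)\v\gamma=\v\gamma$ by uniqueness) is exactly the right argument; together with Theorem~\ref{Thm:HLPbeta} it gives $\v r(t_1)\succ_\beta\v r(t_2)$ whenever $t_1\leq t_2$, and the trajectory $\v r(t)=\Lambda^\beta(t)\v p(0)$ is indeed a thermomajorising trajectory.

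The converse, however, has a genuine gap at the mollification step. A thermomajorising trajectory is only assumed continuous, and you propose to smooth it before taking derivatives. But thermomajorisation is \emph{not} preserved by averaging two order-related pairs: take the $\beta=0$ case with $\v p_1=(1,0)\succ \v q_1=(\tfrac12,\tfrac12)$ and $\v p_2=(0,1)\succ\v q_2=(0,1)$; the averages $\bar{\v p}=(\tfrac12,\tfrac12)$ and $\bar{\v q}=(\tfrac14,\tfrac34)$ satisfy $\bar{\v p}\nsucc\bar{\v q}$. The same phenomenon arises under convolution with a mollifier, because the Gibbs-preserving matrix connecting $\v r(t_1-s)$ to $\v r(t_2-s)$ depends on $s$. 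In particular, a mollified trajectory need not remain thermomajorising wherever the original trajectory crosses (or hugs) Weyl-chamber boundaries, which is precisely where the interesting behaviour occurs. Your identification of the tangent cone at $\v r(t)$ with $\{L\,\v r(t): L \text{ a rate matrix},\ L\v\gamma=\v 0\}$ is correct and is the right infinitesimal picture, but you still need differentiability to use it, and you have not supplied it. The ``measurable selection'' step is likewise asserted rather than argued; even if it succeeds, the resulting generator family is only measurable, whereas Definition~\ref{def:reach_class_mem} asks for a continuous family $L(t)$, so a further regularisation would be needed. Note that the original proof in Ref.~\cite{lostaglio2021continuous} sidesteps all of these analytic difficulties by taking a constructive, combinatorial route: it shows that a continuous thermomajorising trajectory implies the existence of a \emph{finite} sequence of elementary two-level thermalisations (Theorem~\ref{thm:universality}), each of which is manifestly a master-equation solution (Eq.~\eqref{Eq:thermalisation-reset}), and then concatenates them. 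That approach never needs to differentiate the given trajectory.
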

The proof of the above theorem can be found in Appendix A of Ref.~\cite{Lostaglio2019}. As a result, continuous thermomajorisation provides a full set of constraints for the population's evolution.

\begin{proposition}[Equivalence between thermomajorisations] If a pair of states, $\v p$ and $\v q$, have the same $\beta$-ordering, and $\v p \succ_{\beta} \v q$, then $\v p \ggcurly_{\beta} \v q$. 
\end{proposition}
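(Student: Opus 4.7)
My plan is to construct an explicit thermomajorising trajectory via linear interpolation and then verify the three defining properties of continuous thermomajorisation from Definition~\ref{def:markov_majo}. Concretely, denote by $\v\pi$ the common $\beta$-ordering of $\v p$ and $\v q$, and define
\begin{equation}
    \v r(t) = (1-t)\v p + t\, \v q, \qquad t\in[0,1].
\end{equation}
Clearly $\v r(0)=\v p$ and $\v r(1)=\v q$, and $t\mapsto\v r(t)$ is continuous, so properties (1) and (3) of Definition~\ref{def:markov_majo} hold immediately.

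The key step is property (2). First, I would argue that the entire trajectory lives in a single $\beta$-ordering chamber. The chamber associated with $\v\pi$ (as illustrated in Fig.~\ref{Fig:thermalchambers}) is defined by the system of linear inequalities
\begin{equation}
    \frac{r_{\v\pi^{-1}(1)}}{\gamma_{\v\pi^{-1}(1)}} \;\geq\; \frac{r_{\v\pi^{-1}(2)}}{\gamma_{\v\pi^{-1}(2)}} \;\geq\; \dots \;\geq\; \frac{r_{\v\pi^{-1}(d)}}{\gamma_{\v\pi^{-1}(d)}},
\end{equation}
and hence is convex. Since both endpoints $\v p$ and $\v q$ lie in this chamber by assumption, so does $\v r(t)$ for every $t\in[0,1]$. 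Consequently, the permutation $\v\pi$ is a valid $\beta$-ordering for every $\v r(t)$.

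With a common $\beta$-ordering fixed along the trajectory, the thermomajorisation curves $f^{\,\beta}_{\v r(t)}$ have breakpoints at the \emph{same} abscissae $\sum_{i=1}^k \gamma_{\v\pi^{-1}(i)}$, so thermomajorisation between two points on the trajectory reduces to comparing partial sums in the ordering $\v\pi$. For $0\leq t_1\leq t_2\leq 1$, linearity gives
\begin{equation}
    \sum_{i=1}^k r_{\v\pi^{-1}(i)}(t_1) - \sum_{i=1}^k r_{\v\pi^{-1}(i)}(t_2) = (t_2-t_1)\left[\sum_{i=1}^k p_{\v\pi^{-1}(i)} - \sum_{i=1}^k q_{\v\pi^{-1}(i)}\right] \geq 0
\end{equation}
for all $k\in\{1,\dots,d\}$, where the non-negativity in the bracket uses the hypothesis $\v p \succ_\beta \v q$ (read off from the $\v\pi$-ordered partial sums, which is precisely Definition~\ref{def_thermomajorisation_curve}). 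Thus $f^{\,\beta}_{\v r(t_1)}(x)\geq f^{\,\beta}_{\v r(t_2)}(x)$ for all $x$, i.e., $\v r(t_1)\succ_\beta \v r(t_2)$, which establishes property (2).

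The only real subtlety is the one I just resolved: thermomajorisation curves are built from each state's own $\beta$-ordering, so a priori comparing $\v r(t_1)$ and $\v r(t_2)$ would require handling possibly distinct orderings. The hypothesis of a shared ordering, combined with convexity of chambers, is what makes the naive linear interpolation work. If the orderings of $\v p$ and $\v q$ differed, the straight line would cross chamber boundaries and the partial-sum comparison would no longer be straightforward—this is precisely why the general equivalence (without the shared-ordering assumption) requires the much more elaborate machinery of Ref.~\cite{lostaglio2021continuous}, and why the proposition here is stated under the restricted hypothesis.
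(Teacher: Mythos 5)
Your proof is correct, and this is the standard argument for this fact. The paper itself does not spell out a proof but defers to Appendix A of Ref.~\cite{Lostaglio2019}; the linear-interpolation trajectory you construct, together with the observation that the chamber $\{\v r : r_{\v\pi^{-1}(1)}/\gamma_{\v\pi^{-1}(1)} \geq \dots \geq r_{\v\pi^{-1}(d)}/\gamma_{\v\pi^{-1}(d)}\}$ is a convex polytope so the segment from $\v p$ to $\v q$ never leaves it, is precisely the argument one finds there. Two points you handled implicitly but correctly and could have flagged: (i) the comparison of the two piecewise-linear thermomajorisation curves at their common abscissae suffices because linear functions ordered at the endpoints of an interval are ordered throughout it, and (ii) the bracket $\sum_{i=1}^k p_{\v\pi^{-1}(i)} - \sum_{i=1}^k q_{\v\pi^{-1}(i)} \geq 0$ follows from $\v p \succ_\beta \v q$ only because the shared $\beta$-order guarantees the two curves have the same elbow abscissae, so evaluating both curves at $x_k = \sum_{i=1}^k \gamma_{\v\pi^{-1}(i)}$ gives the stated partial sums — your closing remark makes clear you understood this is exactly where the shared-ordering hypothesis is used.
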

The proof of the above proposition can be found in Appendix A of Ref.~\cite{Lostaglio2019}. 

Given that continuous thermomajorisation characterises Markovian processes, while thermomajorisation characterises general processes, including non-Markovian ones, the aforementioned result highlights that all the complications associated with Markovianity (or advantages arising from non-Markovianity) stem from crossing the boundary between Weyl chambers. This observation will play a crucial role in Chapter~\ref{C:memory-MTP} as we explore the significance of memory effects in stochastic processes with a fixed point.

\subsection{Approximate thermomajorisation}

\begin{marginfigure}[-1.5cm]
	\includegraphics[width=4.718cm]{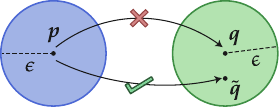}
    \caption{\emph{Approximate thermomajorisation}. While an input distribution $\v p$ may not thermomajorise a target distribution $\v q$, it can still thermomajorise a distribution $\tilde{\v q}$ that closely approximates the target. The $\epsilon$-circle represents sets of probability distributions that are each at a distance less than $\epsilon$ from both $\v p$ and $\v q$.}
	\label{Fig:approximate-majorisation}
\end{marginfigure}

Although an initial distribution $\v p$ may not thermomajorise a target distribution $\v q$, it may thermomajorise a final distribution $\tilde{\v q}$ that is close to the target\footnote{Note that the same definition applies to majorisation, as this represents a particular case of thermomajorisation when $\beta=0$.}. This idea gives rise to the concept of approximate thermomajorisation, meaning that for a given distribution and a specified distance $\delta$, we allow the final state to differ from the target state, as long as it is close enough~(see Fig.~\ref{Fig:approximate-majorisation}).

We will measure the distance between states using the \emph{infidelity},
\begin{equation}\label{eq:DefFidelity}
	\delta(\v{p},\v{q}):=1-F(\v{p},\v{q}) \,,
\end{equation}
where the \emph{fidelity} (or Bhattacharyya coefficient) is
\begin{equation}
F(\v{p},\v{q}):= \left( \sum_{i=1}^d \sqrt{p_i q_i} \right)^2 \,.
\end{equation}

The two important properties of the infidelity that we will utilise throughout this thesis are as follows. First, since fidelity is non-decreasing under stochastic maps, we have
\begin{equation}
\delta(\Lambda\v{p},\Lambda\v{q})\leq\delta(\v{p},\v{q}).
\end{equation}
Second, the distance $\delta$ between two probability vectors is identical to that between their embedded versions, i.e.,
\begin{equation}
\delta(\v{p},\v{q})=\delta(\hat{\v{p}},\hat{\v{q}}),
\end{equation}
which can be verified by direct calculation.

Two commonly referred to definitions of approximate majorisation and thermomajorisation are known as 'pre' and 'post', which are defined as follows~\cite{Chubb2018beyondthermodynamic}:
\begin{definition}[Pre-and-post-thermomajorisation]
    A distribution $\v{q}$ is said to be "$\epsilon$-pre-thermomajorised" by a distribution $\v{p}$, denoted as $\v{p} _{\epsilon}{\succ_\beta} \v{q}$, if there exists a distribution $\tilde{\v{p}}$ such that:
\begin{align}
\tilde{\v{p}}\succ_\beta \v{q} \quad \text{and} \quad \delta(\v{p},\tilde{\v{p}})\leq \epsilon,
\end{align}
where $\delta(\cdot,\cdot)$ represents the infidelity.

On the other hand, a distribution $\v{q}$ is said to be "$\epsilon$-post-thermomajorised" by a distribution $\v{p}$, denoted as $\v{p} \succ_\beta^\epsilon \v{q}$, if there exists a distribution $\tilde{\v{q}}$ such that:
\begin{align}
\v{p}\succ_\beta \tilde{\v{q}} \quad \text{and} \quad \delta(\v{q},\tilde{\v{q}})\leq \epsilon.
\end{align}
In the special case where $\beta=0$ and thermomajorisation corresponds to majorisation, we refer to it as pre- and post-majorisation, denoted as $_{\epsilon}{\succ}$ and $\succ^\epsilon$, respectively.
\end{definition}

Let us provide some comments regarding the definition mentioned above. Firstly, it is worth noting that according to Theorem~\ref{Thm:HLPbeta}, $\v{p} _{\epsilon}{\succ_\beta}~ \v{q}$ implies the existence of a state $\tilde{\v{p}}$ in the vicinity of $\v{p}$ and a mapping $\Lambda^\beta$ that transforms it into $\v{q}$. Similarly, $\v{p} \succ_\epsilon^\beta \v{q}$ implies the existence of a mapping $\Lambda^\beta$ that transforms $\v{p}$ into $\tilde{\v{q}}$, which lies in the vicinity of $\v{q}$. 

\begin{center}
    \textbf{\emph{Vidal, Jonathan and Nielsen's algorithm for approximate majorisation} }
\end{center}

We now describe the algorithm introduced in Ref.~\cite{vidal2000approximate} for constructing the optimal distribution satisfying the following constraint:
\begin{equation}
    \tilde{\v p}^{\star} = \underset{\tilde{\v p}: \tilde{\v p} \succ \v q}{\operatorname{arg max}} \: F(\v p, \tilde{\v p}).
\end{equation}
In other words, for a given pair of states $\v p$ and $\v q$, such that $\v p \nsucc \v q$, we seek the optimal $\tilde{\v p}^{\star}$, which maximises the fidelity for a given $\epsilon$.

We begin by assuming that all the distributions are ordered in a non-increasing manner. Then, for any distribution $\v t$, we define the following quantity:
\begin{equation}
    E^{\v t}_k:= \sum^{d}_{i=k} t_i \quad , \quad \Delta^{k_2}_{k_1}(\v t):= \sum^{k_2-1}_{i=k_1}t_i = E^{\v t}_{k_1}-E^{\v t}_{k_2}.
\end{equation}
It is straightforward to verify that if $\v p \succ \v q$, then $E^{\v p}_k \leq E^{\v q}_k$ for all k. Now, for a given $\v p$ and $\v q$, the construction of $\tilde{\v p}^{\star}$ is obtained by the following iterative procedure. Set $l_0 = d+1$ and define"
\begin{equation}
    l_j :=  \underset{k < l_{j-1}}{\operatorname{arg max}}\frac{\Delta^{l_{j-1}}_k(\v q)}{\Delta^{l_{j-1}}_k(\v p)} \quad , \quad r_j := \frac{\Delta^{l_{j-1}}_{l_j}(\v q)}{\Delta^{l_{j-1}}_{l_j}(\v p)}.
\end{equation}

If the minimisation defining $l_j$ does not yield a unique solution, then $l_j$ is chosen to be the smallest possible value. We will also denote $N$ as an index for which $l_N=1$. The $i$-th entry of the optimal vector for $i\in{l_{j},\dots,l_{j-1}-1}$ is then given by
\begin{equation}
\tilde{p}^{\star}_i=r_jp_i.
\end{equation}

It is direct to verify that $\tilde{\v{p}}^\star$ is normalised, and $\tilde{\v{p}}^\star\succ\v{q}$, as the construction guarantees that \mbox{$E^{\v{p}}_k\leq E^{\v{q}}_k$} for all $k$. Furthermore, the optimal fidelity between $\v{p}$ and a distribution that majorises $\v{q}$ is given by
\begin{eqnarray}
\sqrt{F(\v{p},\tilde{\v{p}}^{\star})}&=&\sum_{i=1}^d\sqrt{p_i\tilde{p}_i^{\star}}=\sum_{j=1}^{N}\sum_{i=l_j}^{l_{j-1}-1} \sqrt{p_i\tilde{p}^\star_i} \nonumber\\
&=&\sum_{j=1}^{N} \left(\frac{\Delta_{l_{j}}^{l_{j-1}}(\v{q})}{\Delta_{l_j}^{l_{j-1}}(\v{p})}\right)^{\frac{1}{2}}\sum_{i=l_j}^{l_{j-1}-1} p_i
\nonumber\\
&=&\sum_{j=1}^{N} \left(\Delta_{l_{j}}^{l_{j-1}}(\v{q})\Delta_{l_j}^{l_{j-1}}(\v{p})\right)^{\frac{1}{2}}.
\end{eqnarray}
The critical observation, which is essential for proving the optimality of the above is that for all $j$, we have
\begin{equation}
r_j< r_{j+1}.
\end{equation}
This is deduced from the definitions of $l_j$, $r_j$, and the fact that for $a,b,c,d> 0$, one observes (see to Ref.~\cite{vidal2000approximate} for details)
\begin{equation}
\frac{a}{b}\leq\frac{a+c}{b+d}\quad\Longleftrightarrow\quad \frac{a}{b}<\frac{c}{d}.
\end{equation}

\begin{kaobox}[frametitle=Approximate majorisation]
Let $\v{p} = (0.7, 0.2, 0.1)$ and $\v{q} = (0.75, 0.13, 0.12)$. Since $p_1 < q_1$ and $p_1 + p_2 > q_1 + q_2$, the two vectors are incomparable. By using the algorithm mentioned above (available in Mathematica code form here~\cite{adeoliveirajunior2023code}), we obtain the approximate state $\tilde{\v{p}} = (0.75, 0.17, 0.08)$, with a fidelity of $F(\v{p}, \tilde{\v{p}}) = 0.774597$.
\end{kaobox}

\subsection{Catalytic majorisation}

As previously discussed, majorisation introduces a partial ordering in the space of states, resulting in not all states being comparable to each other. There generally exists a pair of states, $\v p$ and $\v q$, such that neither $\v p \succ \v q$ nor $\v q \succ \v p$. In this scenario, we refer to $\v p$ and $\v q$ as incomparable. Nevertheless, there exists a possibility of 'borrowing' an ancillary system $\v c$ that aids in performing a transformation, but its resources are not used up. In the language we have introduced, this implies that the composite system satisfies the following equation
\begin{equation}\label{Eq:catalysis}
    \v p \otimes \v c \succ \v q \otimes \v c .
\end{equation}
Therefore, the auxiliary system $\v c$ allows one to perform an otherwise impossible transformation without being disturbed, as it is returned unchanged at the end of the process. Such an auxiliary system is referred to as a \emph{catalyst}. Different ways of relaxing Eq.\eqref{Eq:catalysis} lead to different types of catalysis (see Section III of Ref.~\cite{lipkabartosik2023catalysis} for an extensive discussion on the topic). However, here we will primarily discuss the case where Eq.~\eqref{Eq:catalysis} is strictly satisfied, i.e., the state of the catalyst is precisely the same as at the beginning and the joint system does not become correlated. 

\begin{kaobox}[frametitle=Catalytic majorisation~\cite{jonathan1999entanglement}]
The pair of states $\v p = (0.5,0.25,0.25,0)$ and $\v q = (0.4,0.4,0.1,0.1)$ are incomparable as neither $\v p \nsucc \v q$ nor $\v q \nsucc \v p$. It can easily be checked that $p_1 > q_1$ but $p_1+p_2 < q_1 + q_2$. However, note that the two-level catalyst $\v c = (0.6,0.4)$, leads to the following joint states
\begin{align*}
    \v p \otimes \v c &= (0.30, 0.20, 0.15, 0.15, 0.10, 0.10, 0.00, 0.00), \\
     \v q \otimes \v c &= (0.24, 0.24, 0.16, 0.16, 0.06, 0.06, 0.04, 0.04), 
\end{align*}
so that now $\v p\otimes \v c \succ \v q\otimes \v c$.
\end{kaobox}

There are three important properties of catalytic transformations that deserves to be highlighted~\cite{jonathan1999entanglement, PhysRevA.99.052348}:

\begin{enumerate}[label=(P\arabic*)]
	\item \textbf{Largest and smallest element.} Given two incomparable states $\v p$ and $\v q$, along with a catalyst $\v c$, such that $\v{p} \otimes \v{c} \succ \v{q} \otimes \v{c}$, it follows that the largest element of $\v{q}$ is always smaller than the largest element of $\v{p}$, and conversely, the smallest element of $\v{q}$ is always larger than the smallest element of $\v{p}$
        \item \textbf{Uniform state.} No transformation can be catalysed by a uniform state $\v \eta$.
        \item  \textbf{Minimal dimension} When $d = 2$, either $\v{p} \succ \v{q}$ or $\v{q} \succ \v{p}$, rendering the use of a catalyst irrelevant. For two 3-dimensional incomparable states $\v{p}$ and $\v{q}$, there exists no catalyst vector $\v{c}$ such that $\v{p} \otimes \v{c} \succ \v{q} \otimes \v{c}$. Therefore, a catalytic transformation is only possible when $d \geq 4$
\end{enumerate}
Properties (P1), (P2), and (P3) also apply to thermomajorisation. However, in this context, the uniform state is replaced by the thermal Gibbs distribution. While thermomajorisation allows incomparability even when $d=2$, property (P3) continues to hold true in this scenario~\cite{brandao2015second}.

The conditions for a catalytic state-transformation were first derived in Ref.~\cite{Klimesh2007,turgut2007necessary}, and are known as trumping conditions. More precisely, Given two $d$-dimensional probability distributions $\v p$ and $\v q$ and a catalyst $\v c$, we say that $\v p$ trumps $\v q$ when
\begin{equation}
    \v p \otimes \v c \succ \v q \otimes \v c    \Longleftrightarrow H_{\alpha}(\v p) \leq H_{\alpha}(\v q).
\end{equation}

\section{Concluding remarks}

This chapter aimed to provide a concise introduction to the key concepts that will be explored throughout this thesis. We focused on presenting well-established results without delving into their proofs or intricate mathematical details. Since much of the mathematics in this thesis will take place in the probability simplex, special emphasis was placed on the properties of probability distributions, majorisation and its variants. Many notions outlined in this chapter will be further clarified in their specific contexts later on.

\chapter{Resource theory of thermodynamics}\label{C:resource_theory_of_thermodynamics}

Resource theories emerged from the desire to manipulate entanglement~\cite{RevModPhys.81.865,plenio2007introduction} and quickly spread to other areas of quantum information~\cite{RevModPhys.89.041003,theurer2017resource,Albarelli2018,Lostaglio2019,RevModPhys.79.555,RevModPhys.86.419}. As a theoretical toolkit capable of uncovering the limitations of a given system undergoing a certain process, it became a widely adopted framework in recent decades. At its core, it involves identifying and characterising which operations are easy and hard to implement when subject to constraints, such as locality, experimental difficulties in preparing specific superpositions, or fundamental restrictions induced by physical laws like energy conservation. This is where the term \emph{resource} comes into place. The natural question then is \emph{what distinguishes these resources as resources?} If one thinks about entanglement as a resource, Bell states would be the most valuable states among all others, while separable states would fall into the opposite category, namely \emph{free}. As a result, from a resource point of view, a given context results in a partial ordering of the set of quantum states, with the hardest to prepare at the top, and easiest at the bottom. A generic resource theory is determined by three ingredients~\cite{chitambar2019quantum}: (i) free operations $\FF$, (ii) free states $\OO$ and (iii) a set of monotones $\mathcal{M}$.

A class of free operations is a set of transformations that can be implemented at no cost. Similarly, free states are those that can be generated and used without any expenditure of resources. On the other hand, monotones are real-valued functions that quantify the amount of a resource in a given state that is not free. Given the aforementioned operations, free and resource states, one can ask about the fundamental limitations of manipulating states. More precisely, for a pair of states $\rho$ and $\sigma$, our interest lies in determining the possibility of transforming these states using the allowed operations. Naturally, this process involves identifying the set of necessary and sufficient conditions for the given transformation:
\begin{equation}\label{Eq:general-transformation}
    \rho  \xrightarrow[?]{\text{allowed operations}} \sigma.
\end{equation}
The conditions underlying Eq.~\eqref{Eq:general-transformation} can be either necessary or sufficient, and in some cases, both. These are typically expressed in terms of monotones, indicating that for a transformation to be possible one needs $\mathcal{M}(\rho) \geq \mathcal{M}(\sigma)$. However, it is not always the case that one can find a set of monotones that completely characterises a given transformation.

A celebrated example of this approach is the entanglement theory, which can be cast as a resource theory~\cite{RevModPhys.81.865}. The starting point is to assume that Alice and Bob are two distant parties who are capable of creating any local quantum states in their own labs and manipulating them via arbitrary local operations. Additionally, we also assume that both can easily communicate with each other classically. This defines a set of operation known as local operations and classical communication (LOCC) that one takes as a free set. Such operations allow Alice and Bob to create a joint, correlated bipartite quantum states, $\rho_{AB}$. However, by imposing the constraint that Alice and Bob can only employ LOCC, they are unable to create any entanglement. As a result, the joint quantum state $\rho_{AB}$ is separable and, hence, free. Thus, any initial entangled state that Alice and Bob share becomes a valuable resource. This pre-existing entanglement can be harnessed for practical purposes, such as teleportation~\cite{PhysRevLett.70.1895,bouwmeester1997experimental}. In the light of Eq.~\eqref{Eq:general-transformation}, suppose Alice and Bob jointly possess a pure state $\ket{\psi}$. Using local operations and classical communication, one can ask whether is possible for Alice and Bob to transform $\ket{\psi}$ into another state $\ket{\phi}$:
\begin{equation}\label{Eq:general-transformation-LOCC}
    \ket{\psi}  \xrightarrow[]{\text{LOCC}} \ket{\phi} 
\end{equation}
The answer to this question will be presented in Chapter~\ref{C:thermal_cones}. The key takeaway is that is for transformations involving bipartite pure states [as in Eq.~\eqref{Eq:general-transformation-LOCC}], the necessary and sufficient conditions are known~\cite{nielsen1999conditions}.

When discussing thermodynamics, one of the most fundamental questions to ask is \emph{what state transformations systems can undergo while interacting with a thermal bath}. Since the laws of thermodynamics impose fundamental constraints on how states can be manipulated, one can adopt the resource-theoretic perspective and cast this theory as resource theory. Several different approaches have been explored~\cite{gour2015resource, yunger2016microcanonical, halpern2016beyond, sparaciari2017resource,halpern2018beyond, singh2019quantum}, but almost all can be described within a common framework. The central idea is to define free operations as those that conserve certain extensive properties (such as energy, particle number, etc.) when a given system interacts with a bath. Our focus is on the resource theory of athermality~\cite{brandao2013resource}, which involves just energy conservation. The concept of free operations, initially introduced by Janzing~\cite{Janzing2000} and subsequently expanded upon~\cite{horodecki2013fundamental,brandao2015second,Lostaglio2019,ng2019resource}, encompasses all physical dynamics that conserve the total energy when the system interacts with a thermal bath. These processes, referred to as \emph{thermal operations}~\cite{horodecki2013quantumness}, represent the most general approach to describing the joint system-bath dynamics. 

Such an approach allows one to go beyond the thermodynamic limit and the assumption of equilibrium. Its generality allows us to describe the structure of non-equilibrium states and quantify the thermodynamic properties of systems without explicitly solving for the specific dynamics. This opens the door to exploring how quantum properties can be harnessed and sheds light on optimal strategies for certain thermodynamic tasks. 

In this chapter, we introduce the resource-theoretic approach to thermodynamics and primarily cover well-known results. The concepts discussed here will serve as fundamental building blocks for the rest of the thesis. The chapter is organised as follows: firstly, we formally introduce the set of free operations known as \emph{thermal operations}. Next, we provide a brief interlude on information-theoretic quantities and their thermodynamic interpretations. The subsequent section focuses on addressing Eq.~\eqref{Eq:general-transformation} in the context of thermal operations and energy-incoherent states. Subsequently, we explore the applications of the aforementioned framework. We conclude the chapter by revisiting the open quantum system approach to quantum thermodynamics.

\newpage

\begin{marginfigure}[3cm]
	\includegraphics[width=44.046mm]{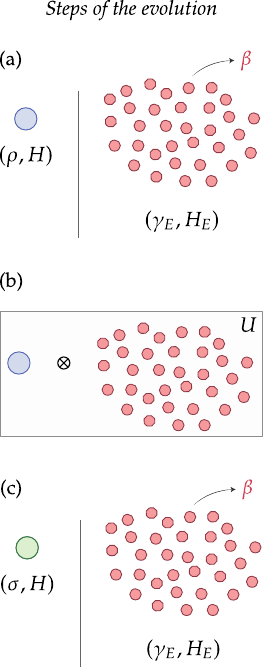}
	\caption{\emph{Thermal operation snapshots.} (a) A quantum system (represented by a blue circle), previously isolated, is coupled to a heat bath (shown as a collection of small red circles). (b)~The joint system is assumed to be closed and thus evolves unitarily, conserving the total energy. (c)~After some time, the system (now in a different state and represented by a green circle) is decoupled from the bath.}
	\label{Fig:system-bath}
\end{marginfigure} 

\section{Resource theory of states out of equilibrium}

Consider a finite-dimensional quantum system in the presence of a heat bath at temperature $T$\footnote{Throughout this thesis, the terms "temperature $T$" and "inverse temperature $\beta$" will be used interchangeably}. The system under investigation is described by a Hamiltonian $H$, and it is prepared in a state $\rho$; while the heat bath, with a Hamiltonian $H_E$, is in a thermal equilibrium state,
\begin{equation}
\label{Eq:thermal_state}
    \gamma_E=\frac{e^{-\beta H_E}}{\textrm{tr}({e^{-\beta H_E}})},
\end{equation}
where $\beta=1/k_B T$ is the inverse temperature with $k_B$ denoting the Boltzmann constant. 

\begin{kaobox}[frametitle= $N$ particles system]
This framework can also be employed when the initial system is composed of $N$ non-interacting subsystems with the total Hamiltonian $H^N$ and a state $\rho^N$ given by
\begin{equation}\label{eq:N-noninteracting}
    H^N=\sum_{n=1}^N H_{n},\qquad \rho^N=\bigotimes_{n=1}^N \rho_{n}.
\end{equation}
A typical example of this setting is when initial and target systems are given by copies of independent and identical subsystems. More precisely, in this case, the family of initial systems is given by $H^N$ with $H^N_n=H$ and $\rho^N=\rho^{\otimes N}$.
\end{kaobox}

The evolution of the joint system is assumed to be closed and described by a unitary operator $U$ that is further required to conserve the total energy~(see Fig.~\ref{Fig:system-bath} for a pictorial representation)
\begin{equation}
\label{eq_energy_conservation}
    [U, H\otimes \iden_E+ \iden\otimes H_E] = 0.
\end{equation} 
This assumption is typically made to ensure that the transformation is consistent with the laws of thermodynamics. Moreover, note that in general $U$ does not commute with $H$ and $H_E$
individually, only with their sum. As a consequence, the set of allowed thermodynamic transformations is modelled by a set of operations called \emph{thermal operations} (TOs), which are defined as follows:
\begin{definition}[Thermal operations]\marginnote{One can also define \emph{catalytic thermal operations}~\cite{brandao2015second}. In addition to thermal operations, an ancilla, termed the catalyst, is borrowed under the condition that it is returned in the same state, in product form with the other systems.}\label{def:thermal-operations}
The set of thermal operations (TOs) consists of completely positive, trace-preserving (CPTP) maps that act on a system as
\begin{equation}
    \label{eq:thermal_ops}
    \mathcal{E}^{\beta}(\rho)=\textrm{Tr}_E\left[U\left(\rho\otimes\gamma_E\right)U^{\dagger}\right],
\end{equation}
with $U$ satisfying Eq.~\eqref{eq_energy_conservation} and the state $\gamma_E$ given by Eq.~\eqref{Eq:thermal_state} with an arbitrary Hamiltonian $H_E$.
\end{definition}

Thermal operations are designed to be as general as possible and capture a wide range of transformations, including both reversible and irreversible, as well as Markovian and non-Markovian ones. 

Note that if the system was initially out-of-equilibrium with respect to the thermal reservoir when they were brought into contact, then it is natural to expect that a thermal state can be prepared without the expenditure of any resource by letting the system equilibrate with the thermal bath. In this case, the only free state is the thermal Gibbs state and therefore, any out-of-equilibrium state is a resource. From the perspective of the channel defined in Eq.~\eqref{eq:thermal_ops}, one can uncover two essential properties:
\begin{enumerate}[label=(P\arabic*)]
	\item \begin{marginfigure}[-0.567cm]
	\includegraphics[width=4.753cm]{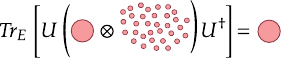}
	\caption{\emph{The second law}. Preserving the Gibbs state prevents the possibility of constructing a perpetual machine. Without this property, it would be possible to extract work from a system and re-equilibrate it, thereby contradicting the second law.}
	\label{TO_property_1}
\end{marginfigure}
        \textbf{Preserves the thermal state.} The Gibbs state of the system, 
	\begin{equation}
		\label{Eq:thermalstate}
		\gamma = \frac{e^{-\beta H}}{\textrm{tr}({e^{-\beta H}})},
	\end{equation}
	is a fixed point of the dynamics, i.e.,
	\begin{equation}
	       \mathcal{E}^{\beta}(\gamma) = \gamma. 
	\end{equation} 
	\item \label{p2} \begin{marginfigure}[-0.353cm]
	\includegraphics[width=4.756cm]{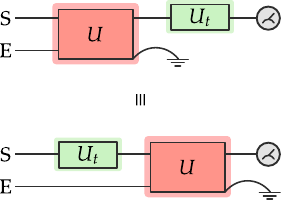}
	\caption{\emph{The first law}. This symmetry reflects the absence of an external time reference. If the thermal environment affected the system differently depending on when a thermal operation is applied, it would be possible to extract ``time-like'' information from the final state.}
	\label{TO_property_2}
\end{marginfigure}
        \textbf{Covariance.} Thermal operations are time-translation covariant,
	\begin{equation}
		\mathcal{E}^{\beta}(e^{-i Ht}\rho e^{i Ht}) = e^{-i Ht}\mathcal{E}^{\beta}(\rho) e^{i Ht}.
	\end{equation}
\end{enumerate}
The first property, combined with a contractive distance measure $\delta$ that satisfies $\delta(\rho,\gamma)\geq\delta[\mathcal{E}^{\beta}(\rho),\mathcal{E}^{\beta}(\gamma)]=\delta[\mathcal{E}^{\beta}(\rho),\gamma]$, reflects the fundamental principle of the second law of thermodynamics. This principle states that a system will evolve towards thermal equilibrium, meaning it cannot be transformed into any other out-of-equilibrium state without cost -- both properties are illustrated in Figs.~\ref{TO_property_1}~and~\ref{TO_property_2}. 

\begin{kaobox}[frametitle=A physical model of TO~\cite{lostaglio2018elementary}]
The Jaynes-Cummings model reproduces the behaviour of the set of thermal operations for a range of physically relevant parameters. This model describes the interaction between an atom with a single electromagnetic field mode in an optical cavity. 

In this case, the cavity is modelled by bosonic creation and annihilation operators $a^{\dagger}$ and $a$, with Hamiltonian \mbox{$H_{\ms C} = \omega_{\ms C} a^{\dagger} a$}, where $\omega_C$ is the angular frequency of the mode. The atom, a two-level system with energy levels $\ket{g}$ and $\ket{e}$, and raising and lowering operators \mbox{$\sigma_+ = \ketbra{e}{g}$} and \mbox{$\sigma_- = \ketbra{g}{e}$}, is described by the Hamiltonian \mbox{$H_{\ms A} = \omega_{\ms A} \sigma_z$}, where $\omega_{\ms A}$ is the transition frequency of the two-level system and $\sigma_z$ is the Pauli-z matrix. These two systems interact through the Jaynes-Cummings Hamiltonian, which in the rotating wave approximation, is given by:
\begin{align}
    H_{\text{tot}} = \omega_{\ms C} a^{\dagger} a + \omega_{\ms A} \sigma_z + g \left(\sigma_+ a + \sigma_- a^{\dagger} \right),
\end{align}
When the two-level system is driven on resonance, meaning \mbox{$\omega_{\ms C} = \omega_{\ms A}$}, the unitary evolution generated by the Jaynes-Cummings Hamiltonian, \mbox{$U(t) = e^{-i H_{\text{tot}} t}$}, conserves the total energy of both systems. This can be seen by noting that $[U(t), H_{\ms A}+H_{\ms C}] = 0$ for all times $t$.
\end{kaobox}

The evolution of systems towards thermal equilibrium reveals a fundamental aspect of the theory: any initial state cannot be transformed into any final state. In other words, not all quantum states are equally valuable, leading to a hierarchical ordering of the set of quantum states, with the most resourceful at the top and least resourceful at the bottom~(see Fig.~\ref{Partial_ordering_set_TO}). The natural question is how to quantify and evaluate the resources present in a given quantum state. This leads to the concept of thermodynamic monotones, which are mathematical functions that do not increase under free operations. They are formally defined as follows.
\begin{marginfigure}[0.0cm]
	\includegraphics[width=4.756cm]{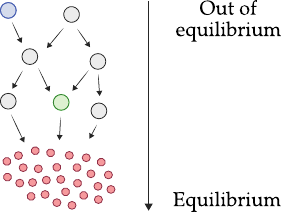}
	\caption{\emph{Thermodynamic ordering}. Thermodynamic transformations between states follow a partial ordering relation. The existence of a thermal operation mapping an initial state to a target state is depicted by an arrow. Note that, beyond the thermodynamic limit, not all states are comparable.}
	\label{Partial_ordering_set_TO}
\end{marginfigure}

\begin{definition}[Thermodynamic monotone]\label{def:thermodynamic-monotones}
A function $\mathcal{M}$ that maps the set of quantum states to non-negative real numbers $\mathbbm{R}_+ \cup \{0\}$ is a thermodynamic monotone if and only if it satisfies the following two conditions:
\begin{enumerate}
    \item $\mathcal{M}[\mathcal{E}^{\beta}(\rho)] \leq \mathcal{M}(\rho)$.
    \item $\mathcal{M}(\gamma) = 0$.
\end{enumerate}
For $\mathcal{E}^{\beta}$ being a thermal operation.
\end{definition}
In a resource-theoretic language, monotones are functions that quantify the amount of resources present in a given state and provide insight into how these resources can be transformed or manipulated using the allowed operations. Since the set of allowed operations cannot create resources for free, this requirement is encoded in the first property, while the second follows from the fact that free states do not have any resources. A non-trivial example of a thermodynamic monotone is the relative entropy or Kullback-Leibler divergence between a given state $\rho$ and the thermal Gibbs state $\gamma$. A detailed discussion of this example will be presented in the next section.

\subsection{Information-theoretic notions and their thermodynamic interpretation}\label{subsec:information-theoretic-notions}

In classical thermodynamics, an equilibrium system in the presence of a heat bath at inverse temperature $\beta$ is completely characterised by its free energy. Specifically, for an equilibrium state $\gamma$, with internal energy denoted by $U(\gamma)$ and Gibbs entropy $S(\gamma)$, the free energy is defined as
\begin{equation}
    F(\gamma) := U(\gamma) -\frac{1}{\beta}S(\gamma).
\end{equation}
This quantity is related to the partition function $Z$ via~\cite{callen}
\begin{equation}
    F(\gamma) = -\frac{1}{\beta}\log Z \quad \text{where} \quad  Z = \sum_i e^{-\beta E_i}.
\end{equation} 
One of the manifestations of the second law of thermodynamics is that for a system interacting with a bath in thermal equilibrium, the maximum amount of work that it can perform (that can be extracted from the system) is bounded by the difference $\Delta F$ between its initial and final free energy~\cite{fermi,callen}. Traditionally, the free energy has been defined only for states at thermal equilibrium. However, taking into account its operational meaning, one can extend its definition to investigate also the case of non-equilibrium states. More precisely, for any $d$-dimensional quantum state~$\rho$, the relative entropy~$D$ between~$\rho$ and a thermal Gibbs state~$\gamma$
\begin{equation}
    D(\rho\|\gamma):=\tr[\rho \left(\log \rho-\log\gamma\right)],\label{eq:D}
\end{equation}
can be interpreted as a non-equilibrium generalisation of the free energy difference between a state $\rho$ and a thermal state $\gamma$. This can be easily seen by manipulating Eq.~\eqref{eq:D} to arrive at
\begin{align}
    \frac{1}{\beta}D(\rho\|\gamma) &= \frac{1}{\beta}\tr[\rho(\log \rho - \log \gamma)] = \frac{1}{\beta}\tr(\rho \log \rho) - \frac{1}{\beta}\tr[\rho \log(\frac{e^{-\beta H}}{Z})] \nonumber\\
    &=-\frac{1}{\beta}S(\rho) +\tr(\rho H) +\frac{1}{\beta}\log Z, \label{eq:rel_ent_gibbs}
\end{align} 
where $S(\rho):=-\tr(\rho \log \rho)$ is the von Neumann entropy. Defining the non-equilibrium free energy as $\mathcal{F}(\rho) := \langle H\rangle_{\rho} -S(\rho)/\beta$, we observe that Eq.~\eqref{eq:rel_ent_gibbs} expresses a difference between free energies. One term originates from the non-equilibrium contribution, while the other stems from the equilibrium part:
\begin{equation}
    \frac{1}{\beta}D(\rho \| \gamma) = \mathcal{F}(\rho) - F(\gamma). 
\end{equation}
We can further define the corresponding relative entropy variance~$V$ and the function~$Y$ related to relative entropy skewness~\cite{Tomamichel2013,Chubb2018beyondthermodynamic,boes2020variance}:
\begin{subequations}
	\begin{align}
	V(\rho\|\gamma):=&\tr{\rho \left(\log \rho-\log\gamma-D(\rho\|\gamma)\right)^2},\label{eq:V}\\
	Y(\rho\|\gamma):=&\tr{\rho \left|\log \rho-\log\gamma - D(\rho\|\gamma)\right|^3}.\label{eq:W}
	\end{align}
\end{subequations}
Denoting the average of an observable $O$ in a state $\rho$ by \mbox{$\langle O\rangle_{\rho}=\tr(\rho O)$}, we can introduce a random variable \mbox{$\log \rho - \log \gamma$}, so that the divergence can be interpreted as the expectation value of the difference-likelihood. Similarly, the relative entropy variance and skewness correspond to the variance and the third moment of the random variable. 

The higher moments can then be understood as fluctuations of the non-equilibrium free energy content of the system. This is most apparent for pure states $\rho=\ketbra{\psi}{\psi}$, as $V$ simply describes energy fluctuations of the system:
\begin{align}
    \label{eq:energy_fluct}
    \frac{1}{\beta^2}V(\ketbra{\psi}{\psi}\|\gamma)&=\bra{\psi} H^2\ket{\psi}-\bra{\psi} H\ket{\psi}^2.
\end{align}
In particular, as noted in Ref.~\cite{Chubb2018beyondthermodynamic}, when $\rho=\gamma'$ is a thermal distribution at some different temperature \mbox{$T'\neq T$}, the expression for $V$ becomes
\begin{equation}
\label{eq:var_capacity}
    V(\gamma'\|\gamma)=\left(1-\frac{T'}{T}\right)^2 \cdot\frac{c_{T'}}{k_B},
\end{equation}
where
\begin{equation}
    c_{T'}=\frac{\partial}{\partial T'} \tr(\gamma'H)
\end{equation}
is the specific heat capacity of the system in a thermal state at temperature $T'$, and $k_B$ is the Boltzmann constant. 

\newpage

\section{Thermodynamic evolution of energy-incoherent states}

The formalism introduced so far is general and applies to any quantum state. However, we now focus on a specific class known as \emph{energy-incoherent states}. These states are diagonal in the energy-eigenbasis and are invariant under time translations, making the covariance restriction irrelevant in their description. While this restriction may suggest that the "quantum component" is lost, it is important to note that this is fully justified in the "quasi-classical regime", where systems are quantised and have a small number of energy levels, but the decoherence is so strong that interference between different energy levels is negligible. Formally, this set of state is defined as follows.
\begin{definition}[Energy-incoherent states]\label{def:energy-incoherent}
Let $\rho$ be a $d$-dimensional system described by a Hamiltonian $H = \sum_i E_i \ketbra{E_i}{E_i}$. If $\rho$ commutes with~$H$ it follows that $\rho$ can be written as:
\begin{equation}
\rho = \sum^{d}_{i=1}p_i \ketbra{E_i}{E_i},
\end{equation}
where $p_i$ are the eigenvalues of $\rho$, which coincide with the populations in the energy-eigenbasis. Furthermore, $\rho$ can be equivalently represented by a $d$-dimensional probability vector $\v p$ of its eigenvalues:
\begin{equation}
    \v p = \text{eig}(\rho) = (p_1, ..., p_d)
\end{equation}
\end{definition}
As a result, the state space of energy-incoherent states is the $(d-1)$-dimensional probability simplex $\Delta_d$. If we recall from Chapter~\ref{C:mathematical_preliminaries}, stochastic matrices represent the most general evolution between probability distributions. Hence, this leads to the question of whether studying thermodynamic transformations of energy-incoherent states allows one to replace a thermal operation, i.e., a CPTP map, with a stochastic matrix subject to certain constraints. This question is answered in the following theorem, which links the existence of a thermal operation between incoherent states to the existence of a Gibbs-preserving stochastic matrix between the probability distributions representing these states.
\begin{theorem}\label{thm_Thermaloperations}
    Let $\rho$ and $\sigma$ be quantum states commuting with the system Hamiltonian $H$, and $\gamma$ be its thermal Gibbs state with respect to the inverse temperature $\beta$. Denote their eigenvalues by $\v{p}$, $\v{q}$ and $\v{\gamma}$, respectively. Then, there exists a thermal operation $\mathcal{E}^{\beta}$, such that $\mathcal{E}^{\beta}(\rho)=\sigma$, if and only if there exists a stochastic map $\Lambda^{\beta}$ such that
		\begin{equation}
		    \Lambda^{\beta}\v{\gamma}=\v{\gamma} \quad \text{and} \quad \Lambda \v p=\v q . 
		\end{equation}
\end{theorem}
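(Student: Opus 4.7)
The plan is to prove the equivalence by treating the two implications separately. Both directions pivot on two ingredients already set up in the excerpt: the covariance property (P2) of thermal operations, which forces incoherent inputs to produce incoherent outputs, and the embedding-map/Birkhoff machinery developed for Gibbs-preserving matrices. I would state the proof as follows.

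For the forward direction ($\Rightarrow$), assume $\mathcal{E}^{\beta}(\rho)=\sigma$. I would define the candidate stochastic matrix elementwise by
\begin{equation}
\Lambda^{\beta}_{ij} := \bra{E_i}\mathcal{E}^{\beta}(\ketbra{E_j}{E_j})\ket{E_i}.
\end{equation}
Complete positivity gives $\Lambda^{\beta}_{ij}\geq 0$, and trace preservation gives $\sum_i \Lambda^{\beta}_{ij}=1$, so $\Lambda^{\beta}$ is stochastic. The key step is to argue that when $\rho$ commutes with $H$, only these diagonal-to-diagonal matrix elements contribute to $\sigma$. This follows from the time-translation covariance: any off-diagonal coherence $\bra{E_i}\mathcal{E}^{\beta}(\ketbra{E_j}{E_j})\ket{E_k}$ with $E_i\neq E_k$ would acquire a nonzero phase under $e^{-iHt}(\cdot)e^{iHt}$, contradicting covariance. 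Hence $\sigma=\sum_i(\Lambda^{\beta}\v{p})_i \ketbra{E_i}{E_i}$, giving $\Lambda^{\beta}\v{p}=\v{q}$. Finally, property (P1), $\mathcal{E}^{\beta}(\gamma)=\gamma$, together with the fact that $\gamma$ is diagonal with eigenvalues $\v{\gamma}$, yields $\Lambda^{\beta}\v{\gamma}=\v{\gamma}$.

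For the reverse direction ($\Leftarrow$), the plan is to dilate $\Lambda^{\beta}$ to a genuine thermal operation. Using the embedding map, the result of the excerpt shows that $\hat{\Lambda}^{\beta}=\Gamma\Lambda^{\beta}\Gamma^{-1}$ is bistochastic on a $D$-dimensional space. By Birkhoff's theorem (Theorem~\ref{Thm:Birkhoff}), decompose
\begin{equation}
\hat{\Lambda}^{\beta} = \sum_k \lambda_k\, \Pi_k, \qquad \lambda_k\geq 0,\ \sum_k\lambda_k=1,
\end{equation}
with $\Pi_k$ permutation matrices on the $D$-dimensional embedded space. I would then introduce a bath with Hamiltonian $H_E$ chosen so that the joint energy eigenspaces of $H\otimes\iden_E+\iden\otimes H_E$ have degeneracies matching the embedding boxes $D_i$ (concretely, the bath carries the ``extra'' levels that turn $\v{\gamma}$ into the flat embedded distribution). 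Within each joint energy shell, any permutation of the $D_i D_j$ internal labels is implemented by an energy-preserving unitary $V_k$. Taking a classical mixture of these unitaries with weights $\lambda_k$ (realised, e.g., by an additional classical register in $\gamma_E$) defines a channel of the thermal-operations form \eqref{eq:thermal_ops} that, restricted to incoherent inputs, reproduces $\Lambda^{\beta}$ acting on $\v{p}$.

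The main obstacle is this second step: one must be careful that the permutations $\Pi_k$ on the embedded space really lift to \emph{energy-conserving} unitaries on system plus bath, i.e. that they permute only within a common joint-energy shell rather than across different shells. This is precisely why the rational approximation $\gamma_i=D_i/D$ in the embedding definition is convenient, and why irrational Gibbs weights are handled by a limiting argument. Modulo this dilation, the equivalence is essentially a repackaging of Lemma~\ref{Lemma:GP-majorisation} and Theorem~\ref{Thm:HLPbeta}, promoted from the level of classical probability vectors to that of quantum channels via the covariance constraint used in the forward direction.
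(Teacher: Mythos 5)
Your forward direction is sound and is essentially the standard argument: complete positivity and trace preservation make $\Lambda^{\beta}_{ij}:=\bra{E_i}\mathcal{E}^{\beta}(\ketbra{E_j}{E_j})\ket{E_i}$ a stochastic matrix, covariance forces $\mathcal{E}^{\beta}(\ketbra{E_j}{E_j})$ to commute with $H$ (so its off-diagonal coherences vanish and only the $\Lambda^{\beta}_{ij}$ survive), and $\mathcal{E}^{\beta}(\gamma)=\gamma$ gives $\Lambda^{\beta}\v{\gamma}=\v{\gamma}$. Note that the thesis does not reproduce a proof of this theorem but defers to the supplementary material of Ref.~\cite{horodecki2013fundamental}, so there is no in-text argument to compare against directly.

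The reverse direction has a genuine gap, and it is precisely the obstacle you flag but do not resolve. After Birkhoff-decomposing $\hat{\Lambda}^{\beta}=\sum_k\lambda_k\Pi_k$, the permutations $\Pi_k$ generically exchange labels across different embedding boxes, i.e.\ between labels carrying distinct system energies $E_i\neq E_j$; such a permutation changes the system's energy and is therefore \emph{not} itself energy-conserving. To lift it to an energy-conserving unitary on system plus bath one must exhibit a bath Hamiltonian $H_E$ that (i) can absorb the compensating energy $E_j-E_i$ for every pair of boxes that $\Pi_k$ mixes, and (ii) has exactly the degeneracy profile that makes $\rho\otimes\gamma_E$, restricted to each joint-energy shell, reproduce the embedded distribution $\hat{\v{p}}$ up to normalisation; one then has to check that the intra-shell permutations assemble into a single consistent $U$ with $[U,H\otimes\iden_E+\iden\otimes H_E]=0$ and that tracing out the bath returns exactly $\Lambda^{\beta}$ on $\v{p}$, not merely within one shell. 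This construction — not the embedding nor the Birkhoff step, which you set up correctly — is the actual mathematical content of the hard direction, and it is what Ref.~\cite{horodecki2013fundamental} carries out. Asserting that the rationality of $\gamma_i=D_i/D$ ``is precisely why'' the lift works does not discharge the obligation: rationality is necessary for the embedding boxes to exist but says nothing about whether the required $H_E$ and $U$ exist, so as written the reverse implication remains unproved.
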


The proof of this theorem can be found in the supplementary material of Ref.~\cite{horodecki2013fundamental}.

As discussed in the previous section, in classical thermodynamics, state transformations are governed by the free energy. Specifically, one can transform an equilibrium state $\gamma$ into another equilibrium state $\gamma'$, if and only if \mbox{$F(\gamma) > F(\gamma')$}. However, when considering state transformations between different non-equilibrium states of a few-particle systems, the condition of decreasing free energy remains necessary but becomes insufficient. 

\begin{kaobox}[frametitle= Beyond the thermodynamic limit~\cite{brandao2015second}]
By going from the macroscopic (equilibrium) to the microscopic (nonequilibrium) realm, the traditional free energy function is replaced by a family of quantum free energies and transitions are now governed by a family of second laws. 

Define the free energies:
\begin{equation}\label{Eq-second-laws}
    F_{\alpha}(\rho, \gamma) := \beta D_{\alpha}(\rho \| \gamma) - \beta \log Z,
\end{equation}
with the Rényi divergences $D_{\alpha}(\rho \| \gamma)$ given by\sidenote{The cases of $\alpha \in \{ -\infty, 0, 1, \infty\}$ are defined via suitable limits:
\begin{align*}
    &D_0(\rho \| \gamma) = -\log \left(\sum_{i | p_i \neq 0} \gamma_i \right), \\ &D_1(\rho \| \gamma) = \sum_i p_i \log \frac{p_i}{\gamma_i}, \\
    &D_{\infty} = \log \text{max}_i \frac{p_i}{\gamma_i}, \\ &D_{-\infty}(\rho \| \gamma) =D_{\infty}(\gamma \| \rho).
\end{align*}
}
\begin{equation}
    D_{\alpha}(\rho \| \gamma) := \frac{\operatorname{sgn}(\alpha)}{\alpha-1}\log \sum_{i} p^{\alpha}_i \gamma^{1-\alpha}_i,
\end{equation}
where $p_i$ and $\gamma_i$ are the eigenvalues of $\rho$ and $\gamma$, respectively. 

If $F_{\alpha}(\rho, \gamma) \geq F_{\alpha}(\rho',\gamma)$ holds $\forall \alpha \geq 0$, then there exists a catalytic thermal operation that transforms $\rho$ into $\rho'$.
\end{kaobox}

In principle, determining the existence of a given thermal operation would require checking Eq.~\eqref{Eq-second-laws} for all $\alpha \in \mathbbm{R}$. However, this is only necessary; it becomes sufficient only when catalysts are allowed, i.e., when one can introduce states that aid the transformation without being degraded in the process. Nevertheless, this problem can be tackled from the point of view of partial order relations. This allows one to characterise the set of possible transitions in an advantageous way as it reduces the problem to a finite list of conditions, namely $n$ simple 1-norm inequalities. In Section~\ref{Subsec-thermomajorisation}, we revisited this problem in a mathematical way via Theorem~\ref{Thm:HLPbeta}. This theorem establishes a link between the existence of a Gibbs-preserving matrix and the concept of thermomajorisation. Consequently, a direct corollary arises when we combine Theorem~\ref{Thm:HLPbeta} with Theorem~\ref{thm_Thermaloperations}:
\begin{corollary}[Thermal operations $\&$ Thermomajorisation]
There exists a thermal operation mapping an incoherent state $\v p$ into an incoherent state $\v q$ if and only if $\v p \succ_{\beta} \v q.$
\end{corollary}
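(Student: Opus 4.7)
The plan is to chain together the two theorems already at our disposal, namely Theorem~\ref{thm_Thermaloperations} (which translates the existence of a thermal operation between energy-incoherent states into the existence of a Gibbs-preserving stochastic matrix between their eigenvalue vectors) and Theorem~\ref{Thm:HLPbeta} (which in turn equates the existence of a Gibbs-preserving matrix with thermomajorisation). Since both are biconditionals, the chain of iff's gives the claim with no additional work.

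Concretely, I would argue both directions by substitution. For the forward direction, I would start from an arbitrary thermal operation $\mathcal{E}^\beta$ such that $\mathcal{E}^\beta(\rho) = \sigma$ with $\rho$ and $\sigma$ commuting with $H$. By Theorem~\ref{thm_Thermaloperations}, this is equivalent to the existence of a Gibbs-preserving stochastic matrix $\Lambda^\beta$ acting on the eigenvalue vectors $\v p$ and $\v q$ with $\Lambda^\beta \v p = \v q$ and $\Lambda^\beta \v\gamma = \v\gamma$. Then, applying Theorem~\ref{Thm:HLPbeta} to this $\Lambda^\beta$ yields $\v p \succ_\beta \v q$. For the converse, I would reverse the order: starting from $\v p \succ_\beta \v q$, Theorem~\ref{Thm:HLPbeta} provides a Gibbs-preserving $\Lambda^\beta$ mapping $\v p$ to $\v q$, and Theorem~\ref{thm_Thermaloperations} then lifts this to an explicit thermal operation $\mathcal{E}^\beta$ with $\mathcal{E}^\beta(\rho) = \sigma$.

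There is essentially no obstacle to overcome here, since the heavy lifting has already been done in the two parent theorems. The only implicit care needed is to verify that the bookkeeping is consistent: the probability vectors appearing in the thermomajorisation statement must be the spectra of $\rho$ and $\sigma$ in the energy eigenbasis (guaranteed by the assumption that both states are energy-incoherent, Definition~\ref{def:energy-incoherent}), and the fixed-point condition on $\Lambda^\beta$ must reference the spectrum of the Gibbs state $\gamma = e^{-\beta H}/Z$. Both identifications are immediate from the incoherence assumption, so nothing beyond bookkeeping is required. The proof therefore reduces to a one-line composition of the two iff statements.
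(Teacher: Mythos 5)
Your proof is correct and matches the paper's own reasoning exactly: the paper states the corollary ``arises when we combine Theorem~\ref{Thm:HLPbeta} with Theorem~\ref{thm_Thermaloperations},'' which is precisely the two-step chain of biconditionals you lay out. The bookkeeping remarks about eigenvalue vectors and the Gibbs spectrum are sound and consistent with how the paper sets up both parent theorems.
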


\begin{kaobox}[frametitle= $\beta$-swap - the thermodynamic permutation]

If a Gibbs-preserving matrix exists that connects $\v p$ to $\v q$, we say that there is a \emph{thermal process} between these two distributions.

Consider a qubit described by a Hamiltonian $H = E\ketbra{E}{E}$, and a thermal bath at inverse temperature $\beta$. Assuming that the qubit is in an energy incoherent state $\v p = (p,1-p)$, one can ask what is the set of achievable states from $\v p$ via thermal operations. 

This question can be easily answered using thermomajorisation relations. The key point is that the set of achievable state is given by a convex combination between the initial state and the state $\v p^{\pi^{\beta}_{12}} = (1-pe^{-\beta E}, pe^{-\beta E})$. Importantly, note that the state $\v p^{\pi_{12}}$ becomes a swap when $\beta = 0$. A direct swap of population between two energy levels cannot be achieved by thermal operations since the associated transition matrix is not Gibbs-stochastic. However, such a transformation that takes $\v p$ into $\v p^{\pi_{12}}$ is called $\beta$-swap and represents the thermodynamic analogue of a swap~\cite{lostaglio2018elementary}. In this example, such a transformation is accomplished by the following matrix:
\begin{equation}
 \Pi^{\beta}_{12} = \begin{pmatrix}
1-e^{-\beta E} & 1 \\
e^{-\beta E} & 0 \\
\end{pmatrix}  . 
\end{equation}
\end{kaobox}

The question raised in the above text box will be the central topic of the next chapter. Specifically, we will complement such a question by also asking the converse: \emph{What is the set of states that can be achieved from a given initial state via thermal operations, and what is the set of states that are incomparable with the initial state?} 

\begin{kaobox}[frametitle= Infinite temperature and the majorisation order]
In the infinite temperature case, $\beta = 0$, the thermal Gibbs state $\v \gamma$ is described by a uniform distribution
\begin{equation}
    \label{eq_uniform-state}
    \v \eta = \frac{1}{d}(1, ..., 1).
\end{equation}
Theorem~\ref{thm_Thermaloperations} implies that a state $\v{p}$ can be mapped to $\v{q}$ if and only if there exists a bistochastic matrix $\Lambda^0$, such that $\Lambda^0 \v{\eta}=\v{\eta}$, which transforms $\v{p}$ into $\v{q}$. In Chapter~\ref{C:mathematical_preliminaries}, we saw that there exists a bistochastic matrix $\Lambda$, $\Lambda^0 \v \eta=\v \eta$, mapping $\v{p}$ to $\v{q}$ if and only if $\v{p} \succ \v{q}$. 
\end{kaobox}

\subsection{Single-shot, intermediate regime and asymptotic case}

The scenario discussed so far is known as the \emph{single-shot} regime, where our sole objective is to ascertain when it is possible to transform one state into another using thermal operations. Therefore, our interest lies in individual instances of the protocol, and nothing has been mentioned about many runs, averages, or the asymptotic case. Thus, the natural question consists of having access to arbitrarily many copies of the initial state and asking whether it is possible to transform instances of one state to another with asymptotically vanishing error $\epsilon$:
\begin{equation}\label{Eq:optimal-rate}
    \rho^{\otimes N}  \xrightarrow[N\to \infty \:,\: \: M\to \infty\: \text{and}\: \epsilon \to 0]{\mathcal{E}^{\beta}} \sigma^{\otimes M} .
\end{equation}
The figure of merit when processing many subsystems is the conversion rate $R$, defined as the ratio between the number of output and input states 
\begin{equation}
    R = \frac{M}{N}.
\end{equation}
This conversion rate quantifies the efficiency of generating output states while processing multiple copies of the input state. Thus, we are interested in determining the optimal rate satisfying~\eqref{Eq:optimal-rate}. It was found that, for initial and target energy-incoherent states, the rate is given by the ratio of non-equilibrium free energies~\cite{brandao2013resource}
\begin{equation}\label{Eq:rate-asymptotic}
    R = \frac{D(\v p \| \v \gamma)}{D(\v q \| \v \gamma)},
\end{equation}
where $\v p$ and $\v q$ are the initial and target distributions. Hence, in this regime, all transformations become fully reversible and, therefore, there is no dissipation~[see Fig.~\hyperref[Fig:finite-size-rate]{\ref{Fig:finite-size-rate}a}]. This result bears a striking resemblance to that obtained in pure state entanglement theory, where the optimal interconversion rate is determined by the ratio of the entanglement entropies of the initial and target states~\cite{PhysRevA.53.2046}.

However, for finite $N$, this results no longer hold, and transformations become irreversible. The \emph{intermediate regime} has been thoroughly explored in the works~\cite{Chubb2018beyondthermodynamic, Chubb2019_2, PhysRevE.105.054127}, where there authors established a connection between the extreme case of single-shot thermodynamics with $N = 1$ and the asymptotic limit of $N \to \infty$. In particular, Eq.~\eqref{Eq:rate-asymptotic} gets a correction, which, up to second order asymptotics, is given by \begin{equation}
R(N, \epsilon) \simeq \frac{D(\v{p}\|\v{\gamma})}{D(\v{q}\|\v{\gamma})} \left( 1 + \sqrt{\frac{V(\v{p}\|\v{\gamma})}{N\, D(\v{p}\|\v{\gamma})^2}}\, Z_{1/\nu}^{-1}(\epsilon) \right),
\end{equation}
where $Z^{-1}_{\nu}$ is the inverse of the cumulative function of Rayleigh-normal distribution $Z_{\nu}$ introduced in Ref.~\cite{kumagai2016second} with $\nu$ given by
\begin{align}
\nu = \frac{V(\rho\|\gamma)/D(\rho\|\gamma)}{V(\sigma\|\gamma)/D(\sigma\|\gamma)}  \,,
\end{align}
and $\simeq$ denotes equality up to terms of order \mbox{$o(1/\sqrt{N})$}. See Fig.~\ref{Fig:finite-size-rate} for a schematic representation of the  asymptotic and intermediate regimes.
\begin{marginfigure}[-9cm]
	\includegraphics[width=4.756cm]{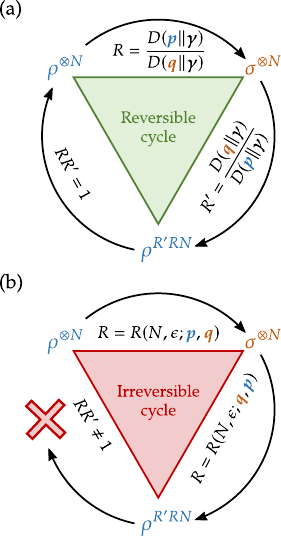}
	\caption{\emph{Intermediate regime}. (a) In the asymptotic limit, $N \to \infty$, the optimal conversion rate $R$ from $\rho^{\otimes N}$ (with eigenvalues $\v p$) to $\sigma^{\otimes RN}$ (with eigenvalues $\v q$) is equal to the inverse of the conversion rate from $\sigma$ to $\rho$. 
    (b) Finite $N$ corrections to conversion rates lead to irreversibility. Adapted with permission from Reference~\cite{Chubb2018beyondthermodynamic}}
	\label{Fig:finite-size-rate}
\end{marginfigure}
Such results enable the study of how thermodynamic cycles are influenced by irreversibility resulting from finite-size effects. For instance, the performance of a heat engine is affected when one of the heat baths it operates between is finite in size~\cite{Chubb2018beyondthermodynamic}. In Chapter~\ref{C:finite-size}, we will explore this regime and prove a relation between the amount of free energy dissipated in such processes and the free energy fluctuations of the initial state of the system.

\section{Applications}

The resource-theoretic approach allows us for a rigorous study of important thermodynamic protocols, such as \emph{work extraction}, \emph{Landauer erasure}, \emph{thermodynamically-free communication}, among others. In this section, we review the three mentioned ones. 

\newpage
\subsection{Work extraction}~\label{Subsec:work-extraction}

Upon moving from the macroscopic to the microscopic regime, the concept of deterministically extracting work from a single system becomes blurred due to large fluctuations. While it has been demonstrated that the second law of thermodynamics, as a statement regarding average work, remains valid at the scale of a few quanta~\cite{horodecki2013fundamental}, the precise definition of work in this context remains an open question. Different definitions suit different contexts, and a universally applicable ruling for all scenarios is still lacking.

Generally, definitions of work rely on controlling and changing external parameters that determine the system's Hamiltonian. The standard dynamical approach is based on the assumption that the average energy of a system varies over time due to energy exchanges occurring in two distinct qualitative ways~\cite{Alicki_1979,kosloff2013quantum}\sidenote{Let $\rho(t)$ be a quantum system with Hamiltonian $H(t)$ and average energy $U(t) = \tr[\rho(t) H(t)]$. One can define work as
\begin{equation*}
    W = -\int_{0}^{\tau} \textrm{dt} \, \tr[\rho(t)\qty(\frac{d H(t)}{dt})],
\end{equation*}
while the remaining energy is associated to heat,
\begin{equation*}
    Q = -\int_{0}^{\tau} \textrm{dt} \, \tr[\qty(\frac{d \rho(t)}{dt}) H(t)].
\end{equation*}
Here we adopt the convention that work is extracted from the system, while heat is absorbed, which explains the presence of the minus sign.}: (i) work-like energy associated with the variation of external parameters that can be controlled, and (ii) heat-like energy associated with non-controlled energy exchanges between the system and its surroundings. This approach is particularly powerful for analysing the behaviour of quantum heat engines. A second, and complementary, approach is based on the assumption that  work is a random variable, and a controlled Hamiltonian evolution is employed to determine the statistics of work~\cite{PhysRevE.90.032137,PhysRevE.92.042150,PhysRevE.93.022131}. The statistics of work play a central role in the study of the nonequilibrium thermodynamics of small systems both classical and quantum. Indeed, the diversity of approaches for constructing the work distribution gives rise to different proposals~\cite{ Miller2017,PhysRevE.93.022131, PhysRevLett.118.070601,PhysRevLett.123.230603}. 

Here, we will adopt a different approach that distinguishes itself from the standard methods in two ways. First, we avoid using any external agents, which means Hamiltonian changes will not be employed. Second, instead of examining the average and higher moments of the work distribution, we concentrate on the so-called \emph{single-shot regime}. This means our interest lies in individual instances of the protocol rather than repeated trials to characterise the statistics.

More precisely, work is defined in terms of a composite system undergoing a thermal operation. The setup involves the main system and an auxiliary system, a battery $B$, where energy can be reliably deposited and extracted in a controlled manner. Without loss of generality, we consider the battery to be a two-level system described by a Hamiltonian $H_B$ with eigenstates $\ket{0}_B$ and $\ket{1}_B$ corresponding to energies $0$ and $W_{\text{ext}}$, respectively\sidenote{The battery system can be described by either a continuous Hamiltonian or a Hamiltonian with a discrete spectrum, as long as its energy differences align with the amount of work we intend to extract.}. Assuming that the battery is initially prepared in the ground state, the work extraction protocol is equivalent to the existence of a thermal operation, whose effect consists of thermalising the main system, while exciting the battery (see Fig.~\ref{Fig:work-extraction})
\begin{equation}~\label{Eq-work-extraction}
    \E^{\beta}(\rho \otimes \ketbra{0}{0}_B) = \gamma \otimes \ketbra{1}{1}_B \,. 
\end{equation}

\begin{figure}[t]
\includegraphics[width=10.553cm]{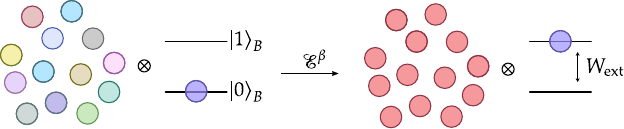}
	\caption{\emph{Work extraction processes.} Out-of-equilibrium system depicted by a collection of circles, each representing a distinct state and Hamiltonian, along with a battery system initially in the ground state. The composite system undergoes a thermal operation, which thermalises the main system while leaving the battery in an excited state. The extracted work is determined by the difference in energy of the battery. }
	\label{Fig:work-extraction}
\end{figure}
The maximum amount of work that can be extracted from $\rho$ is determined by the maximum value of $W_{\text{ext}}$\sidenote{The optimal $W$ is also called \emph{work of distillation}.} for which the aforementioned thermal operation exists. If, Eq.~\eqref{Eq-work-extraction} is strictly satisfied, then we say that we have achieved a \emph{deterministic work extraction}. 

\newpage
\begin{kaobox}[frametitle= No deterministic work extraction from mixed energy eigenstates]
Consider a composite system composed of a two-level system prepared in an energy-incoherent state $\v{p} = (p, 1-p)$ and a battery in the ground state. Assuming that the two-level system and the battery are described by the Hamiltonians $H = E\ketbra{1}{1}$ and $H_B = W\ketbra{1}{1}$, respectively; we ask for the maximum value of $W$, such that the following transformation is satisfied
\begin{equation}
    \v p \otimes (1,0) \xrightarrow[]{\mathcal{E}^{\beta}} \v \gamma \otimes (0,1).
\end{equation}
To answer this question, we first need to plot the thermomajorisation curves of the initial and target states. Next, note that both initial and target distributions reach a height of 1 at the $x$-axis positions $x = Z$ (initial state) and $x = Ze^{-\beta W}$ (target state), respectively. We can then observe that the optimal value of W occurs when $Z = Z e^{-\beta W}$. This equation reveals that this condition is satisfied when $W = 0$. Therefore, in a single-shot regime, deterministic work extraction from a convex combination of energy-eigenstates is not possible. However, notice that when either $p = 1$ or $p = 0$, deterministally one can extract $W = \frac{1}{\beta}\log Z$ and $W = E+\frac{1}{\beta}\log Z$, respectively. 
\end{kaobox}

If we relax the constraint that work must be extracted deterministically and instead allow for a probability of failure $\epsilon$, we enter the scenario known as $\epsilon$-\emph{deterministic work extraction}. The failure is modeled by assuming that the battery ends up in a state $\epsilon$ close to the excited state, i.e., $\v p_{B}=(\epsilon,1-\epsilon)$. 

While the main focus of the discussion is on a single subsystem, in Chapter~\ref{C:finite-size}, we will explore the scenario of extracting work from an initial system composed of asymptotically many non-interacting subsystems. These subsystems can either be energy-incoherent and non-identical, existing in different states with different Hamiltonians, or they can be pure and identical. In particular, this scenario allows one to treat the battery system as a part of the main system by selecting one specific subsystem to be in a pure ground state~(see Fig.~\ref{Fig:work-extraction}). 

\subsection{Information erasure}\label{subsec:information-erasure}

The connection between information and thermodynamics is as old as the thermodynamic theory itself, stretching back to the thought experiment known as Maxwell’s Demon to his exorcism. At the core of this connection lies the principle that processing information must adhere to the laws of thermodynamics. More precisely, erasing information has a fundamental energetic cost of $1/\beta\log 2$. 

\begin{figure}[t]
\includegraphics[width=10.553cm]{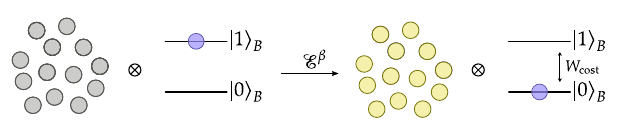}
	\caption{\emph{Information erasure.} The $N$ bits of information to be erased are represented by $N$ subsystems in a state $\rho^N$ with a trivial Hamiltonian $H^N$. The process is performed by attaching a battery system $B$ in an excited state $\ket{1}_B$ with energy $W^N_{\text {cost}}$, which measures the energetic cost of erasure. The erasure process resets the state $\rho^N$ to a fixed state $\ket{0}^{\otimes N}$, and de-excites the battery system. }
	\label{Fig:erasure}
\end{figure}

Similarly to the case of work extraction, the erasure process can also be formulated as a particular type of thermodynamic process. The setting comprises a bit of information, represented by a two-level system in a state $\rho$ with trivial Hamiltonian $H=0$, and a two-level battery system initially in an excited state $\ket{1}_B$ of energy $W_{\text{cost}}$, which measures the energetic cost of erasure. The processes involves resetting the state $\rho$ to a fixed state $\ket{0}$, while investing an amount of work $W_{\text{cost}}$. This is equivalent to the existence of the following thermal processes:
\begin{equation}
    \E^{\beta}(\rho \otimes \ketbra{1}{1}_B) = \ketbra{0}{0}\otimes\ketbra{0}{0}_B \,,    
\end{equation}
with the initial and target Hamiltonians being identical. If one wants to erase $N$ bits of information, then the initial state is replaced by $N$ two-level systems. The problem becomes more interesting when we allow a probability of failure $\epsilon$ and ask about the contribution arising from finite-size effects in the erasure processes. This problem will be tackle in Chapter~\ref{C:finite-size}.

\begin{kaobox}[frametitle= Landauer's principle]
Consider a composite system composed of a two-level system prepared in a mixed state $\v{p} = (1/2, 1/2)$ and a battery in the excited state. Assuming that the two-level system and the battery are described by the Hamiltonians $H = 0$ and $H_B = W_{\text{cost}}\ketbra{1}{1}$, respectively; we ask for the maximum value of $W$, such that the following transformation is satisfied
\begin{equation}
    \v p \otimes (0,1) \xrightarrow[]{\mathcal{E}^{\beta}} (1,0) \otimes (1,0).
\end{equation}
As before, this question is answered by drawing the thermomajorisation curves of the initial and target state. Then, by comparing the $x$-axis positions for when both curves reach a height of 1, we conclude that the optimal value of $W$, or the work cost, is given by $W_{\text{cost}} = 1/\beta \log 2$, which is precisely the fundamental amount required for erasing one bit of information. 
\end{kaobox}

 \subsection{Thermodynamically-free communication}
\label{sec:encoding-def}

Since thermodynamics is closely linked with information processing, one can also study thermodynamic aspects of communication. A traditional communication scenario in which Alice wants to encode and transmit classical information to Bob over a quantum channel consists of the following three steps~\cite{wilde2013quantum}. First, she encodes a message \mbox{$m \in \{1, ..., M \}$} by preparing a quantum system in a state $\rho_m$. Then, she sends it to Bob via a noisy quantum channel~$\mathcal{N}$. Finally, Bob decodes the original message by performing an optimal measurement on $\mathcal{N}(\rho_m)$. Crucially, in this standard scenario, both Alice and Bob are completely unconstrained, meaning that they can employ all encodings and decodings for free, and the only thing beyond their control is the noisy channel~$\N$.

Recently, a modification of this scenario was introduced that allows one to quantify the thermodynamic cost of communication~\cite{narasimhachar2019quantifying,korzekwa2019encoding}. More precisely, it is assumed for simplicity that Alice and Bob are connected via a noiseless channel, and Bob's decoding is still unconstrained. However, Alice is constrained to \emph{thermodynamically-free encodings}, meaning that encoded states $\rho_m$ can only arise from thermal operations acting on a given initial state $\rho$, interpreted as an \emph{information carrier}. Physically, this means that Alice obeys the second law of thermodynamics, in the sense that the encoding channel is constrained to use no thermodynamic resources other than the ones initially present in the information carrier~$\rho$. We illustrate this process in Fig.~\ref{fig:encoding}.

\begin{marginfigure}[-4.4cm]
	\includegraphics[width=4.756cm]{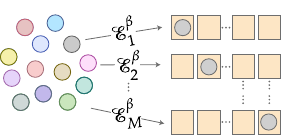}
	\caption{\emph{Thermodynamically-free encoding.} The thermal encoding of information can be captured by a thermodynamic process by considering $N$ independent subsystems in a state $\rho^N$ and with a Hamiltonian $H^N$ as an information carrier. The sender encodes a message \mbox{$m \in \{1, ..., M \}$} into it by applying a thermal operation $\E^{\beta}_{m}$ that transforms $\rho^N$ into mutually (almost) distinguishable states, and the receiver decodes the original message by performing a measurement on $\E^{\beta}_m(\rho^N)$.}
	\label{fig:encoding}
\end{marginfigure}

Now, the central question is: \emph{what is the optimal number of messages that can be encoded into $\rho$ in a thermodynamically-free way, so that the average decoding error is smaller than $\epsilon_{\mathrm{d}}$?} In Chapter~\ref{C:finite-size}, we will investigate the case when the information carrier is given by $N$ independent systems in a state $\rho^N$ and with a Hamiltonian $H^N$, as specified in Eq.~\eqref{eq:N-noninteracting}. Then, instead of asking for the optimal number of messages $M(\rho^N,\epsilon_{\mathrm{d}})$, we can equivalently ask for the optimal encoding rate~\cite{korzekwa2019encoding}:
\begin{equation}
\label{eq:optimal-encoding-rate}
    R(\rho^N, \epsilon_{\mathrm{d}}) := \frac{\log [M(\rho^N, \epsilon_{\mathrm{d}})]}{N} \,.   
\end{equation}

\section{The open quantum system approach to quantum thermodynamics}

Let us now contrast the resource-theoretic framework with the standard dynamical approach to quantum thermodynamics~\cite{kosloff2013quantum}. Specifically, this approach describes the continuous evolution of a quantum system over time using a differential equation known as a \emph{Markovian master equation}. Markovianity implies that no memory effects of the bath are taken into account. While memory effects can always be present in principle, several assumptions are made when modelling the joint system-bath dynamics. These assumptions, such as weak coupling, large bath size, and quickly decaying correlations, ensure the system's behaviour is described by a Lindblad master equation~\cite{kossakowski1972quantum,gorini1976completely,lindblad1976generators}
\begin{equation}
	\label{Eq:master_equation}
	\frac{d \rho(t)}{dt}  = -i \qty[H, \rho(t)] + \mathcal{L}_t\qty[\rho(t)].
\end{equation}
In the above, $[\cdot,\cdot]$ denotes the commutator and $\mathcal{L}_t$ is the Lindbladian with the following general form~\cite{breuer2002theory}:
\begin{equation}
	\label{eq:lindbladian2}
\mathcal{L}_t(\rho) = \sum_{i} r_i(t) \left[ L_i(t) \rho L_i(t)^\dag - \frac{1}{2}\Bigl\{L_i(t)^\dag L_i(t), \rho\Bigl\}\, \!\right]\!,\!
\end{equation} 
with $\{\cdot,\cdot\}$ denoting the anticommutator, $L_i(t)$ being time-dependent jump operators, and $r_i(t)\geq 0$ being time-dependent non-negative jump rates. While a general Lindbladian only requires the rates $r_i$ to be non-negative, Lindbladians arising from the interaction of a quantum system with a large heat bath have two standard properties~\cite{breuer2002theory,kosloff2013quantum}:
\begin{enumerate}[label=(P\arabic*)]
	\item\label{Lp1}\textbf{Steady state.} The Gibbs thermal state of the system, 
	\begin{equation}
		\gamma = \frac{e^{-\beta H}}{\tr(e^{-\beta H})},
	\end{equation}
	is a stationary solution of the dynamics, i.e.,
	\begin{equation}
		\forall t: \quad \mathcal{L}_t\gamma = 0. 
	\end{equation} 
	\item\label{Lp2}\textbf{Covariance.} The Lindbladian $\mathcal{L}_t$ commutes with the generator of the Hamiltonian dynamics $\mathcal{H}$ at all times $t$, i.e.,
	\begin{equation}
		\forall \rho:\quad\mathcal{L}_t[\mathcal{H}(\rho)] = \mathcal{H}[\mathcal{L}_t(\rho)].
	\end{equation}
\end{enumerate}

Quantum dynamics, as governed by master equations of the form in Eq.~\eqref{Eq:master_equation} and satisfying to properties~\ref{Lp1}~and~\ref{Lp2}, are termed Markovian thermal processes~\cite{lostaglio2021continuous}. They can be formally defined as:

\begin{definition}[Markovian thermal processes]
A channel is a Markovian thermal process (MTP) if it results from integrating a Markovian master equation, Eq.~\eqref{Eq:master_equation}, between time $0$ and $t_f \in [0, +\infty]$, where the Lindbladian $\mathcal{L}_t$ satisfies properties~\ref{Lp1}-\ref{Lp2}.
\end{definition}

Typically, such an approach is model-dependent. For instance, one might assume the dynamics of a quantum dot coupled to a fermionic bath~\cite{schaller2014open}, or a qubit interacting with an electromagnetic field~\cite{carmichael1999statistical}. The primary goal consists of deriving a specific master equation and subsequently solving for the associated dynamics. However, as before, we can ask whether it is possible to characterise the necessary and sufficient conditions for the existence of a Markovian thermal process between specific initial and final energy distributions of the system in a model-independent way. This problem was recently tackled in Ref.~\cite{lostaglio2021continuous,Korzekwa2022}. The authors introduced a hybrid framework that leverages the information theory tools discussed in this chapter, complementing the toolkit offered by the master equation formalism. Remarkably, it was found that the existence of a Markovian thermal process is linked to the notion of continuous thermomajorisation via the following theorem~\cite{lostaglio2021continuous}:
\begin{theorem}[Thermodynamic interconversion]
	\label{thm:main}
 	There exists a Markovian thermal process mapping $\v p(0)$ into $\v p(t_f)$ if and only if $\v p$ continously thermomajorises $\v p(t_f)$:
 \begin{equation}
 		\label{eq:continuousmajorization}
 		\v{p} \stackrel{ \textrm{MTP}}{\longmapsto} 	\v{p}(t_f) \Longleftrightarrow \v{p} \ggcurly_{\v{\beta}} \v{p}(t_f).
 	\end{equation}
\end{theorem}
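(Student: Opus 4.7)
The plan for the forward direction is to reduce the MTP dynamics to its action on populations and invoke Theorem~\ref{Thm:HLPbeta}. Given an MTP arising from integrating $\mathcal{L}_t$ on $[0,t_f]$, properties~\ref{Lp1}-\ref{Lp2} ensure two things. Covariance implies that the channel preserves the block decomposition into energy eigenspaces, so diagonal inputs yield diagonal outputs; thus the restriction to populations gives a classical dynamics governed by some rate matrix $L(t)$. Preservation of $\gamma$ then implies $L(t)\v{\gamma} = \v{0}$, so the two-time propagator $\Lambda^{\beta}(t_2,t_1) := \mathcal{T}\exp\!\bigl[\int_{t_1}^{t_2}L(s)\,ds\bigr]$ is a Gibbs-preserving stochastic matrix for every $0 \le t_1 \le t_2 \le t_f$. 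Defining $\v{r}(t) := \Lambda^{\beta}(t,0)\,\v{p}$, one gets $\v{r}(0) = \v{p}$ and $\v{r}(t_f) = \v{p}(t_f)$; continuity in $t$ follows from the continuity of the integrated flow; and by Theorem~\ref{Thm:HLPbeta} applied to $\Lambda^{\beta}(t_2,t_1)$, $\v{r}(t_1) \succ_{\beta} \v{r}(t_2)$. Hence $\v{r}(t)$ is a thermomajorising trajectory, so $\v{p} \ggcurly_{\beta} \v{p}(t_f)$.

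The backward direction is constructive. Given a thermomajorising trajectory $\v{r}(t)$ connecting $\v{p}$ to $\v{p}(t_f)$, the aim is to exhibit a Lindbladian $\mathcal{L}_t$ satisfying properties~\ref{Lp1}-\ref{Lp2} whose induced MTP realises the trajectory. First, partition $[0,t_f]$ into (generically finitely many) subintervals on which the $\beta$-ordering of $\v{r}(t)$ is constant; within each such subinterval the trajectory lives in a single Weyl chamber, where the simpler proposition stated in the excerpt (and proved in Appendix~A of Ref.~\cite{Lostaglio2019}) says that thermomajorisation between any two points is \emph{continuous} thermomajorisation, with an explicit rate-matrix realisation. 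Concretely, on such a subinterval one can parametrise an ansatz of pairwise detailed-balance generators of the form $L_{ij}(t) = r_{ij}(t)$, $L_{ii}(t) = -\sum_{j \neq i} L_{ji}(t)$, with the constraint $\gamma_j r_{ij}(t) = \gamma_i r_{ji}(t)$, and solve $\dot{\v{r}}(t) = L(t)\v{r}(t)$ for non-negative rates; the thermomajorising property guarantees the required sign pattern on the population fluxes. Adjacent subintervals are glued together using continuity of $\v{r}(t)$. The classical generator is lifted to a Lindbladian by taking jump operators $L_{ij} = \sqrt{r_{ij}(t)}\,\ketbra{E_i}{E_j}$, which manifestly obey both~\ref{Lp1} and~\ref{Lp2}, so that the induced MTP reproduces $\v{r}(t)$ on diagonal inputs.

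The hard part will be two-fold. First, the explicit construction of non-negative rates inside a single chamber must be shown to exist for any absolutely continuous thermomajorising segment; the natural approach is to differentiate the thermomajorisation inequalities $\sum_{i\le k}r^{\beta}_i(t) \ge \sum_{i\le k}r^{\beta}_i(t+\delta)$ in $\delta$ at $\delta = 0^+$, obtaining entropy-production-like inequalities on the flux vector that are precisely the consistency conditions for non-negative detailed-balance rates. Second, non-generic trajectories — those tangent to or repeatedly crossing chamber boundaries — must be handled by a density/closure argument: the set of maps induced by MTPs is closed under uniform limits on compact time intervals, so approximating $\v{r}(t)$ by trajectories that cross chambers transversally at finitely many times, constructing MTPs for each approximant, and passing to the limit closes the argument and establishes the equivalence in Eq.~\eqref{eq:continuousmajorization}.
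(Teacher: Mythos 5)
Your forward direction is essentially correct and matches the standard argument: covariance restricts the dynamics on populations to a classical master equation, the Gibbs fixed point makes the two-time propagator $\Lambda^\beta(t_2,t_1)$ a Gibbs-preserving stochastic matrix, and invoking Theorem~\ref{Thm:HLPbeta} gives the thermomajorising trajectory $\v{r}(t) = \Lambda^\beta(t,0)\v{p}$. This is solid and you can keep it.

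The backward direction is where the proposal has genuine gaps, and it is also where your approach diverges significantly from the proof in Ref.~\cite{lostaglio2021continuous}. You propose to differentiate the thermomajorisation inequalities along the given trajectory $\v{r}(t)$ and solve $\dot{\v{r}}(t) = L(t)\v{r}(t)$ for non-negative detailed-balance rates inside each Weyl chamber, then glue. Several steps do not go through as stated. First, Definition~\ref{def:markov_majo} only requires continuity of $\v{r}(t)$, not differentiability, so the derivative $\dot{\v{r}}(t)$ and hence the ODE $\dot{\v{r}} = L\v{r}$ are not defined in general; you would need a regularisation lemma that is itself non-trivial. Second, and more seriously, the sentence \emph{``the thermomajorising property guarantees the required sign pattern on the population fluxes''} is precisely the crux of the theorem and cannot be asserted: one must show that the tangent cone of the thermomajorisation order at a point $\v{r}(t)$ coincides with the cone of velocities achievable by Gibbs-preserving rate matrices, and this is a non-obvious geometric fact, not a corollary of the inequalities. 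Third, the partition of $[0,t_f]$ into \emph{``(generically finitely many)''} subintervals of constant $\beta$-order is not available for a generic continuous trajectory, which may touch chamber walls on a set of times with accumulation points. Finally, the closure argument --- that the set of maps achievable by MTPs is closed so one can pass to limits of transverse approximants --- is plausible but not free; establishing closure of the set of (classical) maps obtainable by integrating a Gibbs-preserving Markovian generator is itself a statement that requires proof.

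The actual proof in Ref.~\cite{lostaglio2021continuous} sidesteps all of these issues by not attempting to realise the given trajectory $\v{r}(t)$ at all. Instead, it establishes the universality of elementary two-level partial thermalisations (Theorem~\ref{thm:universality} in this chapter): one shows that $\v{p} \ggcurly_\beta \v{q}$ holds if and only if there is a \emph{finite} sequence of the maps $T^{(ij)}(\lambda)$ from Eq.~\eqref{eq:elementarythermalization} taking $\v{p}$ to $\v{q}$. Each $T^{(ij)}(\lambda)$ is trivially an MTP, and a finite composition of MTPs is an MTP. This is a constructive and discrete argument: the trajectory actually realised by the sequence of thermalisations need not coincide with the original $\v{r}(t)$, only the endpoints match, and the reduction to finitely many conditions is exactly what makes the theorem effective. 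Your strategy of building a continuous-time generator is conceptually reasonable, but it is trying to prove a strictly harder statement (exact trajectory realisation) and would require substantial additional lemmas (a regularity/approximation lemma for thermomajorising curves, a tangent-cone characterisation, and closure of the MTP set) that the discrete-universality route avoids entirely.
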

The proof of this theorem can found in the Appendix A of Ref.~\cite{lostaglio2021continuous}. As a consequence, continuous thermomajorisation provides a complete set of constraints for the evolution of populations within the standard Markovian master equations approach. What is more, if $\v p \ggcurly_{\beta} \v p(t_f)$, there exists an universal set of controls that allows one to devise a thermalisation process that drives the system to
a final target state $\v p(t_f)$. These are a set of thermalisations acting only on two energy levels $(i, j)$ and is represented by the reset Markovian master equation
\begin{equation}
\frac{dp_i}{dt}=\frac{1}{\tau}\left[  \frac{\gamma_i}{\gamma_i+\gamma_j}(p_i+p_j)-p_i \right] \quad 	\text{with} \quad \frac{dp_j}{dt} = - 	\frac{dp_i}{dt},  
\end{equation}
whose solution describes an exponential relaxation to equilibrium:
\begin{equation}\label{Eq:thermalisation-reset}
	\v{p}^{(ij)}(t) = e^{-t/\tau} \v{p}^{(ij)}(0) + [p_i(0)+p_j(0)] (1-e^{-t/\tau}) \v{\gamma}^{(ij)}.
\end{equation} 
where $\v{p}^{(ij)}(t):=(p_i(t),p_j(t))$. The equation Eq.~\eqref{Eq:thermalisation-reset} can be represented in the form of a matrix equation as:
\begin{equation}
	\v{p}^{(ij)}(t) = T^{(ij)}(\lambda_t) \v{p}^{(ij)}(0),
\end{equation}
with $\lambda_t = 1- e^{-t/\tau}$ and
\begin{equation}
	\label{eq:elementarythermalization}
		T^{(ij)}(\lambda) =
		\begin{bmatrix}
			(1-\lambda) + \frac{  \lambda \gamma_i}{\gamma_i+\gamma_j} &	\lambda \frac{\gamma_i}{\gamma_i+\gamma_j} \\
			\lambda \frac{\gamma_j}{\gamma_i+\gamma_j} & 	(1-\lambda) +  \frac{\lambda \gamma_i}{\gamma_i+\gamma_j}
		\end{bmatrix}.
\end{equation}
Finally, the so-called \emph{elementary thermalisations} are a universal set of thermalisation controls:

\begin{theorem}[Universality of elementary thermalizations]\label{thm:universality}
	There exists a Markovian thermal process mapping $\v p(0)$ into $\v p(t_f)$ if only if there exists a finite sequence of elementary thermalisations such that 
	\begin{equation}
		\label{eq:plttrajectory}
		\v{p}(t_f)=T^{(i_f j_f)}(\lambda_f) \dots T^{(i_1 j_1)}(\lambda_1) \v{p}(0).
	\end{equation}
\end{theorem}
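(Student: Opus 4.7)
The plan is to prove the two implications separately, using Theorem~\ref{thm:main} as a bridge between Markovian thermal processes and continuous thermomajorisation.

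For the $(\Leftarrow)$ direction, I would first verify that each elementary thermalisation $T^{(ij)}(\lambda)$ is itself an MTP. By construction, $T^{(ij)}(\lambda)$ arises as the time-$t$ solution of the reset master equation acting only on levels $(i,j)$ with $\lambda=1-e^{-t/\tau}$; its generator is a time-independent Lindbladian of the form in Eq.~\eqref{eq:lindbladian2}, it stabilises $\v{\gamma}$, and it acts diagonally in the energy basis, so it commutes with the free Hamiltonian action. Properties (P1)--(P2) are therefore satisfied. Closure of the MTP class under finite composition follows by concatenating the individual generators into a single piecewise time-dependent Lindbladian, whose time-ordered evolution over the union of intervals realises the product on the right-hand side of Eq.~\eqref{eq:plttrajectory}.

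For the harder $(\Rightarrow)$ direction, assume that some MTP maps $\v{p}(0)$ to $\v{p}(t_f)$. By Theorem~\ref{thm:main} this gives $\v{p}(0)\ggcurly_\beta \v{p}(t_f)$, together with an interpolating continuous thermomajorising trajectory $\v{r}(t)$. My strategy is to constructively replace this continuous trajectory by a finite product of $T^{(ij)}(\lambda)$'s. I would first partition $[0,t_f]$ according to the $\beta$-ordering $\pi_{\v{r}(t)}$, cutting the path into segments that each lie in a single Weyl chamber, separated by boundary points at which two ratios $r_i/\gamma_i$, $r_j/\gamma_j$ coincide. Inside a chamber the $\beta$-ordering is fixed, so by the embedding lemma the relation $\v{r}(t_k)\succ_\beta \v{r}(t_{k+1})$ reduces to majorisation of the embedded vectors with a common order; a Hardy--Littlewood--P\'olya-style T-transform decomposition on consecutive pairs of levels then yields a finite sequence of thermalisations that, in the original space, are exactly elementary thermalisations $T^{(ij)}(\lambda)$ reproducing the segment endpoint. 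At each boundary one has $r_i/\gamma_i = r_j/\gamma_j$, hence $T^{(ij)}(\lambda)\v{r}=\v{r}$ for every $\lambda$, so the construction transitions seamlessly into the next chamber by merely relabelling the $\beta$-order.

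The main obstacle I anticipate is controlling the \emph{global} finiteness of the decomposition. Per-chamber, the classical T-transform theorem bounds the number of elementary thermalisations by $d-1$; the delicate point is showing that only finitely many chamber transitions are needed. The key lemma to prove is a pruning statement: any continuous thermomajorising trajectory between fixed endpoints can be deformed into one that visits each Weyl chamber at most once (and thus has at most $d!$ segments), because whenever the trajectory re-enters a previously visited chamber one can shortcut using elementary thermalisations without leaving the continuous thermomajorisation future of $\v{p}(0)$. Combining this bound with the per-chamber T-transform decomposition delivers the finite product asserted in Eq.~\eqref{eq:plttrajectory} and closes the reverse implication.
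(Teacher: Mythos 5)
Your $(\Leftarrow)$ direction is fine, and the overall architecture of your $(\Rightarrow)$ direction — pass through Theorem~\ref{thm:main} to get a continuous thermomajorising trajectory, cut it at Weyl-chamber boundaries, handle each fixed-$\beta$-order segment separately, and control the number of chamber visits by a shortcut/pruning argument (which indeed works at the level of trajectories, using transitivity of $\succ_\beta$ together with the fact that same-$\beta$-order thermomajorisation implies continuous thermomajorisation) — is reasonable and close in spirit to the constructive proof the thesis defers to (Appendix~A of the cited reference); the thesis itself does not reproduce that proof.

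The genuine gap is the per-chamber step. You claim that, within a fixed $\beta$-order, a Hardy--Littlewood--P\'olya-style T-transform decomposition ``on consecutive pairs of levels'' yields \emph{exactly} elementary thermalisations $T^{(ij)}(\lambda)$, with the classical $d-1$ bound. This conflates two different classes of maps. An HLP T-transform (or, after embedding, a general two-level Gibbs-stochastic matrix) is allowed to push the selected pair \emph{past} its Gibbs ratio — in the thermal picture these overshooting maps are precisely the partial $\beta$-swaps, which are the paradigmatic \emph{non}-Markovian operations excluded by this theorem (this is the whole point of Chapter~\ref{C:memory-MTP}). An elementary thermalisation $T^{(ij)}(\lambda)$ with $\lambda\in[0,1]$ can only move the pair monotonically towards its two-level Gibbs point, never beyond it, so the HLP decomposition does not restrict to the allowed class; moreover single passes over pairs adjacent in the $\beta$-order generically fail (e.g.\ at $\beta=0$, $\v{p}=(0.6,0.3,0.1)\succ\v{q}=(0.4,0.35,0.25)$ cannot be realised by one sweep of neighbour averagings, although a two-step protocol using the non-neighbouring pair $(1,3)$ exists), so neither the ``consecutive pairs'' restriction nor the $d-1$ bound survives. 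The statement you actually need — that $\v{p}\succ_\beta\v{q}$ with a common $\beta$-order implies $\v{q}$ is an \emph{exact} finite product of partial thermalisations applied to $\v{p}$ — is the technical heart of the theorem and requires its own constructive argument (an explicit algorithm lowering the elbows of the thermomajorisation curve with a controlled number of steps), not an appeal to Theorem~\ref{Thm:HLP}; in addition, arbitrary T-transforms in the embedded space are not physically realisable, only those respecting the embedding boxes are, so the embedding reduction cannot supply this lemma for free. Since your pruning lemma also invokes this per-chamber decomposition to realise the final protocol, the finiteness claim of Eq.~\eqref{eq:plttrajectory} remains unproven as written.
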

For the proof, see Appendix~A of Ref.~\cite{lostaglio2021continuous}. This result significantly simplifies the set of controls needed to generate the transformations achievable by the most general Markovian thermal process.

\section{Concluding remarks}
In this chapter, we introduced the resource-theoretic framework for studying thermodynamic transformations of quantum systems. We began by defining the set of thermal operations, which encapsulate all possible transformations that can occur without requiring resources beyond a single heat bath. Focusing specifically on energy-incoherent states, we identified the necessary and sufficient conditions that determine the existence of thermal operations between a pair of states. Intriguingly, the rules describing which state transformations are possible under thermal operations can be expressed as a partial-order relation between probability vectors corresponding to the initial and final states. In the infinite-temperature limit, these rules are captured by the majorisation relation, while in the finite temperature scenario, they are represented by thermomajorisation. Additionally, we demonstrated how this framework can be employed to analyse thermodynamic protocols such as work extraction, information erasure, and thermodynamically-free communication. We wrapped up the chapter by delving into a hybrid framework linking the resource-theoretic approach with the standard dynamical approach to quantum thermodynamics. Notably, the conditions needed for the existence of a Markovian thermal process that between a given initial and final energy distributions of the system can also be framed using the notion of partial-order. In this case, the concept continuous thermomajorisation between these states. Even more striking is the fact that this framework is constructive, returning explicit protocols for realising any possible Markovian transformation. These protocols are built upon elementary thermalisations, each acting on only two energy levels of the system, which have been proven to be universal controls.

Finally, one question we have not addressed is Which bath Hamiltonians matter for thermal operations? This question was answered in Ref.~\cite{vonEndebath}, where the author introduced the concept of Hamiltonians with a resonant spectrum relative to a reference one. It was then demonstrated that, in defining thermal operations, it suffices to consider only those bath Hamiltonians that satisfy this resonance property.

\pagelayout{wide} 
\addpart{Original results}
\pagelayout{margin} 
\chapter{Geometric structure of thermal cones}\label{C:thermal_cones}

Thermodynamic evolution of physical systems obeys a fundamental asymmetry imposed by nature. Known as the thermodynamic arrow of time~\cite{eddington1928nature}, it is a direct manifestation of the second law of thermodynamics, which states that the entropy of an isolated system cannot decrease~\cite{fermi, callen1985thermodynamics}. In other words, the thermodynamic evolution inherently distinguishes the past from the future: systems spontaneously evolve to future equilibrium states, but do not spontaneously evolve away from them. Even though recognition of the thermodynamic arrow of time is an old discussion~\cite{boltzmann1895certain,zermelo1896satz}, it still raises deep questions relevant both to philosophy and the foundations of physics~\cite{prigogine2000arrow, price2004origins}. Despite many attempts, the full understanding of the time asymmetry in thermodynamics seems to be still beyond our reach. 

The theoretical toolkit discussed in Chapter~\ref{C:resource_theory_of_thermodynamics} offers a comprehensive approach that allows us to reexamine previous inquiries, such as the nature of the thermodynamic arrow of time. An investigation from the viewpoint of order theory was initially performed in Ref.~\cite{Korzekwa2017}, where this issue was explored. However, the analysis mainly focused on structural differences between classical and quantum theories in contrast to the geometric aspects of thermal cones that we investigate here.

The aim of this chapter is to characterise the thermodynamic arrow of time by investigating the allowed transformations between energy-incoherent states that arise from the most general energy-conserving interaction between the system and a thermal bath. These transformations encode the structure of the thermodynamic arrow of time by telling us which states can be reached from a given state, which we refer to as the \emph{present state}, in accordance with the laws of thermodynamics. Under these constraints, the state space can then be naturally decomposed into three parts: the set of states to which the present state can evolve is called the \emph{future thermal cone}; the set of states that can evolve to the present state is called the \emph{past thermal cone}; while states that are neither in the past nor the future thermal cone form the \emph{incomparable region}.

While studying the future thermal cone has yielded substantial insights~\cite{Lostaglio2018elementarythermal,mazurek2018decomposability,mazurek2019thermal,e26020106,EndeExp}, the explicit characterisation of the incomparable region and the past thermal cone has not been performed. The core of this chapter relies on two main theorems that address this gap. The first theorem provides a geometric characterisation of the past majorisation cone, which is the set of probability distributions that majorises a given distribution. This, together with the incomparable region and future thermal cone, fully specifies the time-like ordering in the probability simplex in the limit of infinite temperature. The second result generalises the first one to the case of finite temperatures. 

Earlier works have established that the future (thermal) cone is convex~\cite{bhatia1996matrix,Lostaglio2018elementarythermal}. The results presented here extend this knowledge by demonstrating that the past (thermal) cone of a $d$-dimensional system can always be decomposed into $d!$ convex parts. Furthermore, in the zero-temperature limit, only one of these convex parts retains a non-zero volume, resulting in the entire past thermal cone becoming convex. Additionally, new thermodynamic monotones, based on the volume of the thermal cones, are introduced. 

The results discussed here can also be seen as an extension of the famous Hardy-Littlewood-P{\'o}lya theorem~\cite{hardy1952inequalities}, as they specify the past cone and the incomparable region in addition to the previously studied future cone. Therefore, they can also be employed to study other majorisation-based resource theories, such as the theory of entanglement~\cite{nielsen1999conditions} or coherence~\cite{Plenio2014, Du2015}. Concerning local operations, an analogy between special relativity and the set of pure states of bipartite systems was previously made in Ref.~\cite{zyczkowski2002}, where the authors correspondingly divided the state space into three parts. Here, we consider a more general partial-order structure, the thermomajorisation order, which generalises the previous and recovers it in the limit of infinite temperature.

This chapter starts by stating the main results concerning the construction of majorisation cones and discussing their interpretation within the thermodynamic setting, and in other majorisation-based theories. This construction is also generalised for probabilistic transformations. Next, the majorisation cones results are generalised to thermal cones, generated by thermomajorisation relation, where we also introduce the tool of embedding lattice, instrumental for the proof of the second main result. Then, we introduce thermodynamic monotones given by the volumes of the past and future thermal cone, discuss their intuitive operational interpretation and describe their properties. We also comment on the different natures of future and past cones for entanglement transformations. The technical derivation of the main results can be found in Section~\ref{sec_appendix}. 

\section{Majorisation cones}~\label{Sec:majorisation-cones}

The reachability of states under bistochastic matrices can be studied by introducing the notion of \emph{majorisation cones}, defined as follows:

\begin{definition}[Majorisation cones]\label{Def:majorisation_cones}
\begin{marginfigure}[-0.952cm]
	\includegraphics[width=4.651cm]{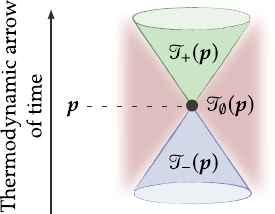}
	\caption[Fig:thermal-cones]{\emph{Thermal cones.} The thermodynamic arrow of time induces a time-like ordering that can be decomposed into past, incomparable, and future regions}
	\label{Fig:thermal-cones}
\end{marginfigure}
The set of states that a probability vector $\v p$ can be mapped to by bistochastic matrices is called the \textbf{future cone} $\T_+(\v{p})$. The set of states that can be mapped to $\v p$ by bistochastic matrices is called the \textbf{past cone} $\T_-(\v{p})$. The set of states that are neither in the past nor in the future cone of $\v p$ is called the \textbf{incomparable region} $\T_{\emptyset}(\v{p})$.
\end{definition}

Definition~\ref{Def:majorisation_cones} provides a framework for analysing the thermodynamics of energy-incoherent states in the infinite-temperature limit. Additionally, it allows us to study a broader class of state transformations that are governed by majorisation relations, such as those that arise in the resource theories of entanglement~\cite{nielsen1999conditions} or coherence~\cite{Plenio2014,Du2015, Streltsov2016,Streltsov2017}. We discuss these more general settings in detail in Sec.~\ref{Subsec:other-resource-theories}; however, for now, we will focus on thermodynamic transformations.

The future cone can be characterised using Birkhoff's Theorem~\ref{Thm:Birkhoff}, which states that any bistochastic matrix can be expressed as a convex combination of permutation matrices. Since there are $d!$ permutations in $\mathcal{S}_d$, the set of $d \times d$ bistochastic matrices forms a convex polytope with $d!$ vertices. Combining this result with Theorem~\ref{Thm:HLP}, we obtain the future cone of $\v p$:
\begin{corollary}[Future cone]~\label{Thm:Future_majoristion_cone} For a $d$-dimensional probability vector $\v{p}$, its future cone is given by
\begin{equation}
    \label{eq-futurecone}
    \T_{+}(\v p) = \operatorname{conv}\left[\left\{\Pi \v p \, , \mathcal{S}_d \ni \v \pi \mapsto \Pi  \right\}\right] ,
\end{equation}
where $\Pi$ denotes a permutation matrix corresponding to the permutation $\v \pi$ with $d$ elements, and $\operatorname{conv[S]}$ the convex hull of the set $S$.
\begin{marginfigure}[-3.364cm]
	\includegraphics[width=4.859cm]{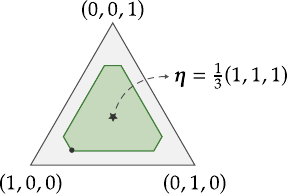}
	\caption[Fig:future-thermal-cone]{\emph{Achievability}. Future cone for a state with population $\v p = (0.7,0.2,0.1)$. Extreme points corresponds to all permutations of $\v p$.}
	\label{Fig:future-thermal-cone}
\end{marginfigure}
\end{corollary}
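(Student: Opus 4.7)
The plan is to obtain this corollary as a direct consequence of the two structural theorems already established in the excerpt, namely Birkhoff's Theorem~\ref{Thm:Birkhoff} and the Hardy--Littlewood--Pólya Theorem~\ref{Thm:HLP}. By Definition~\ref{Def:majorisation_cones}, the future cone is
\begin{equation*}
\T_+(\v p) = \{\v q \in \Delta_d : \exists\, \Lambda^0 \text{ bistochastic with } \Lambda^0 \v p = \v q\},
\end{equation*}
so the task reduces to characterising the image of a fixed vector $\v p$ under the set $\mathcal{B}_d$ of bistochastic matrices.

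First I would prove the inclusion $\T_+(\v p) \subseteq \operatorname{conv}\{\Pi \v p : \v \pi \in \mathcal{S}_d\}$. Take any $\v q \in \T_+(\v p)$ with $\v q = \Lambda^0 \v p$ for some $\Lambda^0 \in \mathcal{B}_d$. By Birkhoff's Theorem~\ref{Thm:Birkhoff}, there exist weights $\lambda_k \geq 0$ with $\sum_k \lambda_k = 1$ and permutation matrices $\Pi_k$ such that $\Lambda^0 = \sum_k \lambda_k \Pi_k$. Linearity of matrix--vector multiplication then yields $\v q = \sum_k \lambda_k (\Pi_k \v p)$, which exhibits $\v q$ as a convex combination of permutations of $\v p$.

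For the reverse inclusion, let $\v q = \sum_k \lambda_k \Pi_k \v p$ be an arbitrary convex combination of permutations of $\v p$. The matrix $\Lambda^0 := \sum_k \lambda_k \Pi_k$ is a convex combination of permutation matrices and therefore, again by Theorem~\ref{Thm:Birkhoff}, a bistochastic matrix; by construction $\Lambda^0 \v p = \v q$, so $\v q \in \T_+(\v p)$. This establishes equality of the two sets. Alternatively, and more in the spirit of Chapter~\ref{C:mathematical_preliminaries}, one can invoke Theorem~\ref{Thm:HLP} to rewrite $\T_+(\v p) = \{\v q : \v p \succ \v q\}$, and then recall the classical fact (again a consequence of Birkhoff) that the set of vectors majorised by $\v p$ coincides with $\operatorname{conv}\{\Pi \v p : \v \pi \in \mathcal{S}_d\}$.

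There is no real obstacle here; the only subtlety worth flagging is that the set on the right-hand side has at most $d!$ extreme points but generically fewer, since distinct permutations $\Pi_k$ can produce the same vector $\Pi_k \v p$ whenever $\v p$ has repeated entries (in the extreme case $\v p = \v \eta$ the cone collapses to a single point). Consequently the description via $\operatorname{conv}$ is correct as written, but the actual vertex set of the polytope $\T_+(\v p)$ is the quotient of $\mathcal{S}_d$ by the stabiliser of $\v p$. I would include a short remark to this effect, so that the reader understands how Fig.~\ref{Fig:future-thermal-cone} arises and why the future cone is a permutohedron lying inside the probability simplex.
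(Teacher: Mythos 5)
Your argument is correct and is essentially the paper's own derivation: the corollary is stated immediately after the paper recalls Birkhoff's Theorem~\ref{Thm:Birkhoff} and the Hardy--Littlewood--P\'olya Theorem~\ref{Thm:HLP}, and follows by exactly the two-inclusion argument you give (decompose $\Lambda^0$ into permutations for one direction, observe that any convex combination of permutations is bistochastic for the other). Your closing remark on degenerate entries and the vertex set being a quotient of $\mathcal{S}_d$ is a reasonable, if optional, clarification that does not change the proof.
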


The above corollary implies that the future cone of $\v p$ is a convex set, with all distributions lying in $\T_{+}(\v p)$ being majorised by $\v p$ (see Fig.~\ref{Fig:future-thermal-cone} for an example considering a three-level system). Since the $d$-dimensional sharp distribution, $(0,..., 1,.., 0)$, majorises all probability distributions, its future cone is the entire probability simplex, which we will denote by $\v \Delta_d$.

If there is no transformation mapping $\v p$ into $\v q$ nor $\v q$ into~$\v p$, we say that these two states are incomparable. The incomparable region can be characterised by incorporating into the analysis the concept of quasi-probability distributions, which are defined by relaxing the non-negativity condition on the entries of a normalised probability distribution. The following result, the proof of which is employing the lattice structure of majorisation order and can be found in Section~\ref{sec_appendix}, specifies the incomparable region of $\v p$.

\begin{lemma}[Incomparable region]~\label{Lem:Incomparable_region0}
For a $d$-dimensional probability distribution \mbox{$\v{p} = (p_1,..., p_d)$}, consider the quasi-probability distributions $\v{t}^{(n)}$ constructed for each \mbox{$n \in \{1, ..., d\}$},
\begin{marginfigure}[-1.71cm]
	\includegraphics[width=4.754cm]{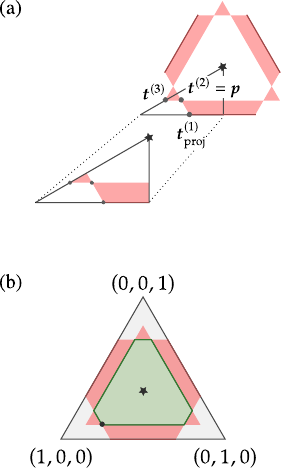}
	\caption[Fig:incomparable-thermal-cone]{\emph{Incomparability}. (a) Incomparable region for a state with population $\v p=(0.7,0.2,0.1)$, where the tangent vectors are highlighted in the first chamber. In (b), we illustrate both the future and incomparable regions.}
	\label{Fig:incomparable-thermal-cone}
\end{marginfigure}
\begin{equation}
\label{eq_tangentvectors}
    \v{t}^{(n)}=\left(t^{(n)}_1, p_n^{\downarrow}, ..., p_n^{\downarrow}, t^{(n)}_d\right) , 
\end{equation}
with 
\begin{align}
\label{eq:extremalpointspastan}
    t_1^{(n)} & = \sum_{i=1}^{n-1} p_i^{\downarrow} - (n-2) p_n^{\downarrow}, &
    t_d^{(n)} & = 1 - t^{(n)}_1 - (d-2)p_n^{\downarrow},
\end{align}
and define the following set
\begin{equation}
\label{eq:setP}
    \mathbb{T}:=\bigcup\limits_{j=1}^{d-1} \operatorname{conv}\left[\T_{+}\left(\v{t}^{(j)}\right) \cup \T_{+}\left(\v{t}^{(j+1)}\right)\right] .    
\end{equation}
Then, the incomparable region of $\v{p}$ is given by 
\begin{align}
\label{eq:incomparableconeIT}
    \T_{\emptyset}(\v{p}) = \left[\operatorname{int}(\mathbb{T})\backslash\T_+\left(\v p\right)\right]\cap \v{\Delta}_d  ,
\end{align}
where the backslash $\backslash$ denotes the set difference and $\operatorname{int}( \mathbb{T})$ represents the interior of  $\mathbb{T}$.
\end{lemma}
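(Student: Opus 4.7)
The plan is to show that $\operatorname{int}(\mathbb{T})\cap\v\Delta_d$ coincides with the complement of the past cone $\T_-(\v p)$ inside the simplex, so that subtracting $\T_+(\v p)$ as prescribed in Eq.~\eqref{eq:incomparableconeIT} isolates exactly those states that neither majorise nor are majorised by $\v p$. First I would rewrite the incomparable region set-theoretically as $\T_{\emptyset}(\v p)=\v\Delta_d\setminus[\T_+(\v p)\cup\T_-(\v p)]$, and note that the intersection $\T_+(\v p)\cap\T_-(\v p)$ consists solely of the finitely many permutations of $\v p$ (since those are the only states both majorising and majorised by $\v p$), hence it is a measure-zero boundary set. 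The target identity therefore reduces to proving
\begin{equation*}
\operatorname{int}(\mathbb{T})\cap\v\Delta_d=\v\Delta_d\setminus\T_-(\v p).
\end{equation*}

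Second, I would establish the tangency of $\v t^{(n)}$ to $\v p$ at the Lorenz-curve level directly from Eq.~\eqref{eq_tangentvectors}. Since $t_1^{(n)}\ge p_n^{\downarrow}\ge t_d^{(n)}$, the decreasing order of $\v t^{(n)}$ is $(t_1^{(n)},p_n^{\downarrow},\dots,p_n^{\downarrow},t_d^{(n)})$, and a short computation yields $t_1^{(n)}+(n-2)p_n^{\downarrow}=\sum_{i=1}^{n-1}p_i^{\downarrow}$ together with $t_1^{(n)}+(n-1)p_n^{\downarrow}=\sum_{i=1}^{n}p_i^{\downarrow}$. Hence the extended, piecewise-linear curve $f_{\v t^{(n)}}$ meets $f_{\v p}$ at the two consecutive kinks $x=n-1$ and $x=n$. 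A direct comparison of the remaining partial sums, exploiting $p_i^{\downarrow}\ge p_n^{\downarrow}$ for $i\le n-1$ and $p_i^{\downarrow}\le p_n^{\downarrow}$ for $i\ge n+1$, shows $f_{\v t^{(n)}}(k)\ge f_{\v p}(k)$ for every $k$; thus $\v t^{(n)}\succ\v p$ weakly, and $\T_+(\v t^{(n)})$ (extended to quasi-distributions) coincides with the set of vectors whose Lorenz curves are pointwise dominated by $f_{\v t^{(n)}}$.

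Third, one direction of the desired equality is straightforward. Writing any interior point of $\operatorname{conv}[\T_+(\v t^{(j)})\cup\T_+(\v t^{(j+1)})]$ as $\v q=\alpha\v r_j+(1-\alpha)\v r_{j+1}$ with $\v r_j\prec\v t^{(j)}$ and $\v r_{j+1}\prec\v t^{(j+1)}$, strict interior forces at least one of the Lorenz inequalities to be strict at $x=j$ or $x=j+1$. Combined with $f_{\v t^{(j)}}(j)=f_{\v t^{(j+1)}}(j)=f_{\v p}(j)$ from the tangency step, this propagates to $f_{\v q}(j)<f_{\v p}(j)$ or $f_{\v q}(j+1)<f_{\v p}(j+1)$, so $\v q\not\succ\v p$ and $\v q\notin\T_-(\v p)$. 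Taking the union over $j$ and intersecting with $\v\Delta_d$ gives $\operatorname{int}(\mathbb{T})\cap\v\Delta_d\subseteq\v\Delta_d\setminus\T_-(\v p)$.

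The main obstacle is the reverse inclusion. Given $\v q\in\v\Delta_d$ with $f_{\v q}(k^{\star})<f_{\v p}(k^{\star})$ for some $k^{\star}\in\{1,\dots,d-1\}$, the plan is to exhibit an explicit convex decomposition $\v q=\alpha\v r^{(k^{\star})}+(1-\alpha)\v r^{(k^{\star}+1)}$ with $\v r^{(k^{\star})}\prec\v t^{(k^{\star})}$ and $\v r^{(k^{\star}+1)}\prec\v t^{(k^{\star}+1)}$. The natural construction is to partition the ordered mass of a permuted copy of $\v q$ so that each summand saturates the tangency constraint of the corresponding $\v t^{(n)}$ on $[n-1,n]$; the Hardy--Littlewood--P\'{o}lya Theorem~\ref{Thm:HLP}, applied in the quasi-distribution setting, then certifies that each summand is indeed majorised by the intended tangent vector. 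The strict deficit at $k^{\star}$ is what guarantees that the decomposition lands in the topological interior of $\operatorname{conv}[\T_+(\v t^{(k^{\star})})\cup\T_+(\v t^{(k^{\star}+1)})]$ rather than on its boundary. Intersecting with $\v\Delta_d$ restores probability normalisation, and removing $\T_+(\v p)$ deletes the future cone, yielding Eq.~\eqref{eq:incomparableconeIT}.
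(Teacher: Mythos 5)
Your set-up is fine as far as it goes: the tangency of $\v t^{(n)}$ at the two consecutive elbows $n-1$ and $n$, the relation $\v t^{(n)}\succ\v p$, and the forward inclusion $\operatorname{int}(\mathbb{T})\cap\v\Delta_d\subseteq\v\Delta_d\setminus\T_-(\v p)$ all check out and mirror the paper's preliminaries. The genuine gap is the reverse inclusion, which is the heart of the lemma, and your plan for it fails as stated: you anchor the construction at an arbitrary index $k^\star$ with $f_{\v q}(k^\star)<f_{\v p}(k^\star)$ and assert that a decomposition $\v q=\alpha\v r^{(k^\star)}+(1-\alpha)\v r^{(k^\star+1)}$ with $\v r^{(k^\star)}\prec\v t^{(k^\star)}$ and $\v r^{(k^\star+1)}\prec\v t^{(k^\star+1)}$ always exists. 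It does not. Because $\v t^{(k^\star)}$ and $\v t^{(k^\star+1)}$ are sorted in the same order, $\operatorname{conv}[\T_+(\v t^{(k^\star)})\cup\T_+(\v t^{(k^\star+1)})]$ coincides with $\bigcup_{\lambda}\T_+(\lambda\v t^{(k^\star)}+(1-\lambda)\v t^{(k^\star+1)})$, so your decomposition exists iff $\v q$ is majorised by a tangent line through the elbow $(k^\star,f_{\v p}(k^\star))$ with slope $s\in[p^{\downarrow}_{k^\star+1},p^{\downarrow}_{k^\star}]$. Now take $d=4$, $\v p=(0.4,0.3,0.2,0.1)$ and $\v q=(0.52,0.17,0.17,0.14)$: here $\v q\in\T_\emptyset(\v p)$ (since $f_{\v q}(2)=0.69<0.7=f_{\v p}(2)$ and $f_{\v q}(1)=0.52>0.4=f_{\v p}(1)$), so $k^\star=2$ is a legitimate choice in your scheme; yet any tangent at elbow $2$ has value $f_{\v p}(2)-s\leq 0.7-0.2=0.5<0.52=f_{\v q}(1)$ at the first elbow, so no tangent at elbow $2$ majorises $\v q$ and no such decomposition exists. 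The lemma still holds for this $\v q$, but only through elbow $3$. Hence the entire difficulty is \emph{choosing the correct elbow}; a raw deficit index does not determine it, and invoking Hardy--Littlewood--P\'olya cannot supply it, since that theorem merely converts an already-verified majorisation into a bistochastic map.

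The paper closes exactly this gap using the lattice structure of majorisation: for an incomparable $\v q$ it constructs the join $\v r=\v p\vee\v q$ (the last common past point, via the explicit flattening algorithm), observes that the set $I_p$ of elbows where the cumulative vector of $\v r$ agrees with that of $\v p$ is non-empty, and shows that a vector tangent to $\v r$ at any $i\in I_p$ simultaneously (i) majorises $\v r$, and hence $\v q$ (a tangent line of a concave piecewise-linear Lorenz curve lies above it), and (ii) is a tangent vector of $\v p$ at that same elbow, because the admissible slope window of $f_{\v r}$ at a shared elbow is contained in that of $f_{\v p}$. You would need this join step, or some substitute that produces a common touching elbow, to complete your reverse inclusion. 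Two smaller points: your strictness claim ``interior forces a strict inequality at $x=j$ or $x=j+1$'' should be pinned to the shared tangency point $x=j$ and argued from the supporting half-space $f_{\cdot}(j)\leq f_{\v p}(j)$ valid on the whole convex piece; and the passage between interiors of the individual pieces and $\operatorname{int}(\mathbb{T})$ of their union, as well as the exactness of your opening reduction $\operatorname{int}(\mathbb{T})\cap\v\Delta_d=\v\Delta_d\setminus\T_-(\v p)$ (which is essentially Theorem~\ref{Thm:past_cone}, not a given), are asserted rather than argued.
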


See Figs.~\hyperref[Fig:incomparable-thermal-cone]{\ref{Fig:incomparable-thermal-cone}a}~and~~\hyperref[Fig:incomparable-thermal-cone]{\ref{Fig:incomparable-thermal-cone}b} for an example illustrating the incomparable region for a three-level system.

We will refer to the quasi-probability distributions $\v{t}^{(n)}$ as \emph{tangent vectors}. The intuition behind this name and the importance of $\v{t}^{(n)}$ can be explained by noticing that any convex function $g(x)$ lies fully under its tangent at any point $y$, denoted as $t_y(x) \geq g(x)$, with equality guaranteed only for \mbox{$t_y(y) = g(y)$}. It follows from the definition of $\v t^{(n)}$ that its majorisation curve $f_{\v t^{(n)}}(x)$ is parallel to the $n$-th linear piece of $f_{\v p}(x)$  for $x\in\left[(n-1)/d,\,n/d\right]$, and the first and last elements of $\v t^{(n)}$ guarantee tangency and normalisation. Finally, since the adjacent linear fragments of $f_{\v p}(x)$ share the elbows of the function, the consequent tangent vectors $\v t^{(n)},\,\v t^{(n+1)}$ are both tangent at a selected elbow, $f_{\v p}(n/d) = f_{\v t^{(n)}}(n/d) = f_{\v t^{(n+1)}}(n/d)$. Therefore, any convex combination of the form $a \v t^{(n)} + (1-a)\v t^{(n+1)}$ will be ``tangent'' at the $n$-th elbow of the $\v p$ majorisation curve. The fact that $\v{t}^{(n)}$ may be a quasi-probability distribution does not pose a problem, since this vector can be projected back onto the probability simplex. The projected vector $\v{t}^{(n)}$ will be denoted by $\v{t}^{(n)}_{\text{proj}}$, and can be obtained by successively applying the map 
\begin{equation}
\left\{t^{(n)}_{m-1},\,t^{(n)}_m\right\} \longmapsto \left\{\min\left(t^{(n)}_{m-1}+t^{(n)}_m,\,t^{(n)}_{m-1}\right),\,\max\left(t^{(n)}_m,\,0\right)\right\}    
\end{equation}
to pairs of entries of $\v{t}^{(n)}$ going from $m = d$ to $m = 2$. In each step, the map either zeros the second component by shifting its value to the first one or, if the second component is non-negative, it leaves them both unperturbed. Geometrically, the state is shifted along the edges of the future cone of $\v{t}^{(n)}$ and every time it hits a plane defining one of the faces of the probability simplex $\Delta_d$, a new direction is selected, until the state is composed exclusively of non-negative entries.

Using Lemma~\ref{Lem:Incomparable_region0}, we can now prove the following theorem that specifies the past cone.
\begin{marginfigure}[-0.493cm]
	\includegraphics[width=4.859cm]{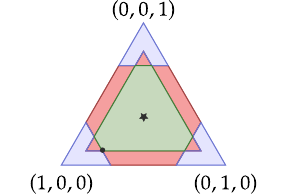}
	\caption[Fig:past-thermal-cone]{\emph{Majorisation cones.} Past region for a state with population \mbox{$\v p = (0.7,0.2,0.1)$}.}
	\label{Fig:past-thermal-cone}
\end{marginfigure}
\begin{theorem}[Past cone]
\label{Thm:past_cone}
The past cone of $\v{p}$ is given by
\begin{equation}
        \T_{-}(\v{p}) = \v{\Delta}_d \backslash \operatorname{int} (\mathbb{T}) \, .
\end{equation}
\end{theorem}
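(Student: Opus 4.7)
The plan is to derive the past cone by complementation within $\v{\Delta}_d$, leveraging Lemma~\ref{Lem:Incomparable_region0}. The starting observation is that majorisation stratifies the simplex into three regions -- the past cone $\T_{-}(\v p)$ of states majorising $\v p$, the future cone $\T_{+}(\v p)$ of states majorised by $\v p$, and the incomparable region $\T_{\emptyset}(\v p)$ -- which partition $\v{\Delta}_d$ disjointly apart from the finite overlap $\T_{+}(\v p)\cap\T_{-}(\v p)=\{\Pi\v p:\Pi\in\mathcal{S}_d\}$ consisting of the permutations of $\v p$. Hence it suffices to identify the complement of $\T_{+}(\v p)\cup\T_{\emptyset}(\v p)$ in $\v{\Delta}_d$.

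Substituting the characterisation of $\T_{\emptyset}(\v p)$ supplied by Lemma~\ref{Lem:Incomparable_region0}, I obtain
\begin{equation*}
\T_{+}(\v p)\cup\T_{\emptyset}(\v p)=[\T_{+}(\v p)\cup\operatorname{int}(\mathbb{T})]\cap\v{\Delta}_d.
\end{equation*}
The next step is to show the inclusion $\T_{+}(\v p)\subseteq\mathbb{T}$. This follows because the piecewise-linear majorisation curve $f_{\v p}$ is concave, so each tangent vector satisfies $f_{\v{t}^{(n)}}\geq f_{\v p}$ pointwise; hence $\v{t}^{(n)}\succ\v p$, and by transitivity of majorisation, $\T_{+}(\v p)\subseteq\T_{+}(\v{t}^{(n)})\subseteq\mathbb{T}$ for every $n$.

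The delicate step is to verify that $\T_{+}(\v p)\setminus\operatorname{int}(\mathbb{T})\subseteq\T_{-}(\v p)$. Geometrically, a state $\v r$ in the relative interior of $\T_{+}(\v p)$ is strictly majorised by $\v p$, so that $f_{\v r}(x)<f_{\v p}(x)\leq f_{\v{t}^{(n)}}(x)$ holds strictly on some sub-interval for every $n$; such an $\v r$ lies in the relative interior of $\operatorname{conv}[\T_{+}(\v{t}^{(n)})\cup\T_{+}(\v{t}^{(n+1)})]$ for some $n$ and thus in $\operatorname{int}(\mathbb{T})$. Consequently, the only points of $\T_{+}(\v p)$ that can sit on $\partial\mathbb{T}$ are the permutations of $\v p$, which already belong to $\T_{-}(\v p)$ by reflexivity. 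Combining everything,
\begin{equation*}
\v{\Delta}_d\setminus\operatorname{int}(\mathbb{T})=\bigl[\v{\Delta}_d\setminus(\T_{+}(\v p)\cup\T_{\emptyset}(\v p))\bigr]\cup\bigl[\T_{+}(\v p)\setminus\operatorname{int}(\mathbb{T})\bigr]=\T_{-}(\v p),
\end{equation*}
which gives the desired identity.

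The main obstacle I anticipate is the careful geometric verification that every non-permutation point of $\T_{+}(\v p)$ sits in the interior of $\mathbb{T}$ while the permutations of $\v p$ lie precisely on $\partial\mathbb{T}$. This amounts to tracking where the strict gap $f_{\v{t}^{(n)}}-f_{\v r}$ vanishes as $n$ varies, and confirming that the adjacent tangent cones $\T_{+}(\v{t}^{(n)}),\T_{+}(\v{t}^{(n+1)})$ patch together so that only the $\v p$-orbit remains on the boundary of their union -- a point that seems to require either the embedding-lattice machinery invoked in the proof of Lemma~\ref{Lem:Incomparable_region0} or a direct convex-geometric argument based on the tangency of $f_{\v{t}^{(n)}}$ and $f_{\v p}$ along the $n$-th linear segment.
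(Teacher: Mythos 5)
Your proposal takes the same route as the paper's proof: decompose $\v{\Delta}_d$ into the three cones, write $\T_-(\v p)$ as the complement of $\T_+(\v p)\cup\T_{\emptyset}(\v p)$, and substitute the characterisation of $\T_{\emptyset}(\v p)$ supplied by Lemma~\ref{Lem:Incomparable_region0}. The paper's proof is in fact much terser than yours---it simply cites the complementation identity and performs the substitution---and it silently elides precisely the boundary subtleties you go to the trouble of flagging: a literal substitution yields $\T_-(\v p)\setminus\T_+(\v p)$ rather than $\T_-(\v p)$, and closing that gap requires your final claim that $\T_+(\v p)\setminus\operatorname{int}(\mathbb{T})=\{\Pi\v p:\Pi\in\mathcal{S}_d\}$, i.e.\ that every point of the future cone other than the permutation orbit of $\v p$ lies in $\operatorname{int}(\mathbb{T})$, while that orbit itself sits on $\partial\mathbb{T}$. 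So the issue you identify at the end is genuine content of the argument, not an artefact of your phrasing; the paper treats it as implicit (the surrounding discussion remarks that the tangent vectors, and hence $\v p$ itself, sit at the boundary between the incomparable and past regions, but this is not proved inside the theorem's proof). Your instinct that it should be settled either by the tangency of $f_{\v{t}^{(n)}}$ and $f_{\v p}$ along the $n$-th segment, or by the lattice machinery used to prove Lemma~\ref{Lem:Incomparable_region0}, is correct---that is where the geometric content actually lives.
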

\begin{proof}
    One only needs to use the fact that
    \begin{equation}
        \label{eq:pastconeIT}
        \T_{-}(\v{p}) = \v{\Delta}_d \texttt{\textbackslash} \left(\T_{\emptyset}(\v{p})\cup \T_{+}(\v{p})\right)  \, ,
\end{equation}
and employ Lemma~\ref{Lem:Incomparable_region0} to replace $\T_{\emptyset}(\v{p})$ in the above with Eq.~\eqref{eq:incomparableconeIT}.
\end{proof}

Let us make a few comments on the above results. First, note that the incomparable region arises only for $d \geq 3$. This can be easily deduced from Lemma~\ref{Lem:Incomparable_region0}, as for $d=2$ the two extreme points, $\v t^{(1)}$ and $\v t^{(2)}$, are precisely the initial state~$\v{p}$. Second, the future thermal cone is symmetric with respect to the maximally mixed distribution $\v \eta$, and consequently, the incomparable and past cones also exhibit a particular symmetry around this point. As we shall see, this symmetry is lost when we go beyond the limit of infinite temperature. Third, although the past cone is not convex as a whole, it is convex when restricted to any single Weyl chamber.
Therefore, we may note that the tangent vectors $\v t^{(n)}$ provide the extreme points of the past not only from the viewpoint of a single-chamber but also to the entire probability simplex. This can be understood by noting that $\v t^{(n)}$ are located at the boundary between the incomparable and the past cone, and by symmetry, it applies to all their permuted versions. As a consequence, the past is constructed from $d!$ copies of the past in the canonical Weyl chamber, each copy transformed according to the corresponding permutation $\v{\pi}$~(see Fig.~\hyperref[fig-past-chamber]{\ref{fig-past-chamber}a, b}). 
Finally, one can make an analogy to special relativity with bistochastic matrices imposing a causal structure in the probability simplex $\Delta_d$. There exists a ``light cone'' for each point in $\Delta_d$, which divides the space into past, incomparable, and future regions~(see Fig.~\hyperref[fig-past-chamber]{\ref{fig-past-chamber}c}).

\begin{figure}[t]
\includegraphics[width=10.774cm]{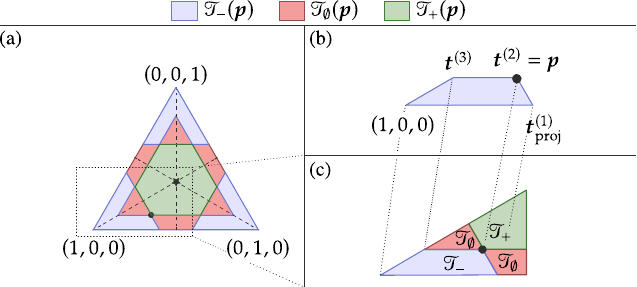}
	\caption[majorisation-curve]{\emph{Majorisation cones and Weyl chambers}. (a) Probability simplex $\Delta_3$ and a state $\v p = (0.6,0.3,0.1)$ represented by a black dot~$\bullet$ together with its majorisation cones. The division of $\Delta_3$ into different Weyl chambers is indicated by dashed lines with the central state \mbox{$\v \eta =(1/3,1/3,1/3)$} denoted by a black star $\bigstar$. (b) The past cone of a state~$\v{p}$ restricted to a given Weyl chamber is convex with the extreme points given by $\v{t}^{(n)}$ from Eq.~\eqref{eq_tangentvectors} and the sharp state. (c) The causal structure induced by bistochastic matrices (i.e., thermal operations in the infinite temperature limit) in a given Weyl chamber.}
	\label{fig-past-chamber}
\end{figure}

\begin{kaobox}[frametitle= Constructing the incomparable region]
To gain geometric intuition for interpreting Lemma~\ref{Lem:Incomparable_region0}, we consider a three-level system and construct its incomparable region.

Let $\v{p}=(0.7,0.2,0.1)$ be the population of the initial state. Then, Lemma~\eqref{Lem:Incomparable_region0} implies the existence of three different quasi-probability distributions $\v{t}^{(n)}$, namely $\v{t}^{(1)}=(0.7,0.7,-0.4)$, $\v{t}^{(2)}=(0.7,0.2,0.1)$, and $\v{t}^{(3)}=(0.8,0.1,0.1)$. The next step is to evaluate Eq.~\eqref{eq:setP}, i.e., calculate the future thermal cones for each quasi-probability distribution individually, and then take the convex union of all the cones. This process is illustrated below.
\begin{figure}[H]
\includegraphics[width=9.594cm]{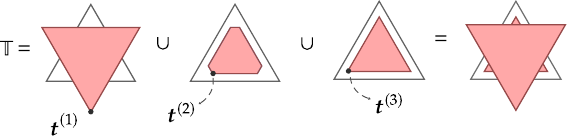}
\end{figure}
To obtain the incomparable region, we first take the set difference between the interior of $\mathbbm{T}$ and the future thermal cone of $\v{p}$
\begin{figure}[H]
\includegraphics[width=6.732cm]{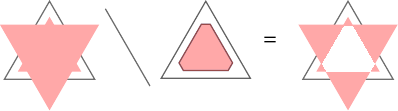}
\end{figure}
The next step to compute Eq.~\eqref{eq:incomparableconeIT} is to intersect with $\Delta_d$. Note that this procedure eliminates improper probability distributions. By doing so, we obtain the incomparable region, which is illustrated in Fig.~\hyperref[Fig:incomparable-thermal-cone]{\ref{Fig:incomparable-thermal-cone}a.}
\end{kaobox}

The central idea behind Lemma~\ref{Lem:Incomparable_region0} and Theorem~\ref{Thm:past_cone} can be better understood through a visualisation using partial-order diagrams. To illustrate the principles of such diagrams, we will first focus on the special case of a three-level system. In this case, the past cone has three non-trivial extreme points: $\v t^{(1)}$, $\v t^{(2)}$ and $\v t^{(3)}$. Furthermore, as shown in Fig.~\hyperref[Fig:latticediagram]{\ref{Fig:latticediagram}a}, these extreme points satisfy the following partial-order relation: the sharp state $\v s_1$ majorises both $\v t^{(1)}$ and $\v t^{(3)}$, and these two vectors majorise the initial state $\v p=\v{t}^{(2)}$. As it was proved in Lemma~\ref{Lem:Incomparable_region0}, the union of the future cones of these extreme points provides us, after subtracting the future of the vector $\v{p}$, with the incomparable region of $\v p$. In the particular case of $d=3$ since $\v{t}^{(1)},\,\v{t}^{(3)}\succ\v{t}^{(2)}$, we find that $\operatorname{conv}[\mathcal{T}_+(\v{t}^{(1)}),\mathcal{T}_+(\v{t}^{(2)})] = \mathcal{T}_+(\v{t}^{(1)})$ and similarly for $\v{t}^{(3)}$. However, it is important to note the fact that $\v t^{(1)}$ and $\v t^{(3)}$ are incomparable, and in turn, their respective future cones after subtracting future of $\v{p}$ characterise disjoint parts of the incomparable region. Finally, the tangent vector $\v t^{(2)}$ reduces to the original probability vector, $\v t^{(2)} = \v p$, only for $d = 3$, and this fact is fully understood from the construction of the $\v t^{(n)}$-vectors (see Section~\ref{sec_appendix}). 
\begin{marginfigure}[-5cm]
	\includegraphics[width=3.700cm]{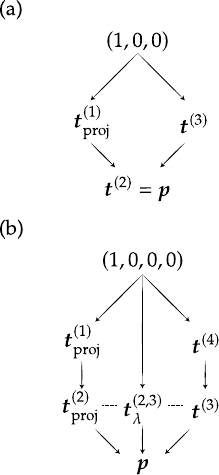}
	\caption[Partial-order diagrams for majorisation]{\emph{Hasse diagrams}. Graphical representation of the partially ordered set formed by the extreme points of the past cone in the probability simplex $\Delta_d$. Each arrow indicates that one of the elements precedes the other in the majorisation ordering. (a) Diagram for $d=3$. (b) Diagram for $d=4$. Note that any convex combination between $\v{t}^{(2)}_{\text{proj}}$ and $\v t^{(3)}$, here denoted by $\v{t}^{(2,3)}_{\lambda}$, results in an incomparable vector with respect to $\v{t}^{(1)}_{\text{proj}}$ and $\v{t}^{(4)}$.}
	\label{Fig:latticediagram}
\end{marginfigure}
It is evident from Lemma~\ref{Lem:Incomparable_region0} that each pair of tangent vectors $(\v t^{(n)},\,\v t^{(n+1)})$ characterises a given part of the incomparable region. However, notice that the futures of the extreme points considered one by one do not give the full description of the incomparable region -- one needs to consider their convex hulls to fill in the gaps. This particular feature of the construction can be demonstrated by considering the case of $d = 4$. Observe that in this case, we have a set of four tangent vectors $\v t^{(n)}$ with $n\in\{1,2,3,4\}$. Straightforward calculation shows that $\v t^{(1)}$ majorises $\v t^{(2)}$ and $\v t^{(4)}$ majorises $\v t^{(3)}$, therefore we find certain simplification, namely $\operatorname{conv}[\mathcal{T}_+(\v{t}^{(1)}),\mathcal{T}_+(\v{t}^{(2)})] = \mathcal{T}_+(\v{t}^{(1)})$ and similarly for $\v{t}^{(3)}$ and $\v{t}^{(4)}$. Nevertheless, $\v t^{(1)}$ is incomparable to $\v t^{(4)}$; similarly $\v t^{(2)}$ belongs to the incomparable region of $\v t^{(3)}$ [see Fig.~\hyperref[Fig:latticediagram]{\ref{Fig:latticediagram}b}].
From this we find the non-inclusions $\mathcal{T}_{+}(\v t^{(1)}) \not\subset \mathcal{T}_{+}(\v t^{(4)})$ and $\mathcal{T}_{+}(\v t^{(4)}) \not\subset \mathcal{T}_{+}(\v t^{(1)})$, similarly $\mathcal{T}_{+}(\v t^{(2)}) \not\subset \mathcal{T}_{+}(\v t^{(3)})$ and $\mathcal{T}_{+}(\v t^{(3)}) \not\subset \mathcal{T}_{+}(\v t^{(2)})$. Naively, one may be led to a conclusion that the incomparable region can be characterised by the future cones of $\v t^{(1)}$ and $\v t^{(4)}$ alone. However, any convex combination $\lambda \v{t}^{(2)} + (1-\lambda)\v {t}^{(3)} \equiv \v t^{(2,3)}_\lambda$ results in an incomparable vector $\v{t}^{(2,3)}_\lambda \in \mathcal{T}_\emptyset(\v{t}^{(i)})$ for $i = 1,2,3,4$ and $0<\lambda<1$
[see Fig.~\hyperref[Fig:latticediagram]{\ref{Fig:latticediagram}c}], and hence, in a ``new'' fragment of the incomparable region. In order to account for the entire incomparable region one must take the union of all future cones of $\v t^{(2,3)}_\lambda$ for $\lambda \in [0,1]$. This corresponds to the convex hull of the future cones\footnote{Geometrically, the mixture of $\v t^{(2)}$ and $\v t^{(3)}$ corresponds to the edge that connects these two points.} $\mathcal{T}_{+}(\v t^{(2)})$ and $\mathcal{T}_{+}(\v t^{(3)})$. Furthermore, the construction is limited only to convex combinations of futures for consecutive tangent vectors since mixtures of two non-successive ones, for instance $\v t^{(1)}$ and $\v t^{(4)}$, do not give any additions to the incomparable region. Such combinations belong to the past cone of $\v p$ as every point of $\lambda \v{t}^{(1)} + (1-\lambda)\v{t}^{(4)}$ would majorise $\v{p}$.

\newpage
\subsection{Links to other resource theories}~\label{Subsec:other-resource-theories}

The two well-known examples of majorisation-based resource theories, where our results are also applicable, include the resource theories of entanglement and coherence. These are defined via the appropriate sets of free operations and free states: local operations and classical communication (LOCC) and separable states in entanglement theory~\cite{Horodecki2009}, and incoherent operations (IO) and incoherent states in coherence theory~\cite{Plenio2014}. Within each of these theories, there exists a representation of quantum states via probability distributions that is relevant for formulating state interconversion conditions under free operations. In entanglement theory, a pure bipartite state \mbox{$\rho = \ketbra{\Psi}{\Psi}$} can be written in terms of the Schmidt decomposition given by $\ket{\Psi} = \sum_i a_i \ket{\psi_i, \psi'_i}$, and represented by a probability vector $\v{p}$ with $p_i = |a_i|^2$. Then, Nielsen’s theorem~\cite{nielsen1999conditions} states that an initial state $\v p$ can be transformed under LOCC into a target state $\v q$ if and only if $\v p \prec \v q$. Similarly, in the resource theory of coherence with respect to a fixed basis $\{ \ket{i}\}$, one can represent a pure state $\rho = \ketbra{\psi}{\psi}$ by a probability vector $\v p$ with $p_i = |\langle i | \psi \rangle|^2$. Then, a given initial state $\v p$ can be transformed into $\v q$ via incoherent operations if and only if $ \v p \prec \v q$~\cite{Du2015}.
\begin{marginfigure}[0cm]
	\includegraphics[width=4.859cm]{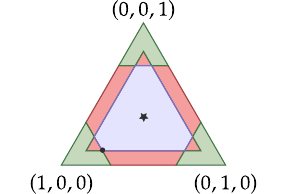}
	\caption[fig-entanglementcone]{\emph{Achiviebility under LOCC}. Entanglement cone in the simplex of the Schmidt coefficients of a $3\times 3$ system. Conversely to thermodynamics, the past of entanglement transformations is the thermodynamic future and vice-versa. The black dot $\bullet$ indicates the Schmidt vector of the initial state \mbox{$\v p = (0.7,0.2,0.1)$}, whereas the black star $\bigstar$ represents the maximally entangled state \mbox{$\v \eta = (1/3,1/3,1/3)$}}
	\label{fig-entanglementcone}
\end{marginfigure}
Therefore, we observe that the partial order emerging in the two cases is precisely the opposite to the thermodynamic order in the infinite temperature limit (for more details see Ref.~\cite{Kollas_master}). Consequently, the thermodynamic past and future become the future and past for entanglement and coherence, while the incomparable region remains unchanged~(see Fig.~\ref{fig-entanglementcone}). Note that, for entanglement and coherence, sharp states $\v s$ are in the future cone of any given state, while for thermodynamics (at $\beta = 0$), they are in the past. The flat distribution $\v \eta$ is in the past of any state in entanglement and coherence theories, whereas in thermodynamics it is in the future.

One can make a general remark concerning resource monotones, applying to the entanglement, coherence and thermodynamic scenarios alike. Consider an entangled state \mbox{$\ket\psi \in \mathcal{H}_N\otimes \mathcal{H}_N$} with the associated Schmidt coefficients $\v{p}$ and concurrence $\mathcal{C}(\ket{\psi})$ as an example of a resource monotone~\cite{HW97}. If another state $\ket{\phi}$ with Schmidt coefficients $\v{q}$ is in the future cone of $\ket{\psi}$, $\v{q}\in\mathcal{T}_+(\v{p})$, then $\mathcal{C}(\ket{\phi}) \leq \mathcal{C}(\ket{\psi})$. Otherwise, if it lies in its past cone, $\v{q}\in\mathcal{T}_-(\v{p})$ we know that $\mathcal{C}(\ket{\phi}) \geq \mathcal{C}(\ket{\psi})$. However, if the two states are incomparable,  $\v{q}\in\mathcal{T}_\emptyset(\v{p})$, nothing can be said about the relation between both concurrences.

\subsection{Probabilistic majorisation cones}~\label{Subsec:probabilistic-cones}

Finally, it should be observed that the notion of majorisation cones, as presented until now, deals with deterministic transformations. However, this approach can be extended to probabilistic transformations using Vidal's criterion for entanglement~\cite{vidal1999entanglement,vidal2000approximate} and coherence transformations~\cite{ZhuEtAl2017coherence} under LOCC and IO, respectively. In the case of probabilistic transformations of bipartite entangled states under LOCC, this is captured by the following theorem~\cite{vidal1999entanglement}. 

\begin{restatable}{theorem}{vidaltheorem}\label{Thm:vidal}
    Consider two bipartite pure states $\ket{\psi}$ and $\ket{\phi}$, whose Schmidt decompositions are described by probability vectors $\v p$ and $\v q$, respectively. The maximal transformation probability from $\ket{\psi}$ to $\ket{\phi}$ under LOCC is given by
    \begin{equation} \label{eq:probal_crit_IO}
        \mathcal{P}(\v{p}, \v{q}) = \min_{1\leq k \leq d} \frac{\sum_{j = k}^d p^{\downarrow}_j}{\sum_{j = k}^d q
        ^{\downarrow}_j}.
    \end{equation}
\end{restatable}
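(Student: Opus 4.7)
The plan is to establish the identity by proving the two matching inequalities: first an upper bound $\mathcal{P}(\v{p},\v{q})\leq \min_k \bigl(\sum_{j=k}^d p_j^{\downarrow}/\sum_{j=k}^d q_j^{\downarrow}\bigr)$, valid for every LOCC protocol, and then an achievability statement that exhibits a concrete protocol saturating the bound. Both halves lean on Nielsen's theorem (cited in Sec.~\ref{Subsec:other-resource-theories}) together with the family of tail-sum functions $E_k(\v{r}) := \sum_{j=k}^d r_j^{\downarrow}$ for $k\in\{1,\dots,d\}$, which play the role of entanglement monotones adapted to probabilistic conversion.

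For the upper bound I would first prove that each $E_k$ is an \emph{average-non-increasing monotone} under LOCC: if an LOCC protocol applied to $\ket{\psi}$ yields outcomes $\ket{\phi_i}$ with Schmidt vectors $\v{r}_i$ and probabilities $\pi_i$, then $E_k(\v{p}) \geq \sum_i \pi_i E_k(\v{r}_i)$. This reduces, by convexity of $E_k$ and a standard purification argument, to Nielsen's deterministic majorisation criterion applied outcome by outcome. Now suppose a protocol transforms $\ket{\psi}\mapsto\ket{\phi}$ with probability $p$ and produces some residual ensemble otherwise. Non-negativity of $E_k$ then gives
\begin{equation}
    E_k(\v{p}) \;\geq\; p\, E_k(\v{q}) \;+\; (1-p)\, E_k(\text{residual}) \;\geq\; p\, E_k(\v{q}),
\end{equation}
so $p \leq E_k(\v{p})/E_k(\v{q})$ for every $k$, yielding the desired upper bound on $\mathcal{P}(\v{p},\v{q})$.

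For achievability I would construct an explicit two-outcome LOCC protocol saturating the bound. Let $k^\star$ be a minimiser of $E_k(\v{p})/E_k(\v{q})$ and set $p^\star := E_{k^\star}(\v{p})/E_{k^\star}(\v{q}) = \mathcal{P}(\v{p},\v{q})$. Define the candidate post-measurement Schmidt distribution $\v{r}$ whose entries are $r_j := p_j^\downarrow/p^\star$ for $j\geq k^\star$ and whose first $k^\star-1$ entries are chosen so that $\v{r}$ is normalised and non-increasing, distributing the remaining mass $1-\sum_{j\geq k^\star} r_j$ in a way that makes each partial tail satisfy $E_k(\v{r})\geq E_k(\v{q})$. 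A direct inspection of tail sums (using that $k^\star$ is the minimiser) shows $\v{r} \prec \v{q}$ in the sense of Schmidt coefficients, hence by Nielsen's theorem $\v{r}$ can be converted into $\v{q}$ deterministically by LOCC. The measurement itself is the Kraus pair $A_1 = \sqrt{p^\star}\,\mathrm{diag}(\sqrt{r_j/p_j^\downarrow})$ (padded so that $A_1^\dagger A_1 \leq \mathbb{1}$) and $A_2$ its complement; Alice applies it, communicates the result, and on outcome~$1$ the parties run the Nielsen protocol $\v{r}\to\v{q}$, so that outcome~$1$ occurs with probability exactly $p^\star$ and yields $\ket{\phi}$.

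The main obstacle will be the achievability half, specifically verifying that the candidate vector $\v{r}$ can always be chosen so that both (i) it is a valid probability distribution in non-increasing order with $r_j = p_j^\downarrow/p^\star$ on the tail $j\geq k^\star$, and (ii) every tail sum $E_k(\v{r})$ dominates $E_k(\v{q})$ for $k<k^\star$. This is where the minimality of $k^\star$ is used essentially: it guarantees that for every $k<k^\star$ one has $E_k(\v{p})/E_k(\v{q}) > p^\star$, leaving enough mass on the upper entries to satisfy the majorisation constraints. A careful inductive argument (essentially the construction of Vidal, Jonathan and Nielsen revisited in the text before Eq.~\eqref{eq:probal_crit_IO}) closes this gap; the monotone half, by contrast, is almost automatic once the convexity of each $E_k$ on the simplex is established.
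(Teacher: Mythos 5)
First, a point of orientation: the paper does not prove Theorem~\ref{Thm:vidal}. It states it with a citation to Ref.~\cite{vidal1999entanglement} and immediately uses it in Section~\ref{app:probal_deriv} to derive the probabilistic cones; there is no paper-internal proof to compare against. Your proposal reconstructs the skeleton of Vidal's original argument — identify the tail sums $E_k(\v{r}) = \sum_{j\geq k} r_j^{\downarrow}$ as entanglement monotones, deduce the upper bound $p \leq E_k(\v{p})/E_k(\v{q})$ for every $k$ from their average non-increase, and then exhibit a two-outcome local filter that saturates the minimum — and that skeleton is the right one.

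Two steps, however, would not survive scrutiny as written. \textbf{(i)} The monotonicity of $E_k$ under LOCC does \emph{not} reduce to Nielsen's deterministic criterion applied outcome by outcome. Nielsen's theorem characterises which conversions are possible; it says nothing about how an entanglement quantifier behaves on a measurement ensemble. The fact you actually need is that a local measurement turning $\ket{\psi}$ into an ensemble $\{\pi_i,\v{r}_i\}$ satisfies $\v{p} \prec \sum_i \pi_i \v{r}_i^{\downarrow}$, which follows from the convex decomposition $\rho_B = \sum_i \pi_i \sigma_B^i$ of Bob's reduced state together with Ky Fan's maximum principle for eigenvalue partial sums. Combined with the Schur-\emph{concavity} (not convexity — $E_k = 1 - \max_{|S|=k-1}\sum_{j\in S} p_j$ is concave on the simplex) of $E_k$ and its linearity on ordered vectors, this gives $E_k(\v{p}) \geq \sum_i \pi_i E_k(\v{r}_i)$, after which your chain of inequalities is fine. \textbf{(ii)} For achievability, the intermediate vector $\v{r}$ must satisfy $r_j \leq p_j^{\downarrow}/p^{\star}$ for \emph{every} $j$ — not just the tail $j\geq k^{\star}$ — so that the Kraus operator you wrote down really obeys $A_1^\dagger A_1 \leq $ the identity; together with ordering and $\v{r}\prec\v{q}$ this is a genuine simultaneous system of constraints, and verifying it is exactly where the minimality of $k^{\star}$ is consumed. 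You defer this to ``the construction of Vidal, Jonathan and Nielsen revisited in the text before Eq.~\eqref{eq:probal_crit_IO},'' but that algorithm (reviewed in Chapter~\ref{C:mathematical_preliminaries}) solves a different problem — optimal \emph{approximate deterministic} majorisation — and is not what fills this gap; the needed construction is the inductive one in Vidal's original paper.
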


\begin{figure*}
\includegraphics[width=16.107cm]{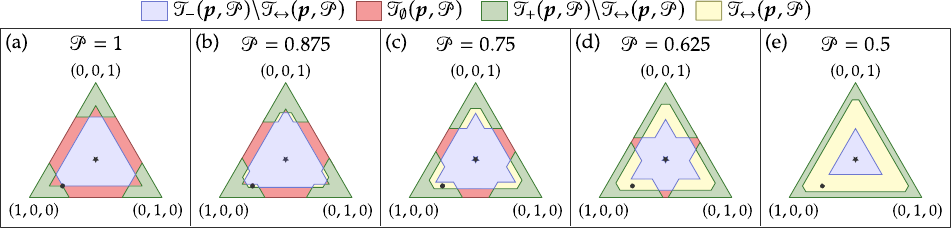}
    \caption{\emph{Probabilistic majorisation cones for $d=3$}. For a three-level system with a state given by $\v{p} = (0.7,0.2,0.1)$ represented by a black dot~$\bullet$ and a maximally entangled state $\v \eta = (1/3,1/3,1/3)$ represented by a black star $\bigstar$, we plot its probabilistic majorisation cone for probabilities of transformation $\mathcal{P}$ decreasing from  $1$ to $0.5$ with $0.125$ steps (a-e), respectively. Observe that, for $\mathcal{P} = 1$, we recover the structure of the standard majorisation cones, while as $\mathcal{P}$ decreases the interconvertible region $\mathcal{T}_{\leftrightarrow}(\v{p},\mathcal{P})$ expands and the incomparable region $\mathcal{T}_{\emptyset}(\v{p},\mathcal{P})$ shrinks, disappearing altogether between panel (d)~and~(e)}.
    \label{fig-probabilistic-cones-examples} 
\end{figure*}

In Section~\ref{app:probal_deriv}, we discuss the extension of majorisation cones to probabilistic ones, denoted as $\mathcal{T}_i(\v p; \mathcal{P})$, with \mbox{$i \in \{-, \emptyset, + \}$} and $\mathcal{P}$ being the minimal probability of transformation. Here we will limit ourselves to a brief qualitative discussion
about the behaviour of the probabilistic majorisation cones as the transformation probability changes from $\mathcal{P}(\cdot,\cdot) = 1$ to $\mathcal{P}(\cdot,\cdot) < 1$ (see Fig.~\ref{fig-probabilistic-cones-examples}). Note that the only common points of the future and past for $\mathcal{P} = 1$ are the current state of the system $\v p$ and its permutations. Conversely, for $\mathcal{P} < 1$ this is not the only case; consequently, we may define the \textit{interconvertible region} of $\v p$ at the probability level $\mathcal{P}$ as the intersection between the probabilistic past and probabilistic future, $\mathcal{T}_{\leftrightarrow}(\v p, \mathcal{P}) \equiv \mathcal{T}_+(\v p, \mathcal{P}) \cap \mathcal{T}_-(\v p, \mathcal{P})$. This region is non-empty for every $\mathcal{P} < 1$. It is easily shown that the future and the past cones grow as the probability of transformation decreases, $\mathcal{T}_+(\v p, \mathcal{P}') \subset \mathcal{T}_+(\v p, \mathcal{P}) $ and $\mathcal{T}_-(\v p, \mathcal{P}') \subset \mathcal{T}_-(\v p, \mathcal{P})$ for $\mathcal{P}' > \mathcal{P}$. Therefore, the only region that decreases together with $\mathcal{P}$ is the incomparable region, $\mathcal{T}_\emptyset(\v p, \mathcal{P}') \supset \mathcal{T}_\emptyset(\v p, \mathcal{P})$. Interestingly, for every state~$\v p$, we observe that there is a critical value $\mathcal{P}^*$, at which no two states are incomparable, i.e., $\mathcal{T}_\emptyset(\v p, \mathcal{P}) = \emptyset$~[see Figs.~\hyperref[fig-probabilistic-cones-examples]{\ref{fig-probabilistic-cones-examples}a,~e}]. Analogous results hold in the context of coherence, as Theorem~\ref{Thm:vidal} has its counterpart when considering pure state transformations under IO operations~\cite{ZhuEtAl2017coherence,Cunden2021}.

Finally, it is worth mentioning, that a criterion similar to the Vidal's criterion was established for probabilistic transformation in the context of thermal operations~\cite{AOP16}. In this case, the construction of the probabilistic cones for majorisation generalises directly to the thermomajorisation by using the construction of thermomajorisation cones which we will introduce in the next section. 

\section{Thermal cones}~\label{Sec:thermal-cones}

Let us turn our attention to a more general scenario, assuming that the temperature is finite, $\beta> 0$. In this case, the reachability of energy-incoherent states under Gibbs-preserving matrices can be studied by introducing the notion of \emph{thermal cones}, defined as follows:

\begin{definition}[Thermal cones]\label{def_Thermal_cones}
The set of states that an energy-incoherent state $\v p$ can be mapped into by Gibbs-preserving matrices is called the \textbf{future thermal cone} $\T^{\, \beta}_+(\v{p})$. Similarly, the set of states that can be mapped to $\v p$ by Gibbs-preserving matrices is called the \textbf{past thermal cone} $\T^{\, \beta}_-(\v{p})$. Finally, the set of states that are neither in the past nor in the future of $\v p$ is called the \textbf{incomparable thermal region} $\T^{\, \beta}_{\emptyset}(\v{p})$.
\end{definition}

Despite apparent similarities, the case of $\beta > 0$ turns out to be significantly harder than $\beta=0$. Difficulties stem mostly from a simple fact demonstrated in Ref.~\cite{Korzekwa2017} -- even though thermomajorisation forms a lattice in each $\beta$-order, it does not provide a common lattice for the entire probability simplex. Thus, before extending Lemma~\ref{Lem:Incomparable_region0} and Theorem~\ref{Thm:past_cone} to the thermal setting (proofs of which rely heavily on the existence of a join), we will introduce an \emph{embedding lattice} -- a structure which encompasses thermomajorisation order as its subset -- and we will demonstrate operations shifting to and from the newly introduced picture.

\subsection{Embedding lattice} 
\label{Subsec:embedding-lattice}

To define a $\beta$-dependent embedding $\mathfrak{M}$ of the simplex $\Delta_d$ into a subspace $\Delta^{\mathfrak{M}}_d\subset\Delta_{2^d - 1}$, illustrated by a graphical example in Fig.~\ref{fig-embeddingscheme}, we first introduce the vector $\v \Gamma^{\mathfrak{M}}$ with entries given by all possible partial sums of the Gibbs distribution,
\begin{equation}
    \v \Gamma^{\mathfrak{M}} = \left\{\sum_{i\in I}\gamma_i:\,I\in2^{\{1,\hdots,d\}}\right\},
\end{equation}
with $2^{\{1,\hdots,d\}}$ denoting the power set of $d$ indices. Moreover, we enforce that it is ordered non-decreasingly, i.e., for \mbox{$i > j$} we have \mbox{$\Gamma^{\mathfrak{M}}_i \geq \Gamma^{\mathfrak{M}}_j$}. Then, the embedded probability vector, \mbox{$\v{p}^{\mathfrak{M}}:=\mathfrak{M}(\v p)$}, is defined by 
\begin{align}
    p^{\mathfrak{M}}_i & = f^{\, \beta}_{\v p}\left(\Gamma^{\mathfrak{M}}_i\right) -f^{\, \beta}_{\v p}\left(\Gamma^{\mathfrak{M}}_{i-1}\right) ,
\end{align}
where $f^{\, \beta}_{\v p}$ is the thermomajorisation curve of $\v{p}$.

Within this embedding, the thermomajorisation indeed proves to be almost-standard majorisation relation between the embedded distributions,  
\begin{equation} \label{eq_embedSpace_majorisationDefinition}
    \v p \succ_\beta \v q ~\Longleftrightarrow~ \forall j:~\sum_{i=1}^j p^{\mathfrak{M}}_i \geq \sum_{i=1}^j q^{\mathfrak{M}}_i ~\overset{\text{def}}{\Longleftrightarrow}~\v {p}^{\mathfrak{M}}\succ_{\mathfrak{M}}\v{q}^{\mathfrak{M}},
\end{equation}
where the last symbol $\succ_{\mathfrak{M}}$ denotes the majorisation variant related to the embedding lattice. Finally, we note that the only deviation from the standard majorisation lies in the convexity condition in the embedding space, which is imposed not on the probabilities $p^{\mathfrak{M}}_i$ themselves, but on their rescaled versions, 

\begin{equation} \label{eq_embedSpace_orderingRule}
    i \leq j \Rightarrow \frac{p^{\mathfrak{M}}_i}{\gamma^{\mathfrak{M}}_i} \geq \frac{p^{\mathfrak{M}}_j}{\gamma^{\mathfrak{M}}_j} ,
\end{equation}
with scaling factors directly related to the embedded majorisation curve of the Gibbs state $\v{\gamma}^{\mathfrak{M}}$. This ordering should be compared (but not confused) with the $\beta$-ordering introduced in Eq.~\eqref{Eq:beta-ordering}, pointing to a relation with thermomajorisation which we will use to show the lattice structure of the introduced space.
\begin{figure*}
    \centering
    \includegraphics{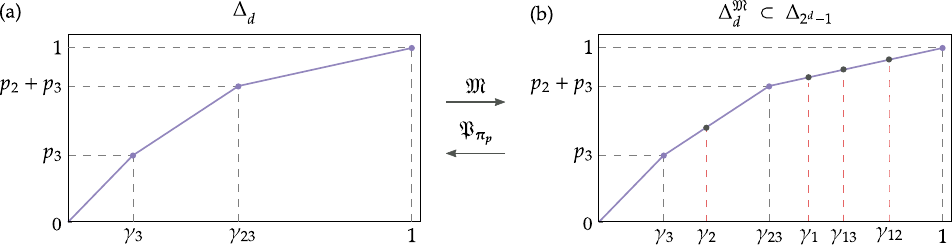}
    \caption{\label{fig-embeddingscheme} \emph{Embedded majorisation curve}. Embedding $\mathfrak{M}:\Delta_d\rightarrow\Delta_{2^d-1}$ of $d$-dimensional probability distribution $\v{p}$ (here $d=3$) into a $2^d-1 = 7$-dimensional space is most easily understood by noting that each thermomajorisation curve (left) has elbows corresponding to a subset of $\v{\Gamma}^{\mathfrak{M}}$ on the horizontal axis. The embedding includes all entries of $\v{\Gamma}^{\mathfrak{M}}$ by subdividing the Lorenz curve into $2^d - 1$ fragments. Conversely, the projection $\mathfrak{P}_{\v \pi_{\v p}}$ corresponds to selecting only a subset of elbows that correspond to a selected order, in this case the original order $\v{\pi}_{\v p}$. In the $x$-axes we used the shorthand notation \mbox{$\gamma_{ij} = \gamma_i + \gamma_j$}. }
\end{figure*}
The projection $\mathfrak{P}_{\v \pi}$ of an arbitrary probability vector \mbox{$\v{q}\in\Delta_{2^d-1}$} satisfying Eq.~\eqref{eq_embedSpace_orderingRule} onto a selected $\beta$-order $\v \pi$ in the original space can be defined descriptively as taking only those elbows of the embedded majorisation curve that match the values of cumulative Gibbs distribution for the selected permutation. Formally, the projected vector, \mbox{$\v q^{\mathfrak{P}}_{\v{\pi}}:=\left(\mathfrak{P}_{\v \pi}\left(\v q\right)\right)^{\, \beta}$}, is entry-wise defined by 
\begin{equation} \label{eq_embedding_porj_constr}
   \left(q^\mathfrak{P}_{\v{\pi}}\right)_i = \sum^{k(i)}_{j = k(i-1)} q_j ,
\end{equation}
with the indices $k(i)$ defined by the requirement that \mbox{$\Gamma^\mathfrak{M}_{k(i)} = \sum_{j=1}^i \gamma_{\v \pi^{-1}(j)}$}. 

In particular, it is worth noting two properties of the embedding $\mathfrak{M}$ and projections $\mathfrak{P}_{\v \pi}$. First, given a vector $\v p \in\Delta_d$ with a $\beta$-order $\v \pi_{\v p}$, we find that by construction $\mathfrak{P}_{\v \pi_{\v p}}\left(\mathfrak{M}\left(\v p\right)\right) = \v p$, which follows directly from Eq.~\eqref{eq_embedding_porj_constr}. On the other hand, for $\v \pi \neq \v \pi_{\v p}$  we find that $\v p \succ_\beta \mathfrak{P}_{\v \pi}\left(\mathfrak{M}\left(\v p\right)\right)$. The statement is easily shown by observing that the Lorenz curve of $\mathfrak{P}_{\v \pi}\left(\mathfrak{M}\left(\v p\right)\right)$ connects by line segments $d+1$ points of the Lorenz curve corresponding $f^\beta_{\v p}\left(\sum_i \left(\mathfrak{M}(p)^\mathfrak{P}_{\v{\pi}}\right)_i\right)$ and therefore majorisation is resolved by linear approximation of a convex function, $f[(1-t)x + t y] \geq (1-t)f(x) + tf(y)$ for any $x, y$. The second property is concerned with $\v q \in\Delta_{2^d-1}$ satisfying Eq.~\eqref{eq_embedSpace_orderingRule} and can be summarised as the fact that projecting and re-embedding the vector will always give the object majorised by the original vector: $\v q\succ_\mathfrak{M} \mathfrak{M}(\mathfrak{P}_{\v \pi}(\v q))$. It follows similarly to the prior majorisation by the argument of linear approximation of a convex function.

It is necessary to stress that the introduced embedding structure is distinct from the one used in the usual method of reducing thermomajorisation to majorisation~\cite{horodecki2013fundamental}. Most importantly, the standard approach requires going to the limit of infinite embedding dimension in order to recover thermomajorisation for arbitrary $\beta$ as a special case of standard majorisation. In our proposition, a thermomajorisation curve in dimension $d$ is embedded within a $2^d-1$-dimensional space. The main difference lies in the non-constant widths of the segments of the Lorenz curve and the fact that once the embedded vector $\v{p}^{\mathfrak{M}}$ is constructed, its entries should not be subject to reordering. Finally, we present the argument proving that the embedding together with embedded majorisation indeed provide a lattice structure.
\begin{corollary}
    The subset of the probability simplex $\Delta_{2^d - 1}$ satisfying Eq.~\eqref{eq_embedSpace_orderingRule} and subject to the embedded majorisation $\succ_\mathfrak{M}$ defined in Eq.~\eqref{eq_embedSpace_majorisationDefinition} forms a lattice.
\end{corollary}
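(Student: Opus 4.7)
The plan is to build the meet and join explicitly through a geometric picture of embedded Lorenz curves and then verify the lattice axioms. To each vector $\v{p}^{\mathfrak{M}}$ satisfying Eq.~\eqref{eq_embedSpace_orderingRule} I would associate the piecewise-linear curve $F_{\v{p}^{\mathfrak{M}}}\colon [0,1]\to[0,1]$ obtained by linear interpolation between the points $\bigl(\sum_{j\leq k}\gamma^{\mathfrak{M}}_j,\,\sum_{j\leq k}p^{\mathfrak{M}}_j\bigr)$ for $k=0,1,\dots,2^d-1$. The constraint in Eq.~\eqref{eq_embedSpace_orderingRule} is precisely the statement that the slopes $p^{\mathfrak{M}}_i/\gamma^{\mathfrak{M}}_i$ are non-increasing in $i$, i.e., that $F_{\v{p}^{\mathfrak{M}}}$ is concave. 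Conversely, any concave piecewise-linear function on $[0,1]$ with breakpoints contained in $\{\Gamma^{\mathfrak{M}}_k\}$ and boundary values $F(0)=0$, $F(1)=1$ determines a unique admissible vector via consecutive differences. Under this correspondence, the relation $\v{p}^{\mathfrak{M}}\succ_{\mathfrak{M}}\v{q}^{\mathfrak{M}}$ of Eq.~\eqref{eq_embedSpace_majorisationDefinition} becomes $F_{\v{p}^{\mathfrak{M}}}\geq F_{\v{q}^{\mathfrak{M}}}$ pointwise on $[0,1]$, from which reflexivity, transitivity and antisymmetry of $\succ_{\mathfrak{M}}$ are immediate.

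First I would construct the meet by setting $F_{\v{p}^{\mathfrak{M}}\wedge\v{q}^{\mathfrak{M}}}:=\min\{F_{\v{p}^{\mathfrak{M}}},F_{\v{q}^{\mathfrak{M}}}\}$; since the pointwise minimum of concave functions is concave and non-decreasing, this defines an admissible curve, and by construction it is the largest curve lying below both $F_{\v{p}^{\mathfrak{M}}}$ and $F_{\v{q}^{\mathfrak{M}}}$. For the join, I would set $\tilde F:=\max\{F_{\v{p}^{\mathfrak{M}}},F_{\v{q}^{\mathfrak{M}}}\}$ and take $F_{\v{p}^{\mathfrak{M}}\vee\v{q}^{\mathfrak{M}}}$ to be its least concave majorant. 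The least-upper-bound and greatest-lower-bound properties then follow from elementary facts: any concave $G$ with $G\geq F_{\v{p}^{\mathfrak{M}}},F_{\v{q}^{\mathfrak{M}}}$ satisfies $G\geq\tilde F$, and hence, by minimality, $G\geq F_{\v{p}^{\mathfrak{M}}\vee\v{q}^{\mathfrak{M}}}$; the symmetric statement handles the meet.

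The hard part will be verifying that the two constructed curves really correspond to vectors in $\Delta_{2^d-1}$ obeying Eq.~\eqref{eq_embedSpace_orderingRule}, rather than to arbitrary concave functions on $[0,1]$. This reduces to two combinatorial observations: (i) the least concave majorant of a piecewise-linear function whose kinks lie in a finite set $S$ is again piecewise linear with breakpoints drawn from $S$, so the join curve uses only breakpoints in $\{\Gamma^{\mathfrak{M}}_k\}$ (components of the join corresponding to bypassed breakpoints simply vanish); and (ii) the boundary conditions $F(0)=0$, $F(1)=1$ are preserved under both $\min$ and the concave-majorant operation, guaranteeing normalisation. Together with the concavity of both output curves, this shows the admissible subset is closed under $\wedge$ and $\vee$, and hence $\succ_{\mathfrak{M}}$ endows it with a lattice structure.
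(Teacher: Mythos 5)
Your Lorenz-curve construction is a genuinely different route from the paper's, which simply observes that $\succ_{\mathfrak M}$ is isomorphic to thermomajorisation with respect to $\v{\gamma}^{\mathfrak M}$ restricted to a single Weyl chamber of $\Delta_{2^d-1}$ and then invokes the known lattice structure of that order. Your approach is more self-contained and yields explicit meet and join, which is what one actually needs when the embedding lattice is put to work in the finite-temperature derivation.

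However, the construction of the meet has a gap. The pointwise minimum $\min\{F_{\v{p}^{\mathfrak M}},F_{\v{q}^{\mathfrak M}}\}$ is concave and non-decreasing, but it is not in general piecewise linear with breakpoints confined to $\{\Gamma^{\mathfrak M}_k\}$: on an interval $[\Gamma^{\mathfrak M}_{k-1},\Gamma^{\mathfrak M}_{k}]$ the two line segments may cross at an interior point, creating a kink there. Concretely, with $\{\Gamma^{\mathfrak M}_k\}=\{0,\tfrac14,\tfrac12,1\}$, $F_{\v{p}^{\mathfrak M}}$ through $(0,0.6,0.8,1)$ and $F_{\v{q}^{\mathfrak M}}$ through $(0,0.5,0.9,1)$ at those abscissae (both concave), the pointwise minimum has a kink at $x=3/8\notin\{\Gamma^{\mathfrak M}_k\}$. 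Such a curve corresponds to no vector in the admissible set, so defining the meet this way exits the set you are trying to show is a lattice. The fix is to take instead the linear interpolation of the points $\bigl(\Gamma^{\mathfrak M}_k,\min(P_k,Q_k)\bigr)$, where $P_k$ and $Q_k$ are partial sums of $\v{p}^{\mathfrak M}$ and $\v{q}^{\mathfrak M}$. This curve is admissible: if, say, $\min(P_k,Q_k)=P_k$, then the incoming slope at $\Gamma^{\mathfrak M}_k$ is at least $(P_k-P_{k-1})/\gamma^{\mathfrak M}_k$, the outgoing slope is at most $(P_{k+1}-P_k)/\gamma^{\mathfrak M}_{k+1}$, and these two compare by concavity of $F_{\v{p}^{\mathfrak M}}$; and it is by construction the greatest admissible curve below both at every breakpoint, hence the meet. (Your join is fine: a crossing-point kink of $\max\{F_{\v{p}^{\mathfrak M}},F_{\v{q}^{\mathfrak M}}\}$ is always convex and so bypassed by the least concave majorant, though your observation (i) glosses over the fact that $\max\{F_{\v{p}^{\mathfrak M}},F_{\v{q}^{\mathfrak M}}\}$ does not itself have all its kinks in $S$.)
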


\begin{proof}
The embedded majorisation $\succ_\mathfrak{M}$ may be reinterpretted as thermomajorisation defined for a specific Gibbs state $\v{\gamma}^\mathfrak{M}$ and restricted to a particular Weyl chamber of the probability space $\Delta_{2^d - 1}$ by comparing Eq.~\eqref{eq_embedSpace_orderingRule} with an analogous sorting rule from Eq.~\eqref{Eq:beta-ordering}. This direct isomorphism between thermomajorisation order $\succ_\beta$ restricted to a single Weyl chamber, known to provide a lattice structure, and the embedded majorisation $\succ_\mathfrak{M}$ proves the statement.
\end{proof}

\subsection{Geometry of thermal cones}~\label{Subsec:geometry-of-thermal-cones}

As already mentioned, for finite temperatures the rules underlying state transformations are no longer captured by a majorisation relation, but rather by its thermodynamic equivalent known as thermomajorisation~\cite{Rusch,horodecki2013fundamental}. As a result, Birkhoff’s theorem cannot be employed anymore, and the characterisation of the future thermal cone is no longer given by Theorem~\ref{Thm:Future_majoristion_cone}. However, the set of Gibbs-preserving matrices still forms a convex set~\cite{mazurek2018decomposability,mazurek2019thermal}, and the extreme points of the future thermal cone can be constructed by employing the following lemma:
\begin{lemma}[Lemma 12 of Ref.~\cite{Lostaglio2018elementarythermal}]
    \label{lem_extreme}
	Given $\v{p}$, consider the following distributions $\v{p}^{\, \v \pi}\in \T^{\, \beta}_{+}(\v{p})$ constructed for each permutation $\v \pi\in \S_d$. For $i\in\left\{1,\dots,d\right\}$:
	\begin{enumerate}
		\item Let $x_i^{\v \pi}=\sum_{j=0}^{i} e^{-\beta E_{\v \pi^{-1}\left(j\right)}}$ and $y_i^{\v \pi}=f^{\, \beta}_{\v{p}}\left(x_i^{\v \pi}\right)$. 
		\item Define $p^{\v \pi}_i:=y^{\v \pi}_{\v \pi(i)} - y^{\v \pi}_{\v \pi(i)-1}$, with $y_{0}:=0$.
	\end{enumerate}
	Then, all extreme points of $\T^{\, \beta}_{+}(\v{p})$ have the form $\v{p}^{\v \pi}$ for some ${\v \pi}$. In particular, this implies that $\T^{\, \beta}_{+}(\v{p})$ has at most $d!$ extremal points.
\end{lemma}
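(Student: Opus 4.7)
The plan is to proceed in three steps: establishing (i) convexity and containment of each candidate $\v{p}^{\v\pi}$ in $\T^{\, \beta}_{+}(\v p)$, (ii) extremality of every $\v{p}^{\v\pi}$, and (iii) exhaustiveness, i.e.\ that no other extreme points exist. First I would observe that $\T^{\, \beta}_{+}(\v p)$ is convex, since the set of Gibbs-preserving matrices is convex and its image of the single point $\v p$ inherits this property. To verify $\v{p}^{\v\pi}\in\T^{\, \beta}_{+}(\v p)$, I would compare thermomajorisation curves: by construction, $f^{\, \beta}_{\v{p}^{\v\pi}}$ is the piecewise-linear curve that interpolates the $d+1$ knots $(x_i^{\v\pi}, y_i^{\v\pi})$, all of which lie on the graph of $f^{\, \beta}_{\v p}$. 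Because $f^{\, \beta}_{\v p}$ is concave, the chords joining these knots lie pointwise below the graph, hence $f^{\, \beta}_{\v{p}^{\v\pi}}(x)\leq f^{\, \beta}_{\v p}(x)$ for every $x\in[0,1]$, i.e.\ $\v p\succ_\beta \v{p}^{\v\pi}$.

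For the second step I would argue extremality by counting active constraints within the chamber $C_{\v\pi}$ consisting of vectors whose $\beta$-ordering is $\v\pi$. Once the ordering is fixed, the relation $\v p\succ_\beta \v q$ reduces to $d-1$ linear inequalities in $\v q$. At $\v q=\v{p}^{\v\pi}$ all of them are saturated (the interpolating curve agrees with $f^{\, \beta}_{\v p}$ at every interior knot by construction), and together with normalisation $\sum_i p^{\v\pi}_i=1$ this yields $d$ linearly independent active equalities in the $d$-dimensional ambient affine space. Hence $\v{p}^{\v\pi}$ is a vertex of $C_{\v\pi}$; because $C_{\v\pi}\subseteq\T^{\, \beta}_{+}(\v p)$ and $\v{p}^{\v\pi}$ sits on the relative boundary of the cone, no convex decomposition within $\T^{\, \beta}_{+}(\v p)$ can reproduce it, so it is extremal in the whole cone.

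For exhaustiveness I would slice $\T^{\, \beta}_{+}(\v p)=\bigcup_{\v\pi\in\S_d} C_{\v\pi}$ and inspect the vertex structure of each chamber polytope. Every vertex of $C_{\v\pi}$ either activates all $d-1$ thermomajorisation bounds and therefore coincides with $\v{p}^{\v\pi}$, or activates strictly fewer of them while lying on a chamber wall, in which case it is simultaneously a vertex of the adjacent chamber polytope $C_{\v\sigma}$, where the missing thermomajorisation bound is replaced by a saturated chamber inequality and the vertex coincides with the corresponding $\v{p}^{\v\sigma}$. Either way, every extreme point of the union belongs to the finite family $\{\v{p}^{\v\pi}:\v\pi\in\S_d\}$, bounding their number by $d!$.

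The main obstacle will be making the chamber-boundary bookkeeping in step three fully rigorous, since a vertex of the glued polytope could in principle arise from a generic intersection of active thermomajorisation faces with active chamber walls that does not immediately match some $\v{p}^{\v\sigma}$. The cleanest way to sidestep this is to lift the problem to the embedding lattice of Subsection~\ref{Subsec:embedding-lattice}: under $\mathfrak{M}$ the thermal cone becomes a standard-majorisation polytope inside the subset of $\Delta_{2^d-1}$ satisfying Eq.~\eqref{eq_embedSpace_orderingRule}, whose vertices are controlled by a Birkhoff-type argument, and whose projections $\mathfrak{P}_{\v\pi}$ back to $\Delta_d$ yield precisely the family $\{\v{p}^{\v\pi}\}$.
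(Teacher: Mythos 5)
The thesis does not prove this lemma: it is quoted verbatim as Lemma~12 of Ref.~\cite{Lostaglio2018elementarythermal}, and the text explicitly defers to that reference. So there is no in-paper proof to compare your proposal against; I can only evaluate the proposal on its own terms.

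Your step~(i) is fine, and the concavity-of-$f^{\beta}_{\v p}$ observation is indeed the standard way to get containment. But two things go wrong thereafter. First, in step~(ii), the inference \emph{``$\v{p}^{\v\pi}$ sits on the relative boundary of the cone, so no convex decomposition within $\T^{\,\beta}_{+}(\v p)$ can reproduce it''} is false as stated: a point can be a vertex of $C_{\v\pi}\cap\T^{\,\beta}_{+}(\v p)$ and lie on $\partial\T^{\,\beta}_{+}(\v p)$ without being an extreme point of the full cone (think of cutting a square in half; midpoints of the outer edges are vertices of each half and boundary points of the square, yet not extreme points of the square). The correct argument here needs the concavity again: if $\v{p}^{\v\pi}=\lambda\v a+(1-\lambda)\v b$, compare the $\v\pi$-ordered partial sums $S^{\v\pi}_i$ of $\v a$ and $\v b$ with $f^{\beta}_{\v p}(x^{\v\pi}_i)$ and force all inequalities to collapse. (That said, step~(ii) is not even required: the lemma only asserts that all extreme points are of the form $\v{p}^{\v\pi}$, not the converse.)

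Second, and more seriously, step~(iii) — the only step the lemma actually requires — is left as a sketch with an acknowledged bookkeeping gap, and the proposed repair via the embedding lattice of Subsection~\ref{Subsec:embedding-lattice} does not obviously work. Under $\mathfrak{M}$ the image is \emph{not} a standard majorisation polytope to which Birkhoff's theorem applies; the thesis is careful to point out that the ordering/convexity rule of Eq.~\eqref{eq_embedSpace_orderingRule} is $\gamma^{\mathfrak{M}}$-weighted rather than uniform, that the entries must not be reordered, and that $\succ_{\mathfrak{M}}$ is only isomorphic to thermomajorisation restricted to a single Weyl chamber — not globally. So the claimed Birkhoff-type control of vertices after lifting and projecting is not available off the shelf. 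The standard route (what Ref.~\cite{Lostaglio2018elementarythermal} actually does) is a perturbation argument directly in $\Delta_d$: take an extreme point $\v q$ with $\beta$-order $\v\pi$; if some interior knot constraint $S^{\v\pi}_i(\v q)<f^{\beta}_{\v p}(x^{\v\pi}_i)$ is slack, move a small amount of probability between the levels $\v\pi^{-1}(i)$ and $\v\pi^{-1}(i+1)$ in both directions. Only the $i$-th partial sum changes, so for small enough perturbation both perturbed states remain in $\T^{\,\beta}_{+}(\v p)$ (with care when the $\beta$-order is at a chamber boundary), contradicting extremality. Your draft has the right scaffolding but does not yet contain this decisive step.
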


The above lemma allows one to characterise the future thermal cone of $\v p$ by constructing states $\v p^{\, \v \pi}$ for each $\v \pi \in \mathcal S_d$, and taking their convex hull. It is worth mentioning that $\T^{\, \beta}_{+}(\v p)$ can also be constructed by finding the whole set of extremal Gibbs-preserving matrices~\cite{Lostaglio2018elementarythermal, mazurek2018decomposability}. This follows the same spirit as in Sec.~\ref{Sec:majorisation-cones}, where the majorisation cone was characterised by employing Theorem~\ref{Thm:HLP}. However, this is a harder problem to solve, and so the extremal Gibbs-preserving matrices were characterised only for $d \leq 3$~\cite{mazurek2018decomposability,mazurek2019thermal}. Here, we provide the construction of $\T^{\, \beta}_{+}(\v p)$ as a simple corollary of Lemma~\ref{lem_extreme}.
\begin{figure*}
    \centering
    \includegraphics{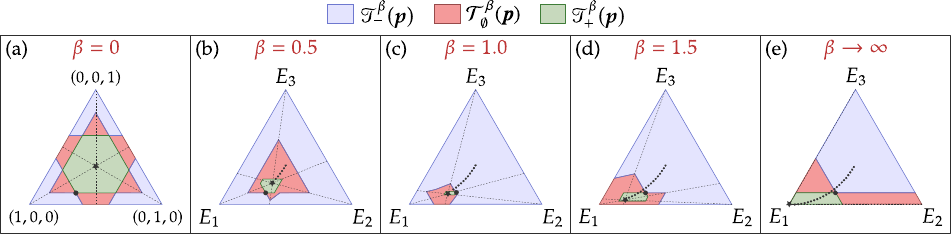}
    \caption{\emph{Thermal cones for $d=3$}. For a three-level system with population given by $\v p = (0.4, 0.36, 0.24)$, represented by a black dot $\bullet$, and energy spectrum $E_1 = 0, E_2 = 1$ and $E_3 = 2$, we plot its thermal cone for (a) $\beta = 0$, (b) $\beta = 0.5$, (c) $\beta = 1.0$ and (d) $\beta \to \infty$. By increasing $\beta$, the thermal state (black star $\bigstar$) tends toward the ground state $E_1$, and the past thermal cone becomes convex.} 
    \label{fig-thermal-cones-examples} 
\end{figure*}
\begin{corollary}[Future thermal cone]
\label{thm_futurefinite}
The future thermal cone of a $d$-dimensional energy-incoherent state $\v p$ is given by
\begin{equation}
\T^{\,\beta}_+(\v p) = \operatorname{conv}[\{\v p^{\v \pi}, \v \pi\in\S_d\}] \, .
\end{equation}
\end{corollary}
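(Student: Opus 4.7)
The plan is to leverage Lemma~\ref{lem_extreme} together with a straightforward convexity argument. First I would establish that the future thermal cone $\T^{\,\beta}_+(\v p)$ is a convex set. This follows directly from the convexity of the set of Gibbs-preserving stochastic matrices: if $\Lambda_1^\beta$ and $\Lambda_2^\beta$ both preserve $\v\gamma$ and both map $\v p$ to elements of $\T^{\,\beta}_+(\v p)$, then so does any convex combination $\lambda \Lambda_1^\beta + (1-\lambda)\Lambda_2^\beta$ for $\lambda\in[0,1]$, since preservation of $\v\gamma$ and stochasticity are both linear (equality) constraints and non-negativity is preserved under convex combinations. Hence the image $\{\Lambda^\beta \v p : \Lambda^\beta \v\gamma = \v\gamma\}$ is convex.

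Next I would show that $\T^{\,\beta}_+(\v p)$ is a compact polytope. Compactness follows because it is a closed subset (defined by the finite set of thermomajorisation inequalities $\v p \succ_\beta \v q$, via Theorem~\ref{Thm:HLPbeta}) of the bounded probability simplex $\Delta_d$. It is a polytope because these defining inequalities are finite and linear in $\v q$ once the $\beta$-order of $\v q$ is fixed, and there are only $d!$ such orderings to consider.

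With convexity and compactness in hand, the Minkowski–Krein–Milman theorem (or, more elementarily, the standard fact that a bounded convex polytope equals the convex hull of its vertices) tells me that $\T^{\,\beta}_+(\v p)$ is the convex hull of its extreme points. Lemma~\ref{lem_extreme} identifies those extreme points: they are precisely the distributions $\v p^{\v \pi}$ constructed, one per permutation $\v\pi\in\S_d$, by reading off the heights of $f^{\,\beta}_{\v p}$ at the corners $x_i^{\v \pi}$ of the thermomajorisation curve corresponding to the $\beta$-order $\v\pi$. Combining these two facts yields
\begin{equation}
\T^{\,\beta}_+(\v p) \,=\, \operatorname{conv}\bigl[\{\v p^{\v \pi} : \v\pi \in \S_d\}\bigr],
\end{equation}
which is exactly the claim of Corollary~\ref{thm_futurefinite}.

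The only subtle step is the appeal to Lemma~\ref{lem_extreme}, whose proof (given in Ref.~\cite{Lostaglio2018elementarythermal}) is the real content; everything else is a routine application of convex geometry. No genuinely new obstacle arises here, since the corollary is essentially a reformulation of the lemma under the observation that the reachable set is a convex polytope rather than merely a convex set with distinguished vertices.
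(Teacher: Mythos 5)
Your argument is correct and matches the paper's intent: the corollary is stated there as an immediate consequence of Lemma~\ref{lem_extreme} together with the (already known) convexity of $\T^{\,\beta}_+(\v p)$, which is exactly the structure you spell out. One small wording nitpick: Lemma~\ref{lem_extreme} guarantees that every extreme point is \emph{of the form} $\v p^{\v\pi}$, not that every $\v p^{\v\pi}$ is extreme, but since all $\v p^{\v\pi}$ lie in $\T^{\,\beta}_+(\v p)$ the convex-hull identity you conclude still holds.
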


Furthermore, one can use the embedding lattice to provide an alternative formulation for the future thermal cone:
\begin{observation}
    Since $\v{p}^{\v{\pi}} = \mathfrak{P}_{\v{\pi}}(\mathfrak{M}(\v{p}))$, the future thermal cone of an energy-incoherent state $\v p$ can be expressed in terms of all possible projections from the related embedding,
    \begin{equation}
        \mathcal{T}^\beta_+(\v{p}) = \operatorname{conv}\left[\left\{\mathfrak{P}_{\v{\pi}}(\mathfrak{M}(\v{p}))\,, \v \pi\in\mathcal{S}_d\right\}\right].
    \end{equation}
\end{observation}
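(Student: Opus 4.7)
The goal is to verify the identity $\v{p}^{\v{\pi}} = \mathfrak{P}_{\v{\pi}}(\mathfrak{M}(\v{p}))$ asserted in the observation; once this is in hand, substituting it into Corollary~\ref{thm_futurefinite} yields the displayed equation for free. My plan is to compute $\mathfrak{P}_{\v{\pi}}(\mathfrak{M}(\v{p}))$ directly from the definitions in Section~\ref{Subsec:embedding-lattice} and match it entry-by-entry to the $\beta$-ordered vector $\v{p}^{\v{\pi}}$ of Lemma~\ref{lem_extreme}. Since $\mathfrak{P}_{\v{\pi}}(\mathfrak{M}(\v{p}))$ is defined already in $\beta$-ordered form, it suffices to show $(\mathfrak{P}_{\v{\pi}}(\mathfrak{M}(\v{p})))^{\,\beta}_i = y^{\v\pi}_{i} - y^{\v\pi}_{i-1}$ for every $i \in \{1,\dots,d\}$.

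The central step is a telescoping computation. By Eq.~\eqref{eq_embedding_porj_constr} one has
\begin{equation}
    (\mathfrak{P}_{\v{\pi}}(\mathfrak{M}(\v{p})))^{\,\beta}_i \;=\; \sum_{j=k(i-1)+1}^{k(i)} p^{\mathfrak{M}}_j \;=\; \sum_{j=k(i-1)+1}^{k(i)} \bigl[\, f^{\,\beta}_{\v{p}}(\Gamma^{\mathfrak{M}}_{j}) - f^{\,\beta}_{\v{p}}(\Gamma^{\mathfrak{M}}_{j-1})\, \bigr],
\end{equation}
and this collapses to $f^{\,\beta}_{\v{p}}(\Gamma^{\mathfrak{M}}_{k(i)}) - f^{\,\beta}_{\v{p}}(\Gamma^{\mathfrak{M}}_{k(i-1)})$. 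Next, the indices $k(i)$ were specified by the matching rule $\Gamma^{\mathfrak{M}}_{k(i)} = \sum_{j=1}^i \gamma_{\v \pi^{-1}(j)}$, which is exactly the (normalised) cumulative Gibbs weight $x^{\v\pi}_i$ of Lemma~\ref{lem_extreme}. Consequently the difference equals $y^{\v\pi}_i - y^{\v\pi}_{i-1}$, which is precisely the $i$-th entry of $(\v{p}^{\v{\pi}})^{\,\beta}$. The only bookkeeping caveat is that Lemma~\ref{lem_extreme} is written with unnormalised Boltzmann factors $e^{-\beta E_{\v\pi^{-1}(j)}}$ while the embedding uses the normalised $\gamma_i = e^{-\beta E_i}/Z$; since the thermomajorisation curve is insensitive to a common rescaling of the horizontal axis up to its slope conventions, the two prescriptions evaluate $f^{\,\beta}_{\v{p}}$ at the same geometric elbow, so the values of $y^{\v\pi}_i$ coincide.

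Having established $\mathfrak{P}_{\v{\pi}}(\mathfrak{M}(\v{p})) = \v{p}^{\v{\pi}}$ for every $\v\pi \in \mathcal{S}_d$, the observation follows at once: Corollary~\ref{thm_futurefinite} states $\mathcal{T}^{\beta}_{+}(\v{p}) = \operatorname{conv}[\{\v{p}^{\v{\pi}}\,,\,\v\pi\in\mathcal{S}_d\}]$, so replacing each extreme point by the equivalent projection of the embedded vector gives the claimed characterisation.

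The only genuine obstacle I anticipate is making sure the reindexing is transparent: the projection is natively $\beta$-ordered by construction (via $(q^{\mathfrak{P}}_{\v\pi})_i$) whereas the Lemma~\ref{lem_extreme} extreme points $\v{p}^{\v\pi}$ are labelled in the original energy basis using the $\v\pi(i)$ convention. I would therefore prefix the argument with a short lemma (or parenthetical remark) unpacking both index conventions and confirming that they describe the same piecewise-linear slopes of $f^{\,\beta}_{\v p}$; the rest of the proof is then purely mechanical telescoping plus invocation of Corollary~\ref{thm_futurefinite}.
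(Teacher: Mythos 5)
Your proposal is correct and follows essentially the same route as the paper: the Observation is stated there without a separate proof precisely because the identity $\v{p}^{\v{\pi}} = \mathfrak{P}_{\v{\pi}}(\mathfrak{M}(\v{p}))$ holds by construction — the projection picks out the elbows of the embedded curve at the cumulative Gibbs weights $\sum_{j\leq i}\gamma_{\v\pi^{-1}(j)}$, which is exactly the Lemma~\ref{lem_extreme} recipe for $\v p^{\v\pi}$ — and the result then follows from Corollary~\ref{thm_futurefinite}. Your telescoping computation (with the corrected summation range $j=k(i-1)+1,\dots,k(i)$ and the normalisation/indexing caveats) simply makes explicit what the paper leaves implicit.
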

Our main technical contribution is captured by the following Lemma that generalises Lemma~\ref{Lem:Incomparable_region0} and provides the construction of the incomparable thermal region for finite temperatures. Its proof is based on the concept of embedding lattice that we introduced in Sec.~\ref{Subsec:embedding-lattice} and can be found in Appendix~\ref{app:finite_temp_derivs}

\begin{lemma}[Incomparable thermal region] 
\label{Lem:Incomparable_regionT}
Given an energy-incoherent state $\v p$ and a thermal state $\v \gamma$, consider distributions $\v t^{(n,\v \pi)}$ in their $\beta$-ordered form, constructed for each permutation $\v \pi \in \S_{d}$,
\begin{equation}
\label{eq_thermaltangentvectors}
\left(\v{t}^{(n,\v \pi)}\right)^{\, \beta} = \left(
    t^{(n,\v \pi)}_{\v \pi(1)}, 
    p^{\, \beta}_n\frac{\gamma_{\v \pi (2)}}{\gamma^{\, \beta}_n}, ..., 
    p^{\, \beta}_n\frac{\gamma_{\v \pi (d-1)}}{\gamma^{\, \beta}_n},
    t^{(n,\v \pi)}_{\v \pi(d)}\right) ,  
\end{equation}
with
\begin{subequations}
\begin{align}
    t^{(n,\v \pi)}_{\v \pi(1)} & = \sum_{i=1}^n p^{\, \beta}_i-\frac{p^{\, \beta}_n}{\gamma^{\,\beta}_n}\left(\sum_{i=1}^n \gamma^{\,\beta}_i-\gamma_{\v \pi(1)} \right), \\ 
    t^{(n,\v \pi)}_{\v \pi(d)} & = 1- t^{(n,\v \pi)}_{\v \pi(1)}-\frac{p^{\, \beta}_n}{\gamma^{\,\beta}_n}\sum_{i=2}^{d-1} \gamma_{\v \pi(i)} .
\end{align}
\end{subequations}
Defining the set
\begin{equation}
\label{eq_incomparableconebeta}
    \mathbb{T}^{\,\beta} = \bigcup_{\v \pi\in\mathcal{S}_d}\bigcup_{i=1}^{n-1}\operatorname{conv}\left[\mathcal{T}^{\,\beta}_+\left(\v{t}^{(i,\v \pi)}\right)\cup\mathcal{T}^{\,\beta}_+\left(\v{t}^{(i+1,\v \pi)}\right)\right], 
\end{equation}
the incomparable region of $\v{p}$ is given by 
\begin{align}
\label{eq_incomparableconeFT}
\T^{\, \beta}_{\emptyset}(\v{p}) = \left[\operatorname{int}\left(\mathbb{T}^{\, \beta}\right)\backslash\T^{\, \beta}_+(\v p)\right]\cap \v{\Delta}_d .
\end{align}
\end{lemma}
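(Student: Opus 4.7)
The plan is to mirror the proof of Lemma~\ref{Lem:Incomparable_region0} but inside each Weyl chamber, and then cover the whole simplex by taking a union over chambers (i.e., over $\beta$-orderings). The chief conceptual move that makes this work is the embedding lattice from Section~\ref{Subsec:embedding-lattice}: although thermomajorisation on the whole simplex lacks a global lattice structure, its restriction to a fixed $\beta$-order $\v\pi$ is isomorphic to a sublattice of $(\Delta^{\mathfrak{M}}_d,\succ_{\mathfrak{M}})$, and on that sublattice the ``past is convex and has tangent-vector extreme points'' argument used for Lemma~\ref{Lem:Incomparable_region0} goes through.

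First, I would verify the geometric role of $\v t^{(n,\v\pi)}$. Reading off the entries of Eq.~\eqref{eq_thermaltangentvectors} one checks that, when reordered according to $\v\pi$, its thermomajorisation curve $f^{\,\beta}_{\v t^{(n,\v\pi)}}$ is piecewise linear with middle $d-2$ segments of constant slope $p^{\,\beta}_n/\gamma^{\,\beta}_n$, that it touches $f^{\,\beta}_{\v p}$ at the $n$-th elbow $(\sum_{i=1}^{n}\gamma^{\,\beta}_i,\sum_{i=1}^{n}p^{\,\beta}_i)$, and that $t^{(n,\v\pi)}_{\v\pi(1)}$ and $t^{(n,\v\pi)}_{\v\pi(d)}$ are exactly the values forced by passing through the points $(\gamma_{\v\pi(1)},\,\cdot)$ and $(1,1)$ with that slope. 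Consequently, $\v t^{(n,\v\pi)}\succ_\beta\v p$ with equality of the Lorenz curves at the $n$-th elbow, so $\v t^{(n,\v\pi)}$ lies on the boundary between $\mathcal{T}^{\,\beta}_-(\v p)$ and $\mathcal{T}^{\,\beta}_\emptyset(\v p)$, exactly as the $\v t^{(n)}$ do in the infinite-temperature case.

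Next, I would fix a permutation $\v\pi$ and argue \emph{inside the corresponding Weyl chamber}: any state $\v q$ with $\beta$-order $\v\pi$ whose curve $f^{\,\beta}_{\v q}$ strictly dominates $f^{\,\beta}_{\v p}$ on the segment between the $n$-th and $(n{+}1)$-th elbow of $\v p$ must lie in the convex hull $\operatorname{conv}[\mathcal{T}^{\,\beta}_+(\v t^{(n,\v\pi)})\cup\mathcal{T}^{\,\beta}_+(\v t^{(n+1,\v\pi)})]$, because in the embedding lattice the join of the two tangent vectors produces exactly the line through the $n$-th and $(n{+}1)$-th elbows, and any state with its embedded Lorenz curve above that line is majorised by some point of this convex hull. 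This is the ``single-chamber'' analogue of the argument behind Lemma~\ref{Lem:Incomparable_region0}, pushed through $\mathfrak{M}$ to exploit the sublattice structure. Summing over $n$ and then over $\v\pi$ accounts for every possible elbow of every possible $\beta$-ordered curve that could cross $f^{\,\beta}_{\v p}$ from above, yielding the set $\mathbb{T}^{\,\beta}$ of Eq.~\eqref{eq_incomparableconebeta}; this is precisely the union of $\mathcal{T}^{\,\beta}_-(\v p)$ and $\mathcal{T}^{\,\beta}_\emptyset(\v p)$ on the quasi-probability level. Removing $\mathcal{T}^{\,\beta}_+(\v p)$ (which meets this set only on the boundary) and intersecting with $\Delta_d$ to kill the quasi-probability overshoot coming from the possibly negative $t^{(n,\v\pi)}_{\v\pi(1)}$ or $t^{(n,\v\pi)}_{\v\pi(d)}$ then gives Eq.~\eqref{eq_incomparableconeFT}.

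The main obstacle I anticipate is the ``gluing'' step between chambers: I need to make sure that no point of $\mathcal{T}^{\,\beta}_\emptyset(\v p)$ is missed and, conversely, that no point of $\mathcal{T}^{\,\beta}_-(\v p)$ sneaks in. Missing points are ruled out by exhausting all $\beta$-orders $\v\pi$, since every candidate $\v q$ has some $\beta$-ordering that determines the elbows of its own curve. The more delicate direction is the reverse inclusion: a state $\v q$ strictly in the past cone may also happen to sit in the convex hull of $\mathcal{T}^{\,\beta}_+(\v t^{(n,\v\pi)})$ and $\mathcal{T}^{\,\beta}_+(\v t^{(n+1,\v\pi)})$ for some $\v\pi$ not equal to $\v\pi_{\v q}$, and one has to check that such a $\v q$ is absorbed into $\mathcal{T}^{\,\beta}_+(\v p)$ when one takes the interior and subtracts it off. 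I would handle this using the embedding-lattice identity $\v p\succ_\beta\mathfrak{P}_{\v\pi}(\mathfrak{M}(\v p))$ for every $\v\pi$, which ensures that a curve lying above $f^{\,\beta}_{\v p}$ in one ordering is genuinely in the past, and otherwise it lies on the boundary $\partial\mathbb{T}^{\,\beta}$, which is excluded by the $\operatorname{int}$ in Eq.~\eqref{eq_incomparableconeFT}.
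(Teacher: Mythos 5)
Your overall plan---lean on the embedding lattice so that the join/tangent-vector machinery of the $\beta=0$ proof survives, index the tangent vectors by $\beta$-orders, and project back to $\Delta_d$---is indeed the strategy behind the paper's proof. However, the execution as you describe it contains a genuine inversion that breaks the argument. You identify $\mathbb{T}^{\,\beta}$ with $\T^{\,\beta}_-(\v p)\cup\T^{\,\beta}_\emptyset(\v p)$, and correspondingly claim that a state whose curve lies \emph{above} the relevant tangent line is \emph{majorised by} a point of $\operatorname{conv}[\T^{\,\beta}_+(\v t^{(n,\v\pi)})\cup\T^{\,\beta}_+(\v t^{(n+1,\v\pi)})]$. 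This is backwards: the future cone of a tangent vector consists of states whose thermomajorisation curves lie \emph{below} that of the tangent vector, every $\v t^{(n,\v\pi)}$ thermomajorises $\v p$, so $\T^{\,\beta}_+(\v p)\subset\mathbb{T}^{\,\beta}$ and in fact $\operatorname{int}(\mathbb{T}^{\,\beta})=\T^{\,\beta}_+(\v p)\cup\T^{\,\beta}_\emptyset(\v p)$, while past states can touch $\mathbb{T}^{\,\beta}$ at most on its boundary (which is exactly why the interior is taken, and why Theorem~\ref{thm_pasthermalconefinite} follows). With your identification, $[\operatorname{int}(\mathbb{T}^{\,\beta})\backslash\T^{\,\beta}_+(\v p)]\cap\Delta_d$ would be (past $\cup$ incomparable), not $\T^{\,\beta}_\emptyset(\v p)$, so your final step does not yield the lemma. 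Relatedly, the per-segment claim you use in its place (curve strictly dominating $f^{\,\beta}_{\v p}$ between two elbows $\Rightarrow$ membership in the corresponding convex hull) is false: a state deep in the past cone dominates $f^{\,\beta}_{\v p}$ on every segment, yet is majorised by no tangent vector, and your statement that such a state gets ``absorbed into $\T^{\,\beta}_+(\v p)$'' cannot happen, since a state cannot be simultaneously strictly in the past and in the future of $\v p$.

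The second problem is that the essential direction---$\v q\in\T^{\,\beta}_\emptyset(\v p)$ implies $\v q$ lies in the future cone of some vector tangent to $\v p$---is exactly where one needs the join of $\v p$ \emph{and} $\v q$, and since incomparable states generically have different $\beta$-orders this join cannot be taken chamber by chamber; this is the whole reason the embedding lattice is introduced. The paper adapts the join algorithm to the non-uniform widths of the embedded Lorenz-curve segments, Eqs.~\eqref{finTemp_join_akSlope}--\eqref{finTemp_join_kStep}, proves the embedded analogue of Lemma~\ref{lem-incomp} (namely Lemma~\ref{lem-incomp_nonZeroBeta}) by rerunning the tangency-at-a-point-of-$I_p$ argument on $\v r^{\mathfrak{M}}=\v p^{\mathfrak{M}}\vee\v q^{\mathfrak{M}}$, and only then descends to $\Delta_d$ using that $\succ_{\mathfrak{M}}$ implies $\succ_\beta$ between projections and that $\mathfrak{M}$ preserves incomparability, which is what produces the family $\v t^{(n,\v\pi)}=\mathfrak{P}_{\v\pi}(\v t^{\mathfrak{M}(n)})$. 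Your proposal instead invokes ``the join of the two tangent vectors'', which plays no role in the argument, and never constructs the join of $\v p$ with the incomparable state, so the completeness of the tangent-vector covering (no incomparable state is missed) is not actually established.
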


\begin{figure*}
    \centering
    \includegraphics{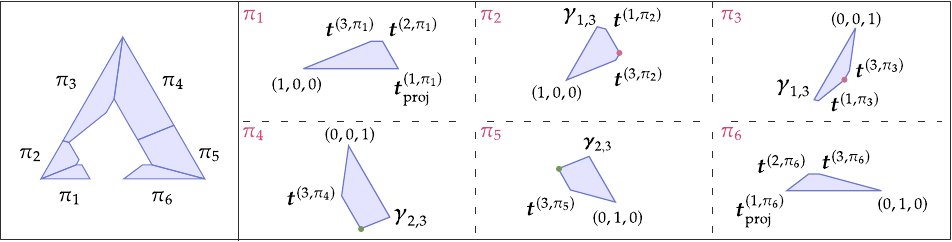}
    \caption{\label{fig-past_chamberextreme} \emph{Extreme points of $\mathcal{T}^{\,\beta}_-(\v p)$}. The past thermal cone for a three-level system with population give by $\v p = (0.7,0.2,0.1)$, energy spectrum $E_1 = 0, E_2 = 2$ and $E_3 = 3$, and the inverse temperature $\beta = 0.5$. At each chamber $\v \pi$, the non-trivial extreme points of the past are given by $\v t^{(n, \v \pi)}$ (for $n \in \{1, 2, 3\}$ and $\v \pi \in \mathcal{S}_d$) and by extreme points of \mbox{$\bigcup_{i=1}^{d-1} \operatorname{conv}[(\mathcal{T}^{\, \beta}_+(\v{t}^{(i,\v \pi)})\cup\mathcal{T}^{\, \beta}_+(\v{t}^{(i+1,\v \pi)})]$} (red dots). The remaining ones are sharp states, points in the boundary between chambers (green dots) and those which are on the edge of the probability simplex, i.e., $(\v \gamma_{i,j})_{k} = (\gamma_i \delta_{ik}+\gamma_j \delta_{ik})/(\gamma_i + \gamma_j)$, for $i\neq j$, and $k \in \{1, 2,3\}$.}
    \end{figure*}
Analogously to the infinite temperature case, Lemma~\ref{Lem:Incomparable_regionT} allows us to obtain the past thermal cone of $\v p$.
\begin{theorem}[Past thermal cone]
\label{thm_pasthermalconefinite}
The past thermal cone of $\v p$ is given by \begin{equation}
\T^{\, \beta}_{-}(\v p)=\v{\Delta}_d \backslash \operatorname{int} \left(\mathbb{T}^{\, \beta}\right) \, .
\end{equation}
\end{theorem}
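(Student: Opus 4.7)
The plan is to mirror the zero-temperature argument from the proof of Theorem "Past cone" and reduce the claim to Lemma "Incomparable thermal region", which has already been proved via the embedding-lattice technology. Because thermomajorisation is a partial order, the probability simplex splits into three mutually exclusive regions relative to $\v{p}$, namely the past, the future and the incomparable thermal regions, giving the set identity
\begin{equation}
    \T^{\,\beta}_{-}(\v{p}) = \v{\Delta}_d \setminus \bigl(\T^{\,\beta}_{\emptyset}(\v{p}) \cup \T^{\,\beta}_{+}(\v{p})\bigr).
\end{equation}
Substituting the expression for $\T^{\,\beta}_{\emptyset}(\v p)$ from Lemma "Incomparable thermal region" reduces the desired equality to showing
\begin{equation}
    \T^{\,\beta}_{\emptyset}(\v p)\cup \T^{\,\beta}_{+}(\v p)=\operatorname{int}(\mathbb{T}^{\,\beta})\cap \v{\Delta}_d,
\end{equation}
modulo a measure-zero boundary (which contributes to the past by reflexivity of thermomajorisation, since $\v p\succ_\beta \v p$).

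The key remaining step is therefore to prove the inclusion $\T^{\,\beta}_{+}(\v p)\subseteq \operatorname{int}(\mathbb{T}^{\,\beta})\cap \v{\Delta}_d$. The plan is to argue this at the level of the extreme points $\v p^{\,\v\pi}$ of the future thermal cone supplied by Lemma "lem\_extreme", and then appeal to the fact that, within each Weyl chamber associated to a $\beta$-ordering $\v\pi$, the set $\operatorname{conv}[\mathcal{T}^{\,\beta}_+(\v{t}^{(i,\v\pi)})\cup\mathcal{T}^{\,\beta}_+(\v{t}^{(i+1,\v\pi)})]$ is convex. Concretely, the construction of the tangent vectors $\v t^{(n,\v\pi)}$ shows that their embedded Lorenz curves lie above that of $\v{p}$ (they are affine supports at the elbows of $f^{\,\beta}_{\v p}$), so $\v t^{(n,\v\pi)}\succ_\beta \v p$ for every $n$ and $\v\pi$. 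By monotonicity of thermal cones under thermomajorisation, this yields $\T^{\,\beta}_{+}(\v p)\subseteq \T^{\,\beta}_{+}(\v t^{(n,\v\pi)})$ and hence $\T^{\,\beta}_{+}(\v p)\subseteq \mathbb{T}^{\,\beta}$. Interior-ness, in turn, follows because $\v{p}$ lies strictly below two consecutive tangent envelopes (for any chamber compatible with its $\beta$-order), and so any sufficiently small neighbourhood of $\v{p}^{\,\v\pi}$ remains inside the corresponding convex hull.

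Combining the above with Lemma "Incomparable thermal region" gives
\begin{equation}
    \T^{\,\beta}_{\emptyset}(\v p)\cup \T^{\,\beta}_{+}(\v p) = \operatorname{int}(\mathbb{T}^{\,\beta})\cap \v{\Delta}_d,
\end{equation}
and substituting back into the three-way decomposition yields $\T^{\,\beta}_{-}(\v p) = \v{\Delta}_d \setminus \operatorname{int}(\mathbb{T}^{\,\beta})$, proving the theorem.

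The main obstacle will be the second paragraph: carefully controlling the behaviour of $\v t^{(n,\v\pi)}$ at the boundary of the Weyl chamber $\v\pi$, since for certain permutations some entries of $\v t^{(n,\v\pi)}$ can become negative (as already observed for the $\beta=0$ tangent vectors), and the resulting quasi-probability distributions must be projected back into $\v{\Delta}_d$ without spoiling the interior inclusion. This projection should be handled exactly as in the infinite-temperature case, by walking the quasi-distribution along the edges of the thermal future cone until all entries are non-negative, and verifying that the projected vectors still provide the extremal points of $\T^{\,\beta}_{-}(\v{p})$ restricted to the chamber of $\v\pi$. Once this technical step is carried out, the proof closes in essentially one line, exactly as in the $\beta=0$ argument.
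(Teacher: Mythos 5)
Your overall skeleton is the same as the paper's: decompose the simplex relative to $\v p$, invoke Lemma~\ref{Lem:Incomparable_regionT} for $\T^{\,\beta}_{\emptyset}(\v p)$, and identify $\operatorname{int}(\mathbb{T}^{\,\beta})$ with the non-past region. However, the ``key remaining step'' you set yourself, namely the inclusion $\T^{\,\beta}_{+}(\v p)\subseteq \operatorname{int}(\mathbb{T}^{\,\beta})\cap\v{\Delta}_d$, is false, and proving it would actually contradict the theorem. The point $\v p$ itself belongs to $\T^{\,\beta}_{+}(\v p)$ (reflexivity), and by the same reflexivity it belongs to $\T^{\,\beta}_{-}(\v p)$; since the theorem asserts $\T^{\,\beta}_{-}(\v p)=\v{\Delta}_d\backslash\operatorname{int}(\mathbb{T}^{\,\beta})$, the point $\v p$ must lie on the boundary of $\mathbb{T}^{\,\beta}$, not in its interior. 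One can also see this directly: states $\v q$ whose thermomajorisation curve lies strictly above that of $\v p$ are not thermomajorised by any tangent vector (the tangent curves touch $f^{\,\beta}_{\v p}$ at an elbow), hence lie outside $\mathbb{T}^{\,\beta}$, and such states exist arbitrarily close to $\v p$. The same problem afflicts your ``interior-ness'' claim for the extreme points $\v p^{\,\v\pi}$: at $\beta=0$ these are the permutations of $\v p$, which the paper's Observation identifies as common points of the past and future cones, so they too sit on $\partial\mathbb{T}^{\,\beta}$ rather than in the interior. Your opening identity $\T^{\,\beta}_{-}=\v{\Delta}_d\setminus(\T^{\,\beta}_{\emptyset}\cup\T^{\,\beta}_{+})$ has the same blind spot, because the three regions are not mutually exclusive: $\T^{\,\beta}_{+}\cap\T^{\,\beta}_{-}$ is nonempty, and it is exactly this overlap that the $\operatorname{int}$ in the theorem is designed to attribute to the (closed) past cone.

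What actually has to be shown, and what the paper's one-line proof implicitly invokes, is the two-sided statement that $\operatorname{int}(\mathbb{T}^{\,\beta})\cap\v{\Delta}_d$ coincides with the set of states that do \emph{not} thermomajorise $\v p$: no state of $\T^{\,\beta}_{-}(\v p)$ (including the reversibly interconvertible states in $\T^{\,\beta}_{+}\cap\T^{\,\beta}_{-}$) lies in the interior, while every state of $\T^{\,\beta}_{+}(\v p)\setminus\T^{\,\beta}_{-}(\v p)$ does. The first half follows along the lines you sketch (every tangent vector satisfies $\v t^{(n,\v\pi)}\succ_\beta\v p$, and one checks that states thermomajorising $\v p$ can only touch $\mathbb{T}^{\,\beta}$ on its boundary), but the second half needs its own argument and cannot be replaced by the blanket inclusion of the whole future cone in the interior. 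Finally, the projection of quasi-probability tangent vectors back into $\v{\Delta}_d$, which you flag as the main obstacle, is not needed for this theorem at all: it only matters for the later discussion of which points are extreme points of the past cone, not for the set identity being proved here.
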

\begin{proof}
Following the same reasoning as in the proof of Theorem~\ref{Thm:past_cone}, we only need to use the fact that \mbox{$\operatorname{int}(\mathbb{T}^\beta) = \mathcal{T}^\beta_+(p) \cup \mathcal{T}^\beta_\emptyset(p)$}.   
\end{proof}

In Fig.~\ref{fig-thermal-cones-examples}, we illustrate Lemma~\ref{Lem:Incomparable_regionT} and Theorem~\ref{thm_pasthermalconefinite} for a three-level system. We also provide a \texttt{Mathematica} code~\cite{adeoliveirajunior2022} that constructs the set of extreme points of the future and past thermal cones for arbitrary dimensions.

As before, the past thermal cone forms a convex polytope only when restricted to a single Weyl chamber, now defined as a set of probability vectors with common $\beta$-order. The extreme points of the past thermal cone correspond to tangent vectors $\v t^{(n,\v \pi)}$ or by their projection onto the boundary of the probability simplex. The exceptional points in comparison with the infinite-temperature case may appear when considering extreme points of \mbox{$\bigcup_{i=1}^{d-1} \operatorname{conv}[(\mathcal{T}^\beta_+(\v{t}^{(i,\v \pi)})\cup\mathcal{T}^\beta_+(\v{t}^{(i+1,\v \pi)})]$} for a given chamber $\v \pi$. Vertices arising in this way may not correspond to any tangent vector (see Fig.~\ref{fig-past_chamberextreme}). Moreover, the $\v t^{(n,\v \pi)}$ vectors are also responsible for the convexity of the past thermal cone, and the following observation illustrates this:
\begin{corollary}[Asymptotic temperature limit]
Approaching the limit of $\beta \to \infty$, the past thermal cone becomes convex.
\end{corollary}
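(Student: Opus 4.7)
The plan is to show that, in the limit $\beta \to \infty$, the thermomajorisation relation collapses to a single linear inequality on the ground-state population, so that the past thermal cone becomes the intersection of $\Delta_d$ with a half-space and is therefore convex.

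First, I would analyze the limiting shape of the thermomajorisation curves. As $\beta \to \infty$, the Gibbs distribution concentrates on the ground state, $\gamma_1 \to 1$ and $\gamma_j \to 0$ exponentially for $j \neq 1$. For any state $\v q$ with $q_1 > 0$, the $\beta$-ordering sorts entries by $q_i/\gamma_i \propto q_i e^{\beta E_i}$, which for large $\beta$ is dominated by the energies and places the ground state last, independently of $\v q$. Consequently, in Definition~\ref{def_thermomajorisation_curve}, the horizontal elbow coordinates $\sum_{i=1}^k \gamma^{\, \beta}_i$ tend to zero for every $k \leq d-1$, while the last elbow stays pinned at $(1,1)$. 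The limiting curve of $\v q$ thus consists of a vertical segment from $(0,0)$ to $(0, 1-q_1)$ followed by the straight line to $(1,1)$.

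Second, I would use this limiting form to compare curves: the pointwise inequality $f^{\,\beta}_{\v q}(x) \geq f^{\,\beta}_{\v p}(x)$ reduces, in the limit, to the single condition $1-q_1 \geq 1-p_1$, i.e., $q_1 \leq p_1$. By Theorem~\ref{Thm:HLPbeta}, this means
\begin{equation*}
    \lim_{\beta \to \infty} \mathcal{T}^{\,\beta}_-(\v p) = \{\v q \in \Delta_d : q_1 \leq p_1\},
\end{equation*}
which is the intersection of the probability simplex with a closed half-space and hence convex.

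Third, to make the argument rigorous at the level of Theorem~\ref{thm_pasthermalconefinite}, I would examine how the set $\mathbb{T}^{\,\beta}$ built from the tangent vectors $\v t^{(n,\v\pi)}$ degenerates. Inspecting Eq.~\eqref{eq_thermaltangentvectors}, the middle entries $p^{\, \beta}_n \gamma_{\v\pi(i)}/\gamma^{\, \beta}_n$ either vanish or get projected onto the boundary of $\Delta_d$ depending on whether $\v\pi(i)$ equals the ground-state index, so that only the single chamber corresponding to the limiting energy $\beta$-order supports non-degenerate tangent vectors. Combined with the fact, already established in the chapter, that $\mathcal{T}^{\,\beta}_-(\v p)$ restricted to any fixed Weyl chamber is convex, the full past thermal cone is convex in the limit.

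The main obstacle will be the boundary states where $q_j=0$ for some excited indices, since the $\beta$-ordering argument assumed strict positivity; these would need to be handled either by a continuity/closure argument for $\mathcal{T}^{\,\beta}_-(\v p)$ or by a direct check on the relevant faces of $\Delta_d$, verifying that the surviving constraint remains $q_1 \leq p_1$ regardless of which excited populations vanish.
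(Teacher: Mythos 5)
Your closing paragraph is essentially the paper's own proof — the past cone is a union of $d!$ per-chamber convex pieces, and as $\beta\to\infty$ the thermal state collapses onto the ground state so that a single chamber (the energy-ordered one) swallows the whole simplex, leaving one convex piece — and that route is sound. However, your main argument (paragraphs one and two) contains a genuine error: the thermomajorisation condition does \emph{not} collapse to the single inequality $q_1\leq p_1$. The constraints sitting at the elbows whose $x$-coordinates shrink to zero do not disappear in the limit, because within the energy-ordered chamber they are $\beta$-independent. Concretely, once $\beta$ is large enough that both $\v q$ and $\v p$ (with positive entries) have $\beta$-order $(d,d-1,\dots,1)$, concavity of the curves lets you check domination at the elbows of $f^{\,\beta}_{\v p}$, giving $\v q \succ_\beta \v p \iff \sum_{i=d-k+1}^{d} q_i \geq \sum_{i=d-k+1}^{d} p_i$ for all $k=1,\dots,d-1$ (cumulative sums from the highest energy downward). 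These inequalities persist for every large $\beta$. For example, with $E_1<E_2<E_3$, $\v p=(0.4,0.36,0.24)$ and $\v q=(0.3,0.6,0.1)$ one has $q_1\leq p_1$, yet $q_3<p_3$, so $\v q\nsucc_\beta \v p$ for all large $\beta$ and $\v q$ does not belong to the limiting past cone. The mistake is in passing from pointwise convergence of the curves to convergence of the constraint sets: the degenerate two-segment limit curve only records $1-q_1$, discarding the fine structure near $x=0$ which still carries $d-2$ binding inequalities (this is also why the strict point $\beta=\infty$, where $\v\gamma$ is literally a sharp state, is a degenerate case not equal to the limit of the finite-$\beta$ cones).

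The conclusion you want still follows, because the correct limiting description — the intersection of the simplex with the $d-1$ half-spaces above — is convex; but to get there you should argue as in your third paragraph (and as the paper does): inside a fixed chamber the past cone is a convex polytope cut out by $\beta$-independent linear inequalities, and as $\beta\to\infty$ all states with full support eventually acquire the energy-reversed $\beta$-order, so only that one chamber contributes and $\mathcal{T}^{\,\beta}_-(\v p)$ reduces to a single convex piece. Your worry about boundary states with $q_j=0$ is then handled automatically, since those lie in the closure of the dominant chamber and satisfy the same linear inequalities.
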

\begin{proof}
By dividing the probability simplex into equal chambers with the thermal state in the barycenter, the past thermal cone is the union of $d!$ convex pieces. As $\beta \to \infty$, the thermal state collapses to the ground state. There is only one chamber in this limit, and therefore, the past thermal cone is a single convex piece [see Fig.\hyperref[fig-thermal-cones-examples]{\ref{fig-thermal-cones-examples}~e} for an example in the particular case of $d=3$].
\end{proof}

The intuition behind the above observation can be understood by studying the behaviour of a three-level system and the tangent vector $\v t^{(3,(132))}$. By a decreasing temperature, this extremal point tends towards the edge of the simplex and reaches the edge at $\beta \to \infty$ [see Figs.~\hyperref[fig-thermal-cones-examples]{\ref{fig-thermal-cones-examples}b,~e}].

To wrap up the considerations of the geometry of thermal cones, let us go back, once again, to the analogy between the thermal cones and special relativity. Consider that given a specific division of space-time into future, past and space-like regions, one is able to recover the specific event generating it, which we may refer to as present. Concisely -- there is a one-to-one relation between events and the divisions of space-time they generate. The situation is exactly reflected for thermal cones with $\beta > 0$ -- given a specific arrangement of incomparable region and future and past thermal cones, one can exactly recover the current state of the system [see the black dot~$\bullet$ in Fig.~\hyperref[fig-thermal-cones-examples]{\ref{fig-thermal-cones-examples}b-d}]. It is in stark contrast with the majorisation cones for $\beta = 0$, where every division into past, future and incomparable possesses $d!$-fold symmetry [see Fig.~\hyperref[fig-thermal-cones-examples]{\ref{fig-thermal-cones-examples}a}] and hence, the present state of the system cannot be recovered solely on its basis unless provided with additional information like the permutation which sorts the probabilities in non-decreasing order.

\section{Volume of thermal cones}~\label{Sec:volume-thermal-cones}

In the previous sections, we characterised and discussed the behaviour of thermal cones by introducing explicit constructions of the past, the future and the incomparable region. It is then natural to ask what is the role played by their volumes in quantifying the resourcefulness of different states~\cite{Cunden_2020,Cunden2021}. Thus, for a $d$-dimensional probability vector $\v p$ we define the relative volumes of its thermal cones as
\begin{equation}
    \mathcal{V}^{\, \beta}_{i}(\v p) := \frac{V\left[\mathcal{T}^{\, \beta}_i (\v p)\right]}{V(\Delta_d)} \quad  \text{with} \quad i \in \{\emptyset, -, +\},
\end{equation}
where $V$ denotes the volume measured using the Euclidean metric. 

We start our analysis of volumes of thermal cones by presenting an operational interpretation in terms of guessing probabilities for the future and past of a given state subject to a thermal evolution; such interpretation provides a solid basis for presenting the aforementioned volumes as resource-theoretic monotones. Subsequently, we proceed to an in-depth analysis of the volumes with a particular focus on their behaviour as a function of the inverse temperature $\beta$. Finally, we explain how to modify the analysis to obtain meaningful volumes of entanglement cones.

\subsection{Interpretation}~\label{Subsec:interpretation}

Consider a task of predicting the future, which roughly translates to guessing a state $\v{q}$ by having knowledge that it has originated from a given prior state $\v p$. In this case, the probability of correctly guessing a state that is $\epsilon$-distant from ${\v q}$ is given by
\begin{equation}
    \text{Pr}\left( {\v q}, \epsilon\,|\,\v q\in\mathcal{T}_+(\v p)\right) = \frac{V(\mathcal{B}_\epsilon(\v q))}{\mathcal{V}^{\, \beta}_+(\v{p})},
\end{equation}
where $\mathcal{B}_\epsilon(\v q)$ is an $\epsilon$-ball centred at $\v q$. We may get rid of the dependence on $\epsilon$ and $\v q$ by taking the ratio for two different states $\v p_1$ and $\v p_2$,
\begin{equation}
    \frac
    {\text{Pr}\left[\v q, \epsilon\,|\,\v q\in\mathcal{T}^{\, \beta}_+(\v p_2)\right]}
    {\text{Pr}\left[\v q, \epsilon\,|\,\v q\in\mathcal{T}^{\, \beta}_+(\v p_1)\right]} = \frac{\mathcal{V}^{\, \beta}_+(\v{p}_1)}{\mathcal{V}^{\, \beta}_+(\v{p}_2)} .
\end{equation}
Thus, the ratio of volumes yields a relative probability of guessing the future of two different states. In particular, if $\v p_2 \in \mathcal{T}^{\, \beta}_+(\v p_1)$, then also $\mathcal{T}^{\,\beta}_+(\v p_2)\subset\mathcal{T}^{\, \beta}_+(\v p_1)$ and in consequence
\begin{equation}
    \frac{\text{Pr}\left[\v q, \epsilon\,|\,\v q\in\mathcal{T}^{\, \beta}_+(\v p_2)\right]}
    {\text{Pr}\left[\v q, \epsilon\,|\,\v q\in\mathcal{T}^{\, \beta}_+(\v p_1)\right]} > 1. 
\end{equation}
This can be understood as follows -- as the evolution of a system progresses, the future becomes easier to guess or, in other words, more predictable.

In complete analogy, we may define a game in which, instead of guessing -- or predicting -- the future of a state $\v p$, one has to guess its past. For such a game, it is easy to show that 
\begin{equation}
    \frac
    {\text{Pr}\left[\v q, \epsilon\,|\,\v q\in\mathcal{T}^{\, \beta}_-(\v p_2)\right]}
    {\text{Pr}\left[\v q, \epsilon\,|\,\v q\in\mathcal{T}^{\, \beta}_-(\v p_1)\right]} = \frac{\mathcal{V}^{\, \beta}_-(\v{p}_1)}{\mathcal{V}^{\, \beta}_-(\v{p}_2)},
\end{equation}
so that the ratio of volumes yields the relative probability of guessing the past. Given $\v p_1\in \mathcal{T}^{\, \beta}_-(\v p_2)$ we have ${\mathcal{V}^{\, \beta}_-(\v{p}_1)}/{\mathcal{V}^{\, \beta}_-(\v{p}_2)}<1$, which simply means that as the evolution towards equilibrium progresses, one finds the past of a given state harder and harder to guess correctly.

\subsection{Properties}~\label{Subsec:properties}
We now show that the volumes of the thermal cones are thermodynamic monotones, i.e., functions of a state that decrease under thermal operations.
\begin{theorem}[Thermodynamic monotones]\label{thm_monotne}
    The relative volumes $\mathcal V^{\, \beta}_{+}$ and $1-\mathcal V^{\, \beta}_{-} = \mathcal V^{\, \beta}_{+} + \mathcal V^{\, \beta}_{\emptyset}$ are thermodynamic monotones. Moreover, both monotones are faithful, taking the value 0 only when applied to the Gibbs state $\v \gamma$.
\end{theorem}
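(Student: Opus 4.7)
The plan is to leverage the characterisation of thermal operations on energy-incoherent states given by Theorem~\ref{thm_Thermaloperations} together with the consequent identification $\v q \in \T^{\,\beta}_+(\v p) \Longleftrightarrow \v p \succ_\beta \v q \Longleftrightarrow \v p \in \T^{\,\beta}_-(\v q)$. The first step I would carry out is to observe that thermomajorisation is transitive, which is immediate since the composition of two Gibbs-preserving stochastic matrices is itself a Gibbs-preserving stochastic matrix. From this I would deduce the nesting properties $\v q \in \T^{\,\beta}_+(\v p) \Rightarrow \T^{\,\beta}_+(\v q) \subseteq \T^{\,\beta}_+(\v p)$ (anything reachable from $\v q$ is a fortiori reachable from $\v p$) and, dually, $\T^{\,\beta}_-(\v p) \subseteq \T^{\,\beta}_-(\v q)$ (anything in the past of $\v p$ lies also in the past of $\v q$).

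Converting these inclusions to volumes and dividing by $V(\Delta_d)$ gives $\mathcal{V}^{\,\beta}_+(\v q) \leq \mathcal{V}^{\,\beta}_+(\v p)$ and $\mathcal{V}^{\,\beta}_-(\v q) \geq \mathcal{V}^{\,\beta}_-(\v p)$, the latter being equivalent to $1 - \mathcal{V}^{\,\beta}_-(\v q) \leq 1 - \mathcal{V}^{\,\beta}_-(\v p)$. Since any thermal operation $\mathcal{E}^\beta$ acting on $\v p$ produces, by Theorem~\ref{thm_Thermaloperations}, a state $\mathcal{E}^\beta(\v p) \in \T^{\,\beta}_+(\v p)$, this establishes the monotonicity requirement in Definition~\ref{def:thermodynamic-monotones} for both functionals.

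To evaluate both monotones on the Gibbs state, I would use that $\v\gamma$ is the minimal element of the thermomajorisation preorder: every probability vector thermomajorises $\v\gamma$, and $\v\gamma$ is fixed by every GP matrix so it thermomajorises only itself. Consequently $\T^{\,\beta}_+(\v\gamma) = \{\v\gamma\}$ and $\T^{\,\beta}_-(\v\gamma) = \Delta_d$, giving $\mathcal{V}^{\,\beta}_+(\v\gamma) = 0$ and $1 - \mathcal{V}^{\,\beta}_-(\v\gamma) = 0$, as required.

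The most delicate step is the faithfulness claim, which is where I expect the main obstacle. The past side is easy: if $1 - \mathcal{V}^{\,\beta}_-(\v p) = 0$ then $\T^{\,\beta}_-(\v p)$ has full measure and, being closed, it coincides with $\Delta_d$, so in particular $\v\gamma \succ_\beta \v p$; Theorem~\ref{thm_Thermaloperations} then yields a GP map $\v\gamma \mapsto \v p$, forcing $\v p = \v\gamma$ by the fixed-point property of GP maps. The future side requires showing that $\T^{\,\beta}_+(\v p)$ has strictly positive $(d-1)$-dimensional volume whenever $\v p \neq \v\gamma$. My strategy would be to exhibit $d-1$ linearly independent perturbation directions inside $\T^{\,\beta}_+(\v p)$ by means of the elementary thermalisations $T^{(ij)}(\lambda)$ introduced in Theorem~\ref{thm:universality}: each pair $(i,j)$ with $p_i/\gamma_i \neq p_j/\gamma_j$ produces, for $\lambda \in (0,1)$, a distinct state in the future cone along the direction connecting $\v p$ to $\v\gamma^{(ij)}$, and if $\v p \neq \v\gamma$ one can select $d-1$ such pairs whose associated directions are linearly independent. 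Composing these thermalisations then sweeps out a full-dimensional neighbourhood of $\v p$ within $\T^{\,\beta}_+(\v p)$, certifying $\mathcal{V}^{\,\beta}_+(\v p) > 0$ and completing the proof.
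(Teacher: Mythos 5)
Your proposal is correct and follows essentially the same route as the paper: monotonicity from the nesting of future and past cones under thermomajorisation, vanishing at $\v \gamma$ from its being the minimal element, and faithfulness reduced to exhibiting $d-1$ linearly independent displacements generated by two-level Gibbs-preserving operations (the paper uses $\beta$-swaps between level $1$ and each $k$, after a without-loss-of-generality reduction to a state thermalised on a $(d-1)$-dimensional subspace; you use partial thermalisations directly on $\v p$). Two small points to tighten: your claim that for a general $\v p \neq \v\gamma$ one can always pick $d-1$ pairs with $p_i/\gamma_i \neq p_j/\gamma_j$ whose directions $\v e_i - \v e_j$ are independent does hold, but needs an argument (e.g.\ a spanning tree across the classes of equal ratio $p_i/\gamma_i$), which is exactly what the paper's reduction step sidesteps; and the phrase ``sweeps out a full-dimensional neighbourhood of $\v p$'' is imprecise, since $\T^{\,\beta}_{+}(\v p)$ never contains a neighbourhood of $\v p$ itself---it is cleaner to note that the $d-1$ shifted points together with $\v p$ span a $(d-1)$-simplex which lies in $\T^{\,\beta}_{+}(\v p)$ by convexity of the future thermal cone, as the paper does. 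Your direct treatment of the past monotone (closedness plus full measure forces $\T^{\,\beta}_{-}(\v p)=\Delta_d$, hence $\v p = \v\gamma$) is a valid minor variant of the paper's observation that $1-\mathcal V^{\,\beta}_{-} \geq \mathcal V^{\,\beta}_{+}$ makes the future-cone bound suffice for both.
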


\begin{proof}
One can straightforwardly show that both quantities decrease monotonically under thermodynamic operations. This is a simple consequence of the fact that for $\v p$ and $\v q$ connected via a thermal operation, we have \mbox{$\mathcal{T}^{\, \beta}_{+}(\v{q}) \subset \mathcal{T}^{\, \beta}_{+}(\v{p})$} and \mbox{$\mathcal{T}^{\, \beta}_{-}(\v{p}) \subset \mathcal{T}^{\, \beta}_{-}(\v{q})$} which automatically implies $\mathcal V^{\, \beta}_{+}(\v q)\leq \mathcal V^{\, \beta}_{+}(\v p)$ and $\mathcal V^{\, \beta}_{-}(\v q)\geq \mathcal V^{\, \beta}_{-}(\v p)$. 

In order to demonstrate its faithfulness, first note that every state that is not thermal can be mapped to a thermal state, thus showing that $1 - \mathcal{V}_-(\v \gamma) = 0$. Similarly, the Gibbs state cannot be mapped via thermal operations to anything else than itself, thus $\mathcal{V}_+(\v \gamma) = 0$. Now, in order to show that both monotones are non-zero for any state different from the Gibbs state, it suffices to demonstrate that $\mathcal{V}_+(\v p) > 0$ for any $\v p\neq \v \gamma$. It is enough to consider, without loss of generality, a state $\v p$ thermalised within a $(d-1)$-dimensional subspace, i.e., \mbox{${p_i}/{p_j} = {\gamma_i}/{\gamma_j}$} for all $i\neq j\neq1$. For $k\in\{2,\hdots,d\}$, using $\beta$-swaps between levels $1$ and $k$ defined as~\cite{Lostaglio2018elementarythermal}
$$
    \left\{p_1,\,p_k\right\} \longmapsto \left\{\frac{\gamma_1 - \gamma_k}{\gamma_1}p_1  + p_k,\,\frac{\gamma_k}{\gamma_1}p_1\right\}, 
$$
we generate $d-1$ new points shifted from the original state by displacements $(\v \delta_k)_i = \delta_k \epsilon_{ik}$ defined by the Levi-Civita symbol $\epsilon_{ik}$ and $\delta_k \neq 0$. The entire set $\{\v \delta_k\}_{k=2}^d$ of displacements is linearly independent; therefore, they define a $(d-1)$-dimensional simplex of non-zero volume. Conversely, if we assume that $\delta_k = 0$ for any $k$, we are led to a conclusion that the system is thermalised between levels $1$ and $k$ and, by transitivity, it must be equal to the Gibbs state $\gamma$, which leads to a contradiction with the initial assumption. Finally, we restore the full generality by noticing that any state $\v q$ contains in its future cone states thermalised in any of the $(d-1)$-dimensional subspaces and, consequently, their entire future cone. Therefore, the non-zero volume of the latter implies the non-zero volume of the former. 
\end{proof}

The behaviour of $\mathcal V_+$ and $\mathcal V_-$ as a function of $\beta$ strongly depends on the $\beta$-ordering of the state under consideration. Among all $\beta$-orderings, there are two extreme cases, namely the one where the population and energies are arranged in non-increasing order ($p_i \leq p_j$ for $E_i \leq E_j$), and the other in which the populations are `anti-ordered' with respect to the energies ($E_i \leq E_j$ implies $p_i \geq p_j$). These two distinct $\beta$-orderings characterise \emph{passive} and \emph{maximally active} states, respectively~\cite{Pusz1978,Lenard1978}. At the bottom left of Fig.~\ref{fig-volumeasfunctionofbeta}, for a three-level system, we depict each Weyl chamber with different colours: passive states lie in the black chamber, whereas maximally active states lie in the red one.
Let us first focus on the volume of the future thermal cone and analyse how it changes by increasing the inverse temperature $\beta$ from $\beta = 0$ to $\beta \to \infty$. In the present analysis, the initial state is kept fixed, while the thermal state is taken to be a function of temperature $\v \gamma = \v \gamma(\beta)$, and it follows a trajectory from the centre of the probability simplex to the ground state (see Fig.~\ref{fig-thermal-cones-examples} for an example considering a three-level system). 
\begin{marginfigure}[0.1pt]
    \centering
    \includegraphics{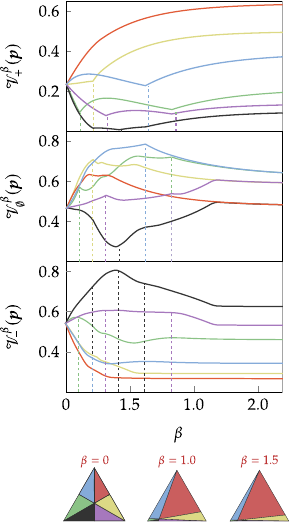}
    \caption{\emph{Thermal cone volumes}. For all permutations of the state $\v p = (0.52, 0.12, 0.36)$, we plot the volume of the future thermal cone $\mathcal{V}^{\, \beta}_{+}$ (top), the incomparable thermal region $\mathcal{V}^{\, \beta}_{\emptyset}$ (centre) and the past thermal cone $\mathcal{V}^{\, \beta}_{-}$ (bottom). Each colour corresponds to a permutation associated with a given chamber of the probability simplex. Among all states, two are distinct: the maximally active (red curve) $\v p_{\text{max}} = (0.12, 0.36, 0.52)$ and the passive (black curve) $\v p_p = (0.52, 0.36, 0.12)$ states. Any other permutation of the initial state characterises a different active state. The kinks in $\mathcal{V}^{\, \beta}_{+}$ match with the inverse temperatures at which $\v p$ changes its $\beta$-ordering (vertical lines of matching colors). The three different simplices at the bottom show how the Weyl chambers change with~$\beta$.}
    \label{fig-volumeasfunctionofbeta} 
\end{marginfigure}
If at $\beta = 0$ the initial state $\v p$ is passive, the volume of its future thermal cone first decreases with $\beta$, and then starts to increase when $\v \gamma$ passes $\v p$ (i.e., when $\v{p}$ changes its $\beta$-ordering), tending asymptotically to a constant value (see black curves in Fig.~\ref{fig-volumeasfunctionofbeta}). However, if at $\beta = 0$ the initial state $\v{p}$ is maximally active, the behaviour of the volume of $\T_+^\beta(\v{p})$ differs from the previous case. As the thermal state approaches the ground state with increasing $\beta$, the distance between maximally active states and $\v \gamma$ increases with $\beta$, because the ground state and $\v{p}$ are located in opposite chambers. Consequently, the volume increases asymptotically to a constant value (see red curves in Fig.~\ref{fig-volumeasfunctionofbeta}). For general states, one can provide a qualitative explanation of the behaviour of their volumes based on their $\beta$-orderings. The inverse temperatures $\beta$ for which one finds kinks in the future volume $\mathcal{V}_+^\beta(\v{p})$ (vertical lines in Fig.~\ref{fig-volumeasfunctionofbeta}) match with the transitions from a given $\beta$-ordering to another one. This should be compared with the isovolumetric level sets in Fig.~\ref{fig-equivolume}, where a matching non-smooth behaviour is found. Observe that similar kinks are not encountered for the past volume $\mathcal{V}^\beta_-(\v{p})$, related to the smooth behaviour of the corresponding level sets. However, by considering a passive state $\v p$ and all its permutations, we can demonstrate that maximally active and passive states have maximum and minimum future volumes, respectively: 

\begin{figure*}
    \centering
\includegraphics{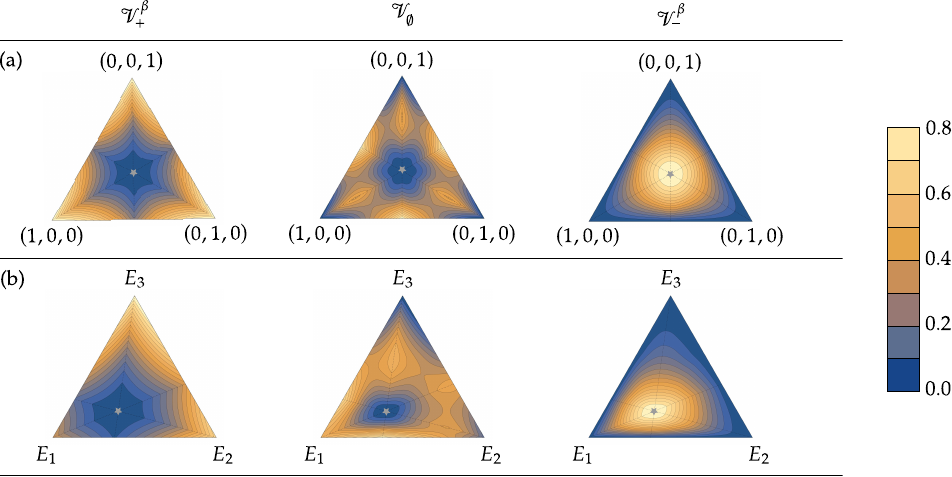}
	\caption[fig-equivolume]{\emph{Isovolumetric sets for thermal cones}. The volume of the future $\mathcal V^{\, \beta}_{+}$, incomparable $\mathcal V^{\, \beta}_{\emptyset}$, and past thermal regions $\mathcal V^{\, \beta}_{-}$, in the space of three-dimensional probability distributions for an equidistant energy spectrum $E_1 =0$, $E_2=1$ and $E_3 = 2$ and inverse temperature (a) $\beta =0$ and (b) $\beta = 0.5$. The thermal state is depicted by a grey star $\color{gray}\bigstar$.}
	\label{fig-equivolume}
    \end{figure*}

\begin{proposition}[Max and min volumes]
\label{obs_passive_active_obs}
For a $d$-dimensional energy-incoherent state $\v p$ with Hamiltonian $H$, and all states defined by permuting its population, the future thermal cone of the permutation resulting in the maximally active and passive states achieve maximum and minimum volumes, respectively.
\end{proposition}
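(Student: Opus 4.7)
The plan is to reduce the claim to a pair of thermomajorisation inequalities between different permutations of $\v p$. Denoting by $\v p_{\text{max}}$ and $\v p_p$ the maximally active and passive rearrangements, and by $\v p_\sigma$ the rearrangement associated with an arbitrary permutation $\sigma\in\mathcal{S}_d$, I would aim to show
\[
\v p_{\text{max}} \;\succ_\beta\; \v p_\sigma \;\succ_\beta\; \v p_p \qquad \forall\sigma\in\mathcal{S}_d.
\]
By Theorem~\ref{Thm:HLPbeta}, each of these relations produces a Gibbs-preserving map between the corresponding states, and hence nests the future thermal cones as $\T^{\, \beta}_+(\v p_p)\subseteq \T^{\, \beta}_+(\v p_\sigma)\subseteq \T^{\, \beta}_+(\v p_{\text{max}})$, from which the required ordering of the volumes follows at once.

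To establish these thermomajorisation inequalities, the main observation is the linear-programming form of the thermomajorisation curve,
\[
f^{\, \beta}_{\v p}(x)\;=\;\max\bigl\{\langle \v a,\v p\rangle \,:\, \langle \v a,\v \gamma\rangle\leq x,\;\v a\in[0,1]^d\bigr\},
\]
which exhibits $f^{\, \beta}_{\v p}(x)$ as the value of a fractional knapsack problem whose item values are the populations $\{p_i\}$ and whose item weights are the thermal weights $\{\gamma_i\}$. All permutations of $\v p$ share the same two multisets, so the comparison of curves reduces to a question about how values are paired with weights: the maximally active arrangement places the largest values on the smallest weights, while the passive one does the opposite, and in any knapsack it is clearly advantageous to have the most valuable items cost the least.

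The technical core is an adjacent-swap lemma: if $\v p_{\sigma'}$ is obtained from $\v p_\sigma$ by interchanging two components at levels $E_i<E_j$ in such a way that afterwards the smaller of the two populations sits at $E_i$ (anti-sorted with $\v\gamma$ on those two sites), then $\v p_{\sigma'}\succ_\beta \v p_\sigma$. Bubble-sorting any $\sigma$ towards $\v p_{\text{max}}$ (resp.\ $\v p_p$) through such transpositions then gives the required chain. I would prove the lemma by isolating the two affected LP variables, treating the remaining $d-2$ contributions as a fixed leftover budget $y$, and solving the two-variable sub-knapsack explicitly on each of the regimes $y\in[0,\gamma_i)$, $[\gamma_i,\gamma_j)$, $[\gamma_j,\gamma_i+\gamma_j)$ for both pairings. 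In every regime the difference of optimal values factors as a non-negative multiple of $(p_\text{large}-p_\text{small})(\gamma_i-\gamma_j)$, which proves the lemma and hence the proposition.

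The hard part is precisely this case analysis: a naive attempt to transport the optimiser of one LP to the other by swapping the two variable assignments fails, since swapping in the direction that preserves the weight constraint decreases the value whenever the high-value item already carries the smaller weight, while swapping in the other direction violates feasibility. This is why the rearrangement inequality cannot be invoked directly at the level of LP optima and the regime-by-regime comparison described above is needed. An alternative route, lifting everything to the embedding lattice of Sec.~\ref{Subsec:embedding-lattice} and comparing ordinary majorisation curves of the embedded vectors $\mathfrak{M}(\v p_{\text{max}})$, $\mathfrak{M}(\v p_\sigma)$, $\mathfrak{M}(\v p_p)$, is also available, but in my view it is no shorter than the direct LP/rearrangement argument.
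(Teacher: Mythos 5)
Your top-level reduction is exactly the paper's: establish the chain $\v p_{\text{max}}\succ_\beta \Pi\v p\succ_\beta \v p_p$ for every permutation~$\Pi$, invoke Theorem~\ref{Thm:HLPbeta} to nest the future thermal cones, and conclude the ordering of volumes from set inclusion. The paper stops there, asserting the thermomajorisation chain without proof, so everything beyond your first display is you filling in a step the authors took for granted — and that part is sound. The fractional-knapsack reading of $f^{\,\beta}_{\v p}$ is correct (the greedy optimiser reproduces the $\beta$-ordered Lorenz curve), and the bubble-sort-through-swaps strategy is the right way to propagate the single-swap statement to the full chain. Two small points on the swap lemma itself. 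First, your three regimes are listed in an order consistent with $\gamma_i<\gamma_j$, whereas $E_i<E_j$ forces $\gamma_i>\gamma_j$; the breakpoints should be $[0,\gamma_j)$, $[\gamma_j,\gamma_i)$, $[\gamma_i,\gamma_i+\gamma_j)$. Second, the regime-by-regime analysis is heavier than necessary: take the LP optimiser $\v a'$ for the sorted pairing $\v p'$ at budget $x$ and build a feasible $\v a$ for the anti-sorted pairing $\v p$ by keeping $a_k=a'_k$ off the swapped pair and choosing $(a_i,a_j)=(a'_i,a'_j)$ if $a'_j\ge a'_i$ or $(a_i,a_j)=(a'_j,a'_i)$ otherwise. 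In the first case the budget is unchanged and the objective gains $(a'_j-a'_i)(p_{\text{large}}-p_{\text{small}})\ge 0$; in the second the objective is unchanged while the used budget drops by $(a'_i-a'_j)(\gamma_i-\gamma_j)\ge 0$. Either way the new LP value is at least the old one, with no need to solve the two-variable subproblem explicitly or to track the leftover budget.
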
 
\begin{proof}
The Corollary is proven by noting that all permutations of $\v p$ are thermomajorised by the one corresponding to the maximally active state $\v p_{\text{max}}$, while the associated passive state $\v p_{p}$ is thermomajorised by all the other permutations. Consequently, $ \mathcal{T}_{+}(\v p_p) \subset \mathcal{T}_{+}(\Pi \v p)$, while $\mathcal{T}_{+}(\Pi \v p) \subset \mathcal{T}_{+}(
\v p_{\text{max}})$ for any permutation matrix $\Pi$.
\end{proof}
Corollary~\ref{obs_passive_active_obs}, also implies that the volume of the past thermal cone is minimum and maximum for the one corresponding to the maximally active and passive states, respectively (see~Fig.~\ref{fig-volumeasfunctionofbeta}). To provide further characterisation of the volumes, we apply Lemma~\ref{Lem:Incomparable_regionT} to a non-full rank state, which allows us to derive the following result: 
\begin{proposition}[Zero volume]
\label{cor_thermalvolumepast}
The past thermal cone of a non-full rank state has volume zero despite being non-empty. 
\end{proposition}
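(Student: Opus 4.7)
The plan is to establish that $\mathcal{T}^{\,\beta}_-(\v p)$ is contained in a finite union of proper faces of $\Delta_d$, each of dimension at most $d-2$, which forces $\mathcal{V}^{\,\beta}_-(\v p) = 0$ while still leaving the past cone non-empty. The starting point is the characterisation $\v q \in \mathcal{T}^{\,\beta}_-(\v p) \Longleftrightarrow \v q \succ_\beta \v p \Longleftrightarrow f^{\,\beta}_{\v q}(x) \geq f^{\,\beta}_{\v p}(x)$ for all $x \in [0,1]$, which follows from Theorem~\ref{Thm:HLPbeta} with $\v q$ playing the role of source and $\v p$ the role of target. The role played by Lemma~\ref{Lem:Incomparable_regionT} is to make transparent how the tangent vectors $\v t^{(n,\v \pi)}$ with $p^{\,\beta}_n = 0$ collapse onto the boundary of the simplex, so that every extremal point of the past is pushed against a face of $\Delta_d$.

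First, I would examine $f^{\,\beta}_{\v p}$ under the assumption that $\v p$ is non-full rank. Denoting $k := |\operatorname{supp}(\v p)| < d$, the $\beta$-ordered vector satisfies $p^{\,\beta}_j = 0$ for $j > k$, so $f^{\,\beta}_{\v p}$ reaches the value $1$ at $x^\star := \sum_{j=1}^k \gamma^{\,\beta}_j < 1$ and remains flat on $[x^\star,1]$. Any $\v q \in \mathcal{T}^{\,\beta}_-(\v p)$ must therefore satisfy $f^{\,\beta}_{\v q}(x^\star) \geq 1$, and since thermomajorisation curves are concave, non-decreasing, bounded above by $1$, and equal to $1$ at $x=1$, this forces $f^{\,\beta}_{\v q} \equiv 1$ on $[x^\star,1]$. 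Because each linear segment of $f^{\,\beta}_{\v q}$ on this interval has slope $q_{\v \pi_{\v q}(i)}/\gamma_{\v \pi_{\v q}(i)}$, these slopes must vanish, so the corresponding components of $\v q$ are zero. Consequently, $\v q$ is supported on a proper subset $T \subsetneq \{1,\dots,d\}$ with total thermal weight $\sum_{i\in T}\gamma_i \leq x^\star$.

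Such vectors $\v q$ lie on a face of $\Delta_d$ of dimension at most $|T|-1 \leq d-2$, and taking the union over the finitely many admissible supports $T$ keeps the set of $(d-1)$-dimensional Lebesgue measure zero; hence $\mathcal{V}^{\,\beta}_-(\v p) = 0$. Non-emptiness is immediate since $\v p \in \mathcal{T}^{\,\beta}_-(\v p)$, as the identity channel is a Gibbs-preserving map. The main subtlety I anticipate concerns a linear segment of $f^{\,\beta}_{\v q}$ that could straddle $x^\star$: combining the equalities $f^{\,\beta}_{\v q}(x^\star) = f^{\,\beta}_{\v q}(1) = 1$ with linearity on the segment forces its slope on the right of $x^\star$ to be zero, which by linearity of that same segment means the slope vanishes throughout, yielding the same support conclusion as before.
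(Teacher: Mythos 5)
Your proof is correct, but it takes a genuinely different route from the paper's. The paper obtains the statement as a quick corollary of Lemma~\ref{Lem:Incomparable_regionT}: for a non-full-rank state the tangent vectors $\v{t}^{(d,\v \pi)}$ degenerate into sharp states, so the set $\mathbb{T}^{\,\beta}$ covers the whole interior of the simplex apart from the future cone, and Theorem~\ref{thm_pasthermalconefinite} then confines the past to the boundary of $\Delta_d$. You instead argue directly from thermomajorisation curves: since $f^{\,\beta}_{\v p}$ saturates at $x^\star=\sum_{i\in\operatorname{supp}(\v p)}\gamma_i<1$, any $\v q\succ_\beta \v p$ must satisfy $f^{\,\beta}_{\v q}\equiv 1$ on $[x^\star,1]$, which kills the slopes (hence the corresponding entries of $\v q$) beyond $x^\star$; therefore $\mathcal{T}^{\,\beta}_-(\v p)$ sits inside a finite union of proper faces of $\Delta_d$ and has zero $(d-1)$-dimensional volume, while non-emptiness follows from $\v p\succ_\beta\v p$. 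Your handling of a segment straddling $x^\star$ is exactly the care needed there, and the claim that zero entries are $\beta$-ordered last (ratio $q_i/\gamma_i=0$) is what makes the slope-to-support translation clean. What your argument buys: it is self-contained (no tangent-vector or embedding-lattice machinery) and it yields the slightly stronger structural fact that every state in the past cone of a non-full-rank state is itself non-full-rank, with support of Gibbs weight at most $x^\star$. What the paper's version buys is brevity, since the construction of $\mathbb{T}^{\,\beta}$ has already been paid for earlier in the chapter.
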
 
\begin{proof}
Without loss of generality, consider a state of non-full rank $\v{p} = (p_1, ..., p_{d-1},0)$. Applying Eq.~\eqref{eq_thermaltangentvectors} yields $\v{t}^{(d,\v \pi)} = (1,0,...,0)$, for all $\v \pi$. Consequently, the incomparable region is given by all points in the interior of the probability simplex, except those that are in the future of $\v{p}$. Then, according to Theorem \eqref{eq_incomparableconeFT}, all the points of the past will be located at the edge, and therefore the volume of $\T^{\, \beta}_{-}(\v{p})$ is zero.
\end{proof}

Understanding the behaviour of the thermal incomparable region is not directly straightforward. However, Corollary~\ref{cor_thermalvolumepast} helps us to find the state of non-full rank with the largest incomparable region:
\begin{proposition}[Largest incomparagion region]
\label{cor_full_rank_largest}
The non-full rank state with the largest thermal incomparable region is given by
\begin{equation}
\label{eq_nonfullrankstatethermalinc}
\v g = \frac{1}{Z}\left(e^{-\beta E_1}, ...,e^{-\beta E_{d-1}},0 \right) \quad \text{where} \quad Z = \sum_{i=1}^{d-1}e^{-\beta E_i}. 
\end{equation}
\end{proposition}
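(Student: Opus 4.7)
The plan is to reduce the problem to a thermomajorisation comparison. By Proposition~\ref{cor_thermalvolumepast}, every non-full rank state has $\mathcal V^{\,\beta}_-=0$, so $\mathcal V^{\,\beta}_\emptyset = 1-\mathcal V^{\,\beta}_+$, and it suffices to minimise $\mathcal V^{\,\beta}_+$ over the non-full rank states. Since $\v p\succ_\beta \v q$ implies $\mathcal T^{\,\beta}_+(\v q)\subseteq \mathcal T^{\,\beta}_+(\v p)$ (by composition of Gibbs-preserving maps, together with Corollary~\ref{thm_futurefinite}), it is enough to show that every non-full rank state $\v p$ thermomajorises $\v g$.

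Suppose $p_i=0$ for some $i$, and introduce the auxiliary state $\v g^{(i)}$ that is thermal on $\{1,\dots,d\}\setminus\{i\}$ with zero weight on level $i$. Its thermomajorisation curve consists of a single line from $(0,0)$ to $(1-\gamma_i,1)$, followed by a horizontal piece, because the ratios $g^{(i)}_j/\gamma_j$ are constant on the support and vanish at $i$. The thermomajorisation curve of $\v p$ is concave, ends at height $1$ at some $x_p\le 1-\gamma_i$, and lies above its own chord from $(0,0)$ to $(x_p,1)$. Since $1/x_p\ge 1/(1-\gamma_i)$, that chord lies above the straight-line portion of $\v g^{(i)}$'s curve; combining these two inequalities gives $\v p\succ_\beta \v g^{(i)}$.

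Next I would compare the family $\{\v g^{(k)}\}_{k=1}^d$ among themselves. Each $\v g^{(k)}$ is defined by a single straight line from the origin saturating at $x=1-\gamma_k$, so its curve lies pointwise below the curve of $\v g^{(k')}$ whenever $\gamma_k<\gamma_{k'}$, i.e.\ whenever $E_k>E_{k'}$. Assuming the convention $E_d=\max_i E_i$ that is implicit in the statement of the proposition, this yields $\v g^{(k)}\succ_\beta \v g^{(d)}=\v g$ for every $k$. Chaining $\v p\succ_\beta\v g^{(i)}\succ_\beta\v g$ and invoking the inclusion of future thermal cones completes the argument.

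The main obstacle is handling states with more than one zero entry, since then the saturation abscissa $x_p$ of $\v p$'s ramp is strictly smaller than $1-\gamma_i$ for every $i$ in the zero-support of $\v p$, and one cannot align endpoints directly. The chord-based argument is tailored precisely to absorb this case: a steeper chord keeps $\v p$'s concave curve safely above the shallower line of $\v g^{(i)}$, so no case-splitting on the multiplicity of zeros is needed. The remaining delicate point is the energy labelling, which must be fixed so that the state $\v g$ in Eq.~\eqref{eq_nonfullrankstatethermalinc} is the infimum of the family $\{\v g^{(k)}\}$ under~$\succ_\beta$, and hence the non-full rank minimiser of $\mathcal V^{\,\beta}_+$.
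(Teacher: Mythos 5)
Your proof is correct and follows the paper's outline exactly: reduce via Proposition~\ref{cor_thermalvolumepast} to minimising $\mathcal V^{\,\beta}_+$ over non-full rank states, show that every non-full rank $\v p$ thermomajorises $\v g$, and invoke the inclusion of future thermal cones. The published proof merely \emph{asserts} $\v p\succ_\beta\v g$, so your concavity/chord argument usefully supplies the missing justification; note also that the detour through the family $\{\v g^{(k)}\}$ is unnecessary, since $\gamma_d=\min_j\gamma_j$ gives $x_p\le 1-\gamma_i\le 1-\gamma_d$ directly, and the same chord estimate then yields $\v p\succ_\beta\v g^{(d)}=\v g$ in a single step.
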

\begin{proof}
Consider an arbitrary non-full rank state $\v{p}$. According to Corollary~\ref{cor_thermalvolumepast}, the volume of the thermal incomparable region can be written as $\mathcal V^{\, \beta}_{\emptyset}(\v p) = 1-\mathcal V^{\, \beta}_+(\v p)$. Now note that, $\v p \succ_{\beta} \v g$, and $\mathcal T_{+}(\v g) \subset \mathcal T_{+}(\v p)$. This implies that $\v g$ is the non-full rank state with the smallest future
thermal cone and, therefore, with the largest incomparable region. 
\end{proof}

So far, the behaviour and properties of the volumes have been analysed and discussed without explicitly calculating them. There are several known algorithms for computing volumes of convex polytopes, such as triangulation, signed decomposition methods, or even direct integration~\cite{iwata1962, buller1998}. These algorithms can be employed to obtain the volumes of past and future cones; the volume of the incomparable region can be calculated using the fact that the total volume of the probability simplex $\Delta_d$ is equal to one. For a three-dimensional energy-incoherent state $\v p$, expressions for the volumes can be easily derived. The starting point is to consider the Gauss area formula~\cite{braden1986}, which allows us to determine the area of any polygon with vertices described by Cartesian coordinates. Taking into account a polygon $P$ whose vertices, assumed to be arranged along the boundary in a clockwise manner, are denoted by $P_i = (x_i, y_i)$, with $i= \{1, ..., n\}$, the Euclidean volume can be expressed as 
\begin{equation}
V = \frac{1}{2} \left| \sum_{i=1}^n \operatorname{det} \begin{pmatrix} x_i & x_{i+1} \\ y_i & y_{i+1} \end{pmatrix} \right| ,   \end{equation}
where $x_{n+1} = x_1$ and $y_{n+i} = y_1$. For $\beta = 0$, deriving the volume of the thermal cones is straightforward, since the vertices, or extreme points, are permutation of $\v p$. In this case, we arrive at the following closed-form expressions:
\begin{align}
    \mathcal V^{\, 0}_{+}(\v p) =& \: (3p^{\downarrow}_1-1)^2-3(p^{\downarrow}_2-p^{\downarrow}_1)^2 , \label{eq_futurevolume} \\ 
    \mathcal V^{\, 0}_{\emptyset}(\v p) =& \: 1-3(1-p^\downarrow_1)^2+ (1-3p^\downarrow_3)^2 - 2 \mathcal V^{\, 0}_{+}(\v p) \\
    &\:+ 3\theta(1/2 - p^\downarrow_1)(1-2p^\downarrow_1)^2 \nonumber \\
    \mathcal{V}^{\, 0}_{-}(\v p) =& \: 12p^\downarrow_2p^\downarrow_3 - 3\theta(1/2 - p^\downarrow_1)(1-2p^\downarrow_1)^2 , \label{eq_pastvolume} 
\end{align}
where $\theta$ is the Heaviside step function. The situation involving a finite $\beta$ is not as simple as before. Now, the extreme points are obtained by applying Lemma~\ref{lem_extreme}, and, although computationally it is an easy task to calculate them, a neat and concise closed-form expression cannot be derived. 

Finally, let us look at iso-volumetric curves for different values of $\beta$. In Fig.~\ref{fig-equivolume}, we plot these curves for a three-level system and four different temperatures. As expected, the symmetry is broken for any $\beta > 0$ as $E_3 > E_2 > E_1$. A simple fact worth mentioning is that the volume of the future thermal cone of the highest excited state $\v s_d = (0,\hdots,0,1)$ is always independent of $\beta$, maximum, and equal to unity. Conversely, the past thermal cone volume is maximum for a Gibbs state and equal to unity. Moreover, these curves give insight into how resourceful states are distributed within the space of states. 

\subsection{The volumes of entanglement cones}~\label{Subsec:volume-entanglement-cones}

\begin{figure*}[t]
    \centering
\includegraphics{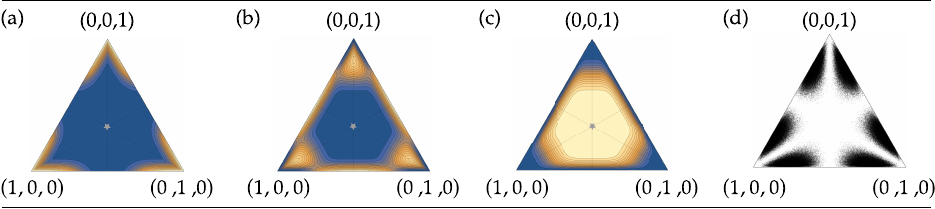}
    \caption{\label{fig-level-sets-entanglement}\emph{Isovolumetric sets for entanglement cones}. In order to calculate the isovolumetric lines for (a) the past cone, (b) the incomparable region and (c) the future cone for entanglement resource theory of states in $\mathcal{H}^{3\times3}$ one considers the density of the Schmidt coefficients $\v{p}$ induced by the Haar measure in the space $\mathcal{H}_9$ of pure bipartite states, as depicted in panel (d) by showing a sample of $5\cdot10^4$ points in $\Delta_3$. Observe the scarcity of points in the central region, resulting in characteristic concentration of states with large future in the centre of the simplex and large past in the vicinity of the vertices.} 
\end{figure*}
Finally, we will briefly discuss the general qualitative aspects of the volumes of entanglement cones based on the numerical considerations. Detailed formal methods used in order to obtain them are discussed in Section~\ref{app:entanglement_vols}.

Naturally, depending on the context, the probability distribution $\v{p}$ may pertain to the Schmidt coefficients of a pure entangled state or the coefficients resulting from decomposing the state in a distinguished basis in the context of coherence~\cite{Streltsov2016,Streltsov2017}. 
Despite the close connection between the resource theory of thermodynamics (at $\beta = 0$) with the resource theories of entanglement and coherence, crucial differences appear already at the level of a single state as the order is reversed~\cite{zyczkowski2002}, hence, interchanging the future $\mathcal{T}_+$ with the past $\mathcal{T}_-$. The difference is even more pronounced within the context of volumes for the entanglement of pure states under LOCC operations.

The distribution of Schmidt coefficients induced by the uniform Haar measure in the space of pure bipartite states is significantly different from the flat distribution in the probability simplex $\Delta_d$ \cite{zyczkowski2001induced}. In particular, one observes a repulsion from the centre and, in the case of entangled systems of unequal dimension, from the facets of the simplex. Consequently, this implies a significant difference in the qualitative features of the isovolumetric curves. Fig.~\ref{fig-level-sets-entanglement} shows the isovolumetric curves for $d=3$ with equal-sized systems. Observe that states with large future volumes $\mathcal{V}_+$ are concentrated around the centre of the simplex, which is explained by the repelling property. Inverse effects can be seen for the states with large past volumes $\mathcal{V}_-$, which concentrate at the boundaries of the simplex. The differences become even more pronounced for systems of unequal dimensions, as we will demonstrate in qualitative figures in Section~\ref{app:entanglement_vols}~(see Fig.~\ref{fig-entanglementvolume-2}).

\section{Coherent thermal cones for a two-level system}\label{app:coherent_thermal}

\begin{figure}[t]
\includegraphics[width=10.774cm]{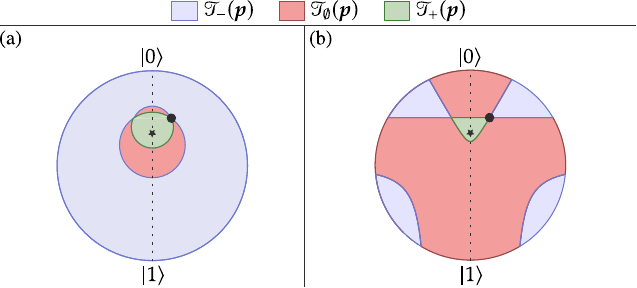}
	\caption{For a two-level system with initial state $\rho$, represented by a black dot $\bullet$, and thermal state represented by a black star $\bigstar$  with Bloch vectors \mbox{$\v r_{\rho} = (0.2,0,0.5)$} and \mbox{$\v r_{\gamma} = (0,0,1/3)$}, respectively, we depict in the real cross-section of the Bloch ball, the coherent thermal cone (a) under Gibbs-preserving operations (b) under thermal operations}
	\label{Fig:coherent_thermal_cone}
\end{figure}

Finally, let us now explain how to use the results from Refs.~\cite{LostaglioKorzekwaCoherencePRX} and~\cite{Korzekwa2017} to obtain the future and past thermal cones beyond energy-incoherent states. Specifically, we will explicitly construct the thermal cones for a qubit under thermal and Gibbs-preserving operations.

\begin{center} 
\emph{\textbf{Coherent thermal cones for thermal operations}}
\end{center}

Consider initial and target states of a two level system, $\rho$ and $\sigma$, with both written in the energy eigenbasis as
\begin{equation}
\rho = \begin{pmatrix}
p & c \\ 
c & 1-p 
\end{pmatrix} \quad , \quad \sigma = \begin{pmatrix}
q & d \\ 
d & 1-q 
\end{pmatrix},
\end{equation}
where $c$ and $d$ are assumed to be real without loss of generality, which amounts to considering a cross-section of the Bloch ball in the $XZ$ plane. Moreover, the thermal ground state occupation of the considered two-level system will be denoted by $\gamma$. It has been shown that for thermal operations, the coherences of the initial and target states have to satisfy the following inequality \cite{LostaglioKorzekwaCoherencePRX},
\begin{equation}\label{eq_condition_coherence}
    d \leq c \frac{\sqrt{[q(1-\gamma)-\gamma(1-p)][p(1-\gamma) - \gamma(1-q)]}}{|p-\gamma|}.
\end{equation}
Thus, we find the boundary of the future thermal cone by saturating Eq.~\eqref{eq_condition_coherence}, and solving it for $q$ we obtain the achievable ground state occupation as a function of target coherence $d$, 
\begin{align}\label{eq_solution_coherence_fut}
q_1(d) &= \frac{(\gamma -p) \sqrt{c^2 (1-2 \gamma )^2+4\gamma d^2 (1-\gamma) }}{2\gamma  c (\gamma -1) } \nonumber \\ & \quad+\frac{(p-\gamma)-2\gamma  p(1-\gamma)}{2\gamma c (\gamma -1)}.
\end{align}
Therefore, the coherent future thermal cone is given by the region delimited by Eq.~\eqref{eq_solution_coherence_fut} from one side and a line segment connecting $(-c,p)$ and $(c,p)$. To characterise the coherent past thermal cone, it will be convenient to introduce a number $d_{\text{cross}} \geq 0$ defined by the relation \mbox{$d_{\text{cross}}^2 + q(d_{\text{cross}})^2 = 1$}.

The coherent past thermal cone is generically composed of two disjoint regions. The first region is contained between a line segment connecting the points $(c,p)$ and $(d_{\text{cross}},p)$, the curve $q_1(d)$ for $d\in[c, d_{\text{cross}}]$ and the boundary of the Bloch ball, together with its reflection with respect to the $Z$-axis.
The second one is obtained in a similar manner by focussing on the past state rather than the target, and thus by solving Eq.~\eqref{eq_condition_coherence} with interchanges $p\leftrightarrow q$ and $c\leftrightarrow d$. This results in
\begin{align}\label{eq_solution_coherence_pas}
q_2(d) &= \frac{2 \gamma c^2+\sqrt{c^2 (p-\gamma )^2 \left[(1-2 \gamma )^2 d^2-4 c^2 (\gamma -1) \gamma \right]}}{2 \left[d^2+(\gamma -1) \gamma  c^2\right]} \nonumber \\ & \:\:\:\:+ \frac{d^2 [p-\gamma -2\gamma  p (1-\gamma)]}{2 \left[d^2-(1-\gamma) \gamma  c^2\right]},
\end{align}
with $d \in [d_{\textrm{min}},d_{\textrm{max}}]$, where $d_{\textrm{min}}$ and $d_{\textrm{max}}$ are real positive solutions of equation $q(d)^2+d^2 = 1$, such that \mbox{$d_{\text{cross}} \leq d_{\text{min}} \leq d_{\text{max}}$}. However, for large values of coherence $c$, we note that this second region may not appear at all. Finally, the incomparable region $\mathcal{T}_{\emptyset}(\rho)$ is obtained by subtracting the past and future cones from the entire Bloch ball.

\begin{center}
\emph{\textbf{Coherent thermal cones for Gibbs-preserving operations}}
\end{center}

Consider a parametrisation of qubit states $\rho$ in the Bloch sphere representation,
\begin{equation}
\label{eq:bloch_state}
\rho=\frac{\iden+\v{r}_\rho\cdot\v{\sigma}}{2},
\end{equation}
where \mbox{$\v{\sigma}=(\sigma_x,\sigma_y,\sigma_z)$} denotes the vector of Pauli matrices. The Bloch vectors of the starting state $\rho$, target state $\rho'$ and the Gibbs state $\gamma$ are given by:
\begin{equation}
\label{eq:bloch_param}
\v{r}_\rho=(x,y,z),\quad\v{r}_{\rho'}=(x',y',z'),\quad\v{r}_\gamma=(0,0,\zeta), 
\end{equation}
where the $z$ coordinate of the Gibbs state can be related to the partition function $Z$ by $\zeta=2Z^{-1}-1\geq 0$.

According to Ref.~\cite{Korzekwa2017}, there exists a GP quantum channel $\E$ such that $\E(\rho)=\rho'$ if and only if \mbox{$R_{\pm}(\rho)\geq R_{\pm}(\rho')$} for both signs, where \mbox{$R_{\pm}(\rho)=\delta(\rho)\pm\zeta z$} and 
\begin{equation}
	\label{eq:delta_lattice}
	\delta(\rho):=\sqrt{(z-\zeta)^2+(x^2+y^2)(1-\zeta^2)}.
	\end{equation}
Consequently, the future thermal cone $\T_+(\rho)$ of any qubit state $\rho$ under GP operations can be directly constructed from the above result. For a generic qubit state $\rho$, we first orient the Bloch sphere so that its $XZ$ plane coincides with the plane containing $\rho$ and a thermal state $\gamma$, i.e., \mbox{$\v{r}_\rho=(x,0,z)$}. Then, define two disks, $D_1(\rho)$ and $D_2(\rho)$ with corresponding circles $C_1(\rho)$ and $C_2(\rho)$, of radii 	
	\begin{equation}
	\label{eq:radii}
	R_1(\rho)=\frac{R_-(\rho)+\zeta^2}{1-\zeta^2},\quad R_2(\rho)=\frac{R_+(\rho)-\zeta^2}{1-\zeta^2},
	\end{equation}
centred at 
	\begin{equation}
	\label{eq:centres}
	\begin{array}{ccc}
		\v{z}_1(\rho)&=&[0,0,\zeta(1+R_1(\rho))],\\ \v{z}_2(\rho)&=&[0,0,\zeta(1-R_2(\rho))].
	\end{array}
	\end{equation}
Therefore, the future thermal cone under GP quantum channels is given by the intersection of two disks of radii $R_1(\rho)$ and $R_2(\rho)$ centred at $\v{z}_1(\rho)$ and $\v{z}_2(\rho)$, $\mathcal{T}_+(\rho) = D_1(\rho)\cap D_2(\rho)$.

The incomparable region is given by mixed conditions, i.e., $\rho'\in\mathcal{T}_\emptyset(\rho)$ if and only if $R_{\pm}(\rho)\geq R_{\pm}(\rho')$ and $R_{\mp}(\rho)< R_{\mp}(\rho')$, or in terms of the disks given beforehand, $\mathcal{T}_\emptyset(\rho) = D_1(\rho)\cap D_2(\rho) \backslash \mathcal{T}_+(\rho)$. Finally, the past cone $\mathcal{T}_-(\rho)$ can be easily given by subtracting the future cone and the incomparable region from the entire Bloch ball.

\section{Derivation of the results}\label{sec_appendix}

In this section, we derive the two most important results of this chapter, namely, the incomparable thermal region for $\beta = 0$ and $\beta \geq 0$. We proceed by first considering the case of $\beta = 0$, covered by Lemma~\ref{Lem:Incomparable_region0}, and develop the notion of the so-called tangent vectors $\v t^{(n)}$ that comprise the boundary of the incomparable region. Then, we generalize the notion of tangent vectors and use it in the proof of Lemma~\ref{Lem:Incomparable_regionT}, which concerns the incomparable region for $\beta > 0$.

\subsection{Infinite-temperature}

\begin{center}
\emph{\textbf{a. Tangent vector}}
\end{center}

The first concept we will need in the proof is that of a \textbf{tangent probability vector}, referred to as the tangent vector for brevity, which will prove to be an essential ingredient.

Let $\Delta_d$ be the set of probability vectors of dimension $d$ with real entries, $\Delta_d= \left\{(p_1, ..., p_d) \in \mathbb{R}^d : \sum_i p_i = 1\right\}$, and let us restrict ourselves to vectors ordered in a non-increasing order, i.e., $p_i \geq p_{i+1}$. To avoid complications, we assume, without loss of generality, that all $p_i \neq p_j$. Following the notation established in Chapter~\ref{C:mathematical_preliminaries}, we will denote probability vectors by bold lowercase letters $\v p \in \Delta_d$ and their corresponding cumulative counterparts by bold uppercase vectors $\v p: P_i = \sum_{j=1}^i p_i$ with $i\in{0,\hdots,d}$. For any vector $\v p$, we introduce a tangent vector $\v{t}(\v p) \equiv \v{t}\in\Delta_d$ by imposing that all its components except the first and the last are equal, $t_i = t_j$ for all $1 < i < j < d$. Additionally, we require that the cumulative vector $\v{T}$ agrees with the vector $\v p$ in at most two consecutive points, i.e., $T_i = P_i$ and $T_j > P_j$ for all $j\in\left\{1,\hdots,d-1\right\}\backslash{i}$, or $T_j = P_j$ for $j \in\left\{i, i+1\right\}$ and $T_j > P_j$ elsewhere. The two imposed conditions follow the intuition of tangency and, by construction, satisfy the majorisation relation $\v{t} \prec \v p$. Furthermore, note that $T_0 = P_0 = 0$ and $T_d = P_d = 1$ and thus are naturally excluded from the considerations.

\begin{figure*}
    \centering
    \includegraphics{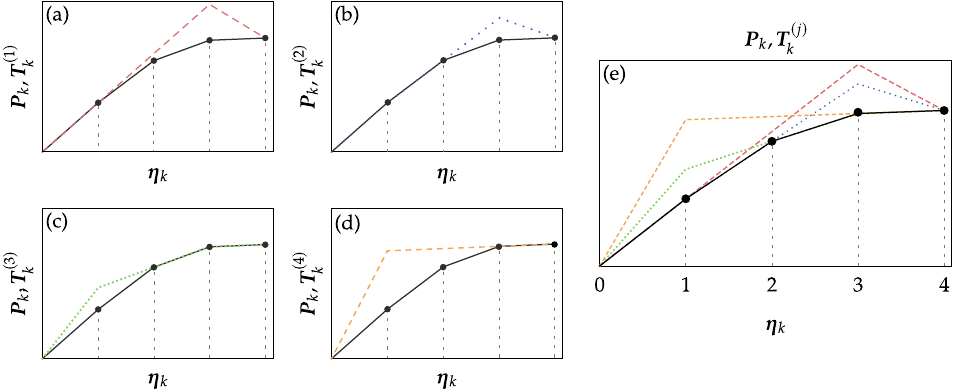}
    \caption{\label{fig-majorisationd4-examples2} \emph{Lorenz curves of the tangent vectors $\v t^{(n)}$}. Majorisation curves of the state $\v p = (0.43, 0.37, 0.18, 0.02)$ (black) and (a) $\v{t}^{(1)}$ (b)  $\v{t}^{(2)}$ (c) $\v{t}^{(3)}$ (d) $\v{t}^{(4)}$ and (e) all tangent vectors $j \in \{1,2,3,4\}$, respectively.} 
\end{figure*}

Indeed, assuming equality between $\v T$ and $\v P$ at exactly two consecutive points restricts the tangent vectors to a set of $d$ unique probability vectors $\v{t}^{(n)}$ defined as 
\begin{equation}
\label{A:tvector}
    \v{t}^{(n)}(\v{p})\equiv \v{t}^{(n)}= \left(t^{(n)}_1, p_n, ..., p_n, t^{(n)}_d\right) , 
\end{equation}
for $1\leq n \leq d$ with the first and last components given by 
\begin{equation}
    t_1^{(n)}  = \sum_i^{n-1} p_i - (n-2) p_n\quad, \quad
    t_d^{(n)}  = 1 - t^{(n)}_1 - (d-2)p_n.
\end{equation}

Observe that the tangent vectors $\v{t}^{(n)}$ that agree with the Lorenz curve of $\v p$ at two successive points can be used to construct all possible tangent vectors $\v t$ that satisfy the condition of agreement at at least a single point, $T_i = P_i$. Indeed, consider a vector $\v t (\lambda, i) \equiv \v t= (1 - \lambda)\v{t}^{(i)} + \lambda \v{t}^{(i+1)}$. Direct calculation shows that it is tangent at just one point, $T_i = \lambda T_i^{(i)} + (1-\lambda)T_i^{(i+1)} = P_i$. Similarly, we arrive at $T_j > P_j$ for $j \neq i$ and $0<\lambda<1$. 

Thus, starting with a discrete set of $d$ tangent vectors $\v{t}^{(n)}$, we recover the entire continuous family of tangent vectors $\v{t}(\lambda, i)$. This argument can be further formalised by considering the Lorenz curves $f_{\v p}(x)$ and $f_{\v t}(x)$. Assuming the left and right derivatives of the former, $\lim_{x \rightarrow \frac{i}{d}-} f'{\v p}(x) = d\cdot p_i$ and $\lim_{x \rightarrow \frac{i}{d}+} f'{\v p}(x) = d\cdot p_{i+1}$, we obtain the extremal slope values for the tangent lines at the $i$-th elbow. Now, considering the second Lorenz curve by construction we have that $f_{\v t}(i/d) = f_{\v p}(i/d)$ and its derivative at this point, $f'{\v t}(i/d) = d\left[(1-\lambda)p_i + \lambda p{i+1}\right]$, span all values between the extremal slope values $d\cdot p_i$ and $d\cdot p_{i+1}$, therefore exhausting the family of possible tangent lines at the $i$-th elbow.

\begin{center}
\emph{\textbf{b. Lattice}}
\end{center}

Lattices provide a setting in which it is natural to represent the precedence or succession of elements within a given set. In particular, they can be used to establish a time-like structure for a given set, as defined by a lattice (see Definition~\ref{def_Lattice}).

It is well known that the partially ordered set $(\Delta_d, \succ)$ of $d$-dimensional probability vectors with real entries in non-increasing order under majorisation forms a lattice~\cite{cicalese2002,Korzekwa2017}. In this setting, the join $\v p \vee \v q$ can be interpreted as the last common past point of $\v p$ and $\v q$, while the meet $\v p \wedge \v q$ can be seen as the first common future point of the pair $\v p, \v q$. The procedure to obtain the join and meet has been illustrated in Ref.~\cite{Korzekwa2017}, and since part of our proof relies on the existence of the join, we will now review the algorithm used to construct it. 

To construct the join of $\v p$ and $\v q$, we start with a probability vector $\v r^{(0)}$ with elements defined by
\begin{equation}
    r_i^{(0)} = \max\left\{\v P_i, \v Q_i\right\} - \max\left\{\v P_{i-1}, \v Q_{i-1}\right\} .
\end{equation}
At this stage, it is possible that the entries of $\v r^{(0)}$ are not ordered in a non-increasing manner. However, we can obtain a properly ordered probability vector $\v r = \v p \vee \v q$, defining the actual join, in no more than $d - 1$ steps. In each step $k\geq 0$, we define $N \geq 2$ as the smallest index where there is an increase between two consecutive components of the probability vector $r^{(k)}$, i.e., $r^{(k)}N > r^{(k)}{N-1}$. Next, we define $M \leq N - 1$ in a way that introducing constant probabilities for the entries with $i\in{M,\hdots,N}$ eliminates the growth. This is done by ensuring that
\begin{equation} \label{eq:joinConstruct_aCoeffs}
    r^{(k)}_M \geq \frac{\sum_{i = M}^N r_i^{(k)}}{N - M + 1} =: a_k.
\end{equation}
Thus, the next iterative step $\v r^{(k+1)}$ is defined by setting its components as
\begin{equation}\label{eq:joinConstruct_flatFragments}
    r_i^{(k+1)} = 
    \begin{cases}
        a_k & \text{for } i\in\{M,\hdots,N\} \\
        r_i^{(k)} & \text{otherwise}
    \end{cases}.
\end{equation}
This construction is repeated until for some $k'$ the probability vector $\v r^{(k')} \equiv \v r$ is ordered non-increasingly; in this way we get the proper join.

\begin{center}
\emph{\textbf{c. Incomparable region and the boundaries}}
\end{center}

As a final piece of information needed to understand the proofs, we introduce the definition of the boundary of the past cone.

\begin{definition}[Boundary of the past thermal cone] \label{def_pastBoundary}
Consider a $d$-dimensional energy incoherent state $\v p\in\mathcal{P}_d$ with $d \geq 3$. We define the boundary of the past thermal cone as the set of probability vectors $\v q \succ \v p$ for which the cumulative vector $\v Q$ is equal to the cumulative vector $\v P$ at least one point. In other words, $\v P_j < \v Q_j$ for some proper subset of the indices $j$ and $\v P_i = \v Q_i$ for all other indices.
\end{definition}

One can define the boundary of the future cone in a similar manner by changing the direction of the inequalities. By considering the common part of the boundary between the future and past cones, we arrive at a simple observation:

\begin{observation}[Common point of future and past cones]
    A point that lies simultaneously at the common part of the boundary between the future and the past must fulfil $\forall_i P_i = Q_i$. Therefore, for $\beta = 0$ we have the equality of $\v p$ and $\v q$ up to a permutation, giving a total of $d!$ common points between the future and the past of any vector $\v p$.
\end{observation}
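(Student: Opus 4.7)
The plan is to characterise the common boundary by directly combining the defining inequalities of the past and future cones, and then to count the resulting points. First, I would note that by definition $\mathcal{T}_-(\v p)$ consists of vectors $\v q$ with $\v q \succ \v p$, i.e., $Q_k \geq P_k$ for all $k \in \{1,\dots,d\}$, where $\v P, \v Q$ denote the cumulative sums of the non-increasing rearrangements $\v p^{\downarrow}, \v q^{\downarrow}$. According to Definition~\ref{def_pastBoundary}, lying on the past boundary forces at least one of these inequalities to be an equality. By symmetry (swapping the roles of $\v p$ and $\v q$), the corresponding condition on the future boundary is $P_k \geq Q_k$ for all $k$ with at least one equality.

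Next, I would combine these two sets of inequalities for any point $\v q$ lying on the common part of both boundaries. Pairing $Q_k \geq P_k$ with $P_k \geq Q_k$ at each $k$ forces $P_k = Q_k$ for every $k \in \{1,\dots,d\}$, which proves the first claim $\forall_i P_i = Q_i$. Note that this is in fact stronger than the ``proper subset'' clause in Definition~\ref{def_pastBoundary}, but it is perfectly consistent: the common-boundary points are exactly the degenerate corners where the boundary conditions of the past and the future collapse onto one another.

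Then, specialising to $\beta = 0$, I would take consecutive differences $P_k - P_{k-1} = p_k^{\downarrow}$ and $Q_k - Q_{k-1} = q_k^{\downarrow}$ to conclude that $p_k^{\downarrow} = q_k^{\downarrow}$ for all $k$. Since the sorted vectors coincide, $\v q$ must be obtained from $\v p$ by a permutation $\v \pi \in \mathcal{S}_d$ of its entries. Counting these in the full simplex $\Delta_d$ (where distinct orderings represent distinct points) yields $d!$ common points whenever the entries of $\v p$ are pairwise distinct, as stated.

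The main obstacle I anticipate is bookkeeping between two conventions that coexist in the argument: the cumulative sums $P_k, Q_k$ are computed from the non-increasing rearrangement (so they are permutation-invariant functionals of $\v p$ and $\v q$), whereas $\v p, \v q \in \Delta_d$ themselves are labelled tuples, so that permuted vectors are geometrically distinct. Keeping these two viewpoints separate is what reconciles the single equality $\v p^{\downarrow} = \v q^{\downarrow}$ with the multiplicity of $d!$ common points; for degenerate $\v p$ with repeated entries the count simply drops to $d!/\prod_i n_i!$, where $n_i$ are the multiplicities, so the stated $d!$ is naturally read as an upper bound.
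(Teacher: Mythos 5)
Your argument is correct and is exactly the intended fill-in of the paper's observation, which is stated without an explicit proof right after Definition~\ref{def_pastBoundary}: pairing the boundary conditions $Q_k \geq P_k$ (past) and $P_k \geq Q_k$ (future) forces $P_k = Q_k$ for all $k$, and for $\beta=0$ taking first differences of the cumulative vectors gives $\v p^{\downarrow}=\v q^{\downarrow}$, hence $\v q = \Pi\v p$. Your remark about the count dropping for degenerate $\v p$ is also consistent with the paper, which explicitly assumes $p_i\neq p_j$ in the derivation section so that $d!$ is exact.
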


Equipped with the notion of tangent vectors $\v t^{(n)}$, the join $\v p \vee \v q$ and the boundary of the past cone, we are now prepared to tackle the Lemma~\ref{Lem:Incomparable_region0} concerning the incomparable region, and this is done by proving the following result:
\begin{lemma}
    \label{lem-incomp}
      Consider $\v{p}, \v{q} \in \Delta_d$ and assume that $\v p \nsucc \v q$. Then, $\v{q}$ belongs to the incomparable region of $\v{p}$, $\v q \in \mathcal{T}_\emptyset(\v p)$, if and only if it belongs to the future majorisation cone of some vector $\v t$ tangent to $\v p$, $\v{q}\in\T_+(\v{t})$, with
        $$
            \v{t} \equiv \v{t}(\v{p};\lambda, n)=  \lambda \v{t}^{(n)}(\v{p}) + (1-\lambda) \v{t}^{(n+1)}(\v{p}) ,
        $$
        for some $n\in \{1,d-1\}$ and $\lambda \in[0,1]$.
\end{lemma}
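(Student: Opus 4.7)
The proof splits into a forward and a reverse implication, and I expect the forward direction to be the more delicate one.

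For the forward direction ($\v q \in \mathcal{T}_\emptyset(\v p) \Rightarrow \exists\,\v t$), my plan is to exploit the lattice structure of $(\Delta_d, \succ)$ reviewed above. I would form the join $\v r := \v p \vee \v q$, which satisfies $\v r \succ \v p$ and $\v r \succ \v q$ and in particular places $\v q$ inside $\mathcal{T}_+(\v r)$. The incomparability of $\v p$ and $\v q$ forces $\v r \neq \v p$, but by the construction algorithm for the join reviewed just above (Eqs.~\eqref{eq:joinConstruct_aCoeffs}--\eqref{eq:joinConstruct_flatFragments}), the Lorenz curve $f_{\v r}$ must meet $f_{\v p}$ at certain elbow indices (the tangency indices) while being linear with constant probability entries between them. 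The key step is to isolate two consecutive such tangency indices, $n-1$ and $n$, and show that the corresponding piece of $\v r$ is dominated by a tangent $\v t = \lambda\,\v t^{(n)} + (1-\lambda)\,\v t^{(n+1)}$ with $\lambda$ chosen to match the local slope of $f_{\v r}$; the remaining parts of $\v r$ already lie on $f_{\v p}$ and are automatically below $f_{\v t}$ by the tangent property. Having $\v t \succ \v r \succ \v q$ then gives $\v q \in \mathcal{T}_+(\v t)$.

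For the reverse direction ($\v q \in \mathcal{T}_+(\v t) \Rightarrow \v q \in \mathcal{T}_\emptyset(\v p)$ under the hypothesis $\v p \nsucc \v q$), I would argue by contradiction that $\v q \nsucc \v p$. Assuming the opposite, one has $f_{\v p}(x) \leq f_{\v q}(x) \leq f_{\v t}(x)$ on $[0,1]$, so $f_{\v q}$ is sandwiched between two concave piecewise-linear curves. By the defining property of the tangent vectors in Eq.~\eqref{A:tvector}, $f_{\v t}$ meets $f_{\v p}$ only at the prescribed isolated elbow point(s), being strictly above elsewhere. Using the piecewise linearity of $f_{\v q}$ together with the sandwich, the curve $f_{\v q}$ is forced to coincide with $f_{\v p}$ (giving $\v q = \v p$ or a permutation thereof) or with $f_{\v t}$, both of which are degenerate boundary points between past and incomparable regions. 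These boundary cases are precisely those absorbed by the passage from the closure of $\mathbb{T}$ to $\operatorname{int}(\mathbb{T})$ in the statement of Lemma~\ref{Lem:Incomparable_region0}.

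The main obstacle I anticipate is in the forward direction, where in general the join $\v r$ matches $f_{\v p}$ at more than two elbow indices and is therefore not itself of the tangent form $\v t(\v p;\lambda,n)$ spanning only one elbow of $f_{\v p}$. Establishing that one may always restrict attention to a single ``bump'' of $f_{\v r}$ above $f_{\v p}$ and replace the remainder by the corresponding tangent line at one elbow is the essential combinatorial-geometric content of the argument. Once this reduction is in place, the one-parameter family $\lambda\,\v t^{(n)} + (1-\lambda)\,\v t^{(n+1)}$ automatically sweeps out all admissible tangent-line slopes at the $n$-th elbow (from the left slope $d\,p_n$ to the right slope $d\,p_{n+1}$), so the slope-matching step that produces the desired $\lambda$ becomes routine.
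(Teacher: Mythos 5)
Your forward (``only if'') direction is essentially the paper's own argument: form the join $\v r = \v p \vee \v q$, locate an elbow where the cumulative vector of $\v r$ agrees with that of $\v p$ (the paper's set $I_p$, non-empty by the join algorithm of Eqs.~\eqref{eq:joinConstruct_aCoeffs}--\eqref{eq:joinConstruct_flatFragments}), and take a tangent line there with slope matched to $f_{\v r}$. The ``main obstacle'' you anticipate --- that $\v r$ may touch $f_{\v p}$ at several elbows and is not itself of tangent form --- is not actually an obstacle: you never need $\v r$ to be a tangent vector. One shared elbow suffices, because $f_{\v r}\geq f_{\v p}$ with equality there sandwiches the one-sided slopes of $f_{\v r}$ between $dp_{n+1}$ and $dp_n$, and any supporting line of the concave $f_{\v r}$ at that elbow lies above $f_{\v r}$ everywhere, giving $\v t \succ \v r \succ \v q$; this is exactly the paper's step of picking an arbitrary $i\in I_p$.

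The reverse direction, as written, contains a false step. From the sandwich $f_{\v p}\leq f_{\v q}\leq f_{\v t}$ you conclude that $f_{\v q}$ must coincide with $f_{\v p}$ or with $f_{\v t}$; in fact the sandwich forces equality only at the tangency abscissa, i.e.\ $Q_n=P_n$, and nothing elsewhere. Concretely, take $\v p=(0.7,0.2,0.1)$, $\v t=\v t^{(3)}=(0.8,0.1,0.1)$ (the member of your family with $n=2$, $\lambda=0$) and $\v q=(0.75,0.15,0.1)$: then $\v t\succ\v q\succ\v p$ and $\v p\nsucc\v q$, yet $\v q$ coincides with neither $\v p$ nor $\v t$, nor with a permutation of either --- it lies in $\T_-(\v p)$ while belonging to $\T_+(\v t)$. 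So the exceptional set is not the two ``degenerate'' points you name but the entire facet $\{Q_n=P_n\}$ of $\T_+(\v t)$. Your final conclusion survives, but only after repairing this step: either argue directly, as the paper does, that for $\v q$ in the \emph{interior} of $\T_+(\v t)$ one has $Q_n<T_n=P_n$, hence $\v q\nsucc\v p$, which together with the standing hypothesis $\v p\nsucc\v q$ gives $\v q\in\T_\emptyset(\v p)$; or keep your contradiction but conclude only $Q_n=P_n$, i.e.\ that $\v q$ sits on the boundary facet that is indeed discarded by passing to $\operatorname{int}(\mathbb{T})$ in Lemma~\ref{Lem:Incomparable_region0}. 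With that correction, your proof coincides with the paper's.
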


\begin{proof}
    
To prove the ``if" direction, we take a probability vector lying in the interior of the future of the tangent vector, $\v{q} \in \operatorname{int} [\T_{+}(\v{t})]$. Then, to prove that $\v q \in \T_{\emptyset}(\v p)$, one needs to show that $\v{q} \nsucc \v{p}$. By construction, we have $T_k \geq  P_k$, for every $k\neq n$, with equality when $k = n$, so $P_n = T_n > Q_n$ with the equality excluded by putting $\v q$ in the interior of the future.  Consequently, $P_n > Q_n$, and thus $\v{q} \nsucc \v{p}$ and by the initial assumption, $\v p \nsucc \v q$. Therefore, $\v{q} \in \T_{\emptyset}(\v{p})$.

In order to demonstrate the ``only if" direction, let us take an arbitrary $\v q \in \T_{\emptyset}(\v p)$ and recall that there always exists the last common past point for $\v p$ and $\v q$ called join, $\v r = \v p \vee \v q$. From the construction of the join $\v r$, it is found that the entries of the cumulative distribution $\v R$ will be divided into three subsets, namely points common with $\v P$, common with $\v Q$ and the ones lying above either, that is $I_p := \left\{i:0<i<d,\,R_i = P_i\right\}$, $I_q := \left\{i:0<i<d,\,R_i = Q_i\right\}$ and $J := \left\{0<j<d,\,:R_j > \max(P_j,Q_j)\right\}$, respectively. In particular, looking at equation \eqref{eq:joinConstruct_flatFragments} one can see that it is either the case that $M-1\in I_p$ and $N+1\in I_q$, or $M-1\in I_p$ and $N+1\in I_q$ or by invoking geometric intuition, endpoints of the flat fragments of $\v R$ will join $\v P$ and $\v Q$. Thus, at each step, the sets $I_p$ and $I_q$ will be non-empty. 
By this argument, we may choose any index $i\in I_p$ and construct a tangent vector $\v t'$ for the join $\v{r}$ at the $i$-th elbow.
\begin{equation}
    \v t' \equiv \v t(\v{r};\mu,i) = \mu \v t^{(i)}(\v r) + (1 - \mu) \v t^{(i+1)}(\v r)
\end{equation}
for any $\mu\in[0,1]$. This vector obeys, by construction, the majorisation relation $\v t' \succ \v r \succ \v q$. Furthermore, due to the choice $i\in I_p$ it is also a tangent vector for the $\v p$,
\begin{equation}
    \v t' = \v t \equiv \v t(\v{p};\lambda(\mu),i) =  \lambda(\mu) \v t^{(i)}(\v p) + (1 - \lambda(\mu)) \v t^{(i+1)}(\v p)
\end{equation}
for $\lambda(\mu) \in [0,1]$. Therefore, by the properties of the tangent vector $\v t$ it follows that $\v q \in \T_{+}(\v t)$.  
\end{proof}

The last step necessary in order to demonstrate the Lemma~\ref{Lem:Incomparable_region0} is to notice that the tangent vectors $\v{t}(\v{p};\lambda(\mu),i)$ are convex combinations of $\v t^{(i)}(\v p)$ and $\v t^{(i+1)}(\v p)$  and their future majorisation cones are convex, therefore, union of their future cones corresponds to the convex hull of the future cones of the extreme points
\begin{equation}
     \bigcup_{\lambda\in[0,1]}\mathcal{T}_+(\v{t}(\v{p};\lambda,i)) = \text{conv}\left[\mathcal{T}_+\left(\v t^{(i)}\right)\cup\mathcal{T}_+\left(\v t^{(i+1)}\right)\right]
\end{equation} 
which completes the statement of Lemma~\ref{Lem:Incomparable_region0}.\qed

\subsection{Finite temperatures}
\label{app:finite_temp_derivs}

The proof of Lemma~\ref{Lem:Incomparable_regionT} for $\beta > 0$ is developed in the simplest way by considering the embedding $\mathfrak{M}$ introduced in Section~\ref{Subsec:embedding-lattice}, which takes $d$-dimensional probability distributions $\v p\in\Delta_d$ to its higher-dimensional image $\mathfrak{M}(\v p) \in \Delta^\mathfrak{M}_{d}$
which allows us to closely follow the steps of the proof for $\beta = 0$.

After shifting our focus to the embedded space, we construct the corresponding tangent vectors $\v{t}^{\mathfrak{M}}(\v{p};\lambda,n) = \lambda \v t^{\mathfrak{M}(n)}(\v p) + (1 - \lambda) \v t^{\mathfrak{M}(n+1)}(\v p)$ that respect the rules for constructing the majorisation curve within the embedded space. In particular, in full analogy to the $\beta = 0$ case, the full family can be given in terms of vectors tangent to the $n$-th linear fragment of the embedded majorisation curve,

\begin{align}
    \v{t}^{\mathfrak{M}(n)}(\v{p}) = \left(t_1^{\mathfrak{M}(n)}, \frac{p_n}{\gamma_n}\gamma^{\mathfrak{M}}_2, \hdots, \frac{p_n}{\gamma_n}\gamma^{\mathfrak{M}}_{2^d-1},t_{2^d}^{\mathfrak{M}(n)}\right) ,
\end{align}
where the first entry is defined in such a way that the majorisation curves, defined as the piecewise-linear functions given by their elbows $\left\{\left(\Gamma^{\mathfrak{M}}_i, P^{\mathfrak{M}}_i\right)\right\}_{i=0}^{2d-1}$, agree in at least one point, $f^{\mathfrak{M}}_{\v{t}^{(n)}}(\Gamma^{\mathfrak{M}}_n) = f_{\v{p}}^{\mathfrak{M}}(\Gamma^{\mathfrak{M}}_n)$ and the last one guarantees that $\sum_i t_i^{\mathfrak{M}(n)} = 1$. Observe that vectors $\v{t}^{\mathfrak{M}(n)}$ constructed in this way are tangent with respect to the embedded Lorenz curve, i.e., $t^{\mathfrak{M}(n)}_i/\gamma^{\mathfrak{M}}_i = t^{\mathfrak{M}(n)}_j/\gamma^{\mathfrak{M}}_j$, thus taking into account the varying intervals on the horizontal axis. Equipped with these, we pose a technical lemma similar to Lemma~\ref{lem-incomp},

As a preliminary step, we give the algorithm for the construction of the join in the embedding space by modifying the crucial steps \eqref{eq:joinConstruct_aCoeffs} and \eqref{eq:joinConstruct_flatFragments} to take into account the varying widths and redefine the point $N$ of increase by requiring $r_N^{\mathfrak{M}(k)}/\gamma_N^{\mathfrak{M}} > r_{N-1}^{\mathfrak{M}(k)}/\gamma_{N-1}^{\mathfrak{M}}$. Consequently, we redefine $M$ by a condition similar to \eqref{eq:joinConstruct_aCoeffs} that incorporates the scaling,
\begin{align} \label{finTemp_join_akSlope}
    \frac{r^{\mathfrak{M}(k)}_M}{\gamma^{\mathfrak{M}}_M} \geq \frac{\sum_{i = M}^N r_i^{\mathfrak{M}(k)}}{\sum_{i = M}^N \gamma_i^{\mathfrak{M}}} =: a^{\mathfrak{M}}_k.
\end{align}
Finally, we define the join candidate in the $k$-th step in full analogy to \eqref{eq:joinConstruct_flatFragments} as

\begin{equation} \label{finTemp_join_kStep}
    r_i^{\mathfrak{M}(k+1)} = 
    \begin{cases}
        a_k^{\mathfrak{M}}\gamma_i^{\mathfrak{M}} & \text{for } i\in\{M,\hdots,N\} \\
        r_i^{\mathfrak{M}(k)} & \text{otherwise}
    \end{cases}.
\end{equation}
The algorithm defined in this way follows precisely the same logic as the one given in Ref. \cite{Korzekwa2017} and thus it always terminates, in this in no more than $ d^\mathfrak{M}(\v p, \v q) - 1 = \overline{\left\{\sum_{i=1}^j \gamma_{\v \pi^{-1}_{\v{p}}(i)}\right\}_{j=1}^{d}\cup\left\{\sum_{i=1}^j \gamma_{\v \pi^{-1}_{\v{q}}(i)}\right\}_{j=1}^{d}} - 1$ steps.

As a final remark, one has to note that the family of tangent vectors $\v{t}^{\mathfrak{M}(n)}(\v{r}^\mathfrak{M})$ should be indexed by $n\in\{1,\hdots,d'\}$, where the number $d'$ of constant-slope fragments of the join, even though bounded, $d^\mathfrak{M}(\v p, \v q) \geq d' \geq d$, is \textit{a priori} not well defined due to many possible ways of disagreement between the $\beta$-orders of $\v{p}$ and $\v{q}$. With these tools, we are ready to present the technical lemma needed for constructing the incomparable region for $\beta > 0$.

\begin{lemma}
\label{lem-incomp_nonZeroBeta}
$\mathfrak{M}(\v{q})\equiv\v{q}^\mathfrak{M}$ belongs to the incomparable region of $\mathfrak{M}(\v{p})\equiv \v{p}^\mathfrak{M}$, $\v{q}^\mathfrak{M} \nsucc_\mathfrak{M} \v{p}^\mathfrak{M}$, if and only if it belongs to the future thermal cone of some vector $\v{t}^\mathfrak{M}$ tangent to $\v{p}^\mathfrak{M}$, $\v{q}^\mathfrak{M}\succ\v{t}^\mathfrak{M}$, with
    $$
        \v{t}^\mathfrak{M} \equiv \v{t}^\mathfrak{M}(\v{p};\lambda, n)=  \lambda \v{t}^{\mathfrak{M}(n)}(\v{p}) + (1-\lambda) \v{t}^{\mathfrak{M}(n+1)}(\v{p}) ,
    $$
    for some $n\in \{1,d-1\}$ and $\lambda \in[0,1]$.
\end{lemma}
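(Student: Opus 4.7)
The plan is to mimic the structure of the proof of Lemma~\ref{lem-incomp} after translating everything to the embedding space, exploiting the fact shown in Section~\ref{Subsec:embedding-lattice} that $\succ_\beta$ corresponds to the embedded majorisation $\succ_\mathfrak{M}$ (and thus that the embedded space inherits a lattice structure). In this picture, the tangent vectors $\v{t}^{\mathfrak{M}(n)}(\v{p})$ play exactly the same role as $\v{t}^{(n)}(\v{p})$ did for $\beta=0$: their embedded Lorenz curves consist of segments of constant rescaled slope $p_n/\gamma_n$ in the interval $[\Gamma^\mathfrak{M}_{n-1},\Gamma^\mathfrak{M}_n]$, and they are constructed so that $f^\mathfrak{M}_{\v{t}^{(n)}}$ coincides with $f^\mathfrak{M}_{\v{p}}$ at one elbow, while convex combinations $\v{t}^\mathfrak{M}(\v{p};\lambda,n)$ sweep out the full family of tangents at a single elbow $\Gamma^\mathfrak{M}_n$.

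For the ``if'' direction, the plan is to take $\v{q}^\mathfrak{M}\in\operatorname{int}\bigl[\mathcal{T}_+^\mathfrak{M}(\v{t}^\mathfrak{M})\bigl]$ and observe that by construction of the tangent vector, the embedded cumulative distribution $T^\mathfrak{M}$ agrees with $P^\mathfrak{M}$ at the $n$-th elbow while dominating it elsewhere. Since $\v{q}^\mathfrak{M}$ is in the interior of the future of $\v{t}^\mathfrak{M}$, one has the strict inequality $Q_n^\mathfrak{M}<T_n^\mathfrak{M}=P_n^\mathfrak{M}$, which rules out $\v{q}^\mathfrak{M}\succ_\mathfrak{M}\v{p}^\mathfrak{M}$. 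Combined with the hypothesis $\v{p}^\mathfrak{M}\nsucc_\mathfrak{M}\v{q}^\mathfrak{M}$, this places $\v{q}^\mathfrak{M}$ in the incomparable region of $\v{p}^\mathfrak{M}$.

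For the ``only if'' direction, the plan is to invoke the embedded join $\v{r}^\mathfrak{M}:=\v{p}^\mathfrak{M}\vee_\mathfrak{M}\v{q}^\mathfrak{M}$ produced by the algorithm~\eqref{finTemp_join_akSlope}--\eqref{finTemp_join_kStep}. Because each flattening step of that algorithm fuses a slope-increasing interval $\{M,\dots,N\}$ with endpoints determined by the construction, the endpoints of the constant-slope fragments of $\v{R}^\mathfrak{M}$ must lie in one of the index sets $I_p^\mathfrak{M}:=\{i:R_i^\mathfrak{M}=P_i^\mathfrak{M}\}$ or $I_q^\mathfrak{M}:=\{i:R_i^\mathfrak{M}=Q_i^\mathfrak{M}\}$, and neither set can be empty. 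Picking any elbow $i\in I_p^\mathfrak{M}$ of $\v{r}^\mathfrak{M}$, one builds a tangent vector $\v{t}^\mathfrak{M}(\v{r};\mu,i)$; by the ``if $i\in I_p^\mathfrak{M}$'' hypothesis this vector simultaneously qualifies as a tangent $\v{t}^\mathfrak{M}(\v{p};\lambda(\mu),i)$ of $\v{p}^\mathfrak{M}$, for some $\lambda(\mu)\in[0,1]$. Since $\v{t}^\mathfrak{M}\succ_\mathfrak{M}\v{r}^\mathfrak{M}\succ_\mathfrak{M}\v{q}^\mathfrak{M}$, we conclude $\v{q}^\mathfrak{M}\in\mathcal{T}_+^\mathfrak{M}(\v{t}^\mathfrak{M})$ as required.

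The main obstacle I anticipate is a purely bookkeeping one: in the embedded picture the number $d'$ of constant-slope fragments of the join can exceed $d$ (as noted in the paragraph preceding the lemma), so one needs to check that the ``elbow-selection'' step still yields an index $i\in I_p^\mathfrak{M}$ from the originally defined family $\{\v{t}^{\mathfrak{M}(n)}(\v{p})\}_{n=1}^{d}$, not some enlarged family tied to $\v{r}^\mathfrak{M}$. This is handled by observing that the slope condition~\eqref{finTemp_join_akSlope} is driven by rescaled slopes $r_i^\mathfrak{M}/\gamma_i^\mathfrak{M}$ that only take values realised by $\v{p}^\mathfrak{M}$ or $\v{q}^\mathfrak{M}$, so any $i\in I_p^\mathfrak{M}$ at which we draw a tangent to $\v{r}^\mathfrak{M}$ is automatically one of the $d$ admissible elbow indices of $\v{p}^\mathfrak{M}$. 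Once this identification is in place, the convexity of the future cone under $\succ_\mathfrak{M}$ gives
\begin{equation}
\bigcup_{\lambda\in[0,1]}\mathcal{T}_+^\mathfrak{M}\bigl(\v{t}^\mathfrak{M}(\v{p};\lambda,n)\bigr)
=\operatorname{conv}\Bigl[\mathcal{T}_+^\mathfrak{M}\bigl(\v{t}^{\mathfrak{M}(n)}\bigr)\cup\mathcal{T}_+^\mathfrak{M}\bigl(\v{t}^{\mathfrak{M}(n+1)}\bigr)\Bigr],
\end{equation}
which, after projecting the result back to $\Delta_d$ with $\mathfrak{P}_{\v\pi}$ and ranging over all $\v\pi\in\mathcal{S}_d$, is exactly the set~\eqref{eq_incomparableconebeta} appearing in Lemma~\ref{Lem:Incomparable_regionT}.
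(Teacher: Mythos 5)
Your proposal is correct and takes essentially the same approach as the paper: the paper's proof of Lemma~\ref{lem-incomp_nonZeroBeta} is the single sentence that it ``follows in complete analogy with the standard majorisation case as presented in the proof for Lemma~\ref{lem-incomp}'' by replacing $\succ$ with $\succ_\mathfrak{M}$ and using the adjusted join construction from Eqs.~\eqref{finTemp_join_akSlope}--\eqref{finTemp_join_kStep}, which is precisely the translation you carry out in detail, including the ``if'' direction via tangency at the $n$-th elbow and the ``only if'' direction via the embedded join and the index sets $I_p^\mathfrak{M}$, $I_q^\mathfrak{M}$. One small imprecision worth noting: your resolution of the $d'>d$ bookkeeping issue claims that every $i\in I_p^\mathfrak{M}$ is itself one of the $d$ genuine elbow indices of $\v{p}^\mathfrak{M}$, which need not hold --- a contact point of the join with $\v{p}^\mathfrak{M}$ can sit on the interior of a linear segment of $\v{p}^\mathfrak{M}$'s Lorenz curve; the correct (and still immediate) fix is that, by convexity, the tangent drawn to $\v{r}^\mathfrak{M}$ at such an $i$ is forced to coincide with a single $\v{t}^{\mathfrak{M}(n)}(\v{p})$ for the segment containing $\Gamma^\mathfrak{M}_i$, so it still lies in the stated family $\v{t}^\mathfrak{M}(\v{p};\lambda,n)$.
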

\begin{proof}
    The proof follows in complete analogy with the standard majorisation case as presented in the proof for Lemma \ref{lem-incomp} by replacing the standard majorisation $\succ$ in every statement with the majorisation variant $\succ_\mathfrak{M}$ given for the embedding space and employing the adjusted join construction, summarised in equations \eqref{finTemp_join_akSlope} and \eqref{finTemp_join_kStep}. 
\end{proof}
In order to go back from the embedded space $\Delta_{d}^{\mathfrak{M}}$ to the formulation of Lemma \ref{Lem:Incomparable_regionT} in the original space $\Delta_d$ we combine two observations following from embedding and projection operations. First, note that majorisation in embedded space implies majorisation between projections, thus $\v{t}^\mathfrak{M} \succ_\mathfrak{M} \v{q} \Rightarrow \mathfrak{P}_{\v \pi}(\v{t}^\mathfrak{M}) \succ_\beta \mathfrak{P}_{\v \pi}(\v{q})$ for every order $\v \pi$. Second, observe that embedding preserves the majorisation relations between the vectors, therefore $\v{q}\in\mathcal{T}_\emptyset(\v p) \Leftrightarrow \mathfrak{M}(\v q) \in \mathcal{T}_\emptyset(\mathfrak{M}(\v p))$. These two statements show that it is enough to consider vectors $\v{t}^{(n,\v \pi)} = \mathfrak{P}_{\v \pi}(\v{t}^{\mathfrak{M}(n)})$ and convex combinations of their future thermal cones, thus proving Lemma~\ref{Lem:Incomparable_regionT}. \qed

\subsection{Construction of probabilistic majorisation cones}\label{app:probal_deriv}

In this section, we derive the results presented in Section~\ref{Subsec:probabilistic-cones}. Specifically, we explore how to extend the notion of majorisation cones to probabilistic ones. For the convenience of the reader, we restate the theorem concerning probabilistic transformations:

\vidaltheorem*

To establish a direct connection to majorization, we reformulate the above theorem as follows:
\begin{equation}
    \forall_{1\leq k\leq d}:\mathcal{P}(\v p, \v q) \leq \frac{\sum_{j = k}^d p^{\downarrow}_j}{\sum_{j = k}^d q^{\downarrow}_j} = \frac{1 - \sum_{j = 1}^{k-1} p^\downarrow_j}{1 - \sum_{j = 1}^{k-1} p^\downarrow_j} = \frac{1 - P_k}{1 - Q_k}.
\end{equation}
By setting $\mathcal{P}(\v p , \v q) = 1$, we recover the standard majorisation condition on deterministic convertibility,
\begin{equation}
    \forall_{1\leq k\leq d}: 1 \leq \frac{\sum_{j = k}^d p^{\downarrow}_j}{\sum_{j = k}^d q^{\downarrow}_j} \Leftrightarrow \v p \prec \v q.
\end{equation}
To determine the probabilistic past cone $\mathcal{T}_-(\v p, \mathcal{P})$ at probability $\mathcal{P}$, we consider 
\begin{equation}
    \begin{aligned}
       \forall_{1\leq k\leq d}: \mathcal{P} \leq \frac{1 - Q_k}{1 - P_k} 
        & \Rightarrow \mathcal{P} - \mathcal{P} P_k \leq 1 - Q_k  \Rightarrow Q_k \leq \mathcal{P} P_k + (1 - \mathcal{P}) \\
        & \Rightarrow \v{q}\prec\tilde{\v{p}},
    \end{aligned}
\end{equation}
with an auxiliary distribution
\begin{equation}
	\tilde{p}_i = \begin{cases}
		\mathcal{P} p^{\downarrow}_1 + (1 - \mathcal{P}) & \text{for } i = 1, \\
		\mathcal{P} p^{\downarrow}_i & \text{otherwise},
	\end{cases}
\end{equation}
which is always a proper probability distribution ordered non-increasingly, $\tilde{\v p}  = \tilde{\v p}^\downarrow$, therefore providing a proper Lorenz curve. 

Following a similar procedure for the future cone $\mathcal{T}_+(\v p, \mathcal{P})$ leads to
\begin{equation}
    \begin{aligned}
        \mathcal{P} \leq \frac{1 - P_k}{1 - Q_k} 
        & \Rightarrow \mathcal{P}^{-1} - \mathcal{P}^{-1} P_k \geq 1 - Q_k  \Rightarrow Q_k \geq \mathcal{P}^{-1} P_k + (1 - \mathcal{P}^{-1}) \\
        & \Rightarrow \v{q}\succ\hat{\v{p}},
    \end{aligned}
\end{equation}
with the second auxiliary distribution
\begin{equation}
	\hat{p}_i = \begin{cases}
		\mathcal{P}^{-1} p^{\downarrow}_1 + (1 - \mathcal{P}^{-1}) & \text{for } i = 1, \\
		\mathcal{P}^{-1} p^{\downarrow}_i & \text{otherwise}.
	\end{cases}
\end{equation}
In contrast to the case of the past cone, the distribution $\hat{\v p}$ in this formulation is not ordered in a non-increasing manner beyond a certain value of $\mathcal{P}$. At first glance, one might think that reordering should solve the problem; however, it would be equivalent to a decrease in the probabilistic future with decreasing $\mathcal{P}$, which creates a contradiction. The solution is provided by noting that Vidal's criterion deals with rescaled entries of the Lorenz curve rather than the probabilities themselves. Therefore, the Lorenz curve for $\hat{\v p}$ should remain convex for all values of $\mathcal{P}$ without the need for reordering.

Consider the following critical values of $\mathcal{P}$, namely,
\begin{equation}
    \mathcal{P}_n = (n-1)p^{\downarrow}_n - \sum_{i=1}^{n-1} p^{\downarrow}_i + 1,
\end{equation}
for which the first $n$ entries of the distribution $\hat{\v{p}}$ will not be ordered non-increasingly, resulting in an improper Lorenz curve. The resulting non-convexity is controlled by replacing
\begin{equation}
    \left\{\hat{p}_1,\hdots,\hat{p}_n\right\} \rightarrow \frac{1}{n}\sum_{i=1}^n\hat{p}_i \left\{1,\hdots1\right\},
\end{equation}
which ensures that $\hat{\v p}  = \hat{\v p}^\downarrow$.

This way, the auxiliary ordered distributions $\tilde{\v{p}}$ and $\hat{\v{p}}$, together with the construction for the deterministic majorisation cones provide the full construction of the probabilistic cones. However, it is important to note that when considering entanglement and coherence theories, the roles of the future and past majorisation cones are reversed, and this should be taken into account.

\subsection{Volumes of entanglement majorisation cones} \label{app:entanglement_vols}

Finally, here we present the methods used to obtain the isovolumetric sets for entanglement cones, as discussed in Section~\ref{Subsec:volume-entanglement-cones}. Additionally, we plot these curves for systems of unequal dimensions to compare with the case of equal dimension.

Consider a uniform Haar distribution of pure states in a composed space \mbox{$\ket{\psi}\in\mathcal{H}^{N}\otimes\mathcal{H}^M$} with $N \leq M$. The partial tracing induces a measure in the space of reduced states, $\rho = \operatorname{Tr}_2 \ket{\psi}\bra{\psi}$, characterised by the distribution of eigenvalues \mbox{$\Lambda = \left\{\lambda_1,\hdots,\lambda_N\right\}$} of the reduced state~\cite{zyczkowski2001induced}:
    \begin{equation}\label{eq_haardistribution}
        P_{N,M}(\Lambda) = C_{N,M} \delta\left(1 - \sum_i \lambda_i\right) \prod_i \lambda_i^{M-N}\theta(\lambda_i) \prod_{i<j} (\lambda_i - \lambda_j)^2,
    \end{equation}
where $\delta$ and $\theta$ are the Dirac delta and Heavyside step functions, respectively. The normalisation constant is given by
\begin{equation}
      C_{N,M} = \frac{\Gamma(NM)}{\prod_{j=0}^{N-1}\Gamma(M-j) \Gamma(N-j+1)},
\end{equation}
where $\Gamma$ is the Gamma function. Before we continue with our discussion, let us first understand the role played by all factors in Eq.~\eqref{eq_haardistribution}. As previously mentioned, $C_{N,M}$ is the normalisation. The delta function ensures that the spectrum sums to one (is normalised), whereas the step function will guarantee that it is positive. Now, notice that the first product (capital pi notation) essentially does not influence the behaviour of the distribution for \mbox{$N=M$} and introduces the repelling of the faces of the probability simplex otherwise, as it goes to zero whenever \mbox{$\lambda_i = 0$} for any~$i$. The second product is responsible for the repelling from distributions with any two entries equal, since it goes to zero whenever \mbox{$\lambda_i = \lambda_j$} for any~$i\neq j$.

\begin{figure*}
    \centering
    \includegraphics{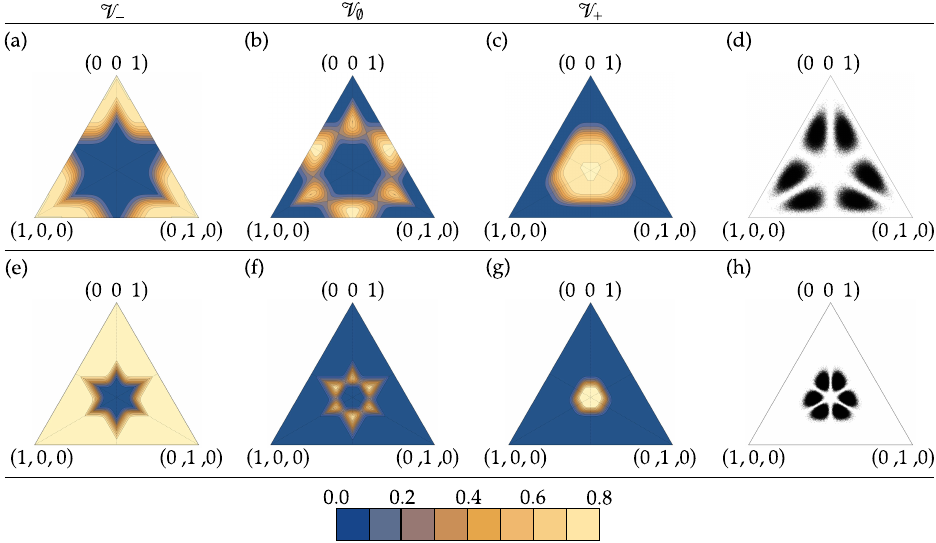}
    \caption{\label{fig-entanglementvolume-2} \emph{Isovolumetric sets for entanglement $3 \times M$ bipartite systems}. The density $P_{3,M}(\Lambda)$ of Schmidt coefficients of pure states for qutrit-quMit systems depends heavily on the dimension $M$ of the second system. Panels (a-d) and (e-h) present the isovolumetric lines for past, incomparable and future regions and the density of the states for $M = 6$ and $M = 30$, respectively. Note that for larger $M$ the density $P_{3,M}$ is more and more concentrated around the regions close to the centre [compare (d and (h)]. This affects the subset of states with large future volume, making it smaller [(c) and (g)] as well as the set of states with large past volume, enlarging it [(a) with (e)].}
\end{figure*} 

Sampling from the $P_{N,M}$ distribution, ordinarily done by generating state vectors $\ket{\psi}\in\mathcal{H}_{NM}$ which would be computationally prohibitive for large $M$, can be achieved using only $O(N)$ random numbers for any dimension of the secondary system. It has been demonstrated in Ref.~\cite{Cunden_2020} that the distribution $P_{N,M}$ is precisely the Laguerre unitary ensemble generated by Wishart matrices of size $N$ and parameter $M$~ \cite{RandomMatrixBook} and, in turn, generated using a tridiagonal method containing only $O(N)$ random real numbers~ \cite{DE02}, which indeed allows one to study the $P_{N,M}$ distributions for arbitrary high-dimensional ancillary systems. 

The procedure for generating the isovolumetric lines for the majorisation cones with a given distribution, $P_{N,M}$, proceeds as follows:
    \begin{enumerate}
        \item Generate a sample of $n$ sets of eigenvalues $\{\Lambda_1,\hdots,\Lambda_n\}$ taken from the distribution $P_{N,M}$ using the tridiagonal method.
        \item Consider regularly spaced grid of points $S$ in a single chamber of the full probability simplex $\Delta_N$ (e.g., \mbox{$p_1 \geq p_2, ..., p_d)$} in order to avoid repeated counting (achieving $N!$ decrease in operations)
        \item For each $\v{p} \in S$ consider its majorisation cones $\mathcal{T}_i(\v p)$ and divide $S$ into $S_i \equiv \left\{\Lambda_i \in \mathcal{T}_i(\v{p})\right\}$.
        \item This way we arrive at the approximations of the volumes of the three regions, $$\mathcal{V}_i \approx \frac{|S_i|}{n},$$ where $|X|$ denotes the number of elements in a set $X$.
    \end{enumerate}

    We applied this method for $N=3$ with $M = 3$, displayed in the main text of Fig.~\ref{fig-level-sets-entanglement}, and additionally with $M = 6$ and $30$, as shown in Fig.~\ref{fig-entanglementvolume-2}. These two cases show the significant dependence of the isovolumetric lines on the size of the environment.

\section{Concluding remarks}~\label{Sec:concluding_remarks}

In this chapter, we investigated the structure of the thermodynamic arrow of time by analysing thermal cones and their behavior under different conditions. By dividing the space of energy-incoherent states into the past, the future, and the incomparable region, in analogy with the future, past, and spacelike regions of Minkowski spacetime, we identified thermal cones as regions of the probability simplex that encode the achievability of state transformations under thermal operations. We specifically focused on energy-incoherent states in the presence of a thermal bath at a finite temperature and in the limit of temperature going to infinity, fully characterising and carefully analysing the incomparable and past thermal cones in both regimes. Additionally, we identified the volumes of the thermal cones as thermodynamic monotones and performed a detailed analysis of their behavior. Our results can be applied directly to the study of entanglement, as the order defined on the set of bipartite pure entangled states by local operations and classical communication is the opposite of the thermodynamic order in the limit of infinite temperature. In this context, the future thermal cone becomes the past for entanglement, and the past becomes the future. Furthermore, a similar extension can be drawn to coherence resource theory.

There are several potential research directions for generalising and extending the results presented in this chapter. One possible avenue is to expand the analysis beyond energy-incoherent states to encompass the full space of quantum states. While available tools for this purpose are comparatively scarce~\cite{Korzekwa2017, LostaglioKorzekwaCoherencePRX, Gour_2018}, various techniques can be employed to construct coherent thermal cones under both Gibbs-preserving and thermal operations for qubit systems, as shown in Fig.~\ref{Fig:coherent_thermal_cone} and explained in Section~\ref{app:coherent_thermal}. Additionally, it would be interesting to explore an equivalent construction of the past and incomparable regions for continuous thermomajorisation. Extending our analysis to this setting could provide further insight into the structure of the thermodynamic arrow of time for memoreless processes. Futhermore, an extension that includes many non-interacting subsystems (possibly independent and identically distributed) could be done by defining an appropriate function to analyse the behavior of the thermal cones. While our investigation focused on single subsystems, such an extension could be useful in exploring the behavior of thermodynamic quantities in a broader range of physical systems.

Recently, the framework outlined in this section was applied in Ref.~\cite{de2024ent} for exploring the thermodynamic constraints on the pivotal task of generating entanglement using non-equilibrium states. In that work, the authors present a detailed construction of the future thermal cone of entanglement -- the set of entangled states that a given separable state can thermodynamically evolve to. In a similar spirit to the discussion presented here, the properties of the future thermal cone of entanglement were studied and related to the ability to generate entanglement. 

Finally, the fundamental limits and advantages of using a catalyst to aid thermodynamic transformations within the framework of thermal cones were addressed in Ref.~\cite{catalysiskuba2024}. More precisely, the set of states to which a given initial state can thermodynamically evolve (the catalyzable future) or from which it can evolve (the catalyzable past) with the help of a strict catalyst was characterized. Secondly, lower bounds on the dimensionality required for the existence of catalysts under thermal processes, along with bounds on the catalyst's state preparation, were found.
\chapter{Memory-assisted Markovian thermal processes}\label{C:memory-MTP}

Information has become ubiquitous in thermodynamics. It all started with Maxwell's seminal inquiry~\cite{maxwell1872theory}: \emph{what would happen if we had knowledge of a system's state?} The ramifications of this hypothesis led to potential violations of the second law of thermodynamics and a century-long puzzle~\cite{Szilard1929, brillouin1951maxwell}. Ultimately, it was found that thermodynamics imposes physical restrictions on information processing~\cite{Landauer1961, Bennett1982}, resulting in the development of frameworks devoted to incorporating information into thermodynamics~\cite{maruyama2009colloquium,seifert2012stochastic,sagawa2012thermodynamics,Goold2016,binder2018thermodynamics, Deffner2019}. A crucial concept at the intersection between these two fields is \emph{memory}, a thermodynamic resource for storing, processing, and erasing information. In particular, memory effects can bring numerous advantages, including enhanced cooling~\cite{taranto2020exponential}, generation of entanglement~\cite{mirkin2019entangling,mirkin2019information} or improved performance of heat engines and refrigerators~\cite{PhysRevA.99.052106,PhysRevA.102.012217,PhysRevE.106.014114}. However, realistic quantum mechanical systems are open and governed by non-unitary time evolution, which encompasses the irreversible phenomena such as energy dissipation, relaxation to thermal equilibrium or stationary non-equilibrium states, and the decay of correlations~\cite{breuer2002theory,rivas2012open}. Hence, assumptions like weak coupling, large bath size, and fast decaying correlations are commonly made in modelling such systems, thus neglecting memory effects. This raises the question of how memoryless processes get modified when system-bath memory effects become non-negligible, i.e., how to assess and quantify the role of memory in thermodynamic processes~\cite{rivas2014quantum}. 

As we have seen in Chapter~\ref{C:resource_theory_of_thermodynamics}, the resource theory of thermodynamics is a relatively recent framework allowing one to address foundational questions in thermodynamics. By relying on the notion of thermal operations it offers a complete set of laws for characterising general state transformations under thermodynamic constraints. The downsides of this formalism are twofold. Firstly, it focuses only on snapshots of the evolution, making it hard to discuss how the processes are realised in time. Secondly, it may require precise control over the system and the bath. The first problem was addressed by developing a hybrid framework that reconciles resource theory and master equation approaches~\cite{lostaglio2021continuous,Korzekwa2022}, where the concept of a Markovian thermal process was introduced. This new set of operations refines the notion of thermal operations by encoding all relevant constraints of a Markovian evolution. The second problem was partially addressed in Ref.~\cite{Lostaglio2018elementarythermal} by introducing the concept of elementary thermal operations, i.e., a subset of transformations that can be decomposed into a series of thermal operations, each acting only on two energy levels of the system. Such decompositions offer a method to bypass the need for a complete control over interactions between the system and the environment, and the approach was recently generalised to also include catalytic transformations~\cite{Jeongrak2022}. While elementary operations require only a limited control, they still rely on non-Markovian effects, and so the question of quantifying memory effects in the resource theory of thermodynamics remains open.

In this chapter, we make a step towards bridging the gap between thermal operations and Markovian thermal processes for energy-incoherent states by introducing and investigating \emph{memory-assisted Markovian thermal processes} (MeMTPs). These are defined by extending the Markovian thermal processes framework with ancillary memory systems, allowing one to interpolate between memoryless dynamics and the one with full control. More specifically, we demonstrate that energy-incoherent states achievable from a given initial state via thermal operations can be approached arbitrarily well by repeatedly interacting the main system with a memory that is initialised in a thermal equilibrium state (and therefore is thermodynamically resourceless) via an algorithmic procedure composed of Markovian thermal processes. Physically, this can be seen as a partial control over the bath degrees of freedom, where the bath can be thought of as a large, discrete, collection of smaller thermal units, and one can control the interactions of the main system with a small number of thermal subsystems~(see Fig.~\ref{Fig-schematic_representation.pdf}). 

\begin{figure*}
    \centering
    \includegraphics{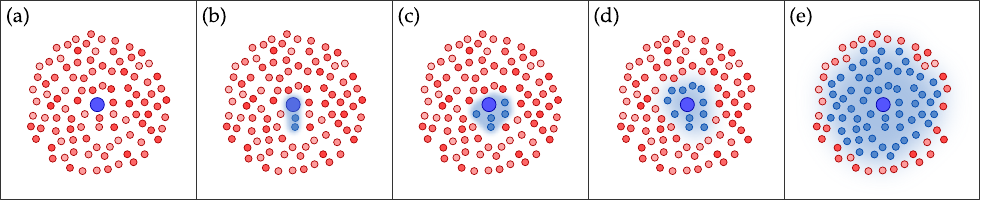}
    \caption{\emph{Memory-assisted Markovian thermal processes}. Schematic representation of the general setting. (a) Initially, the main system (large blue circle) is coupled to a heat bath at inverse temperature $\beta$ (small red circles) and their interaction is Markovian, so that the bath is in thermal equilibrium at each moment in time. (b)-(e) Then, the control is extended to parts of the environment (small blue circles with blue background) that do not instantaneously thermalise to equilibrium after interactions with the system, and can thus lead to non-Markovian dynamics of the main system.}
    \label{Fig-schematic_representation.pdf}
\end{figure*}

Following this idea, we introduce a family of memory-extended Markovian thermodynamic protocols that require minimal control and are valid for any temperature regime. In the infinite temperature limit, we prove that our protocol can arbitrarily well simulate any state transition that can be achieved via thermal operations. More precisely, our first main result states that when memory grows and the number of interactions goes to infinity, the full set of states achievable by thermal operations can be reached by MeMTPs. We also provide analytic expressions for the convergence rates, which scale either polynomially with the number of degrees of freedom or exponentially with the number of memory subsystems. Moreover, based on strong numerical evidence, we propose a conjecture for a more accurate approximation of arbitrary state transformation (i.e., converging faster with the growing size of the memory) through sequences of truncated versions of our protocol. Our second main result extends these considerations to the finite-temperature regime. Here, we first show analytic convergence of our MeMTP protocol to a particular subset of state transformations that can be achieved by thermal operations. These include all the so-called $\beta$-swaps, as well as $\beta$-cycles, which form a thermodynamic equivalent of cyclic permutations. Then, based on numerical simulations, we conjecture that actually an arbitrary state reachable via thermal operations can be obtained using a proper sequence of truncated protocols.

With these results at hand, we then proceed to discussing their applicability. First, we explain how to assess the role played by memory in the performance of thermodynamic protocols by investigating work extraction in the intermediate regime of limited memory. We thus interpolate between the two extremes of no memory and complete control, and quantify environmental memory effects with the amount of extractable work from a given non-equilibrium state. Second, we consider the task of cooling a two-level system using a two-dimensional memory characterised by a non-trivial Hamiltonian. This example represents a minimal model requiring the manipulation of two two-level systems. Various experimental proposals are available across distinct platforms suitable for realising this specific model. Such platforms encompass quantum dots~\cite{PhysRevLett.110.256801,PhysRevB.98.045433}, superconducting circuits~\cite{PhysRevB.94.235420,chen2012quantum}, and atom-cavity systems~\cite{mitchison2016realising}. We then clarify that all transitions achievable via thermal operations can be performed using a subset of thermal operations that only affect two energy levels at the same time. This may seem to contradict the results of Refs.~\cite{Lostaglio2018elementarythermal,mazurek2018decomposability}, where it was proven that thermal operations constrained to just two energy levels of the system are not able to generate all thermodynamically allowed transitions. We resolve this apparent contradiction by noting that in our case we require the control over two levels of the joint system-memory state, and not just the system state. Finally, we discuss the behaviour of free energy during a non-Markovian evolution, explaining the role of the memory as a free energy storage.  

The chapter is structured as follows. First, in Sec.~\ref{sec:framework} we compare the frameworks of thermal operations and Markovian thermal processes, and then introduce the central notion of this chapter, the memory-assisted Markovian thermal processes. Next, in Sec.~\ref{sec:bridging}, we describe the protocol that employs thermal memory states to approximate non-Markovian thermodynamic state transitions with Markovian thermal processes. We then explain how this approximation convergences to the full set of transitions achievable via thermal operations as the size of the memory grows. Section~\ref{sec:discussion} contains discussion and application of our results. Finally, in Sec.~\ref{sec:outlook}, we conclude and provide outlook for future research.

\section{Thermal operations vs Markovian thermal processes}\label{sec:framework}

\emph{Thermal operations} (TOs) framework as discussed in Chapter~\ref{C:resource_theory_of_thermodynamics} uses minimal assumptions on the joint system-bath dynamics by only assuming that the joint system is closed and thus evolves unitarily, and that this unitary evolution is energy-preserving. Since there are no further constraints on $U$, arbitrarily strong correlations can build up between the system and the bath, and one can expect non-Markovian memory effects to come into play. At the same time, from the perspective of control theory, generating an arbitrary TO may require very complex and fine-tuned control over system-bath interactions~\cite{Lostaglio2018elementarythermal}.

\emph{Markovian thermal processes} (MTPs) framework~\cite{lostaglio2021continuous,Korzekwa2022}, on the other hand, uses typical assumptions of the theory of open quantum systems (weak coupling, large bath size, quickly decaying correlations, etc.)~\cite{breuer2002theory}, to argue that the system undergoes an open dynamics described by a Lindblad master equation~\cite{kossakowski1972quantum,gorini1976completely,lindblad1976generators}. Since the dynamics generated by an MTP arises explicitly from a Markovian model, there are no memory effects. Also, as shown in Ref.~\cite{lostaglio2021continuous}, the universal set of controls that allows one to generate any incoherent state transformation achievable via MTPs consists only of two-level partial thermalisations. These transform the populations of two energy levels, $i$ and $j$, in the following way 
\begin{subequations}
\begin{align}
\label{eq:partial1}
    p_i &\rightarrow (1-\lambda) p_i + \lambda \frac{p_i+p_j}{\gamma_i+\gamma_j} \gamma_i,\\
    \label{eq:partial2}
    p_j &\rightarrow (1-\lambda) p_j + \lambda \frac{p_i+p_j}{\gamma_i+\gamma_j} \gamma_j ,
\end{align}
\end{subequations}
where $\lambda\in[0,1]$.

\begin{figure}[t]
    \centering
    \includegraphics[width=10.7cm]{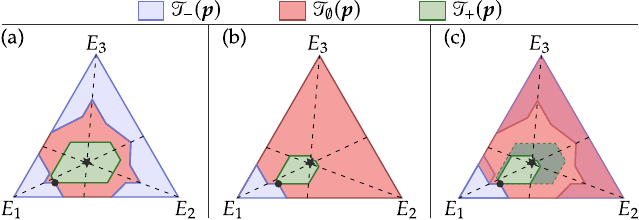}
    \caption{\emph{Thermal operations vs Markovian thermal processes}.
    Sets of states that a three-level system with an equidistant energy spectrum $\v{E} = (0,1,2)$ and prepared in an energy-incoherent state $\v p = (0.7,0.2,0.1)$ (depicted by a black dot $\bullet$) can be transformed to [green region $\T_+(\v p)$] or transformed from [blue region $C_-(\v p)$] by (a) thermal operations and (b) Markovian thermal processes with respect to inverse temperature $\beta = 0.3$. In (c) we show the overlap of sets of achievable states via TOs (dark green) and MTPs (light green). The thermal state of the system is depicted by a black star at the intersection of the dashed lines.}
    \label{Fig:thermalconevsmarkovianthermalcone}
\end{figure}
The set of states $\mathcal{T}^{\textrm{TO}}_{+}(\v p)$ achievable via thermal operations from a given incoherent initial state $\v{p}$ can be fully characterised using the notion of thermomajorisation. The so-called future thermal cone $\mathcal{T}^{\textrm{TO}}_{+}(\v p)$ (see Chapter~\ref{C:thermal_cones}) is a convex set that consists of at most $d!$ extreme points, the construction of which was given in Corollary~\ref{Thm:Future_majoristion_cone} (see Fig.~\hyperref[Fig:thermalconevsmarkovianthermalcone]{\ref{Fig:thermalconevsmarkovianthermalcone}a} for an example with a three-level system). On the other hand, the set of states $\T^{\textrm{MTP}}_{+}(\v p)$ achievable via Markovian thermal processes from a state $\v{p}$ was recently characterised using the notion of continuous thermomajorisation. The future Markovian thermal cone $\mathcal{T}^{\textrm{MTP}}_{+}(\v p)$ is not convex (as illustrated in Fig.~\hyperref[Fig:thermalconevsmarkovianthermalcone]{\ref{Fig:thermalconevsmarkovianthermalcone}b} for a three-level system), but Theorem~\ref{thm:universality} provides a construction of its extreme points using sequences of two-level full thermalisations (i.e., transformations from Eqs.~\eqref{eq:partial1}-\eqref{eq:partial2} with $\lambda=1$). As can be seen in Fig.~\hyperref[Fig:thermalconevsmarkovianthermalcone]{\ref{Fig:thermalconevsmarkovianthermalcone}c}, $\T^{\textrm{MTP}}_{+}(\v p)\subset \T^{\textrm{TO}}_{+}(\v p)$ and the difference between these two sets of thermodynamically accessible states arises purely from memory effects.

\section{Bridging the gap with memory}
\label{sec:bridging}

We begin this section by explaining the main building block of this work, namely the notion of memory-assisted Markovian thermal processes. Then, we demonstrate how energy-incoherent states achievable from a given initial state $\v{p}$ via thermal operations [i.e., any $\v{q}\in C_+^{\mathrm{TO}}(\v{p})$] can be approached arbitrarily well using memory-assisted Markovian thermal processes with large enough memory. We will start by simplifying the problem and showing that it is sufficient to only consider the achievability of the extreme points of $C_+^{\mathrm{TO}}(\v{p})$. Next, we will introduce MeMTP protocols that will serve us to approach extreme points of $C^{\textrm{TO}}_{+}(\v p)$ using MTPs acting on the system and memory state, $\v{p}\otimes \v{\gamma}_M$. Finally, we will analyse the performance of these protocols, i.e., we will show how well they approximate the desired transformations as the size of the memory $N$ grows. Due to structural differences, we will do this separately for the infinite temperature limit and the case of finite temperatures.

\subsection{Memory-assisted Markovian thermal processes}

In this chapter, our objective is to interpolate between the two extreme regimes characterised by arbitrarily strong and no memory effects, as described within the TO and MTP frameworks. To accomplish this, we will concentrate on the more restrictive MTP framework and extending it by explicitly modeling memory effects by bringing ancillary systems in thermal states that will be discarded at the end. More precisely, we consider MTPs acting on a composite system consisting of the main $d$-dimensional system in a state $\rho$ and an $N$-dimensional memory system prepared in its thermal state $\gamma_M$ (i.e., given by Eq.~\eqref{Eq:thermal_state} with $H_E$ replaced by the Hamiltonian $H_M$ of the memory system, which can be arbitrary). The thermality of the ancillary system $M$ is crucial, as this way we ensure that no extra thermodynamic resources are brought in unaccounted, and the only role played by $M$ is to bring extra dimensions that can act as a memory. As already explained in the introduction, this can also be viewed as having control over the small $N$-dimensional part of the bath. Formally, we define the following set of quantum channels.

\begin{definition}[Memory-assisted MTPs]
    A quantum channel $\E$ is called a memory-assisted Markovian thermal process (MeMTP) with memory of size $N$, if it can be written as
    \begin{equation}
        \E(\rho) = \textrm{Tr}_{M}[\E_{\mathrm{MTP}}(\rho\otimes \gamma_M)],
    \end{equation}    
    where $\E_{\mathrm{MTP}}$ is a Markovian thermal process acting on the original system extended by an $N$-dimensional ancillary system $M$ prepared in a thermal state $\gamma_M$.
\end{definition}

As already mentioned, in this chapter we will focus on transformations between energy-incoherent states. Our aim is to show that the sets of states achievable from a given $\v{p}$ via memory-assisted MTPs interpolate between $\T^{\textrm{MTP}}_{+}(\v p)$ (for $N=1$) and $\T^{\textrm{TO}}_{+}(\v p)$ (for $N\to\infty$). As a final note, observe that this framework can be formally related to a particular kind of catalytic transformations~\cite{jonathan1999entanglement,brandao2015second}. This is because the ancillary memory system can always be thermalised at the end of the process and this way be brought to the initial state.  

\begin{kaobox}[frametitle= Interpolating between extreme regimes]
Arrows between distributions represent the existence of specific thermodynamic transformations, whose existence is determined by partial-order relations: thermomajorisation $\succ_\beta$ for thermal operations and continuous thermomajorisation $\ggcurly_\beta$ for Markovian thermal processes. Our results demonstrate that the gap between these two frameworks can be bridged with the use memory-assisted MTPs employing ancillary memory systems of growing dimension $N$ and prepared in thermal states.
\begin{equation*}
 \begin{matrix}
\text{Memoryless} & \v p \xrightarrow[]{\text{MTP}} \v q & \Longleftrightarrow & \v p \ggcurly_{\beta} \v q  \\
\biggl| &   & \vdots \\
\text{Finite memory} & \v p \xrightarrow[]{\text{MeMTP}} \v q  & \Longleftrightarrow & \v p \otimes \v \gamma_M \ggcurly_{\beta} \v q \otimes \v \gamma_M\\
\bigg\downarrow &   &\:\:\begin{matrix}
\vdots \\
_{N\to \infty}
 \\
\downarrow
\end{matrix}  \\
\text{Full control} & \v p \xrightarrow[]{\text{TO}} \v q & \Longleftrightarrow & \v p \succ_{\beta} \v q
\end{matrix}
\end{equation*}
\end{kaobox}
\subsection{Simplification to extreme points}

    We start by recalling the following notion that is crucial for our analysis.
   \betaordering* 

The $d$-dimensional matrix representation of $\pi_{\v p}$ will be denoted by $\Pi_{\v p}$, i.e., $\Pi_{\v p}\v{p}=\v{p}^\beta$.

    The future thermal cone $\mathcal{T}_+^{\mathrm{TO}}(\v{p})$ is a polytope with at most~$d!$ extreme points, one for each possible $\beta$-order $\pi$. We will denote them by $\v{p}^{\pi}$ (in particular it means that \mbox{$\v{p}^{\pi_{\v p}}=\v{p}$}). Now, we will use two crucial observations. First, in Ref.~\cite{Lostaglio2018elementarythermal} it was shown that
    \begin{equation}
        \label{eq:extremal_enough}
        \v{q} \in \mathcal{T}_+^{\mathrm{TO}}(\v{p}) \Rightarrow \v{q} \in \T_+^{\mathrm{TO}}(\v{p}^{\pi_{\v{q}}}),
    \end{equation}
    meaning that all states with a $\beta$-order $\pi$ that can be achieved from $\v{p}$ via thermal operations can also be achieved starting from $\v{p}^\pi$. And second, it was shown in Ref.~\cite{lostaglio2021continuous} that
    \begin{equation}
        \label{eq:same_beta_order}
        \left[\v{q} \in \mathcal{T}_+^{\mathrm{TO}}(\v{p})~~\mathrm{and}~~\pi_{\v{q}}=\pi_{\v{p}}\right] \Rightarrow \v{q} \in \T_+^{\mathrm{MTP}}(\v{p}),
    \end{equation}
    meaning that within the same $\beta$-order as the initial state, the subsets of states achievable via TOs and via MTPs do coincide. As a result, if one can construct memory-assisted MTPs that reach all the extreme points of $\mathcal{T}_+^{\mathrm{TO}}(\v{p})$, then one can also get to every state in $\mathcal{T}_+^{\mathrm{TO}}(\v{p})$ via MeMTPs. This is done by simply first transforming $\v{p}$ to a given extreme point $\v{p}^\pi$ of $\mathcal{T}_+^{\mathrm{TO}}(\v{p})$, and then using MTPs to get from $\v{p}^\pi$ to every state with a $\beta$-order $\pi$ in $\mathcal{T}_+^{\mathrm{TO}}(\v{p})$.

    In order to quantify how well a given state in $\mathcal{T}_+^{\mathrm{TO}}(\v{p})$ can be approximated, we will use the total variation distance defined by
    \begin{equation}
        \delta(\v{p},\v{q}):=\frac{1}{2}\sum_{i=1}^d |p_i-q_i|.
    \end{equation}
    From the discussion above, it should be clear that if we can construct MeMTP protocols approximating every extreme point with an error at most $\epsilon$, then
    \begin{equation}
          \forall \v{q} \in \mathcal{T}_+^{\mathrm{TO}}(\v{p}):\quad \min_{\P\in \mathrm{MeMTP}} \delta(\P(\v{p}),\v{q})\leq \epsilon.
    \end{equation}

    A particular subset of extreme points of $\mathcal{T}_+^{\mathrm{TO}}(\v{p})$ that we will investigate in more detail is given by those extreme states that can be achieved via sequences of $\beta$-swaps. A $\beta$-swap $\Pi_{ij}^\beta$ can be seen as a thermodynamic analogue of a population swap between levels~$i$ and~$j$~\cite{Lostaglio2018elementarythermal}:
    \begin{equation}
    \label{Eq:beta-swap}
        \Pi^{\, \beta}_{ij} := \begin{bmatrix}
        1-e^{-\beta {(E_j - E_i)}}  &1 \\ 
        e^{-\beta {(E_j - E_i)}} &0 
        \end{bmatrix}\oplus \mathbbm{1}_{{\backslash}(ij)},
    \end{equation}
    with $E_i \leq E_j$ and $\mathbbm{1}_{{\backslash}(ij)}$ denoting the $(d -2) \times (d-2)$ identity matrix on the subspace of all energy levels except $i, j$. Note that in the infinite temperature limit ($\beta=0$), the above recovers a transposition on levels $i$ and $j$, which we will simply denote by~$\Pi_{ij}$. In this limiting case, all extreme points of $\mathcal{T}_+^{\mathrm{TO}}(\v{p})$ can be obtained by sequences of transpositions (that is because an extreme point in that case is of the form $\Pi \v{p}$ for a permutation matrix $\Pi$, and every $\Pi$ can be constructed from transpositions). 

    For finite temperatures, a $\beta$-swap transforms $\v{p}$ into an extreme point of $\v{p}^\pi$ if the $\beta$-orders $\pi_{\v{p}}$ and $\pi$ differ only by a transposition of adjacent elements~\cite{Lostaglio2018elementarythermal,deoliveirajunior2022}. In other words, it happens for $\Pi_{ij}^\beta$ when $\pi_{\v{p}}(i)=\pi_{\v{p}}(j) \pm 1$. More generally, a sequence of such non-overlapping $\beta$-swaps will also produce an extreme point of $\T^{\mathrm{TO}}_+(\v{p})$, and so a total number of 
    extreme points that can be achieved by sequences of $\beta$-swaps for dimension $d$ (including the starting point) 
    is given by $F(d+1)$, where $F(k)$ is the $k$-th Fibonacci number~\cite{white_1983}. Finally, we will also make use of the notion of $\beta$-cycles that we now define. For a state $\v{p}$, consider a $k$-dimensional subset of energy levels $ \{i_1,\hdots,i_{k}\}$ neighbouring in the $\beta$-order, i.e., $\pi_{\v{p}}(i_{j+1}) = \pi_{\v{p}}(i_j) + 1$. Denote by $\pi$ a cyclic permutation on this subset, i.e., either $\pi(i_j) = i_{j+1 \text{ mod } k}$, or $\pi(i_j) = i_{j-1 \text{ mod } k}$. Then, a thermal operation mapping $\v{p}$ to its extreme point $\v{p}^{\pi'}$ is called a $\beta$-cycle, if $\Pi'=\Pi \Pi_{\v{p}}$ (here $\Pi, \Pi'$ and $\Pi_{\v{p}}$ denote matrix representations of permutations $\pi,\pi'$ and $\pi_{\v{p}}$). To emphasise that a given $\beta$-cycle acts on $k$ levels, we will sometimes refer to it as a $\beta$-$k$-cycle.
    
    \subsection{Memory-assisted protocols}
\label{Sec:Memory-assisted protocols}

The basic building blocks of all our protocols are given by two-level elementary thermalisations that are formally defined as follows. 
\begin{definition}[Two-level thermalisations\label{def:neighbour-thermalisations}] 
Consider a system in a state $\v{p}$ with the corresponding thermal state $\v{\gamma}$. Then, an MTP transformation
\begin{equation}\label{eq:neighbour_thermalisations}
    \{p_i, p_{j} \} \to \left\{ \frac{p_i+p_j}{\gamma_i+\gamma_j}\gamma_i,\frac{p_i+p_j}{\gamma_i+\gamma_j}\gamma_j \right\}
\end{equation}
is called a two-level thermalisation between levels $i$ and $j$, and the corresponding matrix acting on probability vectors will be denoted by $T_{ij}$. Moreover, if $\pi_{\v{p}}(i)=\pi_{\v{p}}(j) \pm 1$, then $T_{ij}$ is called a neighbour thermalisation.
\end{definition}

Let us note that the importance of neighbour thermalisations and the reason we employ them in our protocols stems from the fact that their sequences produce the extreme points of the Markovian thermal cone $\T_+^{\mathrm{MTP}}$~\cite{lostaglio2021continuous}. Intuitively, one can expect that in order to approximate extreme states of $\T_+^{\mathrm{TO}}(\v{p})$ using MeMTPs, one should get to the extreme points of $\mathcal{T}_+^{\mathrm{MTP}}(\v{p}\otimes\v{\gamma}_M)$, and these can be achieved by neighbour thermalisations of the composite system-memory state.

Before delving into the full details of our protocol for approximating $\beta$-swaps, let us start with a high-level description to provide some insight into our investigation. We begin with the simplest case of a two-level system and a two-dimensional memory, drawing an analogy between continuous (thermo)majorisation and connected vessels~(see Fig.~\ref{Fig:vessels} for a schematic representation). 
Considering two vessels—one filled with liquid and one empty—the most one can do when they are connected is to equalise the levels of the liquid between them. However, by adhering to the simple schematic provided in Fig. \ref{Fig:vessels}, it is possible to exceed this intuitively unbeatable limit.
\begin{marginfigure}[0.34cm]
	\includegraphics[width=4.718cm]{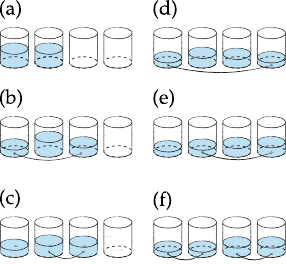}
	\caption{\emph{Simplest example using connected vessels analogy}. Continuous (thermo)majorisation on $d$-level probability vectors is equivalent to a task of redistributing the content of $d$-ordered vessels that are connected pairwise. Adding a memory in a Gibbs state is then akin to multiplying glasses -- one empty and one full glass become $N$ pairs of full and empty glasses. The process involves five steps, read from top to bottom and left to right, which represent the simplest protocol that allows shifting more than half of the liquid from full to empty glasses. The final distribution after the protocol is applied is $(3/8, 5/8)$.}
	\label{Fig:vessels}
\end{marginfigure}

The sequence in Fig.~\ref{Fig:vessels} should be read from panels (a) to (f), in a top-down and left-to-right manner. We begin with four vessels: two are half-filled and two are empty. We can connect these vessels in pairs, thus equalising the fluid levels. The first two steps involve connecting the half-filled vessels sequentially to the first empty vessel. Likewise, the next two steps connect both initially half-filled vessels to the second empty one. The final step, which involves equalising the fluid levels in pairs, is analogous to thermalising the memory. An astute reader can confirm that $5/8$ of the total fluid ends up in the vessels that were initially empty, thereby surpassing the $1/2$ limit. 

We now describe our proposition for a MeMTP protocol approximating the $\beta$-swap $\Pi^{\, \beta}_{ij}$ between the $i$-th and $j$-th energy levels of the main system. It involves a sequence of $N^2$ two-level thermalisations of the state of the composite system (see Fig.~\ref{Fig:thermodynamic_protocol}), which includes the main system and a memory starting at thermal equilibrium, i.e., a state \mbox{$\v p \otimes \v \gamma_M$}. In particular, we focus on the populations $[\v p \, \otimes \, \v \gamma_M]_{N(i-1)+1},\hdots,[\v p \, \otimes\,  \v \gamma_M]_{Ni}$ corresponding to the $i$-th level of the main system and similarly for the $j$-th level. The protocol can be split into a sequence of $N$ rounds $\mathcal{R}_k^{(ij)}$ with $k = 1,\hdots,N$ consisting of $N$ steps each (shaded area in Fig.~\ref{Fig:thermodynamic_protocol}). In the $k$-th round, we select the entry $[\v p \otimes \v \gamma_M]_{N(i-1)+k}$ and thermalise it sequentially with all the levels corresponding to the level $j$ of the main system:
\begin{equation}
    \mathcal{R}^{(ij)}_k(\v p \otimes \v \gamma_M) := \qty(\prod_{l=1}^N T_{(i-1)N + k,\,(j-1)N + l}) (\v p \otimes \v \gamma_M).
\end{equation}
Note that if $\pi_{\v{p}}(i)=\pi_{\v{p}}(j) \pm 1$ (i.e., the $\beta$-orders of $\v{p}$ and $\Pi^\beta_{ij}\v{p}$ differ by a transposition of adjacent elements), then all thermalisations performed are neighbour thermalisations. Using the above, we can now define the action of the truncated protocol $\widetilde{\mathcal{P}}^{(ij)}$:
\begin{equation}\label{eq:trunc_protocol}
    \widetilde{\mathcal{P}}^{(ij)}(\v p \otimes \v \gamma_M) := \mathcal{R}^{(ij)}_N\circ\hdots\circ\mathcal{R}^{(ij)}_1(\v p \otimes \v \gamma_M).
\end{equation}
The final step is to decouple the main system from the memory using a full thermalisation $\T$ of the memory system $M$, which acts on a general joint state $\v{Q}$ as: 
\begin{equation}
    \TT(\v Q) = \v q\otimes \v{\gamma}_M,\qquad q_i = \sum_{j=1}^N Q_{N(i-1)+j}.
\end{equation}
Thus, the full protocol approximating a $\beta$-swap $\Pi_{ij}^\beta$ is given by
\begin{equation}
    \mathcal{P}^{(ij)}(\v p \otimes \v \gamma_M) = \TT\circ\widetilde{\mathcal{P}}^{(ij)}(\v p \otimes \v \gamma_M).
\end{equation}
\begin{figure}[t]
\centering
\includegraphics[width=8.488cm]{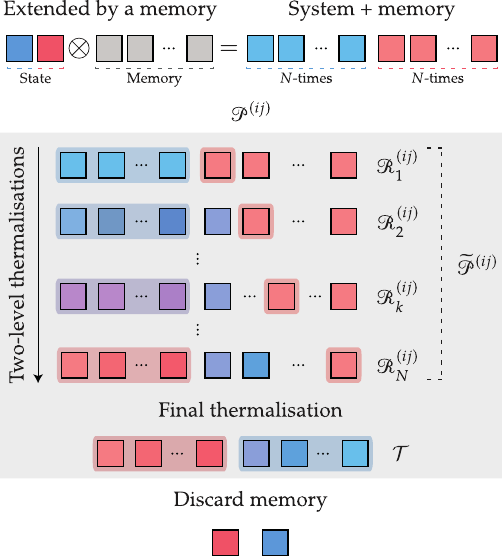}
\caption{\emph{$\beta$-swap protocol}. A two-level subsystem of a generically $d$-dimensional system, represented by blue and red squares, is extended by an $N$-dimensional memory represented by grey squares. The composite $2N$-dimensional system undergoes $N$ rounds of processing, where the $k$-th round involves $N$ sequential two-level thermalisations of the first $N$ entries with the $(N + k)$-th entry (represented by the shaded colour around the squares). After the final thermalisation step, the memory can be discarded.}
\label{Fig:thermodynamic_protocol}
\end{figure}

We will also employ more general protocols that aim at approximating the transformation of the initial state $\v{p}$ into an extreme state of $\T_+^{\mathrm{TO}}(\v{p})$ given by $\v{p}^{\pi'}$. Denote matrix representations of $\beta$-orders of these states by $\Pi_{\v{p}}$ and $\Pi'=\Pi \Pi_{\v{p}}$ for some permutation matrix $\Pi$. Moreover, let us decompose $\Pi$ into neighbour transpositions with respect to $\v{p}$, i.e., we write \mbox{$\Pi = \Pi_{i_mj_m}\dots \Pi_{i_1j_1}$} with every consecutive transposition $\Pi_{i_kj_k}$ changing the $\beta$-order of the state $\Pi_{i_{k-1}j_{k-1}}\dots \Pi_{i_1j_1} \v{p}$ only by a transposition of adjacent elements. Then, we define the following two protocols to approximate $\v{p}^{\pi'}$:
\begin{subequations}
\begin{align}
    \label{eq:protocol}
    \mathcal{P}^{{\Pi}} &:= \mathcal{P}^{(i_mj_m)} \circ\dots\circ \mathcal{P}^{(i_1j_1)},
     \\
     \label{eq:truncated}
    \widetilde{\mathcal{P}}^{{\Pi}} &:= \T\circ \widetilde{\mathcal{P}}^{(i_mj_m)} \circ\dots\circ \widetilde{\mathcal{P}}^{(i_1j_1)}.
\end{align}
\end{subequations}
Note that, by construction, all two-level thermalisations performed in the above protocols are neighbour thermalisations.

\subsection[Achieving extreme points of the future thermal cone for infinite temperature]{\texorpdfstring{Achieving extreme points of $\T_+^{\mathrm{TO}}$ for $\beta= 0$}{Achieving extreme points of the future thermal cone for infinite temperature}}

We are now ready to state our main results concerning the power of memory-assisted Markovian thermal processes in the infinite temperature limit. Let us recall that we focus on a $d$-level system and an $N$-level memory in energy-incoherent states represented by probability distributions $\v p$ and $\v \gamma_M$. Since $\beta = 0$, the thermal state of the memory is described by a uniform distribution $\v{\eta}_M$ with every entry equal to $1/N$. We start with the following lemma, the proof of which can be found in Section~\ref{app:neighbour} (with the necessary background on the mathematical tools used presented in Section~\ref{app:beta}). 

\begin{lemma}[Memory-assisted transposition]
    \label{Lem:transposition} 
    In the infinite temperature limit, $\beta=0$, and for an $N$-dimensional memory, the MeMTP protocol $\mathcal{P}^{(ij)}$ acts as
    \begin{equation}
        \mathcal{P}^{(ij)} (\v{p}\otimes\v{\eta}_M)=  \v{q} \otimes \v \eta_M ,
    \end{equation}
    with
    \begin{equation}
        \v{q} = \left(\Pi_{ij} + \epsilon\qty(\mathbbm{1} - \Pi_{ij})\right) \v{p},
    \end{equation}
    and $\epsilon$ given by
    \begin{equation}
        \epsilon=(\pi N)^{-1/2} + o\qty(N^{-1/2}) \overset{N\rightarrow\infty}{\longrightarrow} 0.
    \end{equation}
\end{lemma}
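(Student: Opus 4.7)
The plan is to exploit the product structure of the initial state $\v{p}\otimes\v{\eta}_M$ together with the final memory-thermalising step to reduce the lemma to a one-parameter statement about block sums. The protocol $\mathcal{P}^{(ij)}$ only touches entries belonging to the $i$- and $j$-``blocks'' of the composite vector, so $q_k=p_k$ for all $k\notin\{i,j\}$. Because the last operation $\TT$ is a full thermalisation of the memory, the output is forced to have the product form $\v{q}\otimes\v{\eta}_M$, with $q_i$ and $q_j$ equal to the total masses in the two blocks after the truncated protocol $\widetilde{\mathcal{P}}^{(ij)}$. Mass conservation $q_i+q_j=p_i+p_j$ then reduces the claim to a single scalar, and since the block entries start out uniform, writing $q_i=\epsilon p_i+(1-\epsilon)p_j$ is equivalent to the matrix form stated in the lemma.

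Next I would track the two-block substate round by round. Denoting by $a_k^{(r)}$ and $b_l^{(r)}$ the entries of the $i$- and $j$-block after round $r$, a direct unpacking of the protocol shows that $a_k^{(r)}=p_i/N$ for $k>r$, that $a_r$ is overwritten by $b_N^{(r)}$ at the end of round $r$, and that the $b_l$'s satisfy the coupled recursion
\begin{equation*}
    b_l^{(r)}=\tfrac12\bigl(b_l^{(r-1)}+b_{l-1}^{(r)}\bigr),\quad l,r\geq 1,\quad b_0^{(r)}=p_i/N,\quad b_l^{(0)}=p_j/N.
\end{equation*}
Subtracting the asymptotic bias $(p_i+p_j)/(2N)$ and rescaling by $\mu:=(p_i-p_j)/(2N)$ turns this into a standard lattice-walk problem: the rescaled entries equal the probability that a simple fair random walk on $\mathbb{Z}_{\geq 0}^{2}$ starting at $(l,r)$ and taking unit steps $(-1,0)$ or $(0,-1)$ hits the axis $l=0$ before the axis $r=0$, which is the binomial tail $\mathbb{P}(\mathrm{Bin}(l+r-1,\tfrac12)\geq l)$. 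Summing the block masses and using the symmetry $\mathbb{P}(\mathrm{Bin}(m,\tfrac12)\leq j)=\mathbb{P}(\mathrm{Bin}(m,\tfrac12)\geq m-j)$ makes the two resulting sums telescope and yields the explicit closed form
\begin{equation*}
    \epsilon=\frac{1}{N}\sum_{j=0}^{N-1}\mathbb{P}\bigl(\mathrm{Bin}(N+j,\tfrac12)\geq N\bigr).
\end{equation*}

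Finally, to extract the $(\pi N)^{-1/2}$ asymptotics I would apply a local central-limit (or Berry--Esseen) estimate to each term, substituting $m=N-j$ so that $\mathbb{P}(\mathrm{Bin}(N+j,\tfrac12)\geq N)\simeq\Phi(-m/\sqrt{2N})$ uniformly for $m=o(N)$, with negligible contributions from $m\gg\sqrt{N}\log N$. Converting the Riemann sum into an integral via $u=m/\sqrt{2N}$ gives
\begin{equation*}
    \epsilon\simeq\sqrt{\tfrac{2}{N}}\int_0^{\infty}\Phi(-u)\,du=\sqrt{\tfrac{2}{N}}\cdot\tfrac{1}{\sqrt{2\pi}}=\frac{1}{\sqrt{\pi N}},
\end{equation*}
which is the claimed expansion. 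The main obstacle I anticipate is obtaining the closed form for $\epsilon$ from the coupled two-dimensional recursion: a direct expansion of $b_l^{(r)}$ via generating functions is tractable but opaque, while the random-walk duality turns the reduction into a transparent first-hitting-time identity. Once that is in place, the only remaining care is to keep the Gaussian approximation uniform, which is routine at the $\Theta(\sqrt{N})$ scale of the binomial standard deviation.
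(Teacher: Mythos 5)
Your proposal is correct, and it reaches the lemma by a route that is genuinely different from the paper's. The paper first unrolls the round-by-round recursion into explicit double binomial sums, then recognizes them as regularised incomplete beta functions $I_x(a,b)$, and extracts the $(\pi N)^{-1/2}$ asymptotics from the analytic toolbox (Legendre duplication, beta-function recurrences, Stirling) built up in the appendix on $I_x$. You instead center and rescale the two-block recursion so that the normalized entries become first-passage probabilities of a fair lattice walk, obtain the closed form $\epsilon=\tfrac1N\sum_{j=0}^{N-1}\mathbb{P}\bigl(\mathrm{Bin}(N+j,\tfrac12)\geq N\bigr)$ directly from the gambler's-ruin identity, and then replace each tail probability by its Gaussian approximation and pass to an integral. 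The two closed forms coincide: writing the paper's $\mathbb{E}$ at $\Gamma_{12}=\Gamma_{21}=\tfrac12$ and applying the negative-binomial $\leftrightarrow$ binomial tail identity $\sum_{j<i}\binom{N+j-1}{j}2^{-(N+j)}=\mathbb{P}(\mathrm{Bin}(N+i-1,\tfrac12)\geq N)$ recovers exactly your sum. What the paper's machinery buys is that the same incomplete-beta expressions carry over seamlessly to $\beta\neq0$ (where the constant $\Gamma_{12}\neq\tfrac12$ changes the scaling from $N^{-1/2}$ to $(4\Gamma_i\Gamma_j)^N N^{-3/2}$, proved in Theorem~\ref{Thm:beta-swap}); your probabilistic route is the cleaner and more illuminating derivation for the $\beta=0$ statement itself, but the ballot-problem interpretation would need a biased walk and a large-deviation rather than CLT estimate to handle the finite-temperature extension. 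One small presentational caveat: the phrase about the two block sums ``telescoping'' is a bit loose — what actually happens is that each block mass alone determines $\epsilon$ (via $q_i=p_j+2\epsilon N\mu$), and the symmetry identity just shows the $i$-block and $j$-block sums agree; this does not affect the correctness of the endpoint.
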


It is well known that any permutation of $d$ elements can be decomposed into a product of at most $d$ transpositions or $\binom{d}{2}$ neighbour transpositions. Therefore, by employing Lemma~\ref{Lem:transposition}, we can demonstrate that an arbitrary permutation can be achieved using a composition of our approximate protocols.

\begin{theorem}[Memory-assisted permutation]
\label{Thm:permutation} 

In the infinite temperature limit, $\beta=0$, and for an $N$-dimensional memory, $\Pi$ can be approximated by the MeMTP protocol $\mathcal{P}^{{\Pi}}$ as follows: 
\begin{equation}
    \mathcal{P}^{{\Pi}}(\v{p}\otimes \v{\eta}_M) = \v{q}\otimes \v{\eta}_M,
\end{equation}
where
\begin{equation}
    \v{q}=\left(\Pi + \epsilon \v\Delta + o\qty(N^{-1/2})\right)\v{p}
    \overset{N\rightarrow\infty}{\rightarrow} \Pi \v {p},
\end{equation}
with $\epsilon = (\pi N)^{-1/2}$ and the operator $\v\Delta$ defined in terms of transpositions appearing in the definition of $\P^\Pi$ in Eq.~\eqref{eq:protocol}:
\begin{equation} \label{eq:Lambda_correction_op}
     \v\Delta = \sum_{l=1}^{m}\qty(\prod_{k=l+1}^{m} \Pi_{i_kj_k})\qty(\mathbbm{1}-\Pi_{i_lj_l})\qty(\prod_{k=1}^{l-1} \Pi_{i_kj_k}).
\end{equation}

\end{theorem}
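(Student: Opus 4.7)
The plan is to prove Theorem~\ref{Thm:permutation} by iterating Lemma~\ref{Lem:transposition} along the decomposition $\Pi = \Pi_{i_m j_m}\cdots \Pi_{i_1 j_1}$ used to define $\mathcal{P}^{\Pi}$ in Eq.~\eqref{eq:protocol}. The essential structural fact is that Lemma~\ref{Lem:transposition} yields an output of product form, $\mathcal{P}^{(ij)}(\v p\otimes\v\eta_M)=\v q\otimes\v\eta_M$, so the memory is restored to its thermal state $\v\eta_M$ at the end of each block $\mathcal{P}^{(i_k j_k)}$. This decoupling is what legitimises iterating the lemma, and it is built into the protocol by the final thermalisation $\mathcal{T}$ appearing in $\mathcal{P}^{(ij)} = \mathcal{T}\circ \widetilde{\mathcal{P}}^{(ij)}$.

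Given this, I would first define the single-step operator $M_{k}:=\Pi_{i_k j_k}+\epsilon(\mathbbm{1}-\Pi_{i_k j_k})$ on the system, where $\epsilon=(\pi N)^{-1/2}+o(N^{-1/2})$ is the universal error obtained in Lemma~\ref{Lem:transposition}. A straightforward induction on $k$ then shows
\begin{equation}
    \mathcal{P}^{(i_k j_k)}\!\circ\cdots\circ\mathcal{P}^{(i_1 j_1)}(\v p\otimes\v\eta_M) = \bigl(M_k\cdots M_1\bigr)\v p\,\otimes\,\v\eta_M,
\end{equation}
since each step consumes a thermal memory and returns a thermal memory, and the effective action on the system is $M_k$. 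In particular, taking $k=m$ and observing that by definition $\Pi = \Pi_{i_m j_m}\cdots \Pi_{i_1 j_1}$, it remains only to expand $M_m\cdots M_1$ in powers of~$\epsilon$.

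The main technical step is the expansion. Writing $M_k = \Pi_{i_k j_k} + \epsilon\,(\mathbbm{1}-\Pi_{i_k j_k})$ and collecting terms by the number of factors of $\epsilon$ gives
\begin{equation}
    M_m\cdots M_1 = \prod_{k=1}^{m}\Pi_{i_k j_k} + \epsilon\sum_{l=1}^{m}\Bigl(\prod_{k=l+1}^{m}\Pi_{i_k j_k}\Bigr)(\mathbbm{1}-\Pi_{i_l j_l})\Bigl(\prod_{k=1}^{l-1}\Pi_{i_k j_k}\Bigr) + \sum_{r\geq 2}\epsilon^{r} R_r,
\end{equation}
where the products are taken in the same order as in Eq.~\eqref{eq:protocol}. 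The zeroth order term is $\Pi$, the first order term is exactly $\v\Delta$ from Eq.~\eqref{eq:Lambda_correction_op}, and the remaining sum is a fixed polynomial in the permutations $\Pi_{i_k j_k}$ with bounded operator norm depending only on $m$ (hence on $d$), so it is $O(\epsilon^2)=O(N^{-1})=o(N^{-1/2})$.

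The main obstacle I anticipate is bookkeeping the higher-order remainder rigorously, and in particular checking that the $o(N^{-1/2})$ error from Lemma~\ref{Lem:transposition} does not compound catastrophically under composition. This is resolved by noting that all operators involved (permutations and $\mathbbm{1}-\Pi_{ij}$) have bounded $\ell_1$-operator norm and act on probability vectors, so composition is $\ell_1$-contractive up to an additive constant; thus stacking $m$ steps only multiplies the error by a factor depending on $m$ and $d$, not on $N$. Consequently the remainder stays $o(N^{-1/2})$, giving the claimed form $\v q = (\Pi+\epsilon\v\Delta + o(N^{-1/2}))\v p$ and, in the limit $N\to\infty$, convergence to $\Pi\v p$.
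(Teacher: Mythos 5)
Your proof is correct and takes essentially the same route as the paper's: iterate Lemma~\ref{Lem:transposition} along the decomposition of $\Pi$ into neighbour transpositions, obtain the product $M_m\cdots M_1$ of single-step operators, and expand in powers of $\epsilon$ to read off $\Pi$ at zeroth order, $\epsilon\v\Delta$ at first order, and $o(N^{-1/2})$ for the rest. Your explicit remarks that each block $\mathcal{P}^{(i_k j_k)}$ restores the thermal memory (justifying the iteration) and that the $o(N^{-1/2})$ errors compose only a bounded number of times are exactly the bookkeeping the paper leaves implicit, and they are correct.
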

\begin{proof}

From the definition of $\mathcal{P}^\Pi$ and Lemma~\ref{Lem:transposition} we get
\begin{equation}
    \v{q} =  \qty[\Pi_{i_mj_m} + \epsilon \qty(\mathbbm{1} - \Pi_{i_mj_m})]\dots \qty[\Pi_{i_1j_1} + \epsilon \qty(\mathbbm{1} - \Pi_{i_1j_1})]\v{p}.
\end{equation}
Clearly, the leading term is given by $\Pi\v{p}$, whereas the next leading term, proportional to $\epsilon$, is given by $\epsilon\Delta\v{p}$. All higher order terms scale at least as $\epsilon^2$, so are of the order $o(N^{-1/2})$.
\end{proof}

The above theorem can then be directly used to obtain the bound on how close one can get from a given $\v{p}$ to any state $\v{q}\in \T_+^{\mathrm{TO}}(\v{p})$ using MeMTPs with $N$-dimensional memory. We explain how to derive such a bound for a given $\v{p}$ in Section~\ref{app:bound}, whereas below we present a weaker, but much simpler, bound that is independent of $\v{p}$.

\begin{corollary} \label{corr:general_bound}
    Consider states $\v{p}$ and $\v q\in C_+^{TO}(\v{p})$. Then, in the infinite temperature limit, $\beta=0$, and for an $N$-dimensional memory, there exists a MeMTP protocol $\mathcal{P}$ such that
    \begin{equation}
        \label{eq:cor1}
        \P(\v{p}\otimes \v{\eta}_M) = \v{q}'\otimes \v{\eta}_M,
    \end{equation}
    with
    \begin{equation}
        \label{eq:cor2}
        \delta(\v{q}',\v{q})\leq \frac{d(d-1)}{2\sqrt{\pi N}} + o\qty(N^{-1/2}).
    \end{equation}   
\end{corollary}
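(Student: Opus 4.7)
The strategy is to reduce the claim to Theorem~\ref{Thm:permutation} by selecting an appropriate extreme point of $\T_+^{\mathrm{TO}}(\v{p})$, and then to show that the resulting error is uniformly controlled in $\v{p}$. Given $\v{q}\in\T_+^{\mathrm{TO}}(\v{p})$, I would first invoke Eq.~\eqref{eq:extremal_enough} to conclude that $\v{q}\in\T_+^{\mathrm{TO}}(\v{p}^{\pi_{\v{q}}})$, and then Eq.~\eqref{eq:same_beta_order} to obtain an \emph{exact} Markovian thermal process $\mathcal{E}$ mapping $\v{p}^{\pi_{\v{q}}}\mapsto\v{q}$, since both vectors share the same $\beta$-order. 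At $\beta=0$ the extreme point $\v{p}^{\pi_{\v{q}}}$ coincides with $\Pi\v{p}$ for the permutation $\Pi$ that brings $\v{p}$ into the $\beta$-order of $\v{q}$. Fixing any decomposition $\Pi=\Pi_{i_m j_m}\cdots\Pi_{i_1 j_1}$ into neighbour transpositions, Theorem~\ref{Thm:permutation} then supplies the MeMTP $\mathcal{P}^\Pi$ whose output is $\v{q}''\otimes\v{\eta}_M$ with $\v{q}''=\Pi\v{p}+\epsilon\,\v\Delta\,\v{p}+o(N^{-1/2})$. The desired protocol is the composition $\mathcal{P}:=\mathcal{E}\circ\mathcal{P}^\Pi$, with $\mathcal{E}$ trivially extended to act as the identity on the memory.

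The next step would be to propagate this first-order error through the final MTP, exploiting the standard fact that any stochastic matrix is a contraction on zero-sum vectors in the $1$-norm. Setting $\v{q}':=\mathcal{E}\v{q}''$, linearity and contractivity yield
\begin{equation*}
    \delta(\v{q}',\v{q}) \;=\; \tfrac{1}{2}\|\mathcal{E}(\v{q}''-\Pi\v{p})\|_1 \;\leq\; \tfrac{\epsilon}{2}\|\v\Delta\,\v{p}\|_1 + o(N^{-1/2}),
\end{equation*}
so that the whole corollary reduces to a uniform-in-$\v{p}$ bound $\|\v\Delta\,\v{p}\|_1\leq d(d-1)$.

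Establishing this bound is the main quantitative step, and the principal (though mild) obstacle. Each summand in Eq.~\eqref{eq:Lambda_correction_op} has the form $\widehat\Pi_l(\mathbbm{1}-\Pi_{i_l j_l})\v{r}_l$, where $\widehat\Pi_l:=\prod_{k=l+1}^m\Pi_{i_k j_k}$ is a permutation matrix (hence a $1$-norm isometry) and $\v{r}_l:=\prod_{k=1}^{l-1}\Pi_{i_k j_k}\v{p}$ is itself a probability vector, namely a permutation of $\v{p}$. Because $(\mathbbm{1}-\Pi_{ij})\v{r}$ has exactly two nonzero entries, $\pm(r_i-r_j)$, each summand has $1$-norm $2|r_{l,i_l}-r_{l,j_l}|\leq 2$, so $\|\v\Delta\,\v{p}\|_1\leq 2m$ for any decomposition of length $m$. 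I would then invoke the classical Coxeter-theoretic fact that every element of $\S_d$ admits a reduced decomposition into at most $\binom{d}{2}=d(d-1)/2$ adjacent transpositions (attained for instance by bubble sort), and pick such a minimal decomposition to force $m\leq d(d-1)/2$. This would give $\|\v\Delta\,\v{p}\|_1\leq d(d-1)$ and hence Eq.~\eqref{eq:cor2}. The only bookkeeping to verify is that the chosen transpositions are genuine \emph{neighbour} transpositions for the running state — a condition needed for $\mathcal{P}^\Pi$ to be well defined — which at $\beta=0$ is automatic once one performs the bubble-sort swaps on the sorted form of the intermediate vectors.
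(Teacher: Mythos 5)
Your proposal is correct and takes essentially the same route as the paper: both define $\Pi$ as the permutation aligning $\v{p}$ with the $\beta$-order of $\v{q}$, invoke Theorem~\ref{Thm:permutation} for $\mathcal{P}^{\Pi}$, obtain an exact MTP from $\Pi\v{p}$ to $\v{q}$ via Eq.~\eqref{eq:same_beta_order}, compose, use contractivity of the distance, and conclude by decomposing $\Pi$ into at most $\binom{d}{2}$ neighbour transpositions. You merely make explicit a couple of steps the paper leaves implicit — the pointwise bound $\|(\mathbbm{1}-\Pi_{ij})\v{r}\|_1\leq 2$ giving $\|\v\Delta\v{p}\|_1\leq 2m$, and the check that the bubble-sort swaps are genuine neighbour transpositions.
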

\begin{proof}
    First, define $\Pi$ as a permutation that changes the $\beta$-order of $\v{p}$ to that of $\v{q}$. In other words, the $\beta$-order of $\Pi \v{p}$ is~$\pi_{\v{q}}$. Then, using Theorem~\ref{Thm:permutation}, we have that
    \begin{equation}
        \P^{\Pi}(\v{p}\otimes \v{\eta}_M)=\v{r}\otimes \v{\eta}_M
    \end{equation}
    with
    \begin{equation}
        \delta(\v{r},\Pi \v{p})\simeq\frac{1}{2\sqrt{\pi N}} \sum_{i=1}^d |(\Delta \v{p})_i|   \lesssim  \frac{m}{\sqrt{\pi N}} \lesssim  \frac{d(d-1)}{2\sqrt{\pi N}},
    \end{equation}
    where $\simeq$ and $\lesssim$ denote the equalities and inequalities up to $o(N^{-1/2})$. In the above, we have used the triangle inequality and the fact that one can always decompose $\Pi$ into at most $d(d-1)/2$ neighbour transpositions. Next, from Eq.~\eqref{eq:same_beta_order}, we know that there exists an MTP protocol $\P'$ mapping $\Pi\v{p}$ to~$\v{q}$. Using the contractiveness of the total variation distance under stochastic processing, we then have
    \begin{align}
        \delta(\P'(\v{r}),\v{q})&=\delta(\P'(\v{r}),\P'(\Pi\v{p}))\leq \delta(\v{r},\Pi\v{p})\lesssim \frac{d(d-1)}{2\sqrt{\pi N}}.
    \end{align}
    We thus conclude that by choosing $\P=(\P'\otimes \mathcal{I}_M)\circ \P^\Pi$, Eqs.~\eqref{eq:cor1}-\eqref{eq:cor2} are satisfied.
\end{proof}

Furthermore, we present the following conjecture for a better approximation of arbitrary permutations.

\begin{conjecture}[Improved convergence]
\label{Conj:convergence}
In the infinite temperature limit, $\beta=0$, and for an $N$-dimensional memory, $ \widetilde{\mathcal{P}}^{{\Pi}}$ gives a better approximation of a permutation $\Pi$ than  $\mathcal{P}^{\Pi}$:
    \begin{equation}
        \delta\left( \Pi \v p,\tilde{\v q}\right) \leq  \delta\left( \Pi \v p, \v{q}\right),
    \end{equation}
    where $\tilde{\v{q}}$ and $\v{q}$ are defined via
    \begin{equation}
        \widetilde{\mathcal{P}}^{{\Pi}}(\v{p}\otimes \v{\eta}_M) = \tilde{\v{q}}\otimes \v{\eta}_M,\qquad \mathcal{P}^{{\Pi}}(\v{p}\otimes \v{\eta}_M) = \v{q}\otimes \v{\eta}_M.
\end{equation}
\end{conjecture}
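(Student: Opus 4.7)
The key structural observation is that by Lemma~\ref{Lem:transposition}, the full protocol $\mathcal{P}^{(ij)}$ returns the system to a product form $\v{q}\otimes\v{\eta}_M$, so that iterating $\mathcal{P}^{\Pi}$ amounts to composing independent single-transposition approximations on the reduced system alone. In contrast, $\widetilde{\mathcal{P}}^{(ij)}$ produces a joint state with non-trivial correlations between the system and the memory that are only washed out by the single final thermalisation $\T$. The plan is to show that these retained correlations carry information about which ``fraction'' of the system's population was actually transposed, and that using this information in the next round produces a strictly smaller error than discarding it via an intermediate $\T$.

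Concretely, I would first extend Lemma~\ref{Lem:transposition} to describe the action of the truncated single-round protocol $\widetilde{\mathcal{P}}^{(ij)}$ on an arbitrary joint state $\v{Q}$ (not necessarily of product form with uniform memory). Writing $\v{Q}=\sum_{k}\v{r}^{(k)}\otimes \v{e}_k$ in the memory basis, I would compute the joint state after the $N$ rounds of two-level thermalisations used in $\widetilde{\mathcal{P}}^{(ij)}$ and identify, to leading order in $\epsilon=(\pi N)^{-1/2}$, how the non-uniform parts of the memory marginal bias the induced transposition on levels $i,j$. The goal is an expression of the form
\begin{equation}
\tr_M\!\bigl[\widetilde{\mathcal{P}}^{(ij)}(\v{Q})\bigr] = \bigl(\Pi_{ij}+\epsilon(\mathbbm{1}-\Pi_{ij})\bigr)\tr_M(\v{Q}) + \epsilon\,\v{C}_{ij}(\v{Q}) + o(N^{-1/2}),
\end{equation}
where $\v{C}_{ij}(\v{Q})$ is a signed correction depending on the system-memory correlations that vanishes on product states with uniform memory (thus recovering Lemma~\ref{Lem:transposition} in that case).

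Second, I would use induction on the length $m$ of the neighbour-transposition decomposition of $\Pi$. The base case $m=1$ is trivial since $\mathcal{P}^{\Pi}=\widetilde{\mathcal{P}}^{\Pi}$. For the inductive step I would decompose $\Pi=\Pi_{i_mj_m}\Pi'$ and track the reduced system state after applying $\widetilde{\mathcal{P}}^{\Pi'}$ versus $\mathcal{P}^{\Pi'}$, using the correction formula above to show that the correlations built up in $\widetilde{\mathcal{P}}^{\Pi'}(\v{p}\otimes\v{\eta}_M)$ align with the direction of the next transposition $\Pi_{i_mj_m}$, so that the sign of the correction $\v{C}_{i_mj_m}$ subtracts from rather than adds to the operator $\v\Delta$ of Eq.~\eqref{eq:Lambda_correction_op}. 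Coupled with the triangle inequality, this would yield a term-by-term comparison $|(\tilde{\v{q}}-\Pi\v{p})_k|\le |(\v{q}-\Pi\v{p})_k|$ at leading order.

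The main obstacle will be controlling the sign and magnitude of the correlation term $\v{C}_{ij}(\v{Q})$ after a generic prefix of truncated protocols, since the joint state carries correlations on subspaces outside the pair $(i,j)$ acted on in the current round, and these can in principle contribute adversely. A possible route around this is to exploit the neighbour-ordering constraint in the decomposition of $\Pi$: each $\Pi_{i_kj_k}$ acts between levels that were adjacent in $\beta$-order just before its use, which should ensure that the accumulated correlations are supported predominantly on the relevant pair and have a definite sign relative to the target transposition. If this geometric control can be made rigorous, the inequality $\delta(\tilde{\v{q}},\Pi\v{p})\le \delta(\v{q},\Pi\v{p})$ follows by summing the componentwise bounds; if not, one may need to weaken the conjecture to an asymptotic statement as $N\to\infty$ or verify it numerically over the finite set of permutations and initial states on the boundary where the inequality becomes tight.
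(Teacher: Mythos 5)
There is an important mismatch of expectations here: the statement you are proving is presented in the thesis as a \emph{conjecture}, and the thesis offers no proof of it at all — only the explicit construction of the two protocols and strong numerical evidence (Fig.~\ref{fig:beta0_convergence}), together with empirical observations about which permutations show an advantage. So there is no paper proof to measure your argument against, and your proposal, as written, does not close the gap either: it is a plan whose central step is precisely the part you concede you cannot yet control. The crux is your claimed extension of Lemma~\ref{Lem:transposition} to correlated inputs with a correction term $\v{C}_{ij}(\v{Q})$, followed by the inductive assertion that the correlations accumulated by $\widetilde{\mathcal{P}}^{\Pi'}$ always have a sign that subtracts from the operator $\v\Delta$ of Eq.~\eqref{eq:Lambda_correction_op}. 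Nothing in the proposal establishes this sign control, and the paper's own numerics show it cannot hold in the naive form you describe: for permutations that are single $\beta$-$k$-cycles there is \emph{no} advantage of $\widetilde{\mathcal{P}}^{\Pi}$ over $\mathcal{P}^{\Pi}$ (the correction must vanish), the advantage appears only for compositions of $\beta$-cycles and grows with their number, and even then the improvement is only a multiplicative constant on the same $O(N^{-1/2})$ scaling. Any rigorous argument must therefore predict exactly when $\v{C}_{i_mj_m}$ vanishes and when it helps, which your ``correlations align with the next transposition'' heuristic does not do; a componentwise bound $|(\tilde{\v q}-\Pi\v p)_k|\le|(\v q-\Pi\v p)_k|$ at leading order is a much stronger statement than what the evidence supports and would need separate justification.

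Two smaller points. First, your base case is fine ($m=1$ gives $\widetilde{\mathcal{P}}^{\Pi}=\mathcal{P}^{\Pi}$ by construction), but the induction step also needs a closed-form description of $\widetilde{\mathcal{P}}^{(ij)}$ acting on a genuinely correlated joint state; the recurrences in the paper (Eqs.~\eqref{Eq:recurrence-relation-entries_bj}--\eqref{Eq:cj_entries}) are derived only for product inputs $\v p\otimes\v\eta_M$, and extending them is a substantial computation you have not carried out, not a routine generalisation. Second, your fallback — weakening to an asymptotic statement or verifying numerically — is essentially what the thesis already does, so it would not constitute progress beyond the existing conjecture. If you want to pursue this, the most promising concrete target suggested by the paper's data is to first prove the \emph{equality} case for a single $\beta$-cycle (i.e., that intermediate thermalisations change nothing there), since that would pin down the structure of the correlation term and clarify what distinguishes compositions of cycles, where the strict advantage actually appears.
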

The conjecture is solidified by strong numerical evidence (see Fig.~\ref{fig:beta0_convergence} for an example considering $d = 6$). We note that the convergence is better, but the overall character of $O(N^{-1/2})$ is still preserved. More specifically, we observe that for permutations given by $\beta$-$k$-cycles with $k \leq d$, there is no advantage to removing the intermediate thermalisations (i.e., no advantage of $\widetilde{\P}^\Pi$ over ${\P}^\Pi$). The advantage already appears for a composition of $\beta$-$d$-cycle with $\beta$-$(d-1)$-cycle, leading to the $\beta$-order $(d, d-1,1,\hdots,d-2)$ (here, without loss of generality, we assumed that the initial $\beta$-order is given by $(1,2,\dots,d)$). In general, the advantage grows with the number of composed $\beta$-cycles (see Fig.~\ref{fig:beta0_convergence}, where different colours and markers correspond to different length compositions of $\beta$-cycles). In particular, we verified that for a permutation $(16)(25)(34)$, which is composed of $\beta$-cycles of length $6$ through $2$ (or $15 = \binom{6}{2}$ neighbour transpositions), both $\mathcal{P}^{\Pi}$ and $\widetilde{\mathcal{P}}^{\Pi}$ converge to the actual extreme point $\Pi \v{p}$. Surprisingly, we find that all the other possible permutations fall within the convergence advantage class of one of the aforementioned $\beta$-cycle compositions. This includes, in particular, the cases when the last $\beta$-cycle in the sequence is incomplete, i.e., it is shortened by the final subsequence of $\beta$-swaps of any length.
\begin{figure}[t]
    \centering \includegraphics[width=10.765cm]{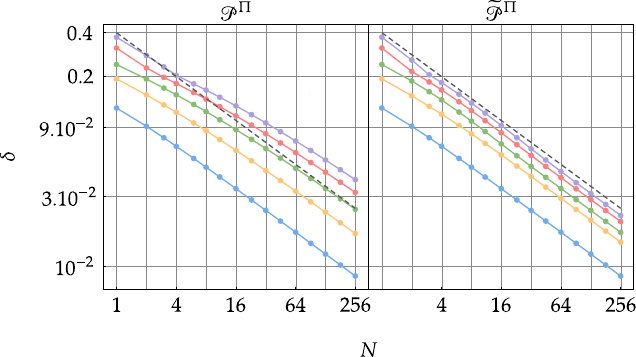}
    \caption{\emph{Convergence rates at infinite temperature}. Log-log plot of the total variation distance $\delta$ between the extreme points \mbox{$\v{p}^\pi\in \T_+^{\mathrm{(TO)}}(\v{p})$} and the states obtained from $\v{p}$ via the algorithm $\mathcal{P}^{\Pi}$ (left panel) and $\widetilde{\mathcal{P}}^{ \Pi}$ (right panel), as a function of the memory size~$N$. Here, \mbox{$\v p = \qty(0.37, 0.24, 0.16, 0.11, 0.07, 0.05)$}, $\beta = 0$, and different colours correspond to families of extreme points $\v{p}^\pi$ with different convergence rates (from bottom to top $\v{p}^\pi$ is obtained from $\v{p}$ via a $\beta$-6-cycle, a composition of a $\beta$-6-cycle with a $\beta$-5-cycle, and so on). All convergences behave as $O(N^{-1/2})$, which can be seen by the comparison with the function $0.4/\sqrt{N}$ (dashed black line), with multiplicative advantage for $\widetilde{\mathcal{P}}^{ \Pi}$ over $\mathcal{P}^{ \Pi}$. }
    \label{fig:beta0_convergence}
\end{figure}

\subsection[Achieving extreme points of the future thermal cone for finite temperatures]{\texorpdfstring{Achieving extreme points of $\T_+^{\mathrm{TO}}$ for $\beta\neq 0$}{Achieving extreme points of the future thermal cone for finite temperatures}}

Our second main result concerns the power of memory-assisted Markovian thermal processes at finite temperatures. 
We start with the following generalisation of Lemma~\ref{Lem:transposition}, the proof of which can be found in Section~\ref{app:neighbour}.  

\begin{theorem}[Memory-assisted $\beta$-swap]
\label{Thm:beta-swap} 
    For a finite temperature, $\beta\neq 0$, and for an $N$-dimensional memory described by a trivial Hamiltonian (so that its thermal state is $\v{\eta}_M$), $\Pi_{ij}^\beta$ can be approximated by the MeMTP protocol $\P^{(ij)}$ as follows:    
    \begin{equation}
        \P^{(ij)} (\v{p}\otimes \v \eta_M)=\v{q}\otimes \v \eta_M,
    \end{equation}
    with
    \begin{equation}
        \delta(\v q,\Pi_{ij}^\beta \v{p}) = \frac{\qty(4\Gamma_i\Gamma_j)^N}{\qty(\Gamma_i - \Gamma_j)^2}\qty[\frac{\abs{p_i \Gamma_j - p_j\Gamma_i}}{(N+1)\sqrt{\pi N}} + o\qty(N^{-3/2})]
    \end{equation}
    where we have used $\Gamma_i = \gamma_i/(\gamma_i + \gamma_j)$ and likewise for $\Gamma_j$.
\end{theorem}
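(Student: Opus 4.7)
The first step will be to exploit the fact that the protocol $\mathcal{P}^{(ij)}$ acts non-trivially only on the $2N$ entries of $\v{p}\otimes\v{\eta}_M$ that correspond to the $i$-th and $j$-th energy levels of the main system. All other populations are preserved, and because the protocol ends with the full memory thermalisation $\T$, the output state factorises as $\v{q}\otimes\v{\eta}_M$. So the problem reduces to determining the two numbers $q_i, q_j$ obtained from the initial values $p_i/N, p_j/N$ populating the $i$-block and $j$-block, respectively. By linearity of the composition of two-level thermalisations, these numbers take the form $q_i = \alpha_N p_i + \beta_N p_j$ and $q_j = (1-\alpha_N)p_i + (1-\beta_N)p_j$ for some coefficients depending only on $N$ and on $\Gamma_i, \Gamma_j$. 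The target values, which would realise the $\beta$-swap $\Pi_{ij}^\beta$ exactly, are $\alpha_{\infty} = 1 - \Gamma_j/\Gamma_i$ and $\beta_{\infty}=1$, and the total variation distance to $\Pi_{ij}^\beta\v{p}$ equals $|p_i\Gamma_j - p_j\Gamma_i|\cdot |1-\beta_N|/\Gamma_i$ after a short computation, so everything is controlled by the single quantity $1-\beta_N$.

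The next step is to solve the recursion generated by the $N$ rounds $\mathcal{R}_k^{(ij)}$. Writing $x_l$ for the value that replaces the $k$-th entry of the $i$-block after the $l$-th step of round $k$, and $y_l^{(k-1)}, y_l^{(k)}$ for the $l$-th entry of the $j$-block before and after round $k$, the two-level thermalisation rule gives
\begin{equation}
x_l = \Gamma_i\bigl(x_{l-1}+y_l^{(k-1)}\bigr), \qquad y_l^{(k)} = \Gamma_j\bigl(x_{l-1}+y_l^{(k-1)}\bigr),
\end{equation}
with $x_0 = p_i/N$. Unrolling the $x_l$ recursion expresses $x_N$ and $y_l^{(k)}$ as linear combinations of $\Gamma_i^{a}\Gamma_j^{b}$ acting on the previous round's block values, and iterating over the $N$ rounds yields a closed-form expansion. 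The key observation to extract is that the coefficient of $(p_j/N)$ (respectively $(p_i/N)$) contributing to $q_j$ after all $N$ rounds sums to $1 - \beta_N$ (respectively $1-\alpha_N$), and that this coefficient is, up to the prefactor $\Gamma_j/(\Gamma_i-\Gamma_j)$ coming from the telescoping of a geometric series, equal to a lattice-path weight: the sum over paths on an $N\times N$ grid that stay on one side of the diagonal, with each step weighted by $\Gamma_i$ or $\Gamma_j$. This is precisely a Catalan-number-type sum, producing
\begin{equation}
1-\beta_N \,=\, \frac{(4\Gamma_i\Gamma_j)^N}{(\Gamma_i-\Gamma_j)^2}\cdot\frac{1}{(N+1)\sqrt{\pi N}}\bigl[1+o(1)\bigr],
\end{equation}
after applying Stirling's formula to the central binomial coefficient $\binom{2N}{N}$ appearing through $C_N = \binom{2N}{N}/(N+1)$. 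Multiplying by the prefactor $|p_i\Gamma_j - p_j\Gamma_i|/\Gamma_i$ identified above yields exactly the bound claimed in the theorem.

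The main obstacle is the second paragraph above: organising the double recursion so that the closed-form coefficient of $p_j$ in $q_j$ can actually be recognised as a weighted Catalan sum. The challenge is that, within each round, the $x_l$ recursion is a first-order inhomogeneous linear recursion whose inhomogeneous term $y_l^{(k-1)}$ is itself defined by the analogous recursion from the previous round, so the resulting expression is a nested sum over a growing number of indices. One way to handle this is via a generating-function argument: introducing $X_k(z) = \sum_l x_l z^l$ and $Y_k(z) = \sum_l y_l^{(k)} z^l$ and solving the round-to-round map produces a rational transfer function whose iteration yields $(1 - \sqrt{1 - 4\Gamma_i\Gamma_j z^2})/(2\Gamma_i z)$ as its generating kernel, which is exactly the Catalan generating function and makes the appearance of $(4\Gamma_i\Gamma_j)^N/(N+1)$ transparent. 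Once this combinatorial identification is secured, the remaining asymptotic step is routine Stirling, and the $\beta=0$ degenerate case $\Gamma_i=\Gamma_j=\tfrac{1}{2}$ — where the prefactor $(\Gamma_i-\Gamma_j)^{-2}$ blows up — is exactly the singular limit already handled separately by Lemma~\ref{Lem:transposition}.
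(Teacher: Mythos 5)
Your reduction of the problem to a single scalar $1-\beta_N$ via Gibbs preservation is a genuine simplification over the paper, which separately tracks two deviation quantities $\mathbb{E}$ and $\mathbb{F}$ and only observes a posteriori that they combine into the claimed form. Since the full protocol is a composition of Gibbs-preserving maps followed by tracing out a thermally initialised memory, the effective $2\times 2$ map on the $(i,j)$ sector is itself Gibbs-preserving, and your one-parameter family is exactly right; the identity $\delta(\v{q},\Pi_{ij}^\beta\v{p})=\frac{|1-\beta_N|}{\Gamma_i}\,|p_i\Gamma_j-p_j\Gamma_i|$ checks out. That observation is worth keeping.

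However, there is a concrete error and a genuine gap. The error: your displayed asymptotic for $1-\beta_N$ is missing a factor of $\Gamma_i$. Plugging the paper's exact recursion in (this is its $\mathbb{E}$, Eq.~\eqref{Eq:error_function1}) one finds $1-\beta_N \simeq (4\Gamma_i\Gamma_j)^N\,\Gamma_i/\big[(\Gamma_i-\Gamma_j)^2(N+1)\sqrt{\pi N}\big]$, and that extra $\Gamma_i$ is precisely what cancels the $1/\Gamma_i$ in your prefactor to reproduce the theorem; with your stated formula the final answer is off by $1/\Gamma_i$, which indicates the calculation was never actually carried through. The gap is in the second paragraph. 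The generating-function transfer operator for the round-to-round map is the rational kernel $\Gamma_j/(1-\Gamma_i z)$ — not the Catalan kernel $(1-\sqrt{1-4\Gamma_i\Gamma_j z^2})/(2\Gamma_i z)$ — and the Catalan/square-root structure only enters through the \emph{partial} sums $\sum_{l\le N}$ on the truncated $N$-entry memory, which a formal $z$-series over all $l$ does not see. More importantly, the asymptotic $\propto(4\Gamma_i\Gamma_j)^N/N^{3/2}$ with prefactor $(\Gamma_i-\Gamma_j)^{-2}$ does not fall directly out of a single central-binomial term: the paper writes $\mathbb{E}=I_{\Gamma_j}(N,N)-\frac{\Gamma_i}{\Gamma_j}I_{\Gamma_j}(N+1,N-1)$, and after rewriting via $I_x(a,a)=\tfrac{1}{2}I_{4x(1-x)}(a,\tfrac{1}{2})$, the leading $O((4\Gamma_i\Gamma_j)^N/\sqrt{N})$ pieces of the two halves exactly cancel; the surviving contribution is the next term in the expansion of $I_x(N,\tfrac{1}{2})$, and that is the sole source of both the extra $1/(N+1)$ and the $(\Gamma_i-\Gamma_j)^{-2}$. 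A naive lattice-path/ballot count gives the right exponential and $N^{-3/2}$ scaling, but you would still need to establish a combinatorial identity that encodes this two-term cancellation; asserting "recognise it as a weighted Catalan sum" is the step that needs to be proved, and it is where the actual work of the theorem lies. Until that identity is exhibited the proof is incomplete.
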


By using the above theorem, one can approximate with arbitrary precision a total of $F(d+1)$ extreme points achievable by a composition of non-overlapping $\beta$-swaps (recall that $F(k)$ is the $k$-th Fibonacci number). However, since $F(d+1)\leq d!$ for $d\geq 3$, not all extreme points of $\T_+^{\mathrm{TO}}(\v{p})$ can be obtained this way. Nevertheless, we conjecture that using MeMTP protocols $\widetilde{\P}^\Pi$ that are composed of blocks imitating $\beta$-swaps, just without intermediate thermalisations, one can reach all the extreme points of $\T_+^{\mathrm{TO}}(\v{p})$.

\begin{conjecture}[Extreme points of $\T_+^{\mathrm{TO}}$]\label{Conj:beta_permutations}
    Consider a state $\v{p}$ and the extreme point of $\T_+^{\mathrm{TO}}(\v{p})$ given by $\v{p}^{\pi'}$, with matrix representations of $\beta$-orders of these states satisfying \mbox{$\Pi'=\Pi\Pi_{\v{p}}$} for some permutation $\Pi$. Then, for a finite temperature, $\beta\neq 0$, and for an $N$-dimensional memory described by a trivial Hamiltonian (so that its thermal state is $\v{\eta}_M$), the MeMTP protocol $\widetilde{\P}^{\Pi}$ acts as
    \begin{equation}
        \widetilde{\P}^{\Pi}(\v{p}\otimes \v{\eta}_M)=\v{q}\otimes \v{\eta}_M,
    \end{equation}
    with
    \begin{equation} \label{eq:beta_nonzero_conv}
        \delta(\v{q},\v{p}^{\pi'})\overset{N\rightarrow\infty}{\longrightarrow} 0.
    \end{equation}
\end{conjecture}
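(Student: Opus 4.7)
The plan is to establish the conjecture by induction on the number $m$ of neighbour transpositions in the prescribed decomposition $\Pi = \Pi_{i_m j_m} \cdots \Pi_{i_1 j_1}$. The base case $m=1$ coincides with Theorem~\ref{Thm:beta-swap} applied to $\P^{(i_1 j_1)} = \T \circ \widetilde{\P}^{(i_1 j_1)}$, which yields $\delta\bigl(\v q, \Pi^\beta_{i_1 j_1} \v p\bigr) = O\!\bigl((4\Gamma_{i_1}\Gamma_{j_1})^N/N^{3/2}\bigr)$. Since $\Gamma_i + \Gamma_j = 1$ implies $4\Gamma_i\Gamma_j = 1 - (\Gamma_i - \Gamma_j)^2 \leq 1$, with equality only at the degenerate or infinite-temperature limit, this error decays exponentially in $N$ for any non-trivial $\beta$ and non-degenerate spectrum, which is the regime of interest.

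For the inductive step, write $\v R_k$ for the joint state after applying $\widetilde{\P}^{(i_k j_k)} \circ \cdots \circ \widetilde{\P}^{(i_1 j_1)}$ to $\v p \otimes \v \eta_M$, and let $\v p_k := \Pi^\beta_{i_k j_k} \cdots \Pi^\beta_{i_1 j_1}\v p$ denote the idealised intermediate system state. The objects to control are the marginal $\mathrm{Tr}_M \v R_k$ and the conditional memory distributions $\v R_k^{(E_i)}$ given the system sits at energy $E_i$. By Theorem~\ref{Thm:beta-swap}, the state $\v R_1$ has both marginal and each conditional exponentially close (in $N$) to $\v p_1$ and $\v \eta_M$ respectively on the active sectors. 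The strategy is then to show that each subsequent block $\widetilde{\P}^{(i_{k+1} j_{k+1})}$, applied to a joint state whose conditional memory distributions are within exponentially small total variation distance of uniform, produces a new joint state that (i) has the correct marginal $\v p_{k+1}$ up to additive error of the same order, and (ii) preserves the near-uniformity of the conditional memory distributions for use in the next block. Step (i) is extracted from the round-by-round analysis of Theorem~\ref{Thm:beta-swap} in Section~\ref{app:neighbour}, which tracks mass flow between the two active two-level sectors; linearity together with contractivity of stochastic maps under total variation distance yields a per-step increment $\delta(\mathrm{Tr}_M \v R_{k+1}, \v p_{k+1}) \leq c_{k+1} (4\Gamma_{i_{k+1}}\Gamma_{j_{k+1}})^N/\sqrt{N} + \delta(\mathrm{Tr}_M \v R_k, \v p_k)$ for some constant $c_{k+1}$ depending only on $\v p$ and $\v \gamma$. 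Summing the $m \leq d(d-1)/2$ contributions and applying the final thermalisation $\T$ (which acts only on the memory and thus preserves the system marginal) yields Eq.~\eqref{eq:beta_nonzero_conv}.

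The main obstacle lies in step (ii), namely verifying that near-uniformity of the conditional memory distributions is preserved across blocks whose active level sets overlap. This is the generic situation whenever $\pi'$ cannot be reached by non-overlapping $\beta$-swaps alone, e.g., for $\beta$-$k$-cycles with $k \geq 3$, since then $\widetilde{\P}^{(i_{k+1} j_{k+1})}$ must act on a memory sector already touched by the previous block. The analysis in Section~\ref{app:neighbour} exploits the tensor product form $\v p \otimes \v \eta_M$ of the initial state in an essential way, so one must extend it to inputs of the form $\v p' \otimes \v \eta_M + \v E$ with $\|\v E\|_1$ exponentially small and $\v E$ supported on correlated configurations. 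A promising direction is to show that the round structure of $\widetilde{\P}^{(ij)}$ realises an approximate $\beta$-swap on each conditional branch of the memory independently, reducing the perturbation analysis to a branchwise application of Theorem~\ref{Thm:beta-swap}. Should this direct approach prove intractable, a weaker route still suffices: establish the monotonic decay in $N$ of $\delta(\v q, \v p^{\pi'})$ observed numerically in Fig.~\ref{fig:beta0_convergence}, combine it with the thermomajorisation constraint $\v p \succ_\beta \v q$ inherited from the fact that $\widetilde{\P}^\Pi$ is a thermal operation, and then identify the unique limit point with the correct $\beta$-order $\pi'$ as necessarily being the extreme point $\v p^{\pi'}$.
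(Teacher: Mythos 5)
You are attempting to prove a statement that the paper itself presents and leaves as a \emph{conjecture}. The paper offers no proof of the general statement; its only rigorous evidence is Theorem~\ref{Thm:beta-3-cycle}, which handles the special case of a $\beta$-$3$-cycle by a direct closed-form computation of the entries of the joint state after $\widetilde{\P}^{(13)}\circ\widetilde{\P}^{(23)}$ and an explicit evaluation of the resulting limits using the regularised incomplete beta function (Section~\ref{app:beta-3-cycle}). Beyond that, the paper relies on numerical simulations. So there is no ``paper's proof'' to match your proposal against, and nothing in the paper suggests the authors know how to close the general case.

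Your proposed induction correctly identifies the essential obstacle: once the first truncated block $\widetilde{\P}^{(i_1 j_1)}$ has acted, the joint state is no longer of product form $\v p'\otimes\v\eta_M$, and the round-by-round analysis of Section~\ref{app:neighbour} leans on that product structure to derive the recurrences~\eqref{Eq:recurrence-relation-entries_bj}--\eqref{Eq:recurrence-relation-entries_cj}. But you do not resolve this obstacle. The ``promising direction'' of showing that $\widetilde{\P}^{(ij)}$ realises an approximate $\beta$-swap branchwise on the conditional memory distributions is an assertion, not an argument; it is precisely where the paper's approach for the $3$-cycle case has to do nontrivial work (tracking the doubly-indexed recurrence for $c_k^{(N)}$ through two successive blocks with a shared memory sector), and nothing in your sketch delivers the required control of the correlated error term $\v E$. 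Moreover, your claimed per-step error increment $\delta(\mathrm{Tr}_M\v R_{k+1},\v p_{k+1})\le c_{k+1}(4\Gamma_{i_{k+1}}\Gamma_{j_{k+1}})^N/\sqrt N + \delta(\mathrm{Tr}_M\v R_k,\v p_k)$ already presupposes that Theorem~\ref{Thm:beta-swap} applies to $\v R_k$ as if it were a product state, which is exactly what remains to be shown. The fallback route also has a gap: you would need to prove monotone decay of $\delta(\v q,\v p^{\pi'})$ in $N$ (only observed numerically), and even granting monotonicity the sequence could converge to a strictly positive infimum; the thermomajorisation constraint $\v p\succ_\beta\v q$ together with a fixed $\beta$-order of $\v q$ does not single out $\v p^{\pi'}$ as the only admissible accumulation point, since any state in $\T_+^{\mathrm{TO}}(\v p)$ with that $\beta$-order is compatible with both constraints.
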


The conjecture is solidified by the following two evidences. First, we can actually prove it in the following special case.

\begin{theorem}[Memory-assisted $\beta$-3-cycle]
\label{Thm:beta-3-cycle} 

Consider a state~$\v{p}$ with entries $i_1,i_2,i_3$ being neighbours in the $\beta$-order (i.e., $\pi_{\v{p}}(i_1)=\pi_{\v{p}}(i_2)+1=\pi_{\v{p}}(i_3)+2$), and the extreme point of $\T_+^{\mathrm{TO}}(\v{p})$ given by $\v{p}^{\pi'}$, with matrix representations of $\beta$-orders of these states satisfying \mbox{$\Pi'=\Pi\Pi_{\v{p}}$} for $\Pi=\Pi_{i_1i_3}\Pi_{i_2i_3}$. Then, for a finite temperature, $\beta\neq 0$, and for an $N$-dimensional memory described by a trivial Hamiltonian (so that its thermal state is $\v{\eta}_M$), the MeMTP protocol $\widetilde{\P}^{\Pi}$ acts as
\begin{equation}
    \widetilde{\P}^{\Pi}(\v{p}\otimes \v{\eta}_M)= \v{q}\otimes \v{\eta}_M
\end{equation}
with
\begin{equation}
    \delta(\v{q},\v{p}^{\pi'})\overset{N\rightarrow\infty}{\rightarrow} 0.
\end{equation}
\end{theorem}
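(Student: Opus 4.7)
\textit{Plan.} The strategy is to reduce the proof to two sequential applications of Theorem~\ref{Thm:beta-swap}, treating the two $\beta$-swaps in $\Pi = \Pi_{i_1 i_3}\Pi_{i_2 i_3}$ one at a time while carefully controlling the system--memory correlations built up between them. First, one checks that the decomposition is geometrically consistent: since $i_2, i_3$ are neighbours in the $\beta$-order of $\v{p}$, the first $\beta$-swap produces $\v{p}' := \Pi^\beta_{i_2 i_3}\v{p}$ whose $\beta$-order places $i_3$ in the middle slot, making $i_1$ and $i_3$ neighbours; hence $\Pi^\beta_{i_1 i_3}\v{p}' = \v{p}^{\pi'}$, and both inner blocks $\widetilde{\P}^{(i_2 i_3)}$ and $\widetilde{\P}^{(i_1 i_3)}$ appearing in $\widetilde{\P}^\Pi$ use only neighbour thermalisations, so they are legitimate MeMTPs.

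Next, denote $\v{Q}_1 := \widetilde{\P}^{(i_2 i_3)}(\v{p}\otimes\v{\eta}_M)$ and let $\v{q}_1$ be its marginal on the system, so that $\TT(\v{Q}_1) = \v{q}_1 \otimes \v{\eta}_M$. Inserting and subtracting an intermediate thermalisation and using $\P^{(i_1 i_3)} = \TT\circ\widetilde{\P}^{(i_1 i_3)}$ together with contractivity of $\TT$ and $\widetilde{\P}^{(i_1 i_3)}$ under total variation distance, the triangle inequality yields
\begin{equation*}
\delta\bigl(\widetilde{\P}^\Pi(\v{p}\otimes\v{\eta}_M),\, \v{p}^{\pi'}\otimes\v{\eta}_M\bigr) \leq \delta\bigl(\v{Q}_1,\, \TT(\v{Q}_1)\bigr) + \delta\bigl(\P^{(i_1 i_3)}(\v{q}_1\otimes\v{\eta}_M),\, \v{p}^{\pi'}\otimes\v{\eta}_M\bigr).
\end{equation*}
The second term is dispatched by two uses of Theorem~\ref{Thm:beta-swap}: one bounds $\delta(\v{q}_1,\v{p}')$ directly, and another -- after a further triangle inequality and contractivity of $\P^{(i_1 i_3)}$ -- bounds $\delta(\P^{(i_1 i_3)}(\v{p}'\otimes\v{\eta}_M),\,\v{p}^{\pi'}\otimes\v{\eta}_M)$. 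Both contributions decay exponentially in $N$ with rates of the form $(4\Gamma\Gamma)^N$ and the $\Gamma$'s as defined in Theorem~\ref{Thm:beta-swap} for the relevant pairs.

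The main obstacle is the remaining term $\delta(\v{Q}_1,\TT(\v{Q}_1))$, which quantifies the system--memory correlations created by $\widetilde{\P}^{(i_2 i_3)}$ and is not directly controlled by Theorem~\ref{Thm:beta-swap}, whose statement only bounds the system marginal after an extra thermalisation. The plan to close this gap is to strengthen the analysis underlying Theorem~\ref{Thm:beta-swap} from the marginal to the joint level. Two simplifications make this tractable: (i)~$\widetilde{\P}^{(i_2 i_3)}$ leaves the $i_1$-block of the joint distribution untouched, so its $N$ entries remain $p_{i_1}/N$ uniformly and the problem reduces to the $2N$-dimensional subspace spanned by levels $(i_2, i_3)$; (ii) on this subspace, the cascade of two-level thermalisations underlying Theorem~\ref{Thm:beta-swap} can be tracked in closed form rather than projected onto the system, yielding $\delta(\v{Q}_1,\v{p}'\otimes\v{\eta}_M) = O\bigl((4\Gamma_{i_2}\Gamma_{i_3})^N\bigr)$ up to polynomial prefactors. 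Combining this with $\delta(\TT(\v{Q}_1),\v{p}'\otimes\v{\eta}_M)$ of the same order (from Theorem~\ref{Thm:beta-swap} itself) via the triangle inequality bounds $\delta(\v{Q}_1,\TT(\v{Q}_1))$ at the same exponential rate, closing the argument. The overall convergence rate of $\widetilde{\P}^\Pi$ is then $\max\{(4\Gamma_{i_2}\Gamma_{i_3})^N,\,(4\Gamma_{i_1}\Gamma_{i_3})^N\}$ up to polynomial factors, and tends to zero whenever the corresponding Gibbs weights differ; degenerate subcases where two of the involved weights coincide reduce to the infinite-temperature transposition already covered by Lemma~\ref{Lem:transposition}.
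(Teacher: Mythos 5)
The geometric consistency check and the initial triangle-inequality reduction are sound, but the plan founders on the term $\delta(\v{Q}_1,\T(\v{Q}_1))$: your claim in step~(ii) that the joint state $\v{Q}_1=\widetilde{\P}^{(i_2i_3)}(\v{p}\otimes\v{\eta}_M)$ is exponentially close in total variation to the product $\v{p}'\otimes\v{\eta}_M$ is false. The closed-form tracking you propose is precisely what is done in Eqs.~\eqref{Eq:bj_entries}--\eqref{Eq:cj_entries}, and it contradicts your claim: for the pure input $b=1$, $c=0$, one obtains $b_j^{(N)}=1-I_{\Gamma_{21}}(j,N)$, which is a sharp step in~$j$, equal to $\Gamma_{12}^N\approx 0$ at $j=1$ and approaching~$1$ as $j\to N$, with a transition near $j\approx N\Gamma_{21}/\Gamma_{12}$. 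The entries are therefore not concentrated around the marginal value $1-\Gamma_{21}/\Gamma_{12}$; only their \emph{average} converges to it (which is exactly what Theorem~\ref{Thm:beta-swap} proves, and nothing more), while the spread $\frac{1}{N}\sum_j\abs{b_j^{(N)}-\bar b}$ stays $\Theta(1)$ for every $\beta\neq 0$. Hence $\delta(\v{Q}_1,\T(\v{Q}_1))=\Theta(1)$ and your final bound reads $\delta\leq\Theta(1)+o(1)$, which is vacuous.

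The obstruction is structural, not technical: $\widetilde{\P}^\Pi$ and $\P^\Pi$ are genuinely different channels at every finite $N$ (that difference is the subject of Conjecture~\ref{Conj:convergence}), so any argument that inserts an intermediate $\T$ and then applies the marginal-level Theorem~\ref{Thm:beta-swap} block-by-block inherits an irreducible $\Theta(1)$ correlation cost from the thermalisation it inserted. The paper's proof never projects to the marginal between blocks: it derives recurrence relations for all $3N$ entries of the joint state after the full composition $\widetilde{\P}^{(13)}\circ\widetilde{\P}^{(23)}$, with the correlated output $c_k$ of the first block fed directly into the recurrence of the second, rewrites the resulting nested binomial sums in terms of regularised incomplete beta functions, and evaluates the asymptotics of the two averages $\frac{1}{N}\sum_i b_i^{(N)}\to c+b\qty(1-\Gamma_{32}/\Gamma_{23})$ and $\frac{1}{N}\sum_i c_i^{(N)}\to a\,\Gamma_{31}/\Gamma_{13}$ directly. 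That end-to-end tracking of the correlated joint state is not an optional refinement; it is the content of the proof, and it cannot be replaced by Theorem~\ref{Thm:beta-swap} plus a triangle inequality.
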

The proof of the above theorem can be found in Section~\ref{app:beta-3-cycle}, and potentially the same proving techniques can be applied to higher-order cycles. This would then provide a general method for simulating $\beta$-cycles, as well as any combinations of non-overlapping $\beta$-cycles, with arbitrary precision through MeMTPs. 

The second evidence is based on extensive numerical simulations. Let us consider any state $\v{p}$ and a set of permutations defined by a recurrence formula
\begin{equation}
    \widetilde{\Pi}_{i j} := \qty(\prod_{k=1}^{j} \Pi_{\pi_{\v p}(k),\, \pi_{\v p}(k+1)})\widetilde{\Pi}_{i-1,\,d+1-i}
\end{equation}
with $1 \leq j \leq d-i$ and assuming the starting condition $\widetilde{\Pi}_{0d} = \mathbbm{1}$. Note that $\widetilde{\Pi}_{1,d-1}$ represents a full $\beta$-$d$-cycle, $\widetilde{\Pi}_{i,d+1-i}$ a composition of $i$ $\beta$-cycles of length from $d$ to $d+1-i$, and finally $\widetilde{\Pi}_{d,1}$ is a permutation which fully reverses the $\beta$-order of $\v p$. For each such permutation, we have considered the action of the protocol $\widetilde{\mathcal{P}}^{\widetilde{\Pi}_{ij}}\qty(\v p)$ and its convergence to the respective extreme point~$\v{p}^{\pi}$ with $\Pi=\widetilde{\Pi}_{ij} \Pi_{\v{p}}$ (recall that $\Pi$ is a matrix representation of $\pi$). In each case, we have observed the convergence of the form from Eq.~\eqref{eq:beta_nonzero_conv} that is better than $N^{-1/2}$. Results for an exemplary state in dimension $d = 6$ are presented in Fig.~\ref{fig:beta_nonzero_convergence}, where a total of $15$ different curves are shown to lie below the $N^{-1/2}$ limit and diverging from it.

Finally, based on Theorem~\ref{Thm:beta-swap}, the proof of Theorem~\ref{Thm:beta-3-cycle}, and numerical evidence, one can reasonably strengthen Conjecture~\ref{Conj:beta_permutations} to make the following statement on the convergence:
\begin{equation}
    \delta(\v{q},\v{p}^{\pi'})=O\qty(\frac{e^{-A({\Pi}) N}}{N^{3/2}}),
\end{equation}
where $A(\Pi) = O(1)$ is a permutation-dependent exponent. 

\begin{figure}[t]
    \centering
    \includegraphics[width=10.478cm]{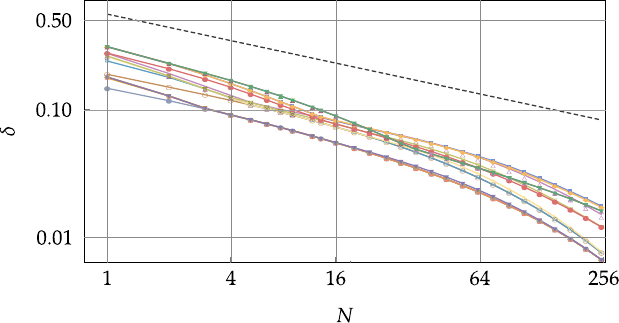}
    \caption{\emph{Convergence rates at finite temperature}. Log-log plot of the total variation distance $\delta$ between the states obtained from $\v{p}$ via the algorithm $\widetilde{\mathcal{P}}^{ \widetilde{\Pi}_{ij}}$ and the corresponding extreme points \mbox{$\v{p}^\pi\in \T_+^{\mathrm{(TO)}}(\v{p})$}, as a function of the memory size $N$. Here, \mbox{$\v p = \qty(0.37, 0.24, 0.16, 0.11, 0.07, 0.05)$}, $\beta = 0.1$, and different colours correspond to extreme points $\v{p}^\pi$ with matrix representation of the $\beta$-order $\pi$ given by $\Pi=\widetilde{\Pi}_{ij}\Pi_{\v{p}}$.
    For all curves, the convergence is better than $O(N^{-1/2})$, as can be seen by the comparison with the limiting line $1/\sqrt{\pi N}$ for $\beta = 0$ (dot-dashed black line).}
    \label{fig:beta_nonzero_convergence}
\end{figure}

\section{Discussion and applications}
\label{sec:discussion}

In Section~\ref{sec:bridging}, we demonstrated a method of achieving an arbitrary state from the future cone of TO using MeMTPs through MTP operations acting upon the system extended by memory, initiated in the thermal state $\gamma$. In the following sections, we will apply our protocol to study information-based quantum thermodynamic processes, such as work extraction and cooling. Next, we revisit the question of the sufficiency of two-level control for TOs. Finally, we provide a brief discussion of the behaviour of the free energy and correlations with the progression of our protocol. This sheds light on how non-Markovian effects arise in the memory-assisted protocol.


\subsection{Work extraction}

\begin{figure}[t]
    \centering
    \includegraphics{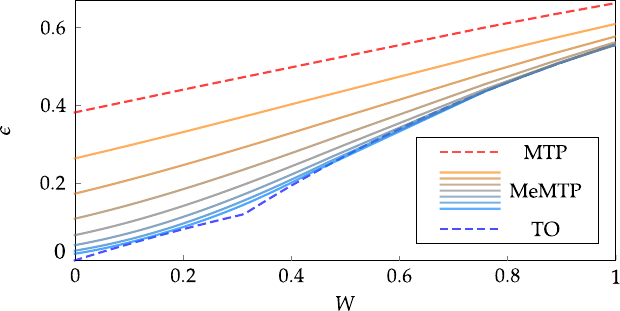}
    \caption{\emph{$\epsilon$-deterministic work extraction with MeMTPS}. Transformation error $\epsilon$ as a function of the work $W$ extracted from a two-level system with energy splitting $\Delta$ prepared in a thermal state at temperature $1/\beta_S$ smaller than the environmental temperature $1/\beta$ with parameters 
    \mbox{$\beta_S \Delta=2$} and \mbox{$\beta \Delta=1$}. System-environment interactions are modelled by TOs (dashed black curve),  MTPs (dashed red curve) and memory-assisted Markovian thermal process with a memory of size $2$, $4$ , $8$, $16$, $32$, $64$ and $128$, respectively.}
    \label{Fig:work-extractionMTP}
\end{figure}
We start our discussion with $\epsilon$-deterministic work extraction, which typically involves an out-of-equilibrium system~$S$, a thermal bath at inverse temperature $\beta$, and a battery $B$ initially in an energy eigenstate $E_0$~\cite{Aberg2013,horodecki2013fundamental}. The aim is to increase the energy of $B$ by an amount $W$ by exciting it from $E_0$ to $E_1=E_0+W$ with a success probability $1-\epsilon$. The optimal error $\epsilon$ for a given $W$ can be obtained via thermomajorisation condition for transformations given by thermal operations~\cite{horodecki2013fundamental} and through continuous thermomajorisation relations when transformations are given by Markovian thermal processes~\cite{Korzekwa2022}. Our framework allows one to interpolate between the two extremes by including a memory system with varying dimension $N$. 

Consider a two-level system $S$ and a two-level battery $B$ with energy levels $(0,\Delta)$ and $(0,W)$, respectively. Assume that the initial state of the joint system is given by $\v{p}_{SB} = \v{p}\otimes(1,0)$. One can then select the extreme point $\v{p}^{\pi'}_{SB}\in {\T}^{\mathrm{TO}}_+\qty(\v{p}_{SB})$ from the future thermal cone of the composite system for which the following relation is satisfied with the minimum value of $\epsilon_{\mathrm{TO}}$:
\begin{equation}
   \v{\gamma}\otimes(\epsilon_{\mathrm{TO}},1-\epsilon_{\mathrm{TO}}) \in 
   {\T}^{\mathrm{TO}}_+\qty(\v{p}^{\pi'}_{SB}).
\end{equation}
In other words, $\v{p}^{\pi'}_{SB}$ is an intermediate state from which one can achieve minimal error for extracting $W$ work from $\v{p}$ via any thermal operation. We can now define $\Pi$ as a permutation that maps the matrix representation of the initial $\beta$-order of $\v{p}_{SB}$ to the final $\beta$-order of $\v{p}_{SB}^{\pi'}$. Then, by using the algorithm $\widetilde{\mathcal{P}}^\Pi$, we can transform $\v{p}_{SB}$ into a state $\v{q}_{SB}$ that approximates~$\v{p}^{\pi'}_{SB}$. Finally, due to Eq.~\eqref{eq:same_beta_order}, we can use standard thermomajorisation to find the minimal value of $\epsilon_N$ for which the state $\v{q}_{SB}$ can be transformed to $\v{\gamma}\otimes(\epsilon_N,1-\epsilon_N)$ via MTPs. Note that $\epsilon_N$ then corresponds to the probability of failure of extracting work $W$ from $\v{p}$ using a memory of size $N$. Numerical simulations of this procedure (see Fig.~\ref{Fig:work-extractionMTP}) show that as $N$ grows, $\epsilon_N$ decreases, allowing us to conjecture that $\lim_{N\rightarrow\infty} \epsilon_N = \epsilon_{\mathrm{TO}}$. However, note that the convergence is not uniform: it is the slowest around $W = 0$ and the kink at $W=1/\beta \log (1+e^{-\beta \Delta})$. Nevertheless, Fig.~\ref{Fig:work-extractionMTP} clearly shows that even a small size memory can significantly improve the quality of the extracted work.
 
\subsection{Cooling a two-level system using a two-dimensional memory with nontrivial Hamiltonian}

\begin{figure}[t]
    \centering
    \includegraphics{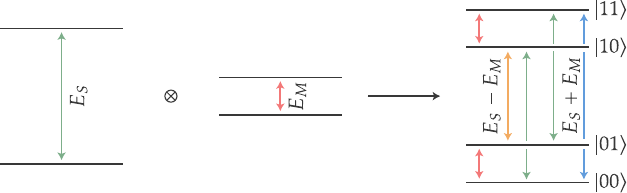}
     \caption{\emph{Energy level structure.} Schematic diagram of a two-level system, consisting of a main system and a memory with energy gaps $E_S$ and $E_M$, respectively. The energy gap of the composite system is such that it can be selectively coupled to the thermal bath. }
    \label{fig-cooling}
\end{figure}

As a second application of our findings, we consider the task of cooling a two-level system with the aid of a two-dimensional memory. The setup involves a two-level system with energy gap $E_S$, extended by a memory system with energy gap $E_M$. The joint system's energy level structure is depicted in Fig.~\ref{fig-cooling}.  We assume that the difference between energy gaps is such that it allows one to selectively couple with the bath, i.e., $E_S - E_M \neq E_M$. This enables us to separately address transitions $\ket{01}\leftrightarrow\ket{10}$, $\ket{00}\leftrightarrow\ket{11}$ together with two coupled pairs of the form $\ket{0i}\leftrightarrow\ket{1i}$ and $\ket{i0}\leftrightarrow\ket{i1}$. We will refer to these operations thermalise these levels as operation 1, 2, 3 and 4, respectively.


Let us now assume, for simplicity, that the system starts in an excited state extended by Gibbs memory, $\v{p}\otimes\v \gamma_M$ with $p_i = \delta_{i1}$. If we consider the main system alone with access only to MTPs, one can cool it down only to the ambient temperature. In this case, the system will reach thermal equilibrium, and the resulting distribution is given by
\begin{equation}
  \v \gamma_S = \qty(\frac{1}{1+e^{-\beta E_S}}, \frac{e^{-\beta E_S}}{1 + e^{-\beta E_S}}).  
\end{equation}
However, by implementing our protocol, which can be realised as a sequence of operations, $1\rightarrow2\rightarrow3\rightarrow4$ and discarding (thermalising) the memory, we arrive at \mbox{$\mathcal{P}(\v{p}\otimes\v \gamma_M) = \v{q}\otimes\v \gamma_M$} with
\begin{equation}
    \v{q} = \mqty[\frac{e^{\beta  E_M}+e^{\beta  \left(E_M+E_S\right)}+e^{\beta  \left(2 E_M+E_S\right)}+e^{\beta  \left(E_M+2 E_S\right)}+e^{\beta  E_S}}{\left(e^{\beta  E_S}+1\right) \left(e^{\beta 
   \left(E_M-E_S\right)}+1\right) \left(e^{\beta  \left(E_M+E_S\right)}+1\right)} \\
 \frac{e^{\beta  E_M}+e^{\beta  \left(2 E_M+E_S\right)}+e^{\beta  E_S}}{\left(e^{\beta  E_S}+1\right) \left(e^{\beta  E_M}+e^{\beta  E_S}\right) \left(e^{\beta 
   \left(E_M+E_S\right)}+1\right)}
    ].
\end{equation}
The distance of this state from the Gibbs state $\v{\gamma}_S$ at ambient temperature in terms of the 1-norm is given by
\begin{equation}
    \norm{\v{q} - \v \gamma_S}_1 = \frac{1}{\left(e^{-\beta  E_S}+1\right) \left[\cosh \left(\beta  E_M\right)+\cosh \left(\beta  E_S\right)\right]},
\end{equation}
which is positive for every non-zero value of $E_M$ and $E_S$. This means that despite non-triviality of the memory's spectrum, our simple memory-extended protocol achieves a cooling advantage over Markovian processes.


\subsection{Two-level control is sufficient for thermal operations}

In Ref.~\cite{Lostaglio2018elementarythermal} it has been proved that there exist thermodynamic state transformations that cannot be decomposed into the so-called \emph{elementary thermal operations}, i.e., thermal operations acting only on two levels of the system at the same time. Then, in Ref.~\cite{mazurek2018decomposability}, for any dimension $d$, an explicit final state $\v{q}\in \T_+^{\mathrm{TO}}(\v{p})$ was given such that it cannot be achieved (even approximately) starting from the ground state $\v{p}=(1,0,\dots,0)$ using convex combinations of sequences of elementary thermal operations. More precisely, given the energy spectrum of the system with $E_{i+1}\geq E_i$, this final state is given by
\begin{align}
    \label{eq:inaccessible}
    \v{q} & = \left(1 - \sum_{i=2}^{d}e^{-\beta E_i},    e^{-\beta E_2}, \dots, e^{-\beta E_{d}}\right)
\end{align}
with $\beta \geq \beta_{\text{crit}}$ such that $1 - \sum_{i=2}^{d}e^{-\beta_{\text{crit}} E_i} = 0$. It was then proven by the authors of Ref.~\cite{mazurek2018decomposability} that there exists $\epsilon>0$ such that any $\v{q}'$ achievable from $\v{p}$ satisfies $\delta(\v{q}, \v{q}') \geq \epsilon$. 

Given the above, one might conclude that being able to selectively couple to the bath just two energy levels at once is highly restrictive and does not allow one to induce all the transitions possible via general thermal operations. This conclusion, however, would be incorrect, as the restriction only arises when one is limited to coupling only two levels of the \emph{system} at a given time. When one is allowed to bring an auxiliary $N$-level system in a thermal equilibrium state $\gamma_M$, then the ability to selectively couple to the bath just two energy levels of the joint system allows one to induce all transitions of the main system possible via thermal operations as $N\to\infty$. Crucially, the operation 
\begin{equation}
    \E(\rho)=\rho\otimes \gamma_M
\end{equation}
is a thermal operation for every $N$. Thus, $\E$ followed by a sequence of elementary thermal operations on the joint system, followed by discarding the system $M$ at the end, can induce any energy-incoherent state transition of the system possible via general thermal operations. In other words, elementary control over two energy levels at a given time is sufficient to generate all thermodynamically possible transitions if we allow ancillary thermal systems.

\begin{figure}[t]
    \centering    
    \includegraphics{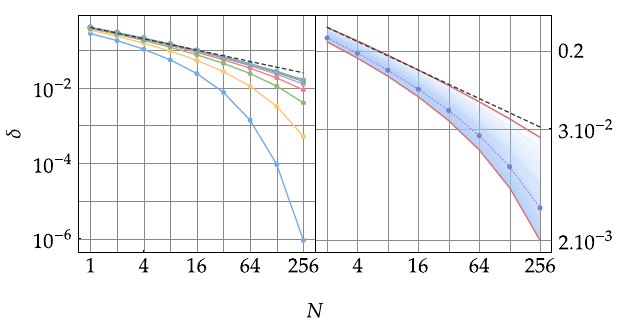}
    \caption{\emph{Convergence to states inaccessible via elementary thermal operations.} 
    Log-log plot of the total variation distance $\delta$ between the state $\v{q}$ from Eq.~\eqref{eq:inaccessible} and the state obtained from \mbox{$\v{p}=(1,0,\dots,0)$} via the algorithm $\widetilde{\mathcal{P}}^{ \Pi_{\v{q}}}$, as a function of the memory size~$N$. Left: systems with energy spectra $E_i=i$ with $d = 3,\hdots,10$ (bottom to top) and for \mbox{$\beta = 1.1 \log(2)>\beta_{\mathrm{crit}}$}. For all presented dimensions the convergence is better than $1/(2\sqrt{N})$ (black dashed line). Right: systems with energy spectra $(0,E_1,E_2,1)$ taken from a grid with interval $\Delta = 1/64$ (translucent blue lines) for $\beta = 1.1\cdot\beta_{\text{crit}}$. The red lines represent the extreme cases of convergence, while the thick blue line is the average convergence. Note that the top red line almost agrees with $1/(2\sqrt{N})$ (black dashed line), which well approximates the expected convergence for $\beta = 0$ and agrees with the fact that it is obtained for almost completely degenerate levels.}
    \label{fig:M_point_convergence}
\end{figure}

We illustrate the above with the following numerical examples, showing that our MeMTP protocol $\widetilde{\mathcal{P}}^{\Pi_{\v q}}$ (which consists of only two-level operations) is able to transform $\v{p}$ into $\v{q}'$ that approximates $\v{q}$ arbitrarily well (i.e., $\delta(\v{q}',\v{q})\to 0$ as $N\to\infty$). In order to focus attention, we chose a constant $\beta = 1.1\cdot \beta_{\text{crit}}$. First, we considered systems of varying dimension $d$, up to $d_{\text{max}} = 80$, and fixed the energy structure to $E_i = i$, corresponding to quantum harmonic oscillator. We observed that the convergence for all these dimensions scales according to the predictions from Conjecture~\ref{Conj:beta_permutations}, which can be seen in the left panel of Fig.~\ref{fig:M_point_convergence} for $d =3,\hdots,10$ and memory sizes up to $N = 2^8$. Moreover, in order to ascertain that the convergence does not depend on the energy structure of the system, we fixed $d = 4$ and considered energy levels $(0, E_1, E_2, 1)$ with $E_1 < E_2$ taken from a grid with spacing $\Delta E = 2^{-6}$, resulting in $1953$ uniformly distributed points. For each of these points, we have considered the protocol with memory size up to $N = 2^8$. As demonstrated in the right panel of Fig.~\ref{fig:M_point_convergence}, it turns out again that the convergence, independently from the energy structure, is better than $1/\sqrt{N}$, in accordance with Conjecture~\ref{Conj:beta_permutations}. 
\begin{figure*}
    \centering
    \includegraphics{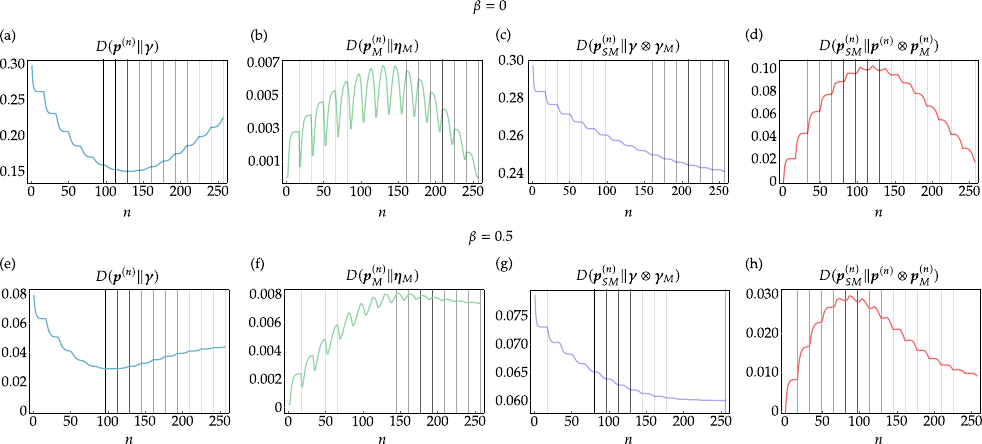}
    \caption{\emph{Evolution of non-equilibrium free energies and correlations}. Non-equilibrium free energies of the main system [(a) and (e)], the memory [(b) and (f)], and the joint system [(c) and (g)], as well as the mutual information between the system and memory [(d) and (h)], as a function of the step number $n$ of the protocol $\P^{(ij)}$. Here, the composite system consists of a three-level system initialised in a state $\v p = (0.7,0.2,0.1)$ and a 16-dimensional degenerate memory prepared in a thermal (maximally mixed) state, and the plots are presented for two inverse temperatures, $\beta =0$ and $\beta = 0.5$.     
    }   
    \label{Fig:Free-energy}
\end{figure*}

\subsection{Non-equilibrium free energy evolution}

To understand how non-Markovian effects arise in the memory-assisted protocol, we will now examine the evolution of the system and memory during the protocol $\mathcal{P}^{(ij)}$. More precisely, let us denote the joint state of the system and memory after the $n$-th two-level thermalisation step of the protocol by~$\v{p}_{SM}^{(n)}$. Similarly, let $\v{p}^{(n)}$ and $\v{p}_{M}^{(n)}$ denote the reduced states of the system and memory after the $n$-th step. Then, in the spirit of the analysis performed for elementary thermal operations in Ref.~\cite{Jeongrak2022}, we will examine the behaviour of the following entropic quantities. First, we will look at the relative entropy between $\v{p}^{(n)}$ and the thermal state of the system $\v \gamma$,
\begin{equation}
    D\left(\v p^{(n)}\|\v{\gamma}\right)= \sum^{d}_{i=1} p^{(n)}_i \log \frac{p^{(n)}_i}{\gamma_i},
\end{equation}
which is a thermodynamic monotone, as it decreases under (Markovian) thermal operations, and is directly related to the non-equilibrium free energy~\cite{brandao2013resource}. We will also look into the behaviour of the analogous quantities for the joint system and the memory system. Moreover, to track the correlations that build up between the system and memory, we will investigate the mutual information between them, which is given by \mbox{$D(\v{p}^{(n)}_{SM} \| \v{p}^{(n)} \otimes \v{p}^{(n)}_M)$}.

We use a three-level system and a $16$-dimensional memory as an illustrative example. We consider the joint system undergoing a $\beta$-swap protocol $\mathcal{P}^{(ij)}$ for $\beta=0$ and $\beta=0.5$. As shown in Figs.~\hyperref[Fig:Free-energy]{\ref{Fig:Free-energy}a} and~\hyperref[Fig:Free-energy]{\ref{Fig:Free-energy}e}, the non-equilibrium free energy of the main system initially decreases to a minimum, and then increases until it reaches a level that closely approximates the target state (a swap/$\beta$-swap). It is important to note that this observed increase is only possible because of the presence of the memory system. In contrast, note that the global non-equilibrium free energy decreases after each step, as depicted in Figs.~\hyperref[Fig:Free-energy]{\ref{Fig:Free-energy}c} and~\hyperref[Fig:Free-energy]{\ref{Fig:Free-energy}g}. However, during the process, a fraction of the main system's non-equilibrium free energy is transferred to the memory, which acts as a free energy storage. As such, it later enables the system to increase its local free energy again, hence allowing it to achieve the final state. More interestingly, the free energy of the memory, presented in Figs.~\hyperref[Fig:Free-energy]{\ref{Fig:Free-energy}b} and~\hyperref[Fig:Free-energy]{\ref{Fig:Free-energy}f}, exhibits a comb-like structure consisting of $d-1$ teeth with $(d+1)$ steps each. Specifically, within the $k$-th tooth, the first $d-k$ steps increase the free energy, while the remaining $k$ steps decrease it. Note that as long as the memory is not thermalised, its non-equilibrium free energy does not go to zero. However, for $\beta = 0$, it approaches a value very close to zero, but there are still correlations between the memory and the system, which are illustrated in Figs.~\hyperref[Fig:Free-energy]{\ref{Fig:Free-energy}d} and~\hyperref[Fig:Free-energy]{\ref{Fig:Free-energy}e}. 

\section{Alternative protocols realising equivalent $\beta$-swap approximation}\label{App:protocols}

Let us revisit the truncated protocol $\tilde{\mathcal{P}}^{(ij)}$ with $d$-dimensional memory as introduced in Section~\ref{sec:bridging}, where it was defined in Eq.~\eqref{eq:trunc_protocol}. This protocol is composed of $d^2$ two-level elementary thermalisations, denoted as $T_{kl}$. Each thermalisation $T_{kl}$ can be represented as a point $(k,l)$ on a plane, and an algorithm can be represented as an arrow pointing from the previous thermalisation to the next one. For instance, we present in Fig.~\hyperref[Fig:algorithms]{\ref{Fig:algorithms}a} the diagram of $\tilde{\mathcal{P}}$ for $d = 7$.  

\begin{figure}[b]
    \centering
    \includegraphics[width=10.629cm]{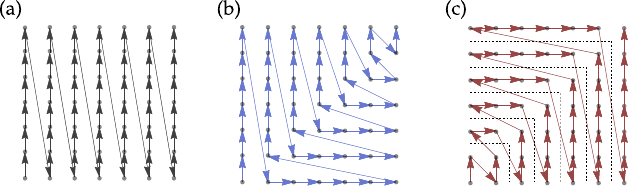}
    \caption{\emph{Alternative protocols}.
    Three different protocols are used to achieve equivalent $\beta$-swap approximations. Each thermalisation is represented by a point on a plane, with an arrow indicating the transition from one thermalisation to the next.}
    \label{Fig:algorithms}
\end{figure}

Visually, it is obvious that we iterate through an entire column before shifting to the next one. In other words, all the `filled' levels are used sequentially to fill up the first `empty' level, and the same process is repeated for all the subsequent `empty' levels. Now, let us look at the two algorithms depicted in Fig.~\hyperref[Fig:algorithms]{\ref{Fig:algorithms}b} and Fig.~\hyperref[Fig:algorithms]{\ref{Fig:algorithms}c}. The action of the blue algorithm can be summarised in the following way, starting with $i = 1$:
\begin{enumerate}
    \item Iterate through $i$-th column, starting from the first unvisited point.
    \item Iterate through $i$-th row, starting from the first unvisited point.
    \item Set $i\rightarrow i+1$ and go back to step 1 if any unvisited point remains.
\end{enumerate}
The red algorithm can be most easily understood as the reverse of the blue one. Instead of decreasing the length of vertical and horizontal stretches, they are gradually increased in the red algorithm. This allows the red algorithm to be recursively implemented, taking into account gradually more and more levels of memory, as indicated by dashed lines. The protocol for $d$-dimensional memory is implemented by extending the $d-1$-dimensional version with an additional row and column.

We furthermore investigated the cyan family, which mimics the blue algorithm and can be defined in the following manner:
\begin{enumerate}
    \item Iterate through row or column, starting from the first unvisited point.
    \item Repeat step 1. until no unvisited points remain.
\end{enumerate}
Moreover, we considered orange family related to the cyan family with an analogous reversal as between blue and red:
\begin{figure}[h]
    \includegraphics[width=10.7cm]{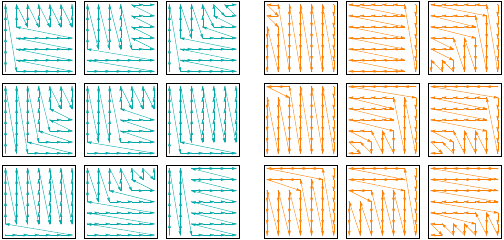}
\end{figure}

For these algorithms, we find the following properties we have observed from explicit implementation for a range of dimensions and inverse temperatures $\beta$, but we have not been able to prove them analytically:
\begin{itemize}
    \item All of the aforementioned algorithms acting on an initial state $\v{p}\otimes \v \gamma_d$ result in the same state as $\tilde{\mathcal{P}}(\v{p}\otimes \v \gamma_d)$.
    \item blue and red algorithms are slowest and fastest algorithms, respective, according to the convergence to the $\beta$-swap with respect to the 1-norm.
    \item Each algorithm in the Cyan and Orange family provide slower and faster convergence than the original algorithm $\tilde{\mathcal{P}}$, respectively.

The statement reinforces the claims mentioned earlier and explains that the 1-norm between the intermediate states of the system and the target state (the $\beta$-swapped counterpart of $\v{p}$ with $\beta = 0$) is plotted in Fig.~\ref{fig:convergence}. 

\end{itemize}
\begin{figure}[H]
    \centering
    \includegraphics{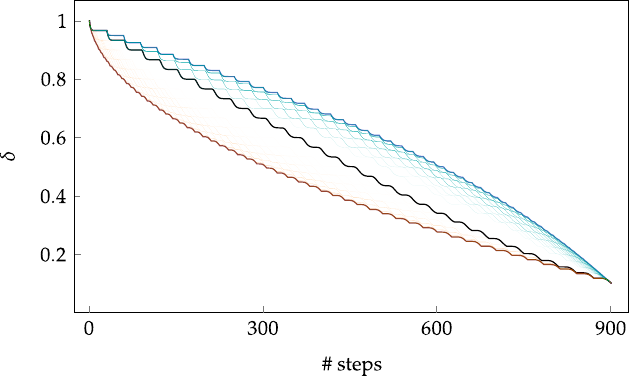}
    \caption{\emph{Algorithm convergence analysis}. Convergence to the target state $\v{q} = (0,1)$ of different algorithms acting on the state $\v{p} = (1,0)$ extended by memory with dimension $d = 30$. Note that all algorithms. Note that all algorithms finish at the same value of $\norm{\v{p} - \v{q}}_1$.}
    \label{fig:convergence}
\end{figure}

\newpage
\section{Derivation of the results}

\subsection{Regularised incomplete beta function} \label{app:beta}

The content of this section is based on Refs.~\cite{artin2015gamma,NIST:DLMF}.

\begin{center}
\emph{\textbf{Definition and properties}}
\end{center}

The finite-size corrections to a $\beta$-swap and its compositions are determined by the cumulative distribution function (CDF) known as the regularised incomplete beta function. This function is closely related to the well-known beta function $B(a,b)$ and is widely used in deriving our results. To provide the necessary background and present its key properties, we first recall the definition and properties of the beta function:
\begin{equation}\label{Eq:beta_function}
B(a,b) = \int_{0}^{1} t^{a-1} (1-t)^{b-1} dt ,
\end{equation}
with $a,b \in \mathbbm{C}$. The beta function relates to the gamma function in the following way
\begin{equation}\label{Eq:beta_and_gamma_function}
B(a,b) = \frac{\Gamma(a)\Gamma(b)}{\Gamma(a+b)}.
\end{equation}

The incomplete beta function $B_x(a,b)$ is defined by changing the upper limit of integration in Eq.~\eqref{Eq:beta_function} to an arbitrary variable, i.e.,
\begin{equation}\label{Eq:incomplete_beta_function}
    B_x(a,b) = \int_{0}^{x} t^{a-1} (1-t)^{b-1} dt.
\end{equation}
\begin{marginfigure}[-5.0cm]
	\includegraphics[width=4.754cm]{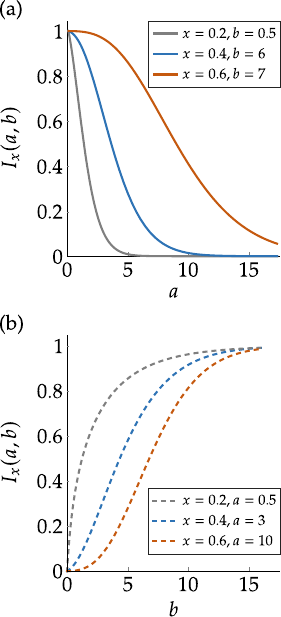}
    \caption{\emph{Regularised beta function}. Plots of the regularised incomplete beta function as a function of (a) $a$ and (b) $b$. }
	\label{Fig:regularised_beta_function}
\end{marginfigure}
Finally, we define the regularised incomplete beta function $I_x(a,b)$ (regularised beta function for short) by normalising the incomplete beta function,
\begin{equation}
    I_x(a,b) = \frac{B_x(a,b)}{B(a,b)}.
\end{equation}
We present plots of the regularised beta function for a few selected values of $x$ in Fig.~\eqref{Fig:regularised_beta_function}.

Throughout this chapter, we assume that $a, b >0$ and \mbox{$0\leq x \leq 1$}. It is easily noted that $I_0(a,b) = 0$, $I_1(a,b) = 1$, and $I_0(a,b) \leq I_x(a,b) \leq I_1(a,b)$, thus making it a proper CDF. Furthermore, for $a,\,b\in\mathbb{Z}$, $I_x(a,b)$ can be written in terms of a binomial function
\begin{equation}\label{Eq:incomplete_beta_function_binomial}
    I_x(a,b) = (1-x)^b\sum_{j=a}^{\infty}\binom{b+j-1}{j}x^{j}.
\end{equation}
From this equation, by using the geometric series and its derivatives, one can conclude that
\begin{equation}
\label{Eq:incomplete_beta_function_binomial_a_0}
    I_x(0,b) = 1.
\end{equation}
Moreover,
\begin{equation}\label{Eq:incompleta-beta-function-difference}
(1-x)^b\sum_{j=a}^{n}\binom{b+j-1}{j}x^{j} = I_x(a,b)-I_x(n+1,b).
\end{equation}
If $a =1$, then $I_x(1,b) = 1 - (1-x)^b$ and Eq.~\eqref{Eq:incomplete_beta_function_binomial} simplifies to 
\begin{equation}
(1-x)^b\sum_{j=1}^{n}\binom{b+j-1}{j}x^{j} = 1 - (1-x)^b - I_x(n+1,b).
\end{equation}
The next two useful properties of $I_x(a,b)$ are the symmetry relation 
\begin{equation}\label{Eq:incomplete_beta_function_symmetry}
    I_x(a,b) = 1- I_{1-x}(b,a),
\end{equation}
and the relation for an equal argument,
\begin{equation}\label{Eq:incomplete_beta_function_recurrence}
    I_x(a,a) = \frac{1}{2}I_{4x(1-x)}\qty(a,\frac{1}{2}),
\end{equation}
when $0\leq x \leq 1/2$.
Finally, there are two recurrence relations which allow one to shift either of the arguments of the function by one,
\begin{subequations} 
\begin{align}\label{Eq:incomplete_beta_function_recurrence2}
    I_x(a,b) &= I_x(a+1,b)+ \frac{x^a(1-x)^b}{a B(a,b)}, \\ \label{Eq:incomplete_beta_function_recurrence3}
    I_x(a,b) &= I_x(a,b+1)- \frac{x^a(1-x)^b}{a B(a,b)}, \\
    I_x(a,b) &= I_x(a+1,b-1)+\frac{x^a (1-x)^{b-1}}{aB(a,b)}  \label{Eq:incomplete_beta_function_recurrence4}, \\
     I_x(a,b) &= I_x(a-1,b+1)-\frac{x^{a-1} (1-x)^{b}}{aB(a,b)}.  \label{Eq:incomplete_beta_function_recurrence5}
\end{align}
\end{subequations}


\begin{center}
\emph{\textbf{General relations involving the gamma function}}
\end{center}

Next, let us present general relations and properties of the gamma function that are extensively used in our proofs. First, recall that for every positive integer $n$
\begin{equation}
    \Gamma(n) = (n-1)!. 
\end{equation}
Using the above, one can derive a simple formula for the following expression that appears when dealing with the regularised beta function:
\begin{equation}\label{Eq:beta-function-expansion-binomial}
    \frac{1}{n B(n,n)} = \frac{\Gamma(2n)}{n \Gamma(n)^2} = \frac{\Gamma(2n)}{n \Gamma(n)^2} \frac{2n}{2n} = \frac{1}{2}\binom{2n}{n}.
\end{equation}
Other important functional equation for the gamma function is the Legendre duplication formula:
\begin{equation}\label{Eq:Gamma-function-legendre}
\Gamma(n)\Gamma\qty(n+\frac{1}{2}) = 2^{1-2n}\sqrt{\pi}\Gamma(2n).    
\end{equation}
The factorial terms can be approximated using Stirling's approximation
\begin{equation}\label{Eq:Stirlings-approximation}
n! = \sqrt{2\pi n}\qty(\frac{n}{e})^n\left(1+O(n^{-1})\right).
\end{equation}


\begin{center}
\emph{\textbf{Asymptotic analysis}}
\end{center}

We will be interested in the asymptotic behaviour of the regularised beta function. So, in this section, we introduce important relations and identities that will be useful for proving our main theorems. Let us begin by considering the regularised beta function $I_x(a,b)$, where $x$ and $b$ fixed. For $a \to \infty$, we have the following asymptotic expansion:
\begin{align}\label{Eq:regularised-beta-asymptotic}
I_x(a,b) = \Gamma(a+b)x^a(1-x)^{b-1}&\Biggr[\sum_{k=0}^{n-1}\frac{1}{\Gamma(a+k+1)\Gamma(b-k)}\qty(\frac{x}{1-x})^k \nonumber\\ &\hspace{1.2cm}+ O\qty(\frac{1}{\Gamma(a+n+1)})\Biggr].
\end{align}
Note that $O$-term vanishes in the limit only if $n \geq b$. Furthermore, for each $n=0,1,2, ...$ If $b = 1, 2, 3, ...$, and $n>b$,  the $O$-term can be omitted, as the result is exact. In this work, Eq.~\eqref{Eq:regularised-beta-asymptotic} will be expanded up to the second order. Specifically, for the values of $a = N, b= 1/2$, we have the following equation
\begin{align}\label{Eq:regularised-beta-expansion-12}
   \! I_{x}\qty(N,\frac{1}{2}) &\simeq\! \frac{x^N}{\sqrt{1-x}}\qty[\frac{\Gamma\qty(N+\frac{1}{2})}{\Gamma\qty(N+1) \Gamma\qty(\frac{1}{2})}+\frac{\Gamma\qty(N+\frac{1}{2})}{\Gamma\qty(N+2) \Gamma\qty(-\frac{1}{2})}\qty(\frac{x}{1-x})] \nonumber \\
    &=\!\frac{x^N}{\sqrt{1-x}}\frac{(2N)!}{4^N (N!)^2}\qty[1-\frac{x}{2(N+1)(1-x)}],
\end{align}
where $\simeq$ symbol hides the terms of the order $O(x^N/N^2)$.

Next, we will consider sums of the regularised beta functions over the second argument and their limit as \mbox{$a \to \infty$}. We start with
\begin{align}
    \sum_{i=1}^{a}I_x(a,i+1) &= \sum_{i=1}^a \frac{B_x(a,i+1)\Gamma(a+1+i)}{\Gamma(a)\Gamma(i+1)} \nonumber \\
    &= \sum_{i=1}^a \frac{(a+i)!}{(a-1)! i!}\int_{0}^x dt \: t^{a-1}(1-t)^i  \nonumber \\
    &= \int_{0}^x dt \sum_{i=1}^a \binom{a+i}{i}t^{a-1}(1-t)^i,
\end{align}
where we have used definitions of the beta function and the gamma function for integer arguments. Using Eq.~\eqref{Eq:incompleta-beta-function-difference}, the above expression can be recast as
\begin{align}
 \sum_{i=1}^{a}I_x(a,i+1) &= \int_{0}^x \frac{dt}{t^2}\:[I_x(1,a-1)-I_{1-x}(a-1,a-1)] \nonumber \\
 &= \int_{0}^x \frac{dt}{t^2}\:\qty[\frac{1}{2}I_{4x(1-x)}\qty(a-1,\frac{1}{2})-I_x(a-1,1)],
\end{align}
for $0\leq x \leq 1/2$.

Finally, using the asymptotic expansion from~Eq.~\eqref{Eq:regularised-beta-asymptotic}, we get
\begin{align}
    \int_{0}^x dt\: \Biggl\{-t^{a-3}+\frac{1}{2x^2}\Biggl[&\frac{\Gamma(a-1/2)}{\Gamma(a+1)}\frac{[4x(1-x)]^{a-1}}{\sqrt{1-4x(1-x)}} + \nonumber\\ &\hspace{1.1cm} O\qty(\frac{(4x(1-x))^{a-1}\Gamma(a-1/2)}{\Gamma(a+1)}) \Biggl]\Biggl\}.
\end{align}
Thus, we see that this term vanishes in the limit of $a \to \infty$, and we find that the following sum vanishes in the limit:
\begin{equation}\label{Eq:beta-regularised-limit-sum}
    \lim_{a\to\infty} \sum_{i=1}^{a}I_x(a,i+1) = 0. 
\end{equation}

Moreover, we will need the following sum:
\begin{align}
\sum_{i=1}^N &I_{x}(N+1-i,N)= \sum_{i=1}^N\frac{B_{x}(N+1-i,N)}{B(N+1-i,N)}\nonumber\\
&=\sum_{i=1}^N\frac{N\Gamma(2N+1-i)}{N\Gamma(N+1-i)\Gamma(N)}\int_{0}^{x}dt\, t^{N-i}(1-t)^{N-1} \times \nonumber\\ 
&\hspace{1.7cm}\times \int_{0}^{x}dt N(1-t)^{N-1}\sum_{i=1}^N\binom{2N-i}{N-i}t^{N-i}\nonumber\\
&= \int_{0}^{x}dt N\frac{(1-t)^{N-1}}{(1-t)^{N+1}}(1-t)^{N+1}\sum_{j=0}^{N-1}\binom{(N+1)+j-1}{j}t^{j}\nonumber\\
&=\int_{0}^{x}dt \frac{N}{(1-t)^2}\Bigg(I_{1-x}(0,N+1)-I_{1-x}(N,N+1)\Bigg)\nonumber\\
&= \int_{0}^{x}dt \frac{N}{(1-t)^2}\Bigg(1-I_{1-x}(N,N)-\frac{t^N(1-t)^N}{NB(N,N)}\Bigg)\nonumber\\
&= N \int_{0}^{x}dt\frac{1}{(1-t)^2}=N \frac{x}{1-x}. \label{Eq:regularised-beta-function-sum-2}
\end{align}


\subsection{Proofs of Lemma~\ref{Lem:transposition} and Theorem~\ref{Thm:beta-swap}}~\label{app:neighbour}

To prove Lemma~\ref{Lem:transposition} and Theorem~\ref{Thm:beta-swap}, we will consider a composite system consisting of the main $d$-dimensional system and an $N$-dimensional memory system. Without loss of generality, we can assume that the main system is a two-level system with $i= 1$ and $j = 2$, whose state is described by unnormalised probability vector. We begin by deriving an expression that describes how the composite system evolves under the memory-assisted protocol $\mathcal{P}^{(12)}$. Next, to gain insight into the behaviour of the joint system as the memory size grows, we will prove the asymptotic result. Finally, in the last subsection, we will show how this result implies the desired convergence rates.


\begin{center}
    \emph{\textbf{Dynamics induced by two-level thermalisations}}
\end{center}

Consider a two-level system, described by a Hamiltonian $H = E_1 \ketbra{E_1}{E_1} + E_2 \ketbra{E_2}{E_2}$ and initially prepared in an energy-incoherent state \mbox{$\v p = (b,c)$}, together with a memory system, described by a trivial Hamiltonian $H_M = 0$ and prepared in a maximally mixed state $\v{\eta}_M = \qty(1, \hdots, 1)/N$. The joint state $\v r:= \v p \otimes \v \eta_M$ of the composite system is then given by
\begin{equation}\label{Eq:joint-state}
	\v{r}^{(0)} \equiv\v{r} = \frac{1}{N}\Big(\underbrace{b,\hdots,b}_{N\, \text{times}}|\underbrace{c ,\hdots,c }_{N\, \text{times}}\Big),
\end{equation}
whereas the joint thermal distribution \mbox{$\v \Gamma := \v \gamma \otimes \v \eta_M$} is given by
\begin{align}
   \v \Gamma &= \frac{1}{N(e^{-\beta E_1}+ e^{-\beta E_2})}\qty[e^{-\beta E_1}, \cdots , e^{-\beta E_1} |e^{-\beta E_2}, \cdots , e^{-\beta E_2}] \nonumber \\ &=\frac{1}{N}\qty(\gamma_1, ..., \gamma_1 | \gamma_2, ..., \gamma_2). 
\end{align}
For the sake of brevity, we introduce the rescaled Gibbs factors, which will be used extensively in the proofs:
\begin{equation}\label{Eq:rescaled-Gibbs-factor}
    \Gamma_{ij} =\frac{\gamma_i}{\gamma_i+\gamma_j}.
\end{equation}

Under a series of two-level thermalisations, the joint state of the composite system at the $k$-th round is given by
\begin{equation}
    \!\!\! \mathcal{R}_k\qty(\v{r}^{(k-1)})\equiv \v{r}^{(k)} \! =\! \qty(b_1^{(k)},\hdots, b_N^{(k)}\,|\,c_1^{(N)},\hdots,c^{(N)}_{k},c,\hdots,c),
\end{equation}
where $b^{(k)}_j$ and $c^{(N)}_j$ satisfy the following recurrence relations
\begin{align}\label{Eq:recurrence-relation-entries_bj}
b^{(k)}_{j} &= \Gamma^{j}_{21}\,\qty(\frac{\Gamma_{12}}{\Gamma_{21}}c+\Gamma_{12}\sum_{i=1}^j b^{(k-1)}_i \Gamma^{\, -i}_{21}), \\
c^{(N)}_j &= \Gamma_{21}^{\, N}c+\Gamma_{21}^{\, N+1}\sum_{i=1}^N b^{(j-1)}_i \Gamma^{\, -i}_{21},
\label{Eq:recurrence-relation-entries_cj}
\end{align}
with $b^{(0)}_j = b$. Equations \eqref{Eq:recurrence-relation-entries_bj} and \eqref{Eq:recurrence-relation-entries_cj} can be understood by noting that during the $k$-th round of two-level thermalisations, an additional $c$ is added to the previous $(k-1)$-th entry and the resulting state is again thermalised. By iterating this process for the first $k$ rounds, we can derive a closed-form expression for the entries $b^{(k)}_j$ and $c^{(N)}_j$:
\begin{align}\label{Eq:bj_entries}
\!\!\! b_j^{(k)} &=\Gamma_{12}\Gamma_{21}^{j-1}c\sum_{i=0}^{k-1}\binom{j\! +\! i\! -\! 1}{i}\Gamma_{12}^{i}+b\Gamma_{12}^{k}\sum_{i=1}^{j}\binom{j\! +\! k\! -\! 1\! -\! i}{k\! -\! 1}\Gamma_{21}^{j-i}, \\
\!\!\! c^{(N)}_j &= c\Gamma^N_{21} \sum_{i=0}^{j-1}\binom{N\! +\! i\! -\! 1}{i}\Gamma_{12}^{i}+b\Gamma_{21}\Gamma^j_{12}\sum_{i=0}^{N-1} \binom{i\! +\!  j}{j}\Gamma_{21}^{i}.\label{Eq:cj_entries}
\end{align}
After $N^2$ rounds of two-level thermalisations, the composite final state $\v r^{(N)}$ is given by
\begin{align}
	   \widetilde{\mathcal{P}}^{(12)}(\v r) \equiv\v r^{(N)} = \v q \otimes \v \eta_M =\frac{1}{N}{\small\qty[b^{(N)}_1,\hdots,b^{(N)}_N \, \bigg| \, c^{(N)}_{1},\hdots, c^{(N)}_{N}]}.
\end{align}

\begin{center}
\textbf{\emph{Infinite memory limit}}
\label{app:limit_proofs}
\end{center}

Up till now, we have obtained a closed-form expression describing the action of the truncated protocol $\widetilde{\mathcal{P}}^{(12)}$. In order to prove Lemma~\ref{Lem:transposition}~and~Theorem~\ref{Thm:beta-swap}, the next step consists of decoupling the main system from memory using a full thermalisation $\mathcal{T}$:
\begin{equation}
    \mathcal{P}^{(12)}(\v p \otimes \v \gamma_M) = \mathcal{T} \circ \widetilde{\mathcal{P}}^{(12)}(\v p \otimes \v \gamma_M). 
\end{equation}
Since we are initially focused on demonstrating the asymptotic results, our aim is to show that as $N$ approaches infinity, we achieve a $\beta$-swap:
\begin{subequations}
\begin{align}
    \lim_{N\to \infty} \frac{\sum_{i=1}^{N} b^{(N)}_i}{N} &= c+b(1-e^{-\beta (E_2-E_1)}),\label{Eq:limit_beta_swap1}\\ \lim_{N\to \infty} \,\frac{\sum_{i=1}^{N} c_i^{(N)}}{N} &= be^{-\beta (E_2-E_1)} .\label{Eq:limit_beta_swap2}
\end{align}
\end{subequations}
To prove the limits in Eqs.~\eqref{Eq:limit_beta_swap1} and \eqref{Eq:limit_beta_swap2}, we will begin by using the conservation of probability,
\begin{equation}\label{Eq:simple_fact}
\lim_{N\to \infty} \sum_{i=1}^{N}\frac{ b^{(N)}_i}{N}+ \lim_{N\to \infty}\sum_{i=1}^{N}\frac{ c_{i}^{(N)}}{N} = b+c.   
\end{equation}
This allows us to express the first limit as a function of the second:
\begin{equation}\label{Eq:first_limit}
    \lim_{N\to \infty} \sum_{i=1}^{N}\frac{ b^{(N)}_i}{N} = b+c - \lim_{N\to \infty}\sum_{i=1}^{N}\frac{ c^{(N)}_{i}}{N} .
\end{equation}

We will now examine the non-trivial term on the right hand side of Eq.~\eqref{Eq:first_limit}. We start by using Eq.~\eqref{Eq:incompleta-beta-function-difference} to re-write the first term of $c^{(N)}_j$ appearing in Eq.~\eqref{Eq:cj_entries} in terms of the regularised beta function
\begin{align}
\eval{c^{(N)}_j}_{\substack{b = 0\\c=1}} = \Gamma^N_{21} \sum_{i=0}^{j-1}\binom{N+i-1}{i}\Gamma_{12}^{i} &= [I_{\Gamma_{12}}(0,N)-I_{\Gamma_{12}}(j,N)] \nonumber \\ &= I_{\Gamma_{21}}(N,j),
\end{align}
where we also used Eq.~\eqref{Eq:incomplete_beta_function_symmetry} to invert the arguments of the regularised beta function. Thus, using \eqref{Eq:beta-regularised-limit-sum}, we conclude that
\begin{equation}
\lim_{N\to\infty}\frac{1}{N}\sum_{i=1}^{N}\eval{c^{(N)}_j}_{\substack{b = 0\\c=1}} = \lim_{N\to\infty}\frac{1}{N}\sum_{i=1}^{N}I_{\Gamma_{21}}(N,i) = 0.    
\end{equation}
Taking into account the second term of $c^{(N)}_j$ appearing in Eq.~\eqref{Eq:cj_entries}, one can immediately evaluate the sum
\begin{align}
   \lim_{N\rightarrow\infty}\eval{c^{(N)}_j}_{\substack{b = 1\\c=0}}  = \Gamma_{21}\Gamma^j_{12}\sum_{i=0}^{\infty} \binom{i+k}{k}\Gamma_{21}^{i} = \frac{\Gamma_{21}}{\Gamma_{12}},
\end{align}
and therefore obtain that
\begin{equation}
\lim_{N\to \infty} \,\frac{\sum_{i=1}^{N} c^{(N)}_i}{N} = be^{-\beta (E_2-E_1)}.    
\end{equation}
Substituting this result to Eq.~\eqref{Eq:first_limit}, we prove that
\begin{equation}
 \lim_{N\to \infty} \frac{\sum_{i=1}^{N} b^{(N)}_i}{N} = c+b[1-e^{-\beta (E_2 - E_1)}].
\end{equation}
As a result, in the limit of $N\to\infty$, the protocol $\mathcal{P}^{(12)}$ achieves a $\beta$-swap. 


\begin{center}
    \emph{\textbf{Finite memory convergence rates}}\label{app:convergence_proofs}
\end{center}

To complete the proofs of Lemma~\ref{Lem:transposition} and Theorem~\ref{Thm:beta-swap}, we will now analyse what the approximation error is for a finite size of the memory $N$. This will tell us how quickly the initial state convergences to a $\beta$-swap as a function of $N$. First, we define two functions governing the convergence
\begin{subequations}
\begin{align}\label{Eq:error_function1}
\!\!\! \mathbb{E}:=& \frac{1}{N} \sum_{i = 1}^N\eval{c^{(N)}_{i}}_{c=1,b=0} \!\! = \frac{\Gamma^N_{21}}{N}\sum_{i=1}^{N}\sum_{j=0}^{i-1} \binom{N\! +\! j\! -\! 1}{j}\Gamma^j_{12} , \\
\!\!\! \mathbb{F}:=& \frac{1}{N} \sum_{j = 1}^N\eval{b_j^{(N)}}_{c=0,b=1  } \!\! = \frac{\Gamma^N_{12}}{N}\sum_{j=1}^N \sum_{i=1 }^{j}\binom{j\! +\! N\! -\! 1\! -\! i}{N\! -\! 1}\Gamma^{j-i}_{21}\label{Eq:error_function2},\!
\end{align}
\end{subequations}
which allow us to write the final state of the system as

\begin{equation}
\label{eq:finalstate}
    \v{q} = b\mqty(\mathbb{F} \\ 1-\mathbb{F} ) + c\mqty(1-\mathbb{E}\\\mathbb{E}).
\end{equation}

Next, we will asymptotically expand Eq.~\eqref{Eq:error_function1}. The starting point is to reduce the double sum into one, then convert the binomial sums into regularised beta functions and use its properties given by Eqs.~\eqref{Eq:incomplete_beta_function_recurrence4},~\eqref{Eq:incomplete_beta_function_recurrence} and~\eqref{Eq:beta-function-expansion-binomial}:
\begin{align}
\mathbb{E}&= \frac{\Gamma^N_{21}}{N}\sum_{i=1}^N(N-i+1) \binom{N+i-2}{i-1}\Gamma_{12}^{i-1} \nonumber \\&= \Gamma^N_{21}\sum_{i=0}^{N-1}\binom{N+i-1}{i}\Gamma_{12}^{i} - \frac{\Gamma^{N}}{N}\sum_{i=1}^{N-1} i\: \binom{N+i-1}{i}\Gamma_{12}^{i} \nonumber \\
&= I_{\Gamma_{21}}(N,N)-\frac{\Gamma_{12}}{\Gamma_{21}}I_{\Gamma_{21}}(N+1,N-1)\nonumber \\ &=\qty(1-\frac{\Gamma_{12}}{\Gamma_{21}})I_{\Gamma_{21}}(N,N)+\frac{\Gamma_{12}}{\Gamma_{21}}\frac{\Gamma_{21}^N \Gamma_{12}^{N-1}}{NB(N,N)} \nonumber\\
&=\frac{1}{\Gamma_{21}}\qty[(\Gamma_{21}-\Gamma_{12})\frac{1}{2}I_{4\Gamma_{21}\Gamma_{12}}\qty(N,\frac{1}{2})+\frac{(\Gamma_{21} \Gamma_{12})^{N}}{N}\frac{\Gamma(2N)}{\Gamma(N)^2}] \label{Eq:E_function_before_exp}.
\end{align}

Now we expand Eq.~\eqref{Eq:E_function_before_exp} up to second order using Eq.~\eqref{Eq:regularised-beta-expansion-12}. Recall that such an expansion is an approximation up to terms of the order $O(x^N/N^2)$ (where $x$ will be given by \mbox{$4\Gamma_{21}\Gamma_{12}$}), which will be dropped since our final approximation will be up to the order $o(x^N/N^{3/2})$ or $o(1/N^{1/2})$ for finite and infinite temperatures, respectively. Then, the gamma functions appearing in the expansions are simplified using Eq.~\eqref{Eq:Gamma-function-legendre} and~Eq.~\eqref{Eq:beta-function-expansion-binomial}. Finally, using Stirling's approximation [Eq.~\eqref{Eq:Stirlings-approximation}], we arrive at 
\begin{align}\label{Eq:convergence_E1}
    \mathbbm{E} = & \frac{(4\Gamma_{21}\Gamma_{12})^N}{2\Gamma_{21}\sqrt{\pi N}}\Bigg[\frac{(\Gamma_{21}-\Gamma_{12})}{\sqrt{1-4\Gamma_{21}\Gamma_{12}}}-\frac{(\Gamma_{21}-\Gamma_{12})(4\Gamma_{21}\Gamma_{12})}{2(1-4\Gamma_{21}\Gamma_{12})^{3/2}(N+1)} \nonumber\\    &\hspace{3.5cm}+1+o\left(N^{-1}\right)\Bigg][1+O(N^{-1})].
\end{align}
As the last step, we simplify the above expression for $\Gamma_{12}>1/2$ (finite temperature case) by using the fact that \mbox{$\Gamma_{12}=1-\Gamma_{21}$}, to arrive at

\begin{equation}
    \!\!\!\! \!\!\eval{\mathbb{E}}_{\Gamma_{12}>\frac{1}{2}} \!\!\!\!\!\!\! =  (4\Gamma_{12}\Gamma_{21})^{N}\left[\frac{\Gamma_{12}}{(\Gamma_{12}\! -\!\Gamma_{21})^2\sqrt{\pi N}(N\! +\! 1)}+o\left( N^{-3/2}\right)\right].
\end{equation}
The infinite-temperature limit, $\beta =0$, is similarly analysed by using Eq.~\eqref{Eq:E_function_before_exp} and plugging $\Gamma_{12} = \Gamma_{21} = 1/2$. In this case, the scaling is slightly different and is given by
\begin{equation}
    \eval{\mathbb{E}}_{\Gamma_{12} = \frac{1}{2}} = \frac{1}{\sqrt{\pi N}}+o\left(N^{-1/2}\right).
\end{equation}

The other half of estimating the convergence rate stems from considering Eq.~\eqref{Eq:error_function2}. Proceeding in the same manner as before, we re-write it as
\begin{align}
\mathbb{F} &= \frac{\Gamma^N_{12}}{N}\sum_{k=1}^N(N-k+1)\binom{N+k-2}{N-1}\Gamma_{21}^{k-1} \nonumber\\
&=\Gamma^N_{12}\sum_{k=0}^{N-1}\binom{N+k-1}{k}\Gamma_{21}^{k} -\frac{\Gamma^N_{12}}{N}\sum_{k=1}^{N-1} k \: \binom{N+k-1}{k}\Gamma_{21}^{k} \nonumber \\
&= [1-I_{\Gamma_{21}}(N,N)]-\frac{\Gamma_{21}}{\Gamma_{12}}[1-I_{\Gamma_{21}}(N-1,N+1)] \nonumber\\
&=[1-I_{\Gamma_{21}}(N,N)]-\frac{\Gamma_{21}}{\Gamma_{12}}\qty[1-I_{\Gamma_{21}}(N,N)-\frac{(\Gamma_{21}\Gamma_{12})^N}{\Gamma_{21}NB(N,N)}] \nonumber \\
&=\frac{(\Gamma_{12}-\Gamma_{21})}{\Gamma_{12}}\qty[1-I_{4\Gamma_{12}\Gamma_{12}}\qty(N,\frac{1}{2})]+\frac{(4\Gamma_{21}\Gamma_{12})^N}{2\Gamma_{12}}\frac{1}{\sqrt{\pi N}}.\label{Eq:error_function4}
\end{align}
Again, we expand Eq.~\eqref{Eq:error_function4} up to second order using Eq.~\eqref{Eq:regularised-beta-expansion-12}. Then, the gamma functions appearing in the expansions are simplified using Eq.~\eqref{Eq:Gamma-function-legendre}, the remaining ones are simplified by using Eq.~\eqref{Eq:beta-function-expansion-binomial}. Finally, using Stirling's approximation [Eq.~\eqref{Eq:Stirlings-approximation}], we arrive at 
\begin{align}\label{Eq:convergence_F}
    \mathbbm{F} = & \qty(\frac{\Gamma_{12}-\Gamma_{21}}{2\Gamma_{12}}) \Biggl[ 2-\frac{(4\Gamma_{21}\Gamma_{12})^N}{\sqrt{1-4\Gamma_{21}\Gamma_{12}}}\frac{1}{\sqrt{\pi N}} \qty(1-\frac{4\Gamma_{21}\Gamma_{12}}{2(N+1)(1-4\Gamma_{21}\Gamma_{12})})\Biggr] \nonumber\\
    &\hspace{4.5cm}+\frac{(4\Gamma_{21}\Gamma_{12})^N}{2\Gamma_{12}}\left[\frac{1}{\sqrt{\pi N}}+o(N^{-3/2})\right].
\end{align}
For $\Gamma_{12}>1/2$ (finite temperature case), we use the fact that \mbox{$\Gamma_{12}=1-\Gamma_{21}$}, to simplify the above as
\begin{equation}
    \mathbb{F} \simeq \underbrace{\frac{\Gamma_{12} - \Gamma_{21}}{\Gamma_{12}}}_{=1 - e^{-\beta (E_2-E_1)}} + \underbrace{\Gamma_{21}\frac{(4\Gamma_{12}\Gamma_{21})^N}{(N+1)\sqrt{\pi N}\qty(\Gamma_{12} - \Gamma_{21})^2}}_{\equiv \mathbb{G}},
\end{equation}
where $\simeq$ hides the $o$-terms. The infinite-temperature limit, \mbox{$\beta =0$}, is obtained by using Eq.~\eqref{Eq:error_function4} and plugging \mbox{$\Gamma_{12} = \Gamma_{21} = 1/2$}. This yields the following convergence
\begin{equation}
    \eval{\mathbb{F}}_{\Gamma_{12} = \frac{1}{2}} = \frac{1}{\sqrt{\pi N}}+o\left(N^{-1/2}\right)
\end{equation}
with the expression for $\mathbb{G}$ modified to \mbox{$(\pi N)^{-1/2}$}.

As a final step to prove 
Lemma~\ref{Lem:transposition} and Theorem~\ref{Thm:beta-swap}, we calculate explicitly the state of the primary system after the protocol~$\mathcal{P}^{(12)}$. This is done by substituting the results for $\mathbb{E}$ and $\mathbb{F}$ to Eq.~\eqref{eq:finalstate}, yielding
\begin{align}
    \v q = \mqty(1 - e^{-\beta E}& 1 \\ e^{-\beta E} & 0 )\mqty(b\\c) + \qty(b\mathbb{G} - c \mathbb{E})\mqty(1 \\ -1) = \Pi^\beta_{12}\v p + \qty(b\mathbb{G} - c \mathbb{E})\mqty(1 \\ -1).
\end{align}
Thus, the distance between $\v q$ and the target $\Pi^\beta_{12} \v p$ is given by
\begin{align}
    \delta\qty(\Pi^\beta_{12} \v p,\,\v q) & = \abs{b\mathbb{G} - c \mathbb{E}}.
\end{align}
For $\beta=0$ case, the above gives
\begin{align}
    \delta\qty(\Pi^\beta_{12} \v p,\,\v q) & = \frac{\abs{b - c }}{\sqrt{\pi N}}+o\left(N^{-1/2}\right),
\end{align}
whereas $\beta\neq 0$ case, it gives
\begin{align}
    \delta\qty(\Pi^\beta_{12} \v p,\,\v q) 
     &= (4\Gamma_{12}\Gamma_{21})^{N}\Bigg[\frac{\abs{b \Gamma_{21} - c\Gamma_{12}}}{(\Gamma_{12}-\Gamma_{21})^2\sqrt{\pi N}(N+1)}+o(N^{-3/2})\Bigg].
\end{align}
These prove Lemma~\ref{Lem:transposition} and Theorem~\ref{Thm:beta-swap} after going to the notation used therein, i.e., $b \rightarrow p_1,\,c\rightarrow p_2,\,\Gamma_{12}\rightarrow\Gamma_1$ and $\Gamma_{21}\rightarrow\Gamma_2$. \qed


\begin{center}
    \emph{\textbf{Strengthening Corollary \ref{corr:general_bound}}}\label{app:bound}
\end{center}

Corollary~\ref{corr:general_bound} deals with a very general approach to bounding the distance between the target state $\v p^{\pi}$ and its approximation obtained from $\v{p}$ via the MeMTP protocol $\mathcal{P}^{\Pi}$. However, it can be improved by taking into account the set of indices on which the permutation acts.

\begin{corollary} \label{corr:detailed_bound}
    Consider states $\v{p}$ and $\v q\in C_+^{TO}(\v{p})$. Then, in the infinite temperature limit, $\beta=0$, and for an $N$-dimensional memory, there exists a MeMTP protocol $\mathcal{P}$ such that
    \begin{equation}
        \P(\v{p}\otimes \v{\eta}_M) = \v{q}'\otimes \v{\eta}_M,
    \end{equation}
    with
    \begin{equation}
        \delta(\v{q}',\v{q})\leq \frac{1}{2\sqrt{\pi N}} \sum_{\substack{k,l\\ k\neq l}} \abs{p_{i_k} - p_{i_l}} + o\qty(N^{-1/2})
        =: \epsilon,
    \end{equation}
    where $\qty{i_1,\hdots,i_{d'}}\subset\qty{1,\hdots,d}$ is a subset of indices neighbouring in the $\beta$-order, $\pi_{\v p}(i_{j}) + 1 = \pi_{\v p}(i_{j+1})$, such that \mbox{$\Pi_{\v q} = \prod_{i=1} \Pi_{j_i k_i}$} with $j_i,k_i\in\qty{i_1,\hdots,i_{d'}}$.
\end{corollary}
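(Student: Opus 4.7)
The plan is to refine the derivation of Corollary~\ref{corr:general_bound} by tracking the magnitude of the correction vector $\Delta \v{p}$ more carefully rather than bounding it by the worst-case count of neighbour transpositions. As in the earlier proof, I would first let $\Pi$ be a permutation sending the $\beta$-order of $\v{p}$ to that of $\v{q}$; by hypothesis $\Pi$ factors as a product of transpositions supported on the index set $\{i_1,\dots,i_{d'}\}$, whose elements are consecutive in the $\beta$-order of $\v{p}$. Theorem~\ref{Thm:permutation} then yields an MeMTP protocol $\P^{\Pi}$ that outputs $\v{r}\otimes\v{\eta}_M$ with $\v{r} = \Pi\v{p} + \epsilon\, \v{\Delta}\v{p} + o(N^{-1/2})$ where $\epsilon = (\pi N)^{-1/2}$ and $\v{\Delta}$ is given by Eq.~\eqref{eq:Lambda_correction_op}. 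Composing this with the MTP guaranteed by Eq.~\eqref{eq:same_beta_order}, together with the contractivity of $\delta$, reduces the corollary to bounding $\tfrac{1}{2}\|\v{\Delta}\v{p}\|_1$ in terms of the stated sum over pairs.

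Next, I would unpack the operator $\v{\Delta}$ summand by summand. Writing $\v{p}^{(l)} := \bigl(\prod_{k=1}^{l-1}\Pi_{i_kj_k}\bigr)\v{p}$, the vector $(\mathbbm{1}-\Pi_{i_lj_l})\v{p}^{(l)}$ has exactly two non-zero entries, both equal in magnitude to $|p^{(l)}_{i_l}-p^{(l)}_{j_l}|$. The remaining permutations to the left simply relocate these two entries without changing their absolute values, so each summand contributes exactly $2\,|p^{(l)}_{i_l}-p^{(l)}_{j_l}|$ to the $\ell_1$-norm. The triangle inequality then gives
\begin{equation}
\tfrac{1}{2}\|\v{\Delta}\v{p}\|_1 \;\leq\; \sum_{l=1}^{m}\bigl|p^{(l)}_{i_l}-p^{(l)}_{j_l}\bigr|.
\end{equation}
Crucially, because the factorization of $\Pi$ only permutes entries within $\{i_1,\dots,i_{d'}\}$, every value $p^{(l)}_{i_l}$ and $p^{(l)}_{j_l}$ is some original $p_{i_k}$.

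The main conceptual step is then to choose a \emph{good} neighbour-transposition decomposition of $\Pi$. I would use the minimal (bubble-sort) decomposition: sort the sequence of entries $(p_{i_1},\dots,p_{i_{d'}})$ into its target order using adjacent swaps, one per inversion. In such a decomposition, each unordered pair $\{p_{i_k},p_{i_l}\}$ is swapped at most once, because once the pair is placed in the correct relative order it is never inverted again. Consequently, the sum above is dominated by the sum over inversion pairs of the target permutation, which is itself dominated by the full pair sum:
\begin{equation}
\sum_{l=1}^{m}\bigl|p^{(l)}_{i_l}-p^{(l)}_{j_l}\bigr| \;\leq\; \sum_{k<l}\bigl|p_{i_k}-p_{i_l}\bigr| \;=\; \tfrac{1}{2}\sum_{\substack{k,l\\k\neq l}}\bigl|p_{i_k}-p_{i_l}\bigr|.
\end{equation}
Combining with $\epsilon=(\pi N)^{-1/2}$ and propagating through the subsequent MTP step yields exactly the bound in Corollary~\ref{corr:detailed_bound}.

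The main obstacle I anticipate is the last step: justifying that the bubble-sort decomposition of $\Pi$ can actually be implemented by the MeMTP machinery of Theorem~\ref{Thm:permutation}. One must verify that every intermediate state $\v{p}^{(l)}$ indeed admits the required transposition as a \emph{neighbour} transposition in its own $\beta$-order, so that Lemma~\ref{Lem:transposition} applies at each layer and the leading-order error decomposition \eqref{eq:Lambda_correction_op} remains valid. Since the support of $\Pi$ lies within a contiguous block of $\beta$-neighbouring indices and adjacent-transposition sorting preserves adjacency at each step, this should go through, but it is the place where a careful bookkeeping argument is needed to ensure that no hidden $O(\epsilon^2)$ cross-terms inflate the bound beyond the stated $o(N^{-1/2})$ remainder.
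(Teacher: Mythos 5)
Your proposal is correct, but it reaches the bound by a genuinely different route than the paper. For the permutation-error part, you bound the correction for the \emph{actual} permutation $\Pi$ by choosing a reduced (bubble-sort) decomposition into neighbour transpositions and counting inversion pairs, so that each unordered pair $\{p_{i_k},p_{i_l}\}$ contributes at most once to $\tfrac{1}{2}\|\v{\Delta}\v{p}\|_1$; the paper instead evaluates \eqref{eq:Lambda_correction_op} only for the worst case, the order-reversing permutation on the block (which uses all $d'(d'-1)/2$ neighbour swaps and yields exactly the pair sum), and then simply asserts that this dominates every other permutation supported on $\{i_1,\dots,i_{d'}\}$. Your inversion-counting argument in effect supplies the justification the paper glosses over, and your worry about the decomposition being implementable is indeed the only bookkeeping point: since the block is contiguous in the $\beta$-order and the permutation acts only inside it, adjacent swaps in the block remain neighbour transpositions at every intermediate step, and the $O(\epsilon^2)$ cross-terms stay $o(N^{-1/2})$ because $m\leq d(d-1)/2$ is fixed. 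For the passage from the extreme point $\Pi\v{p}$ to a general $\v{q}$, you reuse the contractivity argument of Corollary~\ref{corr:general_bound} (compose with the MTP mapping $\Pi\v{p}$ to $\v{q}$ guaranteed by \eqref{eq:same_beta_order}), which is clean and valid; the paper instead runs a two-case geometric argument on the MTP future cones, using planarity of their boundaries within a fixed $\beta$-order. The trade-off is that the paper's geometric route also delivers the exact-achievability statement \eqref{eq:exact_achiev}, which it exploits in the follow-up remark about states reachable with zero error, whereas your contractivity finish gives only the $\epsilon$ bound — but for the corollary as stated, your argument is complete.
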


\begin{proof}
First, we consider a target state to be an extreme point~$\v p^{\pi^*}$ such that the $\beta$-order $\Pi_{\v p^*}\equiv\Pi^*$ can be decomposed into the maximal number of $d'(d'-1)/2$ neighbour swaps on the levels $i_1$ through $i_{d'}$. Taking explicitly Eq.~\eqref{eq:Lambda_correction_op} from Theorem~\ref{Thm:permutation}, one finds that
\begin{equation}
    \delta\left(\v{p}^{\pi^*},\v r^*\right) = \frac{1}{2\sqrt{\pi N}} \sum_{\substack{k,l\\ k\neq l}} \abs{p_{i_k} - p_{i_l}} + o\qty(N^{-1/2}),
\end{equation}
where for convenience we used $\mathcal{P}^{\Pi^*}(\v p\otimes\eta_M) = \v r^* \otimes \eta_M$.
We note that the above expression in fact provides a general upper bound for any permutation $\Pi$ on the aforementioned subset of $d'$ levels -- defining $\mathcal{P}^{\Pi}(\v p\otimes\eta_M) = \v r \otimes\eta_M$ we find that
\begin{equation}
    \delta\left(\Pi\v{p},\v r\right) \leq \epsilon.
\end{equation}

Now, there are two cases to be considered. First, take a state $\v{q}$ which is in the future of the approximation point $\v r$, $\v q \in C_+^{MTP}\qty(\v r)$, from which it follows that

\begin{equation} \label{eq:exact_achiev}
     \exists\mathcal{O}\in\text{MTP}:\delta\left(\v{q},\mathcal{O}(\v r)\right) = 0.
\end{equation}
Otherwise, $\v q$ is not in the future cone of $\v r$. In this case, we first note that there exists a ball $B(\v r,\epsilon')\ni \Pi\v p$ with radius $\epsilon' \leq \epsilon$ with respect to $\delta(\cdot,\cdot)$. Thanks to the planarity of the boundaries $\partial C_+^{MTP}\qty(\v r)$ and $\partial C_+^{MTP}\qty(\Pi \v p)$ when restricted to a fixed $\beta$-order, we can consider the extreme case
\begin{equation}
    \v q \in \partial C_+^{MTP}\qty(\Pi \v p) \!\Rightarrow\! \exists \v r'\!\in\! \partial C_+^{MTP}\qty(\v r):\delta\qty(\v{q},\v{r}')\!\leq\! \epsilon'\! \leq\! \epsilon
\end{equation}
and the same argument applies for any \mbox{$\v q \in C_+^{MTP}\qty(\Pi \v p)\backslash C_+^{MTP}\qty(\v r)$}, thus concluding the proof.
\end{proof}

The bound presented in Corollary~\ref{corr:detailed_bound} can be further improved by taking into account the possibility of dividing the set $\qty{i_1,\hdots,i_{d'}}$ into subsets that are not mixed at any step when considering the decomposition of $\Pi_{\v q}$ into neighbour transpositions. Finally, we point out that, in agreement with Eq.~\eqref{eq:exact_achiev}, there will exist such states $\v{q}$ that are attainable exactly, and moreover, their volume will increase together with the size of memory $N$.


\begin{center}
    \emph{\textbf{Proof of Theorem~\ref{Thm:beta-3-cycle}}}\label{app:beta-3-cycle}
\end{center}

To prove Theorem~\ref{Thm:beta-3-cycle}, we will consider a composite system consisting of the main $d$-dimensional system and an $N$-dimensional memory system. Without loss of generality, we can simply assume that the main system has three levels with $i_1 = 1, i_2 = 2$ and $i_3 = 3$, and its state is described by an unnormalised probability vector \mbox{$\v p = (a, b, c)$}. The Hamiltonian is then given by \mbox{$H =\sum_{i=1}^3 E_i \ketbra{E_i}{E_i}$}, while the memory system is described by a trivial Hamiltonian $H_M = 0$ and prepared in a maximally mixed state \mbox{$\v \eta_M = (1/N, ..., 1/N)$}. The joint state of the composite system, \mbox{$\v r:= \v p \otimes \v \eta_M$}, is then given by
\begin{equation}
\label{Eq:entries_three_level_system}
 \v r^{(0)} \equiv \v r = \frac{1}{N}\Big(\underbrace{a, ..., a}_{N \text{  times}} | \underbrace{b, ..., b}_{N \text{  times}}| \underbrace{c, ..., c}_{N \text{  times}}\Big),
\end{equation}
and the joint thermal state is given by
\begin{equation}
\v \Gamma = \frac{1}{ZN}[e^{-\beta E_1}, ...,e^{-\beta E_1}| e^{-\beta E_2}, ..., e^{-\beta E_2}|e^{-\beta E_3}, ..., e^{-\beta E_3}],
\end{equation}
where $Z=\sum_{i=1}^3 e^{-\beta E_i}$.

As before, the starting point consists of understanding how the joint state of the composite system changes under the action of the composite protocol $\widetilde{\mathcal{P}}^{(13)}_N \circ  \widetilde{\mathcal{P}}^{(23)}_N$, whose action is summarised in two steps:
\begin{enumerate}
\item Two-level thermalisation between second and third energy levels. 
\item Two-level thermalisation between first and third energy levels.
\end{enumerate}
The final state $\v r^{(N)}$ is then given by
\begin{align}
   \!\!\! \widetilde{\P}^{\Pi}(\v r) &\equiv \v r^{(N)} = \v q \otimes \v \eta_M  \nonumber \\&=\frac{1}{N}{\small\qty[a^{(N)}_1,\hdots,a^{(N)}_N \, \bigg|\, b^{(N)}_1,\hdots,b^{(N)}_N \, \bigg| \, c^{(N)}_{1},\hdots, c^{(N)}_{N}]},\label{eq:beta-3-cycle-final}
\end{align}
where $\Pi=\Pi_{13}\Pi_{23}$. Note that due to probability conservation, characterising the second and third entries of Eq.~\eqref{eq:beta-3-cycle-final} is sufficient.

After the first protocol $\widetilde{\mathcal{P}}^{23}$, the second energy level remains ``untouched'' and, as a result, its entries are given by Eq.~\eqref{Eq:bj_entries} (with $\Gamma_{12}$ and $\Gamma_{21}$ replaced by $\Gamma_{23}$ and $\Gamma_{32}$, respectively). The other two entries are obtained in a similar way as Eqs.~\eqref{Eq:recurrence-relation-entries_bj}-\eqref{Eq:recurrence-relation-entries_cj}. The action of the protocol generates a recurrence formula that allows us to write the last entry $c^{(N)}_k$ as
\begin{align}
 c^{(N)}_k = a\Gamma^{k-1}_{13} \sum_{i=1}^N &\binom{N+k-1-i}{k-1}\Gamma_{31}^{N+1-i}+\Gamma^N_{31}\sum_{l=0}^{N-1}\Gamma_{13}^{l}\binom{N+l-1}{l}c_{k-l},
\end{align}
where $c_k$ is given by
\begin{equation}\label{ck}
    c_k = c\Gamma^N_{32} \sum_{i=0}^{k-1} \binom{N+i-1}{i}\Gamma^{i}_{23} +b\Gamma_{32}\Gamma^{k}_{23}\sum_{i=0}^{N-1} \binom{i+k}{k}\Gamma_{32}^{i}.
\end{equation}

Since, without loss of generality, we assumed that $\v p$ has $\beta$-ordering $(123)$, the proof boils down to demonstrating that $\widetilde{\P}^{\Pi}(\v r)$ sends $\v p$ to the following extreme point
\begin{equation}
    \v p^{(321)} = \qty[a + \frac{\Gamma_{32}}{\Gamma_{23}}b - a \frac{\Gamma_{31}}{\Gamma_{13}}, c+b\left(1-\frac{\Gamma_{32}}{\Gamma_{23}}\right), \frac{\Gamma_{31}}{\Gamma_{13}}a].
\end{equation}
Therefore, we need to prove the following limits
\begin{subequations}
\begin{align}\label{Eq:first-limit-beta-cyclic-perm}
    \lim_{N \to \infty}\frac{1}{N}\sum_{i=1}^N b^{(N)}_i &= c+b \, \qty(1-\frac{\Gamma_{32}}{\Gamma_{23}}), \\ \lim_{N \to \infty}\frac{1}{N}\sum_{i=1}^N c^{(N)}_i &= a \frac{\Gamma_{31}}{\Gamma_{13}}. \label{Eq:second-limit-beta-cyclyc-perm}
\end{align}
\end{subequations}


\begin{center}
    \emph{\textbf{Proof of limit (\ref{Eq:first-limit-beta-cyclic-perm}) }}
\end{center}

We start by recalling that $b^{(N)}_j$ is given by:
\begin{equation}\label{Eq:bjn}
 b_j^{(N)}\! =c\! \frac{\Gamma_{23}}{\Gamma_{32}}\Gamma_{32}^{j}\sum_{i=0}^{N-1}\binom{j\! +\! i\! -\! 1}{i}\Gamma_{23}^{i}+b\Gamma_{23}^{N}\sum_{i=1}^{j}\binom{j\! +\! N\! -\! 1\! -\! i}{N\! -\! 1}\Gamma_{32}^{j-i}.
\end{equation}
Comparing Eqs. \eqref{Eq:bjn} and the right-hand side of \eqref{Eq:first-limit-beta-cyclic-perm}, we see that in order to prove Eq.~\eqref{Eq:first-limit-beta-cyclic-perm}, we need to prove the following two limits: 
\begin{align}\label{eq:b_part_1}
 \lim_{N \to \infty}\sum_{j=1}^N\eval{\frac{b^{(N)}_j}{N}}_{\substack{b = 0\\c=1}} &=\lim_{N\rightarrow\infty}\frac{\Gamma_{23}}{\Gamma_{32}}\frac{1}{N}\sum_{j=1}^N\Gamma_{32}^j\sum_{i=0}^{N-1}\binom{j+i-1}{i}\Gamma_{23}^{i} \nonumber \\ &=1 
\end{align}
and
\begin{align}\label{eq:b_part_2}
 \lim_{N \to \infty}\sum_{j=1}^N\eval{\frac{b^{(N)}_j}{N}}_{\substack{b = 1\\c=0}} &=\lim_{N\rightarrow\infty} \Gamma_{23}^{N}\frac{1}{N}\sum_{j=1}^N\Gamma_{32}^j\sum_{i=1}^j\binom{j+N-1-i}{N-1}\Gamma_{32}^{-i} \nonumber \\ &=\qty(1-\frac{\Gamma_{32}}{\Gamma_{23}}).  
\end{align}

We begin by proving Eq.~\eqref{eq:b_part_1}. First, we rewrite this expression as:
\begin{align}
\sum_{j=1}^N \eval{\frac{b^{(N)}_j}{N}}_{\substack{b = 1\\c=0}} &=\frac{1}{N} \frac{\Gamma_{23}}{\Gamma_{32}}\sum_{i=0}^{N-1}\Gamma_{23}^{i}\sum_{j=1}^N\Gamma_{32}^j\binom{j+i-1}{i} \nonumber\\&=  \frac{1}{N} \Gamma_{23}\sum_{i=0}^{N-1}\Gamma_{23}^{i}\sum_{j=0}^{N-1}\Gamma_{32}^j\binom{j+i}{i}.\label{eq:rewrite_b}
\end{align} 
We can evaluate the second sum in Eq.~\eqref{eq:rewrite_b} as follows:
\begin{align}
\sum_{j=0}^{N-1}\Gamma_{32}^j\binom{j+i}{i} &=  (\Gamma_{23})^{-i-1}\Bigg(1-\frac{B_{\Gamma_{32}}(N,i+1)}{B(N,i+1)}\Bigg)\nonumber\label{eq:second_sum} \nonumber \\ &=(\Gamma_{23})^{-i-1}[1-I_{\Gamma_{23}}(N,i+1)].
\end{align}
Thus, substituting Eq.~\eqref{eq:second_sum} into Eq.~\eqref{eq:rewrite_b}, we obtain
\begin{align}
\sum_{j=1}^N \eval{\frac{b^{(N)}_j}{N}}_{\substack{b = 1\\c=0}} =1-\frac{1}{N}\sum_{i=0}^{N-1}I_{\Gamma_{32}}(N,i+1).\label{eq:expr_b}
\end{align}
Using Eq.~\eqref{Eq:beta-regularised-limit-sum}, we conclude that the second term in Eq.~\eqref{eq:expr_b} vanishes in the limit of $N\rightarrow \infty$, and therefore
\begin{equation}
     \lim_{n \to \infty}\sum_{j=1}^N\eval{b^{(N)}_j}_{\substack{b = 0\\c=1}} = 1,\label{eq:beta-3-cycle-first-lim}
\end{equation}
so that we have proved Eq.~\eqref{eq:b_part_1}. 

To prove Eq.~\eqref{eq:b_part_2}, we begin by manipulating it so that we can express it in a simpler form:\! 
\begin{align}\label{eq:simplified_b_semistep} 
\sum_{j=1}^N \eval{\frac{b^{(N)}_j}{N}}_{\substack{b = 1\\c=0}}\!\!\!\! &= \frac{\Gamma_{23}^{N}}{N}\sum_{j=1}^N\Gamma_{32}^j\sum_{i=1}^j\binom{N+j-1-i}{N-1}\Gamma_{32}^{-i}\nonumber \\ 
&= \frac{\Gamma_{23}^{N}}{N}\sum_{j=0}^{N-1}(N-j)\binom{N+j-1}{j}\Gamma_{32}^{j}\nonumber \\
&=
\Gamma_{23}^{N}\sum_{j=0}^{N-1}\binom{N\! +\! j\! -\! 1}{j}\Gamma_{32}^{j}-\frac{\Gamma_{23}^{N}}{N}\sum_{j=0}^{N-1}j\,\binom{N\! +\! j\! -\! 1}{j}\Gamma_{32}^{j}.
\end{align} 
Applying Eq.~\eqref{Eq:incompleta-beta-function-difference} to transform the first term of Eq.~\eqref{eq:simplified_b_semistep} into a difference of regularised beta functions, and then using its asymptotic expansion, we obtain
\begin{equation}
    \Gamma_{23}^{N}\sum_{j=0}^{N-1}\binom{N+j-1}{j}\Gamma_{32}^{j} = I_{\Gamma_{32}}(0,N)-I_{\Gamma_{32}}(N,N)\simeq 1.
\end{equation}
Next, we consider the second term in Eq.~\eqref{eq:simplified_b_semistep}, which can be directly evaluated as
\begin{align}
-\frac{\Gamma_{23}^{N}}{N}&\sum_{j=0}^{N-1}j\binom{N+j-1}{j}\Gamma_{32}^{j} \nonumber\\
&=-\frac{\Gamma_{32}}{\Gamma_{23}}\Gamma_{23}^{N+1}\sum_{j=-1}^{N-2}\binom{(N+1)+j-1}{j}\Gamma_{32}^j\nonumber\\
&=-\frac{\Gamma_{32}}{\Gamma_{23}}\qty[(I_{\Gamma_{32}}(0,N+1)-I_{\Gamma_{32}}(N-1,N+1)] \nonumber\\ &\simeq -\frac{\Gamma_{32}}{\Gamma_{23}},
\end{align}
where in the last line we used the asymptotic expansion of $I_x(a,b)$ to approximate the difference between regularised beta functions. Collecting all the terms, we conclude that the limit is given by
\begin{equation}
     \lim_{N \to \infty}\sum_{j=1}^N\eval{\frac{b^{(N)}_j}{N}}_{\substack{b = 1\\c=0}}  = 1-\frac{\Gamma_{32}}{\Gamma_{23}}.
\end{equation}
Therefore, combining the above with Eq.~\eqref{eq:beta-3-cycle-first-lim}, we get the desired limit: 
\begin{equation}
    \lim_{N \to \infty}\frac{1}{N}\sum_{j=1}^N \frac{b^{(N)}_j}{N} = c+b \, \qty(1-\frac{\Gamma_{32}}{\Gamma_{23}}) .
\end{equation}


\newpage

\begin{center}
    \emph{\textbf{Proof of limit (\ref{Eq:second-limit-beta-cyclyc-perm})}}
\end{center}

As before, in order to prove Eq.~\eqref{Eq:second-limit-beta-cyclyc-perm}, we will also need to prove two other limits. Recall that $c^{(N)}_j$ is given by
\begin{align}
   c^{(N)}_j = a\Gamma^{j-1}_{13} \sum_{i=1}^N \binom{N+j-1-i}{j-1}\Gamma_{31}^{N+1-i}+\Gamma^N_{31}\sum_{l=0}^{N-1}\Gamma_{13}^{l}\binom{N+l-1}{l}c_{j-l},
\end{align}
with
\begin{equation}\label{ck2}
    c_j = c\Gamma^N_{32} \sum_{i=0}^{j-1} \binom{N+i-1}{i}\Gamma^{i}_{23} +b\Gamma_{32}\Gamma^{j}_{23}\sum_{i=0}^{N-1} \binom{i+j}{j}\Gamma_{32}^{i}.
\end{equation}
Since $c_{j-l}$ does not depend on $a$, the problem reduces to showing that
\begin{align}
 \lim_{N\to \infty}\frac{1}{N}\sum_{j=1}^N\eval{c^{(N)}_j}_{\substack{a = 1\\b,c=0}} &=\lim_{N\rightarrow\infty}\frac{1}{N}\sum_{j=1}^N\Gamma_{13}^{j}\sum_{i=1}^{N}\binom{N+j-i}{j}\Gamma_{31}^{N+1-i} \nonumber \\ &=\frac{\Gamma_{31}}{\Gamma_{13}},\label{eq:c_part_1}
\end{align}
and 
\begin{align}
\lim_{N\to \infty}\frac{1}{N}\sum_{j=1}^N \eval{c^{(N)}_j}_{\substack{a = 0}} &= \lim_{N\rightarrow\infty}\frac{1}{N}\sum_{j=1}^N\Gamma_{31}^{N}\sum_{l=0}^{N-1}\Gamma_{13}^{l}\binom{N+l-1}{l}c_{j-l}\nonumber\\&=0.\label{eq:ckl}
\end{align}

Let us start by proving Eq.~\eqref{eq:c_part_1}. First, we manipulate $c^{(N)}_j$ and rewrite it in terms of the incomplete beta function as follows:
\begin{align}
\eval{c^{(N)}_j}_{\substack{a = 1\\b,c=0}}&=\Gamma_{13}^{j-1}\sum_{i=1}^{N}\binom{N+j-1-i}{j-1}\Gamma_{31}^{N+1-i}\nonumber\\
&=\sum_{i=1}^N\qty[(I_{\Gamma_{13}}(0,N+1-i)-I_{\Gamma_{13}}(N,N+1-i)]\nonumber\\ &= \sum_{i=1}^N I_{\Gamma_{31}}(N+1-i,N).
\end{align}
Using Eq.~\eqref{Eq:regularised-beta-function-sum-2}, we obtain 
\begin{eqnarray}
\sum_{i=1}^N I_{\Gamma_{31}}(N+1-i,N) = N\frac{\Gamma_{31}}{\Gamma_{13}}.
\end{eqnarray}
Thus, collecting all the terms, we get the desired limit
\begin{equation}
\lim_{N\to \infty}\frac{1}{N}\sum_{j=1}^N\eval{c^{(N)}_j}_{\substack{a = 1\\b,c=0}}=\frac{\Gamma_{31}}{\Gamma_{13}}.
\end{equation}

The final step is to show that the remaining limit from Eq.~\eqref{eq:ckl} is zero, namely
\begin{equation}
    \lim_{N\rightarrow\infty}\frac{1}{N}\Gamma_{31}^N\sum_{k=1}^N\sum_{l=0}^{N-1}\Gamma_{13}^lc_{k-l}=0.
\end{equation}
Since $c_{k-l}$ has two contributions, one needs to show that both limits go to zero. Treating each separately, we first write the first term of Eq.~\eqref{ck2} in terms of the incomplete beta function,
\begin{equation}
\eval{c_k}_{\substack{c = 1, b=0}} = \Gamma^N_{32} \sum_{i=0}^{k-1} \binom{N+i-1}{i}\Gamma^{i}_{23} = I_{32}(N,k) \leq 1,
\end{equation}
where we could bound by one because the incomplete beta function is a CDF. Thus,
\begin{align}
\sum_{k=1}^N \frac{\Gamma^N_{31}}{N}\sum_{l=0}^{N-1}\Gamma_{13}^{l}&\binom{N+l-1}{l}\eval{c_{k-l}}_{\substack{c = 1, b=0}} \leq \sum_{k=1}^N \frac{\Gamma^N_{31}}{N}\sum_{l=0}^{N-1}\Gamma_{13}^{l}\binom{N+l-1}{l} \nonumber \\ &= I_{\Gamma_{31}}(N,N),
\end{align}
and this term goes to zero for $\Gamma_{31} \leq 1/2$. This can be seen from the asymptotic expansion of $I_{\Gamma_{31}}(N,N)$. 

Finally, we need to show that the second term of Eq.~\eqref{ck2} is zero. To do so, we first re-write the second term as
\begin{align}
\!\!\!\! \sum_{k=1}^{N}\eval{c^{(N)}_k}_{\substack{b = 1\\a,c=0}}\!\! & = \frac{1}{N}\Gamma_{31}^N\sum_{l=0}^{N-1}\Gamma_{13}^l \binom{N+l-1}{l}\sum_{k=1}^N\eval{c_{k-l}}_{\substack{c = 0\\b=1}}  \nonumber \\
    & =
    \frac{1}{N}\Gamma_{31}^N\sum_{l=0}^{N-1}\Gamma_{13}^l \binom{N+l-1}{l}\sum_{k=l+1}^N\eval{c_{k-l}}_{\substack{c = 0\\b=1}} \nonumber \\
    & =
    \frac{1}{N}\Gamma_{31}^N\sum_{l=0}^{N-1}\Gamma_{13}^l \binom{N+l-1}{l}\sum_{k=1}^{N-l}\eval{c_{k}}_{\substack{c = 0\\b=1}}. 
\end{align}  
Notice that the above expression can be bounded by
\begin{align}\label{Eq:bound_ck}
  \sum_{k=1}^{N}\eval{c^{(N)}_k}_{\substack{b = 1\\a,c=0}} &\leq
    \frac{1}{N}\qty(\Gamma_{31}^N\sum_{l=0}^{N-1}\Gamma_{13}^l \binom{N+l-1}{l})\qty(\sum_{k=1}^{N}\eval{c_{k}}_{\substack{c = 0\\b=1}} ), \nonumber 
\end{align}
and the right-hand side of equation Eq.~\eqref{Eq:bound_ck} can be expressed in terms of the regularised beta function as follows:
\begin{align}
\frac{1}{N}\Gamma_{31}^N&\sum_{l=0}^{N-1}\Gamma_{13}^l \binom{N+l-1}{l}\qty(\sum_{k=1}^{N}\eval{c_{k}}_{\substack{c = 0\\b=1}} ) \nonumber\\
     & =
    \Gamma_{32}\frac{1}{N}I_{31}(N,N)\sum_{k=1}^N\Gamma^{k}_{23}\sum_{i=0}^{N-1} \binom{i+k}{k}\Gamma_{32}^{i} \nonumber\\
    & =
    \Gamma_{32}\frac{1}{N}I_{31}(N,N)\sum_{k=1}^N\Gamma^{k}_{23}\Gamma_{23}^{-1-k}\qty[1 - I_{32}(N,k+1)]\nonumber \\
    & =
    \frac{\Gamma_{32}}{\Gamma_{23}}\frac{1}{N}I_{31}(N,N)\qty(\frac{1}{N}\sum_{k=1}^N I_{23}(k+1,N)).
\end{align}
Now, note that the first term goes to zero when $\Gamma_{31} < 1/2$, whereas the second term is also bounded by one as
\begin{equation}
\frac{1}{N}\sum_{k=1}^NI_{23}(k+1,N) \leq \frac{1}{N}\sum_{k=1}^N 1 = 1.
\end{equation}
Since 
\begin{equation}
   0 \leq \lim_{N\to \infty}\sum_{k=1}^N\frac{1}{N}\eval{c^{(N)}_k}_{\substack{b = 1\\a,c=0}} \leq 0,
\end{equation}
we conclude that the resulting limit is zero. Therefore, 
\begin{equation}
    \lim_{N\rightarrow\infty}\frac{1}{N}\Gamma_{31}^N\sum_{k=1}^N\sum_{l=0}^{N-1}\Gamma_{31}^lc_{k-l}=0.
\end{equation}

\section{Concluding remarks}~\label{sec:outlook}

In this chapter, we proposed a novel approach to investigate memory effects in thermodynamics by introducing the concept of memory-assisted Markovian thermal processes. These were defined by extending the framework of Markovian thermal processes with ancillary memory systems brought in thermal equilibrium states. Our construction allowed us to interpolate between the regime of memoryless dynamics and the one with full control over all degrees of freedom of the system and the bath. Using a family of protocols composed of Markovian thermal processes, we demonstrated that energy-incoherent states achievable from a given initial state via thermal operations could be approximated arbitrarily well via our algorithmic procedure employing memory. Furthermore, we analysed the convergence of our protocols in the infinite memory limit, finding polynomial and exponential convergence rates for infinite and finite temperatures, respectively. In the infinite temperature limit, we provided analytic convergence to the entire set of states accessible via thermal operations. For finite temperatures, we proved the convergence to a subset of accessible states and, based on extensive numerical evidence, we conjectured that a modified version of our protocol can realise arbitrary transitions achievable via thermal operations with an exponential convergence rate that grows with memory size. Our model-independent approach can be seen as a significant step forward in understanding ultimate limits of the Markovian evolution in general, which should be contrasted with the model-specific approaches to the so-called Markovian embedding \cite{siegle2010markovian,budini2013embedding,campbell2018system}. On the other hand, it may be seen as far less general than the approach taken in Ref.~\cite{ende2023finitedimensional}, where our work would correspond to a step towards simulating arbitrary evolution with Markov-Stinespring curves.

We also explained how our results can be employed to quantitatively assess the role of memory for the performance of thermodynamic protocols. In this context, we discussed the dependence on the memory size of the amount and quality of work extracted from a given non-equilibrium state. However, the method can be used as well to investigate other thermodynamic protocols, such as information erasure or thermodynamically free encoding of information~\cite{korzekwa2019encoding}. Furthermore, we revealed that all transitions accessible via thermal operations can be accomplished using a restricted set of thermal operations that exclusively affect only two energy levels (of the system extended by a memory) at any given time. These findings carry important implications, not only for the development of efficient thermodynamic protocols, such as optimal cooling and Landauer erasure, but also for the exploration of novel avenues of research focused on characterising memory effects in thermodynamics. Finally, we also commented on the role played by the memory system as a free energy storage that enables non-Markovian effects.

Our results offer many possibilities for generalisation and further research. First, one can try proving that the future thermal cone for memory-assisted Markovian thermal processes agrees with that of thermal operations in the limit of infinite memory, \mbox{$\lim_{N\rightarrow\infty} C^+_{\text{MeMTP}} = C^+_{\text{TO}}$}, as suggested by Conjecture~\ref{Conj:beta_permutations}. This can be built upon the proofs for $\beta$-swaps (Theorem \ref{Thm:beta-swap}) and $\beta$-$3$-cycles (Theorem \ref{Thm:beta-3-cycle}) presented in this work. Second, one may also attempt to show that the convergence of the proposed protocols $\mathcal{P}^{\Pi}$ and $\widetilde{\mathcal{P}}^{\Pi}$ is optimal with respect to the memory size. In other words, one could investigate the upper-bound on the power of memory-assisted Markovian thermal processes with a given size of memory $N$. Third, from a more practical point of view, it may be worthwhile to explore MeMTPs involving finite and infinite memory with non-trivial energy level structure. The practical relevance of this direction can be understood by considering the introduction of non-degenerate splitting of the levels for the full system, which would allow the level pairs to be addressed independently.

In addition to the above, there are also less clear-cut goals for future efforts, such as expanding the studies beyond energy-incoherent states into the full range of quantum states. Furthermore, while our work focused on a single main system, an interesting avenue for future work could be to investigate many non-interacting subsystems. This extension could shed light on the combined consequences of finite-size and memory effects, providing valuable insights into the behaviour of larger, more complex systems. Specifically, characterising such effects could help to identify strategies for improving the efficiency of thermodynamic protocols in practical applications. Finally, one can also consider memory composed of many equivalent systems (such as a multi-qubit memory), and analyse the potential challenges arising from energy-level degeneration in such a setting.

Finally, the feasibility of the introduced algorithm can be studied from a control perspective, following the approaches outlined in~\cite{wolpert2019space, PRXQuantum.4.010332}. The first approach introduces the notion of a space-time trade-off, which refers to the minimal amount of memory and time steps required to classically implement a given process. The second approach deals with control complexity, defining it as the number of levels a given operation non-trivially acts versus the time steps needed to implement that process. Our algorithm has specific time and memory requirements, namely $N$-dimensional memory and $N^2$ time steps. Furthermore, it is limited to the simplest two-level processes at any given time, meaning its control complexity is as low as possible. Nonetheless, future work might explore variations of our protocol (or any of the variants presented in Section~\ref{App:protocols}). Such explorations could focus on enabling parallelisation of certain steps by expanding available memory, thereby illustrating the space-time trade-off.
\chapter{Fluctuation-dissipation relations for thermodynamic distillation processes}\label{C:finite-size}

Almost two centuries ago, Robert Brown observed that pollen seeds immersed in water move randomly in erratic motion~\cite{brown1828particles}. It was not until the 1905 papers by Einstein and Smoluchowski~\cite{Einstein1906, Smoluchowski1906} that scientists understood that this ``Brownian'' motion is induced by the bombardment of pollen particles by water molecules. Crucially, by noting that these collisions would also create friction for the particle being pulled through the fluid, Einstein realised that the two processes, fluctuations of particle's position and dissipation of its energy, have the same origin and thus must be related. Over the years, physicists generalised and formalised this observation into fundamental fluctuation-dissipation relations describing the behavior of systems driven out of equilibrium~\cite{kubo1966fluctuation,marconi2008fluctuation}. 

Now, it is well known that near-equilibrium, linear response theory provides a general proof of the fluctuation-dissipation theorem, which states that the response of a given system when subject to an external perturbation is expressed in terms of the fluctuation properties of the system in thermal equilibrium~\cite{kubo1966fluctuation}. The theoretical description underlying the fluctuation-dissipation relations is usually expressed in terms of the stochastic character of thermodynamic variables. This approach is strongly motivated since it is experimentally viable~\cite{batalhao2014, An2015}.

As introduced in the Chapter~\ref{C:resource_theory_of_thermodynamics}, the resource theory of thermodynamics aims to go beyond the thermodynamic limit and the assumption of equilibrium. It is often presented as an extension of statistical mechanics to scenarios with large fluctuations, referred to as single-shot statistical mechanics.~\cite{Dahlsten_2011,Yunger2018}. A natural question is then whether fluctuation-dissipation relations are present in such a resource-theoretic description. Although important insights have been obtained in trying to connect the information-theoretic and fluctuation theorem approaches~\cite{Alhambra2016, Halpern2015}, they have, so far, not been explicitly related to dissipation. The tools required for the analysis of free energy dissipation in a resource-theoretic framework were developed in Refs.~\cite{Hayashithermo2017, Chubb2018beyondthermodynamic,Chubb2019_2,korzekwa2019avoiding}, where the authors investigated irreversibility of thermodynamic processes due to finite-size effects. However, the relation between fluctuations and actual dissipation was not derived and, moreover, these results were obtained for quasi-classical case of energy-incoherent states. Thus, they were not able to account for quantum effects that come into play when dealing with even smaller systems, when fluctuations around thermodynamic averages are no longer just thermal in their origin.

This chapter pushes towards a genuinely quantum framework characterising optimal thermodynamic state transformations and links fluctuations with free energy dissipation. We investigate a special case of state interconversion processes known as \emph{thermodynamic distillations}. These are thermodynamic processes in which a given initial quantum system is transformed, with some transformation error, to a pure energy eigenstate of the final system. In particular, we focus on the initial system consisting of asymptotically many non-interacting subsystems that are either energy-incoherent and non-identical (in different states and with different Hamiltonians), or pure and identical. Within this setting, our main results are given by two theorems. The first one yields the optimal transformation error as a function of the free energy difference between the initial and target states, and the free energy fluctuations in the initial state. This can be seen as an extension of previously derived results on optimal thermodynamic state transformations~\cite{Hayashithermo2017, Chubb2018beyondthermodynamic,Chubb2019_2,korzekwa2019avoiding}. The second theorem provides a precise relation between the free energy fluctuations of the initial state and the minimal amount of free energy dissipated in the optimal thermodynamic distillation process. It is conceptually novel and does not form an extension of previously known results, and as such it constitutes our main contribution. Note that the second theorem employs the first one as one of the building blocks.

Our results allow us for a rigorous study of important thermodynamic protocols. First of all, we extend the analysis of work extraction to the regime of not necessarily identical incoherent states, as well as to pure states. By directly applying our main results, we obtain a second-order asymptotic expression for the optimal transformation error while extracting a given amount of work from the initial system. Moreover, we also verify the accuracy of the obtained expression by comparing it with the numerically optimised work extraction process. As a second application, we analyse the optimal energetic cost of erasing $N$ independent bits prepared in arbitrary states. In this case, we obtained the optimal transformation error for the erasure process as a function of invested work. The last application we consider is the optimal thermodynamically-free communication scheme, i.e., the optimal encoding of information into a quantum system without using any extra thermodynamic resources. Applying our theorems gives us the optimal number of messages that can be encoded into a quantum system in a thermodynamically free-way, which we show to be directly related to the non-equilibrium free energy of the system. This result can be interpreted as the inverse of the Szilard engine, as in this process we use the ability to perform work to encode information. Furthermore, our results connect the fluctuations of free energy and the optimal average decoding error. Finally, our findings also provide new tools to study approximate transformations and corresponding asymptotic interconversion rates. Here, we not only extend previous distillation results~\cite{Chubb2018beyondthermodynamic} to non-identical systems, but also to genuinely quantum states in superposition of different energy eigenstates. 

The chapter is organised as follows. We start in Sec.~\ref{sec:high} with a high-level description that can give a flavour of our investigations and explains the physical intuition behind them to a broad audience without the necessity to get into the technicalities of the framework we work in. We then introduce the central notion of this chapter namely thermodynamic distillation processes and adapt the information-theoretic quantities from Chapter~\ref{C:resource_theory_of_thermodynamics} to the scenario discussed here. In Section~\ref{sec:results}, we state our main results concerning the optimal transformation error and fluctuation-dissipation relation for incoherent and pure states, discuss their thermodynamic interpretation and apply them to three thermodynamic protocols of work extraction, information erasure and thermodynamically-free communication. The technical derivation of the main results can be found in Section~\ref{sec:math}. Finally, we conclude with an outlook in Section~\ref{sec:out}. This way we prove a general fluctuation-dissipation relation for thermodynamic distillation processes.
\newpage

\section{High-level description}
\label{sec:high}

Before formally stating the setting studied in this chapter, let us present a high-level description of our investigations. Our aim is to identify the fluctuation-dissipation phenomenon in the realm of resource-theoretic approach to thermodynamics. To do so, we will extract the main feature captured by the original works of Einstein and Smoluchowski: in order to obtain any ordered motion of a state of the system that is subjected to random forces, we necessarily need to dissipate energy that is proportional to the fluctuations of energy induced by these random forces. The main point of the resource theory of thermodynamics is to determine whether one state can be thermodynamically transformed into another. In our framework, we will examine the effect of the fluctuations present in the initial state of the system on the minimal amount of dissipation during a state transformation process. As we shall see at the end of this section, the proper fluctuating and dissipated quantity in the thermodynamic context will be given by the free energy of the system.

As a warm up, let us start with a simple example, where the goal is to draw work from a given system (this is indeed an example of a state transformation if one includes explicitly an ancillary weight system). More precisely, consider a model system with a continuous, non-degenerate energy spectrum with the ground state of energy $E_0=0$ that is prepared in a probabilistic mixture $\rho$ of different energy eigenstates $\ket{E}$ corresponding to energy $E$, i.e.,
\begin{equation}
    \rho=\int\limits_{0}^{\infty} p(E) \ketbra{E}{E}\, dE, 
\end{equation}
where $p(E)$ is a probability density function describing the system's distribution over energy levels. Our aim is now to use this model system with probabilistic (``fluctuating'') amounts of energy to make an almost deterministic (``ordered'') change of energy of another system. More formally, we are interested in performing $\epsilon$-deterministic work $W$, i.e., in changing the state of the ancillary weight system from one energy eigenstate $\ket{W_0}$ to another energy eigenstate $\ket{W_0+W}$ with probability $1-\epsilon$. 

How can we achieve this? If the distribution $p(E)$ is vanishing for \mbox{$E\in[0,W]$}, then we can couple the two systems and transfer the amount of energy $W$ between them by simply shifting the entire distribution $p(E)$ down by $W$, while moving the weight system up by $W$ (see the top panel of Fig.~\ref{fig:highleveldescript}). Similarly, if the bulk of $p(E)$ is localized far away from the ground energy 0, we can try to perform an analogous protocol, but this time we will fail with probability
\begin{equation}
    \epsilon=\int\limits_{0}^{W} p(E)\, dE,
\end{equation}
since the states with $E\in[0,W]$ cannot be lowered by $W$, as they would need to to be lowered below the ground state.

\begin{figure}[t]
    \centering
    \includegraphics{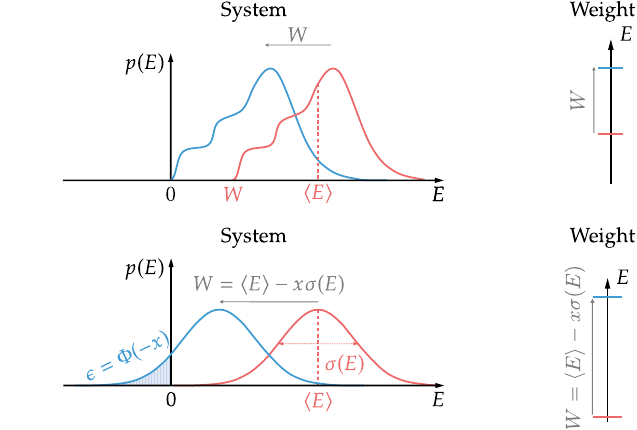}
    \caption{\label{fig:highleveldescript} \emph{Transforming ``fluctuating'' to ``deterministic'' energy.} \textit{Top:} Despite fluctuations of energy, the lowest occupied state of the system is far away from the ground state, and so deterministic amount of work $W$ can drawn from it. However, $W$ is much smaller than the average energy content $\langle E\rangle$ of the system.\textit{ Bottom:} accepting probability of failure $\epsilon$, one can extract deterministic amount of work approaching the average energy $\langle E\rangle$, with the loss proportional to energy fluctuations $\sigma(E)$, where the proportionality constant is determined by $\epsilon$. Here, the initial distribution is assumed to be Gaussian with average $\langle E\rangle$ and standard deviation $\sigma(E)$.}
\end{figure}
To illustrate this more clearly, consider an important example of $p(E)$ given by a Gaussian distribution with a mean $\langle E\rangle$ and standard deviation $\sigma$:
\begin{equation}
    p(E) = \frac{1}{\sqrt{2\pi} \sigma} \exp\left(-\frac{1}{2}\frac{(E-\langle E\rangle)^2}{\sigma^2}\right).
\end{equation}
The importance of this example stems from the fact that in thermodynamics we are interested in total energy distributions of a large number $N$ of particles, and results like the central limit theorem tell us that the distributions of total quantities in such a case are approximated very well by Gaussian distributions. Of course, a Gaussian distribution is non-vanishing below the ground state energy 0, but as long the average energy $\langle E \rangle$ is far away from zero this can be neglected for the sake of our example. In the bottom panel of Fig.~\ref{fig:highleveldescript}, we present how shifting down a Gaussian distribution by its mean $\langle E\rangle$ decreased by a number of standard deviations $x\sigma$ results in an error $\epsilon=\Phi(-x)$, where $\Phi$ is the cumulative normal distribution function. Thus, for a fixed success probability of extracting work $W$ we can extract the average energy content of the system, $\langle E\rangle$, decreased by the quantity proportional to energy fluctuations $\sigma$. In other words, in order to transform the fluctuating type of energy into (almost) deterministic one, we need to lose (dissipate) some of it due to fluctuations. This simple scenario gives an intuition for why fluctuations may be related to dissipation.

Of course, the toy example we analysed above does not account for many features of realistic scenarios. First of all, it deals merely with mechanical work, whereas in thermodynamics one also has access to a thermal bath and can use it to draw even more thermodynamical work. Second, when considering systems of many particles we do not deal with non-degenerate spectrum, but rather at each energy we have a corresponding density of states. As a result, one may not be able to simply shift the distribution down, as there may be less low energy states then high energy states. Next, in the analysed example we only considered the protocol of work extraction, which is a very particular type of a general thermodynamic state transformation that physicists are interested in. Finally, since we deal with quantum mechanics, within each degenerate energy subspace we may deal with coherent superposition of states that can constructively or destructively interfere. Hence, the picture gets even more complex and requires a formalism that can account both for coherent and incoherent contributions to fluctuations. 

Despite these complications, our work extends the original intuition from the simple toy example to general quantum thermodynamic scenarios, including all the features described above. The crucial modification required is replacing the concept of average energy and its fluctuations (relevant in the case of mechanical systems) with the average free energy and its fluctuations (relevant for thermodynamic scenarios).To account for quantum systems prepared in arbitrary non-equilibrium states, one needs to use the non-equilibrium quantum generalisations of the classical expression for free energy, which also allows for the rigorous definition of free energy fluctuations. With these modifications in place, one can employ the above intuition to investigate general thermodynamic distillation processes that transform generic states with fluctuations of free energy into states with no free energy fluctuations (the equivalent of "ordered energy" states). Specifically, we show that, for a fixed success probability of transformation $(1-\epsilon)$, during such a process, the amount of free energy dissipated must be proportional to the initial free energy fluctuations.

\section{Setting the scene}
\label{sec:setting}

\subsection{Thermodynamic distillation processes}
\label{sec:distillation}

A \emph{thermodynamic distillation} process is a thermodynamically free transformation from a general \emph{initial system} described by a Hamiltonian $H$ and prepared in a state $\rho$, to a \emph{target system} described by a Hamiltonian $\tilde{H}$ and in a state $\tilde{\rho}$ that is an eigenstate of $\tilde{H}$.\footnote{In fact, all of our results apply to a slightly more general setting with target states being proportional to the Gibbs state on their support, e.g. for \mbox{$\tilde{\rho}=\frac{\tilde{\gamma}_k}{\tilde{\gamma}_k+\tilde{\gamma}_l}\ketbra{\tilde{E}_k}{\tilde{E}_k}+\frac{\tilde{\gamma}_l}{\tilde{\gamma}_k+\tilde{\gamma}_l}\ketbra{\tilde{E}_l}{\tilde{E}_l}$}, where $\ket{\tilde{E}_i}$ denotes the eigenstate of $\tilde{H}$ and $\tilde{\gamma}_i$ is its thermal occupation.} An \emph{$\epsilon$-approximate thermodynamic distillation} process from $(\rho,H)$ to $(\tilde{\rho},\tilde{H})$ is a thermal operation that transforms the initial system $(\rho,H)$ to the \emph{final system} $(\rho_{\text{fin}},\tilde{H})$ with $\rho_{\text{fin}}$ being $\epsilon$ away from $\tilde{\rho}$ in the infidelity distance $\delta$,
\begin{equation}
        \delta(\rho_1,\rho_2):=1-\left(\mathrm{Tr}{\sqrt{\sqrt{\rho_1}{\rho_2}\sqrt{\rho_1}}}\right)^2.
\end{equation}

Here, we will study the distillation process from $N$ independent initial systems to arbitrary target systems, e.g., to $\tilde{N}$ independent target systems as illustrated in Fig.~\ref{fig:distillation}. In particular, we will be interested in the asymptotic behaviour for large $N$. Thus, our distillation setting is specified by a family of initial and target systems indexed by a natural number $N$. For each fixed $N$, the initial system $(\rho^N,H^N)$ consists of $N$ non-interacting subsystems with the total Hamiltonian $H^N$ and a state $\rho^N$ given by
\begin{marginfigure}[-1.34cm]
	\includegraphics[width=4.651cm]{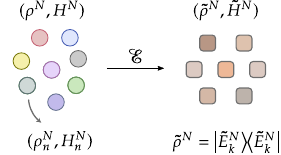}
	\caption[Fig:thermal-cones]{\emph{Thermodynamic distillation process.} The arrow depicts the existence of a thermal operation transforming  $N$ independent initial systems to $\tilde{N}$ independent target systems. The circles and squares represent the initial and target systems with each subsystem described by a different Hamiltonian and prepared in a different state.}
	\label{fig:distillation}
\end{marginfigure}

\begin{equation}\label{eq:initial}
    H^N=\sum_{n=1}^N H^N_{n},\qquad \rho^N=\bigotimes_{n=1}^N \rho^N_{n}, 
\end{equation}
while the target system is described by an arbitrary Hamiltonian $\tilde{H}^N$ and a state $\tilde{\rho}^N=\ketbra{\tilde{E}_k^{N}}{\tilde{E}_k^{N}}$, with $\ket{\tilde{E}_k^{N}}$ being an eigenstate of $\tilde{H}^N$ corresponding to some energy $\tilde{E}_k^{N}$. Note that since $\tilde{H}^N$ is arbitrary, it does not need to describe $N$ particles; in fact, it can even be a Hamiltonian of a single qubit. 

A typical example of this setting is when initial and target systems are given by copies of independent and identical subsystems. More precisely, in this case, the family of initial systems is given by $H^N$ with $H^N_n=H$ and $\rho^N=\rho^{\otimes N}$, while the family of target systems is given by $\tilde{N}$ subsystems, each with a Hamiltonian $\tilde{H}$ and in a state $\ketbra{\tilde{E}_k}{\tilde{E}_k}$. One is then interested in the optimal distillation rate $\tilde{N}/N$ as $N$ tends to infinity. However, we will investigate a more general setting, allowing the subsystems to differ in both state and Hamiltonian, as long as the initial state is uncorrelated.


\subsection{Information-theoretic quantities}

Before we proceed to present our results, let us introduce the necessary information-theoretic quantities. For an initial system $(\rho^N,H^N)$, we introduce the following notation for free energy and free energy fluctuations:
    \begin{subequations}
    \begin{align}
        F^N := \frac{1}{\beta }\sum_{n=1}^N D(\rho^{N}_n\|\gamma^{N}_n),\label{eq:dn}\\
        \sigma^2(F^N) := \frac{1}{\beta^2 }\sum_{n=1}^N V(\rho^{N}_n\|\gamma^{N}_n),\label{eq:vn}\\ 
        \kappa^3(F^N) := \frac{1}{\beta^3 }\sum_{n=1}^N Y(\rho^{N}_n\|\gamma^{N}_n).\label{eq:yn}
    \end{align}
\end{subequations}
We also introduce
\begin{equation}
    \label{eq:deltaF}
    \Delta F^N := \frac{1}{\beta }\left(\sum_{n=1}^N D(\rho^N_n\|\gamma^N_n)-D(\tilde{\rho}^N\|\tilde{\gamma}^N)\right),
\end{equation}
which describes the free energy difference between the initial and target states, as well as
\begin{equation}
    \label{eq:F_diss}
    F^N_{\text{diss}} := \frac{1}{\beta }\left(\sum_{n=1}^N D(\rho^N_n\|\gamma^N_n)-D(\rho^N_{\text{fin}}\|\tilde{\gamma}^N)\right),
\end{equation}
which quantifies the amount of free energy that is dissipated in the distillation process, i.e., the free energy difference between the initial and final states. 

Let us also make two final technical comments. First, we only consider families of initial systems for which the limits of $\sigma^2(F^N)/N$ and $\kappa^3(F^N)/N$ as $N\rightarrow \infty$ are well-defined and non-zero. Second, in what follows, we will use a shorthand notation with $\simeq$, $\lesssim$ and $\gtrsim$ denoting equalities and inequalities up to terms of order $o(\sqrt{N})$.


\section{Optimal distillation error and fluctuation-dissipation relations}
\label{sec:results}
The first pair of our main results concerns thermodynamic distillation processes from incoherent systems. The first theorem connect the optimal distillation error to the free energy fluctuations of the initial state of the system:

\begin{theorem}[Optimal distillation error for incoherent states]
    \label{thm:incoherent}
    For a distillation setting with energy-incoherent initial states, the transformation error $\epsilon_N$ of the optimal $\epsilon$-approximate distillation process in the asymptotic limit is given by
    \begin{equation}
    \label{eq:error_incoherent}
        \lim_{N\to\infty}\epsilon_N = \lim_{N\to\infty}\Phi\left(-\frac{\Delta F^N}{\sigma(F^N)} \right),
    \end{equation}
    where $\Phi$ denotes the cumulative normal distribution function. Moreover, for any $N$ there exists an $\epsilon$-approximate distillation process with the transformation error $\epsilon_N$ bounded by
    \begin{align}
        \label{eq:error_bound}
        \epsilon_N&\leq \Phi\left[-\frac{\Delta F^N}{\sigma(F^N)} \right]+\frac{C\kappa^3(F^N)}{\sigma^3(F^N)},
    \end{align}
    where $C$ is a constant from the Berry-Esseen theorem that is bounded by
    \begin{equation}
        0.4097\leq C \leq 0.4748.
    \end{equation}
\end{theorem}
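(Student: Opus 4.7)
My plan is to reduce this quantum distillation problem to a classical one-shot statistical statement about log-likelihood ratios, and then invoke the central limit theorem (for the asymptotic claim) together with the Berry-Esseen theorem (for the finite-$N$ bound). The reduction proceeds in three stages. First, because $\rho^N=\bigotimes_n\rho^N_n$ is energy-incoherent and $\tilde\rho^N=\ketbra{\tilde E_k^N}{\tilde E_k^N}$ is an energy eigenstate, Theorem~\ref{thm_Thermaloperations} lets me replace the search over thermal operations with a search over Gibbs-preserving stochastic matrices acting on the eigenvalue vector $\v p^N$; moreover, since the target is an eigenstate, the infidelity $\delta(\rho_{\text{fin}}^N,\tilde\rho^N)$ coincides with $1$ minus the diagonal entry of $\rho_{\text{fin}}^N$ at level $\ket{\tilde E_k^N}$. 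Second, I apply the embedding map of Section~\ref{Subsec-thermomajorisation} to turn thermomajorisation by $\v p^N$ into ordinary majorisation by the embedded vector $\hat{\v p}^N$; the embedded target $\hat{\tilde{\v p}}^N$ is a flat state on a block of $\tilde D_k^N = D^N\tilde\gamma_k^N$ indices.

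Third, I characterise the optimal error. For a flat target, approximate majorisation is particularly simple: running the Vidal–Jonathan–Nielsen construction from Section~\ref{Subsec:probabilistic-cones}, or equivalently reading off the value of the Lorenz curve of $\hat{\v p}^N$ at the abscissa $\tilde\gamma_k^N$, the minimal infidelity coincides with the tail sum
\begin{equation}
\epsilon_{\text{opt}}^N \;=\; \sum_{\v i\notin S^N}\prod_{n=1}^N p^N_{n,i_n},
\qquad
S^N=\left\{\v i:\;\sum_{n=1}^N\log\frac{p^N_{n,i_n}}{\gamma^N_{n,i_n}}\;\geq\;-\log\tilde\gamma_k^N\right\}.
\end{equation}
In other words, the optimal transformation error is exactly the probability that the log-likelihood ratio $L^N:=\sum_n\log(p^N_{n,i_n}/\gamma^N_{n,i_n})$, sampled with $i_n\sim p^N_n$, falls below the threshold $\beta\bigl[F^N-\Delta F^N\bigr]$ set by the Gibbs weight of the target level. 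The random variables $X_n:=\log(p^N_{n,i_n}/\gamma^N_{n,i_n})$ are independent with $\mathbb E[X_n]=D(\rho^N_n\|\gamma^N_n)$, $\mathrm{Var}[X_n]=V(\rho^N_n\|\gamma^N_n)$ and $\mathbb E[|X_n-\mathbb E X_n|^3]=Y(\rho^N_n\|\gamma^N_n)$, so $L^N$ has mean $\beta F^N$, variance $\beta^2\sigma^2(F^N)$, and sum of third absolute central moments $\beta^3\kappa^3(F^N)$.

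Having written the error as a tail probability for a sum of independent random variables, the asymptotic equality \eqref{eq:error_incoherent} is an immediate consequence of the central limit theorem applied to $(L^N-\beta F^N)/\bigl(\beta\sigma(F^N)\bigr)$, while the finite-$N$ inequality \eqref{eq:error_bound} follows from the non-uniform Berry–Esseen theorem, whose constant $C$ has the stated numerical bounds. The main obstacle is the third stage: justifying that the tail sum above really is the \emph{optimal} achievable error, both for achievability (exhibiting a Gibbs-preserving map that realises it) and for the converse (ruling out any better approximation). Achievability is handled by the VJN construction, which produces an explicit majorising approximant and thus, after inverting the embedding, a Gibbs-preserving stochastic matrix attaining the bound. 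The converse requires showing that the diagonal element $\bra{\tilde E_k^N}\rho_{\text{fin}}^N\ket{\tilde E_k^N}$ of any thermomajorised output is capped by $f_{\v p^N}^{\beta}(\tilde\gamma_k^N)$, which is precisely the Lorenz-curve value we computed; this is where the combinatorial structure of the embedding and the convexity of the thermomajorisation curve must be used carefully. Everything after that is clean probability theory.
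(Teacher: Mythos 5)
Your high-level strategy matches the paper's: embed, reduce to classical approximate majorisation, express the optimal error as a partial sum of the ordered embedded distribution, interpret it probabilistically via a log-likelihood random variable, and apply CLT/Berry--Esseen. However, there is one genuine gap in the third stage that carries essentially all the difficulty of the theorem, and it is precisely where you wrote ``exactly.''

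The exact optimal error, which follows from Lemma~\ref{lem:1shot-distill}, is the \emph{rank-$L$} partial sum $\epsilon_N = 1 - \sum_{i=1}^{L}(\hat{\v P}^N)_i^\downarrow$, i.e.\ one minus the value of the Lorenz curve of $\hat{\v P}^N$ at abscissa $L=\exp[H(\hat{\v G}^N)+H(\tilde{\v f}_k^N)-H(\tilde{\v\eta}^N)]$. Your expression $\sum_{\v i\notin S^N}\prod_n p^N_{n,i_n}$ is the \emph{threshold} tail $1-\sum_{j:\hat P^N_j\geq 1/L}\hat P^N_j = 1-\sum_{i=1}^{\chi_{\hat{\v P}^N}(L)}(\hat{\v P}^N)_i^\downarrow$. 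These are not equal for finite $N$: since at most $L$ entries can exceed $1/L$, one has $\chi_{\hat{\v P}^N}(L)\leq L$, hence
\begin{equation}
\epsilon_N \;=\; 1 - \sum_{i=1}^{L}(\hat{\v P}^N)_i^\downarrow \;\leq\; 1 - \sum_{i=1}^{\chi_{\hat{\v P}^N}(L)}(\hat{\v P}^N)_i^\downarrow,
\end{equation}
with equality only in the non-generic case where the $L$-th largest entry happens to equal $1/L$. So the threshold sum only gives an \emph{upper bound}, which combined with Berry--Esseen yields Eq.~\eqref{eq:error_bound}, but it does not give the asymptotic equality Eq.~\eqref{eq:error_incoherent}. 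Your ``converse'' discussion (capping the diagonal element by the Lorenz curve) establishes $\epsilon_N$ as the rank-$L$ sum, which is correct, but it does not bridge the gap between the rank-$L$ sum and the threshold sum.

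Closing this gap is the nontrivial content of the proof. The paper does it via the Ordered Summation Bounds lemma: for any $\alpha\geq 1$, $l\leq \chi_{\v p}(\alpha l)/c$ with $c=\sqrt{\alpha}\sum_{i=\chi_{\v p}(\sqrt{\alpha}l)}^{\chi_{\v p}(\alpha l)}p_i^\downarrow$. Choosing $\alpha=e^{\delta\sqrt{N}}$ shifts the CLT threshold from $x$ to $x+\delta$, the factor $c$ is shown via Berry--Esseen to exceed $1$ for $N$ large, so $\epsilon_N\geq 1-\sum_{i=1}^{\chi_{\hat{\v P}^N}(L(x+\delta))}(\hat{\v P}^N)_i^\downarrow$, and letting $\delta\to 0$ after $N\to\infty$ matches the upper limit $\Phi(-x)$. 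Your plan, as written, provides only the upper bound and needs this extra argument for the lower bound.
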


In such a distillation process, the minimal amount of free energy dissipated is related to free energy fluctuations via:

\begin{theorem}[Fluctuation-dissipation relation for incoherent states]
    \label{thm:incoherent2}
    The minimal amount of free energy dissipated in the optimal (minimising the transformation error $\epsilon$) distillation process from identical incoherent states asymptotically satisfies
    \begin{equation}
        \label{eq:diss}
        F^N_{\text{diss}} \simeq a(\epsilon_N)\sigma(F^N),
    \end{equation}
    where 
    \begin{equation}
        \label{eq:a}
       a(\epsilon)=-\Phi^{-1}(\epsilon)(1-\epsilon)+\frac{\exp\Big\{\frac{-[\Phi^{-1}(\epsilon)]^2}{2}\Big\}}{\sqrt{2\pi}}
    \end{equation}
    and $\Phi^{-1}$ is the inverse function of the cumulative normal distribution function $\Phi$.
\end{theorem}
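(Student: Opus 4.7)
\textbf{Proof proposal for Theorem~\ref{thm:incoherent2}.} The plan is to reduce the fluctuation-dissipation claim to an asymptotic calculation built on top of Theorem~\ref{thm:incoherent}. Starting from the definitions in Eqs.~\eqref{eq:deltaF}--\eqref{eq:F_diss}, one has the exact identity
\begin{equation}
    F^N_{\text{diss}} \;=\; \Delta F^N \;+\; \frac{1}{\beta}\Big[\,D(\tilde{\rho}^N\|\tilde{\gamma}^N) - D(\rho^N_{\text{fin}}\|\tilde{\gamma}^N)\,\Big],
\end{equation}
so the task splits into two pieces: controlling $\Delta F^N$ through the optimal error, and computing the relative-entropy defect produced by the actual final state. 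Theorem~\ref{thm:incoherent} inverted to second order gives $\Delta F^N \simeq -\Phi^{-1}(\epsilon_N)\,\sigma(F^N)$, which already accounts for one of the two terms in the claimed expression $a(\epsilon_N)\sigma(F^N)$; what remains is to show that the defect above contributes $[\epsilon_N\Phi^{-1}(\epsilon_N)+\phi(\Phi^{-1}(\epsilon_N))]\,\sigma(F^N)$, where $\phi$ is the standard normal density.

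The next step is to construct the optimal $\rho^N_{\text{fin}}$ explicitly by passing to the embedding picture introduced in Section~\ref{Subsec-thermomajorisation}. Because the target $\tilde{\rho}^N = \ketbra{\tilde{E}^N_k}{\tilde{E}^N_k}$ embeds as the flat vector $\v{f}_k$ of support $\tilde{D}_k^N$, and because the initial state is $\rho^{\otimes N}$ (so the embedded cumulative thermomajorisation curve is controlled by the i.i.d.\ random variable $\log(p_i/\gamma_i)$ of mean $\beta F^N/N$ and variance $\beta^2\sigma^2(F)/N$), the minimum-error final state in the $\beta$-order of $\tilde{\rho}^N$ can be taken diagonal in the eigenbasis of $\tilde{H}^N$ with population $1-\epsilon_N$ on $\ket{\tilde{E}^N_k}$ and the residual mass $\epsilon_N$ distributed on the remaining levels in proportion to their Gibbs weights, so that $\hat{\v p}\succ \hat{\v q}_{\text{fin}}$ saturates at the embedding box boundary $\tilde{D}_k^N/\tilde{Z}^N=\tilde{\gamma}_k^N$. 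This structure is the analogue of the construction underlying Theorem~\ref{thm:incoherent} and yields the clean form $\rho^N_{\text{fin}} = (1-\epsilon_N)\ketbra{\tilde{E}^N_k}{\tilde{E}^N_k}+\epsilon_N\,\tilde{\gamma}^N_{\perp}$, where $\tilde{\gamma}^N_{\perp}$ is the Gibbs state restricted to the orthogonal subspace.

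With this explicit final state one computes directly that, up to additive $o(1)$ corrections,
\begin{equation}
    D(\tilde{\rho}^N\|\tilde{\gamma}^N) - D(\rho^N_{\text{fin}}\|\tilde{\gamma}^N) \;=\; -\log\tilde{\gamma}_k^N + (1-\epsilon_N)\log\tilde{\gamma}_k^N + \epsilon_N\log(1-\tilde{\gamma}_k^N) - h(\epsilon_N),
\end{equation}
where $h$ is the binary entropy. The leading $(1-\epsilon_N)$ contribution cancels the bulk of the relative-entropy gap and the surviving piece has to be rewritten as the shortfall of the embedded free-energy distribution below the threshold selected by the optimal protocol. Formally, invoking the Berry--Esseen bound on the embedded cumulative curve (exactly as in the proof of Theorem~\ref{thm:incoherent}) identifies this shortfall with the truncated Gaussian expectation
\begin{equation}
    \mathbb{E}\!\left[(z_{\epsilon_N}-Z)\mathbb{1}_{Z<z_{\epsilon_N}}\right] = z_{\epsilon_N}\epsilon_N + \phi(z_{\epsilon_N}), \qquad z_{\epsilon_N}:=\Phi^{-1}(\epsilon_N),
\end{equation}
with $Z$ standard normal, multiplied by the free-energy scale $\sigma(F^N)$. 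Combining this with the earlier $-\Phi^{-1}(\epsilon_N)\sigma(F^N)$ contribution from $\Delta F^N$ produces precisely $a(\epsilon_N)\sigma(F^N)$, establishing the relation.

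The main technical obstacle will be the third step: proving that the relative-entropy defect equals the truncated Gaussian moment up to $o(\sqrt{N})$. This requires two ingredients that have to be handled with care. First, the characterisation of the optimal final state must be shown to remain valid when one enlarges the class of candidates beyond those diagonal in the Gibbs-extended basis, for which one needs to exploit the fact that a pure target is an extremal point of the thermomajorisation polytope and invoke the saturation condition at the embedding boundary. Second, the Gaussian approximation of the embedded initial curve must be controlled uniformly across the cut at $z_{\epsilon_N}\sigma(F^N)$, which is precisely the regime in which the Berry--Esseen correction appearing in Eq.~\eqref{eq:error_bound} becomes active; translating the $O(\kappa^3/\sigma^3)$ bound on probabilities into an $o(\sqrt{N})$ bound on the relative-entropy contribution is the delicate step, since the logarithms in $D(\rho^N_{\text{fin}}\|\tilde{\gamma}^N)$ could in principle magnify small errors near the tail.
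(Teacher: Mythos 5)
The overall decomposition $F^N_{\text{diss}} = \Delta F^N + \tfrac{1}{\beta}\bigl[D(\tilde{\rho}^N\|\tilde{\gamma}^N) - D(\rho^N_{\text{fin}}\|\tilde{\gamma}^N)\bigr]$ is a valid identity and a reasonable route, and your identification of the truncated Gaussian moment $z_\epsilon\epsilon + \phi(z_\epsilon)$ as the object that should combine with $-\Phi^{-1}(\epsilon)\sigma(F^N)$ to give $a(\epsilon)\sigma(F^N)$ is correct algebra. However, there is a genuine gap in the middle step: your explicit form of the dissipation-minimising final state is wrong, and wrong in a way that destroys the asymptotics.

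You propose $\rho^N_{\text{fin}} = (1-\epsilon_N)\ketbra{\tilde{E}^N_k}{\tilde{E}^N_k} + \epsilon_N\,\tilde{\gamma}^N_{\perp}$, i.e.\ the residual mass distributed proportionally to Gibbs weights on the complement. While this state does achieve the optimal transformation error (the fidelity only sees the population on the target), it is far from the dissipation-minimising state among those that do. Putting the residual on (restricted) Gibbs maximises the entropy of the residual, which means it throws away all the free energy that the thermomajorisation constraint still allows you to retain. Concretely, with your choice the defect evaluates to $-\epsilon_N\log\tilde{\gamma}_k + O(1) = \epsilon_N(N h_N + x\sqrt{N v_N}) + O(1)$, which is $O(N)$ for any constant $\epsilon_N$, and this completely swamps the $O(\sqrt{N})$ scale of $\Delta F^N$ and of the target quantity $a(\epsilon_N)\sigma(F^N)$. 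The correct characterisation is the content of Lemma~\ref{Opt_Diss_free}: the embedded optimal final state places $(1-\epsilon)/K$ on the first $K$ slots and, crucially, leaves the remaining slots equal to the corresponding sorted tail $(\hat{\v{P}}^N)_i^\downarrow$ of the embedded initial distribution (the choice $B=\mathbbm{1}$ in that lemma). It is precisely this ``tail-preserving'' structure that makes the defect a conditional expectation of the form $\sum_{i=1}^{\tilde{D}_k}(\hat{\v{P}}^N)_i^\downarrow\log(\hat{\v{P}}^N)_i^\downarrow$, which is then what the Berry--Esseen argument turns into the truncated Gaussian moment of scale $\sigma(F^N)$. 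Your plan would go through if you replaced your explicit $\rho^N_{\text{fin}}$ by that tail-preserving state and then handled the un-embedding (which requires the averaging argument of Section~\ref{app:DDF} to show the dissipation from coarse-graining over embedding boxes is exponentially small for i.i.d.\ inputs), but as written the third step uses the wrong state and the rest of the computation does not follow from it.
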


We prove the above theorems in Secs.~\ref{sec:proof1a}~and~\ref{sec:proof2a}, and here we will briefly discuss their scope and consequences. We start by noting that combining Eqs.~\eqref{eq:error_incoherent}~and~\eqref{eq:diss} yields the optimal amount of dissipated free energy as a function of $\Delta F^N$ and $\sigma(F^N)$:
\begin{align}
    F^N_{\text{diss}} \simeq &  \left[1-\Phi\left(-\frac{\Delta F^N}{\sigma(F^N)}\right)\right]\Delta F^N +\frac{\exp\left[-\frac{(\Delta F^N)^2}{2\sigma^2(F^N)}\right]}{\sqrt{2\pi}}\sigma(F^N).\label{eq:diss_delta_sigma}
\end{align}
Now, for the analysed case of independent initial subsystems, free energy fluctuations $\sigma(F^N)$ scale as $\sqrt{N}$. Thus, we can distinguish three regimes, depending on how the free energy difference between the initial and target states, $\Delta F^N$, behaves with growing $N$:
\begin{equation}
    \lim_{N\to\infty} \frac{\Delta F^N}{\sqrt{N}}=\left\{
    \begin{array}{l}
         \infty,  \\
         -\infty, \\
         \alpha\in \mathbb{R}.
    \end{array}
    \right.
\end{equation}

The first case corresponds to the target state having much smaller free energy than the initial state (as compared to the size of free energy fluctuations). According to Eq.~\eqref{eq:error_incoherent}, the transformation error then approaches zero in the asymptotic limit; while according to Eq.~\eqref{eq:diss_delta_sigma}, the amount of dissipated free energy $F^N_{\text{diss}}\simeq \Delta F^N$, i.e., up to second order asymptotic terms the target and final states have the same free energy. This means that one can get arbitrarily close to the target state with much lower free energy than the initial state. The second case corresponds to the target state having much larger free energy than the initial state. The transformation error then approaches one in the asymptotic limit, while the amount of dissipated free energy $F^N_{\text{diss}}$ approaches zero. This means that it is impossible to even get slightly closer to the target state with much higher free energy than the initial state, and so the optimal process corresponds to doing nothing (that is why there is no free energy dissipated). 

Finally, the third case that forms the essence of Theorems~\ref{thm:incoherent}~and~\ref{thm:incoherent2} corresponds to the target state having free energy very close to that of the initial state (again, the scale is set by the magnitude of free energy fluctuations). Our theorems then directly link the optimal transformation error and the minimal amount of dissipated free energy in the process to the free energy fluctuations of the initial state of the system. For two processes with the same free energy difference $\Delta F^N$, the process involving the initial state with smaller fluctuations will yield a smaller transformation error according to Eq.~\eqref{eq:error_incoherent}. Similarly, since the derivative of Eq.~\eqref{eq:diss_delta_sigma} over $\sigma(F^N)$ for a fixed $\Delta F^N$ is always positive, states with smaller free energy fluctuations will lead to smaller free energy dissipation. As a particular example consider a battery-assisted distillation process, i.e. a thermodynamic transformation from \mbox{$(\rho^N\otimes \ketbra{1}{1}_B,H^N+H_B)$} to \mbox{$(\tilde{\rho}^N\otimes \ketbra{0}{0}_B,H^N+H_B)$}, where the energy gap of the battery system $B$ is $W^N_{\mathrm{cost}}$. Now, the quality of transformation from $\rho^N$ to $\tilde{\rho}^N$ (measured by transformation error $\epsilon_N$) depends on the amount of work $W^N_{\mathrm{cost}}$ that we invest into the process. As expected, to achieve $\epsilon\leq 1/2$, we need to invest at least the difference of free energies $[D(\tilde{\rho}^N\|\tilde{\gamma}^N)-D(\rho^N\|\gamma^N)]/\beta$. However, Theorem~\ref{thm:incoherent} tells us how much more work is needed to decrease the transformation error to a desired level: the more free energy fluctuations there were in $\rho^N$, the more work we need to invest.

Let us also compare our two theorems to the results presented in Ref.~\cite{Chubb2018beyondthermodynamic}. There, the authors studied the incoherent thermodynamic interconversion problem between identical copies of the initial system, $\rho^{\otimes N}$, and identical copies of the target system, $\tilde{\rho}^{\otimes \tilde{N}}$. Here, for the price of the reduced generality of the target state (it has to be an eigenstate of the target Hamiltonian), we obtained a four-fold improvement. First, our result applies to general independent systems, not only to identical copies. Second, the Hamiltonians of the initial and target systems can vary, which is particularly important for applications like work extraction or thermodynamically-free communication. Third, we went beyond the second-order asymptotic result and found a single-shot upper bound on the optimal transformation error $\epsilon_N$, Eq.~\eqref{eq:error_bound}, that holds for any finite $N$. Thus, even in the finite $N$ regime, one can get a guarantee on the transformation error that is approaching the asymptotically optimal value as $N\to\infty$. Finally, we derived the expression for the actual amount of dissipated free energy in the optimal process and related it to the fluctuations of the free energy content of the initial state.

Our second pair of main results is analogous to the first pair, but concerns thermodynamic distillation process from $N$ identical copies of a pure quantum system. Thus, the following two theorems connect the optimal distillation error to the free energy fluctuations of the initial state of the system, and the minimal amount of free energy dissipated in such a distillation process to these fluctuations. To formally state these theorems, we need to introduce a technical notion of a Hamiltonian with incommensurable spectrum. Given any two energy levels, $E_i$ and $E_j$, of such a Hamiltonian, there does not exist natural numbers $m$ and $n$ such that $m E_i=n E_j$. We then have that the optimal distillation error for identical pure state is given by the Theorem:

\begin{theorem}[Optimal distillation error for identical pure states]
    \label{thm:pure}
    For a distillation setting with $N$ identical initial systems, each in a pure state $\ketbra{\psi}{\psi}$ and described by the same Hamiltonian $H$ with incommensurable spectrum, the transformation error $\epsilon_N$ of the optimal $\epsilon$-approximate distillation process in the asymptotic limit is given by
    \begin{equation}
        \label{eq:error_pure}
            \lim_{N\to\infty}\epsilon_N = \lim_{N\to\infty}\Phi\left(-\frac{\Delta F^N}{\sigma(F^N)} \right),
    \end{equation}
    where $\Phi$ denotes the cumulative normal distribution function. Moreover, the result still holds if both the initial and target systems get extended by an ancillary system with an arbitrary Hamiltonian $H_A$, with the initial and target states being some eigenstates of $H_A$.
\end{theorem}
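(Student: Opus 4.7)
The plan is to reduce the pure-state distillation problem to the incoherent one covered by Theorem~\ref{thm:incoherent}, and then recombine the pieces. The structural observation that makes this possible is that, because the single-copy Hamiltonian $H$ has an incommensurable spectrum, the total Hamiltonian \mbox{$H^N=\sum_{n=1}^{N} H_n$} has a non-degenerate spectrum: any two multi-indices $\v{i}\neq\v{j}$ give distinct eigenvalues \mbox{$E_{\v{i}}=\sum_n E_{i_n}$}. This non-degeneracy is what will let us handle the coherences of $\ket{\psi}^{\otimes N}$ inside a thermal-operation framework that a priori only moves populations cleanly.

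For the achievability direction, I would first expand \mbox{$\ket{\psi}^{\otimes N}=\sum_{\v{i}}\alpha_{\v{i}}\ket{E_{\v{i}}}$} in the eigenbasis of $H^N$ and use a typical-subspace argument to show that, by the central limit theorem applied to $|\alpha_{\v{i}}|^2$, most of the weight concentrates on an energy shell around $N\langle H\rangle_\psi$ of thickness $O(\sqrt{N})$. On that shell, the amplitudes $\alpha_{\v{i}}$ correspond, thanks to incommensurability, to genuinely distinct eigenstates of $H^N$, so the state is a highly coherent superposition across a large non-degenerate subspace. I would then construct a thermal operation that, by coupling the system to a large coherent auxiliary mode through an energy-preserving unitary (a coherence-to-athermality conversion in the spirit of the resource theory of asymmetry), converts the coherent content of $\ket{\psi}^{\otimes N}$ into an essentially incoherent state whose free energy matches the full quantum value $N D(\ketbra{\psi}{\psi}\|\gamma)$, and whose free-energy variance matches $N V(\ketbra{\psi}{\psi}\|\gamma)$, up to $o(\sqrt{N})$ corrections. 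Once this is done, Theorem~\ref{thm:incoherent} can be applied as a black box to obtain the upper bound \mbox{$\epsilon_N\leq \Phi(-\Delta F^N/\sigma(F^N))+o(1)$}.

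For the converse direction, I would rely on the monotonicity of the hypothesis-testing relative entropy \mbox{$D_{H}^{\epsilon}(\rho\|\sigma)$} under thermal operations, together with its second-order asymptotic expansion for iid states,
\begin{equation*}
D_{H}^{\epsilon}(\ketbra{\psi}{\psi}^{\otimes N}\|\gamma^{\otimes N})\simeq N\,D(\ketbra{\psi}{\psi}\|\gamma)+\sqrt{N\,V(\ketbra{\psi}{\psi}\|\gamma)}\,\Phi^{-1}(\epsilon).
\end{equation*}
Because the target $\tilde{\rho}^N$ is an eigenstate of $\tilde{H}^N$, the covariance of thermal operations lets one dephase the output in the energy basis without worsening the fidelity, reducing the converse to a statement about the incoherent output. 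Matching $D_{H}^{\epsilon}(\ketbra{\psi}{\psi}^{\otimes N}\|\gamma^{\otimes N})$ with $D(\tilde{\rho}^N\|\tilde{\gamma}^N)$ and solving for $\epsilon$ then reproduces~\eqref{eq:error_pure}.

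The extension with an auxiliary system described by $H_A$ is a bookkeeping step, since the ancilla, being in an eigenstate of $H_A$ both initially and finally, contributes an additive and state-independent term to the free energy and zero to the fluctuations, leaving $\Delta F^N$ and $\sigma(F^N)$ unchanged. The hardest step in the whole program is the coherent-free-energy extraction in the achievability part: this is where incommensurability is genuinely needed, because without it there could be degeneracies in $H^N$ that trap coherences in subspaces on which thermal operations act trivially, which would prevent one from saturating the full quantum value of $\Delta F^N$ and would instead yield only the strictly smaller incoherent rate.
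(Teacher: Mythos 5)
Your structural observations are partly on track: incommensurability giving a non-degenerate spectrum of $H^N$ is indeed essential, and the idea of reducing to an incoherent problem is exactly right. The converse direction via the hypothesis-testing relative entropy $D_H^\epsilon$ and its second-order i.i.d.\ expansion is also a legitimate alternative route (the paper uses precisely this technique in its optimality argument for the communication rate, Sec.~\ref{app:optimality}). The paper's own converse for Theorem~\ref{thm:pure} instead works directly with ordered partial sums via the $\chi$-function bounds, but both should yield the matching lower bound on~$\epsilon_N$.

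The genuine gap is in your achievability step. You propose ``coupling the system to a large coherent auxiliary mode through an energy-preserving unitary'' to perform a ``coherence-to-athermality conversion'' so that the free energy of the resulting incoherent state matches the \emph{full} quantum value $ND(\psi\|\gamma)$. This is both unnecessary and problematic: there is no need to extract anything from the coherences, because there is nothing to extract at the relevant asymptotic order. The paper's key observation (Lemma~\ref{lem:pure-to-incoherent}) is that, since the target state is incoherent and thermal operations are time-translation covariant, the problems from $\rho$ and from $\D(\rho)$ are \emph{identical}: $\E^\beta(\rho)=\sigma \Leftrightarrow \E^\beta[\D(\rho)]=\sigma$ for incoherent $\sigma$. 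So one dephases the input for free, no auxiliary coherent mode required. The dephased state $\D(\psi^{\otimes N})$, after diagonalising within energy blocks (free operations), is described by the multinomial distribution $P^N_{\v{k}}$, and the free-energy penalty from dephasing is exactly the entropy of this distribution, $H(\v{P}^N)=O(\log N)$, which is $o(\sqrt{N})$ and therefore invisible at second order. If you instead try to literally design a covariant thermal operation that converts coherence into athermality matching the full $ND(\psi\|\gamma)$, you run into the asymmetry-theory obstruction you yourself allude to --- time-translation covariance prevents such a conversion without an external clock reference, and that reference would itself be a non-free resource. The correct resolution is not to convert the coherence but to recognise that the target's incoherence makes it irrelevant to leading and second order.

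One smaller issue: you apply the dephasing step in the converse direction to the \emph{output}, whereas the clean version dephases the \emph{input} and applies symmetrically to both directions, which is what collapses the whole problem to the incoherent multinomial case that Theorem~\ref{thm:incoherent}'s machinery then handles. The ancilla extension you describe is fine as stated.
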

As before, such processes are accompanied by some free energy dissipation, which specifically satisfies:

\begin{theorem}[Fluctuation-dissipation relation for identical pure states]
    \label{thm:pure2}
   The minimal amount of free energy dissipated in the optimal (minimising the transformation error $\epsilon$) distillation process from identical pure states asymptotically satisfies
    \begin{equation}
        \label{eq:diss_pure}
        F^N_{\text{diss}} \gtrsim a(\epsilon_N)\sigma(F^N),
    \end{equation}
    where $a(\epsilon)$ is given by Eq.~\eqref{eq:a}.
\end{theorem}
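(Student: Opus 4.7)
The plan is to mirror the structure of Theorem~\ref{thm:incoherent2}, reducing the pure-state distillation problem to an effective incoherent one and then importing the dissipation bound from that setting, while carefully tracking where the equality $\simeq$ degrades to the inequality $\gtrsim$. First I would invoke Theorem~\ref{thm:pure} to parametrize the optimal error as $\epsilon_N\simeq\Phi(-\Delta F^N/\sigma(F^N))$, so that the target inequality $F^N_{\text{diss}}\gtrsim a(\epsilon_N)\sigma(F^N)$ becomes equivalent to
\begin{equation*}
 \frac{1}{\beta}D(\rho^N_{\text{fin}}\|\tilde\gamma^N) \lesssim F^N - \Delta F^N(1-\epsilon_N) - \sigma(F^N)\frac{\exp[-(\Delta F^N)^2/(2\sigma^2(F^N))]}{\sqrt{2\pi}},
\end{equation*}
following the same algebraic manipulation that turns Eq.~\eqref{eq:diss} into Eq.~\eqref{eq:diss_delta_sigma}. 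The task is therefore to upper bound the relative entropy of any thermodynamically attainable final state that lies within infidelity $\epsilon_N$ of the target eigenstate.

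Second, I would exploit two structural facts. The variance $V(\ketbra{\psi}{\psi}\|\gamma)$ reduces via Eq.~\eqref{eq:energy_fluct} to the pure energy variance, so $\sigma(F^N)$ coincides with the energy fluctuation of the dephased state $\bar\rho^{\otimes N}$; only the mean free energy is shifted. The fidelity constraint $\langle\tilde E_k|\rho^N_{\text{fin}}|\tilde E_k\rangle\geq 1-\epsilon_N$ forces $\rho^N_{\text{fin}}$ to be nearly diagonal in the target eigenbasis, so a gentle-measurement or Fannes-type estimate should give $D(\rho^N_{\text{fin}}\|\tilde\gamma^N)$ to leading order by the relative entropy of its dephasing in that basis, with a correction of order $o(\sqrt N)$ hidden under $\lesssim$. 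Covariance of thermal operations then allows the whole process to be replaced by an incoherent channel between the dephased initial and (nearly diagonal) final states without increasing the attainable free energy of the final state.

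Third, I would feed the resulting incoherent instance into the machinery behind Theorem~\ref{thm:incoherent2}: the extremal cumulative-distribution comparison that produces the Gaussian-corrected bound $a(\epsilon_N)\sigma(F^N)$ only sees the energy distribution and the transformation error, both of which are preserved by the reduction. Any additional coherent free-energy content of $\ketbra{\psi}{\psi}^{\otimes N}$ beyond its dephased version can only be dissipated—not recovered in the incoherent eigenstate target—so it adds to $F^N_{\text{diss}}$, explaining why one gets $\gtrsim$ rather than $\simeq$. The extension with an ancillary eigenstate Hamiltonian $H_A$ is immediate, because the ancilla contributes deterministically to both $F^N$ and $\sigma(F^N)$ without affecting the fluctuation structure.

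The main obstacle is rigorously executing the second step in the second-order regime: one must show that the incommensurability of the spectrum of $H$ suppresses the potential advantage offered by coherences between distinct energy levels in $\ketbra{\psi}{\psi}^{\otimes N}$ at the $\sqrt N$ scale. Equivalently, I need to rule out that clever exploitation of these coherences could raise $D(\rho^N_{\text{fin}}\|\tilde\gamma^N)$ above the dephased-optimal value by more than $o(\sqrt N)$. I expect the incommensurability hypothesis, already used in the proof of Theorem~\ref{thm:pure} to make the Gaussian approximation of the energy measurement statistics of $\ketbra{\psi}{\psi}^{\otimes N}$ valid, to supply precisely the decoupling of coherent and incoherent free-energy contributions needed to close this gap.
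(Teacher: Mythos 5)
Two steps of your reduction do not go through as stated. First, the Fannes/gentle-measurement step: the transformation error $\epsilon_N$ is a \emph{constant}, not vanishing, so any continuity estimate comparing $D(\rho^N_{\text{fin}}\|\tilde\gamma^N)$ with the relative entropy of its dephasing carries an error of order $\epsilon_N$ times the range of $\log\tilde\gamma^N$, i.e.\ $O(N)$, not the $o(\sqrt N)$ you need; the paper never makes such a comparison, but instead works from the outset inside the incoherent picture provided by Lemma~\ref{lem:pure-to-incoherent} and characterises the optimal final state exactly via Lemma~\ref{Opt_Diss_free}. Second, and more importantly, dephasing with respect to the \emph{total} (incommensurable) Hamiltonian does not give the i.i.d.\ product $\D(\psi)^{\otimes N}$: after free unitaries the relevant classical object is the multinomial distribution of Eq.~\eqref{eq:Multinomial_prob_vector} over type classes, whose entropy is only $O(\log N)$. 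This is not an instance of the setting of Theorem~\ref{thm:incoherent2} (independent subsystem distributions, Berry--Esseen over independent summands), so its machinery cannot be imported wholesale; and if you instead dephase subsystem-wise, your claim that ``only the mean free energy is shifted'' fails, since $V(\D(\psi)\|\gamma)$ does not equal the energy variance of $\psi$ in general. The paper must redo the second-order analysis for the multinomial: a multivariate CLT, an integral over the half-space $\sqrt N\,\v{s}\cdot\v{E}+\Delta F^N\geq 0$, and a separate argument (Section~\ref{app:log}) that $\log P^N_{\v{k}(\v{s})}=O(\log N)$ so it can be dropped from the constraint. Incommensurability enters there only to ensure each type class has a distinct total energy, not as a decoupling of coherent from incoherent free energy.

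You also misattribute the source of the one-sided bound. The coherent excess of free energy of $\psi^{\otimes N}$ over its dephased version equals $H(\v{P}^N)/\beta=O(\log N)$, which is invisible at the scale $\sigma(F^N)=O(\sqrt N)$ and therefore cannot explain why the theorem gives $\gtrsim$ rather than $\simeq$. In the paper (Section~\ref{sec:proof2b}) the inequality arises from unembedding: the optimal embedded final state is fixed only up to a permutation, and while in the i.i.d.\ incoherent case one can rearrange it to be uniform in all but exponentially few embedding boxes (Section~\ref{app:DDF}), in the multinomial case the number of non-uniform boxes is not exponentially small, so unembedding may cause additional, uncontrolled dissipation and only a lower bound on $F^N_{\text{diss}}$ survives. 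Your final step, using Theorem~\ref{thm:pure} to write $\Delta F^N=-\Phi^{-1}(\epsilon_N)\sigma(F^N)$, does match the paper's conclusion, but the core of the argument requires the multinomial computation rather than a reduction to Theorem~\ref{thm:incoherent2}.
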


We prove the above theorems in Secs.~\ref{sec:proof1b}~and~\ref{sec:proof2b}, while here we will only add one comment to the previous discussion. Namely, since for a pure state the free energy fluctuations are just the energy fluctuations
\begin{align}
    \label{eq:energy_fluct0}
    \frac{1}{\beta^2}V(\ketbra{\psi}{\psi}\|\gamma)&=\bra{\psi} H^2\ket{\psi}-\bra{\psi} H\ket{\psi}^2,
\end{align}
and because in the considered scenario all pure states are identical, we have
\begin{equation}
    \label{eq:energy_fluc}
    \frac{1}{N}\sigma^2(F^N)=\langle H^2 \rangle_{\psi}-\langle{H}\rangle_{\psi}^2,
\end{equation}
where we use a shorthand notation \mbox{$\langle \cdot\rangle_\psi=\matrixel{\psi}{\cdot}{\psi}$}. Analogously to the incoherent case, the only non-trivial behaviour of the optimal transformation error happens when $\Delta F^N\simeq \alpha \sqrt{N}$, and its value is then specified by the ratio $\alpha/(\langle H^2 \rangle_{\psi}-\langle{H}\rangle_{\psi}^2)^{1/2}$.


\section{Applications}

In the last section, we characterised optimal thermodynamic distillation processes and subsequently proved a relation between the amount of free energy dissipated in such processes and the free energy fluctuations of the system's initial state. We now apply these results to determine the optimal performance of thermodynamic protocols, including work extraction, information erasure, and thermodynamically-free communication, up to second-order asymptotics in the number $N$ of processed systems.

\subsection{Optimal work extraction}
\label{sec:work-res}

As the first application of our results, we focus on work extraction process from a collection of $N$ non-interacting subsystems with Hamiltonians $H^N_n$ and in incoherent states $\rho^N_n$. As already described in Sec.~\ref{Subsec:work-extraction}, this is just a particular case of a thermodynamic distillation process. We only need to note that the pure battery state does not contribute to fluctuations $\sigma$ and $\kappa$, and that the difference between non-equilibrium free energies of the ground and excited battery states is just the energy difference $W^N_{\text{ext}}$. Then, Theorem~\ref{thm:incoherent} tells us that, in the asymptotic limit, the optimal transformation error for extracting the amount of work $W^N_{\text{ext}}$ is
\begin{equation}
\label{eq:workanderror}
    \lim_{N\to \infty} \epsilon_N = \lim_{N\to\infty}\Phi\left[\frac{W_{\text{ext}}^N-F^N}{\sigma(F^N)}\right].
\end{equation}
We thus clearly see that again we have three cases dependent on the difference $(W_{\text{ext}}^N-F^N)$. To get the asymptotic error different from zero and one, the extracted work $W^N_{\text{ext}}$ has to be of the form
\begin{equation}
    W^N_{\text{ext}}\simeq F^N - \alpha\sqrt{N},
\end{equation}
for some constant $\alpha$. Combining the above two equations yields the following second-order asymptotic expression for the extracted work:
\begin{equation}
    \label{eq:work}
    W_{\text{ext}}^N\simeq F^N + \sigma(F^N)\Phi^{-1}(\epsilon).
\end{equation}
Thus, for a fixed quality of extracted work measured by~$\epsilon$, more work can be extracted from states with smaller free energy fluctuations (assuming that the initial free energy $F^N$ is fixed). This is a direct generalisation of the result obtained in Ref.~\cite{Chubb2018beyondthermodynamic} to a scenario with non-identical initial systems and with a cleaner interpretation of the error in the battery system. We present the comparison between our bounds and the numerically optimised work extraction processes in Fig.~\hyperref[fig:work-numerics]{\ref{fig:work-numerics}a} 

\begin{figure*}
	\includegraphics{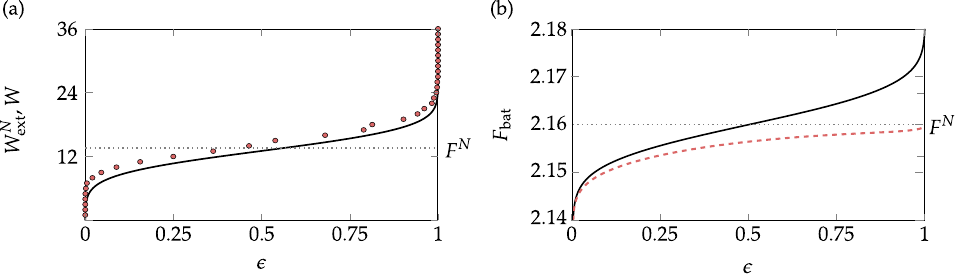}    
	\caption{\emph{Optimal work extraction.} 
		(a) Comparison between the asymptotic approximation, Eq.~\eqref{eq:work}, for the optimal amount of extracted work $W^N_{\mathrm{ext}}$ (solid black line) as a function of transformation error $\epsilon$, and the actual optimal value $W$ (red circles) obtained by explicitly solving the thermomajorisation conditions (see Chapter~\ref{C:mathematical_preliminaries} for details). The inverse temperature of the thermal bath is chosen to be $\beta=1$, while the initial system is composed of 100 two-level subsystems. The first 59 subsystems are described by the Hamiltonian corresponding to a thermal state $0.6\ketbra{0}{0}+0.4\ketbra{1}{1}$, and the remaining 41 subsystems have the Hamiltonian leading to a thermal state $0.75\ketbra{0}{0}+0.25\ketbra{1}{1}$. The initial state of the system is given by 59 copies of a state $0.9\ketbra{0}{0}+0.1\ketbra{1}{1}$ and 41 copies of a state $0.7\ketbra{0}{0}+0.3\ketbra{1}{1}$. The non-equilibrium free energy of the total initial system, $F^N$, is indicated by a grey dotted line. (b) Non-equilibrium free energy $F_{\textrm{bat}}$ of the two-level battery system calculated for the final (dashed red line) and target (solid black line) state of the optimal work extraction process. The inverse temperature of the thermal bath is chosen to be $\beta = 1$, and the initial state that the work is extracted from is composed of 100 copies of a state $0.7\ketbra{0}{0}+0.3\ketbra{1}{1}$. Each subsystem is described by the Hamiltonian corresponding to a thermal state $0.6\ketbra{0}{0}+0.4\ketbra{1}{1}$ and the non-equilibrium free energy of the total initial system, $F^N$, is indicated by a grey dotted line
		\label{fig:work-numerics}}
\end{figure*}

Similarly, by employing Theorem~\ref{thm:pure}, we can investigate the optimal work extraction process from a collection of $N$ non-interacting subsystems with identical Hamiltonians $H$ and each in the same  pure state $\ketbra{\psi}{\psi}$. We simply need to choose the ancillary system $A$ to be the battery $B$ with energy splitting $W^N_{\mathrm{ext}}$ and the initial and target states to be given by $\ket{0}_B$ and $\ket{1}_B$. Also, since all systems are in identical pure states and have the same Hamiltonian, we have $\sigma(F^N)$ specified by Eq.~\eqref{eq:energy_fluc} and
\begin{align}
    \frac{1}{N}{F}^N&=\langle H\rangle_\psi+\frac{\log Z}{\beta}.
\end{align}
As a result, the optimal amount of work extracted from $N$ pure quantum systems up to second-order asymptotic expansion is given by:
\begin{equation}
    \label{eq:work_pure}
    \!\!\!W_{\text{ext}}\!\simeq\! N\left(\!\langle H\rangle_\psi+\frac{\log Z}{\beta} + \frac{\langle H^2\rangle_\psi-\langle H\rangle_\psi^2}{\sqrt{N}}\Phi^{-1}(\epsilon)\!\right)\!.\!
\end{equation}

Finally, let us note that we can employ Theorem~\ref{thm:incoherent2} (and to some extent also Theorem~\ref{thm:pure2}) to investigate the meaning of work quality measured by the transformation error~$\epsilon$. So far we have measured the extracted work as the difference between the free energy of the initial battery's state and its target state that was obtained with success probability $1-\epsilon$. However, due to the aforementioned theorems, we know precisely the free energy of the actual final state of the battery, which can be used to quantify the actual amount of extracted work (with no error). In Fig.~\hyperref[fig:work-numerics]{\ref{fig:work-numerics}b} we present the behaviour of both measures as a function of $\epsilon$, where it is clear that the two notions coincide for small error~$\epsilon$.


\subsection{Optimal cost of erasure}
\label{sec:erasure-res}

In order to obtain the optimal work cost of erasing $N$ two-level systems prepared in incoherent states $\rho^N_n$, we apply Theorem~\ref{thm:incoherent} analogously as in the previous section, but this time to the scenario described in Sec.~\ref{subsec:information-erasure}. We then get the optimal transformation error in the erasure process given by
\begin{equation}
    \lim_{N\to \infty} \epsilon_N = \lim_{N\to\infty}\Phi\left(\frac{\frac{1}{\beta}S(\rho^N)-W^N_{\mathrm{cost}}}{\sigma(F^N)}\right),
\end{equation}
where $S(\rho^N)$ is the entropy of the initial state, and \mbox{$W^N_{\mathrm{cost}}$} is the invested work cost. Using analogous reasoning as in the case of work extraction, we can now obtain the second-order asymptotics for the cost of erasure:
\begin{equation}
    W^N_{\mathrm{cost}}\simeq \frac{S(\rho^N)}{\beta}  - \sigma(F^N)\Phi^{-1}(\epsilon).
\end{equation}

Let us make three brief comments on the above result. First, we only considered the application of the incoherent result, Theorem~\ref{thm:incoherent}, as in the case of trivial Hamiltonians, the erasure of a pure state $\ketbra{\psi}{\psi}^{\otimes N}$ is free (because all unitary transformations are then thermodynamically-free). Of course, our results straightforwardly extend to non-trivial Hamiltonians, but we believe that the simple case we described above is most illustrative and recovers the spirit of the original Landauer's erasure scenario. Second, since the maximally mixed initial state has vanishing free energy fluctuations, $\sigma(F^N)=0$, we cannot directly apply our result (that relates fluctuations of the initial state to dissipation) to get the erasure cost of $N$ completely unknown bits of information. However, using the tools described in Sec.~\ref{sec:math}, it is straightforward to show that in this case, the exact expression (working for all $N$) for the erasure cost is given by
\begin{equation}
    W^N_{\mathrm{cost}}= \frac{N}{\beta}\left( \log 2 - \frac{\log(1-\epsilon)}{N}\right).
\end{equation}
Thus, for the case of zero error one recovers the Landauer's cost of erasure~\cite{Alicki2004}. Third, analogously to the case of work extraction, here also Theorem~\ref{thm:incoherent2} can be employed to investigate the meaning of erasure quality quantified by $\epsilon$.


\subsection{Optimal thermodynamically-free communication rate}
\label{sec:encoding-res}

Finally, we now explain how Theorems~\ref{thm:incoherent}~and~\ref{thm:pure} allow one to obtain the optimal thermodynamically-free encoding rate into a collection of $N$ identical subsystems in either incoherent or pure states. We simply choose the target system to be a single $M$-dimensional quantum system with a trivial Hamiltonian $\tilde{H}=0$ that is prepared in any of the degenerate eigenstates of $\tilde{H}$. Note that the non-equilibrium free energy of such a target system is given by
\begin{equation}
    \frac{1}{\beta}D(\tilde{\rho}^N\|\tilde{\gamma}^N)=\frac{1}{\beta}\log M.
\end{equation}
Our theorems then tell us that in the asymptotic limit, the optimal transformation error $\epsilon$ in the considered distillation process is given by
\begin{equation}
    \lim_{N\to \infty} \epsilon_N = \lim_{N\to\infty}\Phi\left(\frac{\frac{1}{\beta}\log M-F^N}{\sigma(F^N)}\right).
\end{equation}
Rewriting the above, we get the following second-order asymptotic behaviour:
\begin{equation}
    \log M\simeq \beta F^N+\beta\sigma(F^N)\Phi^{-1}(\epsilon).
\end{equation}

Now, the distillation process above can be followed by unitaries that map between $M$ degenerate eigenstates of $\tilde{H}$ that we will simply denote $\ket{1},\dots,\ket{M}$. Crucially, note that such unitaries are thermodynamically-free because they act in a fixed energy subspace. Such a protocol then allows one to encode $M$ messages into $M$ states $\sigma_i$, each one being $\epsilon$-close in infidelity to $\ket{i}$ for $i\in\{1,\dots,M\}$. Decoding the message using a measurement in the eigenbasis of $\tilde{H}$ then leads to the average decoding error $\epsilon_{\mathrm{d}}$ satisfying:
\begin{equation}
    1-\epsilon_{\mathrm{d}}:=\frac{1}{M}\sum_{i=1}^M \matrixel{i}{\sigma_i}{i}=1-\epsilon,
\end{equation}
so that $\epsilon_{\mathrm{d}}=\epsilon$.

Using the communication protocol described above, we then get the following asymptotic lower bound on the optimal thermodynamically-free encoding rate into a state $\rho^N$ [recall Eq.~\eqref{eq:optimal-encoding-rate}]\sidenote{The optimal encoding rate is given by:
\begin{equation*}
    R(\rho^N, \epsilon_{\mathrm{d}}) := \frac{\log [M(\rho^N, \epsilon_{\mathrm{d}})]}{N} \,.   
\end{equation*}}:
\begin{equation}
    R(\rho^N,\epsilon_{\mathrm{d}}) \geq \frac{\beta}{N}(F^N+\sigma(F^N)\Phi^{-1}(\epsilon_{\mathrm{d}}))+o\left(\frac{1}{\sqrt{N}}\right).
\end{equation}
The above lower bound is exactly matching the upper bound for $R(\rho^N,\epsilon_{\mathrm{d}})$ recently derived in Ref.~\cite{korzekwa2019encoding} for a slightly different scenario with $\rho^N_n=\rho$ and $H^N_n=H$ for all $n$, with $\tilde{H}^N=H^N$, and with Gibbs-preserving operation instead of thermal operations. However, the proof presented there can be easily adapted to work in the current case if we keep the first restriction, i.e., when the initial state is $\rho^N=\rho^{\otimes N}$ and all initial subsystems have equal Hamiltonians. We explain in detail how to adapt that proof in Section~\ref{app:optimality}, where we also explain what technical result concerning hypothesis testing relative entropy needs to be proven in order to make the proof also work when subsystems are not identical. Here we conclude that
\begin{equation}
  \!\!\!R(\rho^{\otimes N},\epsilon_{\mathrm{d}}) \!=\! D(\rho\|\gamma)+\frac{\sqrt{V(\rho\|\gamma)}}{\sqrt{N}}\Phi^{-1}(\epsilon_{\mathrm{d}})+o\!\left(\!\frac{1}{\sqrt{N}}\!\right)\!,\!
\end{equation}
where $\rho$ is either a pure or incoherent state.

The above result can be thermodynamically interpreted as the inverse of the Szilard engine. While the Szilard engine converts bits of information into work, the protocol studied here employs the free energy of the system (i.e., the ability to perform work) to encode bits of information. While the asymptotic result was recently proven in Ref.~\cite{narasimhachar2019quantifying}, here we proved that this relation is deeper as it also connects fluctuations of free energy to the optimal average decoding error.

\section{Derivation of the results}
\label{sec:math}

In what follows, we first introduce the mathematical formalism used to study the incoherent distillation process. We then use it to prove Theorems~\ref{thm:incoherent}~and~\ref{thm:incoherent2}. Finally, we also prove Theorems~\ref{thm:pure}~and~\ref{thm:pure2} by first mapping the problem of distillation from pure states to an equivalent incoherent problem, and then using the formalism of incoherent distillations.

\subsection{Incoherent distillation process}
\label{sec:incoherent}

\begin{center}
    \emph{\textbf{Distillation conditions via approximate majorisation}}
\end{center}

We begin by restating the crucial theorem based on Ref.~\cite{brandao2015second} regarding thermodynamic interconversion for incoherent states.
\begin{theorem}[Approximate thermomajorisation]
	\label{thm:thermo_int}
	For the initial and target system with the same thermal distribution $\v{\gamma}$, there exists a thermal operation mapping between an energy-incoherent state $\v{p}$ and a state $\epsilon$-close to $\v{q}$ in infidelity distance, if and only if $\hat{\v{p}}\succ_\epsilon \hat{\v{q}}$.
\end{theorem}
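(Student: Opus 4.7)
The plan is to reduce the approximate thermomajorisation condition to an approximate majorisation condition on the embedded vectors, and then invoke the exact correspondences that were already proven earlier in the excerpt. Concretely, Theorem~\ref{thm_Thermaloperations} allows us to replace thermal operations on energy-incoherent states by Gibbs-preserving stochastic matrices on their eigenvalue vectors, while Lemma~\ref{Lemma:GP-majorisation} lets us exchange a Gibbs-preserving matrix between $\v p$ and $\v q$ for a bistochastic matrix between $\hat{\v p}$ and $\hat{\v q}$; by Theorem~\ref{Thm:HLP} the latter is equivalent to $\hat{\v p}\succ\hat{\v q}$. The only extra ingredient we need is the fact that the embedding $\Gamma$ preserves the infidelity, which was noted in the excerpt as $\delta(\v p,\v q)=\delta(\hat{\v p},\hat{\v q})$. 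With these pieces in place, the theorem becomes a statement about post-majorisation in the embedded space, and the task is to move a witness back and forth across the embedding.

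For the necessity direction, I would start with a thermal operation $\mathcal{E}^\beta$ that maps $\v p$ to some $\tilde{\v q}$ with $\delta(\tilde{\v q},\v q)\leq\epsilon$. Theorem~\ref{thm_Thermaloperations} gives a Gibbs-preserving stochastic matrix taking $\v p$ into $\tilde{\v q}$, which by Lemma~\ref{Lemma:GP-majorisation} is equivalent to $\hat{\v p}\succ\hat{\tilde{\v q}}$. Because the embedding preserves fidelity, $\delta(\hat{\tilde{\v q}},\hat{\v q})=\delta(\tilde{\v q},\v q)\leq\epsilon$, so $\hat{\tilde{\v q}}$ serves as the required witness for $\hat{\v p}\succ_\epsilon\hat{\v q}$.

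The sufficiency direction is where the real work lies. From $\hat{\v p}\succ_\epsilon\hat{\v q}$ we extract a vector $\v r\in\Delta_D$ with $\hat{\v p}\succ\v r$ and $\delta(\v r,\hat{\v q})\leq\epsilon$, but \emph{a priori} $\v r$ need not lie in the image of the embedding, so it does not directly correspond to a $d$-dimensional state. The key idea is to ``smooth'' $\v r$ inside each embedding box: define $\tilde{q}_i=\sum_{j\in\mathrm{box}_i} r_j$ and note that the map $\v r\mapsto \hat{\tilde{\v q}}$ (replacing each box by its average) is a bistochastic matrix $B$ acting on $\Delta_D$ that leaves every embedded vector invariant (in particular $B\hat{\v q}=\hat{\v q}$, since $\hat{\v q}$ is already constant inside each box). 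Two consequences follow immediately: first, $\hat{\v p}\succ\v r$ and $\v r \succ B\v r=\hat{\tilde{\v q}}$ combine, by transitivity of majorisation, to give $\hat{\v p}\succ\hat{\tilde{\v q}}$; second, contractivity of infidelity under stochastic maps yields $\delta(\hat{\tilde{\v q}},\hat{\v q})=\delta(B\v r,B\hat{\v q})\leq\delta(\v r,\hat{\v q})\leq\epsilon$. Translating back via Lemma~\ref{Lemma:GP-majorisation} and Theorem~\ref{thm_Thermaloperations} produces a thermal operation from $\v p$ to $\tilde{\v q}$, with $\delta(\tilde{\v q},\v q)=\delta(\hat{\tilde{\v q}},\hat{\v q})\leq\epsilon$.

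The main obstacle is precisely this projection-back step: one has to verify carefully that box-averaging is both a genuine bistochastic operation on $\Delta_D$ and that its fixed points are exactly the images of the embedding, so that $\hat{\v q}$ is unchanged and the resulting vector really is the embedding of a $d$-dimensional distribution. Once these properties are checked, the theorem follows by combining transitivity of majorisation with the contractivity of infidelity under stochastic maps, giving the desired equivalence.
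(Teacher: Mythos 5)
The paper does not actually prove this theorem: it is restated from Ref.~\cite{brandao2015second} (the paragraph preceding it says so explicitly), so there is no in-text proof to compare against. Your argument is nevertheless correct and self-contained given the machinery built up in Chapters~\ref{C:mathematical_preliminaries}--\ref{C:resource_theory_of_thermodynamics}. The necessity direction is routine once you observe that the embedding preserves infidelity. The sufficiency direction is the genuine content, and you identify the right obstacle: the witness $\v r$ for $\hat{\v p}\succ_\epsilon\hat{\v q}$ need not lie in the image of $\Gamma$. Your remedy---apply the box-averaging map $B$ that replaces every entry in an embedding block of size $D_i$ by the block average---is sound: $B$ is bistochastic (block-diagonal with blocks $\tfrac{1}{D_i}J_{D_i}$), so $\v r\succ B\v r$ by Theorem~\ref{Thm:HLP}; $B$ fixes $\hat{\v q}$ because embedded vectors are flat on each block; and contractivity of infidelity under stochastic maps gives $\delta(B\v r,\hat{\v q})=\delta(B\v r,B\hat{\v q})\le\delta(\v r,\hat{\v q})\le\epsilon$. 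Translating back through Lemma~\ref{Lemma:GP-majorisation} and Theorem~\ref{thm_Thermaloperations} then produces the required thermal operation. One small inaccuracy in your closing paragraph: you do not actually need to show that the fixed points of $B$ are \emph{exactly} the embedded vectors---all your argument uses is that $B$ fixes $\hat{\v q}$ and that $B\v r$ \emph{is} an embedded vector, both of which you already establish by the explicit formula $\tilde{q}_i=\sum_{j\in\mathrm{box}_i}r_j$. Apart from that harmless overstatement, the proof is complete and correct.
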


Despite the fact that in our case we want to study the general case of initial and final systems with different Hamiltonians, with a little bit of ingenuity we can still use the above theorem. Namely, we consider a family of total systems composed of the first $N$ subsystems with initial Hamiltonians $H^N_n$, and the remaining part described by the target Hamiltonian $\tilde{H}^N$. We choose initial states of the total system on the first $N$ subsystems to be a general product of incoherent states $\v{p}^N_n$, while the remaining part to be prepared in a thermal equilibrium state $\tilde{\v{\gamma}}^N$ corresponding to $\tilde{H}^N$. Since Gibbs states are free, this setting is thermodynamically equivalent to having just the first $N$ systems with Hamiltonians $H^N_n$ and in states $\v{p}_n^N$. Moreover, for target states of the total system, we choose thermal equilibrium states $\v{\gamma}^N_n$ for the first $N$ subsystems, and sharp states $\tilde{\v{s}}_{k}^N$ of the Hamiltonian $\tilde{H}^N$ for the remaining part. Again, this is thermodynamically equivalent to having just the system with Hamiltonian $\tilde{H}^N$ and in a state $\tilde{\v{s}}_{k}^N$. Thus, employing Theorem~\ref{thm:thermo_int}, an $\epsilon$-approximate distillation process for incoherent states exists if and only if:
\begin{equation}
    \left(\bigotimes_{n=1}^N \hat{\v{p}}^N_n \otimes \hat{\tilde{\v{\gamma}}}^N\right) \succ_\epsilon \left(\bigotimes_{n=1}^{N}\hat{\v{\gamma}}^N_n\otimes\hat{\tilde{\v{s}}}_{k}^N \right) .
\end{equation}
This way, using a single fixed Hamiltonian, we can encode transformations between different Hamiltonians.

Let us introduce the following shorthand notation:
\begin{equation}
    \hat{\v{P}}^N:=\bigotimes_{n=1}^N \hat{\v{p}}^{N}_n,\qquad \hat{\v{G}}^N:=\bigotimes_{n=1}^N {\hat{\v{\gamma}}}^N_n=\bigotimes_{n=1}^N {\v{\eta}}^N_n. \label{eq:PN_GN}
\end{equation}
Then, we can use the previous facts on the embedding map to conclude with the following statement: there exists an $\epsilon$-approximate thermodynamic distillation process from $N$ systems with Hamiltonians $H^N_n$ and in energy-incoherent states $\v{p}^N_n$ to a system with a Hamiltonian $\tilde{H}^N$ and in a sharp energy eigenstate $\tilde{\v{s}}_{k}^N$ if and only if
\begin{equation}
\label{eq:e-thermomajorisation}
    \hat{\v{P}}^N \otimes \tilde{\v{\eta}}^N \succ_{\epsilon} \hat{\v{G}}^N \otimes \tilde{\v{f}}^N_k.
\end{equation}

\begin{center}
    \emph{\textbf{Information-theoretic intermission}}\label{sec:intermission}
\end{center}

Before we proceed, we need to make a short intermission for a few important comments concerning information-theoretic quantities introduced in Section~\ref{subsec:information-theoretic-notions}. For incoherent states $\rho$ and $\gamma$ represented by probability vectors $\v{p}$ and $\v{\gamma}$, these simplify and take the following classical form:
\begin{subequations}
	\begin{align}
	D(\v{p}\|\v{\gamma}):=&\sum_i p_i\left(\log \frac{p_i}{\gamma_i}\right),\\
	V(\v{p}\|\v{\gamma}):=&\sum_i p_i \left(\log \frac{p_i}{\gamma_i} - D(\v{p}\|\v{\gamma})\right)^2,\\
	Y(\v{p}\|\v{\gamma}):=&\sum_i p_i \left|\log \frac{p_i}{\gamma_i} - D(\v{p}\|\v{\gamma})\right|^3.
	\end{align}
\end{subequations}
Moreover, by direct calculation, one can easily show that the above quantities are invariant under embedding, i.e., \mbox{$D(\v{p}\|\v{\gamma})=D(\hat{\v{p}}\|\v{\eta})$}, and the same holds for $V$ and $Y$. Therefore
\begin{subequations}
	\begin{align}
	D(\v{p}\|\v{\gamma})=&D(\hat{\v{p}}\|\v{\eta})=\log D-H(\hat{\v{p}}),\label{eq:invariance1}\\
	V(\v{p}\|\v{\gamma})=&V(\hat{\v{p}}\|\v{\eta})=V(\hat{\v{p}}),\label{eq:invariance2}\\
	Y(\v{p}\|\v{\gamma})=&Y(\hat{\v{p}}\|\v{\eta})=Y(\hat{\v{p}}),
\end{align}
\end{subequations}
where
\begin{subequations}
	\begin{align}
	H(\v{p}):=&\sum_i p_i(-\log p_i),\\
	V(\v{p}):=&\sum_i p_i (\log p_i + H(\v{p}))^2,\\
	Y(\v{p}):=&\sum_i p_i \left|\log p_i + H(\v{p})\right|^3,\label{eq:invariance3}
\end{align}
\end{subequations}
and note that $V(\v{p})=0$ if and only if $\v{p}$ is a flat state.


\begin{center}
    \emph{\textbf{Optimal error for a distillation process}}
\end{center}

In order to transform the approximate majorisation condition from Eq.~\eqref{eq:e-thermomajorisation} into an explicit expression for the optimal transformation error, we start from the following result proven by the authors of Ref.~\cite{Chubb2018beyondthermodynamic}.
\begin{lemma}[Optimal transformation error]
	\label{lem:1shot-distill}
	Let $\v p$ and $\v q$ be distributions with $V(\v q)=0$. Then
	\begin{align}
	\min\left\lbrace \epsilon \middle| \v p\succ_\epsilon \v q \right\rbrace = 1-\sum_{i=1}^{\exp H(\v q)}p_i^{\downarrow}.
	\end{align}
\end{lemma}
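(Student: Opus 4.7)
My plan is to establish the lemma by reducing it to a constrained optimisation of fidelity and then producing matching upper and lower bounds. The starting observation is that $V(\v{q}) = 0$, which, by the characterisation noted immediately after Eq.~\eqref{eq:invariance3}, forces $\v{q}$ to be a flat distribution: setting $M := \exp H(\v{q})$ (an integer) and letting $S$ denote its support, one has $q_i = 1/M$ for $i \in S$ and $q_i = 0$ otherwise. Consequently the fidelity between $\v{q}$ and any candidate $\tilde{\v{q}}$ takes the simplified form
\begin{equation*}
F(\tilde{\v{q}}, \v{q}) = \frac{1}{M}\Biggl(\sum_{i \in S}\sqrt{\tilde{q}_i}\Biggr)^{\!2},
\end{equation*}
so that $\epsilon = 1 - F$ depends only on the restriction of $\tilde{\v{q}}$ to the $M$ indices in $S$.

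For the converse (lower bound on $\epsilon$), I would apply the Cauchy--Schwarz inequality to the $M$-term sum above to obtain $F(\tilde{\v{q}}, \v{q}) \leq \sum_{i \in S}\tilde{q}_i$. Since $|S| = M$, this is bounded by the sum of the top $M$ entries of $\tilde{\v{q}}$, namely $\sum_{i=1}^M \tilde{q}_i^{\downarrow}$, and invoking the majorisation condition $\v{p} \succ \tilde{\v{q}}$ yields $\sum_{i=1}^M \tilde{q}_i^{\downarrow} \leq \sum_{i=1}^M p_i^{\downarrow}$. Combining these steps gives $\epsilon \geq 1 - \sum_{i=1}^M p_i^{\downarrow}$ for every admissible $\tilde{\v{q}}$.

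For achievability, I would construct an explicit optimiser. Without loss of generality (by relabelling indices) I would take $S = \{1, \ldots, M\}$ and the entries of $\v{p}$ to be in non-increasing order, then set $\alpha := \sum_{i=1}^M p_i^{\downarrow}$ and define
\begin{equation*}
\tilde{q}^{*}_i =
\begin{cases}
\alpha/M & \text{if } i \leq M,\\
p_i^{\downarrow} & \text{if } i > M.
\end{cases}
\end{equation*}
Normalisation is immediate, and saturation of Cauchy--Schwarz (because $\tilde{\v{q}}^*$ is uniform on $S$) gives $F(\tilde{\v{q}}^*, \v{q}) = \alpha$, so $\epsilon = 1 - \alpha$ is achieved. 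The remaining point is to verify $\v{p} \succ \tilde{\v{q}}^*$, which I would split into two cases: for $k \leq M$ the required inequality reduces to the average of the top $M$ entries of $\v{p}$ being at most the average of the top $k$, a standard property of non-increasing sequences; for $k > M$ both partial sums coincide by construction.

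I do not expect any step to be a serious obstacle. The one detail to handle carefully is confirming that the construction of $\tilde{\v{q}}^*$ is already expressed in its non-increasing form, which follows because $\alpha/M$ is the average of the top $M$ entries of $\v{p}$ and hence at least $p_M^{\downarrow} \geq p_{M+1}^{\downarrow}$; this ensures the ordering used to place $S$ at the top indices is consistent with the sorting used when reading off $\tilde{\v{q}}^{*\downarrow}$.
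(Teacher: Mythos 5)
Your proof is correct, and every step checks out: $V(\v{q})=0$ indeed forces $\v{q}$ to be uniform on a support of integer size $M=\exp H(\v{q})$; the Cauchy--Schwarz bound $F(\tilde{\v{q}},\v{q})\leq\sum_{i\in S}\tilde{q}_i\leq\sum_{i=1}^{M}\tilde{q}_i^{\downarrow}\leq\sum_{i=1}^{M}p_i^{\downarrow}$ gives the converse; and your explicit $\tilde{\v{q}}^{*}$ (uniform mass $\alpha/M$ on $S$, untouched tail of $\v{p}$ elsewhere) attains it, with the prefix-average argument correctly handling $k\leq M$ and equality for $k>M$, and with the ordering consistency point ($\alpha/M\geq p_M^{\downarrow}$) properly noted. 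Worth knowing for context: the thesis does not prove this lemma at all, but imports it from Ref.~\cite{Chubb2018beyondthermodynamic}; the closest in-text argument is the proof of the companion Lemma~\ref{Opt_Diss_free} in Section~\ref{app:min_out}, whose lower-bound half is exactly your Cauchy--Schwarz-plus-majorisation chain, and whose optimal state $\v{q}^*$ coincides with your $\tilde{\v{q}}^{*}$ (the case of $B$ being the identity). The difference is in how achievability/optimality is established: the paper takes the lemma's value of $\epsilon$ as given and characterises the saturating states via Lagrange multipliers and a Birkhoff decomposition of the bistochastic map, whereas you verify $\v{p}\succ\tilde{\v{q}}^{*}$ directly by an elementary check of partial sums. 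Your route is more self-contained and arguably simpler for the error statement alone; the paper's route buys the stronger structural conclusion about \emph{all} optimal final states, which it needs later for the dissipation analysis.
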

Applying the above lemma to Eq.~\eqref{eq:e-thermomajorisation} yields the following expression for the optimal error $\epsilon_N$:
\begin{equation}
    \label{eq:optimalesum}
    \epsilon_N = 1-\sum_{i=1}^{\exp [H(\hat{\v{G}}^{N})+H(\tilde{\v{f}}_k^N)]} \left(\hat{\v{P}}^N\otimes \tilde{\v{\eta}}^N\right)_i^\downarrow.
\end{equation}
Now, for an arbitrary distribution $\v{p}$ and any flat state~$\v{f}$, we make two observations: the size of the support of~$\v{f}$ is simply $\exp (H(\v{f}))$, and the entries of $\v{p}\otimes\v{f}$ are just the copied and scaled entries of $\v{p}$. As a result, the sum of the $l$ largest elements of $\v{p}$ can be expressed as
\begin{equation}
\label{eq:simpleobservation}
    \sum_{i=1}^l p_i^\downarrow = \sum_{i=1}^{l \exp (H(\v{f}))} (\v{p}\otimes \v{f})_i^\downarrow.
\end{equation}
Inverting the above expression we can write
\begin{equation}
\label{eq:simpleobservation2}
    \sum_{i=1}^{l} (\v{p}\otimes \v{f})_i^\downarrow = \sum_{i=1}^{l\exp (-H(\v{f}))} p_i^\downarrow,
\end{equation}
where the summation with non-integer upper limit $x$ should be interpreted as:
\begin{equation}
    \sum_{i=1}^x p_i :=\sum_{i=1}^{\lfloor x \rfloor} p_i +(x-\lfloor x \rfloor) p_{\lceil x \rceil}.
\end{equation}
Since $\tilde{\v{\eta}}$ is a flat state, we conclude that
\begin{equation}
\label{eq:optimal_e}
    \epsilon_N = 1- \sum_{i=1}^{\exp [H(\hat{\v{G}}^{N})+H(\tilde{\v{f}}_k^N)-H(\tilde{\v{\eta}}^N)]} (\hat{\v{P}}^N)_i^\downarrow.
\end{equation}

We see that the error depends crucially on partial ordered sums as above. To deal with these kind of sums, we introduce the function $\chi_{\v{p}}$ defined implicitly by the following equation
\begin{equation}
    \sum_{i=1}^{\chi_{\v{p}}(l)} p_i^\downarrow=\sum_i \{p_i | p_i\geq 1/l\}.
\end{equation}
In words: $\chi_{\v{p}}(l)$ counts the number of entries of $\v{p}$ that are larger than $1/l$. Now, we have the following lemma that will be crucial in proving our theorems.
\newpage
\begin{lemma}[Ordered summation bounds]
    Every $d$-dimensional probability distribution $\v{p}$ satisfies the following for all $l\in\{1,\dots,d\}$ and for all $\alpha\geq 1$:
    \begin{subequations}
        \begin{align}
            \sum_{i=1}^{l} p_i^\downarrow&\geq \sum_{i=1}^{\chi_{\v{p}}(l)} p_i^\downarrow,\label{eq:lower}\\
            \sum_{i=1}^{l} p_i^\downarrow&\leq \sum_{i=1}^{\chi_{\v{p}}(\alpha l)/c} p_i^\downarrow,\label{eq:upper}
        \end{align}
    \end{subequations}
    where 
    \begin{equation}\label{eq:Expofc}
        c=\sqrt{\alpha}\sum_{i=\chi_{\v{p}}(\sqrt{\alpha}l)}^{\chi_{\v{p}}(\alpha l)} p_i^\downarrow.
    \end{equation}
    Moreover, as the probabilities are ordered in the sums given in Eq.~\eqref{eq:lower} and Eq.~\eqref{eq:upper}, it simply follows that
    \begin{equation}
        \label{chi_order}
        \chi_{\v{p}}(l) \leq l\leq \chi_{\v{p}}(\alpha l)/c.
    \end{equation}
\end{lemma}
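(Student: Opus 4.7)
The plan is to reduce both sum inequalities to the numerical chain $\chi_{\v{p}}(l)\leq l \leq \chi_{\v{p}}(\alpha l)/c$ and then invoke non-negativity of $p_i^\downarrow$, since enlarging the upper limit of a partial sum of non-negative terms can only increase it (the fractional-index convention $\sum_{i=1}^x p_i=\sum_{i=1}^{\lfloor x\rfloor}p_i+(x-\lfloor x\rfloor)p_{\lceil x \rceil}$ introduced earlier handles the generically non-integer endpoint $\chi_{\v{p}}(\alpha l)/c$ in Eq.~\eqref{eq:upper}).

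For the left inequality $\chi_{\v{p}}(l)\leq l$ I would argue by pigeonhole: by the implicit definition of $\chi_{\v{p}}(l)$, the first $\chi_{\v{p}}(l)$ sorted entries each satisfy $p^\downarrow_i\geq 1/l$, so their combined weight is at least $\chi_{\v{p}}(l)/l$. Since $\v{p}$ is a probability distribution, the same combined weight is bounded above by $1$, which forces $\chi_{\v{p}}(l)\leq l$. Combined with the monotonicity of partial sums of non-negative terms, this immediately yields Eq.~\eqref{eq:lower}.

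For the right inequality $l\leq \chi_{\v{p}}(\alpha l)/c$, equivalently $cl\leq \chi_{\v{p}}(\alpha l)$, I would bound the sum in the definition of $c$ term by term. By definition of $\chi_{\v{p}}(\sqrt{\alpha}\,l)$, every sorted entry strictly beyond that index lies below the threshold, $p^\downarrow_i\leq 1/(\sqrt{\alpha}\,l)$, while the number of indices in the summation window is $\chi_{\v{p}}(\alpha l)-\chi_{\v{p}}(\sqrt{\alpha}\,l)$. Combining these two observations,
\begin{equation*}
c\;\leq\;\sqrt{\alpha}\cdot\frac{\chi_{\v{p}}(\alpha l)-\chi_{\v{p}}(\sqrt{\alpha}\,l)}{\sqrt{\alpha}\,l}\;\leq\;\frac{\chi_{\v{p}}(\alpha l)}{l},
\end{equation*}
so $cl\leq \chi_{\v{p}}(\alpha l)$, and Eq.~\eqref{eq:upper} then follows from the same monotonicity principle.

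The only point requiring care, rather than any deep obstacle, is the bookkeeping of the boundary index in the sum defining $c$: the borderline term $p^\downarrow_{\chi_{\v{p}}(\sqrt{\alpha}\,l)}$ can equal $1/(\sqrt{\alpha}\,l)$ in the limiting case, so one must check (or absorb into the fractional-index convention) that the per-term bound $p_i^\downarrow\leq 1/(\sqrt{\alpha}\,l)$ holds throughout the summation window. Once that is settled the lemma follows by essentially arithmetic manipulation, with no need for any further thermodynamic or information-theoretic input.
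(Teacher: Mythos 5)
Your proof is correct and follows essentially the same route as the paper: $\chi_{\v{p}}(l)\leq l$ by normalisation, and $cl\leq\chi_{\v{p}}(\alpha l)$ by bounding the entries in the summation window by $1/(\sqrt{\alpha}\,l)$ — the paper expresses this very bound by telescoping the inequality $\sum_{i=1}^{\chi_{\v{p}}(\sqrt{\alpha}l)}(p_i^\downarrow-\tfrac{1}{\sqrt{\alpha}l})\geq\sum_{i=1}^{\chi_{\v{p}}(\alpha l)}(p_i^\downarrow-\tfrac{1}{\sqrt{\alpha}l})$ — after which both sum inequalities follow from monotonicity of partial sums of non-negative terms, exactly as you argue. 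The borderline-index caveat you flag is genuine, but it is an artifact of the paper's definition of $c$ rather than a gap in your reasoning: the sum in Eq.~\eqref{eq:Expofc} starts at $i=\chi_{\v{p}}(\sqrt{\alpha}\,l)$, while both your term-by-term bound and the paper's own rearrangement really control the sum starting at $i=\chi_{\v{p}}(\sqrt{\alpha}\,l)+1$ (where $p_i^\downarrow<1/(\sqrt{\alpha}\,l)$ does hold), and with that reading your proof goes through exactly as the paper's does.
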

\begin{proof}
     The first inequality is very easily proven by observing that the number of entries larger than $1/l$, i.e., $\chi_{\v{p}}(l)$, is bounded from above by $l$ due to normalisation. Now, to prove the second inequality, we start from the following observation:
    \begin{equation}
        \sum_{i=1}^{\chi_{\v{p}}(\sqrt{\alpha}l)} \left(p_i^\downarrow-\frac{1}{\sqrt{\alpha}l}\right)\geq
        \sum_{i=1}^{\chi_{\v{p}}(\alpha l)}
        \left(p_i^\downarrow-\frac{1}{\sqrt{\alpha}l}\right),
    \end{equation}
    which comes from the fact that all the extra terms on the right hand side of the above are negative by definition. By rearranging terms we arrive at
    \begin{equation}
        \chi_{\v{p}}(\alpha l)-\chi_{\v{p}}(\sqrt{\alpha}l)\geq c l,
    \end{equation}
    which obviously implies $l \leq \frac{\chi_{\v{p}}({\alpha}l)}{c}$
\end{proof}


\subsection{Proof of Theorem~\ref{thm:incoherent}}
\label{sec:proof1a}

The proof of Theorem~\ref{thm:incoherent} will be divided into two parts. First, we will derive the upper bound for the optimal transformation error $\epsilon_N$, Eq.~\eqref{eq:error_bound}. Then, we will provide a lower bound for $\epsilon_N$ and show that it is approaching the derived upper bound in the asymptotic limit, and so we will prove Eq.~\eqref{eq:error_incoherent}. 


\begin{center}
    \emph{\textbf{Upper bound for the transformation error}}
\end{center}

We start by introducing the following averaged entropic quantities for the total initial distribution $\hat{\v{P}}^N$:
\begin{subequations}\label{Moment_incoherent}
\begin{align}
    \!\! h_N:=&\frac{1}{N} H(\hat{\v{P}}^N)=\frac{1}{N}\sum_{n=1}^N H(\hat{\v{p}}^N_n)=:\frac{1}{N}\sum_{n=1}^N h^N_n,\label{eq:hn}\\
    \!\! v_N:=&\frac{1}{N} V(\hat{\v{P}}^N)=\frac{1}{N}\sum_{n=1}^N V(\hat{\v{p}}^N_n)=:\frac{1}{N}\sum_{n=1}^N v^N_n,\label{eq:vn2}\\
    \!\! y_N:=&\frac{1}{N} Y(\hat{\v{P}}^N)=\frac{1}{N}\sum_{n=1}^N Y(\hat{\v{p}}^N_n)=:\frac{1}{N}\sum_{n=1}^N y^N_n. \label{eq:yn2}
\end{align}
\end{subequations}
Note that the above $v_N$ and $y_N$ are, up to rescaling by $N/\beta$, incoherent versions of $\sigma^2(F^N)$ and $\kappa^3(F^N)$ defined in Eqs.~\eqref{eq:vn}-\eqref{eq:yn}. We also define the function~$l$:
\begin{align}\label{eq:Defn_f}
    l(z):=\exp\left(N h_N+z\sqrt{N v_N}\right).
\end{align}
We now rewrite the upper summation limit appearing in Eq.~\eqref{eq:optimal_e} employing the above function:
\begin{equation}
    \exp [H(\hat{\v{G}}^{N})+H(\tilde{\v{f}}^N_k)-H(\tilde{\v{\eta}}^N)]=l(x),
\end{equation}
so that
\begin{align}\label{eq:Def_x}
    x &= \frac{D(\hat{\v{P}}^N\|\hat{\v{G}}^{N})- D(\tilde{\v{f}}^N_k\|\tilde{\v{\eta}}^N)}{\sqrt{V(\hat{\v{P}}^N)}}.
\end{align}
This can be further transformed by employing the invariance of relative entropic quantities under embedding, Eqs.~\eqref{eq:invariance1}-\eqref{eq:invariance2}, leading to
\begin{align}
\label{eq:dist_cond}
    x&=\frac{\sum\limits_{n=1}^N D(\v{p}_{n}^N\|\v{\gamma}_n^N)-D(\tilde{\v{s}}_{k}^N\|\tilde{\v{\gamma}}^N)}{\left(\sum\limits_{n=1}^N V(\v{p}_{n}^N\|\v{\gamma}_{n}^N)\right)^{\frac{1}{2}}},
\end{align}
which is precisely the argument of $\Phi$ appearing in the statement of Theorem~\ref{thm:incoherent} in Eq.~\eqref{eq:error_incoherent}: $
x = \Delta F^N/\sigma(F^N)$. We conclude that with this $x$, we can then rewrite the expression for the optimal transformation error, Eq.~\eqref{eq:optimal_e}, as
\begin{equation}
    \label{eq:error_exact}
    \epsilon_N = 1- \sum_{i=1}^{l(x)} (\hat{\v{P}}^N)_i^\downarrow.
\end{equation}

Next, we will find an upper bound for the error employing Eq.~\eqref{eq:lower}:
\begin{align}
\label{eq:first_bound}
    \epsilon_N&\leq 1-\sum_{i=1}^{\chi_{\hat{\v{P}}^N}(l(x))}(\hat{\v{P}}^N)_i^\downarrow=1- \sum_{i}\left\lbrace 
    \hat{P}^N_i \middle|
    \hat{P}^N_i \geq \frac{1}{l(x)}	\right\rbrace \!.
    \end{align}
\newpage
In order to evaluate the above sum, consider $N$ discrete random variables $X_n$ taking values $-\log(\hat{\v{p}}^N_n)_i$ with probability $(\hat{\v{p}}^N_n)_i$, so that
\begin{subequations}
\begin{align}\label{eq:Moment_Inc}
    \langle X_n \rangle & = h^N_{n},\\
    \langle (X_n-\langle X_n \rangle)^2 \rangle & = v^N_{n},\\
    \langle\left| X_n-\langle X_n \rangle\right|^3\rangle & = y^N_{n},	\end{align}
\end{subequations}
where the average $\langle \cdot\rangle$ is taken with respect to the distribution $\hat{\v{p}}_{n}^N$. We then have the following
\begin{align}\label{eq: prob_on_set}
    \sum_{i}\left\lbrace 
    \hat{P}^N_i \middle|
    \hat{P}^N_i \geq \frac{1}{l(x)}		
    \right\rbrace &=\sum_{i_1,\dots,i_N}\left\lbrace \prod_{n=1}^{N}
    (\hat{\v{p}}^N_{n})_{i_n} \middle|\prod_{n=1}^{N}(\hat{\v{p}}^N_{n})_{i_n}\geq \frac{1}{l(x)}	\right\rbrace \nonumber\\
    \quad&=\sum_{i_1,\dots,i_N}\left\lbrace \prod_{n=1}^{N}(\hat{\v{p}}^N_{n})_{i_n} \middle|-\sum_{n=1}^{N}\log (\hat{\v{p}}^N_{n})_{i_n}\leq \log l(x)	\right\rbrace \nonumber\\
    &=\Pr\left[\sum_{n=1}^N X_n\leq Nh_N+x\sqrt{Nv_N}\right] \nonumber\\
    &=\Pr\left[\frac{\sum_{n=1}^N (X_n-\langle X_n\rangle)}{\sqrt{\sum_{n=1}^N \langle (X_n-\langle X_n \rangle)^2 \rangle}}\leq x\right].
\end{align}
Now, the Berry-Esseen theorem~\cite{berry1941accuracy,Esseen1942} tells us that
\begin{align}
\left| \Pr\!\left[\!\frac{\sum_{n=1}^N (X_n-\langle X_n\rangle)}{\sqrt{\sum_{n=1}^N \langle (X_n\!-\!\langle X_n \rangle)^2 \rangle }}\leq x\!\right]\! -\! \Phi(x)\right|\!\leq \!\frac{C y_N}{\sqrt{Nv_N^3}},\!\!
\end{align}
where $C$ is a constant that was bounded in Refs.~\cite{Esseen1956, Shevtsova2011} by
\begin{equation}
\label{eq:c_bounds}
    0.4097\leq C \leq 0.4748.
\end{equation}
We thus have
\begin{equation}
    \label{eq:bound_Pn}
    \left|\sum_{i}\left\lbrace 
    \hat{P}^N_i \middle|
    \hat{P}^N_i \geq \frac{1}{l(x)}		
    \right\rbrace -  \Phi(x)\right| \leq \frac{Cy_N}{\sqrt{Nv_N^3}},
\end{equation}
and so we conclude that the error $\epsilon_N$ is bounded from above by
\begin{equation}
\label{eq:error_bound2}
    \epsilon_N\leq \Phi\left(-\frac{\Delta F^N}{\sigma(F^N)}\right)+\frac{C\kappa^3(F^N)}{\sigma^3(F^N)},
\end{equation}
which proves the single-shot upper bound on transformation error, Eq.~\eqref{eq:error_bound}, presented in Theorem~\ref{thm:incoherent}. Also, note that from Eq.~\eqref{eq:error_bound2}, it is clear that if $\lim_{N\to\infty} v_n$ and $\lim_{N\to\infty} y_n$ are well-defined and non-zero (as we assume), then
\begin{equation}
    \label{eq:asymptotic_upper}
    \lim_{N\rightarrow\infty} \epsilon_N \leq \lim_{N\rightarrow\infty} \Phi\left(-\frac{\Delta F^N}{\sigma(F^N)}\right).
\end{equation}

\begin{center}
    \emph{\textbf{Lower bound for the transformation error}}\label{sec:incoherent_lower}
\end{center}

In order to lower bound the expression for the optimal error in the asymptotic limit, we choose \mbox{$\alpha=\exp(\delta\sqrt{N})$} with $\delta>0$ in Eq.~\eqref{eq:upper}. Thus, from Eq.~\eqref{chi_order} and Eq.~\eqref{eq:Defn_f} we have 
\begin{equation}\label{eq:Boundonlx}
    l(x)\leq \frac{\chi_{\hat{\v{P}}^N}[e^{\delta\sqrt{N}}l(x)]}{c}=\frac{\chi_{\hat{\v{P}}^N}[l(x+\delta)]}{c} ,
\end{equation}
where we can evaluate $c$ from Eq.~\eqref{eq:Expofc}:
\begin{align}\label{eq:c}
 c=e^{\frac{\delta \sqrt{N}}{2}} \sum_{i=\chi_{\hat{\v{P}}^N}[l(x+\delta/2)]}^{\chi_{\hat{\v{P}}^N}[l(x+\delta)]} (\hat{\v{P}}^N)_i^\downarrow=e^{\frac{\delta \sqrt{N}}{2}} \Biggl( \sum_{i}
    &\biggl\{\hat{P}^N_i \bigg|
    \hat{P}^N_i \geq \frac{1}{l(x+\delta)}	
    \biggl\}  \nonumber\\
     & -\sum_{i} 
    \hat{P}^N_i \bigg|\hat{P}^N_i\geq\frac{1}{l(x+\delta/2)} \Biggl).
\end{align}
Using Eq.~\eqref{eq:bound_Pn} we can bound the above expression from below as
\begin{equation}
   c \geq e^{\frac{\delta \sqrt{N}}{2}}\Bigg[\Phi(x+\delta)-\Phi(x+\delta/2)-\frac{2Cy_N}{\sqrt{Nv_N^3}}\Bigg].
\end{equation}
Now, for any finite $\delta>0$ it is clear that there exists $N_0$ such that for all $N\geq N_0$ we have $c>1$. Combining with Eq.~\eqref{eq:Boundonlx}, for large enough $N$ we finally have 
\begin{equation}
    \label{eq:Boundonlx2}
    l(x)\leq \frac{\chi_{\hat{\v{P}}^N}[l(x+\delta)]}{c} \leq \chi_{\hat{\v{P}}^N}[l(x+\delta)].
\end{equation}
Hence, using Eq.~\eqref{eq:Boundonlx2}, we have the following lower bound on transformation error
\begin{align}
     \label{eq:first_upper}
    \epsilon_N = 1-\sum_{i=1}^{l(x)}(\hat{\v{P}}^N)_i^\downarrow &\geq 1- \sum_{i=1}^{\chi_{\hat{\v{P}}^N}[l(x+\delta)]} (\hat{\v{P}}^N)_i^\downarrow \nonumber \\ &= 1- \sum_{i}\left\lbrace 
    \hat{P}^N_i \middle|
    \hat{P}^N_i \geq \frac{1}{l(x+\delta)}
    \right\rbrace \nonumber\\
    &\geq 1-\Phi(x+\delta)-\frac{Cy_N}{\sqrt{Nv_N^3}},
\end{align}
where in the last line we used Eq.~\eqref{eq:bound_Pn} again. It is thus clear that
\begin{equation}
    \lim_{N\to\infty} \epsilon_N\geq 1-\lim_{N\to\infty} \Phi(x+\delta)=\lim_{N\to\infty} \Phi(-x-\delta)
\end{equation}
and, since it works for any $\delta>0$, we conclude that
\begin{equation}
    \lim_{N\to\infty} \epsilon_N\geq \lim_{N\to\infty} \Phi\left[-\frac{\Delta F^N}{\sigma(F^N)}\right].
\end{equation}
Combining the above with the bound obtained in Eq.~\eqref{eq:asymptotic_upper}, we arrive at
\begin{equation}\label{eq:err_opt}
    \lim_{N\to\infty}\epsilon_N=\lim_{N\to\infty} \Phi\left[-\frac{\Delta F^N}{\sigma(F^N)}\right],
\end{equation}
which proves the asymptotic expression for the transformation error, Eq.~\eqref{eq:error_incoherent}, presented in Theorem~\ref{thm:incoherent}.


\subsection{Proof of Theorem~\ref{thm:incoherent2}}
\label{sec:proof2a}

The proof of Theorem~\ref{thm:incoherent2} will be divided into two parts. First, we will find the embedded version of the optimal final state minimising the dissipation of free energy $F^N_{\mathrm{diss}}$, and derive the expression for $F^N_{\mathrm{diss}}$ as a function of the initial state. Then, we will calculate $F^N_{\mathrm{diss}}$ up to second order asymptotic terms by upper and lower bounding the expression, and showing that the bounds coincide.

\begin{center}
    \emph{\textbf{Deriving optimal dissipation}}
\end{center}

We start by presenting an extension of Lemma~\ref{lem:1shot-distill} that not only specifies the optimal transformation error for approximate majorisation but also yields the optimal final state.

\begin{lemma}[Optimal final state]\label{Opt_Diss_free}
    Let $\v{p}$ and $\v{q}$ be distributions of size $d$ with \mbox{$V(\v{q})=0$} and \mbox{$H(\v{q})=\log L$}. Then, states $\v{r}$ saturating $\v{p}\succ_\epsilon \v{q}$, i.e., such that \mbox{$\v{p}\succ\v{r}$} and $F(\v{q},\v{r})=1-\epsilon$ with $\epsilon$ being the minimal value specified by Lemma~\ref{lem:1shot-distill}, are given by $\Pi \v{q}^*$, where $\Pi$ is an arbitrary permutation,
    \begin{eqnarray}\label{eq:qstar}
     \v{q}^*&:=&
     \begin{cases}
       \frac{1-\epsilon}{L} &\quad \mathrm{for~} i\leq L,\\
       B \v{p}'_{i-L} &\quad\mathrm{for~}i>L,\\
     \end{cases}\, 
\end{eqnarray}
$B$ is an arbitrary $(d-L)\times(d-L)$ bistochastic matrix, and $\v{p}'$ is a vector of size $(d-L)$ with $p'_i=p^\downarrow_{i+L}$. Moreover, among all such distributions $\Pi\v{q}^*$, the entropy is minimised for the one with $B$ being the identity matrix.
\end{lemma}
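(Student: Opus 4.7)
The plan is to proceed in two stages: first establishing the structural form of any optimal $\v{r}$, and then analysing the entropy within that family. By the symmetry of fidelity under joint permutations, I would begin by assuming without loss of generality that $\v{q}$ has support on its first $L$ entries, so that $q_i = 1/L$ for $i \leq L$ and $q_i = 0$ otherwise; the general case is then recovered by applying an arbitrary permutation $\Pi$. With this normalisation, the fidelity reduces to
\begin{equation}
    F(\v{q}, \v{r}) = \frac{1}{L}\left(\sum_{i=1}^L \sqrt{r_i}\right)^2.
\end{equation}
The key inequality chain is then Cauchy-Schwarz followed by majorisation:
\begin{equation}
    \left(\sum_{i=1}^L \sqrt{r_i}\right)^2 \leq L \sum_{i=1}^L r_i \leq L \sum_{i=1}^L r_i^{\downarrow} \leq L \sum_{i=1}^L p_i^{\downarrow} = L(1-\epsilon),
\end{equation}
with the last step invoking $\v{p} \succ \v{r}$. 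Saturation of this chain requires equality in both Cauchy-Schwarz (forcing $r_1 = \dots = r_L$, each therefore equal to $(1-\epsilon)/L$) and in the majorisation bound on the top-$L$ cumulative sum.

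Next, I would characterise the tail. Denoting $\v{r}'' := (r_{L+1},\dots,r_d)$ and $\v{p}' := (p^\downarrow_{L+1},\dots,p^\downarrow_d)$, normalisation forces $\sum_i r''_i = \epsilon = \sum_i p'_i$, while the majorisation constraint on partial sums with $k > L$ reduces, after subtracting the first $L$ saturating terms, to the statement that $\v{p}'$ majorises $\v{r}''$. Hardy-Littlewood-P\'olya (Theorem~\ref{Thm:HLP}), applied to the rescaled probability vectors $\v{p}'/\epsilon$ and $\v{r}''/\epsilon$, then yields the existence of a $(d-L)\times(d-L)$ bistochastic matrix $B$ with $\v{r}'' = B\v{p}'$. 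Conversely, for any such $B$, one checks that the construction satisfies $\v{p}\succ\v{r}$: the inequality $(1-\epsilon)/L \geq p^\downarrow_L \geq p^\downarrow_{L+1} \geq \max_k (B\v{p}')_k$ (the first from the ordering average$\geq$min, the last from bistochasticity) ensures that sorting $\v{r}$ places the flat block first, so partial-sum verification splits into the two ranges $k\leq L$ (the average-vs-partial-average inequality) and $k > L$ (which reduces to $\v{p}'\succ B\v{p}'$). The main subtlety here is making sure both the direct and converse implications hold cleanly, but both follow once the Cauchy-Schwarz/majorisation chain is saturated and Birkhoff/HLP is invoked.

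For the second part, since permutations leave entropy invariant, I restrict to the canonical representative $\v{q}^*$. The first $L$ entries contribute a fixed amount $-(1-\epsilon)\log[(1-\epsilon)/L]$ to $H(\v{q}^*)$, so only the tail contribution $-\sum_j (B\v{p}')_j \log (B\v{p}')_j$ depends on $B$. Rescaling $\v{p}'$ and $B\v{p}'$ by $\epsilon$ yields honest probability distributions related by the same bistochastic~$B$, and factoring out the $-\epsilon\log\epsilon$ offset reduces the comparison to the Shannon entropies of $\tilde{\v{p}}' := \v{p}'/\epsilon$ and $B\tilde{\v{p}}'$. Since $\tilde{\v{p}}' \succ B\tilde{\v{p}}'$ and Shannon entropy is Schur-concave, $H(B\tilde{\v{p}}') \geq H(\tilde{\v{p}}')$, with equality achieved (in particular) for $B = \mathbbm{1}$. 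The anticipated main obstacle is keeping the majorisation verification airtight in the $k\leq L$ range, where one must recognise that the flat value $(1-\epsilon)/L$ is precisely the average of the top $L$ entries of $\v{p}^\downarrow$ and therefore dominated by any shorter partial average -- a small but easy-to-mishandle arithmetic point.
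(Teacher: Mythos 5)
Your proposal is correct, and its skeleton matches the paper's: bound the fidelity by Cauchy–Schwarz, pin the top-$L$ cumulative sum at $1-\epsilon$ via $\v{p}\succ\v{r}$ together with Lemma~\ref{lem:1shot-distill}, characterise the tail through a bistochastic matrix, and settle the entropy claim by Schur-concavity (entropy non-decreasing under bistochastic maps). Where you diverge is in the two technical steps. For the flat block, the paper sets up a constrained optimisation and solves it with Lagrange multipliers to get $\tilde{q}_i^\downarrow=(1-\epsilon)/L$; you instead read this off from the equality condition in Cauchy–Schwarz combined with the saturated majorisation bound, which is shorter and avoids the calculus. For the tail, the paper takes the global bistochastic $Q$ with $Q\v{p}^\downarrow=\tilde{\v{q}}^\downarrow$, Birkhoff-decomposes it, and uses the test vector $\v{v}=(1,\dots,1,0,\dots,0)$ to force every permutation in the decomposition (hence $Q$ itself) to be block-diagonal, yielding $\v{r}''=B\v{p}'$; you instead restrict the partial-sum inequalities to indices $k>L$ to obtain $\v{p}'\succ\v{r}''$ directly and then invoke Hardy–Littlewood–Pólya on the rescaled tail, which gets the same $B$ with less machinery (the one micro-step worth spelling out is that saturation forces every tail entry to be at most $(1-\epsilon)/L$, so the sorted $\v{r}$ really does split into flat block followed by sorted tail; this follows immediately from $\sum_{i=1}^L r_i^\downarrow=1-\epsilon$). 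You also verify the converse — that every state of the form $\Pi\v{q}^*$ indeed satisfies $\v{p}\succ\v{r}$ and saturates the fidelity — which the paper leaves implicit even though it is used downstream when the optimal final state is selected; making that explicit is a genuine (if small) strengthening.
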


The proof of the above lemma can be found in Section~\ref{app:min_out}. Here, we apply it to the central Eq.~\eqref{eq:e-thermomajorisation} that specifies the conditions for the investigated $\epsilon$-approximate thermodynamic distillation process. As a result, the actual total final state in the embedded picture, $\hat{\v{F}}^N$, which is $\epsilon$ away from the target state $\hat{\v{G}}^N \otimes \tilde{\v{f}}^N_k$, is given, up to permutations, by
\begin{eqnarray}
\label{eq:distribution-total}
     \hat{\v{F}}^N&=&
     \begin{cases}
       \frac{1-\epsilon}{K} &\quad \mathrm{for~} i\leq K,\\
       (\hat{\v{P}}^N\otimes\tilde{\v{\eta}}^N)_{i}^\downarrow &\quad\mathrm{for~}i>K.\\
     \end{cases}\, 
\end{eqnarray}
where
\begin{equation}
    K=\exp(H(\hat{\v{G}}^N)+H(\tilde{\v{f}}^N_k)).
\end{equation}
Note that we have chosen $\hat{\v{F}}^N$ to minimise entropy since, due to Eq.~\eqref{eq:invariance1}, it translates into the real (unembedded) final state $\v{F}^N$ with maximal free energy, i.e., it leads to minimal free energy dissipation.

The next step consists of going from the embedded to the unembedded picture. Importantly, if a state in the embedded picture is not uniform within each embedding box, the unembedding will effectively lead to the loss of free energy. However, we note that the final state is given by Eq.~\eqref{eq:distribution-total} only up to permutations. Thus, one may freely rearrange its elements to minimise such a loss. In particular, in Section~\ref{app:DDF} we show that for the case of identical initial states (i.e., $\v{P}^n=\v{p}^{\otimes N}$), there exists a permutation that transforms $\hat{\v{F}}^N$ so that it is uniform in almost all embedding boxes, which leads to exponentially small dissipation of free energy (i.e., no dissipation up to second-order asymptotics).  Therefore, employing the definition of dissipated free energy from Eq.~\eqref{eq:F_diss} and the above discussion, we have
\begin{align}
    F^N_{\mathrm{diss}}=&\frac{1}{\beta}\left( D(\v{P}^N\|\v{G}^N)-D(\v{F}^N\|\v{G}^N\otimes \tilde{\v{G}}^N)\right)\nonumber\\
    \simeq&\frac{1}{\beta}\left( H(\hat{\v{F}}^N)-H(\hat{\v{P}}^N)-\log \tilde{D}\right).
\end{align}
Here, $\tilde{D}$ is the embedding constant defined by $\tilde{\v{G}}^N$ according to Eq.~\eqref{eq:definitiongibbsvec}, and similarly $D$ will denote this embedding constant for $\v{G}^N$.
Note, however, that these can always be chosen to be equal, since the only thing that matters is that $\tilde{G}^N_k=\tilde{D}_k/\tilde{D}$ and ${{G}}^N_k={D}_k/D$, i.e., the change in $\tilde{D}$ or $D$ can be compensated by the appropriate change in $\tilde{D}_k$ or $D_k$.

\vspace{-0.1cm}
Next, noting that $K=D\tilde{D}_k=\tilde{D}\tilde{D}_k$, we calculate the entropy of $\hat{\v{F}}^N$:
\begin{align}
    H(\hat{\v{F}}^N) =& -\sum_{i=1}^K \frac{1-\epsilon}{K}\log\left(\frac{1-\epsilon}{K}\right) - \sum_{i > \tilde{D}_k} (\hat{\v{P}}^N)_i{^{\downarrow}} \log \frac{(\hat{\v{P}}^N)_i{^{\downarrow}}}{\tilde{D}}\nonumber\\
    =&-(1-\epsilon)\log(1-\epsilon)+(1-\epsilon)\log K\nonumber\\
    & \hspace{2cm}- \sum_{i > \tilde{D}_k} (\hat{\v{P}}^N)_i{^{\downarrow}} \log (\hat{\v{P}}^N)_i{^{\downarrow}} +\epsilon \log \tilde{D}\nonumber\\
    \simeq & \log \tilde{D} +(1-\epsilon)\log\tilde{D}_k- \sum_{i > \tilde{D}_k} (\hat{\v{P}}^N)_i{^{\downarrow}} \log (\hat{\v{P}}^N)_i{^{\downarrow}},
\end{align}
where we have dropped the term $(1-\epsilon) \log (1-\epsilon)$ as it is constant (i.e. it does not scale with $N$). 
Therefore, the dissipated free energy in the optimal distillation process is simply given by
\begin{equation}
    \label{eq:dissipatedfreeenergyapprox}
   \!\! F^N_{\text{diss}} \simeq\frac{1}{\beta}\!\left(\! (1-\epsilon)\log \tilde{D}_k +\sum_{i = 1}^{\tilde{D}_k} (\hat{\v{P}}^N)_i{^{\downarrow}} \log (\hat{\v{P}}^N)_i{^{\downarrow}} \!\right)\!.\!
\end{equation}


\begin{center}
    \emph{\textbf{Calculating optimal dissipation}}
\end{center}

We now proceed to provide bounds for $F^N_{\text{diss}}$. To do so, we first make use of Eq.\eqref{chi_order} and the fact that $\log x$ is negative for all $x\in(0,1)$ to write 
\begin{align}
\label{eq:ineq}
    \!\!\sum_{i = 1}^{\tilde{D}_k} (\hat{\v{P}}^N)_i{^{\downarrow}} \log (\hat{\v{P}}^N)_i{^{\downarrow}} \leq \sum_{i = 1}^{\chi_{\hat{\v{P}}^N}(\tilde{D}_k)} (\hat{\v{P}}^N)_i{^{\downarrow}} \log (\hat{\v{P}}^N)_i{^{\downarrow}} \!.\!
\end{align}
The right hand side of the above can then be recast as follows,
\begin{align}
&\!\!\sum_{i = 1}^{\chi_{\hat{\v{P}}^N}(\tilde{D}_k)} (\hat{\v{P}}^N)_i{^{\downarrow}} \log (\hat{\v{P}}^N)_i{^{\downarrow}}\nonumber =\sum_{i}\left\lbrace 
    \hat{P}^N_i\log (\hat{P}^N_i) \middle|
    \hat{P}^N_i \geq \frac{1}{\tilde{D}_k}		
    \right\rbrace \nonumber \\
    &~=\!\!\sum_{i_1,\dots,i_N}\!\left\lbrace \prod_{n=1}^{N}
    (\hat{\v{p}}^N_{n})_{i_n}\sum_{m=1}^N\log(\hat{\v{p}}^N_{m})_{i_m} \middle|\prod_{n=1}^{N}(\hat{\v{p}}^N_{n})_{i_n}\geq \frac{1}{\tilde{D}_k}\!\right\rbrace  \nonumber\\
    &~=\!\!\sum_{i_1,\dots,i_N}\!\left\lbrace \prod_{n=1}^{N}
    (\hat{\v{p}}^N_{n})_{i_n}\sum_{m=1}^N\log(\hat{\v{p}}^N_{m})_{i_m}\right. \times \nonumber\\
    &\hspace{4cm} \times \left| -\sum_{m=1}^N\log(\hat{\v{p}}^N_{m})_{i_m}\leq \log \tilde{D}_k \right\rbrace.
    \label{eq:PNlogPN}
\end{align}
Let us now note that from the definition in Eq.~\eqref{eq:deltaF} we have (here we employ the notation introduced in Eqs.~\eqref{eq:hn}-\eqref{eq:yn2}):
\begin{align}
    \beta\Delta F^N &= \sum\limits_{n=1}^N D(\v{p}_{n}^N\|\v{\gamma}_n^N)-D(\tilde{\v{s}}_{k}^N\|\tilde{\v{\gamma}}^N)\nonumber\\
    &= H(\tilde{\v{f}}_{k}^N) - \sum\limits_{n=1}^N H(\hat{\v{p}}_{n}^N)=\log \tilde{D}_k - N h_N .
\end{align}
Next, since for non-trivial error we need
\begin{equation}
\label{eq:xsigmF}
    \beta\Delta F^N = x\beta\sigma(F^N) = x \sqrt{N v_N},
\end{equation}
with some constant $x$, we can rewrite $\log \tilde{D}_k$ as
\begin{equation}
    \log \tilde{D}_k = N h_N +x \sqrt{N v_N}. 
    \label{eq:Dktilde}
\end{equation}
\vspace{-0.1cm}
Coming back to Eq.~\eqref{eq:PNlogPN}, we can rewrite it by introducing $N$ discrete random variables $\{X_n\}_{n=1}^N$ assuming values $-\log (\hat{\v{p}}_n^N)_{i_n}$ with probability $(\hat{\v{p}}^N_n)_{i_n}$. Crucially, since the sum of their averages is $Nh_N$ and their total variance is $Nv_N$, the condition in the summation in Eq.~\eqref{eq:PNlogPN} simply becomes:
\begin{equation}
    \label{eq:constraint}
    Y_N:=\frac{\sum_{n=1}^N(X_n-\langle X_n\rangle)}{\sqrt{\sum_{n=1}^N\langle(X_n-\langle X_n\rangle)^2\rangle}} \leq x.
\end{equation}
Thus, we can write Eq.~\eqref{eq:PNlogPN} in a compact way as follows
\begin{align}
    &\sum_{i = 1}^{\chi_{\hat{\v{P}}^N}(\tilde{D}_k)} (\hat{\v{P}}^N)_i{^{\downarrow}} \log (\hat{\v{P}}^N)_i{^{\downarrow}}=-\int_T \sum_{n=1}^N
    X_N  d P, 
\end{align}
where $P$ is discrete probability measure given by $P(i_1,\ldots,i_N)=\prod_{n=1}^{N} (\hat{\v{p}}^N_{n})_{i_n}$ and $T$ is a region satisfying the constraint $Y_N<x$. Noting that 
\begin{align}
    \sum_{n=1}^N X_n= \sqrt{N v_n} Y_N + N h_N
\end{align}
we can further rewrite it as 
\begin{align}
    \!\!\!\int\limits_T \sum_{n=1}^N
    X_N  d P=
    \sqrt{Nv_N}\!\!\int\limits_{Y_N\leq x}\!\!  
    Y_N dP   + N h_N \!\!\int\limits_{Y_N\leq x}\!\! dP.\!
\end{align}
The second integral on the right hand side of the above was already calculated and is equal to $1-\epsilon$, where $\epsilon$ is the optimal transformation error from Theorem~\ref{thm:incoherent}. Further, since $Y_N$ is a standarized sum of independent random variables, its distribution tends to a normal Gaussian distribution with density denoted by  $\phi(x)$, i.e.,
\begin{equation}
    \int\limits_{Y_N\leq x}\!\!  
    Y_N dP\simeq\int_{-\infty}^x  y \phi(y) d y = -\frac{e^{-x^2/2}}{\sqrt{2\pi}}.
\end{equation}
We thus obtain: 
\begin{align}
    \sum_{i = 1}^{\chi_{\hat{\v{P}}^N}(\tilde{D}_k)} (\hat{\v{P}}^N)_i{^{\downarrow}} \log (\hat{\v{P}}^N)_i{^{\downarrow}}&\simeq \sqrt{N v_N}\frac{e^{-x^2/2}}{\sqrt{2\pi}} - N h_N(1-\epsilon)\nonumber\\
    &= \sqrt{Nv_N}\frac{e^{-x^2/2}}{\sqrt{2\pi}} -(1-\epsilon) (\log \tilde{D}_k-x\sqrt{Nv_N})\nonumber\\
    &= \beta\sigma(F^N)\!\left[\!\frac{e^{-x^2/2}}{\sqrt{2\pi}}+x(1-\epsilon)\!\right]\! -\!(1-\epsilon) \log \tilde{D}_k.
\end{align}

We can now use the inequality from Eq.~\eqref{eq:ineq} and substitute the above to Eq.~\eqref{eq:dissipatedfreeenergyapprox} to arrive at:
\begin{equation}
   F^N_{\text{diss}} \lesssim \sigma(F^N)\left(\frac{e^{-x^2/2}}{\sqrt{2\pi}}+x(1-\epsilon)\right).
\end{equation}
Next, employing Eq.~\eqref{eq:xsigmF} and the expression for optimal transformation error from Theorem~\ref{thm:incoherent}, we can re-express $x$ as
\begin{equation}
    x=-\Phi^{-1}(\epsilon)
\end{equation}
to finally obtain
\begin{equation}    
\label{eq:diss_inc_upper}
   F^N_{\text{diss}} \lesssim \sigma(F^N)\left[\frac{e^{\frac{-(\Phi^{-1}(\epsilon))^2}{2}}}{\sqrt{2\pi}}-\Phi^{-1}(\epsilon)(1-\epsilon)\right].
\end{equation}

To provide a lower bound of $F^N_{\text{diss}}$ given in Eq.~\eqref{eq:dissipatedfreeenergyapprox}, we simply follow the argument we have given in Sec.~\ref{sec:incoherent_lower}. From Eq.~\eqref{eq:upper} it straightforwardly follows that 
\begin{align}
    \sum_{i=1}^{\tilde{D}_k} (\hat{\v{P}}^N)_i^\downarrow\log(\hat{\v{P}}^N)_i^\downarrow &\geq \sum_{i=1}^{\frac{\chi_{\hat{\v{P}}^N}(\tilde{D}_k e^{\delta\sqrt{N}})}{c}} (\hat{\v{P}}^N)_i^\downarrow\log(\hat{\v{P}}^N)_i^\downarrow,
\end{align}
where $c$, as before, can be lower bounded by 1 for large enough $N$. We thus get
\begin{equation}
 F^N_{\mathrm{diss}} \gtrsim\frac{1}{\beta}\left[ (1-\epsilon)\log \tilde{D}_k +\sum_{i=1}^{\chi_{\hat{\v{P}}^N}(\tilde{D}_k e^{\delta\sqrt{N}})} (\hat{\v{P}}^N)_i^\downarrow\log(\hat{\v{P}}^N)_i^\downarrow \right].
\end{equation}
Since the above inequality holds for any $\delta>0$, therefore the limit $\delta\rightarrow 0$ we have
\begin{equation}
   F^N_{\mathrm{diss}} \gtrsim \sigma(F^N)\left\{\frac{e^{\frac{-[\Phi^{-1}(\epsilon)]^2}{2}}}{\sqrt{2\pi}}-\Phi^{-1}(\epsilon)(1-\epsilon)\right\}.
\end{equation}
Combining the above with the upper bound from Eq.~\eqref{eq:diss_inc_upper} we finally arrive at
\begin{equation}
      F^N_{\text{diss}}\simeq a(\epsilon) \sigma(F^N), 
\end{equation}
where $a(\epsilon)$ is given by Eq.~\eqref{eq:a}.


\subsection{Proof of Theorem~\ref{thm:pure}}\label{sec:proof1b}

The proof of Theorem~\ref{thm:pure} will be divided into three parts. First, we will show that a thermodynamic distillation process from a general state $\rho$ can be reduced to a distillation process from an incoherent state that is a dephased version of $\rho$. Employing this observation, we will recast the problem under consideration in terms of approximate majorisation and thermomajorisation as described in Sec.~\ref{sec:incoherent}. Then, in the second part of the proof, we will derive the upper bound for the optimal transformation error $\epsilon_N$. Finally, in the third part, we will provide a lower bound for $\epsilon_N$ and show that it is approaching the derived upper bound in the asymptotic limit, and so we will prove Eq.~\eqref{eq:error_pure}. 


\begin{center}
    \emph{\textbf{Reducing the problem to the incoherent case}}
\end{center}

The thermodynamic distillation problem under investigation is specified as follows. The family of initial systems consists of a collection of $N$ identical subsystems, each with the same Hamiltonian
\begin{equation}
    H=\sum_{i=1}^d E_i \ketbra{E_i}{E_i},
\end{equation}
and an ancillary system with an arbitrary Hamiltonian~$H_A$ (note that the ancillary system can always be ignored by simply choosing its dimension to be 1). The family of initial states is given by
\begin{equation}\label{eq:Initial_pure_state_tensor}
    \rho^N= \psi^{\otimes N} \otimes \ketbra{E^A_0}{E^A_0},
\end{equation}
where
\begin{equation}
    \psi=\ketbra{\psi}{\psi},\quad \ket{\psi}=\sum_{i=1}^d \sqrt{p_i} e^{i\phi_i} \ket{E_i},
\end{equation}
is an arbitrary pure state and $\ket{E^A_0}$ is an eigenstate of $H_A$ with energy $E^A_0$. The family of target systems is composed of subsystems described by arbitrary Hamiltonians~$\tilde{H}^N$ and a subsystem described by the Hamiltonian~$H_A$. The family of target states is given by 
\begin{equation}
    \tilde{\rho}^N= \ketbra{\tilde{E}^{N}_k}{\tilde{E}^{N}_k} \otimes \ketbra{E^A_1}{E^A_1},
\end{equation}
where $\ket{\tilde{E}^{N}_k}$ is some eigenstate of $\tilde{H}^N$ and $\ket{E^A_1}$ is an eigenstate of $H_A$ with energy $E^A_1$. We are thus interested in the existence of a thermal operation $\E^{\beta}$ approximately performing the following transformation:
\begin{equation}
    \psi^{\otimes N} \otimes \ketbra{E^A_0}{E^A_0} ~\xrightarrow{~\E^{\beta}~}~ \ketbra{\tilde{E}^{N}_k}{\tilde{E}^{N}_k} \otimes \ketbra{E^A_1}{E^A_1}.
\end{equation}

We now have the following simple, but very useful, lemma.
\begin{lemma}[Dephasing invariance of TOs]
	\label{lem:pure-to-incoherent}
   Every incoherent state $\sigma$ achievable from a state $\rho$ through a thermal operation is also achievable from $\D(\rho)$, where $\D$ is the dephasing operation destroying coherence between different energy subspaces:
	\begin{align}
	\exists \E^{\beta}: \E^{\beta}(\rho) = \sigma \quad\Leftrightarrow \quad \E^{\beta}[\D(\rho)] = \sigma.
	\end{align}
\end{lemma}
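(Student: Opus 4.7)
The plan is to prove both directions of the equivalence by exploiting two well-known structural properties of thermal operations: (i) time-translation covariance, which is listed as property (P2) in Chapter~\ref{C:resource_theory_of_thermodynamics}, and (ii) the fact that the energy-basis dephasing map $\mathcal{D}$ is itself a thermal operation (it can be realised as a uniform time-average over the free unitary group generated by $H$, each element of which is trivially energy-conserving on the system-bath composite when the bath evolves freely). I would begin by recording these two facts explicitly, writing $\mathcal{D}(\rho) = \sum_E \Pi_E \rho \Pi_E = \lim_{T\to\infty}\tfrac{1}{T}\int_0^T e^{-iHt}\rho\, e^{iHt}\,dt$, and noting that the output incoherent state $\sigma$ satisfies $\mathcal{D}(\sigma)=\sigma$.

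The key intermediate step I would prove is that $\mathcal{D}$ commutes with every thermal operation $\mathcal{E}^{\beta}$. Applying the covariance identity $\mathcal{E}^{\beta}(e^{-iHt}\rho\, e^{iHt}) = e^{-iHt}\mathcal{E}^{\beta}(\rho)\, e^{iHt}$ inside the time integral and taking $T\to\infty$ yields
\begin{equation}
    \mathcal{E}^{\beta}\circ\mathcal{D} \;=\; \mathcal{D}\circ\mathcal{E}^{\beta}.
\end{equation}
With this commutation in hand, the ($\Rightarrow$) direction is immediate: assuming $\mathcal{E}^{\beta}(\rho)=\sigma$, we get $\mathcal{E}^{\beta}[\mathcal{D}(\rho)] = \mathcal{D}[\mathcal{E}^{\beta}(\rho)] = \mathcal{D}(\sigma) = \sigma$, where the final equality uses incoherence of~$\sigma$.

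For the ($\Leftarrow$) direction I would argue by composition: if $\mathcal{E}^{\beta}[\mathcal{D}(\rho)] = \sigma$, then define $\mathcal{F}^{\beta} := \mathcal{E}^{\beta}\circ \mathcal{D}$. Since both $\mathcal{D}$ and $\mathcal{E}^{\beta}$ belong to the set of thermal operations and thermal operations are closed under composition, $\mathcal{F}^{\beta}$ is a thermal operation satisfying $\mathcal{F}^{\beta}(\rho)=\sigma$, which is precisely the existence statement required.

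The only non-routine obstacle is justifying rigorously that $\mathcal{D}$ belongs to the set of thermal operations as defined in Definition~\ref{def:thermal-operations}, since the time-averaging formulation involves a continuous convex combination rather than a single energy-preserving unitary. I would handle this either by invoking the standard observation that $\mathcal{D}$ arises from dephasing against a bath clock prepared in a time-averaged state (an energy-conserving Stinespring dilation), or, alternatively, bypass the issue entirely by noting that the argument only requires $\mathcal{D}$ to be realisable by covariant and Gibbs-preserving processing of $\rho$, which already suffices to conclude that $\mathcal{E}^{\beta}\circ\mathcal{D}$ lies in the thermal operations class once combined with the commutation identity above.
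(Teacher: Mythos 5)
Your proof is correct and takes essentially the same route as the paper's: the forward direction applies time-translation covariance of thermal operations together with $\D(\sigma)=\sigma$, and the reverse direction composes with the dephasing map, which is itself a thermal operation. The only cosmetic difference is that you isolate the commutation $\E^{\beta}\circ\D=\D\circ\E^{\beta}$ as an explicit intermediate step and flag the justification that $\D$ is a thermal operation, both of which the paper treats as standard and leaves implicit.
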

\begin{proof}
First, for a given $\rho$ and incoherent $\sigma$, assume that there exists a thermal operation $\E^{\beta}$ such that $\E^{\beta}(\rho) = \sigma$. Now, employing the fact that every thermal operation is covariant with respect to time-translations~\cite{lostaglio2015description}, and using the fact that incoherent $\sigma$ by definition satisfies $\mathcal{D}(\sigma) = \sigma$, we get 
\begin{equation}
\label{dephasingrho}
    \E^{\beta}[\D(\rho)]=\D[\E^{\beta}(\rho)]=\D(\sigma)=\sigma.  
\end{equation}
Likewise, the reverse implication holds by noting that the dephasing operation is a thermal operation.
\end{proof}

Because the target state in our case is incoherent, we can use the above result to restate our problem as the existence of a thermal operation $\E^{\beta}$ approximately performing the following transformation 
\begin{equation}
    \D(\psi^{\otimes N} \otimes \ketbra{E^A_0}{E^A_0}) ~\xrightarrow{~\E^{\beta}~}~ \ketbra{\tilde{E}^{N}_k}{\tilde{E}^{N}_k} \otimes \ketbra{E^A_1}{E^A_1}.
\end{equation}
Since
\begin{equation}\label{eq: Dephased initial state}
    \D(\psi^{\otimes N} \otimes \ketbra{E^A_0}{E^A_0})=\D(\psi^{\otimes N}) \otimes \ketbra{E^A_0}{E^A_0},
\end{equation}
our problem further reduces to understanding the structure of the incoherent state $\D(\psi^{\otimes N})$. It is block-diagonal in the energy eigenbasis and can be diagonalised using thermal operations (since unitaries in a fixed energy subspace are free operations). After such a procedure, we end up with an incoherent state that is described by the probability distribution $\v{P}^N$ over the multi-index set $\v{k}$ 
\begin{equation}
    \label{eq:Multinomial_prob_vector}
    P^N_{\v{k}} =\binom{N}{k_1,...,k_d}\prod_{i=1}^d p^{k_i}_i.
\end{equation}
Note that $P^N_{\v{k}}$ specifies the probability of $k_1$ systems being in energy state $E_1$, $k_2$ systems being in energy state $E_2$, and so on; and that we made a technical assumption that energy levels are incommensurable, so that each vector $\v{k}$ corresponds to a different value of total energy.

We have thus reduced the problem of thermodynamic distillation from pure states to thermodynamic distillation from incoherent states. More precisely, let us denote the sharp distributions corresponding to $\ket{E^A_i}$ by $\v{s}^A_i$ and the corresponding flat states after embedding by $\v{f}^A_i$. As before, we also use $\tilde{\v{s}}_k^N$ and $\tilde{\v{f}}_k^N$ to denote distributions related to the sharp state $\ket{\tilde{E}^{N}_k}$ and its corresponding flat state. The embedded Gibbs state corresponding to $H^N$  will be again denoted by $\hat{\v{G}}^N$, however now it has an even simpler form than in Eq.~\eqref{eq:PN_GN}, as the initial systems have identical Hamiltonians:
\begin{equation}
    \hat{\v{G}}^N=\hat{\v{\gamma}}^{\otimes N} = {\v{\eta}}^{\otimes N}.
\end{equation}
Similarly, $\hat{\v{P}}^N$ will be used to denote the embedded initial state (even though it now has a different form than in Eq.~\eqref{eq:PN_GN}):
\begin{equation}
     \hat{P}^N_{\v{k}, g_{\v{k}}} = \binom{N}{k_1,...,k_d}\prod^{d}_{i=1}\frac{p^{k_i}_i}{D^{k_i}_i}, 
\end{equation}
where $\gamma_i=D_i/D$ and 
\begin{equation}
    g_{\v{k}} \in \bigg\{1, ..., \prod_{i=1}^d D^{k_i}_i \bigg\}
\end{equation}
is an index for the degeneracy coming from embedding. With the notation set, our distillation problem can now be written as
\begin{equation}
\label{thermomajorisation__embedded_pure_state}
    {\hat{\v{P}}}^N \otimes \v{f}^A_0 \otimes \tilde{\v{\eta}} \succ_{\epsilon}  \hat{\v{G}}^N \otimes \v{f}^A_1 \otimes \tilde{\v{f}_k}.
\end{equation}

\begin{center}
    \emph{\textbf{{Upper bound for the transformation error}}}
\end{center}

We begin by observing that our target distribution in Eq.~\eqref{thermomajorisation__embedded_pure_state} is flat, and so \mbox{$V(\hat{\v{G}}^N \otimes \v{f}^A_1 \otimes \tilde{\v{f}_k})=0$}. Thus, we can employ Lemma~\ref{lem:1shot-distill} and Eq.~\eqref{eq:simpleobservation2} to get the following expression for the optimal transformation error:
\begin{equation}
    \label{eq:Exactexpoferror}
    \epsilon_N= 1-\sum_{j=1}^{L}(\hat{\v{P}}^N)^{\downarrow}_j
\end{equation}
where $L$ is given by
\begin{align}
    L & = \exp[H(\hat{\v{G}}^{N})+H(\v{f}^A_1)+H(\tilde{\v{f}}_k)-H(\v{f}^A_0)-H(\tilde{\v{\eta}})]\nonumber\\
    & = \exp[H(\hat{\v{G}}^{N})-D(\tilde{\v{f}}_k\|\tilde{\v{\eta}})-\beta(E_1^A-E_0^A)]. \label{eq:L}
\end{align}
Notice that in the current case $\Delta F^N$, defined in Eq.~\eqref{eq:deltaF}, is given by
\begin{align}
   \!\!\! \Delta F^N&\!=\!\frac{1}{\beta}\Big(D(\psi^{\otimes N}\|\gamma^{\otimes N})+ D(\ketbra{E_0^A}{E_0^A}\|\gamma_A)\nonumber\\
    \!\!&\qquad\quad-D(\ketbra{\tilde{E}^N_{k}}{\tilde{E}^N_{k}}\|\tilde{{\gamma}}) -D(\ketbra{E_1^A}{E_1^A}\|\gamma_A)\Big)\nonumber\\
   \!\!\!  &\!=\! \frac{1}{\beta}\Big(ND(\psi\|\gamma)\!-\!D(\tilde{\v{f}}_k\|\tilde{\v{\eta}})\!-\!\beta(E_1^A\!-\!E_0^A)\Big).
\end{align}
Using the above we can then rewrite $L$ as
\begin{align}
    \label{eq:lnL}
    \log L =&  \beta \Delta F^N+H(\hat{\v{G}}^N)-ND(\psi\|\gamma).
\end{align}
Now, employing Eq.~\eqref{eq:lower} and the above, we provide the upper bound for $\epsilon_N$:
\begin{align}
\epsilon_N&\leq 1- \sum_{\v{k},g_{\v{k}}}\left\lbrace 
    \hat{P}^N_{\v{k},g_{\v{k}}} \middle|
    \hat{P}^N_{\v{k},g_{\v{k}}}\geq \frac{1}{L}	\right\rbrace= 1- \sum_{\v{k}}\left\lbrace 
    {P}^N_{\v{k}} \middle|
    {P}^N_{\v{k}}\geq \frac{\prod_{i=1}^d D_i^{k_i}}{L}	\right\rbrace\nonumber\\
    \!\!\!&= 1- \sum_{\v{k}}\left\lbrace 
    {P}^N_{\v{k}} \middle|
    \log{P}^N_{\v{k}}\geq \sum_{i=1}^d k_i\log D_i-\log L	\right\rbrace \nonumber\\
    \!\!\!&= 1- \sum_{\v{k}}\bigg\{
    {P}^N_{\v{k}} \Bigg|
    \frac{\log{P}^N_{\v{k}}}{N}\geq \sum_{i=1}^d \frac{k_i}{N}\log \gamma_i+D(\psi\|\gamma)-\frac{\beta}{N} \Delta F^N \Bigg\}.\label{eq:ErrorCalc}
 \end{align}

To simplify the calculation of the upper bound of $\epsilon_N$, we rewrite each $\v{k}$ as a function of a vector $\v{s}$ such that 
\begin{equation}
    \label{eq:k_of_s}
    \v{k}=\v{k}(\v{s})=N\v{p}+\sqrt{N}\v{s},
\end{equation}
with $\sum_{i=1}^d s_i=0$. We then note that
\begin{equation}\label{eq:KLD_of_psi}
    D(\psi\|\gamma)=-\sum_{i=1}^d p_i \log \gamma_i
\end{equation}
and so the condition in Eq.~\eqref{eq:ErrorCalc} can be rewritten as
\begin{align}\label{eq:logP^N}
    \frac{\log{P}^N_{\v{k}(\v{s})}}{N}&\geq \frac{1}{\sqrt{N}}\sum_{i=1}^d s_i\log \gamma_i-\frac{\beta}{N} \Delta F^N  = -\frac{\beta}{\sqrt{N}}\sum_{i=1}^d s_iE_i-\frac{\beta}{N} \Delta F^N.
 \end{align}
As we rigorously argue in Section~\ref{app:log}, the left-hand side of the above vanishes much quicker than the right-hand side when $N\to\infty$, leading to
\begin{align}
\lim_{N\to\infty}\epsilon_N\leq 1\!-\! \lim_{N\to\infty}\sum_{\v{s}}\bigg\{
    {P}^N_{\v{k}(\v{s})} \Bigg|
     \sum_{i=1}^d s_iE_i\geq- \frac{\Delta F^N}{\sqrt{N}}\Bigg\}.\label{Final_Bound_on_Epsilon}
\end{align}

\begin{figure*}
    \centering
    \includegraphics[width=15.940cm]{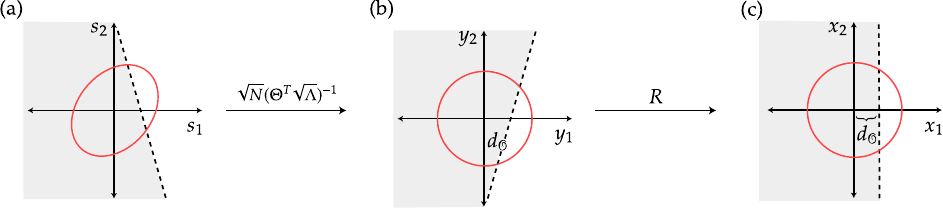}
    \caption{\label{fig:hyperplane} \emph{Standardising the bivariate normal distribution}. The points with equal probability density for the bivariate normal distribution are represented by a red ellipsis centred at the origin, and the black dashed line corresponds to the constraining hyperplane. The upper bound on $\epsilon_N$ is given by the probability mass within the shaded area. In order to calculate it, we first apply a rotation and scaling transformation, making the ellipsis symmetric with respect to the origin. Then, using the rotational symmetry of the standard bivariate normal distribution, one can rotate it such that the hyperplane becomes parallel to $x_2$.} 
\end{figure*}

Our goal is then to calculate the sum of $P^N_{\v{k}(\v{s})}$ in the limit $N\to\infty$ subject to the following hyperplane constraint 
\begin{equation}\label{eq:Defining_Hyperplane}
     \v{s}\cdot\v{E}\geq- \frac{\Delta F^N}{\sqrt{N}},
\end{equation} 
where $\v{E}$ is a vector of energies (eigenvalues of $H$). First, we approximate the multinomial distribution $\v{P}^N$ specified in Eq.~\eqref{eq:Multinomial_prob_vector} by a multivariate normal distribution $\mathcal{N}^{(\v{\mu},\v{\Sigma})}$ with mean vector $\v{\mu}=N\v{p}$ and covariance matrix $\v{\Sigma} = N(\text{diag }(\v{p})-\v{p}\v{p}^{T})$:
\begin{align}
\mathcal{N}^{(\v{\mu},\v{\Sigma})}_{\v{k}(\v{s})}&= \frac{1}{\sqrt{(2\pi)^d|\boldsymbol\Sigma|}}
\exp\left(-\frac{1}{2}({\v{k}}-{\v{\mu}})^T{\boldsymbol\Sigma}^{-1}({\v{k}}-{\v{\mu}})
\right)\nonumber\\
&=\frac{1}{\sqrt{(2\pi)^d|\boldsymbol\Sigma|}}
\exp\left(-\frac{1}{2}\v{s}^T{N\boldsymbol\Sigma}^{-1}\v{s}
\right).
\end{align}
As we explain in Section~\ref{app:clt}, such an approximation can always be made with an error approaching 0 as $N\rightarrow\infty$. Next, we standardise the multivariate normal distribution $\mathcal{N}^{(\v{\mu},\v{\Sigma})}$ using rotation and scaling transformations:
\begin{equation}
\label{ratemodifiedNew2}
    \v{\Sigma} = \Theta^T\sqrt{\v{\Lambda}}\sqrt{\v{\Lambda}}\Theta,
\end{equation}
where $\v{\Lambda}$ is a diagonal matrix with the eigenvalues of $\v{\Sigma}$ and $\Theta$ is an orthogonal matrix with columns given by the eigenvectors of $\v{\Sigma}$. We illustrate this process for a three-level system (so described by $s_1$ and $s_2$ since $\sum_i s_i=0$) in Fig.~\ref{fig:hyperplane}. This rotation and scaling of co-ordinates allows us to write $\mathcal{N}^{(\v{\mu},\v{\Sigma})}$ as a product of univariate standard normal distribution $\phi(y_i)$:
\begin{align}\label{MultivariateNormalmean0}
\!\!\!\!\mathcal{N}^{(\v{\mu},\v{\Sigma})}_{\v{k}(\v{s}(\v{y}))}&\!=\!\frac{1}{\sqrt{(2\pi)^d|\boldsymbol\Sigma|}}
\exp\left(\!-\frac{1}{2}\v{y}^T\v{y}
\!\right)=\prod_{i=1}^d \phi(y_i),\!
\end{align}
where
\begin{equation}\label{y_and_s}
   \v{y}= \sqrt{N}(\Theta^T\sqrt{\v{\Lambda}})^{-1} \v{s}.
\end{equation}

We then can equivalently write the equation specifying the hyperplane, Eq.~\eqref{eq:Defining_Hyperplane}, as
\begin{equation}
    \label{transhyperplane}
    (\Theta^T\sqrt{\v{\Lambda}}\v{y})\cdot\v{E}\geq -\Delta F^N.
\end{equation}
Observe that the standard normal distribution given in  Eq.~\eqref{MultivariateNormalmean0} is rotationally invariant about the origin. One can thus choose a coordinate system \mbox{$\v{x}=\{x_1,\ldots, x_d\}$} by applying a suitable rotation $R$ on \mbox{$\v{y}=\{y_1,\ldots, y_d\}$}, so that the hyperplane specified in Eq.~\eqref{eq:Defining_Hyperplane} becomes parallel to all coordinate axes but  the $x_1$ axis. Eq.~\eqref{MultivariateNormalmean0} can then be rewritten in the following form
\begin{align}\label{MultivariateNormalmean0onx}
\!\!\!\!\mathcal{N}^{(\v{\mu},\v{\Sigma})}_{\v{k}(\v{s}(\v{x}))}&\!=\!\frac{1}{\sqrt{(2\pi)^d|\boldsymbol\Sigma|}}
\exp\left(\!-\frac{1}{2}\v{x}^T\v{x}
\!\right)=\prod_{i=1}^d \phi(x_i).\!
\end{align}
As we have 
\begin{equation}\label{Rotation}
    \v{x}=R\v{y},
\end{equation}
we can use it together with Eq.~\eqref{y_and_s} to rewrite Eq.~\eqref{eq:Defining_Hyperplane} as
\begin{equation}\label{eq:Defining_Hyperplane_New}
     \Theta^T\sqrt{\v{\Lambda}}R^T\v{x}\cdot\v{E}\geq -\Delta F^N.
\end{equation}

To calculate the right hand side of the inequality given in Eq.~\eqref{Final_Bound_on_Epsilon} in the limit $N\rightarrow\infty$, we integrate Eq. $\eqref{MultivariateNormalmean0onx}$ from $-\infty$ to $d_{\mathcal{O}}$ along $x_1$, and  from $-\infty$ to $+\infty$ along any other $x_i\neq x_1$, where $d_\mathcal{O}$ is the signed distance of the hyperplane given in Eq.~\eqref{eq:Defining_Hyperplane_New} from the origin (see Fig.~\ref{fig:hyperplane}). This distance can be explicitly calculated as
\begin{eqnarray}
 d_{\mathcal{O}} &=& \frac{\Delta F^N}{\sqrt{\v{E}\cdot(R\sqrt{\v{\Lambda}}\Theta)^T(R\sqrt{\v{\Lambda}}\Theta)\v{E}}} = \frac{\Delta F^N}{\sqrt{\v{E}\cdot(\v{\Sigma} \v{E})}}
= \frac{\Delta F^N}{\sigma(F^N)},
\end{eqnarray}
where we have used the definition of $\sigma(F^N)$ from  Eq.~\eqref{eq:vn} in the last line. Thus, the upper bound on $\epsilon_N$ in the limit $N\rightarrow\infty$ given in Eq.~\eqref{Final_Bound_on_Epsilon} can be calculated as
\begin{align}
&1\!-\!\lim_{N\rightarrow\infty}\int_{-\infty}^{+\infty}\!dx_d\phi(x_d)\ldots\int_{-\infty}^{+\infty}\!dx_2\phi(x_2)\int_{-\infty}^{d_\mathcal{O}}\!dx_1\phi(x_1)\nonumber\\ 
&=  1\!-\!\lim_{N\rightarrow\infty}\Phi(d_\mathcal{O}) = \lim_{N\rightarrow\infty}\Phi\left(-\frac{\Delta F^N}{\sigma(F^N)}\right). \label{eq:Errorint}
\end{align}


\begin{center}
    \emph{\textbf{{Lower bound for the transformation error}}}\label{Lower_Bound_Error_transformation}
\end{center}

We start by writing $L$ from Eq.~\eqref{eq:lnL} as 
\begin{eqnarray}\label{Ldef_x}
    L= \exp\left(AN+x\sqrt{N v_N}\right)=:L(x),
\end{eqnarray}
where 
\begin{align}\label{A_and_x}
 A =&H(\v{\eta})-D(\psi\|{\gamma}),\qquad
 x=\frac{\Delta F^N}{\sigma(F^N)}.
\end{align}
In the previous section we have exactly calculated the right hand side of Eq.~\eqref{eq:ErrorCalc} in the limit $N\rightarrow\infty$ (see Eq.~\eqref{eq:Errorint}). Using Eqs.~\eqref{Ldef_x}-\eqref{A_and_x}, we can equivalently rewrite this as
\begin{equation}\label{eq:P_N_and_PHI}
     \lim_{N\rightarrow\infty}\sum_{\v{k},g_{\v{k}}}\left\lbrace 
    \hat{P}^N_{\v{k},g_{\v{k}}} \middle|
    \hat{P}^N_{\v{k},g_{\v{k}}}\geq \frac{1}{L(x)} \right\rbrace = \lim_{N\rightarrow\infty}\Phi(x),
\end{equation}
where $x$ depends on $N$ as per Eq.~\eqref{A_and_x}. Now, to prove the lower bound on transformation error $\epsilon_N$ we start with the exact expression, Eq.~\eqref{eq:Exactexpoferror}, and use the inequality from Eq.~\eqref{eq:upper}. Similarly as before, we choose \mbox{$\alpha=\exp(\delta\sqrt{N})$} such that $\delta>0$, in the Eq.~\eqref{eq:upper}. Thus from Eq.~\eqref{chi_order} and Eq.~\eqref{Ldef_x} we have,
\begin{equation}\label{eq:BoundonlxPure}
    L(x)\leq \frac{\chi_{\hat{\v{P}}^N}(e^{\delta\sqrt{N}}L(x))}{c}=\frac{\chi_{\hat{\v{P}}^N}(L(x+\delta))}{c}
\end{equation}
where $c$ can be evaluated similarly as before: 
\begin{align}\label{c_defn}
    c&=e^{\delta \sqrt{N}/2}  \sum_{i=\chi_{\hat{\v{P}}^N}(e^{\delta\sqrt{N}/2}L(x))}^{\chi_{\hat{\v{P}}^N}(e^{\delta\sqrt{N}}L(x))} (\hat{\v{P}}^N)_i^\downarrow=e^{\delta \sqrt{N}/2}  \sum_{i=\chi_{\hat{\v{P}}^N}(L(x+\delta/2))}^{\chi_{\hat{\v{P}}^N}(L(x+\delta))} (\hat{\v{P}}^N)_i^\downarrow \nonumber\\
    &=e^{\delta \sqrt{N}/2} \left( \sum_{i}\left\lbrace 
    \hat{P}^N_i \middle|
    \hat{P}_i^N \geq \frac{1}{L(x+\delta)}	
    \right\rbrace  \right.\left. -\sum_{i}\left\lbrace 
    \hat{P}^N_i \middle|
    \hat{P}_i^N \geq \frac{1}{L(x+\delta/2)}
    \right\rbrace \right).
\end{align}
Using Eq.~\eqref{eq:P_N_and_PHI}, we see that the limiting behaviour of $c$ from Eq.~\eqref{c_defn} is given by 
\begin{equation}
  \lim_{N\to\infty} c = \left(\Phi(x+\delta)-\Phi(x+\delta/2)\right) \lim_{N\to\infty} e^{\delta \sqrt{N}/2}.
\end{equation}
Thus, for any finite $\delta>0$, there always exists $N_0$ such that for all $N\geq N_0$ we have $c>1$. Combining with Eq.~\eqref{eq:BoundonlxPure}, for large enough $N$ we finally have, 
\begin{equation}\label{eq:Boundonlx2Pure}
    L(x)\leq \frac{\chi_{\hat{\v{P}}^N}(L(x+\delta))}{c} \leq \chi_{\hat{\v{P}}^N}(L(x+\delta)).
\end{equation}
Hence, using Eq.~\eqref{eq:Boundonlx2Pure}, we have the following lower bound on transformation error
\begin{align}
    \label{eq:second_upper3}
   \lim_{N\rightarrow\infty}\epsilon_N &\geq  1- \lim_{N\rightarrow\infty}\sum_{i=1}^{\chi_{\hat{\v{P}}^N}(L(x+\delta))} (\hat{\v{P}}^N)_i^\downarrow \, \nonumber\\
   \!\!\! &=1- \lim_{N\rightarrow\infty}\sum_{i}\left\lbrace 
    \hat{P}^N_i \middle|
    \hat{P}_i^N \geq \frac{1}{L(x+\delta)}
    \right\rbrace \nonumber\\
    \!\!\!&= 1-\lim_{N\rightarrow\infty}\Phi(x+\delta)=\lim_{N\rightarrow\infty}\Phi(-x-\delta),
\end{align}
where the first equality in the last line follows from Eq.~\eqref{eq:P_N_and_PHI}.
 Since the above inequality holds for any $\delta>0$, taking the limit $\delta\rightarrow0$ we conclude that
\begin{equation}
    \lim_{N\to\infty} \epsilon_N\geq \lim_{N\to\infty} \Phi\left(-\frac{\Delta F^N}{\sigma(F^N)}\right).
\end{equation}
Finally, combining the above with the bound obtained in Eq.~\eqref{eq:Errorint}, we have
\begin{equation}\label{Proof_end_error}
    \lim_{N\to\infty}\epsilon_N=\lim_{N\to\infty} \Phi\left(-\frac{\Delta F^N}{\sigma(F^N)}\right),
\end{equation}
which completes the proof.


\subsection{Proof of Theorem~\ref{thm:pure2}}\label{sec:proof2b}

The proof of Theorem~\ref{thm:pure2} will be divided into two parts. First, we will find the embedded version of the optimal final state minimising the dissipation of free energy $F^N_{\mathrm{diss}}$, and derive the lower bound for $F^N_{\mathrm{diss}}$ as a function of the initial state. And then, we will calculate this bound up to second order asymptotic terms.


\begin{center}
    \emph{\textbf{{Deriving bound for optimal dissipation}}}
\end{center}

We start by applying Lemma~\ref{Opt_Diss_free} to the central Eq.~\eqref{thermomajorisation__embedded_pure_state} that specifies the conditions for the investigated $\epsilon$-approximate thermodynamic distillation process. As a result, the actual total final state in the embedded picture, $\hat{\v{F}}^N$, which is $\epsilon$ away from the target state $\hat{\v{G}}^N  \otimes \v{f}_1^A\otimes \tilde{\v{f}}_k$, is given, up to permutations, by
\begin{eqnarray}
\label{eq:distribution-total2}
     \hat{\v{F}}^N&=&
     \begin{cases}
       \frac{1-\epsilon}{k} &\quad \mathrm{for~} i\leq k,\\
       ( {\hat{\v{P}}}^N \otimes \v{f}^A_0 \otimes \tilde{\v{\eta}})_{i}^\downarrow &\quad\mathrm{for~}i>k,
     \end{cases}\, 
\end{eqnarray}
where
\begin{equation}
    k=\exp\Big(H(\hat{\v{G}}^N)+H(\v{f}^A_1)+H(\tilde{\v{f}}_k)\Big).
\end{equation}
Similarly as before, we have chosen $\hat{\v{F}}^N$ with minimal entropy (i.e., the bistochastic map $B$ from Lemma~\ref{Opt_Diss_free} was chosen to be a permutation), as this minimises the dissipation of free energy. Due to the fact that the final state is specified only up to permutations, one may freely rearrange its elements to reduce the dissipation of free energy coming from unembedding (such dissipation happens when a state is not uniform within each embedding box). However, unlike in the previous case, this permutational freedom does not necessarily allow one to completely avoid additional dissipation. In the previous case, the number of non-uniform embedding boxes was exponentially small, which led to a negligible loss of free energy due to unembedding. But here, the number of non-uniform embedding boxes is not exponentially small and that can lead to a finite dissipation of free energy. Therefore, in this case, we can only provide a lower bound on dissipation of free energy as follows:
\begin{align}
\label{eq:Inequality_F_Diss}
     F^N_{\text{diss}} &=\frac{1}{\beta }\Big[D( \psi^{\otimes N}\|\v{G}^{ N})+ D(\ketbra{E_0^A}{E_0^A}\|\gamma_A)
     -D(\v{F}^N\|\v{G}^N\otimes\gamma_A\otimes\tilde{\v{G}}^N )\Big]\nonumber\\
     &\geq \frac{1}{\beta }\Big[D(\hat{\v{P}}^N\|\hat{\v{G}}^N)+ H(\v{P}^N) +D(\ketbra{E_0^A}{E_0^A}\|\gamma_A)\nonumber\\
     &\hspace{4cm}- D(\hat{\v{F}}^N\|\hat{\v{G}}^N\otimes\hat{\gamma}_A\otimes\hat{\tilde{\v{G}}}^N)\Big],
\end{align}
where we have used the fact that
\begin{align}
    D(\psi^{\otimes N}\|\v{G}^N)&=D(\D(\psi^{\otimes N})\|\v{G}^N) + D(\psi^{\otimes N}\|\D(\psi^{\otimes N}))    \nonumber\\
    &=D(\hat{\v{P}}^N\|\hat{\v{G}}^N)+ H(\v{P}^N).
\end{align}

Next, we recall that $H(\v{P}^N)=O(\log N)$ (see Section~\ref{app:log} for details), and note that \mbox{$D(\ketbra{E_0^A}{E_0^A}\|\gamma_A) = -\log\gamma_0^A$}. Thus, the inequality from Eq.~\eqref{eq:Inequality_F_Diss} can be simplified as follows:
\begin{align}
    F^N_{\text{diss}} \gtrsim&\frac{1}{\beta}\Big(D(\hat{\v{P}}^N\|\hat{\v{G}}^N)
     -D(\hat{\v{F}}^N\|\hat{\v{G}}^N\otimes\hat{\gamma}_A\otimes\hat{\tilde{\v{G}}}^N)-\log\gamma_0^A\Big)\nonumber\\
 =&\frac{1}{\beta}\Big( H(\hat{\v{F}}^N)-H(\hat{\v{P}}^N)\!-\!\log (\tilde{D}D^A)-\log\gamma_0^A\Big).\label{simplified_True_Diss_Pure}
\end{align}
Here, $\tilde{D}$ and $D^A$ are the embedding constants defined by Eq.~\eqref{eq:definitiongibbsvec} for $\tilde{\v{G}}^N$ and $\gamma_A$, respectively, and similarly $D$ will denote this embedding constant for $\v{G}^N$. As before, $\tilde{D}$ and $D$ can be chosen to be equal, since the only thing that matters is that $\tilde{{G}}^N_k=\tilde{D}_k/\tilde{D}$ and ${{G}}^N_k={D}_k/D$, i.e., the change in $\tilde{D}$ or $D$ can be compensated by the appropriate change in $\tilde{D}_k$ or $D_k$. Furthermore, noting that $k=D\tilde{D}_kD^A_1 =\tilde{D}\tilde{D}_kD^A_1 $, we can express the entropy of $\hat{\v{F}}^N$ as
\begin{align}
    H(\hat{\v{F}}^N)=&-\sum_{i=1}^k \frac{1-\epsilon}{k}\log\left(\frac{1-\epsilon}{k}\right)- \sum_{i > k} ( {\hat{\v{P}}}^N \otimes \v{f}^A_0 \otimes \tilde{\v{\eta}})_i^{\downarrow} \log ( {\hat{\v{P}}}^N \otimes \v{f}^A_0 \otimes \tilde{\v{\eta}})_i^{\downarrow}\nonumber\\
    =&-(1-\epsilon)\log(1-\epsilon)+(1-\epsilon)\log k\nonumber\\
    &\hspace{4cm}- \sum_{i > k} ( {\hat{\v{P}}}^N \otimes \v{f}^A_0 \otimes \tilde{\v{\eta}})_i^{\downarrow}\log \Big(\frac{{\hat{\v{P}}}^N}{\tilde{D}D_0^A}\Big)_i^{\downarrow}\nonumber\\
    \simeq&\log k-\epsilon\log L- \sum_{i > L} (\hat{\v{P}}^N)_i^{\downarrow}\log (\hat{\v{P}}^N)_i^{\downarrow} \label{eq:Entropy_of_J},
\end{align}
where in the last step of we used the fact that \mbox{$L=k/(\tilde{D}D_0^A)$}, which comes from Eq.~\eqref{eq:L}. Thus, from Eq.~\eqref{simplified_True_Diss_Pure}, the dissipated free energy in the optimal distillation process is simply bounded by
\begin{align}
    F^N_{\text{diss}}\gtrsim& \frac{1}{\beta}\Bigg[\sum_{i=1}^{L} (\hat{\v{P}}^N)_i^{\downarrow}\log (\hat{\v{P}}^N)_i^{\downarrow}+\log\Big(\frac{k}{\tilde{D}D^A}\Big)-\epsilon\log L- \log \gamma_0^A\Bigg]\nonumber\\
   =& \frac{1}{\beta}\Bigg[\sum_{i=1}^{L} (\hat{\v{P}}^N)_i^{\downarrow}\log (\hat{\v{P}}^N)_i^{\downarrow}+(1-\epsilon)\log L\!\Bigg].
\end{align}
Finally, using the expression for $L$ given in Eq.~\eqref{Ldef_x}, we can rewrite the above bound as 
\begin{align}
    F^N_{\text{diss}}\gtrsim& \frac{1}{\beta}\Bigg[\sum_{i=1}^{L} (\hat{\v{P}}^N)_i^{\downarrow}\log (\hat{\v{P}}^N)_i^{\downarrow}+(1-\epsilon)\left(AN+x\sqrt{Nv_N}\right)\Bigg],\label{eq: F_Diss_1}
\end{align}
where $A$ and $x$ are given by Eq.~\eqref{A_and_x}. 


\begin{center}
    \emph{\textbf{{Calculating bound for optimal dissipation}}}
\end{center}

We now proceed to bounding the first term on the right hand side of Eq.~\eqref{eq: F_Diss_1}. We start by noting that Eq.~\eqref{eq:Boundonlx2Pure} implies that the following holds for any $\delta>0$:   
\begin{eqnarray}\label{eq:ent_Bound2}
     \!\!\!\!\!\!\!\!\sum_{i=1}^{L(x)} (\hat{\v{P}}^N)_i^\downarrow\log(\hat{\v{P}}^N)_i^\downarrow\geq 
     \!\!\!\!\!\!\sum_{i=1}^{\chi_{\hat{\v{P}}^N}(L(x+\delta))}\!\!\!\!(\hat{\v{P}}^N)_i^\downarrow\log(\hat{\v{P}}^N)_i^\downarrow.
\end{eqnarray}
Next, from the definition of $\chi_{\hat{\v{P}}^N}[L(x+\delta)]$, it follows that
\begin{align}
    \label{eq:Entropy_Calc_2}
    \sum_{i=1}^{\chi_{\hat{\v{P}}^N}[(L(x+\delta)]} (\hat{\v{P}}^N)_i^\downarrow\log(\hat{\v{P}}^N)_i^\downarrow&=\sum_{i}\Big\{
    \hat{P}^N_i\log (\hat{P}^N_i) ~\Big|~
    \hat{P}^N_i \geq \frac{1}{L(x+\delta)}\Big\}\nonumber\\
    &= \sum_{\v{k},g_{\v{k}}}\left\lbrace 
    \hat{P}^N_{\v{k},g_{\v{k}}}\log(\hat{P}^N_{\v{k},g_{\v{k}}}) ~\middle|~
    \hat{P}^N_{\v{k},g_{\v{k}}}\geq \frac{1}{L(x+\delta)}	\right\rbrace\nonumber \\
    &=\sum_{\v{k}}\left\lbrace 
    {P}^N_{\v{k}}\log (\hat{P}^N_{\v{k},1}) ~\middle|~
    {P}^N_{\v{k}}\geq \frac{\prod_{i=1}^d D_i^{k_i}}{L(x+\delta)}	\right\rbrace\nonumber\\
    &=\sum_{\v{k}}\left\lbrace 
    {P}^N_{\v{k}}\log ({P}^N_{\v{k}}) ~\middle|~
    {P}^N_{\v{k}}\geq \frac{\prod_{i=1}^d D_i^{k_i}}{L(x+\delta)}	\right\rbrace\nonumber\\
    &- \sum_{\v{k}}\left\lbrace 
    {P}^N_{\v{k}}\Big(\sum_{j=1}^d k_j\log\gamma_j\Big) ~\middle|~
    {P}^N_{\v{k}}\geq \frac{\prod_{i=1}^d D_i^{k_i}}{L(x+\delta)}	\right\rbrace\nonumber\\
    &-(1-\epsilon)N\log D.
\end{align}

We now note that the constraint on the summand in Eq.~\eqref{eq:Entropy_Calc_2} can be modified using the parametrization of $\v{k}$ as a function of $\v{s}$ given in Eq.~\eqref{eq:k_of_s} as follows: 
\begin{align}
   \log (P^N_{\v{k}}) \geq& \sum_{i=1}^d k_i\log D_i - \log (L(x+\delta))\nonumber\\
    =& \sum_{i=1}^d (Np_i+\sqrt{N}s_i)\log \gamma_i+N\log D- \log L(x+\delta)\nonumber\\
    =& -\beta\sqrt{N}\sum_{i=1}^d s_iE_i+N\Big(\log D-D(\psi\|\gamma)\Big)\nonumber\\
    &\hspace{2cm}- N\Big(\log D-D(\psi\|\gamma)\Big)-(x+\delta)\sqrt{Nv_N}\nonumber\\
    =& -\beta\sqrt{N}\sum_{i=1}^d s_iE_i-(x+\delta)\sqrt{Nv_N}\nonumber\\
    =& -\beta\left(\sqrt{N}\v{s}\cdot\v{E}+\delta\sigma(F^N)+\Delta F^N\right),
    \label{eq:deltabound}
\end{align}
where in the second equality we used the form of $L$ given in Eq.~\eqref{Ldef_x} and employed Eq.~\eqref{eq:KLD_of_psi}, whereas in the final equality we used the definition $\sigma(F^N)=\sqrt{Nv_n}/\beta$ and Eq.~\eqref{A_and_x} saying that \mbox{$x=\Delta F^N/\sigma(F^N)$}. Moreover, since \mbox{$\log(P^N_{\v{k}(\v{s})})=O(\log N)$}, the condition from Eq.~\eqref{eq:deltabound} can be rewritten as
\begin{equation}
    \label{eq:constraint2}
    \sqrt{N}\v{s}\cdot\v{E}+\Delta F^N+\delta\sigma(F^N)\gtrsim 0.
\end{equation}

Coming back, we can now rewrite Eq.~\eqref{eq:Entropy_Calc_2} using the parametrization of $\v{k}$ as a function of $\v{s}$ from Eq.~\eqref{eq:k_of_s} and re-expressing the constraint using Eq.~\eqref{eq:constraint2}:

\begin{align}
    \label{eq:Entropy_Calc_3}
    &\sum_{i=1}^{\chi_{\hat{\v{P}}^N}(L(x+\delta))} (\hat{\v{P}}^N)_i^\downarrow\log(\hat{\v{P}}^N)_i^\downarrow=\sum_{\v{k}(\v{s})}\Bigg\{ 
    {P}^N_{\v{k}(\v{s})}\log ({P}^N_{\v{k}(\v{s})})-\sqrt{N}{P}^N_{\v{k}(\v{s})}\sum_{i=1}^d s_i\log\gamma_i \nonumber\\
    &\qquad\qquad\qquad\times\Bigg|    \sqrt{N}\v{s}\cdot\v{E}+\Delta F^N+\delta\sigma(F^N)\geq 0\Bigg\}\nonumber\\
    &\qquad\qquad\hspace{5cm}-(1-\epsilon)N\Big(\log D-D(\psi\|\gamma)\Big)\nonumber\\
    &\quad\simeq\beta\sqrt{N}\sum_{\v{s}}\Bigg\{ 
    {P}^N_{\v{k}(\v{s})} \v{s}\cdot\v{E}\Bigg| \sqrt{N}\v{s}\cdot\v{E}+\Delta F^N+\delta\sigma(F^N)\geq 0\Bigg\}-(1-\epsilon)AN,
\end{align}
where we used the definition of $A$ from Eq.~\eqref{A_and_x}. 

Our next goal goal is then to calculate the sum
\begin{equation}
    \label{eq:Defining_Hyperplane2}
    \sum_{\v{s}}\Big\{P^N_{\v{k}(\v{s})}\v{s}\cdot\v{E}  \bigg|\sqrt{N}\v{s}\cdot\v{E}+\Delta F^N+\delta\sigma(F^N)\geq 0\Big\}.
\end{equation}
Similarly as before, we will approximate the multinomial distribution $\v{P}^N$  by a multivariate normal distribution $\mathcal{N}^{(\v{\mu},\v{\Sigma})}$ with mean vector $\v{\mu}=N\v{p}$ and covariance matrix $\v{\Sigma} = N(\text{diag }(\v{p})-\v{p}\v{p}^{T})$:
\begin{align}
\mathcal{N}^{(\v{\mu},\v{\Sigma})}_{\v{k}(\v{s})}&= \frac{1}{\sqrt{(2\pi)^d|\boldsymbol\Sigma|}}
\exp\left(-\frac{1}{2}({\v{k}}-{\v{\mu}})^T{\boldsymbol\Sigma}^{-1}({\v{k}}-{\v{\mu}})
\right)\nonumber\\
&=\frac{1}{\sqrt{(2\pi)^d|\boldsymbol\Sigma|}}
\exp\left(-\frac{1}{2}\v{s}^T{N\boldsymbol\Sigma}^{-1}\v{s}
\right).
\end{align}
Next, we standardise the multivariate normal distribution $\mathcal{N}^{(\v{\mu},\v{\Sigma})}$ using rotation and scaling transformations:
\begin{equation}
\label{ratemodifiedNew}
    \v{\Sigma} = \Theta^T\sqrt{\v{\Lambda}}\sqrt{\v{\Lambda}}\Theta,
\end{equation}
where $\v{\Lambda}$ is a diagonal matrix with the eigenvalues of $\v{\Sigma}$ and $\Theta$ is an orthogonal matrix with columns given by the eigenvectors of $\v{\Sigma}$. This rotation and scaling of co-ordinates allows us to write $\mathcal{N}^{(\v{\mu},\v{\Sigma})}$ as a product of univariate standard normal distribution $\phi(y_i)$:
\begin{align}
\!\!\!\!\mathcal{N}^{(\v{\mu},\v{\Sigma})}_{\v{k}(\v{s}(\v{y}))}&\!=\!\frac{1}{\sqrt{(2\pi)^d|\boldsymbol\Sigma|}}
\exp\left(\!-\frac{1}{2}\v{y}^T\v{y}
\!\right)=\prod_{i=1}^d \phi(y_i),\!
\end{align}
where
\begin{equation}\label{eq:Def_y}
 \v{y} = \sqrt{N}(\Theta^{T}\sqrt{\Lambda})^{-1} \v{s},   
\end{equation}
such that
\begin{equation}\label{eq:Def_s}
\v{s} = \frac{1}{\sqrt{N}} \Theta^{T} \sqrt{\Lambda} \v{y}.
\end{equation}

From Eq.~\eqref{eq:Def_y} and Eq.~\eqref{eq:Def_s}, one can write equivalently the sum given in Eq.~\eqref{eq:Defining_Hyperplane2} as
\begin{equation}
    \label{eq:Defining_Hyperplane3}
    \sum_{\v{y}}\Big\{P^N_{\v{k}(\v{s}(\v{y}))}\frac{\tilde{\v{E}}\cdot \v{y}}{\sqrt{N}} \bigg|\tilde{\v{E}}\cdot\v{y}+\Delta F^N+\delta\sigma(F^N) \geq 0\Big\},
\end{equation}
where we defined $\tilde{\v{E}}:= \sqrt{\Lambda} \Theta \v{E}$ with the following normalisation
\begin{align}
    \| \tilde{\v{E}}\| &= \sqrt{\tilde{\v{E}}\cdot\tilde{\v{E}}} = \sqrt{(\sqrt{\Lambda} \Theta \v{E})\cdot(\sqrt{\Lambda} \Theta \v{E})} = \sqrt{\v{E}\cdot\Theta^{T}\Lambda \Theta \v{E}} = \sqrt{\v{E}\cdot \v{\Sigma} \v{E}} \nonumber\\ &=  \sigma( F^N).
\end{align}
One can then equivalently express Eq.~\eqref{eq:Defining_Hyperplane3} as 
\begin{align}
    \label{eq:sigma_times_P^N}
    \frac{\sigma(F^N)}{\sqrt{N}}\sum_{\v{y}}\Big\{P^N_{\v{k}(\v{s}(\v{y}))}\hat{\tilde{\v{E}}}\cdot \v{y} \;\bigg|\;&\sigma(F^N)\hat{\tilde{\v{E}}}\cdot\v{y}+\Delta F^N+\delta\sigma(F^N) \geq 0\Big\},
\end{align}
where $\hat{\tilde{\v{E}}}$ is unit vector along $\tilde{\v{E}}$. Now, we choose a rotation $R$ such that
\begin{equation}
    R\hat{\tilde{\v{E}}} = (1,\ldots, 0)^T
\end{equation}
and we call $R\v{y}=\v{x}$. Thus, we can rewrite Eq.~\eqref{eq:sigma_times_P^N} as
\begin{align}
    \label{eq:sigma_times_P^N2}
    &\frac{\sigma(F^N)}{\sqrt{N}}\sum_{\v{y}}\Big\{P^N_{\v{k}(\v{s}(\v{y}))}R\hat{\tilde{\v{E}}}\cdot R\v{y} \;\Bigg|\;\sigma(F^N)R\hat{\tilde{\v{E}}}\cdot R\v{y}+\Delta F^N+\delta\sigma(F^N) \geq 0\Big\}\nonumber\\
    &\hspace{0.4cm}=\frac{\sigma(F^N)}{\sqrt{N}}\sum_{\v{x}}\Big\{P^N_{\v{k}(\v{s}(\v{x}))}x_1 \;\Bigg|\;\sigma(F^N)x_1+\Delta F^N+\delta\sigma(F^N) \geq 0\Big\}.
\end{align}

Now, employing multivariate normal distribution approximation and the fact that it is isotropic, we can exactly calculate the above expression as an integral over $\v{x}$ of the following form 
\begin{align}
    \frac{\sigma(F^N)}{\sqrt{N}}\Bigg(\int_{-\infty}^{+\infty}dx_2 \phi(x_2)\ldots \int_{-\infty}^{+\infty}dx_d \nonumber&\phi(x_d) \int_{-\infty}^{d_\mathcal{O}} dx_1x_1 \phi(x_1)\Bigg)\\ &=\frac{\sigma( F^N)}{\sqrt{2\pi N}}\exp\left(-\frac{d_{\mathcal{O}}^2}{2}\right),
\end{align}
where $d_\mathcal{O}$ is the distance between the origin and the hyperplane specified by the constraint given in Eq.~\eqref{eq:sigma_times_P^N2}:
\begin{equation}
    d_{\mathcal{O}} = \frac{ \Delta F^N+\delta\sigma(F^N) }{\sqrt{\tilde{\v{E}}\cdot\tilde{\v{E}}}}=\frac{ \Delta F^N}{\sigma( F^N)}+\delta.
\end{equation}
Thus, finally we get the following expression for the desired sum from Eq.~\eqref{eq:Defining_Hyperplane2}:
\begin{align}
\sum_{\v{s}}\Big\{P^N_{\v{k}(\v{s})}\v{s}\cdot\v{E}  \bigg|\sqrt{N}\v{s}\cdot\v{E}&+\Delta F^N+\delta\sigma(F^N)\geq 0\Big\}\nonumber\\
    &=  \frac{\sigma( F^N)}{\sqrt{2\pi N}}\exp\left[-\frac{1}{2}\left(\frac{ \Delta F^N}{\sigma( F^N)}+\delta\right)^2\right].
\end{align}

Substituting the above to Eq.~\eqref{eq:Entropy_Calc_3}, and the resulting expression to Eq.~\eqref{eq:ent_Bound2}, we get the following bound:
\begin{align}
\sum_{i=1}^{L(x)} (\hat{\v{P}}^N)_i^\downarrow\log(\hat{\v{P}}^N)_i^\downarrow  \gtrsim -(1-\epsilon)A +\frac{\beta\sigma( F^N)}{\sqrt{2\pi }}\exp\left[-\frac{1}{2}\left(\frac{ \Delta F^N}{\sigma( F^N)}+\delta\right)^2\right],
\end{align}
which in turn can be used in Eq.~\eqref{eq: F_Diss_1} to give the following:
\begin{align}
    F^N_{\mathrm{diss}} \gtrsim &\frac{\sigma( F^N)}{\sqrt{2\pi }}\exp\left[-\frac{1}{2}\left(\frac{ \Delta F^N}{\sigma( F^N)}+\delta\right)^2\right] +(1-\epsilon)\Delta F^N.
\end{align}
Since the above holds for any $\delta>0$, by taking the limit $\delta\rightarrow 0$, we finally arrive at
\begin{equation}\label{eq:F_N_Diss_Simp}
    F^N_{\text{diss}} \gtrsim   (1-\epsilon)\Delta F^N +\frac{\sigma( F^N)}{\sqrt{2\pi}}\exp\Bigg[{-\frac{(\Delta F^N)^2}{2\sigma( F^N)^2}}\Bigg].
\end{equation}
Employing Eq.~\eqref{Proof_end_error} we have
\begin{equation}
    \Delta F^N = -\Phi^{-1}(\epsilon)\sigma(F^N),
\end{equation}
and so substituting this in Eq.~\eqref{eq:F_N_Diss_Simp} we obtain
\begin{eqnarray}
F^N_{\text{diss}}&\gtrsim& a(\epsilon)\sigma( F^N),
\end{eqnarray}
with $a$ given by Eq.~\eqref{eq:a}, which completes the proof.

\subsection{Optimality of the communication rate}
\label{app:optimality}

The following derivation will closely follow the proof of Lemma~1 of Ref.~\cite{korzekwa2019encoding}. Let us assume that for a system $(\rho^N,H^N)$ it is possible to encode $M$ messages in a thermodynamically-free way so that the average decoding error is $\epsilon$. It means that there exists a set of $M$ encoding thermal operations $\{\E^{\beta}_i\}_{i=1}^M$ and a decoding POVM $\{ \Pi_i \}_{i=1}^M$ such that
\begin{equation}
    \label{eq:assumption}
    1-\epsilon=\frac{1}{M}\sum_{i=1}^M \tr{\E^{\beta}_i(\rho^N) \Pi_i}.
\end{equation}
Note that every thermal operation $\E^{\beta}_i$ between the initial system $(\rho^N,H^N)$ and a target system $(\tilde{\rho}^N,\tilde{H}^N)$ preserves the thermal equilibrium state, i.e.,
\begin{equation}
    \E^{\beta}_i(\gamma^N)=\tilde{\gamma}^N.
\end{equation}
Now, let us introduce the following three states
\begin{subequations}
\begin{align}
    \tau:=&\frac{1}{M} \sum_{i=1}^M \ketbra{i}{i}\otimes \E^{\beta}_i(\rho^N),\\
    \zeta:=&\frac{1}{M} \sum_{i=1}^M \ketbra{i}{i}\otimes \gamma^N,\\
    \tilde{\zeta}:=&\frac{1}{M} \sum_{i=1}^M \ketbra{i}{i}\otimes \tilde{\gamma}^N.
\end{align}
\end{subequations}
The hypothesis testing relative entropy $D_H^{\epsilon}$ between $\tau$ and $\tilde{\zeta}$, defined by~\cite{wang10,buscemi2010quantum,brandao2011one} 
\begin{align}
D_H^{\epsilon}(\tau\|\tilde{\zeta}) \!:= - \log \inf \big\{ \tr{Q\tilde{\zeta}}\ \big|\ &0 \leq Q \leq \iden ,\tr{Q\tau} \geq 1-\epsilon \big\} ,
\label{eq:hypothesis_rel_ent}
\end{align}
satisfies the following
\begin{equation}
	D_H^{\epsilon}(\tau\|\tilde{\zeta})\geq -\log \tr{Q\tilde{\zeta}}
\end{equation}
for
\begin{equation}
	Q = \sum_{i=1}^M \ketbra{i}{i} \otimes \Pi_i.
\end{equation}
This is because the above (potentially suboptimal) choice of $Q$ clearly satisfies \mbox{$0\leq Q\leq \iden$} and also
\begin{equation}
	\tr{Q\tau}=	\frac{1}{M} \sum_{i=1}^M \tr{\E^{\beta}_i(\rho)\Pi_i}\geq 1-\epsilon,
\end{equation} 
due to our assumption in Eq.~\eqref{eq:assumption}. At the same time we have
\begin{equation}
	\tr{Q\tilde{\zeta}}=\frac{1}{M} \sum_{i=1}^M \tr{\tilde{\rho}^N \Pi_i}=\frac{1}{M},
\end{equation} 
so that 
\begin{equation}
	\label{eq:hypothesis_ineq_1}
	\log M\leq D_H^{\epsilon}(\tau\|\tilde{\zeta}).
\end{equation}

Next, we introduce the following encoding channel
\begin{equation}
	\mathcal{F}:=\sum_{i=1}^{M} \ketbra{i}{i} \otimes \E^{\beta}_i,
\end{equation}
which satisfies
\begin{equation}
    \mathcal{F}(\zeta)=\tilde{\zeta}.
\end{equation}
Employing the data-processing inequality twice, we get the following sequence of inequalities:
\begin{align}
	D_H^{\epsilon}(\tau\|\tilde{\zeta})&=D_H^{\epsilon}\left[\mathcal{F}\left(\frac{1}{M}\sum_{i=1}^M \ketbra{i}{i} \otimes \rho^N\right)\bigg\|\mathcal{F}({\zeta})\right]\nonumber\\
	&\leq D_H^{\epsilon}\left(\frac{1}{M}\sum_{i=1}^M \ketbra{i}{i} \otimes \rho^N\bigg\|\zeta\right)\nonumber\\
	&\leq D_H^{\epsilon}\big( \rho^N \big\| \gamma^N \big).
\end{align}
Combining this with Eq.~\eqref{eq:hypothesis_ineq_1}, we arrive at
\begin{equation}
	\label{eq:hypothesis_ineq_2}
	\log M\leq D_H^{\epsilon}\big( \rho^N \big\| \gamma^N \big).
\end{equation}

Finally, for the case of identical initial subsystems, $\rho^N=\rho^{\otimes N}$ and $\gamma^N=\gamma^{\otimes N}$, we can use the known second order asymptotic expansion of the hypothesis testing relative entropy~\cite{li12},
\begin{align}
    \frac{1}{N}D^\epsilon_H(\rho^{\otimes N}\|\gamma^{\otimes N})&\simeq D(\rho\|\gamma)+\sqrt{\frac{V(\rho\|\gamma)}{N}}\Phi^{-1}(\epsilon),
\end{align}
leading to
\begin{align}
    \frac{\log M}{N}\lesssim D(\rho\|\gamma)+\sqrt{\frac{V(\rho\|\gamma)}{N}}\Phi^{-1}(\epsilon).
\end{align}
For the above proof to work also in the case of non-identical subsystems, one would need to prove the following asymptotic behaviour of the hypothesis testing relative entropy:
\begin{align}
    \frac{1}{N}D^\epsilon_H\left(\bigotimes_{n=1}^N \rho_n\bigg\|\bigotimes_{n=1}^N \gamma_n\right) \simeq\frac{1}{N}\sum\limits_{n=1}^N D(\rho_n\|\gamma_n)+\sqrt{\frac{\frac{1}{N}\!\sum\limits_{n=1}^N \!V(\rho_n\|\gamma_n)}{N}}\Phi^{-1}(\epsilon).\!
\end{align}


\subsection{Proof of Lemma~\ref{Opt_Diss_free}}
\label{app:min_out}

Consider $\tilde{\v{q}}$ to be a probability vector that saturates $\v{p}\succ_{\epsilon}\v{q}$. By definition it means that $\v{p}\succ\tilde{\v{q}}$ and \mbox{$F(\v{q},\tilde{\v{q}})=1-\epsilon$}. Since $V(\v{q})=0$ and $H(\v{q})=\log L$, the probability vector $\v{q}$ contains exactly $L$ non-zero entries such that each of them is equal to $1/L$. Thus,
\begin{equation}
    \v{q}^{\downarrow}=\Big(\underbrace{\frac{1}{L},\ldots,\frac{1}{L}}_{L},0,\ldots,0\Big).
\end{equation}
From the definition of $F(\v{q},\tilde{\v{q}})$ given in Eq.~\eqref{eq:DefFidelity} we have the following
\begin{eqnarray}
\label{eq:Fidelity_majorize}
1-\epsilon=F(\v{q},\tilde{\v{q}}) &\leq& F(\v{q}^{\downarrow},\tilde{\v{q}}^{\downarrow})\nonumber\\
&=& \frac{1}{L}\left(\sum_{i=1}^L \sqrt{\tilde{q}_i^{\downarrow}}\right)^2= \frac{1}{L}\left(\sum_{i=1}^L \sqrt{\tilde{q}_i^{\downarrow}\cdot 1}\right)^2\nonumber\\&\leq& \frac{1}{L}\left(\sum_{i=1}^L\tilde{q}_i^{\downarrow}\right)  \left(\sum_{i=1}^L 1\right) = \sum_{i=1}^L\tilde{q}_i^{\downarrow},
\end{eqnarray}
where the first inequality follows from the definition of fidelity, while the second one comes from the Cauchy-Schwarz inequality. Now, from Lemma~\ref{Opt_Diss_free}, we have 
\begin{eqnarray}
\sum_{i=1}^L p_i^{\downarrow} = 1-\epsilon.
\end{eqnarray}
Combining this with Eq.~\eqref{eq:Fidelity_majorize} we obtain
\begin{eqnarray}
\sum_{i=1}^L p_i^{\downarrow}\leq \sum_{i=1}^L \tilde{q}_i^{\downarrow}.
\end{eqnarray}
On the other hand, $\v{p}\succ \tilde{\v{q}}$ gives 
\begin{equation}
    \sum_{i=1}^L p_i^{\downarrow}\geq \sum_{i=1}^L \tilde{q}_i^{\downarrow},    
\end{equation}
and so we conclude that
\begin{equation}\label{constrainoflagrangemultiplier}
    \sum_{i=1}^L p_i^{\downarrow}= \sum_{i=1}^L \tilde{q}_i^{\downarrow}=1-\epsilon.
\end{equation}

Next, note that as $\epsilon = 1-F(\v{q},\tilde{\v{q}})\geq 1-F(\v{q}^{\downarrow},\tilde{\v{q}}^{\downarrow})$, we see that to have the minimal value of error, $\tilde{\v{q}}^{\downarrow}$ needs to maximize the fidelity $F(\v{q}^{\downarrow},\tilde{\v{q}}^{\downarrow})$ subject to the constraint given in Eq.~\eqref{constrainoflagrangemultiplier}. Thus we can write
\begin{equation}\label{eq:Optimal_q}
\begin{split}
 &\max ~~ \qquad\frac{1}{L}\left(\sum_{i=1}^L \sqrt{\tilde{q}_i^{\downarrow}}\right)^2\\
 &\textrm{such that} \quad \sum_{i=1}^L \tilde{q}_i^{\downarrow}=1-\epsilon.
\end{split}
\end{equation}
The Lagrangian of the aforementioned optimization problem in Eq.~\eqref{eq:Optimal_q} is given by 
\begin{equation}
    \mathcal{L}= \frac{1}{L}\left(\sum_{i=1}^L \sqrt{\tilde{q}_i^{\downarrow}}\right)^2-\lambda\left(\sum_{i=1}^L \tilde{q}_i^{\downarrow}-1+\epsilon\right),
\end{equation}
where $\lambda$ is a Lagrange multiplier. To find the solution of the problem we calculate 
\begin{eqnarray}
\frac{\partial \mathcal{L}}{\partial \tilde{q}_j^{\downarrow}}= \frac{1}{\Big(L\sqrt{\tilde{q}_j^{\downarrow}}\Big)}\left(\sum_{i=1}^L\sqrt{\tilde{q}_i^{\downarrow}}\right)-\lambda
\end{eqnarray}
for all $j \in \{1,\ldots,L\}$. Solving $\partial \mathcal{L}/\partial \tilde{q}_j^{\downarrow}=0$ we get 
\begin{eqnarray}\label{eq:qjtilde}
\tilde{q}^{\downarrow}_j=\Bigg[\frac{1}{\lambda L}\left(\sum_{i=1}^L\sqrt{\tilde{q}_i^{\downarrow}}\right)\Bigg]^2
\end{eqnarray}
for all $j \in \{1,\ldots,L\}$. Substituting $\tilde{q}^{\downarrow}_j$ from Eq.~\eqref{eq:qjtilde} to the constraint specified in Eq.~\eqref{constrainoflagrangemultiplier} gives 
\begin{equation}\label{eq:lambda_soln}
    \lambda=\frac{1}{\sqrt{(1-\epsilon)L}}\left(\sum_{i=1}^L\sqrt{\tilde{q}_i^{\downarrow}}\right).
\end{equation}
Therefore, putting $\lambda$ from Eq.~\eqref{eq:lambda_soln} into Eq.~\eqref{eq:qjtilde} finally gives
\begin{eqnarray}\label{eq:ordertilde}
\quad\forall i\in\{1,\ldots, L\}:\qquad \tilde{q}_i^{\downarrow} = \frac{1-\epsilon}{L}.
\end{eqnarray}
Since fidelity is a concave function, and the constraint defined in Eq.~\eqref{eq:Optimal_q} is linear, this solution is optimal.

Next, we note that $\v{p}\succ \tilde{\v{q}}$ implies $Q\v{p}^{\downarrow}=\tilde{\v{q}}^{\downarrow}$ where $Q$ is a bistochastic matrix. Since any $Q$ can be decomposed as a convex sum of permutations, we can write
\begin{equation}\label{Qmatrix}
    Q=\sum_{i}\alpha_i\Pi_i,
\end{equation}
where $\Pi_i$ is a permutation matrix and \mbox{$\sum_i \alpha_i=1$} with $\alpha_i\in[0,1]$ for all~$i$. Let us now define a vector $\v{v}$ as 
\begin{equation}
    \v{v}:=\Big(\underbrace{1,\ldots,1}_{L},0\ldots,0\Big).
\end{equation}
Using this $\v{v}$, we can write the following
\begin{eqnarray}\label{eq:QTv}
    \v{v}\cdot Q\v{p}^{\downarrow}= \v{v}\cdot\tilde{\v{q}}^{\downarrow}=\v{v}\cdot\v{p}^{\downarrow}=(1-\epsilon)\Rightarrow 
     \sum_{i}\alpha_i\v{v}\cdot\Pi_i\v{p}^{\downarrow}= \v{v}\cdot\v{p}^{\downarrow},
\end{eqnarray}
where we use Eq.~\eqref{constrainoflagrangemultiplier} in the first line, and the convex decomposition of $Q$ from Eq.~\eqref{Qmatrix} in the second line. Note that Eq.~\eqref{eq:QTv} can be equivalently written as 
\begin{eqnarray}\label{eq:Trans_equality}
    \sum_{i}\alpha_i(\Pi_i^T\v{v})\cdot\v{p}^{\downarrow}= \v{v}\cdot\v{p}^{\downarrow},
\end{eqnarray}
where $\Pi_i^T$ is the transpose of the permutation matrix~$\Pi_i$. Because the elements of $\v{v}$ and $\v{p}^{\downarrow}$ are arranged in a decreasing order, the maximum value of $(\Pi_i^T\v{v})\cdot\v{p}^{\downarrow}$ is $\v{v}\cdot\v{p}^{\downarrow}$. Thus, we see that equality in Eq.~\eqref{eq:Trans_equality} holds if and only if $\v{v}$ is invariant under $(\Pi_i^T)$ for all $i$. Therefore, $\Pi_i$ can be only of the form 
\begin{equation}
    \label{eq:PiDecomposition}
    \Pi_i = \Pi_i^L\oplus\Pi_i^{L^c},
\end{equation}
where $\Pi_i^L$ and $\Pi_i^{L^c}$ are permutations that act trivially for indices $i>L$ and $i\in\{1,\ldots,L\}$, respectively. Thus, we infer using Eq.~\eqref{Qmatrix} that $Q$ is a block-diagonal matrix of the form
\begin{equation}
    Q=\tilde{B}\oplus B,
\end{equation}
where $\tilde{B}$ and $B$ are $L\times L$ and $(d-L)\times(d-L)$ bistochastic matrices. Thus, 
\begin{eqnarray}
    \label{eq:ordertilde2}
    \forall i > L ,\quad \tilde{q}_i^{\downarrow} = \sum_{j>L}B_{ij}p_j^{\downarrow}.
\end{eqnarray}
Combining Eq.~\eqref{eq:ordertilde} and Eq.~\eqref{eq:ordertilde2} we conclude that $\tilde{\v{q}}^{\downarrow}={\v{q}}^{*\downarrow}$ with $\v{q}^{*}$ defined in Eq.~\eqref{eq:qstar}. This finally implies $\tilde{\v{q}}=\Pi\v{q}^{*}$ with $\Pi$ being some permutation. Moreover, it is straightforward to show that the minimal entropy of $\v{q}^{*}$ is achieved for $B$ being the identity matrix, since the entropy is increasing under application of any non-trivial bistochastic matrix.


\newpage
\subsection{Eliminating the logarithmic term}
\label{app:log}

We start with the following lemma that will be needed to prove our claim.

\begin{lemma}[Logarithmic growth of probability]\label{Orderoflogpro}
    For a fixed value of $b>0$ and any $\v{s}$, such that $\|\v{s}\|=\sqrt{\sum_{i=1}^ds_i^2}\leq b$, we have
    \begin{equation}
        \log P^N_{\v{k}(\v{s})}= O(\log N),
    \end{equation}
    where $P^N_{\v{k}}$ and $\v{k}(\v{s})$ are defined by Eqs.~\eqref{eq:Multinomial_prob_vector}~and~\eqref{eq:k_of_s}, respectively.
\end{lemma}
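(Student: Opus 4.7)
The plan is to expand $\log P^N_{\v{k}(\v{s})}$ using Stirling's formula and observe that all the extensive contributions cancel, leaving only a term of order $\log N$.

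First, I would write
\begin{equation}
\log P^N_{\v{k}(\v{s})} = \log N! - \sum_{i=1}^d \log k_i! + \sum_{i=1}^d k_i \log p_i,
\end{equation}
with $k_i = N p_i + \sqrt{N} s_i$. Applying Stirling's formula, Eq.~\eqref{Eq:Stirlings-approximation}, yields $\log N! = N\log N - N + \tfrac{1}{2}\log(2\pi N) + O(1/N)$ and $\log k_i! = k_i \log k_i - k_i + \tfrac{1}{2}\log(2\pi k_i) + O(1/k_i)$. Since $\|\v{s}\|\leq b$ is fixed and we assume $p_i>0$ for every $i$, one has $k_i = N p_i(1 + O(N^{-1/2}))\to\infty$, so Stirling applies uniformly. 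Combining these gives
\begin{equation}
\log\binom{N}{k_1,\ldots,k_d} = N\log N - \sum_{i=1}^d k_i \log k_i + \frac{1-d}{2}\log(2\pi N) - \frac{1}{2}\sum_{i=1}^d \log p_i + O(N^{-1/2}),
\end{equation}
where we used $\sum_i k_i = N$ and $\sum_i \log k_i = d\log N + \sum_i\log p_i + O(N^{-1/2})$.

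The heart of the argument is a careful expansion of $k_i \log k_i$. Writing $\log k_i = \log(Np_i) + \log\!\bigl(1 + s_i/(\sqrt{N}p_i)\bigr)$ and Taylor-expanding the second term to second order, I obtain
\begin{equation}
k_i \log k_i = (Np_i + \sqrt{N} s_i)\log(Np_i) + \sqrt{N}\, s_i + \frac{s_i^2}{2 p_i} + O(N^{-1/2}).
\end{equation}
Summing over $i$ and using $\sum_i p_i = 1$ and $\sum_i s_i = 0$ gives
\begin{equation}
\sum_{i=1}^d k_i \log k_i = N\log N + N\!\sum_{i=1}^d p_i \log p_i + \sqrt{N}\sum_{i=1}^d s_i \log p_i + \sum_{i=1}^d \frac{s_i^2}{2 p_i} + O(N^{-1/2}).
\end{equation}
Substituting this into the expression for $\log\binom{N}{k_1,\ldots,k_d}$ and then adding $\sum_i k_i \log p_i = N\sum_i p_i \log p_i + \sqrt{N}\sum_i s_i \log p_i$, the $N\log N$, $N$-order and $\sqrt{N}$-order terms cancel exactly, leaving
\begin{equation}
\log P^N_{\v{k}(\v{s})} = \frac{1-d}{2}\log(2\pi N) + C(\v{p},\v{s}) + O(N^{-1/2}),
\end{equation}
where $C(\v{p},\v{s})$ is an $N$-independent constant bounded uniformly for $\|\v{s}\|\leq b$. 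This immediately gives $\log P^N_{\v{k}(\v{s})} = O(\log N)$.

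The main obstacle is purely bookkeeping: one must track the Stirling corrections and the Taylor remainders consistently to third order in $1/\sqrt{N}$ to see the cancellations, and one must verify that the $\sqrt{N}\sum_i s_i \log p_i$ contribution from the multinomial coefficient really is annihilated by the corresponding piece of $\sum_i k_i\log p_i$. A minor subtlety is that $k_i$ should be an integer, so strictly speaking one replaces $\sqrt{N}s_i$ by $\lfloor\sqrt{N}s_i\rfloor$; since this only perturbs each $k_i$ by $O(1)$, it contributes at most $O(\log N)$ via $\log k_i!$ and does not affect the conclusion.
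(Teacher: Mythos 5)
Your proof is correct and takes essentially the same route as the paper: both rest on Stirling's formula for the factorials in the multinomial coefficient combined with the expansion of $\log(1+s_i/(\sqrt{N}p_i))$, using $\sum_i k_i = N$ and $\sum_i s_i = 0$ to cancel the extensive terms. The only real difference is one of precision rather than method: the paper applies the two-sided Stirling \emph{inequality} and then bounds $\log(1+g)$ above by $g$ and below by $g/(1+g)$ to obtain separate upper and lower bounds of the form $\pm O(\log N)$, never computing the leading coefficient; you instead push the Taylor expansion to second order and track the cancellations exactly, arriving at the sharper asymptotic $\log P^N_{\v{k}(\v{s})} = \frac{1-d}{2}\log(2\pi N) + C(\v{p},\v{s}) + O(N^{-1/2})$. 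Your version buys a precise leading term (and identifies $C = -\sum_i s_i^2/(2p_i) - \tfrac12\sum_i\log p_i$), at the cost of more bookkeeping; the paper's version is looser but shorter, since a crude two-sided bound suffices for the $O(\log N)$ claim. Both implicitly require all $p_i>0$, which you state explicitly and the paper uses implicitly via the appearance of $p_{\min}$ in its lower bound.
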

\begin{proof}
We start by using the definition,
\begin{equation}\label{Multinomialwiths}
    P^N_{\v{k}(\v{s})} = \binom{N}{k_1(s_1),...,k_d(s_d)}p^{k_1(s_1)}_1 ... p^{k_d(s_d)}_d,
\end{equation}
to write $\log P^N_{\v{k}(\v{s})}$ as
\begin{equation}
   \!\!\!\log P^N_{\v{k}(\v{s})}\!=\!\log N! -\sum_{i=1}^d \log k_i(s_i)!+\sum_{i=1}^d k_i(s_i)\log p_i.\!
\end{equation}
Employing Stirling inequality, 
\begin{align}
\log\sqrt{N}+N\log N-N\leq\log N!\leq 1+\log\sqrt{N}+N\log N-N,
    \label{Stirlingineq}
\end{align}
we first provide a lower bound for $\log P^N_{\v{k}(\v{s})}$,
\begin{align}
    \log P^N_{\v{k}(\v{s})}\geq& (\log\sqrt{N}+N\log N-N)+\sum_{i=1}^d k_i(s_i)\log p_i\nonumber\\
    &-\sum_{i=1}^d(1\!+\!\log\sqrt{k_i(s_i)}\!+\!k_i(s_i)\log k_i(s_i)\!-\!k_i(s_i)) \nonumber\\
    =&-\sum_{i=1}^d k_i(s_i)\log\Big(\frac{k_i(s_i)}{Np_i}\Big)-d+\frac{1}{2}\Big[\log N-\sum_{i=1}^d\log k_i(s_i)\Big].
    \label{eq:UpperBound}
\end{align}
Recall that $k_i(s_i)=Np_i+\sqrt{N}s_i$, which implies $\sum_{i=1}^d s_i=0$. To simplify the above further, we lower bound the first term by employing the inequality \mbox{$\log(1+g)<g$} for $g>-1$ in the following way:
\begin{align}
     \sum_{i=1}^d k_i(s_i)\log\Big(\frac{k_i(s_i)}{Np_i}\Big)&=\sum_{i=1}^d k_i(s_i)\log\left(1+\frac{s_i}{\sqrt{N}p_i}\right)\nonumber\\
     &\leq\sum_{i=1}^d k_i(s_i)\frac{s_i}{\sqrt{N}p_i}=\sqrt{N}\sum_{i=1}^d(1+\frac{s_i}{\sqrt{N}p_i})s_i\nonumber\\
     &=\sum_{i=1}^d \frac{s_i^2}{p_i}\leq\sum_{i=1}^d \frac{s_i^2}{p_{\min}}\leq \frac{b^2}{p_{\min}},
\end{align}
where $p_{\min}=\text{min }\{p_1,\ldots, p_d\}$. Moreover, observing that
\begin{equation}
     \log N \geq\left(\log N-\sum_{i=1}^d\log k_i(s_i)\right) \geq -(d-1)\log N ,
\end{equation}
we can conclude that 
\begin{equation}\label{eq:O(LogN)}
    \log N-\sum_{i=1}^d\log k_i(s_i)=O(\log N).
\end{equation}
Putting it together, we further simplify the bound given in Eq.~\eqref{eq:UpperBound} as
\begin{equation}
     \!\!\!\log P^N_{\v{k}(\v{s})}\geq -d-\frac{b^2}{p_{\min}}-\frac{(d-1)}{2}\log{N}=O(\log N).
\end{equation}

Similarly, by employing the Stirling inequality from Eq.~\eqref{Stirlingineq}, we also prove an upper bound for $\log P^N_{\v{k}(\v{s})}$ as follows
\begin{align}
    \log P^N_{\v{k}(\v{s})}
    \leq& (1+\log{\sqrt{N}}+N\log N-N)+\sum_{i=1}^dk_i(s_i)\log p_i\nonumber\\
    &-\sum_{i=1}^d\Big(\log\sqrt{k_i(s_i)}+k_i(s
    _i)\log k_i(s_i)-k_i(s_i)\Big)\nonumber\\
    =&-\sum_{i=1}^dk_i(s_i)\log\Big(\frac{k_i(s_i)}{Np_i}\Big)+1+\frac{1}{2}(\log N-\sum_{i=1}^d\log k_i(s_i)).
\end{align}
Using the inequality $\log(1+g)\geq\frac{g}{1+g}$ for $g>-1$, we have
\begin{align}
    &\sum_{i=1}^dk_i(s_i)\log\Big(\frac{k_i(s_i)}{Np_i}\Big)= \sum_{i=1}^dk_i(s_i)\log\Big(1+\frac{s_i}{\sqrt{N}p_i}\Big)\nonumber\\
    &\qquad\geq\sum_{i=1}^dk_i(s_i)\frac{s_i}{s_i+\sqrt{N}p_i}=\sqrt{N}\sum_{i=1}^d s_i = 0.
    \label{log_kisi_Npi}
\end{align}
The above inequality together with Eq.~\eqref{eq:O(LogN)} imply that $\log P^N_{\v{k}(\v{s})}\leq O(\log N)$ which completes the proof.
\end{proof}

Using Lemma~\ref{Orderoflogpro}, we will now be able to prove our claim that is captured by the following result.
\begin{lemma}[Equality between limits]
The following limits are equal:
\label{Lemma_to_remove_lnN}
\begin{align}
&\lim_{N\rightarrow\infty}\sum_{\v{s}}\Big\{P^N_{\v{k}(\v{s})}\Big| \frac{1}{N}\log(P^N_{\v{k}(\v{s})})+\frac{\beta}{\sqrt{N}}\sum_{i}s_iE_i +F_{\text{diss}}^N\geq 0\Big\} \nonumber \\
&\hspace{2cm}=\lim_{N\rightarrow\infty}\sum_{\v{s}}\Big\{P^N_{\v{k}(\v{s})}\Big| \frac{\beta}{\sqrt{N}}\sum_{i}s_iE_i+F_{\text{diss}}^N\geq 0\Big\}.
\end{align}
\end{lemma}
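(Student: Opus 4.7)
The plan is to use Lemma~\ref{Orderoflogpro} together with a concentration/squeeze argument to show that the extra term $\frac{1}{N}\log P^N_{\v{k}(\v{s})}$ is asymptotically negligible. Abbreviating $A_N:=\{\v{s}\,:\,\frac{1}{N}\log P^N_{\v{k}(\v{s})}+\frac{\beta}{\sqrt{N}}\v{s}\cdot\v{E}+F^N_{\mathrm{diss}}\ge 0\}$ and $B_N:=\{\v{s}\,:\,\frac{\beta}{\sqrt{N}}\v{s}\cdot\v{E}+F^N_{\mathrm{diss}}\ge 0\}$, the aim is to bound $|P^N(A_N)-P^N(B_N)|\to 0$.

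First I would fix an arbitrary $\epsilon>0$ and truncate. The random vector $\v{s}=(\v{k}-N\v{p})/\sqrt{N}$ has covariance matrix $\v{\Sigma}=\mathrm{diag}(\v{p})-\v{p}\v{p}^T$ independent of $N$, so by Chebyshev's inequality there exists $b=b(\epsilon)$ with $\sum_{\|\v{s}\|>b}P^N_{\v{k}(\v{s})}<\epsilon$ uniformly in $N$. Both sums in the statement can therefore be replaced by their restrictions to the compact region $\|\v{s}\|\le b$ at a cost of at most $\epsilon$ each.

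On the bounded region $\|\v{s}\|\le b$, Lemma~\ref{Orderoflogpro} gives the uniform estimate
\begin{equation}
\left|\tfrac{1}{N}\log P^N_{\v{k}(\v{s})}\right|\;\le\;\delta_N\;:=\;C(b)\,\tfrac{\log N}{N}\;\xrightarrow[N\to\infty]{}\;0.
\end{equation}
Consequently, defining $B_N^{\pm}:=\{\v{s}:\frac{\beta}{\sqrt{N}}\v{s}\cdot\v{E}+F^N_{\mathrm{diss}}\ge\mp\delta_N\}$, we have the inclusions
\begin{equation}
B_N^{-}\cap\{\|\v{s}\|\le b\}\;\subseteq\;A_N\cap\{\|\v{s}\|\le b\}\;\subseteq\;B_N^{+}\cap\{\|\v{s}\|\le b\},
\end{equation}
so that $|P^N(A_N)-P^N(B_N)|\le 2\epsilon+P^N(B_N^{+}\setminus B_N^{-})$.

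The main technical step, and the only real obstacle, is to show that the slab $B_N^{+}\setminus B_N^{-}$ has vanishing probability. This slab corresponds to the values of $\v{s}\cdot\v{E}$ lying in an interval of width $2\delta_N\sqrt{N}/\beta=O(\log N/\sqrt{N})$. The projection $\v{s}\cdot\v{E}$ is a standardised sum of i.i.d.\ terms with non-degenerate limiting variance (guaranteed by the standing assumption that $\lim_N \sigma^2(F^N)/N$ exists and is non-zero), hence by the Berry-Esseen theorem (as already used around Eq.~\eqref{eq:bound_Pn}) its distribution function differs from a Gaussian by $O(N^{-1/2})$ uniformly. Since the Gaussian density is bounded, the probability assigned to any interval of width $O(\log N/\sqrt{N})$, uniformly in its position, is itself $O(\log N/\sqrt{N})+O(N^{-1/2})\to 0$. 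Thus $P^N(B_N^{+}\setminus B_N^{-})\to 0$, giving $\limsup_N|P^N(A_N)-P^N(B_N)|\le 2\epsilon$, and since $\epsilon$ was arbitrary the two limits agree.
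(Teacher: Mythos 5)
Your proof is correct and follows the same overall structure as the paper's: truncate to a compact ball $\|\v{s}\|\le b$ using concentration of the multinomial (the paper phrases this via the sandwich $A(N)-\Omega(b,N)\le A(b,N)\le A(N)$ and lets $b\to\infty$ after $N\to\infty$; you fix $\epsilon$ and pick $b(\epsilon)$, which is equivalent), and then invoke Lemma~\ref{Orderoflogpro} to get the uniform bound $\tfrac1N\log P^N_{\v{k}(\v{s})}=O(\log N/N)$ on the compact region. The genuine difference is that you explicitly handle the ``boundary slab'' $B_N^{+}\setminus B_N^{-}$, whose probability must vanish for the two truncated sums to agree in the limit. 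The paper passes from ``the $O(\log N/N)$ term vanishes'' directly to $\lim_N A(b,N)=\lim_N B(b,N)$ without stating why the transitional region carries vanishing mass; your Berry--Esseen estimate (projection $\v{s}\cdot\v{E}$ has a non-degenerate Gaussian limit, so a window of width $O(\log N/\sqrt{N})$ captures probability $O(\log N/\sqrt{N})+O(N^{-1/2})\to 0$) fills that gap cleanly and makes the argument fully quantitative. Both proofs rest on the same three ingredients---concentration of $\v{s}$, the logarithmic estimate, and asymptotic normality of $\v{s}\cdot\v{E}$---but yours is the more complete write-up.
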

\newpage
\begin{proof}
We start by introducing the following notation
\begin{align}
 &A(N) := \sum_{\v{s}}\Big\{P^N_{\v{k}(\v{s})}\Big| \frac{1}{N}\log(P^N_{\v{k}(\v{s})})+\frac{\beta}{\sqrt{N}}\sum_{i}s_iE_i+F_{\text{diss}}^N\geq 0\Big\},\nonumber\\
 &A(b,N) := \sum_{\v{s}}\Big\{P^N_{\v{k}(\v{s})}\Big| \frac{1}{N}\log(P^N_{\v{k}(\v{s})})+\frac{\beta}{\sqrt{N}}\sum_{i}s_iE_i+F_{\text{diss}}^N\geq 0,\nonumber\\
 &\hspace{6.5cm}\text{  such that  } \|\v{s}\|\leq b\Big\},\nonumber\\
 &B(N) := \sum_{\v{s}}\Big\{P^N_{\v{k}(\v{s})}\Big| \frac{\beta}{\sqrt{N}}\sum_{i}s_iE_i+F_{\text{diss}}^N\geq 0\Big\},\nonumber\\
 &B(b,N) := \sum_{\v{s}}\Big\{P^N_{\v{k}(\v{s})}\Big| \frac{\beta}{\sqrt{N}}\sum_{i}s_iE_i+F_{\text{diss}}^N\geq 0, \text{  such that  } \|\v{s}\|\leq b\Big\},\nonumber\\
 &\Omega(b,N):= \sum_{\v{s}}\Big\{P^N_{\v{k}(\v{s})}\Big| \text{  such that  } \|\v{s}\|\geq b\Big\}.
 \end{align}
 Our goal is to show that
 \begin{equation}
     \lim_{N\rightarrow\infty} A(N)=\lim_{N\rightarrow\infty} B(N).
 \end{equation}
From the definition it follows that 
\begin{subequations}
\begin{eqnarray}
     A(N)-\Omega(b,N)&\leq& A(b,N)\leq A(N)\label{LimA},\\
      B(N)-\Omega(b,N)&\leq& B(b,N)\leq B(N)\label{LimB}.
\end{eqnarray}
\end{subequations}
Taking the limit $N\rightarrow\infty$ of Eqs.~\eqref{LimA}~and~\eqref{LimB}, we have 
\begin{subequations}
\begin{align}
    \lim_{N\rightarrow\infty}\Big(A(N)-\Omega(b,N)\Big)&\leq\lim_{N\rightarrow\infty} A(b,N)\leq \lim_{N\rightarrow\infty} A(N),\label{LimA1}\\
    \lim_{N\rightarrow\infty}\Big(B(N)-\Omega(b,N)\Big)&\leq \lim_{N\rightarrow\infty}B(b,N)\leq \lim_{N\rightarrow\infty}B(N)\label{LimB1}.
\end{align}
\end{subequations}
Now, let us define
\begin{eqnarray}
     \lim_{N\rightarrow\infty}\Omega(b,N)=:\epsilon(b).
\end{eqnarray}
As the multinomial distribution concentrates around mean for $N\rightarrow\infty$, it follows that $\lim_{b\rightarrow\infty}\epsilon(b)=0$. Therefore, taking the limit $b\rightarrow\infty$ in Eq.~\eqref{LimA1}  we have
\begin{align}
     & \lim_{N\rightarrow\infty}A(N)\leq\lim_{b\rightarrow\infty} \lim_{N\rightarrow\infty}A(b,N)\leq \lim_{N\rightarrow\infty}A(N)\nonumber\\
     &\qquad\Rightarrow\quad\lim_{b\rightarrow\infty} \lim_{N\rightarrow\infty}A(b,N)= \lim_{N\rightarrow\infty}A(N).\label{eq:ANABN}
\end{align}
Analogously, taking the limit $b\rightarrow\infty$ in Eq.~\eqref{LimB1} we can show that
\begin{equation}\label{eq:BNBBN}
    \lim_{b\rightarrow\infty} \lim_{N\rightarrow\infty}B(b,N)=\lim_{N\rightarrow\infty}B(N).
\end{equation}

Moreover, for any fixed $b$, by employing Lemma~\ref{Orderoflogpro}, we see that $\frac{1}{N}\log(P^N_{\v{k}(\v{s})})=O(\frac{\log N}{N})$, which vanishes as \mbox{$N\rightarrow\infty$}. Therefore, we have 
 \begin{equation}
     \lim_{N\rightarrow\infty} A(b,N) =\lim_{N\rightarrow\infty} B(b,N),
 \end{equation}
and so by taking the limit $b\rightarrow\infty$, we arrive at
\begin{equation}\label{eq:ANABNBNBBN}
     \lim_{b\rightarrow\infty}\lim_{N\rightarrow\infty} A(b,N) =\lim_{b\rightarrow\infty}\lim_{N\rightarrow\infty} B(b,N).
\end{equation}
Combining the above with Eqs.~\eqref{eq:ANABN}-\eqref{eq:BNBBN} we have
\begin{equation}
    \lim_{N\rightarrow\infty} A(N) =\lim_{N\rightarrow\infty} B(N),
\end{equation}
which completes the proof.
\end{proof}


\subsection{Central limit theorem for multinomial distribution}
\label{app:clt}

Let us now prove the central limit theorem for multinomial distribution. We start by stating the following lemma:

\begin{lemma}[Cetral limit theorem for multinomial distribution]\label{CLT_Multinomial}
    The multinomial distribution with mean \mbox{$\v{\mu}=N\v{p}$} and a covariance matrix $\v{\Sigma}$ can be approximated in the asymptotic limit by a multivariate normal distribution $\mathcal{N}^{(\v{\mu},\v{\Sigma})}$ with mean $\v{\mu}$ and a covariance matrix~$\v{\Sigma}$.
\end{lemma}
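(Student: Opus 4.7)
The plan is to give a local (pointwise) version of the multivariate central limit theorem for the multinomial distribution by directly comparing $\log P^N_{\v{k}(\v{s})}$ with $\log \mathcal{N}^{(\v{\mu},\v{\Sigma})}_{\v{k}(\v{s})}$ in the regime of interest $\v{k} = N\v{p} + \sqrt{N}\v{s}$, $\sum_i s_i = 0$. Equivalently, one can invoke the classical multivariate CLT after writing $\v{k} = \sum_{j=1}^N \v{X}_j$ as a sum of $N$ i.i.d.\ categorical random variables $\v{X}_j$ taking value $\v{e}_i$ (the $i$-th standard basis vector) with probability $p_i$; a direct computation shows that each $\v{X}_j$ has mean $\v{p}$ and covariance $\operatorname{diag}(\v{p}) - \v{p}\v{p}^T$, so that $\v{k}$ has mean $N\v{p}$ and covariance $\v{\Sigma}$. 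The Cramér-Wold device then gives the weak convergence of $(\v{k}-N\v{p})/\sqrt{N}$ to a normal with covariance $\operatorname{diag}(\v{p}) - \v{p}\v{p}^T$. Since the preceding proofs (Section~\ref{sec:proof1b} and Section~\ref{sec:proof2b}) only ever use the distribution of $\v{k}(\v{s})$ inside integrals over bounded regions in $\v{s}$-space, weak convergence is already sufficient for our needs.

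For a more explicit local CLT suitable for the pointwise use made in the main text, I would proceed through Stirling's formula. Using $\log n! = n\log n - n + \tfrac{1}{2}\log(2\pi n) + O(1/n)$ applied to $N!$ and to each $k_i(s_i)!$, together with $\sum_i k_i = N$, gives
\begin{equation}
\log P^N_{\v{k}(\v{s})} = -\sum_{i=1}^d k_i(s_i)\log\!\left(\frac{k_i(s_i)}{Np_i}\right) - \frac{1}{2}\log\!\left[(2\pi N)^{d-1}\prod_{i=1}^{d}p_i\right] + O\!\left(\frac{1}{\sqrt{N}}\right).
\end{equation}
Expanding the logarithm via $\log(1 + s_i/(\sqrt{N}p_i)) = s_i/(\sqrt{N}p_i) - s_i^2/(2Np_i^2) + O(N^{-3/2})$ and summing, the leading-order $\sqrt{N}\sum_i s_i$ term vanishes by the constraint $\sum_i s_i = 0$, leaving
\begin{equation}
\log P^N_{\v{k}(\v{s})} = -\frac{1}{2}\sum_{i=1}^d\frac{s_i^2}{p_i} - \frac{1}{2}\log\!\left[(2\pi N)^{d-1}\prod_{i=1}^{d}p_i\right] + O\!\left(\frac{1}{\sqrt{N}}\right).
\end{equation}
This matches the logarithm of $\mathcal{N}^{(\v{\mu},\v{\Sigma})}_{\v{k}(\v{s})}$ when the Gaussian is restricted to the hyperplane $\sum_i s_i = 0$, since on that hyperplane the quadratic form $\v{s}^T (N\v{\Sigma})^{+}\v{s}$ induced by the pseudo-inverse of $\v{\Sigma}$ reduces exactly to $\sum_i s_i^2/p_i$, and the pseudo-determinant contributes the required $\prod_i p_i$ factor.

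The main technical obstacle is precisely this degeneracy: because $\v{k}$ lives on the $(d-1)$-dimensional simplex $\sum_i k_i = N$, the covariance $\v{\Sigma} = N(\operatorname{diag}(\v{p}) - \v{p}\v{p}^T)$ has rank $d-1$ and zero determinant, so the density $\mathcal{N}^{(\v{\mu},\v{\Sigma})}$ written in the ambient $d$-dimensional form is formally ill-defined. Handling this cleanly requires either (i) working intrinsically on the $(d-1)$-dimensional affine subspace and using pseudo-inverses and pseudo-determinants, or (ii) eliminating one coordinate (say $s_d = -\sum_{i<d} s_i$) to obtain an honest non-degenerate Gaussian in $\mathbb{R}^{d-1}$, and showing that the resulting quadratic form coincides with the one obtained from the Stirling expansion. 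Either way, uniformity of the approximation on sets $\|\v{s}\| \leq b$ follows from the remainder estimates in Stirling's formula (as in Lemma~\ref{Orderoflogpro}), and the tails $\|\v{s}\| > b$ are controlled by the concentration of the multinomial distribution around its mean, exactly as used in Lemma~\ref{Lemma_to_remove_lnN}. Combining the uniform local approximation on bounded sets with the tail bound then yields the stated asymptotic equivalence of the two distributions, which is the precise content of how the lemma is applied in the proofs of Theorems~\ref{thm:pure} and~\ref{thm:pure2}.
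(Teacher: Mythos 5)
Your first paragraph — writing $\v{k}=\sum_{j=1}^N\v{X}_j$ as a sum of i.i.d.\ categorical random vectors, computing the mean $\v{p}$ and covariance $\operatorname{diag}(\v{p})-\v{p}\v{p}^T$, and invoking the multivariate CLT — is exactly the paper's proof, so in that sense you have reproduced it. Where you genuinely go further is worth pointing out. The paper's argument only establishes weak convergence of $(\v{k}-N\v{p})/\sqrt{N}$, whereas the applications in Sections~\ref{sec:proof1b} and~\ref{sec:proof2b} replace the multinomial probability \emph{mass} $P^N_{\v{k}(\v{s})}$ by a Gaussian \emph{density} inside sums over $\v{k}$, which is properly a local CLT; your Stirling expansion supplies exactly that. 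Your computation is correct: the $\sqrt{N}\sum_i s_i$ term vanishes on the constraint hyperplane, the surviving $-\tfrac12\sum_i s_i^2/p_i$ is precisely the quadratic form of the Moore--Penrose pseudo-inverse of $\operatorname{diag}(\v{p})-\v{p}\v{p}^T$ restricted to that hyperplane (one checks this by writing $\v{s}=A\v{t}$ so that $\v{s}^TA^+\v{s}=\v{t}^TA\v{t}=\sum_i p_it_i^2-(\v{p}\!\cdot\!\v{t})^2=\sum_i s_i^2/p_i$), and the prefactor $(2\pi N)^{-(d-1)/2}(\prod_i p_i)^{-1/2}$ is the corresponding pseudo-determinant. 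You also correctly flag the rank-$(d-1)$ degeneracy of $\v{\Sigma}$, which the paper silently skips by writing a formally $d$-dimensional Gaussian; that issue needs the intrinsic or reduced-coordinate treatment you describe. Combined with uniformity on $\|\v{s}\|\leq b$ from the Stirling remainder (cf.\ Lemma~\ref{Orderoflogpro}) and the tail control of Lemma~\ref{Lemma_to_remove_lnN}, your version actually establishes the form of the approximation that Theorems~\ref{thm:pure} and~\ref{thm:pure2} use, rather than the weaker distributional statement the paper's one-line CLT proof nominally delivers.
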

\begin{proof}
Assume $\v{X_1}\ldots \v{X_N}$ are independent and identically distributed random vectors each
of them with the following distribution
\begin{equation}
\!\!\!\text{Prob} (\v{X}=\v{x}) = 
     \begin{cases}
       \prod_{i=1}^d p_i^{x_i} &\quad\!\text{if $\v{x}$ is unit vector},\\
       0 &\quad\!\text{otherwise}.
     \end{cases}
\end{equation}
Then, the mean vector of $\v{X}$ is $\v{p}$ and the covariance matrix $\frac{1}{N}\v{\Sigma}=\text{ diag }(\v{p})-\v{p}\v{p}^T$. Define  $\v{S}_N:=\v{X}_1+\ldots+\v{X}_N$. Then 
\begin{eqnarray}
     \text{Prob}(\v{S}_N=\v{k}) &=&\binom{N}{k_1\ldots k_d}p_1^{k_1}\ldots p_d^{k_d}.
\end{eqnarray}
We thus see that a multinomial distribution arises from a sum of independent and identically distributed random variables. Therefore, using the central limit theorem, we obtain that the distribution of $\v{k}$ approaches the distribution $\mathcal{N}^{(\v{\mu},\v{\Sigma})}$ arbitrarily well for \mbox{$N\to\infty$}, which completes the proof.
\end{proof}


\subsection{Entropy difference between uniformised and non-uniformised embedding boxes}
\label{app:DDF}

In this section, we show that there exists a permutation which transforms the total final state $\hat{\v{F}}^N$ from Eq.~\eqref{eq:distribution-total} into a state that is uniform in almost all embedding boxes, leading to no dissipation up to higher-order asymptotics. We start with the following observations about the total initial and final states. First, since we consider the initial state composed of $N$ independent systems in identical incoherent states $\v{P}^N=\v{p}^{\otimes N}$, we note that $\hat {\v P}^{N} \otimes \hat{\tilde{\v{G}}}^N$ has a polynomial number of distinct entries with exponential degeneracy and exponentially many embedding boxes. Second, note that according to Eq.~\eqref{eq:distribution-total} the entries of the embedded total final state, $\hat{\v{F}}^N$, are essentially given by the entries of $\hat {\v P}^{N} \otimes \hat{\tilde{\v{G}}}^N$ (plus many equal entries $(1-\epsilon)/K$). Thus, $\hat{\v{F}}^N$ has polynomially many distinct entries with exponential degeneracy and exponentially many embedding boxes. Employing permutational freedom, we can then rearrange the entries of $\hat{\v{F}}^N$ so that they are uniform in almost every embedding box, except poly($N$) of them. We will denote the probability distribution over the embedding boxes by $\v{q}$ and note that it is essentially equal to $\v P^{N} \otimes \, \tilde{\v G}^N$. Moreover, we will denote by $\v{r}^{(i)}$ the normalised distribution within the $i$-th embedding box.

The next step is to write the entropy of the embedded total final state as the entropy of probability distribution over different embedding boxes, $H(\v q)$, plus the average entropy of normalised probabilities within each box, $H(\v{r}^{(i)})$,
\begin{equation}
\label{eq:Hnormalisedbox}    
    H(\hat{\v{F}}^N) = H(\v{q}) + \sum_i q_i H(\v{r}^{(i)}).
\end{equation}
Note that whether we uniformise or not a given box, the entropy $H(\v q)$ does not change. Consequently, the entropy of the final state uniformised within each embedding box takes the form of
\begin{align}
\label{eq:entr-totalfinuni}
    H(\hat{\v{F}}_{\text{uni}}^N) &= H(\v{q}) + \sum_i q_i H(\v{r}^{(i)}_{\text{uni}}) ,
\end{align}
with $\v{r}^{(i)}_{\text{uni}}$ representing the normalised and uniformised probability within the $i$-th embedding box. Thus, the entropy difference between the uniformised and non-uniformised distributions reads 
\begin{equation}
    H(\hat{\v{F}}_{\text{uni}}^N) - H(\hat{\v{F}}^N) = \sum_i q_i \left [H(\v{r}^{(i)}_{\text{uni}})-H(\v{r}^{(i)}) \right] .
\end{equation}

Now, note that due to the previous argument, the above sum is performed only over a polynomial number of boxes. Denoting the set with size poly($N$) as $\Omega$, one can write
\begin{align}
  H(\hat{\v{F}}_{\text{uni}}^N) - H(\hat{\v{F}}^N) &= \sum_{i\in\Omega} q_i \left [H(\v{r}^{(i)}_{\text{uni}})-H(\v{r}^{(i)}) \right] \leq \sum_{i\in\Omega} q_iH(\v{r}^{(i)}_{\text{uni}}).
\end{align}
Let us analyse the right-hand side of the above equation. First, note that $q_i$ is exponentially small, $q_i \propto \exp(-N)$; while $H(\v{r}^{(i)}_{\text{uni}})$ is the entropy of a uniform state over the dimension of the $i$-th embedding box (equal to $\prod_i D^{k_i}_i$), so it will scale linearly in $N$:
\begin{equation}
    H(\v{r}^{(i)}_{\text{uni}})=\log \left( \prod_i D^{k_i}_i\right) \propto O(N) \, .
\end{equation}
Therefore, we conclude that the entropy difference vanishes exponentially in $N$
\begin{equation}
H(\hat{\v{F}}_{\text{uni}}^N) - H(\hat{\v{F}}^N) \propto \exp(-N).
\end{equation}


\section{Concluding remarks}
\label{sec:out}

In this chapter, we have derived a version of the fluctuation-dissipation theorem for state interconversion under thermal operations. We achieved this by establishing a relation between optimal transformation error and the amount of free energy dissipated in the process on the one hand, and fluctuations of free energy in the initial state of the system on the other hand. We addressed and solved the problem in two different regimes: for initial states being either energy-incoherent or pure, with the target state in both cases being an energy eigenstate, and with the possibility to change the Hamiltonian in the process. For the case of finitely many independent but not necessarily identical energy-incoherent systems, we have provided the single-shot upper bound on the optimal transformation error as a function of average dissipated free energy and free energy fluctuations. Moreover, in the asymptotic regime we obtained the optimal transformation error up to second order asymptotic corrections, which extends previous results of Ref.~\cite{Chubb2018beyondthermodynamic} to the regime of non-identical initial systems and varying Hamiltonians. For the first time we have also performed the asymptotic analysis of the thermodynamic distillation process from quantum states that have coherence in the energy eigenbasis. As a result, we expressed the optimal transformation error from identical pure states and free energy dissipated during this transformation up to second order asymptotic corrections as a function of free energy fluctuations.

The obtained results can be naturally extended in the following directions. Firstly, one could generalise our analysis to arbitrary initial states. We indeed believe that an analogous result to ours will hold for such general mixed states with coherence. That is because dephasing into fixed energy subspaces leads to free energy change of the order $O(\log N)$, which is negligible compared to the second order asymptotic corrections of the order $O(\sqrt{N})$ that we focus on. In other words, the contribution of coherence to free energy per copy of the system vanishes faster with growing $N$ than what we are interested while studying second order corrections. 

Secondly, it would be extremely interesting to generalise the thermodynamic state interconversion problem to arbitrary final states, and see how the interplay between the fluctuations of the initial and target states affects dissipation. For energy-incoherent initial and final states one can infer from Ref.~\cite{korzekwa2019avoiding} that appropriately tuned fluctuations can significantly reduce dissipation, however nothing is known for states with coherence. Unfortunately, since thermal operations are time-translation covariant, such that coherence and athermality form independent resources~\cite{lostaglio2015description, lostaglio2015description, PRLHorodeckicoherence}, it seems unlikely that the current approach can be easily generalised. Thirdly, one could try to extend our results on pure states to allow for non-identical systems and to derive a bound working for all $N$, not only for $N\to\infty$ (i.e., replace the proving technique based on central limit theorem by the one based on a version of Berry-Esseen theorem).

In this chapter we have also provided a number of physical applications of our fluctuation-dissipation theorems by considering several scenarios and explaining how our results can be useful to describe fundamental and well-known thermodynamic and information-theoretic processes. We derived the optimal value of extractable work in a thermodynamic distillation process as a function of the transformation error associated to the work quality. This, together with the knowledge of the actual final free energy of the battery system provided by Theorem~\ref{thm:incoherent2}, could potentially be used to clarify the notion of imperfect work~\cite{Aberg2013, Woods2019maximumefficiencyof, Ying_Ng_NJP}, and to construct a comparison platform allowing one to continuously distinguish between work-like and heat-like forms of energy. We have also shown how our results yield the optimal trade-off between the work invested in erasing $N$ independent bits prepared in arbitrary states, and the erasure quality measured by the infidelity distance between the final state and the fully erased state. This can of course be straightforwardly extended to higher-dimensional systems and arbitrary final erased state (not necessarily the ground state). Finally, we have investigated the optimal encoding rate into a collection of non-interacting subsystems consisting of energy-incoherent or pure states using thermal operations. We derived the optimal rate (up to second-order asymptotics) of encoding information with a given average decoding error and without spending thermodynamic resources. This provides an operational interpretation of the resourcefulness of athermal quantum states for communication scenarios under the restriction of using thermal operations.

We would also like to point out to some possible technical extensions of our results. Firstly, we used infidelity as our quantifier of transformation error, but we expect that similar results could be derived using other quantifiers, e.g., the trace distance. Secondly, our investigations were performed in the spirit of small-deviation analysis (where we look for constant transformation error and total free energy dissipation of the order $O(\sqrt{N})$), but possibly other interesting fluctuation-dissipation relations could be derived within the the moderate and large deviation regimes. Thirdly, our result for pure states is limited to Hamiltonians with incommensurable spectrum, but we believe this is just a technical nuisance that one should be able to get rid of. Lastly, within the framework of general resource theories, it might be possible to derive analogous fluctuation-dissipation relations, but with free energy replaced by a resource quantifier relevant for a given resource theory.
\chapter{Quantum catalysis in cavity quantum electrodynamics}\label{C:CQED}

The effect of catalysis involves using an auxiliary system (a catalyst) to enable a process that would either not occur spontaneously or would occur very slowly. Catalysis manifests across a variety of fields (see e.g.,~\cite{chorkendorff2017concepts}), including biological processes activated by enzymes, the speed-up of chemical reactions, and the synthesis of nanomaterials. 

More recently, the phenomenon of catalysis has also become relevant in the context of quantum information; see recent reviews \cite{datta2022catalysis,lipkabartosik2023catalysis}. First examples focused on entanglement manipulation~\cite{jonathan1999entanglement,vanDam2003,turgut2007catalytic,daftuar2001mathematical,sun2005existence,feng2005catalyst,PhysRevLett.127.080502,Kondra_2021,Datta2022}, and then spread to quantum thermodynamics~\cite{brandao2015second,Ng_2015,Wilming2017,Mueller2018,Lipka-Bartosik2021,shiraishi2021quantum,Gallego2016,boes2019bypassing,Henao_2021,Henao2022,Jeongrak2022,son2023hierarchy,Lipka_Bartosik_2023,czartowski2023thermal}, coherence theory~\cite{Aberg2014,Vaccaro2018,lostaglio2019coherence,takagi2022correlation,char2023catalytic,van2023covariant} and other areas~\cite{Marvian2019,wilming2021entropy,Wilming2022correlationsin,rubboli2022fundamental,lie2021catalytic,boes2019neumann}. These results are typically formulated within the framework of quantum resource theories~\cite{chitambar2019quantum}. This abstract approach is particularly useful for characterising the fundamental limits of manipulating quantum resources, including scenarios involving catalytic systems of arbitrary complexity. 

An interesting question is whether quantum catalysis is also relevant and useful in a more practical context, potentially even in experiments. Here we investigate quantum catalysis in a paradigmatic setup of quantum optics, namely the Jaynes-Cummings model~\cite{Jaynes1963,Larson_2021,Greentree_2013}, where a two-level atom interacts with a single-mode optical cavity. In this chapter, we uncover a catalytic process enabling the generation of a non-classical state of light in the cavity, using the atom as a catalyst. Specifically, we consider the cavity to be initially prepared in a ``classical'' coherent state, and uncorrelated to the atom. By carefully setting the initial state of the atom and the interaction time, we obtain a final state such that \emph{(i)} the atom is back in its initial state exactly, and \emph{(ii)} the state of the cavity is now non-classical, i.e., featuring Wigner negativity or sub-Poisonian statistics. Hence, non-classicalilty of the cavity has been generated without perturbing the state of the atom (see Fig.~\ref{scheme}). The process is catalytic and the atom could be re-used, for example by coupling it to another cavity. 

We investigate the mechanism of this catalytic process, and identify two crucial ingredients. First, the final state of the atom and cavity must feature correlations. Second, the evolution of the state of the catalyst must involve quantum coherence (i.e., superpositions of the energy basis states). The latter is an interesting aspect, as typical instances of quantum catalysis in resource theories involve only diagonal states (i.e., without coherence), so that they can be understood as a stochastic process involving the probability distributions of the system and catalysis. Instead here, the system and catalyst experience a genuinely quantum evolution. This is a novel instance of the effect of \emph{coherent quantum catalysis}, recently investigated in quantum thermodynamics~\cite{Lipka_Bartosik_2023}.

Before proceeding, it is worth discussing previous works in quantum optics that relate to the concept of catalysis. Notably, the pioneering proposal for quantum computing in ion traps~\cite{Cirac1995} (see also~\cite{Phoenix1993,Hagley1997}) considers two spin qubits that become entangled via an interaction with a cavity that can be considered catalytic~\cite{lipkabartosik2023catalysis}. Another relevant direction is that of ``multi-photon catalysis'' (see e.g.,~\cite{Lvovsky2002,Bartley2012,hu2016multiphoton,Hu2017}) which is a heralded catalytic process, where the catalyst is returned only with some probability. In contrast, our catalytic protocol is deterministic. 

\begin{figure}[t]
\includegraphics[width=10.7cm]{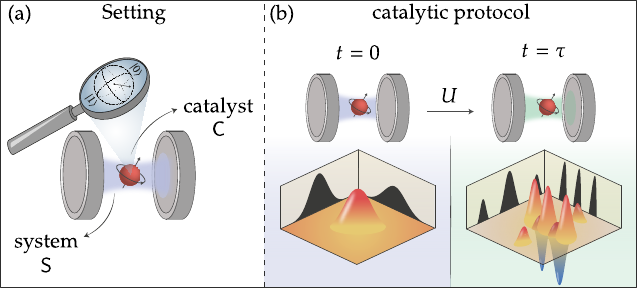}
	\caption[catalysis]{\emph{Quantum catalysis in the Jaynes-Cummings model.} (a) An atom (the catalyst $\ms C$) interacts with a single-mode optical cavity (the system $\ms S$), initially prepared in a ``classical'' coherent state. (b) We consider the evolution $U$ over a well-chosen time interval (from $t=0$ to $t=\tau$) such that (i) the final state of the cavity is non-classical, and (ii) the atom is returned to its intial state exactly. Hence, non-classicality has been generated via catalysis.}
	\label{scheme}
\end{figure}

\section{Setting the scene}

Consider a setup comprising a system $(\ms{S})$ initially prepared in a state $\rho_{\ms S}$ and a catalyst $(\ms{C})$ in an initial state $\chi_{\ms C}$. The total system $\ms{SC}$ is assumed to be closed and evolves via an energy-conserving process for some time~$\tau$. This evolution is represented by the unitary $U=\exp({-i H_{\ms{SC}} \tau})$, where $H_{\ms{SC}}$ denotes the joint Hamiltonian. Consequently, the final state of the total system is given by $\sigma_{\ms{SC}} := U(\rho_{\ms{S}} \ot \chi_{\ms{C}})U^{\dagger}$.

The evolution is said to be \emph{catalytic} when the catalyst is returned in exactly the same state as it was initially prepared. Formally, we demand that
\begin{equation}\label{Eq:catalytic-constrain}
    \sigma_{\ms C} := \Tr_{\ms{S}}[U(\rho_{\ms{S}} \ot \chi_{\ms{C}})U^{\dagger}] = \chi_{\ms C},
\end{equation}
which we refer to as the \emph{catalytic constraint}. Satisfying this constraint typically requires to carefully choose the initial states of the system $\rho_{\ms{S}}$ and the catalyst $\chi_{\ms{C}}$, as well as the interaction time $\tau$.

The main goal of a catalytic evolution is to induce an interesting local dynamics on the system $\ms{S}$, i.e.,
\begin{align} \label{eq:local_evol_S}
    \rho_{\ms{S}} \rightarrow \sigma_{\ms{S}} := \Tr_{\ms{C}}[U(\rho_{\ms{S}} \ot \chi_{\ms{C}})U^{\dagger}],
\end{align}
while leaving the state of the catalyst unchanged. Notably, it is possible to induce an evolution on $\ms{S}$ [as in Eq. (\ref{eq:local_evol_S})] that would not be possible without the presence of the catalyst.

\subsection{Jaynes-Cummings model}

In this Section, we discuss the phenomenon of catalysis in the Jaynes-Cummings (JC) model, describing the interaction between a single-mode optical cavity and a two-level atom~\cite{Jaynes1963} (see Fig.~\hyperref[scheme]{\ref{scheme}a}). 

We choose the cavity to represent the system $\ms S$, while the atom will play the role of the catalyst $\ms C$. The cavity is characterised by the bosonic annihilation operator $a$ with the photon number operator $n_{\ms{S}} := a^{\dagger}a$. The atom has energy levels $\ket{g}$ and $\ket{e}$ and its energy is captured by $\sigma_z = \ketbra{e}{e}-\ketbra{g}{g} $. We work in the resonant regime, where the atom and cavity have the same frequency $\omega$. The evolution is governed by the JC Hamiltonian, which in the rotating-wave approximation reads
\begin{align}\label{Eq:Total_Hamiltonian}
    H_{\ms{SC}} = \omega a^{\dagger} a + \frac{\omega}{2} \sigma_z +  g \left(\sigma_+ a + \sigma_- a^{\dagger} \right),
\end{align}
where $g$ is the coupling constant and \mbox{$\sigma_{+} = \ketbra{e}{g}$} and \mbox{$\sigma_- = \ketbra{g}{e}$} are the raising and lowering operators. Note that, as we focus on the resonant regime, the evolution specified by Eq.~(\ref{Eq:Total_Hamiltonian}) is energy-preserving. 

The Jaynes-Cummings interaction Hamiltonian $H_{\text{int}} = g \left(a\sigma_+ + a^{\dagger}\sigma_- \right)$ couples pairs of atom-field states $\{\ket{n+1,g},\ket{n,e} \}$. Consequently, the Hamiltonian $H_{\ms{SC}}$ decouples into a direct product of $2\times2$-matrix Hamiltonians, i.e., $H_{\ms{SC}}=\bigoplus_{n=0}^{\infty}H_{\ms{SC}}^{(n)}$, where

\begin{equation}\label{Eq:JC_eigeinproblem}
    H_{\ms{SC}}^{(n)} \begin{bmatrix}
\ket{n+1,g}\\ \ket{n,e} 
\end{bmatrix} = \begin{bmatrix}
(n+1/2)\omega & g\sqrt{n+1}\, \\ \\
g\sqrt{n+1} &  (n+1/2) \omega 
\end{bmatrix}\begin{bmatrix}
\ket{n+1,g}\\ \ket{n,e} 
\end{bmatrix}.
\end{equation}
The eigenvalue problem for this Hamiltonian yields the eigenfrequencies
\begin{equation}\label{Eq:eigenfrequencies}
    \omega^{(n)}_{\pm} = \qty(n+\frac{1}{2})\omega \pm \frac{1}{2}\mu_n,
\end{equation}
where $\mu_n = 2g \sqrt{n+1} $ is the $n$-photon Rabi frequency. In the resonance regime, the corresponding eigenstates are:
\begin{align}
    \ket{n,+} =   \frac{1}{\sqrt{2}}(\ket{n+1,g} + \ket{n,e}) \: , \:
    \ket{n,-} =  \frac{1}{\sqrt{2}} (\ket{n+1,g} - \ket{n,e}).
\end{align}

The eigenstates $\ket{n,\pm}$ are referred to as dressed atom states. The unperturbed atomic eigenstates, denoted as $\ket{g}$ and $\ket{e}$, are modified (or 'dressed') due to their interaction with the cavity field, resulting in a shift of their eigenfrequencies by an amount that is determined by the strength of the coupling, as illustrated in Fig.~\ref{F-degeneracy-levels}.
\begin{marginfigure}[-2.6cm]
	\includegraphics[width=4.753cm]{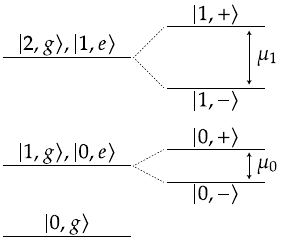}
	\caption{\emph{Jaynes-Cummings ladder}. Resonant atom-light interaction lifts the degeneracy of the unperturbed states $\{\ket{n,e},\ket{n+1,g} \}$. The level splitting $\mu_n = 2g\sqrt{n+1}$ depends on the number of photons.}
	\label{F-degeneracy-levels}
\end{marginfigure}
Given that we know the eigenvalues and eigenstates of the Jaynes–Cummings Hamiltonian, the unitary time evolution operator $U(t) = e^{-iHt}$ takes on the form of
\begin{align}\label{Eq:time-evolution-operator}
   &\hspace{-0.2cm} U(t) = e^{i\omega t} \ketbra{0,g}{0,g} \nonumber\\&\hspace{0.6cm}+\sum_{n=0}^{\infty}e^{-i\qty(n+\frac{1}{2})\omega t}  \Bigl\{\cos \frac{\mu_n t}{2}\Bigl(\ketbra{n\!+\!1,g}{n\!+\!1,g}+\ketbra{n,e}{n,e}\Bigl)\nonumber\\ &\hspace{2.7cm}-i\sin\frac{\mu_n t}{2}\Bigl(\ketbra{n\!+\!1,g}{n,e}+\ketbra{n,e}{n\!+\!1,g}\Bigl)\Bigl\}.
\end{align}

The unitary operator~\eqref{Eq:time-evolution-operator} describes the full dynamics of the Jaynes-Cummings model. In particular, we derive the final reduced states of the cavity $\ms{S}$ and the atom $\ms{C}$, and explicitly determine the set of atomic states that satisfy the catalytic constraint [Eq.~\eqref{Eq:catalytic-constrain}]. Here, we will omit the technical details of the derivation and instead focus on the discussion of our results~(see Sec. \ref{Appsub:generalsolution} for the derivation of the reduced states of the system and cavity, and Sec. \ref{Appsub:set-catalytic-states} for the solution of the catalytic constraint.).

Our objective is to catalytically generate non-classical light in the cavity. To investigate the possibility of doing so, we will use two complementary figures of merit as non-classicality witnesses. 
\begin{marginfigure}[1.2cm]
	\includegraphics[width=4.7618 cm]{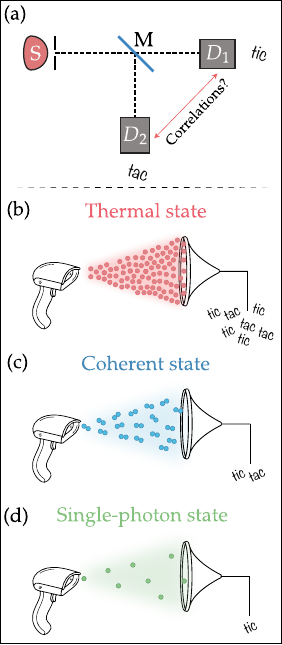}
	\caption{\emph{Hanbury Brown and Twiss experiment}. (a) A light source $S$ emits photons that are transmitted through or reflected by a mirror M. Two detectors, namely $D_1$ and $D_2$, count the number of photons, and the correlation between them is determined. Photon detections are recorded as a function of time for (b)~bunching (chaotic light), (c) randomness (e.g., a coherent state, laser beam), and (d) antibunching (e.g., light emitted from a single atom).
}
	\label{Fig:second-order-exp}
\end{marginfigure}

\subsection{Figures of merit}

\begin{center}
    \emph{Second-order coherence}
\end{center}

As a first figure of merit, we quantify non-classicality via the second-order auto correlation function (or simply second-order coherence) of the final state of the cavity~\cite{PhysRev.130.2529}, i.e.,
\begin{align}\label{eq:def_g2}
   g^{(2)}(\sigma_{\ms{S}}) = \frac{\langle a^{\dagger 2}a^2 \rangle_{\sigma}}{\langle a^{\dagger}a \rangle^2_{\sigma}} = \frac{\langle n_{\ms{S}}^2 \rangle_{\sigma}  - \langle n_{\ms{S}} \rangle_{\sigma}}{\langle n_{\ms{S}} \rangle_{\sigma}^2},
\end{align}
where $\langle n_{\ms{S}}\rangle_{\rho} := \Tr[\rho_{\ms{S}} n_{\ms{S}}]$. 

Physically, the concept of second-order coherence can be interpreted as a response to the following question~\cite{loudon1976photon}: ``\emph{If I have detected a photon or multiple photons at time $t$, what is the probability or correlation of detecting another photon or a similar number of photons at a later time $t + \tau$?}''. Depending on the type of light source, this question yields different answers. 

To delve deeper into the physics behind the answer to such a question, let us discuss the Hanbury Brown and Twiss experiment~\cite{brown1956correlation}. Firstly, let us remember that quantum theory views a beam of light as a stream of photons, with the measurement of beam intensity interpreted as the counting of photon arrivals at a phototube. In the Hanbury Brown and Twiss experiment [see Fig.~\hyperref[Fig:second-order-exp]{\ref{Fig:second-order-exp}a} for a schematic representation], two detectors, $D_1$ and $D_2$, count the number of photons transmitted through or reflected from the mirror $M$. Consider a series of measurements where the numbers of photons, $n_1$ and $n_2$, counted by the two detectors during a constant time interval, are recorded. If the series of identical experiments is sufficiently long, each detector registers the same average number of counts, $\langle n_1\rangle = \langle n_2\rangle$. However, the number of counts in each detector in a single run is not necessarily the same. In the quantum realm, each incident photon either passes through the half-silvered mirror or is reflected from it. Consequently, the two split beams are not exact replicas of each other; it is only their average properties over an extended series of experiments that are identical.

When a beam of light from an ordinary source is directed onto a photomultiplier tube, which records the arrival of photons, an intriguing phenomenon known as \emph{photon bunching} can be observed. In this phenomenon, the photons detected by the phototube do not scatter randomly in time; rather, they are detected in the form of clusters or bunches~[see Fig.~\hyperref[Fig:second-order-exp]{\ref{Fig:second-order-exp}b}]. Computing the degree of second-order coherence of thermal (chaotic) light yields a value of 2. This number indicates that a thermal light field has \emph{super-Poissonian statistics}. However, if the light source is now prepared as coherent light, it exhibits no classical intensity fluctuations or photon bunching. As a result, the counts in the two detectors are uncorrelated, and only random coincidences occur, leading to a correlation function equal to $g^{(2)} = 1$~[see Fig.\hyperref[Fig:second-order-exp]{\ref{Fig:second-order-exp}d}]. Coherent light exhibits Poissonian statistics, yielding random photon spacing. Interestingly, within such an experimental setup, we can observe a phenomenon known as \emph{antibunching}, a counterpart to photon bunching. This unique type of light is generated through nonlinear interactions of laser light with matter and is solely compatible with the quantum theory. Unlike coherent light, where the degree of second-order coherence equals one, antibunched light exhibits a degree of second-order coherence below unity, i.e., $g^{(2)}<1$~[see Fig.~\hyperref[Fig:second-order-exp]{\ref{Fig:second-order-exp}b}]. Consequently, it is considered to be anticorrelated or antibunched, highlighting its quantum nature. In this case, the light field follows a sub-Poissonian photon statistics, which means that the photon number distribution has a variance that is less than the mean. Such a discussion can be summarised as follows:
\begin{equation}
    g^{(2)}=\begin{cases}
2 \quad\:\: \Longrightarrow \: \text{Super-Poissonian statistics.} \\
1 \quad\:\: \Longrightarrow \: \text{Poissonian statistics}\\
< 1 \:\: \Longrightarrow \: \text{Sub-Poissonian photon statistics}
\end{cases}
\end{equation}

\begin{center}
    \emph{Wigner logarithmic negativity}
\end{center}

Our second figure of merit is the Wigner logarithmic negativity (WLN)~\cite{veitch2014resource,Albarelli2018,PhysRevA.97.062337}, defined as
\begin{align} \label{Wigner Neg}
\mathsf{W} \left ( \rho \right) := \log \left( \int \! \mathrm{d} x \,\mathrm{d} p \, \left| W_\rho \left(x,p \right) \right| \right),
\end{align}
where $W_\rho \left(x,p \right)$ is the Wigner function
\begin{align}\label{eq:w_fun}
    W_{\rho}(x,p) = \frac{1}{\pi} \int e^{2i p x'} \langle x-x'|\rho|x+x' \rangle \, \text{d}x'.
\end{align}
Negativity of the Wigner function has long been recognised as an important quantum feature, especially the volume of the negative part, which has been introduced as a nonclassicality quantifier~\cite{kenfack2004negativity}. The crucial aspect of the Wigner function's logarithmic negativity (WLN) is its property as an additive monotone\sidenote{A resource theory for continuous-variable systems has been proposed~\cite{Albarelli2018,PhysRevA.97.062337}, which quantifies both quantum non-Gaussianity (defined as the convex hull of Gaussian states being the set of free states) and Wigner negativity (with states having positive Wigner functions considered as free states). In both cases, the Wigner logarithmic negativity is a resource monotone.}, as the Wigner function of separable states can be factorised. Additionally, it is computable through numerical integration.

Since the Wigner function is normalised, it is greater than zero ($\mathsf{W} > 0$) whenever $W_{\rho}(x, p)$ is negative for some region of the phase space.

\section{Generating non-classical states of light}

The task we would like to achieve catalytically is the generation of non-classical light in the cavity. We consider an initial state of the cavity that is classical, namely a coherent state
\begin{equation}
    \ket{\alpha} = e^{-|\alpha|^2/2}\sum_{n = 0}^{\infty} \frac{\alpha^n}{\sqrt{n !}} \ket{n}.
\end{equation}
This state represents the closest approximation to an ideal, perfectly stable, classical field with a well-defined amplitude and phase. Coherent states exhibit Poissonian statistics with the mean number of photons $\langle n_{\mathrm{S}}\rangle_{\ket{\alpha}} = |\alpha|^2$. The second-order coherence for a coherent state is equal to unity, i.e., $g^{(2)}(\ketbra{\alpha}{\alpha}) = 1$, and, since the Wigner function of a coherent state is Gaussian, it implies that it is everywhere non-negative, resulting in $\mathsf{W} = 0$.

\begin{figure}[t]
    \centering
    \includegraphics[width=10.407cm]{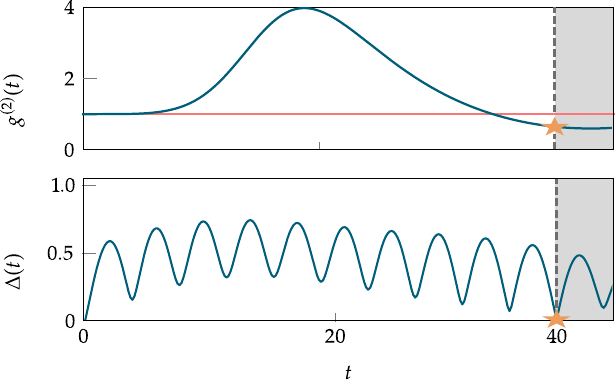}
    \caption{\emph{First illustrative example.} Catalytic process for generating non-classicality in the cavity, as captured by the second-order coherence. The time evolution, $g^{(2)}(t) := g^{(2)}(\sigma_{\mathrm{S}}(t))$, is shown in the top panel, while the bottom panel displays the change of the atomic state, measured via the distance $\Delta(t) := |\chi_{\mathrm{C}}-\sigma_{\mathrm{C}}(t)|_1$ concerning the initial state. The orange stars indicate the final time ($\tau \approx 40$) for which the evolution is catalytic. The atom returns to its initial state ($\Delta(\tau)=0$), while non-classicality has been activated, $g^{(2)}(\tau) \approx 0.5 < g^{(2)}(0)=1$. The values of the parameters are: $\alpha = 1/\sqrt{2}$, $\omega = 2\pi$, $g = \pi$.}
    \label{fig:2}
\end{figure}

Our goal is to find an initial state of the atom $\chi_{\ms{C}}$ and an interaction time $\tau$ such that the evolution is catalytic [i.e., satisfying Eq. (\ref{Eq:catalytic-constrain})], while also leading to a final state of the cavity $\sigma_{\ms{S}}$ that exhibits non-classicality. Below, we present two illustrative examples of such processes, using two complementary figures of merit to witness non-classicality. A detailed analysis of these examples will be discussed in the next section.

In Fig.~\ref{fig:2}, we present the catalytic activation of non-classicality. The upper plot depicts the time evolution of $g^{(2)}$. Catalysis occurs at time $\tau \approx 40$, indicated by the orange stars. Starting from the initial state of the cavity $\rho_{\mathrm{S}} = \dyad{\alpha}_{\mathrm{S}}$ (with $\alpha=1/\sqrt{2}$), which has $g^{(2)}(\rho_{\mathrm{S}})=1$ as any coherent state, we obtain a final state $\sigma_{\mathrm{S}}$ for which $g^{(2)}(\sigma_{\mathrm{S}}) \approx 0.5$, indicating non-classicality. Simultaneously, we monitor the time evolution of the atomic state by calculating the trace distance to its initial state, i.e.,
\begin{align}
    \Delta(t) := \norm{{\chi_{\ms{C}}-\sigma_{\ms{C}}(t)}}_1.
\end{align}
We observe that at the final time $\tau \approx 40$, we find $\Delta(\tau)=0$, indicating that the atom has returned to its initial state
\begin{equation}
    \chi_{\ms C} \approx \begin{pmatrix}
0.16 &0.36i  \\
0.36i & 0.84 \\
\end{pmatrix} 
\end{equation}
as required for catalysis.

As a second example, we demonstrate how Wigner negativity can be catalytically generated. Starting from a coherent state $\rho_{\ms{S}} = \dyad{\alpha}_{\ms{S}}$ which has a positive Wigner function, hence $\mathsf{W}(\rho_{\ms{S}})=0$, we aim to obtain a final state of the cavity $\sigma_{\ms{S}}$ with $\mathsf{W}(\sigma_{\ms{S}})>0$, therefore certifying its non-classicality. In Fig.~\ref{F-Wigner-WLN-nav}, we present an example of such an evolution. First, we plot WLN as a function of time $t$. At the final time $ \tau \approx 5$, we obtain a state $\sigma_{\ms{S}}$ that has $\mathsf{W}(\sigma_{\ms{S}}) \approx 0.1$, and we plot its Wigner function. To verify the catalytic nature of the evolution, we display the evolution of the atomic state via its trajectory in the Bloch sphere. Crucially, the trajectory is closed, as the initial and final state of the atom exactly coincide (red dot). In this example, the atom was prepared in the following state:
\begin{equation}
    \chi_{\ms C} \approx \begin{pmatrix}
0.154 & 6.84.10^{-4}i  \\
-6.84.10^{-4}i & 0.846 \\
\end{pmatrix} .
\end{equation}

\begin{figure*}
    \centering
    \includegraphics{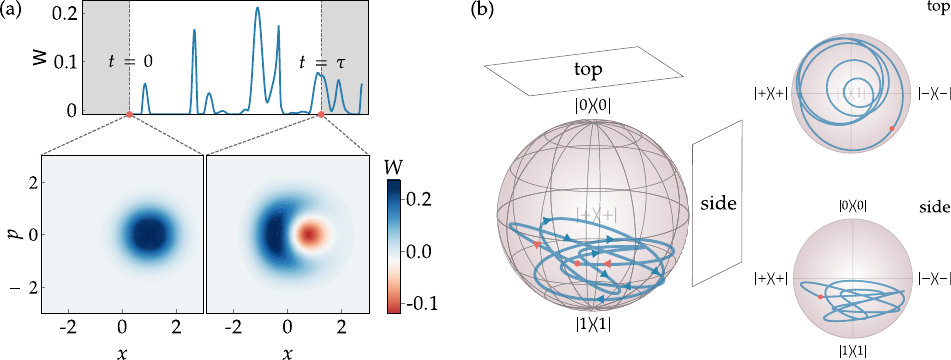}
    \caption{\emph{Second illustrative example.} Catalytic generation of Wigner negativity. Panel $(a)$ shows the time evolution of the Wigner logarithmic negativity $\mathsf W$, as well as the Wigner function at initial and final time $\tau$. Non-classicality is clearly generated, while the process is catalytic. Panel $(b)$ shows the trajectory of the atomic state, i.e., the catalyst $\ms{C}$, on the Bloch sphere. The initial and final state coincide (red dot). The values of the parameters are: $\alpha = 1/\sqrt{2}$, $\omega = 2\pi$ and $g = \pi$.}
    \label{F-Wigner-WLN-nav}
\end{figure*}

Note that the two examples we analysed are complementary. The first one demonstrates the generation of non-classicality as witnessed by the $g^{(2)}$ function, while the final state $\sigma_{\mathrm{S}}$ has a positive Wigner function. In the second example, the final state $\sigma_{\mathrm{S}}$ has a negative Wigner function, despite $g^{(2)}(\sigma) > 1$.

\section{Mechanism of catalysis}\label{sec:meachanism}

Here, we present a more intuitive understanding of quantum catalysis by identifying a mechanism that allows for the activation of non-classicality. We start by stating the following lemma:

\begin{lemma}[Higher order moments]\label{Lemma:higher-moments}
Let $O_{\ms{SC}} = O_{\ms{S}} \otimes \mathbbm{1}_{\ms{C}} + \mathbbm{1}_{\ms{S}} \otimes O_{\ms{C}}$ be an additive and conserved observable on the joint system $\ms{SC}$. In a catalytic protocol governed by a unitary operator $U$, such that $\sigma_{\ms{SC}} = U(\rho_{\ms{S}} \otimes \chi_{\ms{C}})U^{\dagger}$, where $\sigma_{\ms{C}} = \chi_{\ms{C}}$, and $[U, O_{\ms{SC}}] = 0$, the $k$-th moment of the observable $O$ on the system $\ms{S}$ satisfies:
\begin{equation}\label{Eq:momentOSC}
\langle O^k_{\ms{S}} \rangle_{\sigma}  = \langle O_{\ms{S}}^k\rangle_{\rho}  + \Tr[\Delta_k (\rho_{\ms{SC}}-\sigma_{\ms{SC}})],    
\end{equation}
where 
\begin{equation}\label{Eq:delta-k}
    \Delta_k = \sum_{i=1}^{k-1}\binom{k}{i}O^{k-i}_{\ms S}\otimes O^i_{\ms C}.
\end{equation}
\end{lemma}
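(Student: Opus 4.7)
The plan is to exploit two simple ingredients: the conservation of $O_{\ms{SC}}$ under $U$, and the commutativity of the two summands of $O_{\ms{SC}}$, which allows a clean binomial expansion of $O_{\ms{SC}}^k$.

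First, I would observe that since $[U,O_{\ms{SC}}]=0$ implies $[U,O_{\ms{SC}}^k]=0$ for every $k\in\mathbb{N}$, the $k$-th moment of the global observable is invariant along the protocol:
\begin{equation}
\Tr[O_{\ms{SC}}^k\, \rho_{\ms{SC}}] \;=\; \Tr[O_{\ms{SC}}^k\, \sigma_{\ms{SC}}],
\end{equation}
where $\rho_{\ms{SC}}=\rho_{\ms{S}}\otimes\chi_{\ms{C}}$. Next, noting that $O_{\ms{S}}\otimes\mathbbm{1}_{\ms{C}}$ and $\mathbbm{1}_{\ms{S}}\otimes O_{\ms{C}}$ commute, the binomial theorem yields
\begin{equation}
O_{\ms{SC}}^{k} \;=\; O_{\ms{S}}^{k}\otimes\mathbbm{1}_{\ms{C}} \;+\; \mathbbm{1}_{\ms{S}}\otimes O_{\ms{C}}^{k} \;+\; \Delta_{k},
\end{equation}
with $\Delta_{k}$ exactly the operator in Eq.~\eqref{Eq:delta-k}.

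Then I would substitute this expansion into the conservation identity above. The product structure of $\rho_{\ms{SC}}$ together with linearity of the trace gives $\Tr[(O_{\ms{S}}^{k}\otimes\mathbbm{1}_{\ms{C}})\rho_{\ms{SC}}]=\langle O_{\ms{S}}^{k}\rangle_{\rho}$ and $\Tr[(\mathbbm{1}_{\ms{S}}\otimes O_{\ms{C}}^{k})\rho_{\ms{SC}}]=\langle O_{\ms{C}}^{k}\rangle_{\chi}$, while on the other side one gets $\langle O_{\ms{S}}^{k}\rangle_{\sigma}$ and $\langle O_{\ms{C}}^{k}\rangle_{\sigma_{\ms{C}}}$ after tracing out the complementary subsystem. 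Hence
\begin{equation}
\langle O_{\ms{S}}^{k}\rangle_{\rho} + \langle O_{\ms{C}}^{k}\rangle_{\chi} + \Tr[\Delta_{k}\rho_{\ms{SC}}] \;=\; \langle O_{\ms{S}}^{k}\rangle_{\sigma} + \langle O_{\ms{C}}^{k}\rangle_{\sigma_{\ms{C}}} + \Tr[\Delta_{k}\sigma_{\ms{SC}}].
\end{equation}

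The final step is to invoke the catalytic constraint $\sigma_{\ms{C}}=\chi_{\ms{C}}$, which forces $\langle O_{\ms{C}}^{k}\rangle_{\chi}=\langle O_{\ms{C}}^{k}\rangle_{\sigma_{\ms{C}}}$ and cancels these two terms on opposite sides. Rearranging what remains produces exactly Eq.~\eqref{Eq:momentOSC}. Since every step is essentially algebraic, I do not anticipate a genuine obstacle; the only delicate point is ensuring that the binomial expansion really applies (which it does, thanks to the commutation of the two summands in $O_{\ms{SC}}$) and keeping careful track of which terms are cancelled by the catalytic constraint versus which survive to form the correction $\Tr[\Delta_{k}(\rho_{\ms{SC}}-\sigma_{\ms{SC}})]$.
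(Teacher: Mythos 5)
Your proposal is correct and takes essentially the same approach as the paper: decompose $O_{\ms{SC}}^k$ via the binomial theorem into $O_{\ms{S}}^k + O_{\ms{C}}^k + \Delta_k$, use conservation of $O_{\ms{SC}}^k$ to equate expectation values in $\rho_{\ms{SC}}$ and $\sigma_{\ms{SC}}$, and invoke the catalytic constraint $\sigma_{\ms{C}}=\chi_{\ms{C}}$ to cancel the $\langle O_{\ms{C}}^k\rangle$ terms. The paper organizes the algebra by solving for $\langle O_{\ms{S}}^k\rangle_\sigma$ directly rather than writing the symmetric conservation identity first, but the ingredients and cancellations are identical.
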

\begin{proof}
Let us use the short  notation $O_{\ms S} = O_{\ms{S}}\otimes \mathbbm{1}_{\ms{C}}$ and $O_{\ms C} = \mathbbm{1}_{\ms{S}}\otimes O_{\ms C}$. Then, to calculate $\langle O^k_{\ms S} \rangle_{\sigma}$, we write $O_{\ms{SC}}$ as $O^k_{\ms{S}\ms{C}} = O^k_{\ms{S}} +  O^k_{\ms{C}} + \Delta_k$, where $\Delta_k = O^{k}_{\ms{S}\ms{C}} -O^k_{\ms{S}} - O^k_{\ms{C}}$ as given by Eq.~\eqref{Eq:delta-k}. Thus, the $k$th moment of the observable $O^k_{\ms{S}\ms{C}}$ in the state $\mathcal A$ is given by
\begin{equation}
    \langle O^k_{\ms{S}\ms{C}} \rangle_{A} = \tr(O^k_{\ms{S}} \mathcal A) + \tr(O^k_{\ms{C}} \mathcal A) + \tr(\Delta_k \mathcal A).
\end{equation}
Next, re-writing $O^k_{\ms{S}}$ as a function of $O_{\ms{S}\ms{C}}$, $O_{\ms{C}}$, and $\Delta_k$ allows one to express $\langle O^k_{\ms{S}} \rangle_{\sigma}$ as
\begin{align}\label{Eq:moment-calculation}
    \langle O^k_{\ms{S}} \rangle_{\sigma} &= \tr(O^k_{\ms{S}\ms{C}}\sigma) \!-\! \tr(O^k_{\ms{C}}\sigma) \!-\! \tr(\Delta_k \sigma) = \tr(O^k_{\ms{S}\ms{C}}\rho) - \tr(O^k_{\ms{C}} \rho) - \tr(\Delta_k \sigma) \nonumber \\
    &= \tr(O^k_{\ms{S}}\rho) + \tr(O^k_{\ms{C}}\rho) + \tr(\Delta_k \rho) -  \tr(\Delta_k \sigma) - \tr(O^{k}_{\ms{C}} \rho).
\end{align}
In the first line, we first used the fact that $[U, O_{\ms{S}\ms{C}}] = 0$ and that all moments of $\ms{C}$ are preserved. This allows one to replace $\sigma$'s to $\rho$'s. Second, we use Eq.~\eqref{Eq:momentOSC} to re-write $\tr(O^k_{\ms{S}\ms{C}}\rho)$. Finally, simplifying Eq.~\eqref{Eq:moment-calculation} gives
\begin{equation}
    \langle O^k_{\ms{S}} \rangle_{\sigma} = \langle O_{\ms{S}}^k\rangle_{\rho} + \Tr[\Delta_k (\rho_{\ms{SC}}-\sigma_{\ms{SC}})].       
\end{equation}
\end{proof}

In the Jaynes-Cummings model, the energies of the cavity $\ms{S}$ and the atom $\ms{C}$ are specified by local number operators $n_{\ms{S}}$ and $n_{\ms{C}}$, respectively. Hence, the total energy of both systems is proportional to the number of excitations, and described by a joint operator \mbox{$n_{\ms{SC}} := n_{\ms{S}} + n_{\ms{C}}$}. Since the JC evolution $U(t)$ that takes $\rho_{\ms{S}} \ot \chi_{\ms{C}}$ into $\sigma_{\ms{SC}}$ conserves the total energy, we have that $[U(t), n_{\ms{S}} + n_{\ms{C}}] = 0$ for all $t$. When the evolution is catalytic, all moments of $n_{\ms{C}}$ must remain unchanged, in particular $\langle n_{\ms{C}} \rangle_{\chi}$ = $\langle n_{\ms{C}} \rangle_{\sigma}$. Consequently, the first moment of $n_{\ms{S}}$ is also preserved, that is $\langle n_{\ms{S}} \rangle_{\rho} = \langle n_{\ms{S}} \rangle_{\sigma}$. Importantly,  this is not the case for higher moments of $n_{\ms{S}}$. Particularly, from Lemma.~\ref{Lemma:higher-moments}, the second moment satisfies
\begin{align}\label{eq:cat_variance_formula}
    \langle n^2_{\ms{S}} \rangle_{\sigma} = \langle n^2_{\ms{S}} \rangle_{\rho} + 2\Bigg(\langle n_{\ms{S}}\rangle_{\sigma} \langle n_{\ms{C}}\rangle_{\sigma} - \langle n_{\ms{S}}  \ot n_{\ms{C}}\rangle_{\sigma}\Bigg).
\end{align}
Hence, the second moment in the final state of the cavity, $\langle n^2_{\ms{S}} \rangle_{\sigma}$, can become smaller (or larger) than the second moment in the initial state $\langle n^2_{\ms{S}} \rangle_{\rho}$. This means that using a catalyst allows for modifying the distribution of the local observable $n_{\ms{S}}$ of the system $\ms{S}$: while its average must remain the same, the higher moments can change. Importantly, this can only happen if the system becomes correlated with the catalyst, i.e., $\langle n^2_{\ms{S}} \rangle_{\sigma} \neq \langle n^2_{\ms{S}} \rangle_{\rho}$ only if $\sigma \neq \sigma_{\ms{S}} \ot \sigma_{\ms{C}}$ [as seen from Eq. \eqref{eq:cat_variance_formula}]. Thus, these correlations are essential for observing quantum catalysis. Note that the above analysis also applies beyond the JC model to arbitrary observables and moments. The only requirement is the conservation of local observables and the catalytic constraint.

The above analysis will also serve as a basis for the characterisation of the parameter regime leading to catalysis. In particular, we derive a necessary and sufficient condition for satisfying the catalytic constraint of Eq.~\eqref{Eq:catalytic-constrain}. This, in turn, allows one to obtain an analytic expression for the second-order coherence $g^2$, which is based on the second moment $ \langle n^2_{\ms{S}} \rangle_{\sigma}$. To do so, let us consider an arbitrary initial state of the atom:
\begin{align}
    \chi_{\ms C}= q \ketbra{g}{g}+ r\ketbra{g}{e}+r^*\ketbra{e}{g}+[1-q] \ketbra{e}{e},
\end{align}
as well as a general initial state of the cavity $\rho_{\ms{S}} = \sum_{n,m}^{\infty} p_{n,m}\ketbra{n}{m}$ with $ p_n :=  p_{n,n}$. Combining Eq. \eqref{eq:cat_variance_formula} with the fact that mean energy of the cavity is conserved, i.e., $\mbox{$\langle n_{\ms{S}} \rangle_{\sigma} = \langle n_{\ms{S}} \rangle_{\rho}$}$, we arrive at
\begin{align}\label{eq:g2_conserved}
    \!g^{(2)}(\sigma_{\ms{S}}) = g^{(2)}(\rho_{\ms{S}}) - \frac{2}{\langle n_{\ms{S}}\rangle_{\rho}^2} \!\left[\langle n_{\ms{S}} \ot n_{\ms{C}}\rangle_{\sigma} \!- (1-q) \langle n_{\ms{S}} \rangle_{\rho}\right]\!,\!
\end{align}
where we have
\begin{align} \label{eq:corr_term}
    \!\!\!\langle n_{\ms{S}} \ot n_{\ms{C}}\rangle_{\sigma} \!=\! \sum_{n = 0}^{\infty} n \left[(1 - q) p_{n} c_n^2 + y_n + q p_{n+1} s_n^2 \right],
\end{align}
with $s_n \!:=\! \sin(gt\sqrt{n+1}), c_n \!:=\! \cos(g t \sqrt{n+1})$ and $y_{n} \!:=\! 2\operatorname{Im}(r p_{n+1,n})s_n c_n$. 

In order to satisfy the catalytic constraint, we obtain a set of equations for the components of the atomic state. Decomposing the diagonal term as \mbox{$q = q_{\text{inc}} +  q_{\text{coh}}$}, we get
 \begin{align}\label{Eq-population}
 q_{\text{inc}} = \frac{1}{Q} \sum_{n=0}^{\infty}p_{n} s_n^2, \qquad  
q_{\text{coh}} = \frac{1}{Q} \sum_{n=0}^{\infty}y_n,  
\end{align}
with $Q := \sum_{n=0}^{\infty}(p_{n}+p_{n+1})s_n^2$. Interestingly, $q_{\text{inc}}$ is specified by the occupations of $\rho_{\ms{S}}$, while $q_{\text{coh}}$ depends on its coherence in the Fock basis. Moreover, the off-diagonal term $r$ satisfies 
\begin{equation}\label{Eq-coherence}
    r = \frac{i (a_3 a^{*}_4+a^{*}_1a_4)}{|a_1|^2 - |a_3|^2}-\frac{i(a_3 a^{*}_2+a^{*}_1a_2)}{|a_1|^2 - |a_3|^2}q,
\end{equation}
with $a_i$ being auxiliary functions defined as
\begin{align}\label{eq:aterms}
\begin{split}
    a_1 &= \sum_{n=0}^{\infty} p_{n,n}c_{n-1}c_{n}- e^{-i\omega \tau}, \qquad
    a_3 = \sum_{n=0}^{\infty} p_{n,n+2} s_ns_{n+1},  \\
    a_2 &= \sum_{n=0}^{\infty}p_{n,n+1}s_n\qty[c_{n-1}+c_{n+1}], \hspace{10pt}
    a_4 = \sum_{n=0}^{\infty}p_{n,n+1}s_nc_{n+1}. 
\end{split}
\end{align}
For a detailed derivation of Eqs.~\eqref{eq:aterms} see Sections~\ref{Appsub:generalsolution} and~\ref{Appsub:set-catalytic-states}.

Importantly, Eqs~\eqref{Eq-population} and~\eqref{Eq-coherence} are necessary and sufficient for ensuring the catalytic constraint of Eq.~\eqref{Eq:catalytic-constrain}. In combination with Eq.~\eqref{eq:g2_conserved}, we can now characterise analytically the effect of non-classicality activation for the $g^2$ function in the catalytic regime. This is done for finding the parameters (initial states and interaction time) for the first illustrative example of a catalytic process (see Fig.~\ref{fig:2}). For the second  example, we use again such analytic results to ensure the validity of the catalytic constraint, while the Wigner functions are computed numerically.

\section{How general is catalysis?}

An interesting problem is to understand how typical the effect of catalysis is. Here we discuss different aspects of this question. 

To begin with, note that it is not obvious a priori whether the catalytic constraint of Eq. \eqref{Eq:catalytic-constrain} can be satisfied. However, due to the quantum version of the Perron-Frobenius theorem, every quantum channel has at least one positive semi-definite fixed point \cite{fannes1992finitely}. Now, for a fixed input state $\rho_{\ms{S}}$ and a fixed interaction time $\tau$, the state of the atom $\ms{C}$ evolves according to an effective quantum channel $\chi_{\ms{C}} \rightarrow \Tr_{\ms{S}}[U(\tau)(\rho_{\ms{S}}\ot \chi_{\ms{C}})U^{\dagger}(\tau)]$. Consequently, there always exists an initial state $\chi_{\ms{C}}$ which is left unchanged by this channel, hence providing at least one solution to Eq. \eqref{Eq:catalytic-constrain}.

Next, one might wonder how often a catalytic evolution leads to a non-classical state of the cavity. In particular, when preparing the cavity in a coherent state $\ket{\alpha}$, does there always exist a state of the catalyst $\chi_{\ms{C}}$ which allows to generate non-classicality? To address this question, we investigate the minimum value of $g^{(2)}$ of the final state $\sigma_{\ms{S}}$, as a function of $|\alpha|$ [see Fig.~~\hyperref[fig:1]{\ref{fig:1}a}]. This is done by combining Eqs.~(\ref{eq:g2_conserved}-\ref{eq:corr_term}) and (\ref{Eq-population}-\ref{Eq-coherence}), and imposing a bound on the final time, i.e $g\tau \leq 100$. For $\alpha \in (0, 2]$, we observe that $g^{(2)}(\sigma_{\ms S}) < 1$, indicating that non-classicality generation via catalysis is generic here. 
Note that when the initial coherent state has low energy, the final state of the cavity is close to the intial one, but with a slightly reduced variance, leading to a value of $g^{(2)}$ approaching zero.

\begin{figure}[t]
    \centering
    \includegraphics[width=10.358cm]{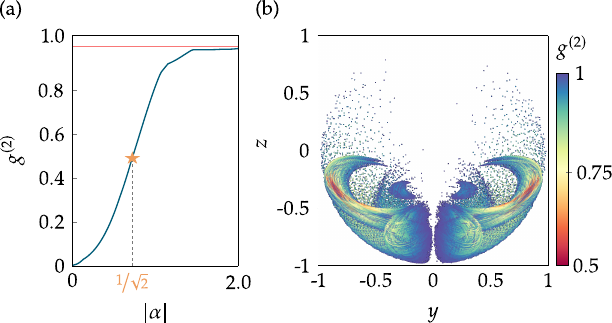}
    \caption{\emph{Which states lead to catalysis?} Panel (a) shows the minimal value of $g^2$ obtained as a function of the amplitude $|\alpha|$ of the initial coherent state of the cavity $\rho_{\ms{S}} = \dyad{\alpha}_{\ms{S}}$. The orange star corresponds to our first illustrative example. Panel (b) displays the atomic states (in the $y-z$ plane of the Bloch sphere) that satisfy the catalytic constraint and generate non-classical states, for an intial coherent state $\alpha=1/\sqrt{2}$. The colour represents different values of $g^{(2)}<1$. We impose a limit on the interaction time $\tau\leq 100$ and take $10^6$ samples. In both panels, parameters are $\omega = 2\pi$, $g = \pi$.}
    \label{fig:1}
\end{figure}

Although we have focused on the generation of states with Wigner negativity or sub-Poissonian distribution, catalysis also allows the generation of specific non-classical states. In this regard, we show in Sec.~\hyperref[App:generating-squeezed]{C-2} that one can also use our catalytic protocol to generate squeezed states of light. We further determine the possible final states of the cavity that can be obtained under a catalytic protocol.

Let us now ask the converse question, i.e., whether every atomic state can lead to a catalytic evolution. A first observation is that pure states cannot act as useful catalysts in general, and in particular cannot catalytically generate non-classicality. Indeed, a key ingredient for catalysis is the fact that the system and catalyst become correlated [see Eq. (\ref{eq:cat_variance_formula})], which is impossible when the state of the catalyst is pure. 

To further explore this question, we also investigated which states of the atom can catalytically generate sub-Poissonian statistics (i.e., $g^{(2)}<1$), given an initial coherent state of the cavity and a limited interaction time $g\tau \leq 100$. In Fig.~\hyperref[fig:1]{\ref{fig:1}b}, we display an example of such a set of catalytic states. Interestingly, this set appears to contain states that are almost pure. 
The structure of the catalytic set is highly non-trivial, and, in particular, we observe a strong dependence on the initial state of the cavity. This leads to the following question:

\begin{center}
    \emph{Which states of the atom allow to generate non-classicality?}
\end{center}

We refer to the set of all atomic states that satisfy Eq.~\eqref{Eq:catalytic-constrain} as the \emph{set of catalytic states}. Formally, one can define it as follows:
\begin{equation}
    \mathcal{C} := \big\{\omega_{\ms{C}} \,\big|\, \omega_{\ms{C}} = \Tr_{\ms{S}}[U(\tau)(\rho_{\ms{S}} \ot \omega_{\ms{C}})U^{\dagger}(\tau)],\, \tau \geq 0 \big\}.
\end{equation}

To explore the geometry of this set, we characterise the catalytic set for three initial coherent states of the cavity corresponding to $\alpha \in \{0.2, 1/\sqrt{2}, 25\}$. The results are presented in Fig.~\ref{Fig:set-of-catalytic-states}, where the catalytic set is shown in grey. Additionally, we determine which states in the catalytic set can generate non-classicality, focusing on the $g^{(2)}$ function. Similar to Fig.~\hyperref[fig:1]{\ref{fig:1}b}, we represent these states in color, with the latter indicating the level of non-classicality being generated. Interestingly, these sets vary significantly with $|\alpha|$. Moreover, when $|\alpha|$ is large, catalytic states are distributed close to the equatorial plane of the Bloch sphere, and we could find no instance where non-classicality is generated. This can be understood by noticing that the atom is ``too small'' to significantly perturb the field. Furthermore, when the initial state of the cavity is prepared as an incoherent mixture of Fock states, the set of catalytic states is diagonal and lies along the $z$ axis. 

\begin{figure*}
    \centering
    \includegraphics{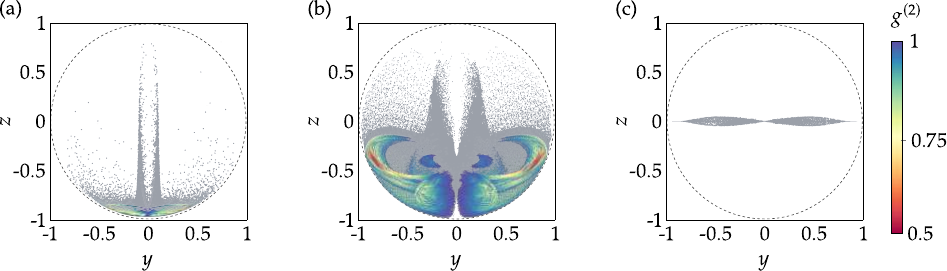}
    \caption{{\label{Fig:set-of-catalytic-states}\emph{Set of catalytic state for different initial state preparation}. Atomic states (projection of the $y-z$ plane of the Bloch ball) that satisfy the catalytic constraint (gray) for (a) $\alpha = 0.2$, (b) $\alpha = 1/\sqrt{2}$, and (c) $\alpha = 25$ are highlighted by a uniform gradient if they produce $g^2(\sigma_\ms{S}) < 1$ All panels were generated by imposing a limit on interaction time with $\tau \leq 100$ and taking $10^{6}$ samples. The parameters are $\omega = 2\pi$ and $g = \pi$.
    }}
\end{figure*}

Finally, one can ask what happens when the initial state of the cavity is not a coherent one. Let us start by examining Eqs.~\eqref{Eq:r_solutionr}~and~\eqref{Eq:q_solution} under the assumption that the initial state of the cavity is an incoherent mixture of Fock states, i.e., $\rho_{\ms{S}} = \sum_{n,m} p_{n,m}\dyad{n}{m}$ with $p_{n,m} = 0$ if $n \neq m$.  In this case, from Eqs. (\ref{Eq-population}) and (\ref{Eq-coherence}), we can infer that the only feasible states of the catalyst are those with $q_{\text{coh}} = 0$ and $r = 0$. Consequently, the atomic state is incoherent in the energy basis and its ground state occupation takes the form of
\begin{equation}\label{Eq:incoherent-catalyst}
    q = \frac{\sum_{n=0}^{\infty}p_n\sin^2(gt\sqrt{n+1})}{\sum_{n=0}^{\infty}(p_n+p_{n+1})\sin^2(gt\sqrt{n+1})} .
\end{equation}
This leads to the following theorem:
\begin{theorem}[Fock states under catalytic transformations] The non-classicality of a pure Fock state is non-decreasing under catalytic evolution 
\end{theorem}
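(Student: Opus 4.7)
To prove that catalytic evolution cannot reduce the non-classicality of a pure Fock state, I would first specialise the catalytic constraint derived earlier in Section~6.5. For $\rho_\ms{S}=|n_0\rangle\langle n_0|$ all off-diagonal Fock matrix elements $p_{n,m\neq n}$ vanish, so the auxiliary quantities $a_2,a_3,a_4$ appearing in the catalytic conditions all vanish identically. This forces the atomic coherence $r=0$, and the definition of $y_n$ also gives $q_{\mathrm{coh}}=0$. Hence every admissible catalyst is incoherent in the energy basis, with its ground-state population uniquely determined by the interaction time through
\begin{equation*}
q \;=\; \frac{\sin^2(g\tau\sqrt{n_0+1})}{\sin^2(g\tau\sqrt{n_0}) + \sin^2(g\tau\sqrt{n_0+1})}.
\end{equation*}

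Second, I would compute the final cavity state explicitly. Applying the Jaynes--Cummings propagator $U(\tau)$ to $|n_0\rangle\otimes\chi_\ms{C}$ and tracing out the atom, the interaction couples only $|n_0,g\rangle\leftrightarrow|n_0-1,e\rangle$ and $|n_0,e\rangle\leftrightarrow|n_0+1,g\rangle$. A direct calculation, with $q$ substituted from above, yields the symmetric three-point mixture
\begin{equation*}
\sigma_\ms{S} \;=\; \lambda\,|n_0{-}1\rangle\langle n_0{-}1| + (1-2\lambda)\,|n_0\rangle\langle n_0| + \lambda\,|n_0{+}1\rangle\langle n_0{+}1|,
\end{equation*}
where $\lambda = \sin^2(g\tau\sqrt{n_0})\sin^2(g\tau\sqrt{n_0+1})\big/\bigl[\sin^2(g\tau\sqrt{n_0})+\sin^2(g\tau\sqrt{n_0+1})\bigr]\in[0,\tfrac{1}{2}]$. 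The equality of the two outer weights is precisely what the catalytic condition enforces, and $\lambda=0$ recovers the trivial interaction times. Using the higher-order moment identity from Lemma~6.5.1 one verifies that $\langle n_\ms{S}\rangle$ is conserved while $\langle n_\ms{S}^2\rangle = n_0^2 + 2\lambda$ strictly grows with $\lambda$, quantifying the system--catalyst correlations built up along the way.

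The final step is to establish the non-classicality monotonicity on this one-parameter family. The strategy is to evaluate the non-classicality functional $\mathcal{N}$ (the Wigner logarithmic negativity $\mathsf{W}$ used throughout the chapter, or a Fock-sector witness adapted to diagonal states) on $\sigma_\ms{S}(\lambda)$ and show $\mathcal{N}(\sigma_\ms{S}(\lambda)) \geq \mathcal{N}(|n_0\rangle\langle n_0|)$ for every $\lambda\in[0,\tfrac12]$. The main obstacle I expect is precisely the choice and control of $\mathcal{N}$: the initial Fock state is pure and already highly non-classical, so the bound must be established despite $\sigma_\ms{S}$ being a proper mixture. A natural route is to exploit the enhanced support on $|n_0\pm 1\rangle$, which for $n_0\geq 1$ carry strictly larger individual Wigner negativity than $|n_0\rangle$, together with the imposed symmetry of the $(\lambda,1-2\lambda,\lambda)$ distribution to bound the cancellations in $\int|W_{\sigma_\ms{S}}|$ from below. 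Once the monotonicity is reduced to this one-parameter inequality in $\lambda\in[0,1/2]$, verification becomes a direct analytic computation using the explicit Laguerre-polynomial form of the Fock-state Wigner functions.
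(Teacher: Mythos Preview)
Your first two steps are correct and coincide with the paper: for an initial Fock state the catalytic constraint forces the atom to be incoherent with the ground-state population you wrote, and the resulting cavity state is indeed the symmetric three-point mixture $\lambda|n_0{-}1\rangle\langle n_0{-}1|+(1-2\lambda)|n_0\rangle\langle n_0|+\lambda|n_0{+}1\rangle\langle n_0{+}1|$ with your $\lambda$.

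The gap is in the third step, and it is a directional one. The theorem as proved in the paper concerns the $g^{(2)}$ witness and establishes that $g^{(2)}(\sigma_{\ms S})\geq g^{(2)}(\rho_{\ms S})$: a pure Fock state \emph{cannot be made more sub-Poissonian} by catalysis. (The wording of the theorem header is unfortunate; the sentence immediately after the proof in the paper makes the intended direction clear by contrasting pure Fock states with mixtures where catalysis \emph{can} further boost non-classicality.) You are setting out to prove the opposite inequality $\mathcal N(\sigma_{\ms S})\geq\mathcal N(|n_0\rangle\langle n_0|)$, which is false for $g^{(2)}$ and almost certainly false for the WLN too: mixing generically washes out Wigner negativity, and your supporting claim that both $|n_0\pm 1\rangle$ carry strictly larger Wigner negativity than $|n_0\rangle$ is incorrect, since the Wigner-negativity volume of $|n\rangle$ is monotone in $n$, so $|n_0{-}1\rangle$ has \emph{less}.

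Once the correct direction is identified, the argument is immediate and already contained in your own computation. Because $\langle n_{\ms S}\rangle$ is conserved, it suffices to check $\Tr[a^{\dagger 2}a^2(\sigma_{\ms S}-\rho_{\ms S})]\geq 0$. Writing $\sigma_{\ms S}=(1-2\lambda)\rho_{\ms S}+2\lambda\psi$ with $\psi=\tfrac12(|n_0{-}1\rangle\langle n_0{-}1|+|n_0{+}1\rangle\langle n_0{+}1|)$, one has $\Tr[a^{\dagger 2}a^2(\psi-\rho_{\ms S})]=\tfrac12[(n_0{+}1)n_0+(n_0{-}1)(n_0{-}2)]-n_0(n_0{-}1)=1$, hence $g^{(2)}(\sigma_{\ms S})-g^{(2)}(\rho_{\ms S})=2\lambda/n_0^2\geq 0$. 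Equivalently, your identity $\langle n_{\ms S}^2\rangle_\sigma=n_0^2+2\lambda$ shows the second moment goes \emph{up} at fixed mean, which pushes $g^{(2)}$ up, not down.
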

\begin{proof}
    Proving that the second-order second-order coherence of a Fock state $\rho_{\ms{S}} = \dyad{k}_{\ms{S}}$, cannot decrease under a catalytic evolution is equivalent to show the quantity
\begin{align}
    g^{(2)}(\sigma_{\ms S}) - g^{(2)}(\rho_{\ms S}) &= \frac{\Tr[(a^{\dagger 2}a^2)\sigma_{\ms S}]}{\Tr[(a^{\dagger} a) \sigma_{\ms S}]} - \frac{\Tr[(a^{\dagger 2}a^2)\rho_{\ms S}]}{\Tr[(a^{\dagger} a) \rho_{\ms S}]} \nonumber \\ &= \frac{\Tr[(a^{\dagger 2}a^2)(\sigma_{\ms S}-\rho_{\ms S})]}{\Tr[(a^{\dagger} a) \rho_{\ms S}]}
\end{align}
is nonnegative. This will be accomplished by showing that the following inequality holds: 
\begin{equation}\label{Eq:inequality-g}
 \Tr[(a^{\dagger 2}a^2)(\sigma_{\ms S}-\rho_{\ms S})] \geq 0 .   
\end{equation}
In what follows, we omit the index $\ms{S}$ as well as any explicit reference to the variables' dependence on time. When the initial state of the cavity is a Fock state, then Eq.~\eqref{Eq:cavity-state} implies that the state of the cavity after the catalytic evolution is given by
\begin{align}\label{Eq:cavity-state-fock2}
\sigma = &\qty[(1-q)\cos^2\qty(gt\sqrt{k+1})+q\cos^2\qty(gt\sqrt{k})]\ketbra{k}{k} \nonumber \\ &+q\sin^2\qty(gt\sqrt{k})\ketbra{k\!-\!1}{k\!-\!1}+(1-q)\sin^2\qty(gt\sqrt{k+1})\ketbra{k\!+\!1}{k\!+\!1},
\end{align}
where $q$ is determined by Eq.~\eqref{Eq:incoherent-catalyst}, which for this particular case takes the form:
\begin{align}\label{Eq:q_Fock-states}
    q = \frac{\sin^2\qty(gt\sqrt{k+1})}{\sin^2\qty(gt\sqrt{k+1})+\sin^2\qty(gt\sqrt{k})} 
\end{align}
By substituting Eq.\eqref{Eq:q_Fock-states} into Eq.\eqref{Eq:cavity-state} and introducing the notation \mbox{$\psi = \frac{1}{2}(\ketbra{n-1}{n-1}+\ketbra{n+1}{n+1})$}, we obtain:
\begin{align}
    \sigma =& \frac{\sin^2\qty(gt\sqrt{k})\cos^2\qty(gt\sqrt{k+1})+\sin^2\qty(gt\sqrt{k+1})\cos^2\qty(gt\sqrt{k})}{\sin^2\qty(gt\sqrt{k+1})+\sin^2\qty(gt\sqrt{k})}\rho\nonumber\\&\hspace{3.5cm}+ \frac{2\sin^2\qty(gt\sqrt{k})\sin^2\qty(gt\sqrt{k+1})}{\sin^2\qty(gt\sqrt{k+1})+\sin^2\qty(gt\sqrt{k})}\psi.
\end{align}
Alternatively, we can express the above equation as \mbox{$\sigma = t\rho + (1-t)\psi$}, where
\begin{equation}
t = \frac{\sin^2\qty(gt\sqrt{k})\cos^2\qty(gt\sqrt{k+1})+\sin^2\qty(gt\sqrt{k+1})\cos^2\qty(gt\sqrt{k})}{\sin^2\qty(gt\sqrt{k+1})+\sin^2\qty(gt\sqrt{k})}.
\end{equation}
With these results at hand, we can manipulate Eq.~\eqref{Eq:inequality-g} to obtain
\begin{align}
    \Tr[(a^{\dagger 2}a^2)(\sigma-\rho)] &= (1-t)\Tr[(a^{\dagger 2}a^2)(\psi-\rho)] \nonumber\\&= \frac{(1-t)}{2}\Tr[(a^{\dagger 2}a^2)(\ketbra{k+1}{k+1}+\ketbra{k-1}{k-1})]\nonumber\\&\hspace{4cm}-(1-t)\Tr[(a^{\dagger 2}a^2)\ketbra{k}{k}] \nonumber \\
    &= \frac{(1-t)}{2}\qty[(k+1)k + (k-1)(k-2) - 2k(k-1)] \nonumber\\&= (1-t) \geq 0.    
\end{align}
Thus, we conclude that second-order coherence $g^{(2)}$ can only increase during a catalytic processes involving a pure Fock state, i.e.,
\begin{equation}
    g^{(2)}(\sigma) - g^{(2)}(\rho) \geq 0.
\end{equation}
\end{proof}

However, note that the above result is only valid for pure Fock states. For an incoherent mixture of Fock states, we observed that catalysis can further boost non-classicality (see Section~\hyperref[Appsub:incoherentmixtureg2]{C-2} for details).

\subsection{Generating squeezed states}\label{App:generating-squeezed}

The state of the cavity after the catalytic process is described by Eq.~\eqref{Eq:cavity-state}, with $q$ and $p$ given by Eqs.~\eqref{Eq:q_solution} and~\eqref{Eq:r_solutionr}. In the main text, we have discussed that both states with Wigner negativity and sub-Poissonian states can be produced catalytically, but the specific form of the states of light was not directly addressed. Here, we focus on this question and show that it is possible to prepare squeezed states in a catalytic way.

Defining the field quadratures by $X_1 = (a^{\dagger}+a)/\sqrt{2}$ and $X_2 = i(a^{\dagger}-a)/\sqrt{2}$, which satisfy the commutation relation $[X_1,X_2]=i$. The Heisenberg uncertainty principle is given by $\Delta X_1 \Delta X_2 \geq 1/2$, where $(\Delta X_1)^2 = \langle X_1^2\rangle - \langle X_1\rangle^2$ and similarly for $(\Delta X_2)^2$. For the coherent field state, the uncertainties are equal to $\Delta X^{(\alpha)}_1 = \Delta X^{(\alpha)}_2 = 1/\sqrt{2}$. A field state is called squeezed if the uncertainty of one of the quadratures is below the vacuum level, i.e., $\Delta X_1 < 1/\sqrt{2}$ or $\Delta X_2 < 1/\sqrt{2}$. Here, we consider the squeezing parameter
\begin{equation}
    \xi = \frac{\Delta X_1}{\Delta X^{(\alpha)}_1} = \sqrt{2} \Delta X_1
\end{equation}
and search for conditions where $\xi < 1$, which manifests field squeezing.

The catalytic generation of squeezed states can be observed in Fig~\ref{F:squeezing-fig}. The left plot depicts the minimal value of $\xi$ obtained as a function of the amplitude $|\alpha|$ of the initial coherent state of the cavity. We impose a limit on the interaction time $\tau \leq 100$ and observe that $\xi$ goes below the shot noise level, implying the presence of squeezing. In the right plot, we depict the evolution of $\xi$. Catalysis occurs at time $\tau = 18$ (represented by an orange star). Starting from the initial state of the cavity $\rho_{\ms S} = \ketbra{\alpha}_{\ms S}$, which has $\xi = 1$, we obtain a final state $\sigma_{\ms S}$ for which $\xi \approx 0.79$.

\begin{figure*}
    \centering
    \includegraphics{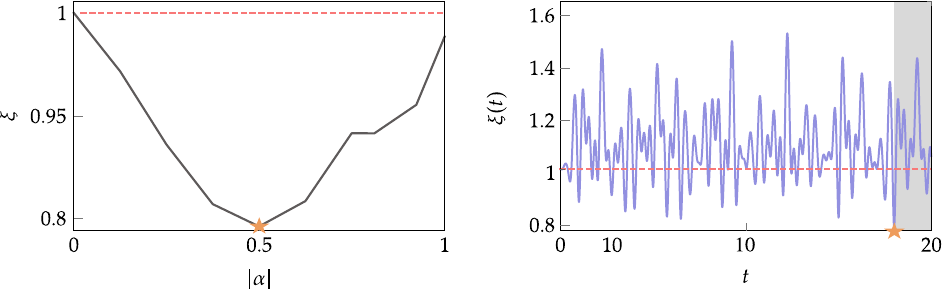}  
    \caption{\emph{Generating squeezed states via quantum catalysis.} The left panel shows the minimal value of $\xi$ obtained as a function of the amplitude $|\alpha|$ of the initial coherent state of the cavity within the interval of interaction time $\tau \leq [0,100]$. Taking the minimal value of $|\alpha|$, the right panel displays its time evolution, defined as $\xi(t) := \xi (\sigma_{\ms{S}}(t))$. The orange star indicates the final time ($\tau \approx 18$) at which the evolution becomes catalytic. At this point, the atom returns to its initial state, while a squeezed state with $\xi(\tau) \approx 0.79 < \xi(0) = 1$ has been produced. The values of the parameters are: $\alpha = 1/2$, $\omega = \pi$, and $g = 2\pi$.  \label{F:squeezing-fig} } 
\end{figure*}

\section{Catalysis in realistic scenarios}\label{sec:catalysis-realistic-scenarios}

\subsection{Catalysis in the presence of dissipation}

So far we have discussed an idealized (noise-free) scenario in quantum optics to understand the physics behind catalysis in quantum systems. However, any experiment implementing the Jaynes-Cummings model will necessarily feature cavity loss and atomic decay. We will now illustrate that the catalytic effect is robust even after incorporating these two effects in the model. To understand why our results remain valid in the presence of dissipation, it is worth noting that the Perron-Frobenius theorem, which assures that the catalytic constraint can be satisied, is valid for all quantum channels (hence also dissipative). Consequently, even in the presence of losses and noise, the catalytic constraint can still be satisfied. The interesting problem is whether non-classicality can still be generated in such a dissipative protocol. 

In order to understand the role of dissipation in our model let us assume that the cavity $\ms{S}$ is coupled to an environment characterized by temperature $T$ which accounts for the cavity losses. We also consider the possibility of atom decay due to photon emission. Consequently, the dynamics of the atom-cavity system are described by a Lindblad master equation of the form~\cite{Agarwal1986,Agarwal19862,Briegel1993,puri2001mathematical}:
\begin{equation}\label{Eq:rho-diss}
 \dot{\rho}_{\ms{SC}}\!=\!-i\left[H_{\ms{JC}},\rho_{\ms{SC}}\right]+ \kappa(n_{\ms{th}}+1)\mathcal{L}[a] +  \kappa n_{\ms{th}}\mathcal{L}[a^{\dagger}] + \Gamma \mathcal{L}[\sigma_-], 
\end{equation}
where as before we take $\rho_{\ms{SC}} := \dyad{\alpha}_{\ms{S}} \ot \chi_{\ms{C}}$, and $\kappa$ and $\Gamma$ are the cavity and atom dissipation rate, respectively. Moreover, $n_{\ms{th}}=(e^{1/T}-1)^{-1}$ is the average excitation number and \mbox{$\mathcal{L}[L] = L\rho_{\ms{SC}} L^{\dagger} -\frac{1}{2}\{L^{\dagger}L,\rho_{\ms{SC}} \}$} is the Lindblad dissipator with $\{.\,,\,.\}$ denoting the anticommutator. In the case of dissipative evolution \eqref{Eq:rho-diss} the (generalized) catalytic constraint becomes
\begin{equation} \label{eq:gen_catalytic_constraint}
    \sigma_{\ms{C}} := \Tr_{\ms{S}}[ \mathcal{E}_{\tau}(\rho_{\ms{SC}})] = \chi_{\ms{C}},
\end{equation}
where $\mathcal{E}_{\tau}(\cdot)$ is a quantum channel generated by the evolution from Eq.~\eqref{Eq:rho-diss} acting for some time $\tau$. 

For a given stopping time $\tau$ the solution to Eq.\eqref{Eq:rho-diss} that simultaneously satisfies the catalytic constraint can be determined numerically. 

In order to study the robustness of our results to dissipation we will now analyse two figures of merit, WLN and the auto-correlation function $g^{(2)}$, as a function of the stopping time $\tau$ for closed (reversible) and open (irreversible) quantum dynamics. At all stopping times $\tau$ we ensure that the generalized catalytic constraint from Eq. (\ref{eq:gen_catalytic_constraint}) is met. Our results are summarized in in Fig.~\ref{Fig:catalytic-power} which shows the amount of non-classicality that can be generated in the mode $\ms{S}$ for different choice of interaction times $\tau$. Our first observation is that a catalytic generation of non-classicality is possible even in the presence of dissipation. Secondly, the amount of non-classicality that can be generated in this way generally diminishes with longer interaction times. This is intuitive, as the longer we wait, the higher is the probability of losing photons from the cavity, especially since we are considering a constant dissipation rate for both the cavity and atom. Very interestingly, even for short evolution times, we can achieve relatively high levels of non-classicality as compared to the unitary case. 
\begin{marginfigure}[-8cm]
	\includegraphics[width=4.7618 cm]{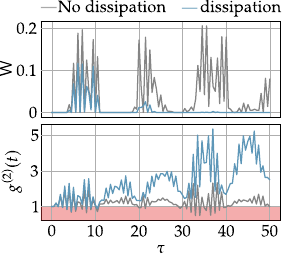}
	\caption{\label{Fig:catalytic-power}\emph{Catalysis in realistic scenarios.} Wigner logarithmic negativity and auto-correlation function are plotted as functions of the interaction time $ \tau $. The blue solid line represents the scenario with dissipation, while the gray solid line represents the scenario without dissipation. For each time $ \tau $, the catalytic constraint is satisfied, and the degree of non-classicality activated in the mode is computed. While $\mathsf W \geq 0$ indicates non-classical behaviour, $g^{(2)} < 1$ also witnesses non-classicality (reddish area in the right panel). Both plots use constant dissipation rates: an atom dissipation rate of $ \Gamma = 0.05$ and a cavity dissipation rate of $ \kappa =0.005$. The average number of thermal bath excitations is $ n_{\text{th}} = 0.1 $ and the other parameters are $\omega = 2\pi$, $g = 0.1\pi$ and $\alpha = 1/\sqrt{2}$.}
\end{marginfigure}

\subsection{Catalytic protocol in the presence of multiple cavities}

A relevant question one might have when investigating the catalytic protocol is whether the atom can be reused in a \emph{truly} catalytic way. That is, whether the catalytic protocol can be performed multiple times, while the catalyst is prepared only once. To see whether our protocol fulfills this requirement we consider a single atom $\ms{C}$ interacting sequentially with $N$ cavities $\ms{S} := \ms{S}_1 \ms{S}_2 \ldots \ms{S}_N$ with the same matching mode according to a catalytic process satisfying Eq. (\ref{Eq:catalytic-constrain}). More specifically, we imagine the following protocol:

\begin{enumerate}
\item For a given choice of parameters $\omega$, $g$, and $\tau$, determine the state of the atom $\chi_{\ms C}$ as discussed in Sec. \ref{sec:meachanism}.
\item Let the atom interact sequentially with $N$ cavities with the same frequency $\omega$ and interaction strenght $g$ and for the same amount of time $\tau$. This leads to the following unitary transformation of the composite system
\begin{equation}
    \sigma_{\ms{SC}} = \bigotimes_{i=1}^{N}[U_i\,(\rho_{\ms S_{i}} \otimes \chi_{\ms C})\,U^{\dagger}_i],
\end{equation}
where for all $i \in \{1, ..., N\}$, we have $\rho_{{\ms{S}_i}} = \ketbra{\alpha}{\alpha}_{\ms{S}_i}$ and $U_i$ is the unitary from Eq.~\eqref{Eq:time-evolution-operator} applied to $\ms{S}_i$ and $\ms{C}$.
\item After interacting with all the cavities the atom is removed from the last cavity ($\ms{S}_N$) and allowed to dissipate its energy to the environment. This removes the correlations built between the atom and the cavities. 
\end{enumerate}

After discarding the atom $\ms{C}$ the joint state of the cavities is given by $\sigma_{\ms{S}} = \Tr_{\ms{C}}[\sigma_{\ms{SC}}] = \sigma_{\ms{S}_1 \ms{S}_2 \ldots \ms{S}_n}$. Importantly, each of the states $\sigma_{\ms{S}_i}$ produced in this way is the same and non-classical, and can be further used in some information processing task. 

In certain applications one might be also interested in the amount of correlations developed between the cavities themselves. Such correlations could, in principle, limit the protocol's applicability to certain tasks which require producing multiple uncorrelated non-classical states. Luckily, the amount correlations developed between the cavities after discarding the atom is relatively low. To see this we will use the quantum fidelity $F(\sigma_{\ms{SC}}, \sigma_{\text{target}})$ where $\sigma_{\text{target}}$ is a product state $\sigma_{\text{target}} = \sigma_{\ms{S}_1} \ot \sigma_{\ms{S}_2} \otimes \ldots \otimes \sigma_{\ms{S}_N}$ and $\sigma_{\ms{S}_i} = \sigma_{\ms{S}_j}$ for $1\leq i, j \leq N$.  In Fig.~\ref{F-fidelity-multiple-cavities}, we show the resulting fidelity $F(\sigma_{\ms{SC}}, \sigma_{\text{target}})$ for up to $N = 5$ cavities. Notably, we observe that even for $N = 5$ the fidelity is relatively high, indicating tha the cavities are approximately uncorrelated after the catalytic protocol, and hence can be treated as independent. 
\begin{marginfigure}[-5.2cm]
    \centering    \includegraphics{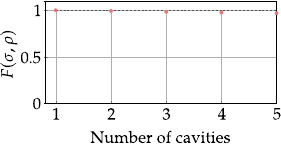}  
    \caption{\label{F-fidelity-multiple-cavities}\emph{Fidelity as a function of the number of cavities}. In the regime $g\tau \approx \pi $ (red dot), a high fidelity is observed, indicating the feasibility of employing the catalytic protocol to activate the non-classicality across five modes. 
} 
\end{marginfigure}

\section{Derivation of the results} \label{App:Jaynes-Cummings}

The first part of this Section provides the eigenproblem solution for the Jaynes-Cummings model (for further information, see references~\cite{Knight2005,Haroche2006}). In the following, we derive the final reduced states of the cavity $\ms{S}$ and the atom $\ms{C}$, and explicitly determine the set of atomic states that satisfy the catalytic constraint [Eq.~\eqref{Eq:catalytic-constrain}]. 

\subsection{Reduced states of subsystems}\label{Appsub:generalsolution}
The following unitary operator describes the dynamics of the Jaynes-Cummings model in resonance and under the rotating-wave approximation:
\begin{align}
   \hspace{-0.43cm} U(t) = &e^{i\omega t} \ketbra{0,g}{0,g} \nonumber\\&\!\!+\sum_{n=0}^{\infty}e^{-i\qty(n+\frac{1}{2})\omega t}  \Bigl\{\cos \frac{\mu_n t}{2}\Bigl(\ketbra{n+1,g}{n+1,g}+\ketbra{n,e}{n,e}\Bigl)\nonumber\\ &\!\!\hspace{2.1cm}-i\sin\frac{\mu_n t}{2}\Bigl(\ketbra{n+1,g}{n,e}+\ketbra{n,e}{n+1,g}\Bigl)\Bigl\}.
\end{align}
After the interaction, the joint system \mbox{$\sigma_{\ms SC}(t) = U(t)(\rho_{\ms S} \otimes \chi_{\ms C})U^{\dagger}$} becomes correlated. Let the cavity and the atom be prepared in general mixed states, i.e., $\rho_{\ms S} = \sum^{\infty}_{n,m=0}p_{n,m}\ketbra{n}{m}$ and $\chi_{\ms C} = q\ketbra{g}{g}+ r\ketbra{g}{e}+r^{*}\ketbra{e}{g}+(1-q) \ketbra{e}{e}$. Then, the reduced state of the cavity at time $t$ is obtained by taking the partial trace over the atom's degrees of freedom, i.e.,
$\sigma_\ms{S}(t) := \text{tr}_{\ms C}[{\sigma_{\ms SC}}(t)]$. More specifically,
\begin{align}\label{Eq:cavity-state}
\hspace{-0.25cm}\sigma_{\ms S}(t) = &qp_{0,0} \ketbra{0}{0}+ \Bigl\{\sum_{n=0}^{\infty} e^{i(n+1)\omega t}\qty[q p_{0,n+1}c_n+irp_{0,n}s_n]\ketbra{0}{n+1}+\text{h.c}\Bigl\}\nonumber \\&
\hspace{-0.15cm}+\sum_{n,m=0}^{\infty}e^{-i(n-m)\omega t}qp_{n+1,m+1}\qty[s_ns_m\ketbra{n}{m}+c_nc_m\ketbra{n\!+\!1}{m\!+\!1}] \nonumber\\
&\hspace{-0.15cm}+ \sum_{n,m=0}^{\infty}e^{-i(n-m)\omega t}(1-q)p_{n,m}\qty[c_nc_m\ketbra{n}{m}+s_ns_m\ketbra{n\!+\!1}{m\!+\!1}] \nonumber\\
 &\hspace{-0.15cm}+\Bigl\{\sum_{n,m=0}^{\infty} i e^{-i(n-m)\omega t}c_n s_m\qty[p_{n+1,m} r \ketbra{n\!+\!1}{m\!+\!1}+ p_{n,m+1}r^*\ketbra{n}{m}] \nonumber\\&\hspace{7.3cm}+\text{h.c}\Bigl\}.
\end{align}
The atomic state is obtained by marginalising over the photonic degrees of freedom, i.e., $\chi(t) := \text{tr}_{\ms S}[{\sigma_{\ms SC}}(t)]$, which leads to
\begin{align}\label{Eq:catalyst-state-t}
\chi_{\ms C}(t)= q(t)\ketbra{g}{g}+ r(t)\ketbra{g}{e}+r^*(t)\ketbra{e}{g}+[1-q(t)] \ketbra{e}{e},
\end{align}
Note that we identify $r:=r(0)$ and $q:=q(0)$ in what follows. The coefficients $q(t)$ and $r(t)$ are given by
\begin{align}
 q(t) &= q\sum_{n=0}^{\infty}p_{n}c^2_{n-1}+(1-q)p_ns^2_n+2\textrm{Re}[irp_{n+1,n}]s_n c_n ,\\
r(t) &=-ie^{i\omega t}\sum_{n=0}^{\infty}p_{n,n+1}s_nc_n+
    re^{i\omega t}\sum_{n=0}^{\infty}p_{n}c_{n-1}c_n+r^{*}\sum_{n=0}^{\infty}p_{n,n+2}s_ns_{n+1})\nonumber\\
    &\hspace{4cm}+ ie^{i\omega t}q\sum_{n=0}^{\infty}p_{n,n+1}s_n\qty[c_{n-1}+c_n].
\end{align}

\subsection{Catalytic constraint}\label{Appsub:set-catalytic-states}
Here we determine the set of atomic states that evolve catalytically by explicitly solving the catalytic constraint from Eq.~(\ref{Eq:catalytic-constrain}). More specifically, we are looking for the solution to the following operator equation:
\begin{align}\label{Eq:operator-equation}
    \chi_{\ms C}(\tau) = \Tr_{\ms{S}}\{U(\tau)[\rho_{\ms S} \otimes \chi_{\ms C}(\tau)] U(\tau)^{\dagger}\},
\end{align} 
for a fixed time $\tau$. From this point forward, we will abbreviate the diagonal elements of the state $\rho_{\ms{S}}$ as $ p_n := p_{n,n}$. To obtain the set of states that satisfy Eq.\eqref{Eq:operator-equation}, we first define the auxiliary functions:
\begin{align}\label{Eq:auxiliary-equation}
    \tilde{a}_1(t) &= e^{i\omega t}\sum_{n=0}^{\infty} p_n\cos(gt\sqrt{n})\cos(gt\sqrt{n+1}) -1,\nonumber \\
    \tilde{a}_2(t) &= ie^{i \omega t}\sum_{n=0}^{\infty}p_{n,n+1}\sin(gt\sqrt{n+1})\qty[\cos(gt\sqrt{n})+\cos(gt\sqrt{n+2})],\nonumber  \\
    \tilde{a}_3(t) &= e^{i\omega t}\sum_{n=0}^{\infty} p_{n,n+2} \sin(gt\sqrt{n+1})\sin(gt\sqrt{n+2}), \nonumber\\
    \tilde{a}_4(t) &= -ie^{i\omega t}\sum_{n=0}^{\infty}p_{n,n+1}\sin(gt\sqrt{n+1})\cos(gt\sqrt{n+2}). 
\end{align}
Note that $\tilde{a}_i = e^{i\omega t}a_i$. Next, we observe that Eq.~\eqref{Eq:operator-equation} gives rise to a set of two equations with two variables. By considering the ground state occupation $q(t)$, we find that the states satisfying Eq.~\eqref{Eq:operator-equation} are given by:
\begin{equation}\label{Eq:population_rawq}
    q(t) = \frac{\sum_{n=0}^{\infty}p_n\sin^2(gt\sqrt{n+1})+\operatorname{Re}\qty[ir(t)p_{n+1,n}]\sin(2gt\sqrt{n+1})}{\sum_{n=0}^{\infty}(p_n+p_{n+1})\sin^2(gt\sqrt{n+1})},
\end{equation}
whereas the coherence $r(t)$ obeys the equation 
\begin{equation}\label{Eq:p_and_q_relation}
    r(t) \tilde{a}_1(t) + q(t) \tilde{a}_2(t) + r^*(t) \tilde{a}_3(t) + \tilde{a}_4(t) = 0,
\end{equation}
whose solution is given by
\begin{equation}\label{Eq:r_solutionr}
    r(t) = \frac{\tilde{a}_3(t) \tilde{a}^{*}_4(t)-\tilde{a}^{*}_1(t)\tilde{a}_4(t)}{|\tilde{a}_1(t)|^2 - |\tilde{a}_3(t)|^2}+\frac{\tilde{a}_3(t) \tilde{a}^{*}_2(t)-\tilde{a}^{*}_1(t)\tilde{a}_2(t)}{|\tilde{a}_1(t)|^2 - |\tilde{a}_3(t)|^2}q(t).
\end{equation}
Substituting Eq.~\eqref{Eq:r_solutionr} into Eq.~\eqref{Eq:population_rawq}, we find that
\begin{equation}\label{Eq:q_solution}
    q(t) = \frac{\sum_{n=0}^{\infty}p_ns^2_n+2\operatorname{Re}\qty[i\qty(\frac{\tilde{a}_3(t) \tilde{a}^{*}_4(t)-\tilde{a}^{*}_1(t)\tilde{a}_4(t)}{|\tilde{a}_1(t)|^2 - |\tilde{a}_3(t)|^2})p_{n+1,n}]s_n c_n}{\sum_{n=0}^{\infty}(p_{n+1}+p_n)s^2_n-2\operatorname{Re}\qty[i\qty(\frac{\tilde{a}_3(t) \tilde{a}^{*}_2(t)-\tilde{a}^{*}_1(t)\tilde{a}_2(t)}{|\tilde{a}_1(t)|^2 - |\tilde{a}_3(t)|^2})p_{n+1,n}]s_n c_n}.
\end{equation}
Therefore, for a given value of $g$ and time $\tau$, Eqs.~\eqref{Eq:r_solutionr}~and~\eqref{Eq:q_solution} uniquely determine a state of the catalyst.



\subsection{Second moment of photon statistics in the catalytic Jaynes-Cummings evolution}\label{App:higher-moments-of-observables}

Here, we use Lemma.~\eqref{Lemma:higher-moments} to obtain an explicit expression for the second moment in a catalytic evolution as specified by the Jaynes-Cummings Hamiltonian. We start by considering the second moment of photon statistics
\begin{align}
    \langle n^2_{\ms{S}} \rangle_{\sigma} = \langle n^2_{\ms{S}} \rangle_{\rho} + 2\left[ (1-q)\,\langle n_{\ms{S}} \rangle_{\rho}  - \langle n_{\ms{S}} \ot \dyad{e}_{\ms{C}}\rangle_{\sigma}\right],
\end{align}
where $q := \langle e |\chi_{\ms{C}}|e \rangle$ is the excited-state occupation of the catalyst,  $\sigma = U(\rho_{\ms{S}} \ot \chi_{\ms{C}})U^{\dagger}$ and $\Tr_{\ms{S}}[\sigma] = \chi_{\ms{C}}$. Let us focus on the following term:
\begin{align}\label{Eq:term-second-moment}
     \langle n_{\ms{S}} \ot \dyad{e}_{\ms{C}}\rangle_{\sigma} &= \Tr\left[U^{\dagger}(n_{\ms{S}} \ot \dyad{e}_{\ms{C}})U (\rho_{\ms{S}} \ot \chi_{\ms{C}})\right] \nonumber \\ &= \sum_{k=0}^{\infty} k \Tr[U^{\dagger}\dyad{k, e} U (\rho_{\ms{S}} \ot \chi_{})],  
\end{align}
Using Eq.~(\ref{Eq:time-evolution-operator}), we can write
\begin{align}
    U^{\dagger} \ket{k, e} &= e^{i(k+\frac{1}{2})\chi t} \left(c_k \ket{k,e} - i s_k \ket{k+1, g}\right),
\end{align}
Substituting the above result into Eq.~\eqref{Eq:term-second-moment} leads to
\begin{align}
    \langle n_{\ms{S}} \ot \dyad{e}_{\ms{C}}\rangle_{\sigma} = \sum_{k = 0}^{\infty} k \bigl[(1-q)c_k^2 p_{k,k} &+ 2 s_k c_k \text{Im}(p_{k+1,k} r) \nonumber \\ &+ (1-q) s_{k}^2 p_{k+1,k+1} \bigl].
\end{align}

\subsection{Boosting non-classicality via a catalytic process for incoherent mixtures of Fock states}\label{Appsub:incoherentmixtureg2}
For an incoherent mixture of Fock states \mbox{$\rho  = \sum_{n} p_n \ketbra{n}{n}$}, the second-order coherence is given by
\begin{equation}\label{Eq-app-second-order}
    g^{(2)}(\rho) = \frac{\sum_{n=0}^{\infty}n(n-1) p_n}{(\sum_{n=0}^{\infty}n p_n)^2}.
\end{equation}
Assuming that the initial state of the cavity is prepared in a state $\rho_{\ms S} = \frac{1}{4}\ketbra{0}{0} +\frac{3}{4}\ketbra{2}{2})$, then its second-order coherence is $g^{(2)}(\rho_{\ms S}) = 2/3$. According to Eq.~\eqref{Eq:cavity-state}, the state of the cavity after the catalytic protocol (at time $t = \tau$) takes the form of 
\begin{align}
    \sigma_{\ms S} =& \frac{1}{4}\qty[q\!+\!(1\!-\!q)\cos^2 g t]\! \ketbra{0}{0}\!+\!\frac{3q}{4}\qty[\sin^2 \qty(g\tau\sqrt{2})\!+\!\frac{(1-q)}{4}\sin^2 g t]\!\ketbra{1}{1}\nonumber\\ &+\frac{3}{4}\qty[q\cos^2 \qty(g\tau\sqrt{2})+(1-q)\cos^2 \qty(g\tau\sqrt{3})]\ketbra{2}{2}\nonumber\\&+\frac{3(1-q)}{4}\sin^2 \qty(g\tau\sqrt{3})\ketbra{3}{3}
\end{align}
where $q$ is determined by Eq.~\eqref{Eq:incoherent-catalyst}:
\begin{equation}
    q = \frac{\sin^2\qty(g\tau)+3\sin^2\qty(g\tau\sqrt{3})}{\sin^2\qty(g\tau)+3\qty[\sin^2\qty(g\tau\sqrt{3})+\sin^2\qty(g\tau\sqrt{2})]}.
\end{equation}
Using Eq.~\eqref{Eq-app-second-order}, the second-order coherence for the final state is given by
\begin{equation}
 g^{(2)}(\sigma_{\ms S}) = \frac{2}{3}\qty{q\cos^2 \qty(g\tau\sqrt{2})+(1-q)\qty[1+2\sin^2 \qty(g\tau\sqrt{3})]}.
\end{equation}
Therefore, for $g\tau = 7.5\pi$, we obtain $g^{(2)} \approx 0.505$, indicating that non-classicality in the mode was catalytically increased. 

\section{Concluding remarks}

We presented a catalytic process for generating non-classical states of light in an optical cavity via interaction with an atom. Our results are valid for any coupling regime in which the Jaynes-Cummings model is applicable and also for realistic scenarios where the presence of noise and loss cannot be neglected. This fundamentally stems from the fact that a catalytic state of the atom exists for any coupling strength and even under the presence of dissipation. This work shows that quantum catalysis, a concept so far explored in the abstract framework of resource theories, is relevant in a practical context (see also Ref.~\cite{YungerHalpern2017}, which discuss the need to turn resource-theory notions into real-world applications). Furthermore, our protocol could potentially be implemented in state-of-the-art experimental setups \cite{frisk2019ultrastrong,periwal2021programmable,wenniger2022coherencepowered}, e.g. in cavity QED \cite{Ballester2012,Blais2021} or trapped ions \cite{Leibfried2003,Lv2018}. Beyond proof-of-principle experiments, it would also be interesting to investigate whether such a catalytic protocol offers a practical advantage. Indeed, the key point of catalysis is that the catalyst is returned exactly in the same state as it was initially prepared. Hence the same atom could in principle be used repeatedly for activating non-classicality in different cavities, or in the same cavity, but at different times. It would be interesting to uncover further instances of quantum catalysis in realistic setups, e.g. exploring other platforms and models in quantum optics, as well as for applications in quantum information and metrology.

Another relevant aspect of our work is that the catalyst (atom) is, in general, a coherent quantum system. Specifically, to generate non-classicality, the atom's state must exhibit coherence in the energy basis over the course of its evolution. This constitutes an example of coherent quantum catalysis~\cite{Lipka_Bartosik_2023}, which contrasts with most previous examples of quantum catalysis. A deeper understanding of the role of coherence in catalysis is an exciting future direction.  

Finally, our analysis also shed light on the key role of correlations in quantum catalysis. This represents a distinctive feature of quantum catalysis. It would be interesting to see if this effect can occur in different physical models.

\chapter{Conclusions}\label{C:conclusions}

Quantum thermodynamics arose from a desire to formalise and generalise the concepts of macroscopic thermodynamics to the quantum realm. Its emergence evolved naturally with our increasing ability to manipulate and control systems at finer scales. This led theoreticians to probe the intricate dynamics of small-scale systems, asking how quantum phenomena, such as entanglement and coherence, might affect classical thermodynamic formulations. However, theoretical inquiries were not the sole catalyst for the advent of this field. As we have witnessed the race towards so-called "quantum technologies", understanding the path to optimality brings us closer to the idea that one day we might develop devices that outperform current ones by using purely quantum effects. Since then, joint theoretical efforts from quantum optics, statistical mechanics, and quantum information theory are bundled to pave the way for a new era of quantum engineering ahead.

This thesis asks simple questions that naturally arise when we attempt to probe the quantum realm thermodynamically. By relying on a model-independent approach and using minimal assumptions, our main goal consisted of characterising thermodynamic transformations across different regimes and identifying optimal protocols. To ask and answer these questions, we lay out a powerful theoretical toolkit that provides a robust approach to studying the thermodynamics of small systems: the resource-theoretic approach. The main mathematical tools that consistently appear within this framework were presented in Chapter~\ref{C:mathematical_preliminaries}, while the framework itself was detailed in Chapter~\ref{C:resource_theory_of_thermodynamics}.

Chapter~\ref{C:thermal_cones} analysed the structure of thermal cones. Specifically, for a $d$-dimensional classical state of a system interacting with a heat bath, we found explicit construction of the past thermal cone and the incomparable region. Moreover, we provided a detailed analysis of their behaviour based on thermodynamic monotones given by the volumes of thermal cones. Then, we discussed the applicability of these results to other majorisation-based resource theories (such as that of entanglement and coherence), since the partial ordering describing allowed state transformations is then the opposite of the thermodynamic order in the infinite temperature limit. Finally, we also generalised the construction of thermal cones to account for probabilistic transformations and for a coherent qubit. 

In Chapter~\ref{C:memory-MTP}, we developed a resource-theoretic framework that allowed one to bridge the gap between two approaches to quantum thermodynamics based on Markovian thermal processes. Our approach was built on the notion of memory-assisted Markovian thermal processes, where memoryless thermodynamic processes were promoted to non-Markovianity by explicitly modelling ancillary memory systems initialised in thermal equilibrium states. Within this setting, we proposed a family of protocols composed of sequences of elementary two-level thermalisations that approximated all transitions between energy-incoherent states accessible via thermal operations. We proved that, as the size of the memory increased, these approximations became arbitrarily good for all transitions in the infinite temperature limit, and for a subset of transitions in the finite temperature regime. Furthermore, we presented solid numerical evidence for the convergence of our protocol to any transition at finite temperatures. We also explained how our framework could be used to quantify the role played by memory effects in thermodynamic protocols such as work extraction. Finally, our results showed that elementary control over two energy levels at a given time was sufficient to generate all energy-incoherent transitions accessible via thermal operations if one allowed for ancillary thermal systems.

In Chapter~\ref{C:finite-size}, we derived a fluctuation-dissipation theorem version within a resource-theoretic framework, where one investigates optimal quantum state transitions under thermodynamic constraints. More precisely, we first characterised optimal thermodynamic distillation processes, and then proved a relation between the amount of free energy dissipated in such processes and the free energy fluctuations of the initial state of the system. Our results applied to initial states given by either asymptotically many identical pure systems or an arbitrary number of independent energy-incoherent systems, and allowed not only for a state transformation, but also for the change of Hamiltonian. The fluctuation-dissipation relations we derived enabled us to find the optimal performance of thermodynamic protocols such as work extraction, information erasure, and thermodynamically-free communication, up to second-order asymptotics in the number $N$ of processed systems. We thus provided a first rigorous analysis of these thermodynamic protocols for quantum states with coherence between different energy eigenstates in the intermediate regime of large but finite $N$.

Finally, in the last chapter of this thesis, we went beyond thermodynamics and presented a catalytic process in a paradigmatic quantum optics setup, namely the Jaynes-Cummings model, where an atom interacted with an optical cavity. The atom played the role of the catalyst, and allowed for the deterministic generation of non-classical light in the cavity. Considering a cavity prepared in a "classical'' coherent state, and choosing appropriately the atomic state and the interaction time, we obtained an evolution with the following properties. First, the state of the cavity had been modified, and now featured non-classicality, as witnessed by sub-Poissonian statistics or Wigner negativity. Second, the process was catalytic, in the sense that the atom was deterministically returned to its initial state exactly, and could then in principle be re-used multiple times. We investigated the mechanism of this catalytic process, in particular highlighting the key role of correlations and quantum coherence.

}

\backmatter 
\setchapterstyle{plain} 


\printbibliography[heading=bibintoc, title=Bibliography] 

\end{document}